%%
%% This is file `sample-sigconf.tex',
%% generated with the docstrip utility.
%%
%% The original source files were:
%%
%% samples.dtx  (with options: `sigconf')
%% 
%% IMPORTANT NOTICE:
%% 
%% For the copyright see the source file.
%% 
%% Any modified versions of this file must be renamed
%% with new filenames distinct from sample-sigconf.tex.
%% 
%% For distribution of the original source see the terms
%% for copying and modification in the file samples.dtx.
%% 
%% This generated file may be distributed as long as the
%% original source files, as listed above, are part of the
%% same distribution. (The sources need not necessarily be
%% in the same archive or directory.)
%%
%%
%% Commands for TeXCount
%TC:macro \cite [option:text,text]
%TC:macro \citep [option:text,text]
%TC:macro \citet [option:text,text]
%TC:envir table 0 1
%TC:envir table* 0 1
%TC:envir tabular [ignore] word
%TC:envir displaymath 0 word
%TC:envir math 0 word
%TC:envir comment 0 0
%%
%%
%% The first command in your LaTeX source must be the \documentclass
%% command.
%%
%% For submission and review of your manuscript please change the
%% command to \documentclass[manuscript, screen, review]{acmart}.
%%
%% When submitting camera ready or to TAPS, please change the command
%% to \documentclass[sigconf]{acmart} or whichever template is required
%% for your publication.
%%
%%
% \documentclass[sigconf, 9pt, review, anonymous]{acmart}
% \documentclass[sigconf, 9pt, review]{acmart}
\documentclass[10pt, sigconf, language=english]{acmart}

\usepackage{xcolor}
\usepackage{xspace}
\usepackage[all]{xy}
\usepackage{amsmath}
\usepackage{amsthm}
\usepackage{multicol}
\usepackage{mathpartir}
\usepackage[noabbrev,capitalize]{cleveref}
\usepackage{ifthen}
\usepackage{thm-restate}
\usepackage{hyperref}
\usepackage[normalem]{ulem}
\usepackage{enumitem}
\usepackage{mathbbol}
\message{<Paul Taylor's Proof Trees, 2 August 1996>}

\newdimen\proofrulebreadth \proofrulebreadth=.05em
\newdimen\proofdotseparation \proofdotseparation=1.25ex
\newdimen\proofrulebaseline \proofrulebaseline=2ex
\newcount\proofdotnumber \proofdotnumber=3
\let\then\relax
\def\hfi{\hskip0pt plus.0001fil}
\mathchardef\squigto="3A3B
%
% flag where we are
\newif\ifinsideprooftree\insideprooftreefalse
\newif\ifonleftofproofrule\onleftofproofrulefalse
\newif\ifproofdots\proofdotsfalse
\newif\ifdoubleproof\doubleprooffalse
\let\wereinproofbit\relax
%
% dimensions and boxes of bits
\newdimen\shortenproofleft
\newdimen\shortenproofright
\newdimen\proofbelowshift
\newbox\proofabove
\newbox\proofbelow
\newbox\proofrulename
%
% miscellaneous commands for setting values
\def\shiftproofbelow{\let\next\relax\afterassignment\setshiftproofbelow\dimen0 }
\def\shiftproofbelowneg{\def\next{\multiply\dimen0 by-1 }%
\afterassignment\setshiftproofbelow\dimen0 }
\def\setshiftproofbelow{\next\proofbelowshift=\dimen0 }
\def\setproofrulebreadth{\proofrulebreadth}

%=============================================================================
\def\prooftree{% NESTED ZERO (\ifonleftofproofrule)
%
% first find out whether we're at the left-hand end of a proof rule
\ifnum  \lastpenalty=1
\then   \unpenalty
\else   \onleftofproofrulefalse
\fi
%
% some space on left (except if we're on left, and no infinity for outermost)
\ifonleftofproofrule
\else   \ifinsideprooftree
        \then   \hskip.5em plus1fil
        \fi
\fi
%
% begin our proof tree environment
\bgroup% NESTED ONE (\proofbelow, \proofrulename, \proofabove,
%               \shortenproofleft, \shortenproofright, \proofrulebreadth)
\setbox\proofbelow=\hbox{}\setbox\proofrulename=\hbox{}%
\let\justifies\proofover\let\leadsto\proofoverdots\let\Justifies\proofoverdbl
\let\using\proofusing\let\[\prooftree
\ifinsideprooftree\let\]\endprooftree\fi
\proofdotsfalse\doubleprooffalse
\let\thickness\setproofrulebreadth
\let\shiftright\shiftproofbelow \let\shift\shiftproofbelow
\let\shiftleft\shiftproofbelowneg
\let\ifwasinsideprooftree\ifinsideprooftree
\insideprooftreetrue
%
% now begin to set the top of the rule (definitions local to it)
\setbox\proofabove=\hbox\bgroup$\displaystyle % NESTED TWO
\let\wereinproofbit\prooftree
%
% these local variables will be copied out:
\shortenproofleft=0pt \shortenproofright=0pt \proofbelowshift=0pt
%
% flags to enable inner proof tree to detect if on left:
\onleftofproofruletrue\penalty1
}

%=============================================================================
% end whatever box and copy crucial values out of it
\def\eproofbit{% NESTED TWO
%
% various hacks applicable to hypothesis list 
\ifx    \wereinproofbit\prooftree
\then   \ifcase \lastpenalty
        \then   \shortenproofright=0pt  % 0: some other object, no indentation
        \or     \unpenalty\hfil         % 1: empty hypotheses, just glue
        \or     \unpenalty\unskip       % 2: just had a tree, remove glue
        \else   \shortenproofright=0pt  % eh?
        \fi
\fi
%
% pass out crucial values from scope
\global\dimen0=\shortenproofleft
\global\dimen1=\shortenproofright
\global\dimen2=\proofrulebreadth
\global\dimen3=\proofbelowshift
\global\dimen4=\proofdotseparation
\global\count255=\proofdotnumber
%
% end the box
$\egroup  % NESTED ONE
%
% restore the values
\shortenproofleft=\dimen0
\shortenproofright=\dimen1
\proofrulebreadth=\dimen2
\proofbelowshift=\dimen3
\proofdotseparation=\dimen4
\proofdotnumber=\count255
}

%=============================================================================
\def\proofover{% NESTED TWO
\eproofbit % NESTED ONE
\setbox\proofbelow=\hbox\bgroup % NESTED TWO
\let\wereinproofbit\proofover
$\displaystyle
}%
%
%=============================================================================
\def\proofoverdbl{% NESTED TWO
\eproofbit % NESTED ONE
\doubleprooftrue
\setbox\proofbelow=\hbox\bgroup % NESTED TWO
\let\wereinproofbit\proofoverdbl
$\displaystyle
}%
%
%=============================================================================
\def\proofoverdots{% NESTED TWO
\eproofbit % NESTED ONE
\proofdotstrue
\setbox\proofbelow=\hbox\bgroup % NESTED TWO
\let\wereinproofbit\proofoverdots
$\displaystyle
}%
%
%=============================================================================
\def\proofusing{% NESTED TWO
\eproofbit % NESTED ONE
\setbox\proofrulename=\hbox\bgroup % NESTED TWO
\let\wereinproofbit\proofusing
\kern0.3em$
}

%=============================================================================
\def\endprooftree{% NESTED TWO
\eproofbit % NESTED ONE
% \dimen0 =     length of proof rule
% \dimen1 =     indentation of conclusion wrt rule
% \dimen2 =     new \shortenproofleft, ie indentation of conclusion
% \dimen3 =     new \shortenproofright, ie
%                space on right of conclusion to end of tree
% \dimen4 =     space on right of conclusion below rule
  \dimen5 =0pt% spread of hypotheses
% \dimen6, \dimen7 = height & depth of rule
%
% length of rule needed by proof above
\dimen0=\wd\proofabove \advance\dimen0-\shortenproofleft
\advance\dimen0-\shortenproofright
%
% amount of spare space below
\dimen1=.5\dimen0 \advance\dimen1-.5\wd\proofbelow
\dimen4=\dimen1
\advance\dimen1\proofbelowshift \advance\dimen4-\proofbelowshift
%
% conclusion sticks out to left of immediate hypotheses
\ifdim  \dimen1<0pt
\then   \advance\shortenproofleft\dimen1
        \advance\dimen0-\dimen1
        \dimen1=0pt
%       now it sticks out to left of tree!
        \ifdim  \shortenproofleft<0pt
        \then   \setbox\proofabove=\hbox{%
                        \kern-\shortenproofleft\unhbox\proofabove}%
                \shortenproofleft=0pt
        \fi
\fi
%
% and to the right
\ifdim  \dimen4<0pt
\then   \advance\shortenproofright\dimen4
        \advance\dimen0-\dimen4
        \dimen4=0pt
\fi
%
% make sure enough space for label
\ifdim  \shortenproofright<\wd\proofrulename
\then   \shortenproofright=\wd\proofrulename
\fi
%
% calculate new indentations
\dimen2=\shortenproofleft \advance\dimen2 by\dimen1
\dimen3=\shortenproofright\advance\dimen3 by\dimen4
%
% make the rule or dots, with name attached
\ifproofdots
\then
        \dimen6=\shortenproofleft \advance\dimen6 .5\dimen0
        \setbox1=\vbox to\proofdotseparation{\vss\hbox{$\cdot$}\vss}%
        \setbox0=\hbox{%
                \advance\dimen6-.5\wd1
                \kern\dimen6
                $\vcenter to\proofdotnumber\proofdotseparation
                        {\leaders\box1\vfill}$%
                \unhbox\proofrulename}%
\else   \dimen6=\fontdimen22\the\textfont2 % height of maths axis
        \dimen7=\dimen6
        \advance\dimen6by.5\proofrulebreadth
        \advance\dimen7by-.5\proofrulebreadth
        \setbox0=\hbox{%
                \kern\shortenproofleft
                \ifdoubleproof
                \then   \hbox to\dimen0{%
                        $\mathsurround0pt\mathord=\mkern-6mu%
                        \cleaders\hbox{$\mkern-2mu=\mkern-2mu$}\hfill
                        \mkern-6mu\mathord=$}%
                \else   \vrule height\dimen6 depth-\dimen7 width\dimen0
                \fi
                \unhbox\proofrulename}%
        \ht0=\dimen6 \dp0=-\dimen7
\fi
%
% set up to centre outermost tree only
\let\doll\relax
\ifwasinsideprooftree
\then   \let\VBOX\vbox
\else   \ifmmode\else$\let\doll=$\fi
        \let\VBOX\vcenter
\fi
% this \vbox or \vcenter is the actual output:
\VBOX   {\baselineskip\proofrulebaseline \lineskip.2ex
        \expandafter\lineskiplimit\ifproofdots0ex\else-0.6ex\fi
        \hbox   spread\dimen5   {\hfi\unhbox\proofabove\hfi}%
        \hbox{\box0}%
        \hbox   {\kern\dimen2 \box\proofbelow}}\doll%
%
% pass new indentations out of scope
\global\dimen2=\dimen2
\global\dimen3=\dimen3
\egroup % NESTED ZERO
\ifonleftofproofrule
\then   \shortenproofleft=\dimen2
\fi
\shortenproofright=\dimen3
%
% some space on right and flag we've just made a tree
\onleftofproofrulefalse
\ifinsideprooftree
\then   \hskip.5em plus 1fil \penalty2
\fi
}

%==========================================================================
% IDEAS
% 1.    Specification of \shiftright and how to spread trees.
% 2.    Spacing command \m which causes 1em+1fil spacing, over-riding
%       exisiting space on sides of trees and not affecting the
%       detection of being on the left or right.
% 3.    Hack using \@currenvir to detect LaTeX environment; have to
%       use \aftergroup to pass \shortenproofleft/right out.
% 4.    (Pie in the sky) detect how much trees can be "tucked in"
% 5.    Discharged hypotheses (diagonal lines).

%%% Colors for interaction

\colorlet{darkgreen}{green!60!black}
\colorlet{DARKGREEN}{green!60!black}

%%% Abbrevs

\newcommand{\lsvsym}{\mathsf{lsv}}

\newcommand{\cf}{\emph{cf.}\xspace}

\newcommand{\CBN}{CBN\xspace}
\newcommand{\CBV}{CBV\xspace}
\newcommand{\stabilized}{rigid\xspace}
\newcommand{\nonuseful}{linear\xspace}
\newcommand{\Nonuseful}{Linear\xspace}
\newcommand{\abstractionframe}{abstraction frame\xspace}
\newcommand{\princo}{{\tt UP1}\xspace}
\newcommand{\princd}{{\tt UP2}\xspace}

%%% Inductive rules with prooftree
\newcommand{\emptyPremise}{\vphantom{{}^@}}
\newcommand{\indrulename}[1]{\textup{\textsc{#1}}}
\newcommand{\indrule}[3]{
\ensuremath{
\begin{array}{c}
  \prooftree #2
    \justifies #3
    \thickness=0.05em
    \using \indrulename{#1}
  \endprooftree
\end{array}}}

%%% Hand-made inductive rules
\newcommand{\indruleNPos}[4]{
\begin{array}[#1]{c@{}r}
\hspace{-.2cm}
 #3
\hspace{-.2cm}
\vspace{-.1cm}
\\
& \,#2\!\hspace{-.5cm}\vspace{-.2cm} \\
\cline{1-1}\vspace{-.3cm} \\
  #4 \hspace{.5cm}\,
\end{array}
}
\newcommand{\indruleN}[3]{\indruleNPos{b}{#1}{#2}{#3}}

\newcommand{\deriv}{\Phi}

\newcommand{\derivs}[2]{#1 \rhd #2}

\renewcommand{\theenumi}{\arabic{enumi}}
\renewcommand{\theenumii}{\arabic{enumii}}
\renewcommand{\theenumiii}{\arabic{enumiii}}

\makeatletter
\renewcommand\p@enumii{\theenumi.}
\renewcommand\p@enumiii{\theenumi.\theenumii.}
\renewcommand\p@enumiv{\theenumi.\theenumii.\theenumiii.}
\makeatother

%%% Theorem environments
%  \theoremstyle{break}
%\newtheorem{dummythm}{dummythm}
%\newtheorem{lemma}[dummythm]{Lemma}
%\newtheorem{convention}[dummythm]{Convention}
%\newtheorem{proposition}[dummythm]{Proposition}
%\newtheorem{theorem}[dummythm]{Theorem}
%\newtheorem{corollary}[dummythm]{Corollary}

%\theoremstyle{definition}
%\newtheorem{definition}[theorem]{Definition}
%\newtheorem{example}[theorem]{Example}
%\newtheorem{exercise}[theorem]{Exercise}
%\newtheorem{algorithm}[theorem]{Algorithm}

%\theoremstyle{remark}
%\newtheorem{remark}[dummythm]{Remark}
%\newtheorem{notation}[dummythm]{Notation}
\AtEndPreamble{%
  \theoremstyle{acmdefinition}
  \newtheorem{remark}[theorem]{Remark}
}

%%% Arrows for diagrams
\newcommand{\arUd}[4]{\ar[d]^-{\Usym}_-{#1,#2,#3,#4}}
\newcommand{\arUr}[4]{\ar[r]^-{\Usym}_-{#1,#2,#3,#4}}
\newcommand{\arUrStable}[4]{\ar[r]^-{\footnotesize\blacktriangle}_-{#1,#2,#3,#4}}
\newcommand{\arsdUd}[4]{\ar@{.>}[d]^-{\Usym}_-{#1,#2,#3,#4}}
\newcommand{\arsdUr}[4]{\ar@{.>}[r]^-{\Usym}_-{#1,#2,#3,#4}}
\newcommand{\arsdUrStable}[4]{\ar@{.>}[r]^-{\footnotesize\blacktriangle}_-{#1,#2,#3,#4}}
\newcommand{\arVd}[1]{\ar[d]^-{\Vsym}_-{#1}}
\newcommand{\arVdd}[1]{\ar[dd]^-{\Vsym}_-{#1}}
\newcommand{\arVr}[1]{\ar[r]^-{\Vsym}_-{#1}}
\newcommand{\arsdVd}[1]{\ar@{.>}[d]^-{\Vsym}_-{#1}}
\newcommand{\arsdVr}[1]{\ar@{.>}[r]^-{\Vsym}_-{#1}}
\newcommand{\arsdVdn}[1]{\ar@{.>}[d]^-{\Vsym}^>>{*}_-{#1}}
\newcommand{\arsdVrn}[1]{\ar@{.>}[r]^-{\Vsym}^>>{*}_-{#1}}

%%% Common abbreviations

\newcommand{\defn}[1]{{\bf #1}}
\newcommand{\eg}{{\em e.g.}\xspace}
\newcommand{\ie}{{\em i.e.}\xspace}
\newcommand{\ih}{{\it i.h.}\xspace}

\newcommand{\ST}{\ |\ }
\newcommand{\HS}{\hspace{.5cm}}
\newcommand{\sep}{\hspace{.3cm}}

\renewcommand{\emptyset}{\varnothing}

\newcommand{\eqdef}{:=}
\newcommand{\defeq}{\mathrel{:=}}
\newcommand{\eqgram}{\mathrel{::=}}

\newcommand{\set}[1]{\{#1\}}
\renewcommand{\implies}{\Rightarrow}

\newcommand{\lam}[2]{\lambda#1.\,#2}
\newcommand{\sub}[2]{\{#1:=#2\}}
\newcommand{\esub}[2]{[#1\backslash#2]}

\newcommand{\Vsym}{\circ}
\newcommand{\Usym}{\bullet}

% Useful open call-by-value
%\newcommand{\tovv}[1]{\xrightarrow{\Vsym}_{#1}}
\newcommand{\tovv}[1]{\xrightarrow{\raisebox{-0.75ex}[0ex][0ex]{\footnotesize $\Vsym$}}_{#1}}
\newcommand{\tovvn}[1]{\xrightarrow{\raisebox{-0.75ex}[0ex][0ex]{\footnotesize $\Vsym$}}^{*}_{#1}}
\newcommand{\tovveq}[1]{\xrightarrow{\raisebox{-0.75ex}[0ex][0ex]{\footnotesize $\Vsym$}}^{=}_{#1}}
\newcommand{\tovvinv}[1]{\xleftarrow{\raisebox{-0.75ex}[0ex][0ex]{\footnotesize $\Vsym$}}_{#1}}
\newcommand{\tou}{\xrightarrow{\raisebox{-0.75ex}[0ex][0ex]{\footnotesize $\Usym$}}}
\newcommand{\tov}[4]{\xrightarrow{\raisebox{-0.75ex}[0ex][0ex]{\footnotesize $\Usym$}}_{#1,#2,#3,#4}}
\newcommand{\tovinv}[4]{\xleftarrow{\raisebox{-0.75ex}[0ex][0ex]{\footnotesize $\Usym$}}_{#1,#2,#3,#4}}

%% Toplevel
\newcommand{\tovvTop}{\xrightarrow{\raisebox{-0.75ex}[0ex][0ex]{\footnotesize $\Vsym$}}_{\mathtt{top}}}
\newcommand{\tovTop}[1]{\xrightarrow{\raisebox{-0.75ex}[0ex][0ex]{\footnotesize $\Usym$}}_{\mathtt{top},#1}}

\newcommand{\dbsym}{\mathsf{db}}
\newcommand{\svsym}{\mathsf{sv}}

\newcommand{\inv}[3]{\mathsf{inv}{(#1,#2,#3)}}

%%% Variables
\newcommand{\var}{x}
\newcommand{\vartwo}{y}
\newcommand{\varthree}{z}
\newcommand{\varfour}{w}

%%% Terms
\newcommand{\tm}{t}
\newcommand{\tmtwo}{u}
\newcommand{\tmthree}{s}
\newcommand{\tmfour}{r}
\newcommand{\tmfive}{p}
\newcommand{\tmsix}{q}

%%% Values
\newcommand{\val}{v}
\newcommand{\valtwo}{w}

%%% Contexts
\newcommand{\ctxhole}{\Diamond}

\newcommand{\ctxof}[2]{#1\langle#2\rangle}

%%% Substitution contexts
\newcommand{\gctx}{\mathtt{C}}
\newcommand{\sctx}{\mathtt{L}}
\newcommand{\sctxtwo}{\mathtt{L'}}

\newcommand{\domSctx}[1]{\mathtt{dom}(#1)}

%%%%% Inductive characterization of normal forms %%%%%

%%% Names of rules
% HVar rules
\newcommand{\ruleHVarVar}{\indrulename{hv-var}}
\newcommand{\ruleHVarSubi}{\indrulename{hv-sub$_1$}}
\newcommand{\ruleHVarSubii}{\indrulename{hv-sub$_2$}}
% HAbs rules
\newcommand{\ruleHAbsVar}{\indrulename{h-var}\xspace}
\newcommand{\ruleHAbsLam}{\indrulename{h-lam}\xspace}
\newcommand{\ruleHAbsSubi}{\indrulename{h-sub$_1$}\xspace}
\newcommand{\ruleHAbsSubii}{\indrulename{h-sub$_2$}\xspace}
% Struct rules
\newcommand{\ruleStructVar}{\indrulename{s-var}\xspace}
\newcommand{\ruleStructApp}{\indrulename{s-app}\xspace}
\newcommand{\ruleStructSubi}{\indrulename{s-sub$_1$}\xspace}
\newcommand{\ruleStructSubii}{\indrulename{s-sub$_2$}\xspace}
% Stable rules for terms
\newcommand{\ruleStableVar}{\indrulename{stable-var}}
\newcommand{\ruleStableAbs}{\indrulename{stable-abs}}
\newcommand{\ruleStableApp}{\indrulename{stable-app}}
\newcommand{\ruleStableESHAbs}{\indrulename{stable-es-habs}}
\newcommand{\ruleStableESStruct}{\indrulename{stable-es-struct}}
% Stable rules for substitution contexts
\newcommand{\ruleStableCtxEmpty}{\indrulename{stableCtx-empty}}
\newcommand{\ruleStableCtxHAbs}{\indrulename{stableCtx-habs}}
\newcommand{\ruleStableCtxStruct}{\indrulename{stableCtx-struct}}
% Normal form rules (non-useful open call-by-value)
\newcommand{\ruleVNFVar}{\indrulename{NF-var$^\Vsym$}\xspace}
\newcommand{\ruleVNFLam}{\indrulename{NF-lam$^\Vsym$}\xspace}
\newcommand{\ruleVNFApp}{\indrulename{NF-app$^\Vsym$}\xspace}
\newcommand{\ruleVNFEsVal}{\indrulename{NF-esVal$^\Vsym$}\xspace}
\newcommand{\ruleVNFEsNonVal}{\indrulename{NF-esNonVal$^\Vsym$}\xspace}
\newcommand{\ruleCtxNFEmpty}{\indrulename{Ctx-NF-empty}}
\newcommand{\ruleCtxNFAddVal}{\indrulename{Ctx-NF-addVal}}
\newcommand{\ruleCtxNFAddNonVal}{\indrulename{Ctx-NF-addNonVal}}
% Normal form rules (useful open call-by-value)
\newcommand{\ruleUNFVar}{\indrulename{NF-var$^{\Usym}$}\xspace}
\newcommand{\ruleUNFLam}{\indrulename{NF-lam$^{\Usym}$}\xspace}
\newcommand{\ruleUNFApp}{\indrulename{NF-app$^{\Usym}$}\xspace}
\newcommand{\ruleUNFEsAbs}{\indrulename{NF-esA$^{\Usym}$}\xspace}
\newcommand{\ruleUNFEsStruct}{\indrulename{NF-esS$^{\Usym}$}\xspace}
% Reduction rules (non-useful open call-by-value)
\newcommand{\ruleVDb}{\indrulename{db$^\Vsym$}\xspace}
\newcommand{\ruleVSub}{\indrulename{sub$^\Vsym$}\xspace}
\newcommand{\ruleVLsv}{\indrulename{lsv$^\Vsym$}\xspace}
\newcommand{\ruleVAppL}{\indrulename{appL$^\Vsym$}\xspace}
\newcommand{\ruleVAppR}{\indrulename{appR$^\Vsym$}\xspace}
\newcommand{\ruleVEsL}{\indrulename{esL$^\Vsym$}\xspace}
\newcommand{\ruleVEsR}{\indrulename{esR$^\Vsym$}\xspace}
% Reduction rules (useful open call-by-value)
\newcommand{\ruleUDb}{\indrulename{db$^{\Usym}$}\xspace}
\newcommand{\ruleUSub}{\indrulename{sub$^{\Usym}$}\xspace}
\newcommand{\ruleULsv}{\indrulename{lsv$^{\Usym}$}\xspace}
\newcommand{\ruleUAppL}{\indrulename{appL$^{\Usym}$}\xspace}
\newcommand{\ruleUAppR}{\indrulename{appR$^{\Usym}$}\xspace}
\newcommand{\ruleUEsR}{\indrulename{esR$^{\Usym}$}\xspace}
\newcommand{\ruleUEsLAbs}{\indrulename{esLA$^{\Usym}$}\xspace}
\newcommand{\ruleUEsLStruct}{\indrulename{esLS$^{\Usym}$}\xspace}
% Restriction to stable terms
\newcommand{\ruleUDbStable}{\indrulename{db-stable$^{\Usym}$}}
% Structural equivalence rules
\newcommand{\ruleEquivRefl}{\indrulename{refl}}
\newcommand{\ruleEquivSym}{\indrulename{sym}}
\newcommand{\ruleEquivTrans}{\indrulename{trans}}
\newcommand{\ruleEquivCongApp}{\indrulename{cong-app}}
\newcommand{\ruleEquivCongES}{\indrulename{cong-es}}
\newcommand{\ruleEquivEsLDist}{\indrulename{es-l-dist}}
\newcommand{\ruleEquivEsRDist}{\indrulename{es-r-dist}}
\newcommand{\ruleEquivEsComm}{\indrulename{es-comm}}
\newcommand{\ruleEquivEsAssoc}{\indrulename{es-assoc}}
% Useful reduction rules for substitution contexts

% Typing rules
\newcommand{\ruleTypVar}{\indrulename{var}\xspace}
\newcommand{\ruleTypAbs}{\indrulename{abs}\xspace}
\newcommand{\ruleTypAppP}{\indrulename{appP}\xspace}
\newcommand{\ruleTypAppC}{\indrulename{appC}\xspace}
\newcommand{\ruleTypES}{\indrulename{es}\xspace}
\newcommand{\ruleTypSctxEmpty}{\indrulename{emptySubsCtx}\xspace}
\newcommand{\ruleTypSctxAdd}{\indrulename{addSubsCtx}\xspace}

%%% Abstractions, structures, and normal forms

\newcommand{\rulename}{\rho}
\newcommand{\RulesV}[1]{\mathcal{R}_{#1}}
\newcommand{\vset}{\mathcal{V}}
\newcommand{\aset}{\mathcal{A}}
\newcommand{\asettwo}{\mathcal{B}}
\newcommand{\sset}{\mathcal{S}}
\newcommand{\ssettwo}{\mathcal{T}}
\newcommand{\appflag}{\mu}
\newcommand{\app}{@}
\newcommand{\nonapp}{{\text{$\not$\!@}}}
\newcommand{\HVar}[1]{\mathsf{HVar}_{#1}}
\newcommand{\HAbs}[1]{\mathsf{HA}_{#1}}
\newcommand{\Struct}[1]{\mathsf{St}_{#1}}
\newcommand{\Stable}[2]{\mathsf{Stable}_{#1,#2}}
\newcommand{\StableCtx}[2]{\mathsf{StableCtx}_{#1,#2}}
\newcommand{\VNF}[2]{\mathsf{NF}^{\Vsym}_{#1,#2}}
\newcommand{\VRed}[1]{\mathsf{Red}^{\Vsym}_{#1}}
\newcommand{\VIrred}[1]{\mathsf{Irred}^{\Vsym}_{#1}}
\newcommand{\NF}[3]{\mathsf{NF}^{\Usym}_{#1,#2,#3}}
\newcommand{\Red}[3]{\mathsf{Red}^{\Usym}_{#1,#2,#3}}
\newcommand{\Irred}[3]{\mathsf{Irred}^{\Usym}_{#1,#2,#3}}
\newcommand{\ruledb}{\mathsf{db}}
\newcommand{\rulelsv}{\mathsf{lsv}}
\newcommand{\rulesub}[2]{\mathsf{sub}_{(#1,#2)}}
\newcommand{\expansion}[2]{#1^{#2}}
\newcommand{\CtxNF}[1]{\mathsf{CtxNF}^{\Vsym}_{#1}}

\newcommand{\fv}[1]{\mathsf{fv}(#1)}

\newcommand{\rv}[1]{\mathsf{rv}(#1)}

%%%% Relating useful and non-useful CBV %%%%

\newcommand{\evalas}{\cdot}
\newcommand{\valas}{\sigma}
\newcommand{\tovalas}[1][\valas]{\to_{#1}}
\newcommand{\tonvalas}[1][\valas]{\to^{*}_{#1}}
\newcommand{\unvalas}[2][\valas]{{#2}^{\downarrow#1}}
\newcommand{\unvalasevalas}[1]{{#1}^{\downarrow}}

\newcommand{\compat}[3]{\mathsf{compatible}(#1, #2, #3)}

%%% Measures
\newcommand{\meas}[1]{\#(#1)}
\newcommand{\measPhi}[2]{\#^{#2}(#1)}
\newcommand{\measvar}[2]{\#_{#1}(#2)}
\newcommand{\measvarPhi}[3]{\#^{#3}_{#1}(#2)}
\newcommand{\measvalas}[2][\valas]{\#^{#1}(#2)}

%% Stable terms reduction
%\newcommand{\tostable}[4]{\xrightarrow{\mathtt{Stable}\raisebox{-0.75ex}[0ex][0ex]{\footnotesize $\Usym$}}_{#1,#2,#3,#4}}
\newcommand{\toustable}{\xrightarrow{\raisebox{-0.75ex}{{\footnotesize$\blacktriangle$}}}}
\newcommand{\tostable}[4]{\toustable_{#1,#2,#3,#4}}
\newcommand{\toustablen}[1]{\toustable^{\raisebox{-.1cm}{{\footnotesize{$#1$}}}}}
% Structural equivalence

%%%% Types %%%%

\newcommand{\disj}{\mathrel{\#}}
\newcommand{\none}{\bot}

% Multisets
\newcommand{\mset}[1]{[#1]}
\newcommand{\emset}{\mset{}}
\newcommand{\iI}{{i \in I}}
\newcommand{\jJ}{{j \in J}}
\newcommand{\kK}{{k \in K}}

% Tight types
%\newcommand{\tight}{\mathbb{t}}
%\newcommand{\tightN}{\mathbf{n}}

% Types (pisado con renewcommand más abajo)
\newcommand{\typ}{\tau}
\newcommand{\typtwo}{\sigma}

% Non-idempotent types

% Non-idempotent types (pisado con renewcommand más abajo)
\newcommand{\nityp}{\mathcal{I}}
\newcommand{\nityptwo}{\mathcal{J}}
\newcommand{\niunion}{\mathrel{\uplus}}

% Multi-types (pisado con renewcommand más abajo)
\newcommand{\mleq}{\lhd}
\newcommand{\mtyp}{\mathcal{M}}
\newcommand{\mtyptwo}{\mathcal{N}}
\newcommand{\mtypthree}{\mathcal{L}}
\newcommand{\mtypfour}{\mathcal{O}}
\newcommand{\optmtyp}{\mtyp^?}
\newcommand{\optmtyptwo}{\mtyptwo^?}
\newcommand{\optmtypthree}{\mtypthree^?}

% Typing contexts
\newcommand{\dom}[1]{\mathsf{dom}(#1)}
\newcommand{\im}[1]{\mathsf{im}(#1)}
\newcommand{\emptyctx}{\emptyset}
\newcommand{\tctx}{\Gamma}
\newcommand{\tctxtwo}{\Delta}
\newcommand{\tctxthree}{\Sigma}
\newcommand{\tctxfour}{\Theta}

% Counters
\newcommand{\cm}{m}
\newcommand{\ce}{e}
\newcommand{\numarrows}[1]{{\tt ta}(#1)}

% Typing judgments
\newcommand{\judg}[4][]{#2\vdash^{(#1)}#3:#4}
\newcommand{\judgs}[3]{#1\vdash #2:#3}
\newcommand{\judgSctx}[4][]{#2\Vdash^{(#1)}#3\rhd#4}

\newcommand{\TEnv}[3]{\mathsf{TEnv}(#1,#2,#3)}
\newcommand{\isAppr}[2]{\mathtt{appropriate}_{#1}(#2)}

% Autoref

%% \renewcommand{\figureautorefname}{Fig.}

%% \renewcommand{\subsectionautorefname}{Subsec.}
%% \newcommand{\propositionautorefname}{Prop.}
%% \renewcommand{\theoremautorefname}{Thm.}

%\newcommand{\abs}[1]{\mathsf{abs}(#1)}
\newcommand{\abs}[1]{#1 \in \mathsf{Abs}}
\newcommand{\valPred}[1]{#1 \in \mathsf{Val}}
\newcommand{\purePred}[1]{#1 \in \mathsf{Pure}}

\newcommand{\betafireball}{\beta_f}

\newcommand{\id}{{\tt I}}

\newcommand{\positionalFlag}{positional flag\xspace}

\newcommand{\typesystem}{\mathcal{U}}
\newcommand{\relation}{\mathcal{R}}
\newcommand{\rewrite}[1]{\rightarrow_{#1}}
\newcommand{\rewriten}[1]{\rightarrow^*_{#1}}
\newcommand{\rewriteeq}[1]{\rightarrow^=_{#1}}
\newcommand{\rewritenpasos}[2]{\rightarrow^{#2}_{#1}}

\newcommand{\equivC}{\equiv_{c}}

%% Simulation %%%%%%%%%%%%%%%%%%%%%%%%%%%%%%%%%%%%%%%%%%%%%%%%%%%%%%%%%%%%%%%%%
\newcommand{\glamour}{GLAMoUr\xspace}
\newcommand{\usefmaca}[4]{#2 & #3 & #4 & #1}

% Acronyms:
\newcommand{\admsym}{{\mathtt c}}
\newcommand{\esym}{{\mathtt e}}
\newcommand{\msym}{{\mathtt m}}
\newcommand{\usym}{{\mathtt u}}
% \newcommand{\csym}{{\mathtt c}}
% \newcommand{\lsym}{{\mathtt l}}
% \newcommand{\osym}{{\mathtt o}}
% \newcommand{\togens}{\multimap_{\fsym}}
% \newcommand{\togensm}{\multimap_{\msym}}
% \newcommand{\togense}{\multimap_{\esym}}

% closures

\newcommand{\ctxholep}[1]{\langle #1\rangle}
\newcommand{\ectx}{\ctxhole}

% evaluation contexts (set to generic by default)
\newcommand{\evctx}{\genevctx}

% \newcommand{\ignore}[1]{}

% Data structures:
\newcommand{\pair}[2]{(#1,#2)}

% Generic evaluation contexts (to be used in \renewcommand if needed):
\newcommand{\genevctx}{F}

% States:
\newcommand{\glamourst}[4]{(#1,#2,#3,#4)}
\renewcommand{\state}{s}
\newcommand{\statetwo}{s'}
\newcommand{\statei}{s_0}
\newcommand{\stater}{s_r}

% Global environments:
\newcommand{\genv}{E}
\newcommand{\genvtwo}{E'}
\newcommand{\decgenv}{\decode{\genv}}
%\newcommand{\decgenvtwo}{\decode{\genvtwo}}
%\newcommand{\decgenvthree}{\decode{\genvthree}}
%\newcommand{\decgenvfour}{\decode{\genvfour}}
%\newcommand{\decgenvpx}[2]{\ctxholep{#2}\decode{#1}}
%\newcommand{\decgenvp}[1]{\decgenvpx\genv{#1}}
%\newcommand{\decgenvtwop}[1]{\decgenvpx\genvtwo{#1}}
%\newcommand{\decgenvthreep}[1]{\decgenvpx\genvthree{#1}}
%\newcommand{\decgenvfourp}[1]{\decgenvpx\genvfour{#1}}

% Context dumps:
\renewcommand{\dump}{D}
\newcommand{\dumptwo}{D'}
%\newcommand{\dumpthree}{D''}
% \newcommand{\decdump}{\decode \dump}
% \newcommand{\decdumpp}[1]{\decdump \ctxholep{#1}}
% \newcommand{\decdumptwo}{\decode{\dumptwo}}
% \newcommand{\decdumpthree}{\decode{\dumpthree}}
% \newcommand{\decdumptwop}[1]{\decdumptwo\ctxholep{#1}}
% \newcommand{\decdumpthreep}[1]{\decdumpthree\ctxholep{#1}}

% Stacks:
\newcommand{\estack}{\epsilon}
\newcommand{\cons}{:}
\newcommand{\stack}{\pi}
\newcommand{\stacktwo}{\pi'}

%\newcommand{\stackthree}{\pi''}
% \newcommand{\fnst}[2]{\mathbf{f}(#1,#2)}
% \newcommand{\argst}[1]{\mathbf{a}(#1)}
% \newcommand{\esst}[3]{\mathbf{s}(#1,#2,#3)}
% \newcommand{\nesst}[3]{(#2,#1,#3)}
% \newcommand{\decstack}{\decode{\pi}}
% \newcommand{\decstacktwo}{\decode{\pi'}}
% \newcommand{\decstackthree}{\decode{\pi''}}
% \newcommand{\decstacktwop}[1]{\decodestack{#1}{\stacktwo}}
% \newcommand{\decstackthreep}[1]{\decodestack{#1}{\stackthree}}

% Stack items:
\newcommand{\stackitem}{\phi}
\newcommand{\stackitemtwo}{\psi}
\newcommand{\stackiteml}[1][\lab]{\stackitem^{#1}}
\newcommand{\stackitemtwol}[1][\lab]{\stackitemtwo^{#1}}

% Codes:
\newcommand{\tocode}[1]{\mathbf{#1}}
\newcommand{\code}{\mathbf{\tm}}
\newcommand{\codetwo}{\mathbf{\tmtwo}}
\newcommand{\codethree}{\mathbf{\tmthree}}

% Transition rules:
\newcommand{\tomachhole}[1]{\leadsto_{#1}}
\newcommand{\tomachcone}{\tomachhole{\admsym_1}}
\newcommand{\tomachum}{\tomachhole{\usym\msym}}
\newcommand{\tomachctwo}{\tomachhole{\admsym_2}}
\newcommand{\tomachcthree}{\tomachhole{\admsym_3}}
\newcommand{\tomachcfour}{\tomachhole{\admsym_4}}
\newcommand{\tomachcfive}{\tomachhole{\admsym_5}}
\newcommand{\tomachue}{\tomachhole{\usym\esym}}
\newcommand{\tomachume}{\tomachhole{\usym\msym,\usym\esym}}

% Labels:
\newcommand{\lab}{l}
\newcommand{\alive}{\mathbb{A}}
\newcommand{\dead}{\mathbb{S}}
\newcommand{\herval}[1]{#1^\alive}
\newcommand{\rename}[1]{#1^\alpha}
%\newcommand{\labtwo}{\beta}
%\newcommand{\labthree}{\gamma}

% Decoding
\newcommand{\decode}[1]{\{\!\! \{ #1 \} \! \! \}}
\newcommand{\decodep}[2]{\decode{#1}\ctxholep{#2}}

%\newcommand{\econs}[2]{#1#2} % abbreviated
%\newcommand{\econsx}[2]{#1\cons#2} % in the grammar

% Constants:
%\newcommand{\const}{a}
%\newcommand{\consttwo}{b}
%\newcommand{\constthree}{c}

% generalized constants
%\newcommand{\gconst}{A}
%\newcommand{\gconsttwo}{B}
%\newcommand{\gconstthree}{C}

%%%% Useful Open CBV is Reasonable
\newcommand{\tobetafireball}{\to_{\betafireball}}
\newcommand{\unfold}[1]{#1^\Downarrow}

\newcommand{\iin}{\! \in \!}
\newcommand{\notiin}{\! \notin \!}
\newcommand{\VSC}{\textsc{vsc}\xspace}

\newcommand{\LOCBV}{\ensuremath{\textup{\textsc{lcbv}}^\Vsym}\xspace}
\newcommand{\UOCBV}{\ensuremath{\textup{\textsc{ucbv}}^\Usym}\xspace}

%%%
%%%% Nueva notación para los tipos
%%%%
\newcommand{\tight}{\mathbb{t}}
\newcommand{\tightN}{\,\mathbb{s}}
\renewcommand{\typ}{\alpha}
\renewcommand{\typtwo}{\beta}

\renewcommand{\nityp}{\mathcal{M}}
\renewcommand{\nityptwo}{\mathcal{N}}
\renewcommand{\mtyp}{\mathcal{T}}
\renewcommand{\mtyptwo}{\mathcal{S}}
\renewcommand{\mtypthree}{\mathcal{R}}
\renewcommand{\mtypfour}{\mathcal{Q}}

\newcommand{\hiddenproof}[2]{\begin{proof}#1\end{proof}}
\newcommand{\HIDDENFRAGMENT}[2]{#1}

\allowdisplaybreaks[1]

%%
%% \BibTeX command to typeset BibTeX logo in the docs
\AtBeginDocument{%
  }

\settopmatter{printacmref=false}
\settopmatter{printfolios=true}
\begin{document}

%%
%% The "title" command has an optional parameter,
%% allowing the author to define a "short title" to be used in page headers.
\title{Useful Call-by-Value: Syntax and Semantics \\ (Technical Report)}

%%
%% The "author" command and its associated commands are used to define
%% the authors and their affiliations.
%% Of note is the shared affiliation of the first two authors, and the
%% "authornote" and "authornotemark" commands
%% used to denote shared contribution to the research.
% \author{Anonymous authors}
\author{Pablo Barenbaum}
\email{pbarenbaum@dc.uba.ar}
\affiliation{%
  \institution{Universidad Nacional de Quilmes (CONICET), and Instituto de Ciencias de la Computación, UBA}
  \country{Argentina}
}

\author{Delia Kesner}
\email{kesner@irif.fr}
\affiliation{%
  \institution{Université de Paris, CNRS, IRIF}
  \country{France}
}

\author{Mariana Milicich}
\email{milicich@irif.fr}
\affiliation{%
  \institution{Université Paris Cité, CNRS, IRIF}
  \country{France}
}% used to denote shared contribution to the research.

%%
%% By default, the full list of authors will be used in the page
%% headers. Often, this list is too long, and will overlap
%% other information printed in the page headers. This command allows
%% the author to define a more concise list
%% of authors' names for this purpose.
\renewcommand{\shortauthors}{Barenbaum, Kesner, and Milicich}

%%
%% The abstract is a short summary of the work to be presented in the
%% article.
%\input{abstract}
\onecolumn

\begin{abstract}
Useful evaluation, introduced by Accattoli and Dal Lago, is an
optimised evaluation mechanism for functional programming languages.
It relies on representing programs with sharing and imposing a 
restricted notion of \emph{useful substitutions}, that intuitively 
disallows copying subterms that do not contribute to the progress of 
the computation.

Initially defined in the framework of call-by-name, 
\emph{useful evaluation} has since been extended to call-by-value 
(\CBV), where it is shown that useful \CBV is an optimisation of 
standard \CBV that preserves its original semantics.
This preservation result has been shown by means of syntactical rewriting
techniques, requiring ad-hoc proofs.
Consequently, the applicability of such optimisations to other 
models of computation remains limited.

This work provides the first inductive definition of useful \CBV 
evaluation. 
For that, we first restrict the substitution operation in the Value 
Substitution Calculus to be linear, yielding the \LOCBV strategy. 
We then further restrict substitution in \LOCBV, so that substitution 
contributes to the progress of the computation.
This optimisation is the \UOCBV strategy, and its notion of 
substitution is sensitive to the surrounding evaluation context, so 
it is non-trivial to capture it inductively. 
Moreover, we show that \UOCBV is a sound and complete implementation 
of \LOCBV, optimised to implement useful evaluation.
As a further contribution, we show that an existing notion of usefulness
in the literature, namely the \glamour abstract machine, implements
the \UOCBV strategy with polynomial overhead in time. 
This establishes that \UOCBV is time-invariant, \ie, that the number 
of reduction steps to normal form in \UOCBV can be used as a measure 
of time complexity.

Defining \UOCBV leads us to the first semantic model of useful \CBV 
evaluation through system $\typesystem$, a non-idempotent intersection 
type system.
Our main result is a characterisation of termination for useful \CBV 
evaluation via system $\typesystem$:
a term is typable in system $\typesystem$ if and only if it terminates 
in \UOCBV. 
Additionally, system $\typesystem$ provides a quantitative 
interpretation for \UOCBV, offering exact step-count information for 
program evaluation. 
Even though the specification of the operational semantics of 
\UOCBV is highly complex, system $\typesystem$ is notably simple. 
As far as we know, system $\typesystem$ is one of the scarce 
quantitative type systems capturing exactly the substitution step-count 
for a call-by-value strategy.
\end{abstract}

%%
%% The code below is generated by the tool at http://dl.acm.org/ccs.cfm.
%% Please copy and paste the code instead of the example below.
%%
% \begin{CCSXML}
% <ccs2012>
%    <concept>
%        <concept_id>10003752.10003753.10003754.10003733</concept_id>
%        <concept_desc>Theory of computation~Lambda calculus</concept_desc>
%        <concept_significance>500</concept_significance>
%        </concept>
%    <concept>
%        <concept_id>10003752.10003790.10011740</concept_id>
%        <concept_desc>Theory of computation~Type theory</concept_desc>
%        <concept_significance>500</concept_significance>
%        </concept>
%    <concept>
%        <concept_id>10003752.10010124.10010131.10010134</concept_id>
%        <concept_desc>Theory of computation~Operational semantics</concept_desc>
%        <concept_significance>500</concept_significance>
%        </concept>
%  </ccs2012>
% \end{CCSXML}

% \ccsdesc[500]{Theory of computation~Lambda calculus}
% \ccsdesc[500]{Theory of computation~Type theory}
% \ccsdesc[500]{Theory of computation~Operational semantics}

%%
%% Keywords. The author(s) should pick words that accurately describe
%% the work being presented. Separate the keywords with commas.
% \keywords{
%   lambda calculus, evaluation strategies, call-by-value, useful evaluation, 
%   intersection types,
%   quantitative models
% }

%\received{20 February 2007}
%\received[revised]{12 March 2009}
%\received[accepted]{5 June 2009}

% To hide proofs : 
\newcommand{\input{}}[1]{\input{#1}}
% To show proofs : \newcommand{\hinput}[1]{\input{#1}}
% \newcommand{\hinput}[1]{\input{#1}}

%%
%% This command processes the author and affiliation and title
%% information and builds the first part of the formatted document.

\maketitle

\setcounter{tocdepth}{2}

\section{Introduction}
\label{sec:introduction}

The $\lambda$-calculus is the foundational model behind functional 
programming languages and most proof assistants. 
Implementing them \emph{efficiently} requires designing an evaluation 
mechanism that closely adheres to the operational semantics, while 
optimising the key operations for resource efficiency, both for time 
and space. 
This creates a significant gap between the original model and its 
optimised version, making it crucial to guarantee that both notions 
have the same \emph{observable behaviour}.

In the $\lambda$-calculus, evaluation strategies vary, with 
\emph{call-by-name} (\CBN) and \emph{call-by-value} (\CBV) being two 
prominent examples. 
In both cases, efficient implementations rely on term representations 
with \emph{sharing}. 
For example, implementations typically use environments to bind a 
variable to a shared subexpression, rather than textually substituting 
the variable by many copies of the subexpression. 
Formally, $\lambda$-terms with sharing can be represented with 
\emph{explicit substitutions} 
(ESs)~\footnote{See~\cite{Kesner2009} for a survey on ESs.}. 
An ES binding a variable $\var$ to an expression $\tmtwo$ is written
$\esub\var\tmtwo$, and a term $\tm$ affected by such ES is written 
$\tm\esub\var\tmtwo$ (which some authors note 
``$\mathsf{let}\ \var = \tmtwo\ \mathsf{in}\ \tm$''). 
The expression $\tm\esub\var\tmtwo$ means that all the free occurrences
of $\var$ in $\tm$ are bound to $\tmtwo$, and thus $\tmtwo$ is 
\emph{shared}. 
While the $\beta$-reduction rule of the $\lambda$-calculus is based 
on the meta-level substitution operation $\tm\sub\var\tmtwo$ that 
replaces all the free occurrences of $\var$ in $\tm$ at once, some 
calculi with ESs allow replacing a single occurrence of a variable, 
such as $(\var \, \var)\esub\var\tm \to (\tm \, \var)\esub\var\tm$. 
This finer-grained substitution operation is called \defn{linear substitution}.

Accattoli and Dal Lago~\cite{AccattoliL14} have proposed an optimised 
evaluation mechanism, called \defn{useful evaluation}, which 
represents terms with ESs so that copy of shared subterms is 
restricted to avoid \emph{size explosion}.
This mechanism relies on two key \emph{usefulness principles} 
---that we call \princo and \princd---. 
Specifically, these two principles ensure that an occurrence of a 
variable may be linearly substituted by an expression only if this 
contributes to creating a \emph{redex}\footnote{\textbf{Red}ucible \textbf{ex}pression.}.
Principle \princo (\emph{Sharing Structures}) ensures that terms 
headed by a free variable, or more precisely \emph{structures}, such
as $\var\,\vartwo$, always remain shared in ESs, and are never
copied. 
Principle \princd (\emph{Substituting Abstractions for Progress}) 
ensures that an occurrence of a variable bound to a 
$\lambda$-abstraction may be substituted only if the variable is
applied to an argument. 
See \cref{sec:preliminary_notions} for more details.

Useful evaluation has been extended to 
\emph{open \CBV}~\cite{AccattoliC15, AccattoliG17}, \ie, \CBV over 
terms that may contain free variables.
Unlike typical strategies that are defined by context closure rules
that are \emph{local} in nature, useful evaluation is context-sensitive,
in the sense that steps involve side conditions of a \emph{global} nature.
This dependency complicates inductive reasoning, as non-useful steps 
can become useful when placed within specific contexts. 
This makes it challenging to define useful evaluation by means of 
inductive rules, as we discuss in the following sections.

\paragraph{\bf Quantitative Interpretations.}
The model of useful evaluation we define in this work is based on the
\emph{non-idempotent} flavour of \emph{intersection types} (IT). 
IT extend simple types with an intersection type constructor $\cap$,
allowing a program $\tm$ to be typed with $\typ \cap \typtwo$ if 
$\tm$ types with both $\typ$ and $\typtwo$ independently. 
Initially introduced as \emph{models} to capture \emph{qualitative} 
computational properties of functional programs~\cite{CoppoDezani78}, 
IT systems are able in particular to characterise termination of 
evaluation strategies: a program $\tm$ terminates in a given 
evaluation strategy if and only if $\tm$ is typable in an associated 
type system.

In the original formulation of intersection types, the type
constructor $\cap$ is considered to be \defn{idempotent}, meaning
that $\sigma \cap \sigma = \sigma$. 
So, an intersection $\typ_1 \cap \hdots \cap \typ_n$ can be 
represented as the \emph{set} $\set{\typ_1,\hdots,\typ_n}$. 
More recent works have adopted a \defn{non-idempotent} intersection 
type constructor~\cite{Gardner94,deCarvalho2007}. 
Here, an intersection $\typ_1 \cap \hdots \cap \typ_n$ can be 
represented as the \emph{multiset} $\mset{\typ_1,\hdots,\typ_n}$. 
Like their idempotent precursors, non-idempotent IT still allow 
characterising operational properties of programs by means of 
typability~\cite{Gardner94,deCarvalho2007} but also yield a 
substantial improvement: they provide \emph{quantitative} measures 
about these properties.
For instance, from a typing derivation, it is possible to extract an
\emph{upper bound} ---or even an \emph{exact measure}--- for the
number of steps which is necessary to reach the normal form.

More precisely, exact measures can be obtained in the so-called
\emph{tight} quantitative type systems~\cite{AccattoliGK18}, by
decorating the type judgements with counters, and by imposing
a \emph{tightness} condition on typing derivations to ensure that
they are \emph{minimal}.
More specifically, typing judgements are enriched with integer
\emph{counters} specifying some quantitative information; for
instance, in a typing judgement $\tctx \vdash^m \tm : \typ$, the
counter $m$ is used to capture the fact that $\tm$ evaluates in
exactly $m$ reduction steps to a normal form.

To (quantitatively) characterise termination for a given evaluation
strategy in such type systems, one relies on the results of
\emph{soundness} and \emph{completeness} of the type system with
respect to the corresponding evaluation strategy. 
Here, \emph{quantitative} soundness means that for any \emph{tight} 
type derivation of a program $\tm$ with counter $\cm$, the program 
$\tm$ evaluates to normal form in exactly $\cm$ steps; this idea is
generalised for steps of many possible kinds with counters
$\cm_1,\hdots,\cm_n$. 
Conversely, quantitative \defn{completeness} means that every 
reduction sequence to normal form of a given length corresponds to a 
\emph{tight} typing derivation with appropriate counters.

\emph{Quantitative types} based on non-idempotent IT have been
applied to various evaluation strategies in the $\lambda$-calculus
for obtaining upper bounds and exact measures~\cite
{BernadetGL13,BucciarelliKV17,AccattoliGK18,deCarvalho18,KesnerViso22},
as well as in the contexts of classical calculi~\cite
{KesnerVial17,KesnerVial20}, call-by-value~\cite
{Ehrhard12,AccattoliGuerrieri18,Guerrieri18,AccattoliGuerrieri22},
call-by-need~\cite
{BalabonskiBBK17,BarenbaumBM18,AccattoliGL19,Leberle21},
call-by-push-value~\cite{BucciarelliKRV20}, etc. 
These type systems also serve as semantic interpretations, akin to 
\emph{relational models} in the usual sense of linear logic~\cite
{Girard88,BucciarelliE01}.

\paragraph{\bf Contributions.}
This work is split in two parts.
The \textbf{first part} defines the first inductive specification of 
useful evaluation, in particular in the framework of open \CBV.
Although most functional programming languages are restricted to 
closed terms (forbidding free variables), working in an open setting
(allowing free variables) broadens the applicability of our results.
Furthermore, being able to treat the open \CBV case is a prerequisite 
to deal in the future with other cases.
For example, Gr\'egoire and Leroy study \emph{strong} \CBV evaluation 
in~\cite{GregoireL02}, motivated by the implementation of proof 
assistants based on dependent type theory.

In contrast to existing specifications of useful evaluation in the
literature, our calculus relies on \emph{inductive} rules. 
To define the new strategy \UOCBV, we proceed in two stages. 
First, we recall in \textbf{\cref{sec:opencbv}} the 
\emph{Value Substitution Calculus} (\VSC)~\cite{AccattoliP12}, which 
is the starting point. 
We then introduce the \emph{\nonuseful} open \CBV strategy (\LOCBV), 
which refines the \VSC with \emph{linear substitution}.
Moreover, \LOCBV implements principle \princo but still not \princd. 
Second, we introduce in \textbf{\cref{sec:usefulcbv}} the \UOCBV 
strategy, by refining evaluation in \LOCBV to also implement principle 
\princd. 
Our inductive approach to usefulness has been inspired by the 
definition of strong call-by-need evaluation in~\cite{BalabonskiLM23}. 
We also prove key operational properties of \UOCBV; in particular, it 
enjoys the diamond property ---and thus confluence. 
In \textbf{\cref{sec:relating}}, we show that \UOCBV is indeed a 
useful implementation of \LOCBV. 
The relationship between the two strategies is established 
syntactically, by means of rewriting techniques, and relying on a
\emph{partial unfolding} operation that performs \emph{all} the 
substitutions assigning values to variables. 
The main result is that \UOCBV preserves the semantics of \LOCBV, \ie,
if \LOCBV reaches a result from a starting term then, \UOCBV reaches 
the same result, up to partial unfolding.

Finally, in \textbf{\cref{sec:usefulcbv_invariant}}, we show that \UOCBV 
can in turn be implemented by a lower-level abstract machine 
implementing useful \CBV, namely the \glamour~\cite{AccattoliC15}.
This result not only connects \UOCBV with a preexisting notion of usefulness,
but it also entails that \UOCBV is \emph{time-invariant}. 
This means that the length of reductions to normal form can be taken 
as a measure of time complexity. 
More precisely, \UOCBV can be simulated by a \emph{reasonable} cost 
model of computation in the sense of~\cite{SlotvEB84} (such as a 
Turing machine or a RAM) with at most polynomial overhead in time.

The \textbf{second part} defines the first semantic model of useful
\CBV evaluation, addressing the lack of a high-level, semantic 
characterisation of this strategy.
In \textbf{\cref{sec:typing}}, we define a type system based on non-idempotent 
intersection types called system $\typesystem$, which is shown to
characterise termination in \UOCBV.
We show that system $\typesystem$ turns out to be the first semantic 
model for useful \CBV evaluation, and moreover, it provides 
quantitative information on the evaluation strategy.
To do so, we equip type judgements of system $\typesystem$ with 
counters, and we define a notion of \emph{tightness}, intuitively 
capturing minimality of derivations. 
The counters are meant to capture the exact step-count for program 
evaluation. 
More precisely, we show that a term $\tm$ is \emph{tightly} typable 
with counters $\cm$ and $\ce$ in system $\typesystem$ if and only if 
$\tm$ terminates in \UOCBV in exactly $\cm$ function application steps 
and $\ce$ substitution steps
(\cref{coro:quantitative_characterisation_termination}).

Our contribution is novel, as previous definitions of useful 
evaluation (including useful \CBV) lack semantic models and current
quantitative interpretations of \CBV do not consider \emph
{usefulness}.
Moreover, except~\cite{KesnerViso22}, entirely based 
on the \emph{persistent/consuming} paradigm, this is the only 
quantitative type system for \CBV that is able to count \emph
{substitution} steps \emph{exactly}.

\section{Preliminary notions}
\label{sec:preliminary_notions}

In this section, we define the shared notions for both strategies 
\LOCBV and \UOCBV.
Next, we explain the two principles \princo and \princd that underlie
the notion of useful evaluation.
Moreover, we recall some background notions of rewriting theory.

\paragraph{\bf Syntax.}
Given a denumerable set of \defn{variables} ($\var,\vartwo,\varthree,\hdots$), 
the sets of \defn{terms} ($\tm,\tmtwo,\tmthree,\hdots$),
\defn{substitution contexts} ($\sctx,\sctxtwo,\hdots$), and 
\defn{values} ($\val,\valtwo,\hdots$) are given by the following grammars:
\[
  \tm \eqgram \var \mid \lam\var\tm \mid \tm\,\tm \mid \tm\esub\var\tm 
  \HS
  \sctx \eqgram \ctxhole \mid \sctx\esub\var\tm
  \HS
  \val \eqgram \var \mid \lam\var\tm
\]
The set of terms includes \defn{variables}, \defn{abstractions},
\defn{applications}, and \defn{closures} $\tm\esub\var\tmtwo$ representing 
an \defn{explicit substitution} (ES) $\esub\var\tmtwo$ on a term
$\tm$. 
\defn{Free} and \defn{bound occurrences} of variables are defined as 
usual, where free occurrences of $\var$ in $\tm$ are bound in 
$\tm\esub\var\tmtwo$.
We write $\fv\tm$ and $\fv\sctx$ for the set of free variables of a 
term and of a context, respectively.
Terms are considered up to \emph{$\alpha$-renaming} of bound variables. 
We write $\tm\sctx$ for the \emph{variable-capturing} \defn{replacement} 
of the hole $\ctxhole$ in $\sctx$ by $\tm$, keeping the standard 
notation $\ctxof\gctx\tm$ for other kinds of contexts.
We write $\abs\tm$ if $\tm$ is of the form $(\lam\var\tm)\sctx$ and
$\tm\sub\var\tmtwo$ stands for the \emph{capture-avoiding} 
\defn{substitution} of the free occurrences of $\var$ with $\tmtwo$ in $\tm$.
The set of \defn{reachable variables} of a term $\tm$ is written 
$\rv\tm$ and defined as:
\[
  \rv\var                 \eqdef \set\var
  \HS\HS
  \rv{\lam\var\tm}        \eqdef \emptyset
  \HS\HS
  \rv{\tm \, \tmtwo}      \eqdef \rv\tm \cup \rv\tmtwo
  \HS\HS
  \rv{\tm\esub\var\tmtwo} \eqdef (\rv\tm \setminus \set\var) \cup \rv\tmtwo
\]

Some recurring terms are the identity function $\id \eqdef \lam\var\var$ 
and the operator $\omega \eqdef \lam\var{\var \, \var}$.

\paragraph{\bf Principles of Useful Evaluation.}
As mentioned in the introduction, useful evaluation is motivated by
the two following key \emph{usefulness principles}:

\indent \emph{Sharing Structures} (\princo).
  A term is a \textbf{structure}\footnote{This name is borrowed
  from~\cite{BalabonskiBBK17}, but the terminology \emph{inert term} 
  is also used~\cite{AccattoliC15}.} if its \emph{(full) unfolding} 
  ---the result of performing all the explicit substitutions by using 
  the capture-avoiding substitution--- results in an application 
  headed by a variable, \ie, of the form $\var\,\tm_1\hdots\tm_n$, 
  where $(n \geq 0)$.
  For example,
  $(\var\,\var)\esub\var{\vartwo\,\id}\esub\vartwo{\varthree\,\varthree}$
  is a structure, as it (fully) unfolds to 
  $\varthree\,\varthree\,\id\,(\varthree\,\varthree\,\id)$.
  Structures must remain \emph{shared} in useful evaluation, as 
  substituting a variable by a structure does not create a 
  \emph{function application redex}.
  Thus, a term like 
  $(\var\,\var)\esub\var{\vartwo\,\id}\esub\vartwo{\varthree\,\varthree}$
  is a normal form in useful evaluation.
  The notion of structure just defined is context-independent.
  Actually, we shall need a subtler \emph{context-dependent} notion of
  structure relative to a ``structure frame'' (defined in \cref{sec:usefulcbv}).
  For instance $\var\,\vartwo$ should \emph{not} be treated as a structure
  under a context that binds $\var$ to an abstraction.

\indent \emph{Substituting Abstractions for Progress} (\princd).
  In useful evaluation, abstractions are substituted only if they  
  contribute to creating a function application redex,
  thus ensuring progress in the computation. 
  For example,  a  reduction step like
  $\var\esub\var\id \, \vartwo \to \id\esub\var\id \, \vartwo$ is 
  useful, while the steps 
  $\var\esub\var\id \to \id\esub\var\id$ and 
  $(\tm\,\var)\esub\var\id \to (\tm\,\id)\esub\var\id$
  are not, as they do not contribute to creating function application redexes.
  Some of these ideas can also be found in the literature on 
  \emph{optimal reduction} (see \eg, \cite{Yoshida93}).

\paragraph{\bf Background on Rewriting Theory.}
We now introduce some general notions of reduction that will be used 
all along the document. 
Given a \defn{reduction system} $\relation$, we denote by 
$\rewrite\relation$ the (one-step) reduction relation associated to 
system $\relation$. 
We write $\rewriteeq\relation$ and $\rewriten\relation$ for the 
reflexive and reflexive-transitive closure of $\rewrite\relation$, 
and $\rewritenpasos\relation{n}$ for the composition of $n$-steps of 
$\rewrite\relation$. 
A term $\tm$ is said to be \defn{$\relation$-reducible} if there is 
$\tmtwo$ such that $\tm \rewrite\relation \tmtwo$, and $\tm$ is said 
to be \defn{$\relation$-irreducible}, or in \defn{$\relation$-normal
form}, written $\tm \not \rewrite\relation$, if there is no $\tmtwo$ 
such that $\tm \rewrite\relation \tmtwo$. 
A term $\tm$ is said to be \defn{$\relation$-terminating} if there is 
no infinite $\relation$-sequence starting at $\tm$. 
A term $\tm$ is \defn{$\relation$-diamond} (or enjoys the 
$\relation$-diamond property) if $\tm \rewrite\relation \tm_0$ and 
$\tm \rewrite\relation \tm_1$ with $\tm_0 \neq \tm_1$ imply there is 
$\tm'$ such that $\tm_0 \rewrite\relation \tm'$ and 
$\tm_1 \rewrite\relation \tm'$. 
A term $\tm$ is  \defn{$\relation$-locally confluent} 
(resp. \defn{$\relation$-confluent}) if $\tm \rewrite\relation \tm_0$
and $\tm \rewrite\relation \tm_1$ (resp. $\tm \rewriten\relation \tm_0$ 
and $\tm \rewriten\relation \tm_1$) imply there is $\tm'$ such that
$\tm_0 \rewriten\relation \tm'$ and $\tm_1 \rewriten\relation \tm'$. 
A relation $\relation$ is \defn{terminating} (resp. \defn{diamond}, 
\defn{locally confluent}, \defn{confluent}) if and only if every term 
is $\relation$-terminating (resp. $\relation$-diamond, 
$\relation$-locally confluent, $\relation$-confluent). 
Any relation verifying the diamond property is in particular confluent,
and any relation verifying termination and local confluence is
confluent~\cite{Terese03}. 
Moreover, if $\tm$ is confluent, then its $\relation$-normal form, 
if it exists, is unique~\cite{Terese03}.

\section{\Nonuseful Call-by-Value}
\label{sec:opencbv}

In this section, we define the \textsc{L}inear \textsc{\CBV} strategy 
(\LOCBV), a first step towards useful evaluation.
\LOCBV fulfils principle \princo by keeping structures always shared, 
but it does not fulfil \princd, as it allows substituting variables 
by values unrestrictedly (see~\cref{sec:preliminary_notions} for the 
definitions of \princo and \princd).

Furthermore, \LOCBV adapts the substitution operation in the Value
Substitution Calculus (\VSC) from~\cite{AccattoliP12} to be 
\emph{linear} by allowing the substitution of one occurrence of a 
variable at a time.
We begin by explaining the difference between the (non-linear) 
original substitution mechanism of the \VSC, and the linear 
substitution of \LOCBV.
We then formally define \LOCBV and conclude by stating its main 
properties.

\paragraph{\bf The Value Substitution Calculus.}
Avoiding the substitution of variables by structures is a sufficient 
mechanism to obtain an invariant implementation of open \CBV (see 
for instance~\cite[Lemma 6]{AccattoliG17}).
One calculus implementing this mechanism for open terms is the \VSC, 
which consists of two kinds of reduction steps:

\emph{Distant beta} ($\dbsym$) steps are given by the rule
$(\lam\var\tm)\sctx\,\tmtwo \to \tm\esub\var\tmtwo\sctx$, where 
$\sctx$ denotes a list of ES, called a \emph{substitution context}.
A $\dbsym$-step performs a \emph{function application} by creating
an ES.
For example,
$(\lam\var\id)\,(\vartwo\, \vartwo)\,\varthree
 \to \id\esub\var{\vartwo \, \vartwo}\,\varthree
 \to \varfour\esub{\varfour}{\varthree}\esub\var{\vartwo \, \vartwo}$.
After the first step, the ES $\esub\var{\vartwo \, \vartwo}$
appears to be blocking the interaction between $\id$ and its argument
$\varthree$.
However, the $\dbsym$-rule allows an arbitrary list of ESs to 
appear between an abstraction and its argument, thus allowing to 
fire the second reduction step in the example.

\emph{Substitution} steps ($\svsym$) are given by the rule 
$\tm\esub\var{\val\sctx} \to \tm\sub\var\val\sctx$, which substitutes 
a variable $\var$ by a value $\val$, \emph{pushing outside} the 
substitution context $\sctx$ originally accompanying the value. 
For example, $(\var\, \var)\esub\var{\omega\esub\vartwo\id} \to 
(\omega\, \omega)\esub\vartwo\id$, where we recall that 
$\omega \eqdef \lam\varthree{\varthree\,\varthree}$.
The \VSC implements principle \princo because it only allows 
substituting variables by values (so that structures remain shared).
For example, the step
$(\var\,\var)\esub\var{\varthree\,\varthree} \to
\varthree\,\varthree\,(\varthree\,\varthree)$ is not allowed. 

\paragraph{\bf Towards Linear Substitution.}
\LOCBV refines the \VSC by implementing \emph{linear substitution},
which proceeds by \emph{micro-steps}, replacing one occurrence of a 
variable at a time.
Linear substitution is a prerequisite ---but not sufficient--- to
fulfil the principle \princd, \ie, to \emph{fully} implement useful 
evaluation.
Specifically, when a variable occurs multiple times in a term,
substituting it with an abstraction may create a function application 
in some cases but not in others. For example, in the step
$(\varthree\,(\underline\var\,\vartwo)\,\overline\var)\esub\var\id
  \to
  \varthree\,(\id\,\vartwo)\,\id$,
substituting the underlined occurrence of $\var$ by $\id$ creates
a $\dbsym$-redex (and thus it is useful) while substituting the 
overlined occurrence of $\var$ by $\id$ does not contribute to 
creating a $\dbsym$-redex.

\paragraph{\bf The Linear \CBV Calculus.}
We now define the Linear \CBV strategy \LOCBV, which refines the \VSC
by implementing linear substitution. 
We adopt a formulation of rules based on the style of~\cite{BalabonskiLM23},
which we also follow to formulate our inductive notion of useful evaluation,
\UOCBV, in~\cref{sec:usefulcbv}. 

Formally, we define a family of binary relations $\tovv\rulename$,
where $\rulename \in \set{\ruledb, \rulelsv, \rulesub\var\val}$
distinguishes the \defn{step kind}, with $\var$ being a variable and 
$\val$ a value such that $\var \notin \fv\val$. 
The set of \defn{free variables} of a step kind $\rulename$ is given 
by $\fv\ruledb \eqdef \emptyset$, $\fv\rulelsv \eqdef \emptyset$, and
$\fv{\rulesub\var\val} \eqdef \set\var \cup \fv\val$.
The \defn{\nonuseful} relation $\tovv\rulename$ of \LOCBV is defined 
inductively as follows:
\label{def:non_useful_open_CBV}
\[
  \inferrule{
  }{
    (\lam\var\tm)\sctx \, \tmtwo \tovv\ruledb
    \tm\esub\var\tmtwo\sctx
  }\ruleVDb
  \HS
  \inferrule{
  }{
    \var \tovv{\rulesub\var\val} \val
  }\ruleVSub
  \HS
  \inferrule{
    \tm \tovv{\rulesub\var\val} \tm'
  }{
    \tm\esub\var{\val\sctx} \tovv\rulelsv
    \tm'\esub\var\val\sctx
  }\ruleVLsv
\]
\[ 
  \inferrule{
    \tm \tovv\rulename \tm'
  }{
    \tm \, \tmtwo \tovv\rulename \tm' \, \tmtwo
  }\ruleVAppL
  \HS
  \inferrule{
    \tmtwo \tovv\rulename \tmtwo'
  }{
    \tm\,\tmtwo \tovv\rulename \tm \, \tmtwo'
  }\ruleVAppR
  \HS
  \inferrule{
    \tm \tovv\rulename \tm'
    \sep
    \var \notin \fv\rulename
  }{
    \tm\esub\var\tmtwo \tovv\rulename \tm'\esub\var\tmtwo
  }\ruleVEsL
  \HS
  \inferrule{
    \tmtwo \tovv\rulename \tmtwo'
  }{
    \tm\esub\var\tmtwo \tovv\rulename \tm\esub\var{\tmtwo'}
  }\ruleVEsR
\]
Rules \ruleVDb, \ruleVSub, and \ruleVLsv define the three kinds of 
reduction steps, whereas \ruleVAppL, \ruleVAppR, \ruleVEsL, and 
\ruleVEsR are congruence rules, and propagate any step kind. 
Note that evaluation is \emph{weak} and does not proceed within abstractions.
A step of the form $\tm \tovv\ruledb \tmtwo$ represents a 
\emph{distant beta} step.
The reduction steps $\tm \tovv{\rulesub\var\val} \tmtwo$ and 
$\tm \tovv\rulelsv \tmtwo$ constitute two kinds of \emph{substitution} steps.
In the first kind, $\tm \tovv{\rulesub\var\val} \tmtwo$, \emph{one} 
free occurrence of $\var$ in $\tm$ is substituted with $\val$.
In the second kind, $\tm \tovv\rulelsv \tmtwo$, \emph{one} bound 
occurrence of $\var$ is substituted with $\val$, provided $\var$ is 
bound to a term of the form $\val\sctx$ by an ES; \ruleVLsv is the 
only rule that introduces an $\lsvsym$-step.
Moreover, each step substituting a bound variable ($\tovv\rulelsv$) 
relies internally on a step that substitutes a free variable 
($\tovv{\rulesub\var\val}$).
These substitutions steps focus on a \emph{single} variable 
occurrence, thus we say the substitution operation is \emph{linear}.
One noteworthy remark is that rule \ruleVSub allows substitution of a 
variable for \emph{any} value, thus the reduction steps 
$\var \tovv{\rulesub\var{\val_1}} \val_1$ and
$\var \tovv{\rulesub\var{\val_2}} \val_2$ are valid, so \emph{confluence} 
of $\tovv{\rulesub{\_}{\_}}$ makes no sense: these steps are only an 
auxiliary mechanism to define $\rulelsv$-steps, and in particular to 
define the notion of \emph{linear} substitution of a single occurrence
of a variable by a value.
At the end of this section, we prove the property of confluence on 
the \defn{top-level} \LOCBV reduction, defined as 
$\tovvTop \eqdef \tovv\ruledb \cup \tovv\rulelsv$.
Moreover, the relation $\tovv\rulelsv$ is shown to be terminating
(\cref{cor:tolsv-terminating}).

Along with rule \ruleVDb, an application can be evaluated using rules 
\ruleVAppL and \ruleVAppR, which reduce within the left and right subterms, 
respectively.
Since rules \ruleVDb, \ruleVAppL, and \ruleVAppR overlap, reduction 
is \emph{non-deterministic}. 
For example:
\[
  \var\esub\var\id \, (\id\,\id) \tovvinv\ruledb
  \id\,\id \, (\id\,\id) \tovv\ruledb \id\,\id\,\var\esub\var\id
\]

The congruence rules \ruleVEsL and \ruleVEsR allow reduction on the 
left side of a term and within the argument of an ES, respectively. 
These rules overlap as well.
A key restriction in rule \ruleVEsL is that the variable $\var$ bound 
by the ES $\esub\var\tmtwo$ must not occur free in the step kind $\rulename$.
This restriction is to avoid variable capture; for example, it ensures 
that a ``pathological'' step like
$\vartwo\esub\var\varthree \tovv{\rulesub\vartwo\var} \var\esub\var\varthree$ 
cannot be derived from the valid step 
$\vartwo \tovv{\rulesub\vartwo\var} \var$.

\begin{example}
\label{example:evaluation_in_linear_cbv}
The following is a reduction sequence to normal form in \LOCBV:
\[
  \begin{array}{r@{\,\,}l@{\,\,}l@{\,\,}l@{\,\,}l@{\,\,}l@{\,\,}l}
    & (\lam\var{\varthree \, \var \, (\var\,\vartwo)}) \, \id
    & \tovv\ruledb 
    & (\varthree\,\var\,(\var\,\vartwo))\esub\var\id
  \\
      \tovv{\rulelsv} 
    & (\varthree\,\id\,(\var\,\vartwo))\esub\var\id
    & \tovv\rulelsv
    & (\varthree\,\id\,(\id\,\vartwo))\esub\var\id 
  \\
      \tovv\ruledb 
    & (\varthree\,\id\,(\varfour\esub\varfour\vartwo))\esub\var\id
    & \tovv\rulelsv 
    & (\varthree\,\id\,(\vartwo\esub\varfour\vartwo))\esub\var\id
  \end{array}
\]
\end{example}

\paragraph{\bf Confluence}
Despite the overlaps of some reduction rules, it is straightforward 
to show that top-level $\tovv{}$ reduction is confluent:
\begin{proposition}
$\tovvTop$ is confluent.
\end{proposition}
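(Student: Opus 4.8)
The plan is to establish the diamond property for $\tovvTop = \tovv\ruledb \cup \tovv\rulelsv$ at the level of one-step reductions, from which confluence follows immediately by the general fact recalled in the Background subsection. Since $\tovvTop$ is already known (or will be known, by \cref{cor:tolsv-terminating}) to have a terminating $\lsvsym$-component, one could alternatively appeal to termination plus local confluence; but the diamond property is the cleanest route here and also matches the style used later for \UOCBV, so I would pursue it directly.

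First I would make precise the claim to be proved: if $\tm \tovvTop \tm_0$ and $\tm \tovvTop \tm_1$ with $\tm_0 \neq \tm_1$, then there is a common reduct $\tm'$ with $\tm_0 \tovvTop \tm'$ and $\tm_1 \tovvTop \tm'$ (with single steps on each side, or at most one step — care is needed with the $\ruledb$/$\ruledb$ overlap where the two redexes coincide, which is why the hypothesis $\tm_0 \neq \tm_1$ appears). The proof proceeds by induction on (a suitable measure of) the term $\tm$, or equivalently by a case analysis on the pair of derivations of $\tm \tovvTop \tm_0$ and $\tm \tovvTop \tm_1$, following the inductive rules \ruleVDb, \ruleVLsv, \ruleVAppL, \ruleVAppR, \ruleVEsL, \ruleVEsR defining $\tovv{}$. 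The cases split into: (i) \emph{disjoint} redexes (the two steps act in independent positions, e.g.\ one via \ruleVAppL and the other via \ruleVAppR, or in the left and the argument of an ES), where the two steps trivially commute and $\tm'$ is obtained by firing each residual; (ii) \emph{nested} redexes that reduce to the induction hypothesis on a strict subterm; and (iii) genuine \emph{critical pairs}, where the redexes overlap at the same position. The only real overlaps to check are between \ruleVDb and itself (same $\dbsym$-redex — excluded by $\tm_0\neq\tm_1$, or closed by taking $\tm' = \tm_0 = \tm_1$), between \ruleVDb on an application $(\lam\var\tm)\sctx\,\tmtwo$ and an \ruleVLsv-step happening inside $\sctx$ or inside $\tmtwo$ (these are actually disjoint once one observes the $\dbsym$-step only rearranges $\sctx$ and $\tmtwo$ without duplicating them — a key point, since linearity prevents duplication), and between two \ruleVLsv-steps acting on the same ES $\esub\var{\val\sctx}$ at two distinct bound occurrences of $\var$ (again disjoint, since each \ruleVLsv-step replaces a single occurrence). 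In each case the local diagram is closed with one step on each side.

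The crucial structural observation underpinning all the overlap cases is that $\tovvTop$ never duplicates a subterm: the $\dbsym$-rule merely creates an ES and slides a substitution context, and the $\lsvsym$-rule replaces exactly one occurrence. Consequently there are no residuals-of-residuals to track, and every critical pair is trivially joinable in a single step on each branch — this is precisely what makes the diamond property hold rather than mere confluence via Newman's lemma. I would therefore highlight, as the one point deserving attention, the interaction of \ruleVDb with the substitution context $\sctx$: when a $\lsvsym$-redex lies inside some ES of $\sctx$, firing the $\dbsym$-step relocates that ES past $\esub\var\tmtwo$, so one must verify the relocated $\lsvsym$-redex is still available and yields the same result — a routine check, but the place where the formulation with substitution contexts could hide a subtlety.

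**Main obstacle.** I expect the main (though modest) obstacle to be the bookkeeping in the mixed \ruleVDb/\ruleVLsv overlap, specifically confirming that sliding a substitution context $\sctx$ out of an application in the $\dbsym$-step commutes cleanly with an $\lsvsym$-step located inside $\sctx$ (no variable-capture issues, thanks to $\alpha$-renaming, and no loss of the redex). Everything else — the disjoint and nested cases — is mechanical given the inductive presentation of $\tovv{}$.
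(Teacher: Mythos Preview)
Your plan has a genuine gap: the diamond property fails for $\tovvTop$, already for $\tovv\rulelsv$ alone. The critical pair you overlook is \ruleVLsv versus \ruleVEsR when the inner step changes the value being substituted. Take $\tm = \var\esub\var{\vartwo\esub\vartwo\id}$: the outer \ruleVLsv (with $\val=\vartwo$, $\sctx=\esub\vartwo\id$) yields $\tm_0 = \vartwo\esub\var\vartwo\esub\vartwo\id$, while \ruleVEsR (firing the inner $\lsvsym$) yields $\tm_1 = \var\esub\var{\id\esub\vartwo\id}$. From $\tm_1$ one \ruleVLsv step (now with $\val=\id$) reaches $\id\esub\var\id\esub\vartwo\id$, but from $\tm_0$ the only one-step $\tovvTop$-reducts are $\id\esub\var\vartwo\esub\vartwo\id$ and $\vartwo\esub\var\id\esub\vartwo\id$; joining requires two steps on that side. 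The flaw in your intuition that ``$\tovvTop$ never duplicates a subterm'' is that \ruleVLsv \emph{does} copy $\val$: it appears both in the substituted position and in the retained ES $\esub\var\val$, so a pending update to $\val$ now has two copies to chase. The paper's own local-confluence proof for the closely related relation $\tovalas$ (case \ruleVLsv--\ruleVEsR of \cref{lem:tovalas_wcr} in the appendix) closes exactly this pair with two steps on one branch, not one.

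The paper states the proposition without a detailed proof, calling it straightforward. A correct route would be: establish \emph{local} confluence of $\tovvTop$ by the case analysis you sketch (now allowing multiple steps to close); then, since $\tovvTop$ is not terminating (because of $\dbsym$), upgrade to confluence via a commutation argument ($\tovv\ruledb$ does enjoy the diamond property; $\tovv\rulelsv$ is terminating and locally confluent, hence confluent; show the two commute and apply Hindley--Rosen), or via Tait--Martin-L\"of parallel reduction, or via decreasing diagrams. Any of these is standard for linear-substitution calculi of this kind, which is presumably why the paper omits the details.
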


\section{Useful Call-by-Value}
\label{sec:usefulcbv}

In this section, we refine the notion of \LOCBV reduction introduced in
\cref{sec:opencbv}, the resulting strategy is called \UOCBV. 
While preserving principle \princo, already present in \LOCBV, \UOCBV 
evaluation also implements principle \princd, which is to be captured
\emph{inductively}.
To achieve useful \CBV evaluation, we define a family of reduction 
relations indexed by certain \emph{parameters} representing the 
essential information from the surrounding evaluation contexts. 
This is the minimal data necessary to decide whether a substitution
step is useful or not.

After defining the new strategy \UOCBV, we give an inductive 
characterisation of its normal forms (\cref{thm:uocbv_characterization_of_normal_forms}).
Following this result, we show that \UOCBV satisfies the Diamond 
Property (\cref{thm:diamond-property-top-level}), which ensures that 
the length of a reduction sequence to normal form does not depend on 
the particular order of redexes chosen to reduce terms.

\paragraph{\bf Towards the Useful \CBV Strategy.}
%\label{sec:useful}
To implement principle \princd in \UOCBV it is necessary to ensure
that substitution contributes to the \emph{progress} of the evaluation.
For example, the substitution step
$(\var\,\tm)\esub\var\vartwo\esub\vartwo\id
 \to (\vartwo\,\tm)\esub\var\vartwo\esub\vartwo\id$
is (indirectly) useful because the occurrence of $\var$ is applied to 
an argument $\tm$, and substituting $\var$ by $\vartwo$ then 
contributes to creating the underlined redex
$(\underline{\id \, \tm})\esub\var\vartwo\esub\vartwo\id$ after one 
more substitution step.
This motivates the need to identify \emph{hereditary abstractions},
which intuitively include abstractions and variables hereditarily bound to abstractions,
such as $\vartwo$ in the example.
Formally, let $\aset$ be a set of variables, called an \defn{abstraction frame}.
The set of \defn{hereditary abstractions} under $\aset$ is
written $\HAbs\aset$ and defined below:
\[
  \inferrule{
  }{
    \lam\var\tm \iin \HAbs\aset
  }\ruleHAbsLam
  \HS
  \inferrule{
    \var\iin\aset
  }{
    \var\iin\HAbs\aset
  }\ruleHAbsVar
  \HS
  \inferrule{
    \tm \iin \HAbs\aset
    \sep
    \var \notiin \aset
  }{
    \tm\esub\var\tmtwo \iin \HAbs\aset
  }\ruleHAbsSubi
  \HS
  \inferrule{
    \tm \iin \HAbs{\aset \cup \set\var}
    \sep
    \var \notiin \aset
    \sep
    \tmtwo \iin \HAbs\aset
  }{
    \tm\esub\var\tmtwo \iin \HAbs\aset
  }\ruleHAbsSubii
\]
The abstraction frame $\aset$ keeps track of variables that are bound
to hereditary abstractions.
Every abstraction is a hereditary abstraction (by the first rule),
while a variable is a hereditary abstraction if it appears in the abstraction frame
(by the second rule).
In the case of terms with ESs, the crucial point is that the bound variable $\var$ may be
added to the abstraction frame only if it is bound to a hereditary abstraction
(third and fourth rule).
For example, $\var\esub\var{\lam\vartwo\vartwo} \in \HAbs\aset$ for 
any $\aset$, and 
$\var\esub\varthree\varfour\esub\var\vartwo \in \HAbs\aset$ if and 
only if $\vartwo \in \aset$.
Note also that applications are never hereditary abstractions.

On the other hand, it is necessary to identify \emph{structures},
which intuitively include both applied variables and variables 
hereditarily bound to structures.
Formally, let $\sset$ be a set of variables, called a \defn{structure frame}.
The set of \defn{structures under $\sset$} is written $\Struct\sset$
and defined below:
\[
  \inferrule{
    \var \iin \sset
  }{
    \var \iin \Struct\sset
  }\ruleStructVar
  \HS
  \inferrule{
    \tm \iin \Struct\sset
  }{
    \tm\,\tmtwo \iin \Struct\sset
  }\ruleStructApp
  \HS
  \inferrule{
    \tm \iin \Struct\sset
    \sep
    \var\notiin\sset
  }{
    \tm\esub\var\tmtwo \iin \Struct\sset
  }\ruleStructSubi
  \HS
  \inferrule{
    \tm \iin \Struct{\sset\cup\set\var}
    \sep
    \var\notiin\sset
    \sep
    \tmtwo \iin \Struct\sset
  }{
    \tm\esub\var\tmtwo \iin \Struct\sset
  }\ruleStructSubii
\] 
The structure frame $\sset$ keeps track of variables that are bound
to structures.
Note that no abstraction belongs to the set of structures, for any $\sset$. 
An application is a structure if its left subterm is itself a structure,
thus excluding $\dbsym$-redexes from the set of structures under any $\sset$.
Variables in the structure frame are structures as well. 
For terms with ESs, the intuitions behind the third and fourth rules 
are analogous to the corresponding ones for the predicate $\HAbs{}$, respectively.
For example, performing the substitution on the left subterm in
$\var\esub\var{\vartwo_1 \, \vartwo_2} \, \varthree$ does not create 
any $\dbsym$-redex, so $\var\esub\var{\vartwo_1 \, \vartwo_2}$ is 
considered a structure.
Similarly, $(\var \, \vartwo)\esub\vartwo\id$ is a structure, even if 
$\vartwo$ is bound to an abstraction.
\medskip

We summarise below some key (but easy) properties of hereditary 
abstractions and structures:
\begin{remark}
\label{rem:habs_st}
(1) If $\aset \subseteq \aset'$, then $\HAbs\aset \subseteq \HAbs{\aset'}$; 
(2) A term of the form $\tm \, \tmtwo$ is never in $\HAbs\aset$; 
(3) If $\tm \in \HAbs\aset$, then $\tm = \val\sctx$; 
(4) If $\sset \subseteq \sset'$, then $\Struct\sset \subseteq \Struct{\sset'}$;
(5) A term of the form $(\lam\var\tm)\sctx$ is never in $\Struct\sset$; 
(6) For any $\sctx$, one has that $(\lam\var\tm)\sctx \in \HAbs\aset$.
\end{remark}

\paragraph{\bf The Useful \CBV Reduction.}
We can now formally define the strategy \UOCBV specified as a family 
of binary relations $\tov\rulename\aset\sset\appflag$, where 
$\rulename \in \set{\ruledb, \rulelsv, \rulesub\var\val}$ is a step 
kind, $\aset$ is an \abstractionframe, $\sset$ is a structure frame, 
and $\appflag \in \set{\app, \nonapp}$ is a \positionalFlag. 
The \defn{reduction relation} $\tov\rulename\aset\sset\appflag$ of 
\UOCBV is inductively defined as follows:
\[
  \inferrule{
  }{
    (\lam\var\tm)\sctx\,\tmtwo
    \tov\ruledb\aset\sset\appflag
    \tm\esub\var\tmtwo\sctx
  }\ruleUDb
  \HS
  \inferrule{
  }{
    \var \tov{\rulesub\var\val}{\aset \cup \set\var}\sset\app \val
  }\ruleUSub
  \HS
  \inferrule{
    \tm
    \tov{\rulesub\var\val}{\aset\cup\set\var}\sset\appflag
    \tm'
    \sep
    \var \notin \aset\cup\sset
    \sep
    \val\sctx \in \HAbs\aset
  }{
    \tm\esub\var{\val\sctx}
    \tov\rulelsv\aset\sset\appflag
    \tm'\esub\var\val\sctx
  }\ruleULsv
\]
\[
  \inferrule{
    \tm \tov\rulename\aset\sset\app \tm'
  }{
    \tm \, \tmtwo \tov\rulename\aset\sset\appflag \tm' \, \tmtwo
  }\ruleUAppL
  \HS
  \inferrule{
    \tm \in \Struct\sset
    \sep
    \tmtwo \tov\rulename\aset\sset\nonapp \tmtwo'
  }{
    \tm \, \tmtwo \tov\rulename\aset\sset\appflag \tm \, \tmtwo'
  }\ruleUAppR
  \HS
  \inferrule{
    \tmtwo \tov\rulename\aset\sset\nonapp \tmtwo'
  }{
    \tm\esub\var\tmtwo 
    \tov\rulename\aset\sset\appflag
    \tm\esub\var{\tmtwo'}
  }\ruleUEsR
\]
\[
  \inferrule{
    \tm \tov\rulename{\aset\cup\set\var}\sset\appflag \tm'
    \sep
    \tmtwo \in \HAbs\aset
    \sep
    \var \notin \aset \cup \sset
    \sep
    \var \notin \fv\rulename
  }{
    \tm\esub\var\tmtwo
    \tov\rulename\aset\sset\appflag
    \tm'\esub\var\tmtwo
  }\ruleUEsLAbs
\]
\[
  \inferrule{
    \tm \tov\rulename\aset{\sset\cup\set\var}\appflag \tm'
    \sep
    \tmtwo \in \Struct\sset
    \sep
    \var \notin \aset\cup\sset
    \sep
    \var\notin\fv\rulename
  }{
    \tm\esub\var\tmtwo
    \tov\rulename\aset\sset\appflag
    \tm'\esub\var\tmtwo
  }\ruleUEsLStruct
\]
Note that each \UOCBV step is also a \LOCBV step, \ie,
$\tov\rulename\aset\sset\appflag \ \subseteq \ \tovv\rulename$. 
Also, the reduction relation defined above is \defn{non-erasing}: if
$\tm\tov\rulename\aset\sset\appflag \tm'$ with
$\rulename \in \set{\ruledb, \rulelsv}$, then $\fv\tm = \fv{\tm'}$. 
As in the previous section, rules \ruleUDb, \ruleUSub, and \ruleULsv 
introduce the three possible step kinds, while all other cases 
(\ruleUAppL, \ruleUAppR, \ruleUEsLAbs, \ruleUEsLStruct, and \ruleUEsR) 
are congruence rules for steps of an arbitrary step kind $\rulename$. 
Reduction is \emph{weak} since there are no rules to evaluate under 
abstractions.

Rule \ruleUDb performs a function application step, identical to rule 
\ruleVDb in \LOCBV. 
Steps of the form $\tm \tov{\rulesub\var\val}\aset\sset\appflag \tmtwo$ 
and $\tm \tov\rulelsv\aset\sset\appflag \tmtwo$ represent two different 
kinds of \emph{substitution steps}, as in \LOCBV: the former substitutes 
\emph{one} free occurrence of a variable, while the latter substitutes
\emph{one} bound occurrence.
Only rule \ruleULsv creates an $\lsvsym$-step.
Each application of rule \ruleULsv relies internally on a 
$\rulesub\var\val$-step, where $\val$ must be a hereditary abstraction,
a restriction not required in rule \ruleVLsv rule in \LOCBV, but that
it is crucial to implement principle \princd
(\emph{substituting abstractions for progress}).

Substituting a free variable in a term $\tm$ using rule \ruleUSub is 
only possible when $\tm$ is in an applied position because this 
ensures the progress of the computation.
The following example shows an instance of the \ruleUSub rule:
\[
  \inferrule{
    \inferrule* [Right = \ruleULsv]{ 
      \inferrule* [Right = \ruleUSub]{
      }{
        \var
        \tov{\rulesub\var\id}{\set\var}{\set\varthree}\app
        \id
      }{
      \HS\HS
      \id\esub\vartwo\varthree \in \HAbs\emptyset
      }
    }{
      \var\esub\var{\id\esub\vartwo\varthree}
      \tov\rulelsv\emptyset{\set\varthree}\app
      \id\esub\var\id\esub\vartwo\varthree
    }
  }{
    \var\esub\var{\id\esub\vartwo\varthree} \, \varfour
    \tov\rulelsv\emptyset{\set\varthree}\nonapp
    \id\esub\var\id\esub\vartwo\varthree \varfour
  }\ruleUAppL
\]

The congruence rules \ruleUAppL and \ruleUAppR perform reduction on the 
application constructor.
Rule \ruleUAppR additionally requires the left subterm of the application 
to be a structure, avoiding a possible overlap with rule \ruleUDb.
Still, rules \ruleUAppL and \ruleUAppR overlap, so reduction is 
non-deterministic, like in \LOCBV. 
For example:
\[
  \var\,\vartwo\esub\vartwo\id\,(\id\,\id)
  \tovinv\ruledb\emptyset{\set\var}\nonapp
  \var\,(\id\,\id)\,(\id\,\id)
  \tov\ruledb\emptyset{\set\var}\nonapp
  \var\,(\id\,\id)\,\vartwo\esub\vartwo\id
\]

The congruence rules \ruleUEsR, \ruleUEsLAbs, and \ruleUEsLStruct apply
to closures.
Rule \ruleUEsR allows reducing the argument of any ESs, while rules 
\ruleUEsLAbs and \ruleUEsLStruct allow reduction of the left side of 
the closure under specific conditions:
\ruleUEsLAbs (resp. \ruleUEsLStruct) applies only if the substitution 
argument is a hereditary abstraction (resp. a structure).
Note that rules \ruleUEsLAbs and \ruleUEsLStruct force reducing the
argument $\tmtwo$ of a closure $\tm\esub\var\tmtwo$ until it becomes 
``stuck'', after which evaluation can proceed to the body $\tm$.
This behaviour aligns with a \CBV strategy, covering all cases in 
terminating terms, as every terminating term is eventually either a 
hereditary abstraction or a structure 
(\cf \cref{lem:nf_in_HAbs_or_Struct} in \cref{app:usefulcbv}).
These rules also enforce that the variable $\var$ bound by the ES 
$\esub\var\tmtwo$ does not occur free in the step kind $\rulename$,
as in rule \ruleVEsL in \cref{sec:opencbv}.
This restriction prevents variable capture in steps like 
$\vartwo\esub\var\id \tov{\rulesub\vartwo\var}{\set\vartwo}\emptyset\app \var\esub\var\id$.
Note that there is a possible overlap between rule 
\ruleUEsR and rules \ruleUEsLAbs, and \ruleUEsLStruct.

A term $\tm$ is \defn{$(\rulename,\aset,\sset,\appflag)$-reducible}
if there exists a term $\tm'$ such that 
$\tm \tov\rulename\aset\sset\appflag \tm'$.
A term $\tm$ belongs to the set $\Irred\aset\sset\appflag$ if 
$\tm$ is not $(\rulename,\aset,\sset,\appflag)$-reducible.

Up to this point, hereditary abstractions and structures show distinct 
and disjoint behaviours within \UOCBV, which are put in evidence
through the distinction between abstraction frames and structure frames.
An \abstractionframe $\aset$ and a structure frame $\sset$ are said 
to verify the \defn{correctness property} for $\tm$, written
$\inv\aset\sset\tm$, if $\aset \cap \sset = \emptyset$ and
$\fv\tm \subseteq \aset \cup \sset$.
This property is implicitly assumed in theorems and lemmas, \ie, 
whenever we write $\tm \tov\rulename\aset\sset\appflag$, we always 
keep $\inv\aset\sset\tm$ as an \emph{invariant}.
Note that $\inv\emptyset{\fv\tm}\tm$ always holds, so for a top-level 
term $\tm$, we may take $\aset \eqdef \emptyset$ and $\sset \eqdef \fv\tm$.

\begin{example}
\label{example:evaluation_in_useful_cbv}
The following is an \UOCBV reduction sequence to normal form:
\[
  \begin{array}{r@{\,\,}l@{\,\,}l@{\,\,}l@{\,\,}l}
    & (\lam\var{\varthree\,\var\,(\var\,\vartwo)})\,\id
    & \tov\ruledb\emptyset{\set{\varthree,\vartwo}}\nonapp 
    & (\varthree\,\var\,(\var\,\vartwo))\esub\var\id
  \\
      \tov\rulelsv\emptyset{\set{\varthree,\vartwo}}\nonapp
    & (\varthree\,\var\,(\id\,\vartwo))\esub\var\id
    & \tov\ruledb\emptyset{\set{\varthree,\vartwo}}\nonapp 
    & (\varthree\,\var\,(\varfour\esub\varfour\vartwo))\esub\var\id
  \end{array}
\]
Compare this reduction sequence with the corresponding evaluation of 
the same term in \LOCBV (\cref{example:evaluation_in_linear_cbv}).
In \UOCBV, the leftmost occurrence of $\var$ and the occurrence of 
$\varfour$ are not substituted, because these substitutions would not
satisfy principle \princd: they do not contribute to creating
a $\dbsym$-redex.
\end{example}

\begin{example}
\label{ex:reduction-for-tight}
The following is another example of a \UOCBV reduction sequence to normal form:
\[ 
  \begin{array}{l@{\,\,}l@{\,\,}l@{\,\,}l}
    & (\var \, \varthree)\esub\var{\vartwo\esub\vartwo\id}
    & \tov\rulelsv\emptyset{\set\varthree}\nonapp
    & (\vartwo \, \varthree)\esub\var\vartwo\esub\vartwo\id 
  \\
      \tov\rulelsv\emptyset{\set\varthree}\nonapp 
    & (\id \, \varthree)\esub\var\vartwo\esub\vartwo\id
    & \tov\ruledb\emptyset{\set\varthree}\nonapp 
    & \var_1\esub{\var_1}\varthree\esub\var\vartwo\esub\vartwo\id
  \end{array} 
\]
\end{example}

As discussed in~\cref{sec:introduction}, useful evaluation is 
\emph{context-sensitive}, as the context surrounding a term 
determines whether a step is useful or not.
For instance, the term $\tm = \var\esub\var\id$ is already in normal 
form in \UOCBV, as substituting $\var$ by $\id$ is not useful, 
because it does not create a $\dbsym$-redex.
However, if $\tm$ is located to the left of an application, such as 
in $\var\esub\var\id \, (\vartwo \, \id)$, the substitution of $\var$ 
by $\id$ becomes useful, and yields the following reduction sequence:
\[
  \hfill
    \var\esub\var\id \, (\vartwo \, \id)
    \tov\rulelsv\emptyset{\set{\vartwo}}\nonapp
    \id\esub\var\id \, (\vartwo \, \id)
    \tov\ruledb\emptyset{\set{\vartwo}}\nonapp
    \varthree\esub\varthree{\vartwo \, \id}\esub\var\id
  \hfill
\]
Conversely, a useful step like 
$(\var\,\tmtwo)\esub\var\id 
\tov\rulelsv\emptyset\emptyset\nonapp (\id\,\tmtwo)\esub\var\id$ may 
not be allowed if the term is located below a context such as
$\lam\vartwo\ctxhole$; in fact, note that 
$\lam\vartwo{(\var\,\tmtwo)\esub\var\id}$ cannot be reduced.

\paragraph{\bf Operational Properties.}

We state two theorems about
the operational properties of \UOCBV.
\cref{thm:uocbv_characterization_of_normal_forms} provides an inductive
characterisation of the set of normal forms, while 
\cref{thm:diamond-property-top-level} shows that it enjoys a very 
strong form of confluence, namely the diamond property.
See \cref{app:usefulcbv} for more details and complete proofs.

\begin{theorem}[Characterisation of normal forms]
\label{thm:uocbv_characterization_of_normal_forms}
The set of irreducible terms $\Irred\aset\sset\appflag$
is exactly the set $\NF\aset\sset\appflag$ defined inductively as below:
\[
  \inferrule{
    \var \in \aset \implies \appflag=\nonapp
  }{
    \var \in \NF\aset\sset\appflag
  }\ruleUNFVar
  \HS
  \inferrule{
  }{ 
    \lam\var\tm \in \NF\aset\sset\nonapp
  }\ruleUNFLam
  \HS
  \inferrule{
    \tm \in \NF\aset\sset\app
    \sep
    \tmtwo \in \NF\aset\sset\nonapp
  }{
    \tm\,\tmtwo \in \NF\aset\sset\appflag
  }\ruleUNFApp
\]
\[
  \inferrule{
    \tm \in \NF{\aset\cup\set\var}\sset\appflag
    \sep
    \tmtwo \in \NF\aset\sset\nonapp
    \sep
    \tmtwo \in \HAbs\aset
  }{
    \tm\esub\var\tmtwo
      \in \NF\aset\sset\appflag
  }\ruleUNFEsAbs
  \HS
  \inferrule{
    \tm \in \NF\aset{\sset\cup\set\var}\appflag
    \sep
    \tmtwo \in \NF\aset\sset\nonapp
    \sep
    \tmtwo \in \Struct\sset
  }{
    \tm\esub\var\tmtwo
      \in \NF\aset\sset\appflag
  }\ruleUNFEsStruct
\]
\end{theorem}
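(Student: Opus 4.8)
The plan is to prove the two inclusions $\NF\aset\sset\appflag \subseteq \Irred\aset\sset\appflag$ and $\Irred\aset\sset\appflag \subseteq \NF\aset\sset\appflag$ separately, keeping the correctness invariant $\inv\aset\sset\tm$ throughout and $\alpha$-renaming bound variables to be fresh. As is typical for such characterisations, I would first isolate three auxiliary facts relating $\NF$ to the predicates $\HAbs{}$ and $\Struct{}$, each proved by an easy induction on the relevant inductive derivation using \cref{rem:habs_st}: \emph{(a)} if $\tm \in \NF\aset\sset\app$ then $\tm \in \Struct\sset$; \emph{(b)} if $\tm \in \NF\aset\sset\nonapp$ then $\tm \in \HAbs\aset$ or $\tm \in \Struct\sset$; \emph{(c)} if $\aset \cap \sset = \emptyset$ then $\HAbs\aset \cap \Struct\sset = \emptyset$. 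I would also record the elementary observation that whether a term admits a $\rulesub\varthree\val$-step depends only on $\varthree$ and the indices, never on the substituted value $\val$, so one may always pick that value fresh.

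For the inclusion $\NF \subseteq \Irred$, I would induct on the derivation of $\tm \in \NF\aset\sset\appflag$ and show that no step $\tm \tov\rulename\aset\sset\appflag \tm'$ exists, by inspecting the last rule of a hypothetical step. For \ruleUNFVar only \ruleUSub could fire, and it requires $\appflag = \app$ together with the variable lying in the abstraction frame, which is exactly what the side condition excludes; for \ruleUNFLam no rule applies. For \ruleUNFApp, \ruleUDb is excluded because, by \emph{(a)} and \cref{rem:habs_st}, a term of $\NF\aset\sset\app$ is never of the form $(\lam\var{\tm_0})\sctx$, while \ruleUAppL and \ruleUAppR contradict the induction hypothesis on the immediate subterms. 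For \ruleUNFEsAbs and \ruleUNFEsStruct, \ruleULsv, \ruleUEsLAbs and \ruleUEsLStruct contradict the induction hypothesis on $\tm_0$ in the enlarged frame --- \cref{rem:habs_st}(3) identifies the ES argument as some $\val\sctx$ and \emph{(c)} prevents it from being at once a hereditary abstraction and a structure --- and \ruleUEsR contradicts the induction hypothesis on the ES argument.

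For the converse inclusion $\Irred \subseteq \NF$, I would induct on the structure of $\tm$, showing that an irreducible $\tm$ matches one of the $\NF$-rules. The variable case is the contrapositive of the reducibility analysis for \ruleUSub, and the abstraction case is \ruleUNFLam. For an application $\tm_1\,\tm_2$, irreducibility rules out $\tm_1$ of the shape $(\lam\var{\tm_0})\sctx$ (otherwise a \ruleUDb-redex) and rules out a \ruleUAppL-step, so $\tm_1 \in \Irred\aset\sset\app$, hence $\tm_1 \in \NF\aset\sset\app$ by the induction hypothesis and $\tm_1 \in \Struct\sset$ by \emph{(a)}; then no \ruleUAppR-step is possible, although $\tm_1$ is a structure, so $\tm_2 \in \Irred\aset\sset\nonapp$, hence $\tm_2 \in \NF\aset\sset\nonapp$, and \ruleUNFApp concludes. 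For a closure $\tm_0\esub\var\tmtwo$, absence of a \ruleUEsR-step gives $\tmtwo \in \Irred\aset\sset\nonapp$, hence $\tmtwo \in \NF\aset\sset\nonapp$ and, by \emph{(b)}, $\tmtwo \in \HAbs\aset$ or $\tmtwo \in \Struct\sset$; in the first subcase one checks $\tm_0 \in \Irred{\aset\cup\set\var}\sset\appflag$ and finishes with the induction hypothesis and \ruleUNFEsAbs, in the second $\tm_0 \in \Irred\aset{\sset\cup\set\var}\appflag$ and \ruleUNFEsStruct.

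I expect the main obstacle to be this last subcase: showing that $\tm_0$ is irreducible in the enlarged frame. A step $\tm_0 \tov\rulename{\aset\cup\set\var}\sset\appflag \tm_0'$ lifts to $\tm_0\esub\var\tmtwo$ directly via \ruleUEsLAbs only when $\var \notin \fv\rulename$; when the step kind mentions $\var$ one must reroute it: if it substitutes $\var$ itself, then, writing $\tmtwo = \val\sctx$ (\cref{rem:habs_st}(3)), value-irrelevance supplies a $\rulesub\var\val$-step out of $\tm_0$ which lifts through \ruleULsv, and otherwise value-irrelevance renames the substituted value away from $\var$, restoring a $\var$-free step kind. Carrying the three indices and the positional flag consistently through all these sub-cases, and verifying \emph{(a)}--\emph{(c)} under the correctness invariant, is where the real work lies; the rest is routine rule inspection.
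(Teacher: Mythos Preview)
Your overall plan matches the paper's two-inclusion structure; your auxiliary facts (a)--(c) and value-irrelevance are exactly \cref{lem:nf_in_HAbs_or_Struct}, \cref{lem:disjunction}, and \cref{lem:sub_change_of_values}, and your rerouting analysis for closures corresponds to \cref{lem:term_reducible_arg_abstraction_esub_reducible} and \cref{lem:term_reducible_arg_struct_esub_reducible}. The soundness direction is fine.

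There is, however, a real gap in completeness. The bare inclusion $\Irred\aset\sset\appflag \subseteq \NF\aset\sset\appflag$ you induct on is \emph{false} at $\appflag = \app$: any abstraction $\lam\var\tmtwo$ is irreducible (no rule has an abstraction as redex), yet $\ruleUNFLam$ only derives membership in $\NF\aset\sset\nonapp$. Your sentence ``the abstraction case is \ruleUNFLam'' silently assumes $\appflag = \nonapp$; for $\appflag = \app$ that case fails, so the induction collapses and the hypothesis you later invoke on $\tm_1$ at $\app$ in the application case is not actually available.

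The paper fixes this by strengthening the inductive statement to a three-part mutual proposition (\cref{prop:completeness_nf}): (1) $\tm \in \Irred\aset\sset\appflag$ together with the side condition $\tm \in \HAbs\aset \Rightarrow \appflag = \nonapp$ gives $\tm \in \NF\aset\sset\appflag$; (2) every $\tm \in \HAbs\aset$ either satisfies $\abs\tm$ or lies in $\Red\aset\sset\app$; (3) if $\tm \in \HAbs\aset$ then $\tm\,\tmtwo$ is reducible for any $\tmtwo$. In the application case, after ruling out the $\dbsym$-redex shape and establishing $\tm_1 \in \Irred\aset\sset\app$, item (2) forces $\tm_1 \notin \HAbs\aset$, which discharges the side condition needed to apply (1) to $\tm_1$. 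The side condition also propagates downward through closures (if $\tm_0 \in \HAbs{\aset\cup\set\var}$ and $\tmtwo \in \HAbs\aset$, then $\tm_0\esub\var\tmtwo \in \HAbs\aset$, and similarly in the structure subcase via $\ruleHAbsSubi$), so your rerouting argument then finishes that case. The unqualified equality in the theorem statement is recovered only at top level, as in \cref{thm:characterization_of_useful_normal_forms}.
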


\begin{restatable}[Diamond Property]{theorem}{diamondproperty}
\label{thm:diamond-property-top-level}
Let $\tm \tov{\rulename_1}\emptyset\sset\nonapp \tm_1$
and $\tm \tov{\rulename_2}\emptyset\sset\nonapp \tm_2$,
where $\tm_1 \neq \tm_2$ and 
$\rulename_1,\rulename_2 \in \set{\ruledb,\rulelsv}$
and $\sset = \fv\tm$.
Then, there exists $\tm'$ such that
$\tm_1 \tov{\rulename_2}\emptyset\sset\nonapp \tm'$ and
$\tm_2 \tov{\rulename_1}\emptyset\sset\nonapp \tm'$.
\end{restatable}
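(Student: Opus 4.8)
The plan is to proceed by induction on the structure of the term $\tm$, reasoning about the pair of redexes contracted by the two given top-level steps $\tm \tov{\rulename_1}\emptyset\sset\nonapp \tm_1$ and $\tm \tov{\rulename_2}\emptyset\sset\nonapp \tm_2$. Since the theorem is stated only for top-level reductions (empty abstraction frame, $\sset = \fv\tm$, $\nonapp$ flag), but the inductive rules \ruleUAppL, \ruleUAppR, \ruleUEsLAbs, \ruleUEsLStruct, \ruleUEsR recurse under modified parameters $(\aset,\sset,\appflag)$, I first need to \emph{strengthen the statement} to an auxiliary lemma that holds for arbitrary parameters $(\rulename_i,\aset,\sset,\appflag)$ subject to the correctness invariant $\inv\aset\sset\tm$, with $\rulename_1,\rulename_2 \in \set{\ruledb,\rulelsv}$. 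The top-level statement is then the special case $\aset = \emptyset$, $\sset = \fv\tm$, $\appflag = \nonapp$, using that $\inv\emptyset{\fv\tm}\tm$ always holds.

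First I would set up the case analysis on the last rules used in the two derivations $\tm \tov{\rulename_1}\aset\sset\appflag \tm_1$ and $\tm \tov{\rulename_2}\aset\sset\appflag \tm_2$. The non-overlapping cases are handled by the induction hypothesis plus congruence: e.g.\ if both steps go through \ruleUAppL on $\tm = \tm_a\,\tm_b$, apply the IH to the two steps $\tm_a \tov{\rulename_i}\aset\sset\app \tm_a^{(i)}$; if one goes through \ruleUAppL and the other through \ruleUAppR, the two redexes are in disjoint subterms $\tm_a$ and $\tm_b$ and the two steps simply commute, using that the structure side-condition $\tm_a \in \Struct\sset$ in \ruleUAppR is preserved because \ruleUAppL on the left of an application with the $\app$ flag maps structures to structures (which needs a small auxiliary fact: reduction in $\Struct\sset$ under the $\app$-flag stays in $\Struct\sset$; analogously for $\HAbs\aset$ and the ES-congruence rules). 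The cases where one step is \ruleUEsR and the other is \ruleUEsLAbs/\ruleUEsLStruct, or where one is a head step (\ruleUDb, \ruleUSub feeding \ruleULsv) and the other is a congruence step into a disjoint subterm, are similarly resolved by disjointness plus preservation of the membership side-conditions ($\tmtwo \in \HAbs\aset$ resp.\ $\in \Struct\sset$) under reduction. The genuinely interacting cases are: (a) two \ruleUDb steps at nested positions (a $\dbsym$-redex inside the argument or inside the function $(\lam\var\tm)\sctx$ of an outer $\dbsym$-redex) — here contracting the outer redex duplicates nothing (reduction is linear and the argument is not copied), so the inner redex is preserved and the two squares close directly; (b) a \ruleUDb step whose function part $(\lam\var\tm)\sctx$ has an $\lsvsym$-step inside $\sctx$, which after the $\dbsym$-step becomes an $\lsvsym$-step at the corresponding position, and vice versa; (c) two \ruleULsv steps, or an \ruleULsv and a congruence step, acting on the \emph{same} closure $\tm\esub\var{\val\sctx}$ — crucially, because substitution is \emph{linear} (a single occurrence of $\var$ is replaced), the two $\lsvsym$-steps that substitute $\var$ into two different occurrences are independent and commute; one must check that after substituting one occurrence the side-conditions $\var\notin\aset\cup\sset$ and $\val\sctx\in\HAbs\aset$ still hold, which they do since those conditions don't depend on the body $\tm$ being reduced.

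The main obstacle, as I see it, is the bookkeeping around the context parameters and the side-conditions in the ES-congruence and \ruleULsv rules. Specifically: when the two redexes lie one inside the other across an ES boundary — say one step is \ruleULsv contracting $\tm\esub\var{\val\sctx}$ by pushing a copy of $\val$ into $\tm$, while the other step reduces inside $\tm$ under the augmented frame $\aset\cup\set\var$ — I must verify that after the $\lsvsym$-step the residual of the inner step is still derivable under the same augmented parameters, and that it does not interfere with the linear substitution (the inner step neither creates nor destroys the chosen occurrence of $\var$ unless it itself is an $\lsvsym/\rulesub$-step on $\var$, which is excluded since $\var\notin\fv\rulename$ in the congruence rules and $\var\notin\aset\cup\sset$ gatekeeps \ruleULsv). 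The cleanest way to manage this is to first prove a handful of stability lemmas — (i) $\tm\in\HAbs\aset$ and $\tm\tov\rulename\aset\sset\appflag\tm'$ with the relevant flag discipline imply $\tm'\in\HAbs\aset$; (ii) the analogue for $\Struct\sset$; (iii) if $\tm\tov\rulename{\aset}\sset\appflag\tm'$ and $\aset\subseteq\aset'$, $\sset\subseteq\sset'$ with the invariants preserved, then $\tm\tov\rulename{\aset'}{\sset'}\appflag\tm'$ (monotonicity in the frames, mirroring Remark \ref{rem:habs_st}(1),(4)); and (iv) a commutation lemma for a head step and a disjoint congruence step — and then feed these into the main induction so that each case becomes a short diagram chase. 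Because reduction is non-erasing and linear, no step ever duplicates or deletes the other redex, so the diamond (as opposed to a weaker confluence) genuinely holds with exactly one residual step on each side.
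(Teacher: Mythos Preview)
Your overall strategy --- strengthen to arbitrary $(\aset,\sset,\appflag)$ under $\inv\aset\sset\tm$, induct on $\tm$, case-split on the pair of last rules, and rely on stability of $\HAbs\aset$/$\Struct\sset$ under reduction and monotonicity in the frames --- is exactly the route the paper takes. But your strengthening is not strong enough for the induction to close. The rule \ruleULsv has a $\rulesub\var\val$-step as its premise, not a $\ruledb$- or $\rulelsv$-step. In the \ruleULsv--\ruleULsv case, the two $\lsvsym$-steps on $\tmtwo\esub\var{\val\sctx}$ come from two $\rulesub\var\val$-steps on $\tmtwo$; in the \ruleULsv--\ruleUEsLAbs case, one premise is a $\rulesub\var\val$-step and the other a $\ruledb$/$\rulelsv$-step, both on $\tmtwo$. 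To close these by the IH on the smaller term $\tmtwo$, the generalized diamond must admit $\rulename_i = \rulesub\var\val$ as well. Your remark that two linear substitutions into different occurrences ``are independent and commute'' is morally right, but making it precise \emph{is} an induction on $\tmtwo$ of exactly the same shape --- it cannot be discharged by a disjointness observation at the top of the ES.

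The paper's generalization (\cref{prop:diamond-property}) therefore allows all three step kinds and, crucially, carries two auxiliary abstraction-frame extensions $\asettwo^1,\asettwo^2$ with the side condition that if $\rulename_i = \rulesub\var{\val_i}$ then $\val_i \in \HAbs{\aset\cup\asettwo^i}$; the closing steps are then derived over the enlarged frames $\aset\cup\asettwo^i$. This extra slack is needed in the \ruleULsv case because the value $\val$ originates from $\val\sctx$ and is only known to lie in $\HAbs{\expansion\aset\sctx}$, not $\HAbs\aset$; the $\asettwo^i$ absorb this expansion so the IH applies, and \cref{lem:weakening_of_reducion_sets} is used to re-enlarge frames after applying the IH. At top level both $\rulename_i \in \set{\ruledb,\rulelsv}$, so one may take $\asettwo^i = \emptyset$ and the statement specialises to the theorem. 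Your auxiliary lemmas (i)--(iii) are correct and are indeed used (they are \cref{lem:HAbs_Struct_closed_reduction}, \cref{lem:habs_rulesub_closed_reduction}, and \cref{lem:weakening_of_reducion_sets} in the paper); the missing ingredient is just this further-strengthened induction hypothesis.
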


The two announced consequences of the previous result follow:
\begin{corollary}
\label{coro:all_reductions_NF_same_length}
Any two reduction sequences to normal form in \UOCBV have the same
number of $\dbsym$ and $\lsvsym$ steps.
\end{corollary}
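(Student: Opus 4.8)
The plan is to derive the statement from the Diamond Property (\cref{thm:diamond-property-top-level}) by a ``random descent'' argument, refined so as to track the two step kinds $\dbsym$ and $\lsvsym$ separately; the essential point is that in \cref{thm:diamond-property-top-level} the closing steps \emph{swap} kinds, and this is exactly what couples the kind-counts of divergent reductions. Fix a term $\tm$, and let $\hookrightarrow$ denote ``\UOCBV evaluation from $\tm$'', that is, the relation $\tov\ruledb\emptyset{\fv\tm}\nonapp \cup \tov\rulelsv\emptyset{\fv\tm}\nonapp$. Since $\ruledb$- and $\rulelsv$-steps are non-erasing, $\fv\cdot$ is invariant along a $\hookrightarrow$-reduction, so the parameters $\emptyset,\fv\tm,\nonapp$ never change and the correctness property is preserved; hence $\hookrightarrow$ is a well-defined relation on the whole orbit of $\tm$. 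Note that $\rulesub\var\val$-steps never occur at top level (they appear only inside the premise of \ruleULsv), so every $\hookrightarrow$-step is a $\dbsym$- or an $\lsvsym$-step. By \cref{thm:diamond-property-top-level}, $\hookrightarrow$ enjoys the diamond property, hence (as recalled in \cref{sec:preliminary_notions}) it is confluent, so $\tm$ has at most one $\hookrightarrow$-normal form $\mathrm{nf}(\tm)$. When $\tm$ is weakly normalising, write $d(\tm) \in \Nat$ for the length of a \emph{shortest} $\hookrightarrow$-reduction from $\tm$ to $\mathrm{nf}(\tm)$, and write $\hookrightarrow_\rho$ for the restriction of $\hookrightarrow$ to steps of kind $\rho$.

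I would then prove, by strong induction on $d(\tm)$, the following strengthening (the corollary being the instance of two reductions issued from the same term): for every weakly normalising $\tm$, all $\hookrightarrow$-reductions from $\tm$ to $\mathrm{nf}(\tm)$ have length $d(\tm)$ and the same pair $\#(\mathcal R) \defeq (\#_\dbsym(\mathcal R),\#_\lsvsym(\mathcal R))$ of step-kind counts. The case $d(\tm)=0$ is trivial. If $d(\tm)>0$, fix a shortest reduction $\mathcal M = (\tm \hookrightarrow_{\rho_M} \tm_M)\cdot\mathcal M'$; then $\mathcal M'$ has length $d(\tm)-1$ and, together with the step $\tm\hookrightarrow\tm_M$, this forces $d(\tm_M)=d(\tm)-1<d(\tm)$, so the induction hypothesis applies at $\tm_M$. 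Let $\mathcal R = (\tm \hookrightarrow_{\rho_1} \tm_1)\cdot\mathcal R'$ be an arbitrary reduction to $\mathrm{nf}(\tm)$ (non-empty, since $\tm$ is not normal). If the two first steps are the same redex, then $\tm_1=\tm_M$ and $\rho_1=\rho_M$, and applying the induction hypothesis at $\tm_M$ to $\mathcal R'$ and $\mathcal M'$ yields the claim for $\mathcal R$. Otherwise $\tm_1\neq\tm_M$ (distinct redexes have distinct reducts), so \cref{thm:diamond-property-top-level} gives $\tm'$ with $\tm_1 \hookrightarrow_{\rho_M} \tm'$ and $\tm_M \hookrightarrow_{\rho_1} \tm'$ --- note the swap. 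Then $\tm'$ is a reduct of $\tm$, hence weakly normalising with $\mathrm{nf}(\tm')=\mathrm{nf}(\tm)$; picking a reduction $\mathcal T$ from $\tm'$ to $\mathrm{nf}(\tm)$ and applying the induction hypothesis at $\tm_M$ to $\mathcal M'$ and to $(\tm_M \hookrightarrow_{\rho_1} \tm')\cdot\mathcal T$ pins down $|\mathcal T|=d(\tm)-2$ and $\#(\mathcal M')=\delta(\rho_1)+\#(\mathcal T)$, where $\delta(\dbsym)=(1,0)$ and $\delta(\lsvsym)=(0,1)$; in particular $(\tm_1 \hookrightarrow_{\rho_M} \tm')\cdot\mathcal T$ has length $d(\tm)-1$, whence $d(\tm_1)<d(\tm)$ and the induction hypothesis applies at $\tm_1$ as well. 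Applying it to $\mathcal R'$ and $(\tm_1 \hookrightarrow_{\rho_M} \tm')\cdot\mathcal T$ and then chaining gives $|\mathcal R|=d(\tm)$ and $\#(\mathcal R)=\delta(\rho_1)+\#(\mathcal R')=\delta(\rho_1)+\delta(\rho_M)+\#(\mathcal T)=\delta(\rho_M)+\#(\mathcal M')=\#(\mathcal M)$, closing the induction.

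Two auxiliary facts used above are routine inspections of the rules rather than anything deep: that $\hookrightarrow$ preserves the parameters $\emptyset,\fv\tm,\nonapp$ (already noted, via non-erasingness), and that two distinct $\hookrightarrow$-redexes in a term have distinct reducts --- in particular a $\dbsym$-step (which creates a fresh ES out of a $(\lam\var\tm)\sctx$-application) and an $\lsvsym$-step (which pushes an $\sctx$ out of an ES and rewrites a single occurrence) can never produce the same term --- so the kind-counts above are genuinely well defined. The main obstacle is the shape of the induction: the reduction $(\tm_1 \hookrightarrow_{\rho_M}\tm')\cdot\mathcal T$ obtained through the diamond need not be shorter than $\mathcal R$, so one cannot induct on the length of a reduction directly; the fix is to induct on the distance $d(\tm)$ to the normal form --- well defined thanks to confluence --- and to compare an arbitrary reduction against a fixed shortest one, which is what makes the induction hypothesis applicable first to $\tm_M$ and then, via the diamond, to $\tm_1$. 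As a by-product the same argument shows that every weakly normalising term is $\hookrightarrow$-strongly normalising with all maximal reductions of length $d(\tm)$, although only the corollary is needed here.
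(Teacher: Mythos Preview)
Your argument is correct and is precisely the random-descent argument the paper leaves implicit when deriving the corollary from \cref{thm:diamond-property-top-level}. The only point worth tightening is the auxiliary claim that a $\dbsym$-step and an $\lsvsym$-step cannot share a reduct: this is immediate once you note that a $\dbsym$-step strictly decreases term size (by~$1$) whereas an $\lsvsym$-step never does (it changes size by $|\val|-1\geq 0$); the stronger claim that \emph{any} two distinct redexes have distinct reducts is neither needed (when $\rho_1=\rho_M$ the case $\tm_1=\tm_M$ is already handled by the IH) nor obviously true.
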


As with \LOCBV, the \emph{confluence} of \UOCBV holds only for the 
\emph{top-level} step kinds $\dbsym$ and $\lsvsym$.
Let $\tovTop\sset {\eqdef} \tov\dbsym\emptyset\sset\nonapp \cup \tov\lsvsym\emptyset\sset\nonapp$
denote the top-level \UOCBV reduction, then:
\begin{corollary}
The reduction $\tovTop\sset$ is confluent.
\end{corollary}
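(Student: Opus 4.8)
The plan is to derive confluence of $\tovTop\sset$ directly from the Diamond Property (\cref{thm:diamond-property-top-level}), using the general fact recalled in \cref{sec:preliminary_notions} that any relation enjoying the diamond property is confluent (\cite{Terese03}).

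First I would record a small invariant needed to match the hypotheses of \cref{thm:diamond-property-top-level}: \UOCBV steps whose kind is $\ruledb$ or $\rulelsv$ are non-erasing, hence if $\tm \tovTop\sset \tm'$ then $\fv{\tm'} = \fv\tm$. Consequently, starting from any term $\tm_0$ with $\fv{\tm_0} = \sset$, every $\tovTop\sset$-reduct $\tm$ of $\tm_0$ still satisfies $\fv\tm = \sset$ (and $\inv\emptyset\sset\tm$, since $\emptyset \cap \sset = \emptyset$). This is exactly the setting in which \cref{thm:diamond-property-top-level} applies, and it is the intended one, since for a top-level term we take $\sset \eqdef \fv\tm$.

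Next I would check the local diamond for $\tovTop\sset$ on this family of terms. Given $\tm \tov{\rulename_1}\emptyset\sset\nonapp \tm_1$ and $\tm \tov{\rulename_2}\emptyset\sset\nonapp \tm_2$ with $\rulename_1,\rulename_2 \in \set{\ruledb,\rulelsv}$ (so both are $\tovTop\sset$-steps): if $\tm_1 = \tm_2$ the peak closes with $\tm_1$ itself; otherwise \cref{thm:diamond-property-top-level} gives $\tm'$ with $\tm_1 \tov{\rulename_2}\emptyset\sset\nonapp \tm'$ and $\tm_2 \tov{\rulename_1}\emptyset\sset\nonapp \tm'$, which are again $\tovTop\sset$-steps. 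Hence $\tovTop\sset$ is diamond.

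Finally I would conclude by the cited result that the diamond property implies confluence, via the standard tiling argument (induction on the lengths of the two diverging sequences, using the strip lemma); every intermediate term occurring in the tiling is a $\tovTop\sset$-reduct of the common ancestor, hence has free variables $\sset$ by the invariant above, so \cref{thm:diamond-property-top-level} is applicable at each tile. There is no real difficulty here: all the content sits in the already-established \cref{thm:diamond-property-top-level}, and the only thing to watch is the propagation of the condition $\sset = \fv\tm$ along reductions, which is exactly what the non-erasing property delivers.
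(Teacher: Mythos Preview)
Your proposal is correct and follows exactly the paper's approach: the corollary is stated immediately after \cref{thm:diamond-property-top-level} and relies on the fact, recalled in \cref{sec:preliminary_notions}, that the diamond property implies confluence. Your extra care in checking that the invariant $\sset = \fv\tm$ propagates along reductions (via non-erasure) is a detail the paper leaves implicit but is indeed needed.
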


\section{Relating \Nonuseful and Useful Call-by-Value}
\label{sec:relating}

This section aims to establish a connection between \LOCBV and \UOCBV,
defined in \cref{sec:opencbv,sec:usefulcbv} respectively. 
We show that usefulness in \UOCBV is a (complete) restriction of 
\LOCBV: \UOCBV evaluates to equivalent normal forms as in \LOCBV 
while omitting certain substitution steps, specifically those that do 
not contribute to the creation of $\dbsym$-redexes.

Simulating \LOCBV with \UOCBV is straightforward, as each \UOCBV step 
is also a \LOCBV step, as we mention in \cref{sec:usefulcbv}. 
However, relating both strategies in the opposite direction is much 
more delicate, since not all \LOCBV steps are useful: \UOCBV disallows
some substitution steps that \LOCBV allows. 
For example, the step 
$\tm = (\var\,\vartwo)\esub\vartwo\id \tovv{} 
(\var \, \id)\esub\vartwo\id = \tmtwo$ is not useful. 
In fact, $\tm$ is a normal form in \UOCBV (details in 
\cref{prop:TOPLEVEL_useful_nf_unfold_to_nf}), whereas in \LOCBV, 
$\tm$ is a redex whose normal form is $\tmtwo$.

Although normal forms differ between the two formalisms, they are 
\emph{structurally equivalent}, as \LOCBV evaluates terms further than 
\UOCBV. 
To relate them precisely, we restrict the standard notion of unfolding 
to a subtler one.
Recall that in calculi with ESs, the (full) unfolding of a term is an 
operation that performs \emph{all} substitutions. 
For example, the (full) unfolding of
$\var\esub\var{\vartwo\,\vartwo}\esub\vartwo{\varthree\,\varthree}$ is 
$\varthree\,\varthree\,(\varthree\,\varthree)\esub\var{\vartwo\,\vartwo}\esub\vartwo{\varthree\,\varthree}$. 
In this section, however, the notion unfolding must be defined in a
controlled way. 
For instance, unfolding the term $\var\esub\var\id$, which is a 
normal form in \UOCBV, yields its corresponding normal form 
$\id\esub\vartwo\id$ in \LOCBV.
Yet the new unfolding operation does not need to unfold \emph{all} 
ESs: for example, in \LOCBV, $(\lam\var\vartwo)\esub\vartwo\id$ is 
\emph{not} unfolded to $(\lam\var\id)\esub\vartwo\id$ (due to weak evaluation),
and $\var\esub\var{\vartwo\,\vartwo}$ is not unfolded to
$(\vartwo\,\vartwo)\esub\var{\vartwo\,\vartwo}$ (since variables are 
never substituted by structures). 
Intuitively, the new notion of unfolding selectively performs only 
those substitutions on \emph{reachable} variables bound to values,
thus we name it \defn{partial unfolding}.

The remainder of this section is organised as follows.
We first define the partial unfolding of a term with respect to a 
value assignment.
We then relate \LOCBV and \UOCBV via two technical results 
(\cref{prop:TOPLEVEL_useful_nf_unfold_to_nf}), showing that (1) the 
partial unfolding of normal forms in \UOCBV always yields a normal 
form in \LOCBV, and (2) redexes in \UOCBV remain reducible in \LOCBV.

\paragraph{\bf Partial Unfolding}
To formalise our approach, we define a \defn{value assignment}, 
written $\valas$, as a partial function that maps variables to values.
The \defn{domain} and \defn{image} of a value assignment $\valas$
are denoted by $ \dom\valas$ and $\im\valas$, respectively.
To ensure idempotence, we require that $\dom\valas$ and the free 
variables in $\im\valas$ to be disjoint, \ie, 
$\dom\valas \cap \fv{\im\valas} = \emptyset$.
The empty value assignment is denoted with $\evalas$.

We now define the \defn{partial unfolding} of a term $\tm$ under a 
value assignment $\valas$, written $\unvalas\tm$, as follows:
\[ 
  \begin{array}[t]{r@{\,\,}c@{\,\,}l}
      \unvalas\var
    & \eqdef
    & \left\{
        \begin{array}{ll}
          \valas(\var) & \text{if } \var \in \dom\valas
        \\
          \var         & \text{otherwise}
        \end{array}
      \right.
  \\
      \unvalas{(\lam\var\tm)}
    & \eqdef
    & \lam\var\tm
  \\
      \unvalas{(\tm \, \tmtwo)}
    & \eqdef
    & \unvalas\tm \, \unvalas\tmtwo 
  \\
      \unvalas{\tm\esub\var\tmtwo}
    & \eqdef
    & \left\{
        \begin{array}{ll}
          \unvalas[\valas \cup (\var \mapsto \val)]
            \tm\esub\var{\val}\sctx
            & \text{if } \unvalas\tmtwo = \val\sctx , \var \in \rv\tm
        \\
          \unvalas\tm\esub\var{\unvalas\tmtwo}
            & \text{otherwise}
        \end{array}
      \right.
  \end{array} 
\]
This definition has two base cases.
First, the partial unfolding causes no effect on abstractions, 
reflecting the fact that evaluation is weak. 
Second, when partially unfolding a variable $\var$, if 
$\var \in \dom\valas$, the definition mimics a substitution step of 
the form $\var \tovv{\rulesub\var{\valas(\var)}} \val$; otherwise, 
$\var$ remains unchanged, meaning it will not be substituted by a value.

Expanding $\valas$ with $\var \mapsto \val$ still produces a value 
assignment. 
By $\alpha$-conversion, we assume $\var \notin \dom\valas$,
$\var \notin \fv\val$, and for any $\val' \in \im\valas$, 
$\var \notin \fv{\val'}$.
Partial unfolding at top-level is always performed under the empty 
value assignment. 
Some examples of such partial unfoldings follow:
\[
  \unvalasevalas{(\lam\varthree\var)\esub\var\vartwo}
  =
  (\lam\varthree\var)\esub\var\vartwo
  \HS\HS
  \unvalasevalas{(\var \, \vartwo)\esub\vartwo\id}
  =
  (\var \, \id)\esub\vartwo\id
  \HS\HS
  \unvalasevalas{\var\esub\var{\vartwo\esub\varthree\id}}
  =
  \vartwo\esub\var\vartwo\esub\varthree\id
\]

In some cases, the left subterm of a closure is partially unfolded 
under the original value assignment, as in the first example,
where $\var \notin \rv{\lam\varthree\var}$. 
However, in other cases, the value assignment must be extended.
In the last example, partially unfolding the subterm $\var$ extends 
the value assignment to $(\var \mapsto \vartwo)$.

At top-level, the partial unfolding of any term $\tm$ is the (unique) 
$\tovv\lsvsym$ normal form of $\tm$; see \cref{sec:rewriting-tovalas} 
for details.

\paragraph{\bf Relating Reduction Steps and Normal Forms}
We now formally state the existing relation between \LOCBV and \UOCBV. 
Our goal is to show that \UOCBV simulates \LOCBV while reaching 
equivalent normal forms, up to partial unfolding. 
To do this, we state the following result, consisting of two parts: 
(1) the partial unfolding of a useful normal form is a \nonuseful 
normal form; (2) the partial unfolding of a reducible term in \UOCBV 
is either a $\dbsym$-redex in \LOCBV or an abstraction in an applied 
position.
This second condition justifies the name ``useful'' in \UOCBV because 
it implies that any substitution step must contribute to the creation 
of a $\dbsym$-redex (see \cref{sec:introduction}).

Recall that $\tovvTop$ and $\tovTop\sset$ denote the top-level 
\LOCBV and \UOCBV reduction, respectively.
The following proposition links \LOCBV and \UOCBV reduction at the 
top-level case:

\begin{proposition}
\label{prop:TOPLEVEL_useful_nf_unfold_to_nf}
\label{prop:TOPLEVEL_unfold_reducible_is_reducible}
Let $\tm$ be a term and $\sset = \fv\tm$. 
Then:
\begin{enumerate}
\item
  \label{prop:TOPLEVEL_useful_nf_unfold_to_nf:part}
  If $\tm$ is $\tovTop\sset$-irreducible,
  then $\unvalasevalas\tm$ is $\tovvTop$-irreducible.
\item
  \label{prop:TOPLEVEL_unfold_reducible_is_reducible:part}
  If $\tm \tovTop\sset \tm'$, then there exists $\tm''$
  such that $\unvalasevalas\tm \tovv\dbsym \tm''$.
\end{enumerate}
\end{proposition}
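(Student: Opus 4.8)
The plan is to prove both items by first strengthening them to statements over \emph{arbitrary} \abstractionframes $\aset$, structure frames $\sset$, flags $\appflag$, and value assignments $\valas$, since the top-level case ($\aset = \emptyset$, $\sset = \fv\tm$, $\appflag = \nonapp$, $\valas = \evalas$) is not itself amenable to induction: recursing into subterms changes the frames, the \positionalFlag, and the value assignment against which one unfolds. The link between these data is a \emph{compatibility} predicate $\compat\valas\aset\sset$ (subsuming the correctness property $\inv\aset\sset\tm$): roughly, it asks that $\aset \subseteq \dom\valas$, that $\dom\valas$ be disjoint from $\fv{\im\valas}$, and that $\valas$ send each variable of its domain to a value of the ``right kind'' --- a variable classified as a hereditary abstraction (which includes all of $\aset$) to a value that unfolds to an abstraction, and a variable classified as a structure to a value that unfolds to a structure under $\sset$. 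One also needs the auxiliary facts that $\tm \tovv{\rulesub\var\val}\tm'$ is possible exactly when $\var \in \rv\tm$, so that ``reachable occurrence'' and ``occurrence on which a linear substitution can act'' coincide, and that partial unfolding never creates new reachable occurrences of a bound variable.

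The heart of the argument is a \emph{preservation lemma}: compatibility is maintained by the recursive clauses of $\unvalas{\cdot}$ read against the inductive rules of \UOCBV. When \ruleUEsLAbs adds $\var$ to $\aset$ and $\var \in \rv\tm$, the matching unfolding clause extends $\valas$ with $\var \mapsto \val$ where $\val\sctx = \unvalas\tmtwo$; since $\tmtwo \in \HAbs\aset$ forces (stripping the ES's and using compatibility on $\aset$-variables) $\val$ to be an abstraction, the extended assignment is still compatible with $\aset \cup \set\var$. When instead $\var \notin \rv\tm$, the unfolding keeps the ES and the frame-entry $\var \in \aset$ is irrelevant to any reduction of $\tm$ (an occurrence of $\var$ not in $\rv\tm$ lies under an abstraction, where evaluation is blocked), so it may be dropped; the remaining data stays compatible. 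The case of \ruleUEsLStruct is analogous, using the structure component of compatibility. With this lemma the rest is bookkeeping.

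For item~(1) I would induct on the derivation of $\tm \in \NF\aset\sset\appflag$ via \cref{thm:uocbv_characterization_of_normal_forms}, proving that $\unvalas\tm$ belongs to the inductive set of $\tovvTop$-normal forms, decorated so that a flag-$\app$ normal form never unfolds to a term of the form $(\lam\var\tm)\sctx$ (this rules out a $\dbsym$-redex in the \ruleUNFApp case). The variable and abstraction cases are immediate; the delicate cases are \ruleUNFEsAbs and \ruleUNFEsStruct: either the ES is consumed by an extended $\valas$, and then the bound variable no longer occurs freely in the unfolded body, so no $\rulelsv$-redex appears; or the ES is kept, and then $\var \notin \rv\tm$ (hence $\var$ does not occur reachably in $\unvalas\tm$ either) rules out a $\tovv{\rulesub\var\val}$-step and hence any $\rulelsv$-redex there, while the unfolded body and argument are normal by the induction hypothesis and the preservation lemma. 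For item~(2) I would induct on the derivation of $\tm \tov\rulename\aset\sset\appflag\tm'$ (all step kinds $\rulename$, to feed the \ruleULsv premise), proving the stronger conclusion that $\unvalas\tm$ either contains a $\dbsym$-redex or else $\appflag = \app$ and $\unvalas\tm = (\lam\var\tm)\sctx$. Rule \ruleUDb gives a $\dbsym$-redex directly, since unfolding leaves the abstraction untouched and only reshuffles the enclosing substitution context. Rule \ruleUSub uses compatibility: $\var$ is classified as a hereditary abstraction in its premise, so $\unvalas\var = \valas(\var)$ is an abstraction, which together with the $\app$ flag is the second disjunct. Rule \ruleULsv uses the $\HAbs{}$ side condition exactly as in the preservation lemma and then the induction hypothesis on its $\rulesub$-premise, observing that prepending an ES to a term of the form $(\lam\var\tm)\sctx$ keeps it of that form. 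The congruence rules \ruleUAppL, \ruleUAppR, \ruleUEsLAbs, \ruleUEsLStruct, \ruleUEsR propagate the disjuncts, with \ruleUAppL converting the second disjunct into an actual $\dbsym$-redex because the argument is present. Specialising to $\aset = \emptyset$, $\sset = \fv\tm$, $\appflag = \nonapp$, $\valas = \evalas$ kills the second disjunct and yields exactly the statement.

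The step I expect to be the main obstacle is fixing the compatibility predicate and proving the preservation lemma: the frames and the value assignment evolve in lockstep but split on different criteria --- the unfolding clause at an ES branches on $\rv{\cdot}$-membership of the body, while the reduction rules branch on $\HAbs{\cdot}$ versus $\Struct{\cdot}$ membership of the ES argument --- so the invariant tying together $\dom\valas$, $\aset$, $\sset$, $\rv\tm$, and $\fv\tm$ is where the real content lies. Once that invariant is stated correctly, every individual case of both inductions reduces to unfolding the relevant definition.
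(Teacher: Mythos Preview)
Your proposal is correct and follows essentially the same route as the paper: generalise both items to arbitrary $\aset,\sset,\appflag,\valas$ under a compatibility predicate $\compat\valas\aset\sset$, prove a preservation lemma for this predicate across the ES rules, then induct on the derivation of $\tm\in\NF\aset\sset\appflag$ for item~(1) and on the derivation of $\tm\tov\rulename\aset\sset\appflag\tm'$ for item~(2), the latter with exactly the disjunctive strengthening you describe (either $\unvalas\tm$ has a $\dbsym$-redex, or $\appflag=\app$ and $\abs{\unvalas\tm}$). The paper's precise compatibility predicate is $\aset\subseteq\dom\valas\subseteq\aset\cup\sset$, $\valas(\var)$ an abstraction for each $\var\in\aset$, and $\valas(\var)$ a \emph{variable} for each $\var\in\sset\cap\dom\valas$; this last clause is sharper than your ``unfolds to a structure'' and is what makes the \ruleUNFVar case of item~(1) go through directly, so when you pin down the invariant that is the formulation to aim for.
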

This statement is generalised for arbitrary parameters 
($\aset$, $\sset$, $\appflag$, etc.) in the appendix 
(\cref{prop:useful_nf_unfold_to_nf}), which has some subtleties.

To illustrate statement \ref{prop:TOPLEVEL_useful_nf_unfold_to_nf:part}, let 
us take the term $(\var\,\vartwo)\esub\vartwo\id$, which is 
$\tovTop{\set\var}$-irreducible.
Its partial unfolding is $(\var\,\id)\esub\vartwo\id$, which is 
$\tovvTop$-irreducible.
As an example of statement 
\ref{prop:TOPLEVEL_unfold_reducible_is_reducible:part},
consider the $\lsvsym$-step
$\tm = (\var\,\vartwo)\esub\var\id \tovTop{\set\vartwo}
(\id \, \vartwo)\esub\var\id = \tm' = \unvalasevalas\tm$;
note that $\tm' \tovv\dbsym \var_1\esub{\var_1}\var\esub\vartwo\id$.

From these propositions, we conclude:
\begin{corollary}
A term $\tm$ is $\tovTop{\fv\tm}$-irreducible if and only if
$\unvalasevalas\tm$ is $\tovvTop$-irreducible.
\end{corollary}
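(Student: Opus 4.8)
The plan is to derive this Corollary directly from \cref{prop:TOPLEVEL_useful_nf_unfold_to_nf}, whose two parts already do all the work; the Corollary is essentially a repackaging of those two parts into a single ``if and only if''. Throughout, fix a term $\tm$ and set $\sset \eqdef \fv\tm$, so that the correctness invariant $\inv\emptyset\sset\tm$ holds and $\tovTop\sset$ is the intended top-level useful reduction. Recall also that $\tovvTop = \tovv\ruledb \cup \tovv\rulelsv$, so in particular every $\dbsym$-step is a $\tovvTop$-step.

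For the left-to-right implication, I would simply invoke item~\ref{prop:TOPLEVEL_useful_nf_unfold_to_nf:part} of \cref{prop:TOPLEVEL_useful_nf_unfold_to_nf}: if $\tm$ is $\tovTop\sset$-irreducible, then $\unvalasevalas\tm$ is $\tovvTop$-irreducible, which is exactly what is needed. For the right-to-left implication, I would argue by contraposition. Assume $\tm$ is not $\tovTop\sset$-irreducible, i.e.\ $\tm \tovTop\sset \tm'$ for some $\tm'$. By item~\ref{prop:TOPLEVEL_unfold_reducible_is_reducible:part} of \cref{prop:TOPLEVEL_useful_nf_unfold_to_nf}, there exists $\tm''$ with $\unvalasevalas\tm \tovv\dbsym \tm''$. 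Since a $\dbsym$-step is a $\tovvTop$-step, $\unvalasevalas\tm$ is $\tovvTop$-reducible, hence not $\tovvTop$-irreducible. This establishes the contrapositive, completing the equivalence.

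Since both parts of \cref{prop:TOPLEVEL_useful_nf_unfold_to_nf} are already available, there is no genuine obstacle in this Corollary; it is a pure corollary. The only point requiring a little care is that the statement must be instantiated with $\sset = \fv\tm$ — matching the hypothesis of the Proposition — so that the invariant $\inv\emptyset{\fv\tm}\tm$ is in force and the two reductions being compared (top-level \UOCBV versus top-level \LOCBV) are precisely the ones handled by the Proposition. All the real difficulty lives upstream, in the proof of \cref{prop:TOPLEVEL_useful_nf_unfold_to_nf} and of its generalisation to arbitrary parameters $(\aset,\sset,\appflag)$ in the appendix.
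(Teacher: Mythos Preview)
Your proposal is correct and follows exactly the approach the paper intends: the Corollary is stated right after \cref{prop:TOPLEVEL_useful_nf_unfold_to_nf} with the words ``From these propositions, we conclude'', and your argument---direct application of part~\ref{prop:TOPLEVEL_useful_nf_unfold_to_nf:part} for one direction, contraposition via part~\ref{prop:TOPLEVEL_unfold_reducible_is_reducible:part} for the other---is precisely how that conclusion is meant to be drawn.
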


\section{\UOCBV is Invariant}
\label{sec:usefulcbv_invariant}

In this section, we relate our \UOCBV strategy to the \glamour 
abstract machine introduced in~\cite{AccattoliC15}, which is an 
existing implementation of usefulness.
As a consequence, we yield that \UOCBV is \emph{time-invariant}.
That is, the number of reduction steps to normal form in \UOCBV can
be used as a measure of time complexity.

We proceed in two stages, following~\cite{AccattoliC15}.
On one hand, we prove a \textbf{high-level implementation} theorem
(\cref{thm:high_level_implementation}),
stating that reduction in the fireball calculus~\cite{AccattoliC15} 
can be simulated by reduction in \UOCBV, with \emph{quadratic} 
overhead in time.
On the other hand, we prove a \textbf{low-level implementation} theorem
(\cref{thm:low_level_implementation}),
stating that reduction in \UOCBV can be implemented by the \glamour 
abstract machine, with \emph{linear} overhead in time.
By composing these results, we obtain that \UOCBV is an invariant
implementation of open \CBV (\cref{coro:ucbv_invariant}).
The key property in this section is a simulation result 
(\cref{glamour_simulation}), which embeds the \glamour abstract machine 
into \UOCBV. 

We state the main results, while most of the technical details are 
in~\cref{app:usefulcbv_invariant}.
The definitions and proofs in this section follow well-known 
methodologies~(\eg, \cite{AccattoliBM14,AccattoliC15,AccattoliG17}).

\paragraph{\bf Stability Notions}
Given an abstraction frame $\aset$ and a structure frame $\sset$, a
term $\tm$ is said to be \defn{\stabilized} under $(\aset,\sset)$ if 
$\tm \in \HAbs\aset$ or $\tm \in \Struct\sset$.
In \UOCBV, evaluating a closure $\tm\esub\var\tmtwo$ proceeds in the 
body $\tm$ only when the argument $\tmtwo$ is rigid.
To better align with low-level abstract machines, we define for any
abstraction frame $\aset$ and structure frame $\sset$ the set of 
\defn{stable terms} under $(\aset,\sset)$ as those in which every ES 
argument is rigid under $(\aset,\sset)$.
We introduce a reduction relation we dub \defn{stable reduction} and
is written ${\tostable\rulename\aset\sset\appflag}$, such that 
${\tostable\rulename\aset\sset\appflag} \subseteq {\tov\rulename\aset\sset\appflag}$.
Sometimes, we write $\toustable$ if the parameters are clear from the context.
Stable reduction forces arguments of applications to be \stabilized 
reduction proceeds.
Therefore, in this setting, rule \ruleUAppL allows to reduce the head 
$\tm$ of an application $\tm\,\tmtwo$ only if $\tmtwo$ is \stabilized,
and rule \ruleUDb allows to contract a $\dbsym$-redex only if its
argument is rigid. Stable reduction preserves stable terms and enjoys 
the diamond property.

\paragraph{{\bf Embedding the \glamour into useful Open \CBV} }
We start by briefly recalling the syntax of the \glamour.
Full details on the \glamour transitions and the remaining definitions 
can be found in~\cite{AccattoliC15,AccattoliG17} and \cref{app:usefulcbv_invariant}.

The syntax of the \glamour is given by the following
grammar, defining
\defn{states} $(\state, \statetwo, \hdots)$,
\defn{dumps} $(\dump, \dumptwo, \hdots)$,
\defn{stacks} $(\stack, \stacktwo, \hdots)$, 
\defn{stack items} $(\stackitem, \stackitemtwo, \hdots)$, and
\defn{global environments} $(\genv, \genvtwo, \hdots)$:
\[
  \begin{array}{rclcrcl}
    \state     & \eqgram & \glamourst\dump\code\stack\genv
  & \HS &
    \stackitem & \eqgram & \code   \mid \pair\code\stack
  \\
    \dump      & \eqgram & \estack \mid \dump \cons \pair\code\stack
  & \HS &
    \genv      & \eqgram & \estack \mid \esub\var\stackiteml \cons \genv
  \\
    \stack     & \eqgram & \estack \mid \stackiteml \cons \stack
  \end{array}
\]
where $\lab \in \set{\alive,\dead}$ are \defn{labels} decorating stack items.
Here, \defn{codes} $(\code, \codetwo, \hdots)$ are terms with no ESs, 
but they are \textbf{not} considered up to $\alpha$-equivalence.
By convention, labels indicate the shape of $\stackitem$. 
In particular, $\lab = \alive$ if and only if $\stackitem$ is of the 
form $\code$, and $\lab = \dead$ if and only if $\stackitem$ is of the 
form $\pair{\code}\stack$.
Intuitively, these labels indicate whether a stack item unfolds to 
a hereditary abstraction ($\alive$) or a structure ($\dead$).
In a state $\state = \glamourst\dump\code\stack\genv$,
the code $\code$ is called the \defn{focus} of $\state$.
A state $\state_0$ is \defn{initial} if it is of the form 
$\state_0 = \glamourst\estack\code\estack\estack$.

The decode function $\decode{\_}$ operates over the syntactic elements
of the \glamour and returns terms in the syntax of \UOCBV. 
The function is defined recursively as:
\[
  \begin{array}{rclcrcl}
    \decode{\glamourst{\dump}{\code}{\stack}{\genv}} & \eqdef & \decodep{\genv}{\decodep{\dump}{\decodep{\stack}{\tm}}}
  & \HS &
    \decode{\esub{\var}{\stackiteml} \cons \genv}    & \eqdef & \decodep{\genv}{\ectx\esub{\var}{\decode{\stackiteml}}}
  \\
    \decode{\estack}                                 & \eqdef & \ectx
  & \HS &
    \decode{\pair{\code}{\stack}^\lab}               & \eqdef & \decodep{\stack}{\tm}
  \\
    \decode{\dump \cons \pair{\code}{\stack}}        & \eqdef & \decodep{\dump}{\decodep{\stack}{\tm\ectx}}
  & \HS &
    \decode{\code^\lab}                              & \eqdef & \tm 
  \\
    \decode{\stackiteml \cons \stack}                & \eqdef & \decodep{\stack}{\ectx\decode{\stackiteml}}
  & \HS &
    \decode{\code}                                   & \eqdef & \tm
  \end{array}
\]
In the two last lines on the right column, $\tm$ denotes a term which 
is $\alpha$-equivalent to $\code$.

As is typical when relating abstract machines and calculi with ESs 
(see \eg,~\cite{AccattoliBM14}), the \glamour abstract machine can be 
simulated in \UOCBV up to \defn{structural equivalence} 
of terms~\cite{AccattoliG16}, denoted $\equiv$,
defined as the least equivalence relation closed by weak contexts
(\ie, contexts that do not enter inside abstractions) of the four following
axioms:
\[
  \begin{array}{rcl@{\HS}rcl}
  \tm\esub\var\tmtwo\esub\vartwo\tmthree & \equiv & \tm\esub\vartwo\tmthree\esub\var\tmtwo
  &
  \tm\esub\var\tmtwo\esub\vartwo\tmthree & \equiv & \tm\esub\var{\tmtwo\esub\vartwo\tmthree}
  \\
  (\tm \, \tmtwo)\esub\var\tmthree & \equiv & \tm\esub\var\tmthree \, \tmtwo
  &
  (\tm \, \tmtwo)\esub\var\tmthree & \equiv & \tm \, \tmtwo\esub\var\tmthree
  \end{array}
\]
All these axioms assume that variable-capture is avoided,
for example, the first equation assumes that $\var \notin \fv\tmthree$ 
and $\vartwo \notin \fv\tmtwo$.

There are seven kinds of transitions in the \glamour:
transitions ($\state \tomachum \state'$) for function applications,
transitions ($\state \tomachue \state'$) for linear substitutions,
and \emph{administrative} transitions
$\state \tomachhole{\admsym_i} \state'$ with $i \in \set{1..5}$,
for changing the evaluation focus without computing.
Formally, the transitions of the \glamour abstract machine are:
\[
  {\small
    \begin{array}{c|c|c|ccc|c|c|ccc}
      \usefmaca\genv\dump{\code \, \codethree}\stack
    & \tomachcone &
     \usefmaca\genv{\dump\cons\pair\code\stack}\codethree\estack
    \\
      \usefmaca\genv\dump{\tocode{\lam\var\tm}}{\stackiteml\cons\stack}
    & \tomachum &
      \usefmaca{\esub\var\stackiteml \cons \genv}\dump\code\stack
    \\
      \usefmaca\genv{\dump \cons \pair\code\stack}{\tocode{\lam\var\tmthree}}\estack
    & \tomachctwo &
      \usefmaca\genv\dump\code{\herval{(\tocode{\lam\var\tmthree})}\cons\stack}
    \\
      \usefmaca\genv{\dump\cons\pair\code\stack}{\tocode\var}\stacktwo
    & \tomachcthree &
      \usefmaca\genv\dump\code{\pair{\tocode\var}\stacktwo^\dead\cons\stack}
    \\
      \usefmaca{\genv_1 \cons \esub\var{\phi^\dead} \cons \genv_2}{\dump\cons\pair\code\stack}{\tocode\var}\stacktwo
    & \tomachcfour &
      \usefmaca{\genv_1 \cons \esub\var{\phi^\dead} \cons \genv_2}\dump\code{\pair{\tocode\var}\stacktwo^\dead\cons\stack}
    \\
      \usefmaca{\genv_1 \cons \esub\var{\herval\codethree} \cons \genv_2}{\dump\cons\pair\code\stack}{\tocode\var}\estack
    & \tomachcfive &
      \usefmaca{\genv_1 \cons \esub\var{\herval\codethree} \cons \genv_2}\dump\code{\herval{\tocode\var}\cons\stack}
    \\
      \usefmaca{\genv_1 \cons \esub\var{\herval\codethree} \cons \genv_2}\dump{\tocode\var}{\stackiteml\cons\stack}
    & \tomachue &
      \usefmaca{\genv_1 \cons \esub\var{\herval\codethree} \cons \genv_2}\dump{\rename\codethree}{\stackiteml\cons\stack}
    \end{array}
  }
\]
where $\rename\codetwo$ is any code $\alpha$-equivalent to
$\codetwo$ that preserves well-naming of the machine. 
Moreover, in the transition $\tomachcthree$, the variable $\tocode\var$
does not belong to the domain of the environment $\genv$. 
The domain of an environment $\genv$ is written $\domSctx\genv$ and 
is defined recursively by $\domSctx\estack \eqdef \emptyset$
and $\domSctx{\esub\var\stackiteml \cons \genvtwo} \eqdef \set\var \cup \domSctx\genvtwo$.

The main technical lemma is:

\begin{restatable}[\glamour simulation]{lemma}{glamourSimulation}
\label{glamour_simulation}
Let $\state$ be a state reachable from an initial state
whose focus is $\tm_0$ and let $\sset_0 \eqdef \fv{\tm_0}$. Then:
\begin{enumerate}
\item
  If $\state \tomachum \state'$,
  then $\decode\state \toustable_\ruledb\equiv \decode{\state'}$.
\item
  If $\state \tomachue \state'$, 
  then $\decode\state \toustable_\rulelsv\equiv \decode{\state'}$.
\item
  If $\state \tomachhole{\admsym_i} \state'$,
  then $\decode\state = \decode{\state'}$, for all $i \in \set{1..5}$.
\item
  If $\state$ is $\tomachhole{}$-irreducible,
  then $\decode\state$ is $\toustable$-irreducible (progress property).
\end{enumerate}
\end{restatable}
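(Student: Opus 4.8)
The plan is to follow the standard recipe for relating an abstract machine to a calculus with explicit substitutions (as in~\cite{AccattoliBM14,AccattoliC15}): first isolate a package of structural \emph{invariants} satisfied by every reachable state, then prove clauses 1--3 by case analysis on the transition, and finally handle the progress clause 4 by contraposition. The first move, before touching the four clauses, is to state and prove --- by induction on the length of the run from the initial state --- the invariants on $\state = \glamourst\dump\code\stack\genv$ that the decode function and the stable side conditions rely on. The ones I expect to need are: (i) \emph{well-naming}, so that $\decode{\_}$ is well defined and the $\alpha$-renamings performed by $\tomachue$ cause no clash; (ii) \emph{label coherence}, namely a stack item has label $\alive$ iff its decoding lies in $\HAbs\aset$ and label $\dead$ iff it lies in $\Struct\sset$, for the frames $\aset,\sset$ read off from the portion of $\genv$ to its left; (iii) \emph{environment well-formedness}, namely reading $\genv$ right-to-left (starting from $\aset=\emptyset$, $\sset=\sset_0$) determines disjoint frames with $\fv{\decode\state}\subseteq\aset\cup\sset$, each entry classified into $\HAbs{}$ or $\Struct{}$ according to its label; and (iv) the decoded state is a \emph{stable term} in the sense of the ``Stability Notions'' paragraph, i.e.\ every ES argument occurring in $\decode\state$ is rigid under the ambient frames. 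Invariant (iv) is precisely what licenses the stable reduction $\toustable$ rather than plain $\tov\rulename\aset\sset\appflag$.

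For clauses 1--3 I would, in each case, expand $\decode\state$ and $\decode{\state'}$ with the definition of $\decode{\_}$ and compare them. The five administrative transitions $\tomachcone,\dots,\tomachcfive$ merely shuttle a sub-code between dump, stack, and focus, or expose a value stored in $\genv$; unfolding the decode shows both sides produce literally the same term (up to the $\alpha$-equivalence already built into $\decode{\_}$), giving clause 3. For $\tomachum$, pushing the entry $\esub\var\stackiteml$ onto $\genv$ corresponds after decoding to firing rule~\ruleUDb\ at the head, possibly preceded by a bounded number of $\equiv$-rearrangements commuting explicit substitutions past applications using the four axioms of $\equiv$; invariant (ii) guarantees the argument is rigid, so the step is legal for the stable $\dbsym$-reduction, yielding clause 1. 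For $\tomachue$, the transition replaces a focused variable by a renaming of the code stored for it in $\genv$; decoding turns this into an \ruleULsv\ step whose side conditions --- the stored value is a hereditary abstraction (label $\alive$, by (ii)), the occurrence is in applied position (the stack is non-empty), and the frame/positional conditions --- are exactly what invariants (i)--(iii) supply, and again finitely many $\equiv$-steps line up the two sides, yielding clause 2.

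For clause 4 (progress) I would argue contrapositively: assume $\decode\state$ is $\toustable$-reducible and exhibit an applicable machine transition. By the inductive characterisation of normal forms (\cref{thm:uocbv_characterization_of_normal_forms}) together with \cref{lem:nf_in_HAbs_or_Struct}, a $\toustable$-reducible decoded state cannot be ``fully stuck'', so the focus $\code$ is not a variable unbound to a value in $\genv$, not an abstraction facing an empty stack, and so on; a routine case analysis on $\code$, the stack $\stack$, and the dump $\dump$ then produces a transition in every remaining configuration, so $\state$ is not $\tomachhole{}$-irreducible.

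\textbf{Main obstacle.} The crux is the label-coherence invariant (ii) together with its companion environment invariant (iii): the whole design of the \glamour\ is that the $\alive/\dead$ bits pre-compute the $\HAbs{}$/$\Struct{}$ classifications that the useful side conditions of \ruleULsv\ and \ruleUAppR\ demand, and one must verify that \emph{every} transition preserves this correspondence --- in particular $\tomachctwo,\tomachcthree,\tomachcfour,\tomachcfive$, which create or relocate labelled items, and $\tomachue$, which re-inserts a copied code without disturbing the frames. Getting the bookkeeping of frames-versus-environment-prefix exactly right, so that the stable-reduction side conditions match on the nose and only finitely many $\equiv$-steps are needed to reconcile the two sides, is the delicate part; the rest is bureaucratic unfolding of $\decode{\_}$.
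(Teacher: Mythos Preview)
Your proposal is correct and follows essentially the same route as the paper: an invariants lemma (the paper packages your (ii)--(iv) as a single ``(New) Invariants for the \glamour'' result showing that $\decode\state\in\Stable\emptyset{\sset_0}$ and that every stack and environment item is $\genv$-rigid), followed by case analysis on the seven transitions, with the five administrative cases yielding literal equality of decodes and the two computational cases yielding a single stable step followed by $\equiv$ via a dedicated lemma that pushes an ES through $\decode\dump$ and $\decode\stack$. One small point of order: in the paper the $\equiv$-rearrangement comes \emph{after} the stable step, not before it (the statement is $\decode\state\toustable\equiv\decode{\state'}$), though with the strong bisimulation lemma this distinction is immaterial; and for clause~4 the paper's appendix actually omits an explicit argument, so your contrapositive case analysis on the focus/stack/dump shape is the natural completion.
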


As a consequence, any \emph{sequence} of \glamour transitions
corresponds to a \emph{sequence} of (stable) \UOCBV reduction steps 
interleaved with equivalences.
To \emph{postpone} all the intermediate structural equivalence steps
to obtain 
$\decode{\state_1} \toustable\toustable \hdots \toustable \equiv \decode{\state_n}$,
we need the following lemma, stating that $\equiv$ is a 
\emph{strong bisimulation} with respect to $\toustable$, thus 
obtaining in particular a postponement property:

\begin{restatable}[Strong bisimulation]{lemma}{structeqBisimulation}
\label{structeq_bisimulation}
Let $\tm_0, \tmtwo_0$ be stable terms such that
$\tmtwo_0 \equiv \tm_0$.
If $\tm_0 \tostable\rulename\aset\sset\appflag \tm_1$, then there 
exists $\tmtwo_1$ such that
$\tmtwo_0 \tostable\rulename\aset\sset\appflag \tmtwo_1 \equiv \tm_1$.
\end{restatable}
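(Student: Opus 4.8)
The plan is to prove a single-axiom version of the statement and then lift it to the full $\equiv$ by an easy induction. Recall that $\equiv$ is the reflexive--transitive--symmetric closure of the relation $\equiv_1$ obtained by applying one of the four axioms inside a weak context. First I would establish the routine auxiliary facts that $\equiv$ (hence $\equiv_1$) preserves all the data inspected by the side conditions of the \UOCBV rules: $\fv{\cdot}$ and $\rv{\cdot}$, the predicates $\HAbs\aset$ and $\Struct\sset$ for every frame, rigidity under $(\aset,\sset)$, stability, and the correctness invariant $\inv\aset\sset{\cdot}$ (the last two follow from preservation of free variables). Each is proved by a short induction on the derivation of $\equiv_1$, using the capture-avoidance provisos attached to the axioms. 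Since the statement is symmetric in $\tm_0$ and $\tmtwo_0$, it then suffices to prove: if $\tm_0 \equiv_1 \tmtwo_0$ and $\tm_0 \tostable\rulename\aset\sset\appflag \tm_1$, then there is $\tmtwo_1$ with $\tmtwo_0 \tostable\rulename\aset\sset\appflag \tmtwo_1$ and $\tm_1 \equiv \tmtwo_1$ --- same step kind, same frames, same flag. The general case then follows by induction on the length of the $\equiv$-chain connecting $\tmtwo_0$ to $\tm_0$, re-using preservation of stability at each step.

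The core of the argument is a case analysis on the relative positions of the $\equiv_1$-redex and the redex contracted by the $\tostable$-step, carried out by simultaneous induction on the two derivations. When the two redexes lie in disjoint subterms, or when one lies strictly inside a subterm of the other that the axiom pattern does not touch (e.g.\ the step occurs inside $\tmtwo$ while the axiom only reshuffles explicit substitutions mentioning $\tm$ and $\tmthree$), the two operations commute: one performs the ``same'' step on $\tmtwo_0$ and re-applies the axiom, and the resulting terms are either equal or related by one more $\equiv_1$-step. The frame and flag threaded down to the shared subterm by the congruence rules are identical on both sides, thanks to commutativity of set union (for the swap and nesting axioms) and to the capture-avoidance provisos: for instance in $(\tm\,\tmtwo)\esub\var\tmthree \equiv \tm\,\tmtwo\esub\var\tmthree$ one has $\var \notin \fv\tm$, so by \cref{rem:habs_st} the status of $\tm$ as a structure is unchanged whether or not $\var$ is added to $\sset$, which is precisely what is needed to replay a \ruleUAppR step.

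The genuine work is in the overlapping cases, i.e.\ the critical pairs between an axiom and a rule pattern. The delicate ones are: left-distribution against \ruleUDb, \ruleUAppL and \ruleUAppR; right-distribution against \ruleUAppR and \ruleUEsR; commutation and nesting against \ruleULsv and against \ruleUEsLAbs/\ruleUEsLStruct; and nesting against \ruleUDb. These all close with \emph{exactly one} $\tostable$-step on the other side followed by at most one $\equiv_1$-step; the recurring key is that the substitution context $\sctx$ occurring in rule \ruleUDb (which rewrites $(\lam\var\tm)\sctx\,\tmtwo$ to $\tm\esub\var\tmtwo\sctx$) and in rule \ruleULsv (which rewrites $\tm\esub\var{\val\sctx}$ to $\tm'\esub\var\val\sctx$) is exactly what absorbs the explicit-substitution rearrangements produced by the nesting and left-distribution axioms: e.g.\ $((\lam\var\tm)\,\tmtwo)\esub\vartwo\tmthree$ and $(\lam\var\tm)\esub\vartwo\tmthree\,\tmtwo$ both reduce by \ruleUDb to $\tm\esub\var\tmtwo\esub\vartwo\tmthree$. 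One also checks that the rigidity side condition of stable \ruleUDb and the $\val\sctx \in \HAbs\aset$ condition of \ruleULsv transfer across the axiom, using the preservation lemmas above.

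I expect this critical-pair analysis, together with the bookkeeping of the frames $(\aset,\sset)$ and the flag $\appflag$ as they are forced and extended by the congruence rules, to be the main obstacle: the number of rule/axiom pairs is large, several axioms are ``structurally invisible'' to the distant rules and hence produce subtle identifications, and one must repeatedly invoke the capture-avoidance provisos to see that frame extensions are harmless. Once all cases are discharged, the conclusion that the matching step has the same $\rulename$ is immediate, since none of the four axioms creates or destroys a $\dbsym$- or $\lsvsym$-redex, nor changes which value a $\rulesub\var\val$-step uses; it only relocates explicit substitutions. This bisimulation, combined with preservation of stability, then yields the postponement of $\equiv$-steps needed downstream.
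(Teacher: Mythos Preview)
Your proposal is essentially the paper's approach: introduce a one-step equivalence $\equivC$ (your $\equiv_1$) containing both orientations of each axiom under weak contexts, prove the bisimulation for $\equivC$ by induction on the reduction derivation with a case split on the equivalence rule, then lift to $\equiv$ by induction on the length of the $\equivC$-chain. The paper also isolates exactly the preservation lemma you anticipate ($\HAbs\aset$ and $\Struct\sset$ are closed under $\equiv$) as \cref{lem:habs_structures_closed_by_equivalence}, and the critical-pair analysis you sketch mirrors the paper's case-by-case treatment. One minor over-optimism: the overlapping cases do \emph{not} always close with at most one $\equiv_1$-step on the equivalence side --- when a substitution context $\sctx$ is involved (e.g.\ the \ruleUDbStable/\ruleEquivEsRDist overlap), pushing the stray ES past $\sctx$ requires a chain of commutations of length $|\sctx|$, so the closing equivalence is genuinely in $\equiv$, not $\equiv_1$; this does not affect your argument structure since your stated conclusion for the one-step lemma is already $\tm_1 \equiv \tmtwo_1$.
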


This result relies crucially on the stability notion introduced 
earlier in this section. 
For instance, suppose $\tmthree$ is not \stabilized. 
Then, the equivalence $\tm\esub\var\tmtwo\esub\vartwo\tmthree \equiv
\tm\esub\vartwo\tmthree\esub\var\tmtwo$ cannot be postponed after the 
(non-stable) step $\tm\esub\vartwo\tmthree\esub\var\tmtwo \tou
\tm\esub\vartwo\tmthree\esub\var{\tmtwo'}$ because the rules
\ruleUEsLAbs and \ruleUEsLStruct cannot be applied to derive a step
$\tm\esub\var\tmtwo\esub\vartwo\tmthree \tou 
\tm\esub\var{\tmtwo'}\esub\vartwo\tmthree$, given that $\tmthree$ is 
not \stabilized.

\paragraph{{\bf High and Low-Level Implementation}}
We can now show the main results of this section. 
For that, we first say that a term is \defn{pure} if it contains no ESs.

Relying on the previous results, it can now be shown that reduction 
in the fireball calculus ($\tobetafireball$) can be implemented 
through \UOCBV ($\tou$) with quadratic overhead in time.
More precisely, if we write $\unfold\tm$ for the \emph{full} unfolding 
of a term $\tm$, performing \emph{all} the pending ESs in $\tm$:
\begin{restatable}[High-Level Implementation]{theorem}{highLevelImplementation}
\label{thm:high_level_implementation}
Let $\tm$ be a pure term and $\sset = \fv\tm$.
If $\tm \tobetafireball^n \tm'$, then there exists $\tmtwo$ such that 
$\tm \toustablen{k} \tmtwo$ where $\unfold\tmtwo = \tm'$ and 
$k \in O(|\tm| \cdot (n^2 + 1))$.
\end{restatable}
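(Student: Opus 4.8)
The plan is to reduce the statement to a one-step simulation lemma, iterated along the given fireball derivation, together with a global complexity count, following the standard high-level implementation recipe (\cf~\cite{AccattoliBM14,AccattoliC15,AccattoliG17}). The guiding invariant is that the current \UOCBV term, written $e$, satisfies $\unfold e = t$ for the current fireball term $t$. Two observations make the simulation go through. First, a $\ruledb$-step of \UOCBV projects, under full unfolding, to exactly one $\betafireball$-step: an applied abstraction $(\lam\var p)\sctx\,\tmtwo$ unfolds to a $\beta$-redex whose contraction is precisely the full unfolding of $p\esub\var\tmtwo\sctx$. Second, an $\rulelsv$-step leaves the full unfolding unchanged, since it only duplicates, inside a shared term, what full unfolding copies anyway. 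Hence simulating one $\betafireball$-step amounts to a burst of $\rulelsv$-steps that expose the relevant $\ruledb$-redex followed by a single $\ruledb$-step; this stays inside \emph{stable} reduction $\toustable$, because the ``arguments of explicit substitutions must be rigid first'' discipline of stable reduction is exactly the call-by-value order of the fireball calculus (rigidity being the ES-level counterpart of being a fireball), and intermediate structural equivalences are postponed using \cref{structeq_bisimulation}. Starting from $e \defeq \tm$ (pure, hence stable, with $\unfold\tm = \tm$ and parameters $\aset = \emptyset$, $\sset = \fv\tm$) and iterating $n$ times yields $\tm \toustablen{k} \tmtwo$ with $\unfold\tmtwo = \tm'$.

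For the one-step lemma, given a stable $e$ with $\unfold e = t$ and $t \tobetafireball t'$, one locates the fired $\betafireball$-redex inside $e$: its abstraction is either present syntactically in head position or reachable through a chain of explicit substitutions binding variables to hereditary abstractions. Walking that chain with \ruleULsv, each step legal precisely because the ES argument is a hereditary abstraction, brings a $\ruledb$-redex $(\lam\var p)\sctx\,\tmtwo$ into the required applied position inside a stable reduct of $e$; contracting it with \ruleUDb yields the next $e$, and since full unfolding performs \emph{all} pending explicit substitutions, the new $\unfold e$ equals $t'$ by construction. Stability, the correctness property $\inv\aset\sset e$, and weakness of evaluation are preserved along the way by the congruence rules of \UOCBV.

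The quantitative part rests on a \emph{subterm property}: every value substituted by an $\rulelsv$-step is, up to renaming, a subterm of the initial term $\tm$, so no size explosion occurs and the shared term, in particular its number of explicit substitutions, stays of size $O(|\tm|\cdot(n+1))$ throughout the $n$-step simulation. Consequently, the chain traversed to expose each successive $\ruledb$-redex has length $O(|\tm|\cdot n)$, so each of the $n$ exposing bursts uses $O(|\tm|\cdot n)$ $\rulelsv$-steps; adding the $n$ $\ruledb$-steps gives $k \in O(|\tm|\cdot n^2) + O(n) = O(|\tm|\cdot(n^2+1))$, the quadratic factor reflecting that the work needed to surface a redex grows linearly with the number of steps already performed.

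The main obstacle is the bookkeeping: one must maintain simultaneously, and prove preserved under every reduction rule and under $\equiv$, the whole bundle of invariants (namely that $\unfold e$ tracks the fireball term, that $e$ is stable, that $\inv\aset\sset e$ holds, the subterm property, and the linear size bound), and then pin down the length of each exposing burst tightly enough to land on $O(|\tm|\cdot n^2)$ rather than a cruder estimate. An alternative, also viable, is to route through the \glamour: compose the known implementation of the fireball calculus by the \glamour together with its transition bounds~\cite{AccattoliC15} with the projection of \glamour runs into stable \UOCBV reductions (\cref{glamour_simulation}), where each principal transition contributes exactly one \UOCBV step; this trades the term-level invariant analysis for the already-established complexity analysis of the machine.
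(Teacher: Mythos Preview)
The paper takes precisely the alternative you sketch at the end: it routes through the \glamour. Concretely, it invokes the results of~\cite{AccattoliC15} (their Thm.~8, Thm.~9, Coro.~1) to obtain a \glamour run $\state \tomachhole{}^p \statetwo$ simulating the given fireball reduction with $p \in O(|\tm|\cdot(n^2+1))$, then uses \cref{glamour_simulation} and \cref{structeq_bisimulation} to project that run to a stable \UOCBV reduction $\tm \toustablen{k} \tmtwo$ with $k \leq p$ and $\tmtwo \equiv \decode\statetwo$; the equality $\unfold\tmtwo = \tm'$ follows from the fact that $\equiv$ preserves full unfolding and from the unfolding correctness of the \glamour in~\cite{AccattoliC15}.

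Your primary proposal---a direct term-level one-step simulation with a hand-rolled subterm/size invariant---is a genuinely different route. It is viable in principle, and your identification of the key ingredients (unfolding invariance under $\rulelsv$, projection of $\ruledb$ to $\betafireball$, subterm property for the quadratic bound) is accurate. The trade-off is exactly what you flag as the ``main obstacle'': you would have to re-establish, at the term level, the complexity analysis that~\cite{AccattoliC15} already carries out for the machine (in particular the quadratic bound on $\rulelsv$-steps), plus all the stability/correctness bookkeeping. The paper's route sidesteps this entirely by treating the \glamour as a black box with known bounds and only proving the projection into \UOCBV; what it buys is brevity and reuse, at the cost of the dependency on the external results.
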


Now, we show that \UOCBV ($\tou$) can be implemented in the \glamour 
($\tomachhole{}$) with bilinear overhead in time, as a function of 
the size of the starting term and the length of the reduction sequence 
to normal form:
\begin{restatable}[Low-Level Implementation]{theorem}{lowLevelImplementation}
\label{thm:low_level_implementation}
Let $\tm$ be a pure term.
If $\tm \toustablen{n} \tm'$, with $\tm'$ in normal form
and $\state$ is an initial state such that $\decode\state = \tm$,
then $\state \tomachhole{}^k \statetwo$, where $\decode\statetwo$ is 
structurally equivalent to $\tm'$ and $k \in O(|\tm| \cdot (n + 1))$.
\end{restatable}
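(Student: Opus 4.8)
The plan is to run the \glamour deterministically from $\state$, project its run onto stable \UOCBV reduction using \cref{glamour_simulation}, postpone the intermediate structural equivalences with \cref{structeq_bisimulation}, and finally bound the administrative transitions by an amortised argument; only this last step is genuinely delicate.

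First I would observe that the \glamour is deterministic: for a reachable state, the shape of the focus, whether the stack and the dump are empty, and --- when the focus is a variable --- the binding found in the global environment, together select at most one applicable transition. Hence there is a unique maximal run $\state = \state_0 \tomachhole{} \state_1 \tomachhole{} \cdots$. Since $\tm$ is pure and $\state$ is initial, $\decode{\state_0} = \tm$. Composing \cref{glamour_simulation}(1--3) along any finite prefix $\state_0 \tomachhole{}^m \state_m$ shows that $\tm$ reaches $\decode{\state_m}$ by an alternation of $\toustable$-steps and $\equiv$-steps, the number of $\toustable$-steps being exactly the number of principal ($\tomachum$ or $\tomachue$) transitions in the prefix; using that every reachable decoding is a stable term, \cref{structeq_bisimulation} lets me push all $\equiv$-steps to the right, yielding a stable reduction of $\tm$ of that length. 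Because stable \UOCBV reduction enjoys the diamond property and $\tm$ has a $\toustable$-normal form, reached in $n$ steps by hypothesis, every stable reduction from $\tm$ has length at most $n$ (\cite{Terese03}); hence no prefix of the run contains more than $n$ principal transitions.

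Next I would bound the administrative transitions, the technical heart of the statement. Following the standard ``distillation'' methodology (\cf \cite{AccattoliBM14,AccattoliC15,AccattoliG17}), I would first prove the sub-term invariant that every code occurring in a reachable state is a renaming of a sub-term of $\tm$, and hence has size at most $|\tm|$. Then each maximal block of consecutive $\tomachcone$ transitions has length at most $|\tm|$ (it descends the spine of such a sub-term), and the ``return'' transitions $\tomachctwo,\dots,\tomachcfive$ are bounded likewise, all of it amortised against the at most $n$ principal transitions plus one initial descent. This yields $O(|\tm| \cdot (n+1))$ administrative transitions; combined with the bound of the previous paragraph, the whole run is finite, say $\state = \state_0 \tomachhole{}^k \statetwo$ with $\statetwo$ its $\tomachhole{}$-irreducible final state and $k \in O(|\tm| \cdot (n+1))$.

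It remains to identify the output. Applying the projection to the full run gives $\tm \toustablen{j} \tmfour \equiv \decode\statetwo$, where $j$ is the number of principal transitions. By \cref{glamour_simulation}(4), $\statetwo$ is $\tomachhole{}$-irreducible, so $\decode\statetwo$ is $\toustable$-irreducible; since $\equiv$ is a strong bisimulation, $\tmfour$ is $\toustable$-irreducible as well, hence a \UOCBV normal form of $\tm$. As $\toustable$ is confluent (being diamond), $\tmfour = \tm'$, and by the stable counterpart of \cref{coro:all_reductions_NF_same_length} we get $j = n$. Therefore $\decode\statetwo \equiv \tm'$, as required. I expect the bookkeeping in the amortised count of administrative transitions --- rather than any conceptual difficulty --- to be the main obstacle, since the conceptual content is already packaged in \cref{glamour_simulation} and \cref{structeq_bisimulation}.
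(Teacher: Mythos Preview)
Your proposal is correct and follows essentially the same route as the paper: project the \glamour run onto stable \UOCBV via \cref{glamour_simulation}, postpone equivalences with \cref{structeq_bisimulation}, use the diamond property of $\toustable$ to identify $\tm'$ with the projected normal form and to get $j=n$, and bound the administrative transitions to obtain $k\in O(|\tm|\cdot(n+1))$. The only notable difference is one of economy: where you sketch the sub-term invariant and the amortised count of administrative transitions, the paper simply cites the termination of $\tomachhole{\admsym_1,\dots,\admsym_5}$ and the bound $c\in O(|\tm|\cdot(e+1))$ directly from Lemma~6 of~\cite{AccattoliC15}, so no new combinatorics is needed here.
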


This result entails that \UOCBV is time-invariant, \ie, that it can 
be simulated with at most polynomial overhead by a time-invariant 
cost model (such as Turing machines or RAMs):
\begin{corollary}
\label{coro:ucbv_invariant}
The \UOCBV strategy is an invariant implementation of open \CBV.
\end{corollary}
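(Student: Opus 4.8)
The plan is to obtain \cref{coro:ucbv_invariant} by composition, chaining the two implementation theorems of this section with the classical analysis of the \glamour on a random-access machine. First I would unfold the meaning of ``invariant implementation of open \CBV'': following the two-level methodology of Accattoli and Dal Lago and the criterion of~\cite{SlotvEB84}, it suffices to exhibit (i) a simulation of the reference open-\CBV strategy --- the fireball calculus $\tobetafireball$ of~\cite{AccattoliC15} --- inside \UOCBV whose overhead in the \emph{number of steps} is polynomial, together with the standard completeness direction; and (ii) a simulation of \UOCBV by a reasonable machine (a RAM, equivalently a Turing machine) whose \emph{time} overhead is polynomial. Combined with the classical converse that reasonable machines are simulated with polynomial overhead inside the $\lambda$-calculus, this makes the number of $\tou$-steps polynomially interchangeable with time on a reasonable model, i.e.\ a reasonable cost measure, which is exactly the statement.

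For ingredient (i) I would invoke \cref{thm:high_level_implementation}: for a pure term $\tm$ with $\tm \tobetafireball^n \tm'$ there is $\tmtwo$ with $\tm \toustablen{k} \tmtwo$, $\unfold\tmtwo = \tm'$ and $k \in O(|\tm|\cdot(n^2+1))$, a quadratic and hence polynomial overhead in the step count. I would also record the companion facts established in this section and in \cref{app:usefulcbv_invariant} for such high-level implementations --- that every stable \UOCBV step projects through full unfolding either to nothing (for $\lsvsym$-steps) or to a $\betafireball$-step (for $\dbsym$-steps, whose argument is rigid under stable reduction and thus unfolds to a fireball), and that \UOCBV normal forms unfold to $\tobetafireball$-normal forms --- so the simulation is complete, not merely sound: \UOCBV normalises on $\tm$ exactly when the fireball calculus does, with step counts as above.

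For ingredient (ii) I would feed the resulting \UOCBV run into \cref{thm:low_level_implementation}: a $\toustable$-reduction of length $k$ to normal form on a pure term $\tm$ is realised by the \glamour in $k' \in O(|\tm|\cdot(k+1))$ transitions, ending in a state whose decoding is $\equiv$-equivalent to the normal form; by \cref{structeq_bisimulation} the relation $\equiv$ is a strong bisimulation that preserves normal forms, so it costs nothing in the step count. It then remains to cite from~\cite{AccattoliC15} that the \glamour is implementable on a RAM with \emph{bilinear} overhead: a run of $k'$ transitions from an initial state of size $|\tm|$ costs $O(|\tm|\cdot k')$ RAM steps. Substituting the bound $k \in O(|\tm|\cdot(n^2+1))$ from ingredient (i), the end-to-end simulation of $n$ fireball steps on $\tm$ costs a number of RAM steps polynomial in $|\tm|$ and $n$; conversely each $\betafireball$-step triggers at least one \UOCBV step, so the two measures are polynomially related and the corollary follows.

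The composition itself is routine; the part deserving care is purely organisational, namely reconciling the two distinct parameters --- the length $n$ of the fireball reduction in \cref{thm:high_level_implementation} versus the length $k$ of the \UOCBV reduction in \cref{thm:low_level_implementation} --- and checking that plugging the first bound into the second keeps the total polynomial (it does). The only conceptual subtlety is to make sure the conclusion is not weakened by working with \emph{stable} reduction $\toustable$ rather than plain \UOCBV and by the $\equiv$-wrapper in the low-level statement: by \cref{thm:diamond-property-top-level} and \cref{coro:all_reductions_NF_same_length} all reductions to normal form agree on the number of $\dbsym$ and $\lsvsym$ steps, so complexity measured in stable \UOCBV steps coincides with complexity measured in \UOCBV steps, and \cref{structeq_bisimulation} absorbs the structural equivalences at no cost. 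The genuine difficulty of this section lies upstream, in \cref{thm:high_level_implementation} and \cref{thm:low_level_implementation} (and \cref{glamour_simulation}) themselves, not in the corollary.
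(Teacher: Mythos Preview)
Your proposal is correct and follows the paper's intended approach: the corollary is obtained by composing \cref{thm:high_level_implementation} and \cref{thm:low_level_implementation} with the known bilinear RAM implementation of the \glamour from~\cite{AccattoliC15}; the paper states the corollary without an explicit proof, treating it as an immediate consequence of precisely this chain. One minor remark: the ``companion facts'' you invoke in ingredient (i) --- that stable \UOCBV steps project through full unfolding onto either nothing or a $\betafireball$-step, and that \UOCBV normal forms unfold to fireball normal forms --- are not actually established anywhere in \cref{sec:usefulcbv_invariant} or \cref{app:usefulcbv_invariant}, and they are not needed: the two theorems as stated already close the loop (fireball $\to$ \UOCBV via \cref{thm:high_level_implementation}, \UOCBV $\to$ \glamour via \cref{thm:low_level_implementation}, \glamour $\to$ RAM via~\cite{AccattoliC15}), and the reverse direction (RAM $\to$ $\lambda$-calculus) is the standard folklore encoding you allude to.
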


\section{A Quantitative Interpretation}
\label{sec:typing}

In this section, we present a non-idempotent intersection type system 
called $\typesystem$ (for $\typesystem$seful). 
We show that it is sound (\cref{thm:soundness_typing}) and complete 
(\cref{thm:completeness_typing}) with respect to reduction in \UOCBV.
Therefore, we yield in \cref{coro:quantitative_characterisation_termination} 
that system $\typesystem$ characterises termination in $\UOCBV$: 
a term $\tm$ is \defn{typable} in system $\typesystem$ (or 
\defn{$\typesystem$-typable}) if and only if $\tm$ terminates in \UOCBV.
In this sense, we say that system $\typesystem$ is a \emph{quantitative 
interpretation} of \UOCBV.

\subsection{Defining System $\typesystem$}
\label{sec:type_system_definition}

In quantitative type systems, one key feature is that a single 
occurrence of an expression in the (static) source code of the program 
may be (dynamically) used many times during runtime, each time playing 
a different role, expressed by a different type. 
For example, in some programs an occurrence
of the identity function $\id$ may be applied twice, while in other
programs it may be applied only once.  In each case, the type of the
subexpression $\id$ should change to reflect this quantitative
difference.  Hence \emph{terms do not have a unique type}.  In fact,
as usual in intersection type systems, there is no notion of
\emph{principal type} in $\typesystem$.

To be able to capture quantitative information about evaluation,
the type of each $\lambda$-abstraction
is a \emph{multiset} (rather than a \emph{set}) whose cardinality
corresponds exactly to the number of times that the abstraction is
applied to some argument during the whole
evaluation process.  In general, the type of
an abstraction is a multiset of the form $\mtyp
= \mset{\typ_1,\hdots,\typ_n}$, where each of the $\typ_i$ is
an \emph{arrow type}.  For example, the underlined identity function
in the expression
$(\lam{f}{\var\,(f\,\vartwo)\,(f\,\varthree)})\,\underline{\id}$ takes
part in \emph{two} function applications. Hence, one possible type
assignment for that subexpression is $\vdash \id :
[(\mtyp\to\mtyp),(\mtyptwo\to\mtyptwo)]$, meaning that the identity
function is applied \emph{twice} during evaluation,
once to an argument of type $\mtyp$ and once to an argument
of type $\mtyptwo$. However, some abstractions may never be
applied. For example, the identity function in
the program $\var\esub{\var}{\id}$ does not take part in any function
application.  These abstractions are typed with the \emph{empty} multiset
$\mset{}$.

As studied in \cref{sec:usefulcbv},
all (terminating) terms evaluate to either a \emph{variable}, 
an \emph{abstraction}, or a \emph{structure}.
Abstractions, as well as variables bound to abstractions,
are assigned finite multisets of arrow types, as we just said, called \emph{arrow multi-types}.
Structures, and variables bound to structures, on the other hand,
are always given a distinguished type, just written $\tightN$.

Formally, the types of system $\typesystem$ are given by the 
following grammar:
\[
  \begin{array}{rlcl@{\sep\sep}rlcll}
    \text{\defn{(Arrow Types)}}       & \typ, \typtwo & \eqgram & \optmtyp \to \mtyp 
  \\
    \text{\defn{(Arrow Multi-Types)}} & \nityp, \nityptwo & \eqgram & \mset{\typ_i}_\iI & \text{where $I$ is a finite set}
  \\ 
    \text{\defn{(Types)}}             & \mtyp, \mtyptwo & \eqgram & \tightN \mid \nityp 
  \\
    \text{\defn{(Optional Types)}}    & \optmtyp, \optmtyptwo & \eqgram & \none\ \mid \mtyp
  \end{array}
\] 

We distinguish a set of \defn{tight constants} given by 
$\tight \eqgram \tightN \mid \emset$, where $\tightN$ is assigned to 
terms evaluating to structures, while $\emset$ is assigned to terms 
evaluating to abstractions that are not going to be applied.
Unlike in previous non-idempotent intersection type systems for \CBV, 
we distinguish the \emph{optional} type $\none$ from $\emset$, the 
former meaning that \emph{no typing information} is available, and 
the latter meaning that it is typing an abstraction in a non-applied 
position.

\defn{Typing environments} $\tctx, \tctxtwo, \ldots$ are functions 
mapping variables to optional types, assigning $\none$ to all but 
finitely many variables. 
The \defn{domain} of an environment $\tctx$ is defined as
$\dom\tctx \eqdef \set{\var \mid \tctx(\var) \neq \none}$, and
$\emptyctx$ denotes the empty typing environment, mapping every
variable to $\none$.
\defn{Type assumptions} are denoted $\var : \optmtyp$, meaning that the 
environment assigns $\optmtyp$ to $\var$, and $\none$ to any other variable. 

The \defn{union of arrow multi-types}, written 
$\nityp_1 \niunion \nityp_2$, is a multiset of types defined as
expected, where $\emset$ is the neutral element, and $\niunion$ denotes
the union of multisets. 
The \defn{union of types}, written $\mtyp_1 + \mtyp_2$, is the 
(associative) \emph{partial} operation on types given by
$\tightN + \tightN \eqdef \tightN$ and 
$\nityp_1 + \nityp_2 \eqdef \nityp_1 \niunion \nityp_2$, with all
other cases undefined. 
The \defn{union of optional types} is given by
$\none + \none \eqdef \none$, $\none + \mtyp \eqdef \mtyp$, and
$\mtyp + \none \eqdef \mtyp$, so that $\none$ is the neutral element. 
For typing environments $(\tctx_i)_\iI$, we write $+_\iI \tctx_i$ for 
the environment mapping each variable $\var$ to
$+_\iI \tctx_i(\var)$.
Additionally, $\tctx + \tctxtwo$ and $\tctx+_\jJ \tctxtwo_j$ are 
particular instances of the general notation. 
When $\dom\tctx \cap \dom\tctxtwo = \emptyset$ we write 
$\tctx;\tctxtwo$ instead of $\tctx + \tctxtwo$ to emphasise that
the domains of $\tctx$ and $\tctxtwo$ are disjoint. 
As a consequence, $\tctx;\var:\none$ is identical to $\tctx$.

The \defn{counting} function $\numarrows{\_}$ returns the number of
{\bf t}op-level {\bf a}rrows in a (optional) type, and it is defined 
by $\numarrows\tightN \eqdef 0$, $\numarrows\none \eqdef 0$, and
$\numarrows{\mset{\typ_i}_\iI} \eqdef |I|$.
Note that $\numarrows{\mtyp_1 + \mtyp_2} = \numarrows{\mtyp_1} +
\numarrows{\mtyp_2}$. 

We define a \defn{controlled weakening} relation $\mleq$ between 
optional types and types by declaring that $\none \mleq \tight$ and 
$\mtyp \mleq \mtyp$ hold.

\defn{Typing judgements} in system $\typesystem$ are of the form 
$\judg[\cm,\ce]\tctx\tm\mtyp$, where the usual sequent notation of typing 
judgements is now decorated with natural numbers $\cm$ and $\ce$ 
called \defn{counters}.
Under appropriate conditions, these counters correspond \emph{exactly} 
to the number of $\dbsym$-steps ($\cm$) and $\lsvsym$-steps ($\ce$) 
in \UOCBV that are required to reduce terms to normal form. 
\defn{Typing rules} of system $\typesystem$ are:
\[
  \inferrule{
    n = \numarrows\mtyp
  }{
    \judg[0,n]{\var:\mtyp}\var\mtyp
  }\ruleTypVar
  \HS
  \inferrule{
    \left(
      \judg[\cm_i,\ce_i]{\tctx_i;\var:\optmtyp_i}\tm{\mtyptwo_i}
    \right)_\iI
  }{
    \judg[+_\iI \cm_i, +_\iI \ce_i]
      {+_\iI\tctx_i}{\lam\var\tm}{\mset{\optmtyp_i \to \mtyptwo_i}_\iI
    }
  }\ruleTypAbs
  \HS
  \inferrule{
    \judg[\cm,\ce]\tctx\tm\tightN
    \sep
    \judg[\cm',\ce']\tctxtwo\tmtwo\tight
  }{
    \judg[\cm + \cm', \ce + \ce']{\tctx + \tctxtwo}{\tm \, \tmtwo}\tightN
  }\ruleTypAppP
\]
\[  
  \inferrule{
    \judg[\cm,\ce]\tctx\tm{\mset{\optmtyp \to \mtyptwo}}
    \sep
    \optmtyp \mleq \mtyp
    \sep
    \judg[\cm',\ce']\tctxtwo\tmtwo\mtyp
  }{
    \judg[1+ \cm + \cm', \ce + \ce']{\tctx + \tctxtwo}{\tm \, \tmtwo}\mtyptwo
  }\ruleTypAppC
  \HS
  \inferrule{
    \judg[\cm,\ce]{\tctx;\var:\optmtyp}\tm\mtyptwo
    \HS
    \optmtyp \mleq \mtyp
    \HS
    \judg[\cm',\ce']\tctxtwo\tmtwo\mtyp
  }{
    \judg[\cm + \cm',\ce + \ce']{\tctx + \tctxtwo}{\tm\esub\var\tmtwo}{\mtyptwo}
  }\ruleTypES
\]

A \defn{(type) derivation} is a tree built by applying the previous 
(inductive) typing rules.
We write $\derivs\deriv{\judg[\cm,\ce]\tctx\tm\mtyp}$ to denote a 
type derivation $\deriv$ ending in the typing judgement 
$\judg[\cm,\ce]\tctx\tm\mtyp$.
Sometimes we may write $\derivs{}{\judg[\cm,\ce]\tctx\tm\mtyp}$, 
omitting the name $\deriv$.

Like most non-idempotent intersection type systems, system 
$\typesystem$ is inspired by Linear Logic~\cite{Girard88}, and 
specifically by the relational semantics for \CBV in~\cite{Ehrhard12}.
Indeed, a multiset $\mset{\typ_1,\hdots,\typ_n}$ can be understood as 
a multiplicative conjunction $\typ_1 \otimes \hdots \otimes \typ_n$.
Note also that there are no general \emph{weakening} and \emph{contraction} 
rules, though $\none \mleq \tight$ and $\tight + \tight = \tight$ can
be understood respectively as controlled forms of weakening and contraction.

Rule \ruleTypAbs is a standard rule in non-idempotent intersection 
type systems for \CBV, and reflects the \emph{weak} nature of reduction
in \UOCBV since reduction does not proceed within abstractions.
For example, the derivation $\derivs{}{\judgs\emptyctx{\lam\var\Omega}\emset}$
is valid even if $\Omega$ is non-terminating.

Rule \ruleTypES can be easily derived from rules \ruleTypAbs and \ruleTypAppC. 
Rules for applications follow the \emph{consuming/persistent paradigm}~\cite{KesnerVial20, KesnerViso22} 
and thus deserve some discussion. 
Indeed, system $\typesystem$ tracks different natures of the 
application constructors involved in the evaluation process of a program: 
a term constructor is \defn{consuming} if it is \emph{destroyed} during 
evaluation, whereas a \defn{persistent} constructor \emph{remains} as 
a syntactical subterm of the normal form. 
For example, given the reduction to normal form
$\var ((\lam\vartwo\vartwo) \varthree) \tou \var (\vartwo\esub\vartwo\varthree)$,
the leftmost application constructor is persistent, while the rightmost is consuming. 
Hence, rule \ruleTypAppP (resp. \ruleTypAppC) is used to type persistent
(resp. consuming) applications.
More specifically, \ruleTypAppP requires the left-hand side of the 
application to have type $\tightN$, ensuring that no redex will be created, 
so the (typed) application constructor is persistent. 
Rule \ruleTypAppC requires the left-hand side of the application to type
with an arrow multi-type, ensuring a $\dbsym$-redex is created at some
point during evaluation, and thus the first counter is incremented.

As just explained, system $\typesystem$ uses the consuming/persistent 
paradigm exclusively to type applications.
This approach contrasts with more radical uses for \CBV, such as~\cite{KesnerViso22},
which apply the same paradigm to type \emph{all} constructors, thereby
increasing the number of typing rules and making the typing system more
difficult to use.

We now discuss the use of the controlled weakening relation in the 
typing rules \ruleTypAppC and \ruleTypES.
In \CBV, the arguments of applications and closures are always 
evaluated. Since system $\typesystem$ intends to characterise 
termination in \UOCBV, all arguments must necessarily be typed.
For example, in $(\lam\var\id)\,(\varthree\,\varthree)$ the 
subexpression $\varthree\,\varthree$ is a structure of type $\tightN$,
so one would like to type the body of the abstraction
as $\judgs{\var:\tightN}\id\mtyp$.
But it is impossible to derive this judgement because $\var$ does not 
occur free in $\id$.
This is why we introduce the controlled weakening relation as a premise 
in rules \ruleTypAppC and \ruleTypES.
In our example, one can give a ``stronger'' type to $\lam\var\id$, 
namely $\mset{\emset \to \mtyp}$, using the fact that $\none \mleq \tightN$.
A similar use of controlled weakening for the rule \ruleTypES is shown 
below, where it should be recalled that the empty typing environment 
$\emptyctx$ is identical to $\emptyctx; \var : \none$.

Assuming that $\none \mleq \tightN$,
one typing derivation that illustrates the use of subsumption is the following:
\[
  \inferrule{
    \inferrule*[Right = \ruleTypAbs]{
    }{
      \judg[0,0]{\emptyctx;\var : \none}\id\emset
    }
    \HS
    \none \mleq \tightN
    \HS
    \inferrule*[Right = \ruleTypAppP]{
      \inferrule*[Right = \ruleTypVar]{
      }{
        \judg[0,0]{\varthree : \tightN}\varthree\tightN
      }
      \HS
      \inferrule*[Right = \ruleTypVar]{
      }{
        \judg[0,0]{\varthree : \tightN}\varthree\tightN
      }
    }{
      \judg[0,0]{\varthree : \tightN}{\varthree \, \varthree}\tightN
    }
  }{
    \judg[0,0]
      {\varthree : \tightN}
      {\id\esub\var{\varthree \, \varthree}}
      \tightN
  }\ruleTypES
\]

We end this section by mentioning some basic properties and notions
of the type system $\typesystem$.

\paragraph{\bf Relevance.}
The fact that axioms in system $\typesystem$ are not weakened and that
rules are \emph{multiplicative} yields that system $\typesystem$ is
\emph{relevant}. Formally:

\begin{restatable}[Relevance]{lemma}{relevance}
\label{lem:relevance}
If $\derivs\deriv{\judg[\cm,\ce]\tctx\tm\mtyp}$,
then $\rv\tm \subseteq \dom\tctx \subseteq \fv\tm$.
\end{restatable}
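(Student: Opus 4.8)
The plan is to prove both inclusions simultaneously by structural induction on the type derivation $\deriv$, analysing the last rule applied. The statement is preserved by every rule essentially because system $\typesystem$ has no free-standing weakening: the only way a variable enters a typing environment is through an axiom \ruleTypVar, which introduces exactly the assumption $\var:\mtyp$, and environments are afterwards only combined by the $+$ operation, whose domain is the union of the domains of its arguments. This last fact uses that $\none$ is the neutral element of the union of optional types, and that $+$ is only ever applied when it is defined — which holds here because we are dissecting an existing derivation. The second ingredient is the reading of the semicolon notation: $\tctx;\var:\optmtyp$ presupposes $\var\notin\dom\tctx$, so that $\dom{\tctx;\var:\optmtyp}\setminus\set\var = \dom\tctx$, and $\tctx;\var:\none$ is literally $\tctx$.

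The base case \ruleTypVar is immediate: the environment is exactly $\var:\mtyp$ with $\mtyp$ a genuine (non-$\none$) type, so its domain is $\set\var = \rv\var = \fv\var$. For the application rules \ruleTypAppP and \ruleTypAppC, whose conclusion types $\tm\,\tmtwo$ in $\tctx+\tctxtwo$, I would just combine the two induction hypotheses using $\rv{\tm\,\tmtwo}=\rv\tm\cup\rv\tmtwo$, $\fv{\tm\,\tmtwo}=\fv\tm\cup\fv\tmtwo$ and $\dom{\tctx+\tctxtwo}=\dom\tctx\cup\dom\tctxtwo$; the side condition $\optmtyp\mleq\mtyp$ is irrelevant here. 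The two binder cases are where the bookkeeping lives. For \ruleTypAbs, since $\rv{\lam\var\tm}=\emptyset$ the lower inclusion is free, and for the upper one the i.h. gives $\dom{\tctx_i;\var:\optmtyp_i}\subseteq\fv\tm$ for each premise, whence $\dom\tctx_i\subseteq\fv\tm\setminus\set\var=\fv{\lam\var\tm}$ using $\var\notin\dom\tctx_i$, and one takes the union over $\iI$ (the case $I=\emptyset$ being vacuous). For \ruleTypES, typing $\tm\esub\var\tmtwo$ in $\tctx+\tctxtwo$ with $\tm$ typed under $\tctx;\var:\optmtyp$: from $\rv\tm\subseteq\dom{\tctx;\var:\optmtyp}$ I get $\rv\tm\setminus\set\var\subseteq\dom\tctx$, and from $\dom{\tctx;\var:\optmtyp}\subseteq\fv\tm$ I get $\dom\tctx\subseteq\fv\tm\setminus\set\var$; combining with the corresponding facts about $\tmtwo$ and the equations $\rv{\tm\esub\var\tmtwo}=(\rv\tm\setminus\set\var)\cup\rv\tmtwo$, $\fv{\tm\esub\var\tmtwo}=(\fv\tm\setminus\set\var)\cup\fv\tmtwo$ closes the case.

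This lemma is routine and I do not expect a genuine obstacle; the only delicate points are the bookkeeping ones mentioned above — correctly discarding the bound variable from the reachable-variable set, the free-variable set, and the domain in the binder cases, which hinges on the presupposition built into the semicolon notation — and justifying that $\dom{\cdot}$ distributes over $+$, which needs the sum of environments appearing in each rule to be defined, granted since we start from an actual derivation.
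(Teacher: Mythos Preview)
Your proposal is correct and follows essentially the same approach as the paper's proof: a straightforward structural induction on the typing derivation, combining the induction hypotheses via $\dom{\tctx+\tctxtwo}=\dom\tctx\cup\dom\tctxtwo$ in the multiplicative cases and discharging the bound variable in the binder cases using the disjointness built into the semicolon notation. The paper's proof is in fact terser than yours, treating \ruleTypES\ and \ruleTypAppC\ as ``analogous to \ruleTypAppP'' without spelling out the bound-variable bookkeeping you describe.
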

Note that $\fv\tm \subseteq \dom\tctx$ does not always hold. 
For instance, $\derivs{}{\judgs\emptyctx{\lam\var\vartwo}\emset}$ but 
$\set\vartwo \nsubseteq \emptyset$.

\paragraph{\bf Appropriateness.}
To reason inductively about typing derivations, it is sometimes 
necessary to ensure invariants for the types of the free variables in 
a term. 
For example, for a term like $\tm\esub\var\id$, we may need to track 
that $\var$ is bound to an abstraction when applying the inductive 
hypothesis for the subterm $\tm$.
Specifically, we say that a typing environment $\tctx$ is 
\defn{appropriate} with respect to an \abstractionframe $\aset$,
written $\isAppr\aset\tctx$, if for each $\var \in \aset$,
we have $\tctx(\var) \neq \tightN$.
In other words, $\tctx(\var)$ must be $\none$ or an arrow multi-type $\nityp$.
For instance, $\isAppr{\set\var}{\var : \emset, \vartwo : \tightN}$ 
holds, while $\isAppr{\set\var}{\var : \tightN, \vartwo : \tightN}$ 
does not.

\paragraph{\bf Tightness.}
Typing in system $\typesystem$ not only characterises termination in 
\UOCBV, but it is also capable of providing \emph{exact measures} for 
the length of reduction sequences to normal form.
To state this formally, we define a subset of typing derivations 
called \defn{tight derivations}, as follows.
A type is \defn{tight} if it is a tight constant $\tight$. 
An optional type $\optmtyp$ is \defn{tight} if it is either $\none$ or 
a tight type.
A typing environment $\tctx$ is \defn{tight} if $\tctx(\var)$ is 
tight for every variable $\var$.
A typing judgement $\judg[\cm,\ce]\tctx\tm\mtyp$ is \defn{tight} if 
both $\tctx$ and $\mtyp$ are tight.
A derivation $\deriv$ of a typing judgement is \defn{tight} if the 
judgement $\deriv$ derives is tight.
For example, the derivable judgement $\judg[0,0]{\var:\emset}\var\emset$ 
is tight: the counters $(0,0)$ indicate that the evaluating $\var$ 
requires no function application or substitution steps, meaning that 
$\var$ is already in normal form. 
In contrast, the derivable judgement
$\judg[0,1]{\var:\mset{\mtyp \to \mtyptwo}}\var{\mset{\mtyp \to \mtyptwo}}$
is \emph{not} tight, and the counters $(0,1)$ provide an upper bound 
on the number of steps required for evaluating $\var$.

\begin{example}
\label{ex:derivation-for-tight}
Let
$\tm = (\var \, \varthree)\esub{\var}{\vartwo\esub{\vartwo}{\id}}$.
Taking $\mtyp \eqdef \mset{\tightN \to \tightN}$,
the following typing derivation $\deriv$ turns out to be tight:
\[
  \indruleN{\ruleTypES}{
    \inferrule{
      \inferrule{
      }{
        \judg[0,1]{\var : \mtyp}\var\mtyp
      }\ruleTypVar
      \HS
      \inferrule{
      }{
        \judg[0,0]{\varthree : \tightN}\varthree\tightN
      }\ruleTypVar
    }{
      \judg[1,1]
        {\varthree : \tightN, \var : \mtyp}{\var \, \varthree}\tightN
    }\ruleTypAppC
    \HS
    \inferrule{
      \inferrule{
      }{
        \judg[0,1]
          {\vartwo : \mtyp}\vartwo\mtyp
      }\ruleTypVar
      \HS
      \inferrule{
        \inferrule{
        }{
          \judg[0,0]{\varfour : \tightN}\varfour\tightN
        }\ruleTypVar
      }{
        \judg[0,0]\emptyctx\id\mtyp
      }\ruleTypAbs
    }{
      \judg[0,1]\emptyctx{\vartwo\esub\vartwo\id}\mtyp
    }\ruleTypES
  }{
    \judg[1,2]
      {\varthree : \tightN}
      {(\var \, \varthree)\esub\var{\vartwo\esub\vartwo\id}}
      \tightN
  }
\]
where $\tightN \mleq \tightN$ and 
$\mtyp = \mset{\tightN \to \tightN} \mleq \mset{\tightN \to \tightN} = \mtyp$
hold.
Note that all the other subderivations of $\deriv$ are not tight, 
except for the derivation corresponding to the judgement
$\judg[0,1]{\varthree : \tightN}{\varthree}{\tightN}$.
\end{example}

Tightness is essential for proving that the system $\typesystem$ is 
sound and complete with respect to \UOCBV (\cref{thm:soundness_typing,thm:completeness_typing}).

  \subsection{Soundness and Completeness of System $\typesystem$}
  \label{sec:soundness_and_completeness}
  
We first address the \defn{soundness} of system $\typesystem$ with 
respect to \UOCBV, \ie, typability in system $\typesystem$ implies 
\UOCBV-termination.
Additionally, our soundness result provides exact quantitative 
information about the reduction steps to normal form in \UOCBV.
More precisely, we show that when a term $\tm$ is tightly typable in
system $\typesystem$ with counters $(\cm,\ce)$, then $\tm$ evaluates 
in \UOCBV to a normal form $\tmtwo$ in exactly $\cm$ $\dbsym$-steps 
and exactly $\ce$ $\lsvsym$-steps.

The proof of soundness follows well-understood techniques~\cite{BucciarelliKV17}: 
it requires a subject reduction property, based in turn on a 
substitution lemma (omitted here; see \cref{app:typing}). 
Due to the \emph{contextual} nature of \UOCBV, these lemmas are not 
trivial, as they require formulating complex invariants on the 
contextual parameters $\aset$, $\sset$, and $\appflag$ that define 
\UOCBV.

We begin by showing that any \UOCBV step preserves typing and 
correctly decrements the counters.
\begin{restatable}[Subject Reduction]{proposition}{subjectreduction}
\label{prop:subject_reduction}
Let $\tm \tov\rulename\aset\sset\appflag \tm'$, where 
$\rulename \in \set{\ruledb,\rulelsv}$
and $\derivs\deriv{\judg[\cm,\ce]\tctx\tm\mtyp}$, where 
$\isAppr\aset\tctx$.
Suppose moreover that if $\appflag = \app$, then either 
$\mtyp = \tightN$ or $\mtyp$ is a singleton, \ie, of the form $\mset\typ$.
Then, there exists a typing derivation $\deriv'$ such that
$\derivs{\deriv'}{\judg[\cm',\ce']\tctx{\tm'}\mtyp}$,
where if $\rulename = \ruledb$, then $\cm > 0$ and $(\cm',\ce') = (\cm-1,\ce)$,
and if $\rulename = \rulelsv$, then $\ce > 0$ and $(\cm',\ce') = (\cm,\ce-1)$.
\end{restatable}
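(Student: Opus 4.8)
The plan is to prove \cref{prop:subject_reduction} by induction on the derivation that $\tm \tov\rulename\aset\sset\appflag \tm'$, simultaneously strengthening the statement to cover also steps of kind $\rulesub\var\val$: for such a step the typing is not decremented but reshaped by \emph{replacing a single leaf of the derivation} (the rewritten occurrence of $\var$) by a derivation of $\val$, and this strengthened clause is exactly the \emph{linear substitution lemma} needed for rule \ruleULsv. Before the induction I would record a few routine facts. (i)~Appropriateness descends to subderivations: if $\tctx_1 + \tctx_2$ is appropriate for $\aset$, so are $\tctx_1$ and $\tctx_2$, since $\tightN + \mtyp$ is defined only when $\mtyp = \tightN$. (ii)~A \emph{substitution-context splitting} lemma turning a derivation of $\judgs\tctx{\tmtwo\sctx}\mtyp$ into one of $\judgs{\tctx'}\tmtwo\mtyp$ under an extended environment and back, without altering the total counters. (iii)~A \emph{shape lemma}: under an environment appropriate for $\aset$, a term in $\HAbs\aset$ can only be assigned an arrow multi-type, never $\tightN$. (iv)~A \emph{splitting} lemma decomposing the derivation of a value typed with a union $\nityp_1 \niunion \nityp_2$ into a $\nityp_1$-slice and a $\nityp_2$-slice whose counters and environments add up.

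For the base cases: in \ruleUDb, $\tm = (\lam\var\tmthree)\sctx\,\tmtwo$, $\tm' = \tmthree\esub\var\tmtwo\sctx$, $\rulename = \ruledb$. Inverting \ruleTypES (repeatedly) and \ruleTypAbs shows that the left subterm $(\lam\var\tmthree)\sctx$, an abstraction under a substitution context, can only be assigned an arrow multi-type, so the application is typed by \ruleTypAppC rather than \ruleTypAppP, which already gives $\cm > 0$; then (ii), applied to the left premise, lets me pull the explicit substitutions of $\sctx$ outward and recombine them, together with the new explicit substitution $\esub\var\tmtwo$, via \ruleTypES, yielding a derivation of $\tmthree\esub\var\tmtwo\sctx$ with the same environment and counters $(\cm-1,\ce)$. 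In \ruleUSub, $\tm = \var \tov{\rulesub\var\val}{\aset\cup\set\var}\sset\app \val$: inverting \ruleTypVar and using the flag hypothesis together with appropriateness (and $\var \in \aset\cup\set\var$) forces the type of $\var$ to be a singleton arrow $\mset\typ$ with counters $(0,1)$, and the conclusion follows by exhibiting the given derivation of $\val$ at type $\mset\typ$ --- this is the leaf-replacement data that the \ruleULsv case will consume. In \ruleULsv, $\tm = \tmfour\esub\var{\val\sctx} \tov\rulelsv\aset\sset\appflag \tmfour'\esub\var\val\sctx = \tm'$ with internal step $\tmfour \tov{\rulesub\var\val}{\aset\cup\set\var}\sset\appflag \tmfour'$; I apply the strengthened induction hypothesis to this internal step to replace the (singleton-arrow-typed) rewritten occurrence in $\tmfour$'s derivation by the $\mset\typ$-slice of the derivation of $\val\sctx$ (extracted with (ii) and (iv)), then reassemble through \ruleTypES with the remaining slice as the new explicit-substitution argument and push $\sctx$ back outward as prescribed by \ruleULsv, obtaining a derivation of $\tm'$ with the same environment and counters $(\cm,\ce-1)$; in particular $\ce > 0$.

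The congruence cases \ruleUAppL, \ruleUAppR, \ruleUEsR, \ruleUEsLAbs, and \ruleUEsLStruct all follow one recipe: invert the matching typing rule, apply the induction hypothesis to the unique reducing premise, and reassemble with the same rule and the correspondingly adjusted counters --- the positivity of the relevant counter propagates from the base case since congruence rules only add to counters. The side conditions of each reduction rule provide exactly the hypotheses of the induction hypothesis: appropriateness descends by (i); in \ruleUAppL the premise carries flag $\app$, which is harmless because inverting \ruleTypAppP (resp.\ \ruleTypAppC) forces the type of the left subterm to be $\tightN$ (resp.\ a singleton $\mset\typ$); in \ruleUEsR the premise carries flag $\nonapp$, so there is no flag obligation; and in \ruleUEsLAbs, where the bound variable $\var$ joins $\aset$ for the premise, appropriateness of the enlarged environment requires the explicit-substitution argument not to be typed with $\tightN$, which is exactly (iii) applied to $\tmtwo \in \HAbs\aset$ through the premise $\optmtyp \mleq \mtyp_{\tmtwo}$ of \ruleTypES (in \ruleUEsLStruct only $\sset$ grows, so there is nothing extra to check, $\var$ not being in $\aset$); the flag hypothesis transfers unchanged through these ES-congruences because a closure and its body receive the same type. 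I expect the real difficulty to lie in the linear substitution lemma with its exact counter accounting: one must identify which slot of the multi-type bound by the explicit substitution is consumed by the rewritten occurrence, argue it is a singleton arrow (hence worth exactly one $\lsvsym$-step by \ruleTypVar), and check that the reorganisation of the surrounding substitution context in \ruleULsv preserves both the environment and the total counters --- all while carrying the contextual invariants $\aset$, $\sset$, $\appflag$ and the appropriateness condition consistently through every inductive case.
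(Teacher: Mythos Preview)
Your proposal is correct and follows essentially the same approach as the paper. The paper proves a separate Substitution Lemma (for $\rulesub\var\val$-steps, stating exactly the ``leaf-replacement'' property you describe) by its own induction, and then invokes it in the \ruleULsv case of Subject Reduction; you instead fold both into a single strengthened induction, which works because the congruence rules preserve the step kind and the only cross-call is \ruleULsv using the $\rulesub{}{}$-clause on a strictly smaller derivation. Your auxiliary facts (i)--(iv) are precisely the paper's Remark on appropriateness, the Composition/Decomposition Lemma, the Types-of-Hereditary-Abstractions Lemma, and the Splitting/Merging Lemma, and your case analysis matches the paper's.
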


We can now state the result of quantitative soundness: a tight 
derivation of a term $\tm$ with counters $(\cm,\ce)$ guarantees 
weakly terminating in \UOCBV, with exactly $\cm$ $\dbsym$-steps 
and $\ce$ $\lsvsym$-steps.

\begin{restatable}[Quantitative Soundness of System $\typesystem$]{theorem}{soundnesstyping}
\label{thm:soundness_typing}
Let $\sset = \fv\tm$, and
let $\derivs\deriv{\judg[\cm,\ce]\tctx\tm\mtyp}$ be a tight 
derivation in system $\typesystem$.
Then, there exists a $\tovTop\sset$-irreducible term $\tmtwo$ such 
that $\tm \tovTop\sset^{\cm + \ce} \tmtwo$, where $\cm$ and $\ce$
are respectively the number of $\dbsym$-steps and $\lsvsym$-steps
in the reduction sequence.
\end{restatable}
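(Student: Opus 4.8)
The plan is to argue by induction on the sum $\cm + \ce$, with Subject Reduction (\cref{prop:subject_reduction}) as the inductive engine and a companion lemma on \emph{tight typings of normal forms} for the leaves. Since $\inv\emptyset{\fv\tm}\tm$ always holds, the top-level parameters $\aset \eqdef \emptyset$, $\sset \eqdef \fv\tm$, $\appflag \eqdef \nonapp$ satisfy the correctness property; moreover $\dbsym$- and $\lsvsym$-steps are non-erasing, so reducing along $\tovTop\sset$ never alters $\fv\tm$ and these parameters stay fixed all along. I would case on whether $\tm$ is $\tovTop\sset$-irreducible. If it is, the companion lemma forces $\cm = \ce = 0$, so $\tmtwo \eqdef \tm$ works with the empty reduction sequence, which performs $0 = \cm$ $\dbsym$-steps and $0 = \ce$ $\lsvsym$-steps. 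Otherwise $\tm \tovTop\sset \tm'$: fire that step, apply Subject Reduction, and invoke the inductive hypothesis.

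In the reducible case, the step $\tm \tovTop\sset \tm'$ is by definition a step $\tm \tov\rulename\emptyset\sset\nonapp \tm'$ with $\rulename \in \set{\ruledb,\rulelsv}$. As $\aset = \emptyset$, the hypothesis $\isAppr\emptyset\tctx$ of \cref{prop:subject_reduction} is vacuous, and as $\appflag = \nonapp$ its extra hypothesis (the one constraining $\mtyp$ when $\appflag = \app$) is vacuous as well. Hence Subject Reduction yields $\derivs{\deriv'}{\judg[\cm',\ce']\tctx{\tm'}\mtyp}$, still tight since $\tctx$ and $\mtyp$ are unchanged, with $\cm > 0$ and $(\cm',\ce') = (\cm-1,\ce)$ if $\rulename = \ruledb$, and $\ce > 0$ and $(\cm',\ce') = (\cm,\ce-1)$ if $\rulename = \rulelsv$; in either case $\cm'+\ce' < \cm+\ce$. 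Also $\fv{\tm'} = \fv\tm = \sset$, so $\inv\emptyset\sset{\tm'}$ holds and the inductive hypothesis applies to $\tm'$, giving a $\tovTop\sset$-irreducible $\tmtwo$ with $\tm' \tovTop\sset^{\cm'+\ce'} \tmtwo$ performing exactly $\cm'$ $\dbsym$-steps and $\ce'$ $\lsvsym$-steps. Prepending the step $\tm \tovTop\sset \tm'$ restores the counts to $\cm$ and $\ce$.

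It remains to prove the companion lemma, which for the structural induction must be stated at arbitrary parameters and strengthened on applied positions: if $\derivs\deriv{\judg[\cm,\ce]\tctx\tm\mtyp}$ with $\tctx$ tight, $\isAppr\aset\tctx$, $\inv\aset\sset\tm$, and $\tm$ in the inductive set $\NF\aset\sset\appflag$ of \cref{thm:uocbv_characterization_of_normal_forms}, then $\cm = \ce = 0$, and moreover $\mtyp = \tightN$ when $\appflag = \app$; the top-level claim is the instance $\aset = \emptyset$, $\sset = \fv\tm$, $\appflag = \nonapp$. I would prove it by induction on $\tm$ following \cref{thm:uocbv_characterization_of_normal_forms}. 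For a variable, rule \ruleTypVar forces $\mtyp$ to be a tight constant, so $\numarrows\mtyp = 0$; for an abstraction, tightness of $\mtyp$ forces it to be $\emset$, so both counters are empty sums (consistent with $\lam\var\tm \in \NF\aset\sset\nonapp$); for a closure in which $\var$ is bound to a hereditary abstraction (resp.\ a structure), a side lemma relating that membership of the ES argument to the shape of its type keeps appropriateness when $\var$ enters the abstraction frame, and the two subderivations feeding rule \ruleTypES remain tight, so the inductive hypothesis applies to both immediate subterms.

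The hard part will be the application case $\tm_1\,\tm_2$ in normal form: by \cref{thm:uocbv_characterization_of_normal_forms}, $\tm_1 \in \NF\aset\sset\app$, so the strengthened inductive hypothesis gives that $\tm_1$ has type $\tightN$. This excludes rule \ruleTypAppC, which would require $\tm_1$ to carry a singleton arrow multi-type and would contribute at least one $\dbsym$-step, forcing instead rule \ruleTypAppP and hence $\cm = \ce = 0$. Maintaining the invariant that a normal form in applied position has type $\tightN$ through every case is the technical crux; it mirrors exactly the side-condition ``$\mtyp = \tightN$ or $\mtyp$ a singleton when $\appflag = \app$'' already present in \cref{prop:subject_reduction}. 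Finally, note that the argument is insensitive to the choice of redex in the reducible case, since Subject Reduction pins down how each step kind decrements the counters; so the final step-counts come out $\cm$ and $\ce$ regardless of the reduction order, and in particular the proof uses neither the Diamond Property (\cref{thm:diamond-property-top-level}) nor \cref{coro:all_reductions_NF_same_length}, which only make explicit, a posteriori, that $\cm + \ce$ is the length of every reduction of $\tm$ to normal form.
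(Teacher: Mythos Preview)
Your overall skeleton---induction on $\cm+\ce$, Subject Reduction for the inductive step, and a companion lemma showing that tight typings of normal forms have zero counters---matches the paper. The gap is in the companion lemma.

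As literally stated it omits ``$\mtyp$ tight'' from the hypotheses, yet you tacitly use it in the abstraction case (``tightness of $\mtyp$ forces it to be $\emset$''); this is easily repaired. The real issue is your strengthened \emph{conclusion} ``moreover $\mtyp = \tightN$ when $\appflag = \app$'', which you intend to invoke on $\tm_1$ to exclude \ruleTypAppC. That conclusion is false: take $\aset = \emptyset$, $\sset = \set\var$, $\appflag = \app$, and the derivation $\judg[0,0]{\var:\emset}{\var}{\emset}$. All your hypotheses hold ($\var:\emset$ is tight, $\isAppr\emptyset{\var:\emset}$ is vacuous, $\inv\emptyset{\set\var}\var$ holds, and $\var \in \NF\emptyset{\set\var}\app$ by \ruleUNFVar), but $\mtyp = \emset \neq \tightN$. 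So the variable base case cannot establish the ``moreover'' clause, and your induction collapses at precisely the point where you need it.

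The paper's \cref{lem:normal_forms_zero_counter} sidesteps this by taking ``if $\appflag = \app$ then the type is $\tightN$'' as a \emph{hypothesis} rather than a conclusion (vacuously met at top level since $\appflag = \nonapp$, and preserved because the only place a subterm acquires flag $\app$ is the left premise of \ruleUNFApp, where \ruleTypAppP already gives type $\tightN$). Excluding \ruleTypAppC then requires a separate argument, not the inductive hypothesis: by \cref{lem:nf_in_HAbs_or_Struct}, $\tm_1 \in \NF\aset\sset\app$ implies $\tm_1 \in \Struct\sset$, and by \cref{lem:tight_spreading_struct} a structure under an $\sset$-tight context has tight type, contradicting the singleton $\mset{\optmtyp\to\mtyptwo}$ that \ruleTypAppC would assign to $\tm_1$.
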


To illustrate this property, consider the tight derivation $\deriv$ 
for $\tm = (\var \, \varthree)\esub\var{\vartwo\esub\vartwo\id}$ in 
\cref{ex:derivation-for-tight}, where the counter in the conclusion 
of $\deriv$ is $(1,2)$. A reduction sequence from $\tm$ terminates in 
$\var_1\esub{\var_1}\varthree\esub\var\vartwo\esub\vartwo\id \in \NF\emptyset{\set\varthree}\nonapp$
(see \cref{ex:reduction-for-tight}). 
The first counter, 1, matches the number of $\dbsym$-steps in the 
reduction sequence, while the second counter, 2, corresponds to the 
number of $\lsvsym$-steps.

Note that non-tight derivations may not provide exact information 
about the length of reduction sequences to normal form. 
For example, the subderivation 
$\derivs{}{\judg[1,1]{\varthree : \tightN; \var : \mtyp}{\var \, \varthree}\tightN}$ 
in \cref{ex:derivation-for-tight} is not tight, as $\mtyp$ is not 
tight, and the counters $(1,1)$ do not correspond to the number of 
steps needed to reach the normal form of $\var \, \varthree$ which is 
already in normal form.

We now address the \defn{completeness} of system $\typesystem$
with respect to \UOCBV: termination in \UOCBV implies 
$\typesystem$-typability.
Additionally, our result provides exact quantitative information 
about the reduction steps to normal form in \UOCBV.

Completeness of quantitative type systems can also be proved by 
following standard techniques, requiring a subject expansion property 
based in turn on an anti-substitution lemma (omitted here; see~\cref{app:typing}).
These statements are quite technical because the properties must be 
generalised appropriately to reason inductively.

First, we guarantee that normal forms are tightly typable in system 
$\typesystem$. 
For that, we construct tight environments for each normal form
$\tm \in \NF\aset\sset\appflag$, by typing the \emph{reachable} 
variables in $\aset$ with $\emset$ and those in $\sset$ with $\tightN$. 
More precisely, given $\tm$, $\aset$, and $\sset$ such that
$\inv\aset\sset\tm$, the \defn{tight environment} for $\tm$ under 
$(\aset,\sset)$ is written $\TEnv\aset\sset\tm$ and defined as the 
environment $\tctx$ such that $\tctx(\var) = \emset$ when 
$\var \in \aset \cap \rv\tm$, $\tctx(\var) = \tightN$ when 
$\var \in \sset \cap \rv\tm$, and $\tctx(\var) = \none$ otherwise.

Tight environments allow us to type normal forms \emph{tightly}:
\begin{restatable}[Normal Forms are Tight Typable]{proposition}{nftighttypable}
\label{prop:nfs_are_tight_typable}
Let $\tm$ be a term in $\NF\aset\sset\appflag$ and $\inv\aset\sset\tm$.
Then, there exists a tight type $\tight$ such that
$\derivs\deriv{\judg[0,0]{\TEnv\aset\sset\tm}\tm\tight}$.
Moreover, if $\tm \in \HAbs\aset$ then $\tight = \emset$, and if 
$\tm \in \Struct\sset$, then $\tight = \tightN$.
\end{restatable}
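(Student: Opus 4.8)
The plan is to proceed by induction on the derivation of $\tm \in \NF\aset\sset\appflag$ given by \cref{thm:uocbv_characterization_of_normal_forms}, establishing simultaneously the main typing claim and the two ``moreover'' clauses. These clauses are mutually consistent because a normal form satisfying the correctness property lies in \emph{exactly one} of $\HAbs\aset$, $\Struct\sset$: this is \cref{lem:nf_in_HAbs_or_Struct} together with the disjointness of these two classes for such terms, and with the auxiliary fact that a normal form in \emph{applied} position, i.e.\ in $\NF\aset\sset\app$, is never a hereditary abstraction (hence is a structure). The derivation I build uses only the rules \ruleTypVar (with zero arrows), \ruleTypAbs (with empty index set), \ruleTypAppP and \ruleTypES, all of which keep the counters at $(0,0)$; rule \ruleTypAppC is never used, matching the absence of $\dbsym$-redexes in normal forms. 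The remaining ingredients, all routine, are: (i) by $\alpha$-conversion the bound variable of any explicit substitution may be chosen outside $\aset\cup\sset$; (ii) inversion of the rules defining $\HAbs{}$ and $\Struct{}$ on closures, e.g.\ if $\tmthree\in\HAbs\aset$ (so $\tmthree\notin\Struct\sset$) and $\var\notin\aset\cup\sset$ then $\tmtwo\esub\var\tmthree\in\HAbs\aset \iff \tmtwo\in\HAbs{\aset\cup\set\var}$ and $\tmtwo\esub\var\tmthree\in\Struct\sset \iff \tmtwo\in\Struct\sset$; (iii) compatibility of tight environments with the syntax, namely $\TEnv\aset\sset{\tm\,\tmtwo} = \TEnv\aset\sset\tm + \TEnv\aset\sset\tmtwo$, and, for a closure, $\TEnv\aset\sset{\tmtwo\esub\var\tmthree}$ is the union of $\TEnv\aset\sset\tmthree$ with the restriction to variables other than $\var$ of $\TEnv{\aset'}{\sset'}\tmtwo$, where $(\aset',\sset')$ is the frame updated as in the matching normal-form rule (these unions being defined thanks to $\aset\cap\sset=\emptyset$); and (iv) the immediate facts $\none\mleq\emset$, $\none\mleq\tightN$, $\emset\mleq\emset$, $\tightN\mleq\tightN$, which hold because $\emset$ and $\tightN$ are tight constants.

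For \ruleUNFVar, $\tm = \var$ and the correctness property places $\var$ in exactly one of $\aset$, $\sset$: if $\var\in\aset$ then $\TEnv\aset\sset\var$ assigns $\emset$ to $\var$, rule \ruleTypVar gives $\judg[0,0]{\var:\emset}\var\emset$, and $\var\in\HAbs\aset$ (by \ruleHAbsVar), $\var\notin\Struct\sset$, so $\tight=\emset$ as required; the case $\var\in\sset$ is symmetric with type $\tightN$ and $\var\in\Struct\sset$. For \ruleUNFLam, $\rv{\lam\var\tmtwo}=\emptyset$, so $\TEnv\aset\sset{\lam\var\tmtwo}=\emptyctx$ and rule \ruleTypAbs with empty index set gives $\judg[0,0]\emptyctx{\lam\var\tmtwo}\emset$; since $\lam\var\tmtwo\in\HAbs\aset$ and is not a structure (\cref{rem:habs_st}), $\tight=\emset$. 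For \ruleUNFApp, $\tm=\tmtwo\,\tmthree$ with $\tmtwo\in\NF\aset\sset\app$, so $\tmtwo\in\Struct\sset$ by the auxiliary fact above; the inductive hypothesis on $\tmtwo$ yields $\judg[0,0]{\TEnv\aset\sset\tmtwo}\tmtwo\tightN$ and on $\tmthree$ it yields $\judg[0,0]{\TEnv\aset\sset\tmthree}\tmthree{\tight'}$ for a tight constant $\tight'$; rule \ruleTypAppP then derives $\judg[0,0]{\TEnv\aset\sset\tmtwo + \TEnv\aset\sset\tmthree}{\tmtwo\,\tmthree}\tightN$, and by (iii) the environment is $\TEnv\aset\sset{\tmtwo\,\tmthree}$. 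Since $\tmtwo\,\tmthree\in\Struct\sset$ and is not a hereditary abstraction (\cref{rem:habs_st}), both ``moreover'' clauses hold with $\tight=\tightN$.

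The two closure cases are parallel. Take \ruleUNFEsAbs, so $\tm=\tmtwo\esub\var\tmthree$ with $\tmtwo\in\NF{\aset\cup\set\var}\sset\appflag$, $\tmthree\in\NF\aset\sset\nonapp$ and $\tmthree\in\HAbs\aset$, and by (i) $\var\notin\aset\cup\sset$. The inductive hypothesis on $\tmthree$ gives $\judg[0,0]{\TEnv\aset\sset\tmthree}\tmthree\emset$ (the type is $\emset$ because $\tmthree\in\HAbs\aset$), and the inductive hypothesis on $\tmtwo$ — whose correctness property $\inv{\aset\cup\set\var}\sset\tmtwo$ follows from that of $\tm$ and $\var\notin\sset$ — gives $\judg[0,0]{\TEnv{\aset\cup\set\var}\sset\tmtwo}\tmtwo{\tight'}$ for a tight constant $\tight'$. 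Setting $\optmtyp \eqdef \TEnv{\aset\cup\set\var}\sset\tmtwo(\var)$, the definition of the tight environment forces $\optmtyp\in\set{\none,\emset}$ (as $\var\in\aset\cup\set\var$ and $\var\notin\sset$), hence $\optmtyp\mleq\emset$ by (iv). Applying \ruleTypES to the left premise $\judg[0,0]{\tctx;\var:\optmtyp}\tmtwo{\tight'}$, with $\tctx$ the restriction of $\TEnv{\aset\cup\set\var}\sset\tmtwo$ to the variables other than $\var$, and to the right premise for $\tmthree$, we get $\judg[0,0]{\tctx + \TEnv\aset\sset\tmthree}{\tmtwo\esub\var\tmthree}{\tight'}$, whose environment equals $\TEnv\aset\sset{\tmtwo\esub\var\tmthree}$ by (iii). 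By (ii), $\tm\in\HAbs\aset \iff \tmtwo\in\HAbs{\aset\cup\set\var}$ and $\tm\in\Struct\sset \iff \tmtwo\in\Struct\sset$, so the inductive ``moreover'' clauses for $\tmtwo$ transfer to $\tm$. The case \ruleUNFEsStruct is symmetric: $\tmthree$ receives type $\tightN$, the body is typed under $\TEnv\aset{\sset\cup\set\var}\tmtwo$, and now $\optmtyp\in\set{\none,\tightN}$ so $\optmtyp\mleq\tightN$.

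The hard part is the closure cases, where three ingredients must be kept in sync: matching the environment produced by \ruleTypES against $\TEnv\aset\sset{\tmtwo\esub\var\tmthree}$ (ingredient (iii)); discharging the controlled-weakening premise for the bound variable, using that a tight environment only assigns $\none$, $\emset$, or $\tightN$; and propagating the ``moreover'' clauses, which rests on the inversion of the $\HAbs{}$ and $\Struct{}$ closure rules, and hence on the fact that $\HAbs\aset$ and $\Struct\sset$ are disjoint on normal forms under the correctness property — itself part of the same simultaneous induction. The generalisation to an arbitrary positional flag $\appflag$ (not just $\nonapp$) poses no extra difficulty, since the derivation built is independent of $\appflag$; $\appflag$ intervenes only through the auxiliary fact that the head of an applied normal form is a structure.
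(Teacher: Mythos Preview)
Your proof is correct and follows what is essentially the only natural route: induction on the derivation of $\tm \in \NF\aset\sset\appflag$, simultaneously establishing the tight typing with counters $(0,0)$ and the two ``moreover'' clauses, using \cref{lem:nf_in_HAbs_or_Struct} and \cref{lem:disjunction} to control the hereditary-abstraction/structure dichotomy. The paper's proof (not displayed but referenced) proceeds the same way; your handling of the closure cases---extracting $\optmtyp$ from the tight environment of the body, checking $\optmtyp\mleq\tight$ via the tight constants, and verifying the environment equation for $\TEnv\aset\sset{\tmtwo\esub\var\tmthree}$---is exactly the delicate part, and you have it right.
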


Subject Expansion is analogous to \nameref{prop:subject_reduction}, 
but applies in the \emph{reverse direction}: given a typable term 
$\tm'$ and a reduction step $\tm \tov\rulename\aset\sset\appflag \tm'$,
the term $\tm$ is typable as well, with the same type and typing environment.
\begin{restatable}[Subject Expansion]{proposition}{subjectexpansion}
\label{prop:subject_expansion}
Let $\inv\aset\sset\tm$ hold, and suppose 
$\tm \tov\rulename\aset\sset\appflag \tm'$
where $\rulename \in \set{\ruledb,\rulelsv}$ and 
$\derivs{\deriv'}{\judg[\cm', \ce']\tctx{\tm'}\mtyp}$, with $\isAppr\aset\tctx$.
Furthermore, assume that if $\appflag = \app$, then either 
$\mtyp = \tightN$ or $\mtyp$ is a singleton, \ie, of the form $\mset\typ$.
Then, there exists a derivation $\deriv$ such that 
$\derivs\deriv{\judg[\cm, \ce]\tctx\tm\mtyp}$, where
if $\rulename = \ruledb$, then $(\cm,\ce) = (\cm'+1,\ce')$, and
if $\rulename = \rulelsv$, then $(\cm,\ce) = (\cm', \ce'+1)$.
\end{restatable}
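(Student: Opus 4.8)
The plan is to prove \nameref{prop:subject_expansion} by induction on the derivation of the reduction step $\tm \tov\rulename\aset\sset\appflag \tm'$. Since rule \ruleULsv carries a premise about an auxiliary $\rulesub\var\val$-step, I first generalise the statement to also cover step kinds $\rulename = \rulesub\var\val$: for such a step (which replaces one occurrence of $\var$ by $\val$), given a derivation of $\tm'$ satisfying the invariants, I produce a derivation in which the re-created occurrence of $\var$ is typed by a \emph{singleton} arrow multi-type $\mset\typ$, under an environment $\tctx_0$ where the environment $\tctx'$ of $\tm'$ splits as $\tctx_0 + \tctxtwo$ with $\tctxtwo$ typing the inlined copy of $\val$, and with counters related to those of $\tm'$ and of that sub-derivation. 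This auxiliary statement is precisely the \emph{anti-substitution lemma} dual to the substitution lemma behind \nameref{prop:subject_reduction}; its delicate aspects are the counter bookkeeping and the environment split. Note that the re-created occurrence of $\var$ must get a \emph{singleton} type: rule \ruleUSub only substitutes variables in applied position, and together with $\isAppr\aset\tctx$ (which prevents a variable bound to an abstraction from being typed $\tightN$) this forces the surrounding application to be typed by \ruleTypAppC, whose left premise carries a singleton arrow type.

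The base cases are \ruleUDb, \ruleUSub, and \ruleULsv. For \ruleUDb, $(\lam\var\tm)\sctx\,\tmtwo \tov\ruledb \tm\esub\var\tmtwo\sctx$, I decompose the given derivation of $\tm\esub\var\tmtwo\sctx$ along its ES structure: peeling off the ESs of $\sctx$ and then $\esub\var\tmtwo$ by \ruleTypES yields a derivation of $\tm$ under an assumption $\var : \optmtyp$, a derivation of $\tmtwo$ with a type $\mtyp_0$ such that $\optmtyp \mleq \mtyp_0$, and derivations of the ESs in $\sctx$. I reassemble: \ruleTypAbs with a singleton index set gives $\lam\var\tm : \mset{\optmtyp \to \mtyp}$; re-wrapping the ESs of $\sctx$ by \ruleTypES restores $(\lam\var\tm)\sctx$; and \ruleTypAppC with the $\tmtwo$-derivation yields $(\lam\var\tm)\sctx\,\tmtwo$ with type $\mtyp$, the ``$1+$'' of \ruleTypAppC realising $\cm' \mapsto \cm'+1$. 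The case \ruleUSub is the anti-substitution lemma applied to the trivial occurrence. For \ruleULsv, $\tm\esub\var{\val\sctx} \tov\rulelsv \tm'\esub\var\val\sctx$ with premise $\tm \tov{\rulesub\var\val} \tm'$, I invert \ruleTypES on the derivation of $\tm'\esub\var\val\sctx$, apply the anti-substitution lemma to the $\rulesub$-premise (obtaining a derivation of $\tm$ with one extra singleton assumption on $\var$), re-pack the substitution context $\sctx$ into the ES argument using a merge lemma on substitution contexts, and reassemble by \ruleTypES. The counter cancellation between the now-inlined copy of $\val$ on one side and the enlarged ES argument $\val\sctx$ on the other leaves a net increment of exactly one in the $\ce$ counter.

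The inductive cases are the congruence rules \ruleUAppL, \ruleUAppR, \ruleUEsR, \ruleUEsLAbs, \ruleUEsLStruct. In each I invert the last rule of the given derivation (\ruleTypAppP or \ruleTypAppC for applications, \ruleTypES for closures), extract the sub-derivation for the reduced subterm, check that the hypotheses of the proposition hold for that sub-step, apply the induction hypothesis, and re-apply the same last rule. Checking the invariants is the bulk of the routine work: appropriateness is inherited through $+$ on environments, and for \ruleUEsLAbs (resp. \ruleUEsLStruct) one uses the auxiliary facts that a hereditary abstraction is typed by an arrow multi-type (resp. a structure is typed by $\tightN$), so that the premise $\optmtyp \mleq \mtyp$ of \ruleTypES forces $\optmtyp \neq \tightN$ (resp. $\optmtyp \in \set{\none,\tightN}$), keeping the relevant appropriateness condition satisfied when the abstraction frame is extended; the correctness property $\inv\aset\sset\tm$ propagates to subterms because free-variable sets only shrink (with the newly bound variable absorbed into the extended frame); and the singleton/$\tightN$ side condition propagates because in \ruleUAppL the sub-step keeps flag $\app$ while the left premise of \ruleTypAppP/\ruleTypAppC is indeed $\tightN$ or a singleton, whereas in the remaining congruence rules the sub-step has flag $\nonapp$ and no condition is needed. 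I expect the main obstacle to be stating the generalised anti-substitution lemma and its invariants precisely enough that the induction closes in every congruence case — in particular making the counter arithmetic and the singleton-type constraint line up so that \ruleULsv yields exactly ``$\ce \mapsto \ce + 1$'' — rather than any individual case being hard.
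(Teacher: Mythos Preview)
Your proposal is correct and follows essentially the same approach as the paper: induction on the reduction derivation, with a separate anti-substitution lemma for $\rulesub\var\val$-steps, composition/decomposition lemmas for handling the substitution context $\sctx$ in \ruleUDb and \ruleULsv, and a splitting/merging lemma for value types in \ruleULsv. One small correction: in the \ruleUEsLStruct case the abstraction frame is \emph{not} extended (only the structure frame is), so appropriateness w.r.t.\ $\aset$ is preserved simply because $\var \notin \aset$; your claim that ``a structure is typed by $\tightN$'' is not needed there and does not hold without a tightness hypothesis on the context.
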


We now state a quantitative version of completeness of system $\typesystem$. 
That is, given a weakly terminating \UOCBV reduction sequence from 
$\tm$ containing exactly $\cm$ $\ruledb$-steps and exactly $\ce$ 
$\rulelsv$-steps, there exists a tight derivation of $\tm$ with 
counters $(\cm,\ce)$.
\begin{restatable}[Quantitative Completeness of System $\typesystem$]{theorem}{completenesstyping}
\label{thm:completeness_typing}
Let $\sset = \fv\tm$ and consider a reduction sequence 
$\tm \tovTop\sset^n \tmtwo$, where $\tmtwo$ is $\tovTop\sset$-irreducible.
Let $n = \cm + \ce$, where $\cm$ and $\ce$ are respectively the number 
of $\dbsym$ and $\lsvsym$ steps in the sequence.
Then, there exists a tight derivation $\deriv$, a tight environment 
$\tctx$, and a tight type $\tight$ such that 
$\derivs\deriv{\judg[\cm,\ce]\tctx\tm\tight}$.
\end{restatable}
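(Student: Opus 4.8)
The plan is to prove the statement by induction on the length $n$ of the reduction sequence $\tm \tovTop\sset^n \tmtwo$, combining the two ingredients already established: that normal forms are tightly typable (\cref{prop:nfs_are_tight_typable}) and that typability together with the counters is preserved \emph{backwards} along a single step, \ie, Subject Expansion (\cref{prop:subject_expansion}). Throughout, one uses that the top-level reduction is $\tovTop\sset = {\tov\dbsym\emptyset\sset\nonapp} \cup {\tov\lsvsym\emptyset\sset\nonapp}$, that $\inv\emptyset{\fv\tm}\tm$ always holds, and that $\isAppr\emptyset\tctx$ holds trivially for every $\tctx$ since the abstraction frame is empty.

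For the base case $n = 0$, the term $\tm$ equals $\tmtwo$ and is $\tovTop\sset$-irreducible, hence it belongs to $\NF\emptyset\sset\nonapp$ by \cref{thm:uocbv_characterization_of_normal_forms}. Applying \cref{prop:nfs_are_tight_typable} with $\aset \eqdef \emptyset$ and $\sset = \fv\tm$ yields a tight type $\tight$ and a tight derivation $\derivs\deriv{\judg[0,0]{\TEnv\emptyset\sset\tm}\tm\tight}$, whose environment $\TEnv\emptyset\sset\tm$ is tight by construction. Since here $\cm = \ce = 0$, this matches the (empty) count of $\dbsym$- and $\lsvsym$-steps, so the base case is complete.

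For the inductive step, I would decompose the sequence as $\tm \tov\rulename\emptyset\sset\nonapp \tm_1 \tovTop\sset^{n-1} \tmtwo$ with $\rulename \in \set{\ruledb,\rulelsv}$. Because \UOCBV reduction is \emph{non-erasing} on $\dbsym$- and $\lsvsym$-steps, $\fv{\tm_1} = \fv\tm = \sset$, so $\tm_1 \tovTop\sset^{n-1} \tmtwo$ is again a reduction of length $n-1$ to a $\tovTop\sset$-irreducible term with $\sset = \fv{\tm_1}$, and the induction hypothesis provides a tight environment $\tctx$, a tight type $\tight$, and a tight derivation $\derivs{\deriv_1}{\judg[\cm_1,\ce_1]\tctx{\tm_1}\tight}$, where $\cm_1$ and $\ce_1$ count respectively the $\dbsym$- and $\lsvsym$-steps of the tail. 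I would then apply \cref{prop:subject_expansion} to the first step $\tm \tov\rulename\emptyset\sset\nonapp \tm_1$: its hypotheses hold because $\inv\emptyset\sset\tm$ holds, $\isAppr\emptyset\tctx$ is trivial, and the side condition on the case $\appflag = \app$ is vacuous since $\appflag = \nonapp$. This produces a derivation $\derivs\deriv{\judg[\cm,\ce]\tctx\tm\tight}$ with $(\cm,\ce) = (\cm_1+1,\ce_1)$ if $\rulename = \ruledb$ and $(\cm,\ce) = (\cm_1,\ce_1+1)$ if $\rulename = \rulelsv$; in either case $(\cm,\ce)$ equals the total number of $\dbsym$- and $\lsvsym$-steps of $\tm \tovTop\sset^n \tmtwo$, and $\deriv$ is tight because $\tctx$ and $\tight$ are. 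This closes the induction.

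The induction itself is routine bookkeeping; the real difficulty is packaged inside \cref{prop:subject_expansion} and, under it, an anti-substitution lemma. Because \UOCBV is context-sensitive, Subject Expansion has to be proved in full generality for arbitrary parameters $\aset$, $\sset$, $\appflag$, carrying delicate invariants relating appropriateness of the typing environment to the abstraction frame and the shape of the derived type to the positional flag. The hard part will be making these invariants compose through the congruence rules \ruleUAppL, \ruleUAppR, \ruleUEsLAbs, \ruleUEsLStruct, and \ruleUEsR, and, for the $\rulelsv$ case, through the internal $\rulesub\var\val$-step of rule \ruleULsv, which expands a single bound occurrence of a variable into its (hereditary-abstraction) value. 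At the top level all of these side conditions degenerate (with $\aset = \emptyset$ and $\appflag = \nonapp$), so only the two counters remain to be tracked, which is precisely what the induction above does.
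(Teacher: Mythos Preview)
Your proposal is correct and follows essentially the same approach as the paper: induction on $n$, with the base case handled by \cref{prop:nfs_are_tight_typable} and the inductive step by \cref{prop:subject_expansion}, using the non-erasing property to keep $\sset = \fv{\tm_1}$. Your justification of the side conditions of Subject Expansion at the top level (trivial appropriateness since $\aset = \emptyset$, vacuous singleton condition since $\appflag = \nonapp$) is slightly more explicit than the paper's, but the argument is otherwise identical.
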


To illustrate this result, take the reduction sequence starting from
$\tm = (\var \, \varthree)\esub\var{\vartwo\esub\vartwo\id}$ and 
ending in $\var_1\esub{\var_1}\varthree\esub\var\vartwo\esub\vartwo\id
\in \NF{\emptyset}{\set\varthree}\nonapp$ (see \cref{ex:reduction-for-tight}).
The length of this sequence is $3 = \cm + \ce$, with $\cm = 1$ 
$\ruledb$-step and $\ce = 2$ $\rulelsv$-steps.
\cref{ex:derivation-for-tight} shows a tight derivation $\deriv$, 
a tight environment $\varthree : \tightN$, and a tight type $\tightN$
such that $\derivs\deriv{\judg[1,2]{\varthree : \tightN}\tm\tightN}$.

Combining \cref{thm:soundness_typing,thm:completeness_typing} we get 
that a term $\tm$ is typable in system $\typesystem$ if and only if 
it is \emph{weakly} terminating in \UOCBV.
The characterisation of termination through typability follows from 
these theorems and \cref{coro:all_reductions_NF_same_length}.
Moreover, this characterisation is quantitative:
\begin{corollary}[Quantitative Characterisation of Termination]
\label{coro:quantitative_characterisation_termination}
A term $\tm$ is tight typable with counters $(\cm,\ce)$ in system
$\typesystem$ if and only if $\tm$ terminates in \UOCBV after $\cm$
$\dbsym$-steps and $\ce$ $\lsvsym$-steps.
\end{corollary}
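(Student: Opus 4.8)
The plan is to obtain this corollary by directly combining the two quantitative results of this section, namely Quantitative Soundness (\cref{thm:soundness_typing}) and Quantitative Completeness (\cref{thm:completeness_typing}), with \cref{coro:all_reductions_NF_same_length}. Throughout, fix $\sset \eqdef \fv\tm$, so that $\inv\emptyset\sset\tm$ holds and the top-level reduction relevant to $\tm$ is $\tovTop\sset$.

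First I would dispose of a well-definedness point, so that the two sides of the stated equivalence speak about unambiguous objects. By the Diamond Property (\cref{thm:diamond-property-top-level}), $\tovTop\sset$ is diamond, hence weak and strong normalisation coincide, so ``$\tm$ terminates in \UOCBV'' has a single unambiguous meaning; and by \cref{coro:all_reductions_NF_same_length}, whenever $\tm$ reaches a $\tovTop\sset$-normal form, every reduction sequence from $\tm$ to normal form exhibits the same number $\cm$ of $\dbsym$-steps and the same number $\ce$ of $\lsvsym$-steps. Thus a terminating term $\tm$ determines a unique pair $(\cm,\ce)$, and the phrase ``terminates after $\cm$ $\dbsym$-steps and $\ce$ $\lsvsym$-steps'' is meaningful.

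For the forward implication, assume $\tm$ is tightly typable with counters $(\cm,\ce)$, say $\derivs\deriv{\judg[\cm,\ce]\tctx\tm\mtyp}$ with $\deriv$ tight. Then \cref{thm:soundness_typing} provides a $\tovTop\sset$-irreducible term $\tmtwo$ with $\tm \tovTop\sset^{\cm+\ce} \tmtwo$ in which exactly $\cm$ steps are $\dbsym$-steps and exactly $\ce$ are $\lsvsym$-steps. In particular $\tm$ reaches a normal form (which lies in $\NF\emptyset\sset\nonapp$ by \cref{thm:uocbv_characterization_of_normal_forms}), hence $\tm$ terminates, and by the preceding paragraph it does so after exactly $\cm$ $\dbsym$-steps and $\ce$ $\lsvsym$-steps.

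For the converse, assume $\tm$ terminates in \UOCBV after $\cm$ $\dbsym$-steps and $\ce$ $\lsvsym$-steps. Then there is a reduction sequence $\tm \tovTop\sset^{n} \tmtwo$ with $\tmtwo$ $\tovTop\sset$-irreducible, $n = \cm+\ce$, containing $\cm$ $\dbsym$-steps and $\ce$ $\lsvsym$-steps. Applying \cref{thm:completeness_typing} to this sequence yields a tight derivation $\deriv$, a tight environment $\tctx$, and a tight type $\tight$ with $\derivs\deriv{\judg[\cm,\ce]\tctx\tm\tight}$, that is, $\tm$ is tightly typable with counters $(\cm,\ce)$. Since no arithmetic is involved, the only mildly delicate point of the whole argument is the bookkeeping of the second paragraph: checking that the two sides of the equivalence genuinely refer to well-defined data, which is precisely where the Diamond Property and \cref{coro:all_reductions_NF_same_length} are used.
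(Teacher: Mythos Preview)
Your proof is correct and follows the same approach as the paper, which simply notes that the corollary follows from \cref{thm:soundness_typing}, \cref{thm:completeness_typing}, and \cref{coro:all_reductions_NF_same_length}. Your treatment is in fact more careful than the paper's one-line justification: you explicitly invoke the Diamond Property to ensure that ``terminates'' and the step counts $(\cm,\ce)$ are well-defined, a point the paper leaves implicit.
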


\section{Conclusions}
\label{sec:conclusions}

This work contributes to the study of useful call-by-value in two 
different ways.
At a \emph{syntactic} level, we propose an inductive specification of 
usefulness for open \CBV evaluation called \UOCBV, which contrasts 
with previous notions of usefulness (both for \CBN and \CBV) that are 
defined by inductive means.
Our specification is inspired by the specification of strong 
call-by-need evaluation in~\cite{BalabonskiBBK17}. 
Although these strategies are very different, one point in common is 
that they crucially depend on the information of the surrounding 
evaluation context, reflected in the form of parameters.
We think that this technique can scale to other evaluation strategies, 
and in particular that it could be applied to provide an inductive 
formulation of useful call-by-name evaluation, originally formulated
non-inductively in~\cite{AccattoliL14}. 
Moreover, we also show that \UOCBV and \LOCBV are formally related 
through a suitable unfolding operation. 
Our last contribution at a syntactic level is establishing a relation 
between \UOCBV and an existing definition of usefulness in the 
literature, the \glamour abstract machine from~\cite{AccattoliC15}.

At a \emph{semantic} level, we propose the first model of useful 
evaluation in the literature, based on a non-idempotent intersection 
type system we dub system $\typesystem$. 
Even though the specification of the operational semantics of useful 
\CBV evaluation is highly complex, system $\typesystem$ is notably 
simple. 
This highlights the ability of semantic approaches based on 
non-idempotent intersection types to capture intricate operational 
details with streamlined intuitive methods.
System $\typesystem$ characterises termination of the \UOCBV, and 
provides exact quantitative information for the length of reduction 
sequences to normal form. This is why system $\typesystem$ is a
\emph{quantitative} interpretation of \UOCBV.
Our contribution is novel, as previous definitions of useful
evaluation (including useful \CBV) in the literature lack semantic
models, and current quantitative interpretations of \CBV do not
consider usefulness. 
Moreover, with the scarce exception of~\cite{KesnerViso22}, entirely 
based on the (verbose) persistent/consuming paradigm, this is the 
only ``simple'' and ``concise'' quantitative type system for \CBV 
that is able to count substitution steps exactly.

Several complementary properties are worth studying. 
A challenging question is how to extend our inductive
specification of usefulness to \emph{strong} \CBV so that evaluation
is also allowed inside abstractions. 
This is relevant for the implementation of proof assistants based on 
dependent type theory, in which type checking requires deciding the
definitional equality of type expressions up to full $\beta$-conversion, 
thus requiring strong evaluation.
Moreover, we would apply the semantic techniques used for open \CBV 
and yield a quantitative type system for useful strong \CBV.
Even more challenging would be to adapt all this technology to 
call-by-need, and to call-by-push-value. 
Moreover, we would like to understand the flexibility of system 
$\typesystem$, particularly its potential to capture 
\emph{other optimisations} of \CBV, such as those in~\cite{AccattoliG19}.

\bibliographystyle{plainnat}
\bibliography{biblio}

\newpage
\appendix
\onecolumn

\section*{Supplementary Material}

In this appendix we use the following complementary notations:

If $X,Y$ are sets, we write $X \disj Y$ to mean that $X$ and $Y$ are 
\defn{disjoint}, \ie, $X \cap Y = \emptyset$.

We write $\domSctx\sctx$ for the \defn{domain} of $\sctx$, \ie,
$\domSctx{\esub{\var_1}{\tm_1}\hdots\esub{\var_n}{\tm_n}} \eqdef
\set{\var_1,\hdots,\var_n}$.

\section{Proofs of Section~\ref{sec:opencbv} ``\Nonuseful Call-by-Value''}
\label{app:opencbv}

In this section, we present the proofs for the lemmas characterising 
normal forms in \LOCBV.

Drawing inspiration from Balabonski, Lanco, and Melquiond's work 
in~\cite{BalabonskiLM23}, we obtain an \emph{inductive} syntactic 
characterisation of normal forms in \LOCBV.
This inductive definition uses two parameters: a \defn{value frame} 
$\vset$, which is a set of variables, and a \defn{\positionalFlag} 
$\appflag \in \set{\app,\nonapp}$.
Both parameters encode information about the evaluation context in 
which a term is considered a normal form. 
Specifically, the set $\vset$ tracks variables bound to a value in 
the context, while $\appflag$ identifies whether subterms appear in
applied or non-applied positions within the evaluation context.
For example, $\var$ appears in an applied position in the term 
$\var \, \vartwo$, while $\vartwo$ appears in a non-applied position.
For top-level terms, the positional flag is always $\nonapp$.
The set of \defn{normal forms} under $\vset$ and $\appflag$, written
$\VNF\vset\appflag$, is inductively defined as follows:
\[
  \inferrule{
    \var \notin \vset
  }{
    \var \in \VNF\vset\appflag
  }\ruleVNFVar
  \HS
  \inferrule{
  }{ 
    \lam\var\tm \in \VNF\vset\nonapp
  }\ruleVNFLam
  \HS
  \inferrule{
    \tm \in \VNF\vset\app
    \sep
    \tmtwo \in \VNF\vset\nonapp
  }{
    \tm\,\tmtwo \in \VNF\vset\appflag
  }\ruleVNFApp
\]
\[
  \inferrule{
    \tm \in \VNF{\vset\cup\set\var}\appflag
    \sep
    \tmtwo \in \VNF\vset\nonapp
    \sep
    \valPred\tmtwo
  }{
    \tm\esub\var\tmtwo \in \VNF\vset\appflag
  }\ruleVNFEsVal
  \HS
  \inferrule{
    \tm \in \VNF\vset\appflag
    \sep
    \tmtwo \in \VNF\vset\nonapp
    \sep
    \neg(\valPred\tmtwo)
  }{
    \tm\esub\var\tmtwo \in \VNF\vset\appflag
  }\ruleVNFEsNonVal
\]
where the predicate $\valPred\tm$ states whether $\tm$ is of the form
$\val\sctx$ or not.

Value frames are sets of variables that are bound to values, so in 
rule \ruleVNFEsVal, we extend the value frame of the left premise with
the variable bound by the ES. 
Accordingly, in rule \ruleVNFVar, the variable $\var$ must not be in 
the value frame; otherwise it would mean that it must be substituted 
by some value during computation.
As an example, 
$\var\esub\var\vartwo \tovv\rulelsv \vartwo\esub\var\vartwo$
can be derived using the inductive rule \ruleVLsv because
$\var \tovv{\rulesub\var\vartwo} \vartwo$, so intuitively
$\var\esub\var\vartwo \notin \VNF\emptyset\nonapp$.
However, $\vartwo\esub\var\vartwo \in \VNF\emptyset\nonapp$ as we 
state below:
\[
  \inferrule{
    \inferrule{
    }{
      \vartwo \in \VNF{\set\var}\nonapp
    }\ruleVNFVar
    \inferrule{
    }{
      \vartwo \in \VNF\emptyset\nonapp
    }\ruleVNFVar
  }{
    \vartwo\esub\var\vartwo \in \VNF\emptyset\nonapp
  }\ruleVNFEsVal
\]
If we wanted to derive a judgement
$\var\esub{\var}{\vartwo} \in \VNF{\emptyset}{\nonapp}$, we would end
up with a premise stating $\var \notin \set{\var}$, which is false.
In the preceeding derivation tree, the predicate
$\vartwo \in \VNF{\set\var}\nonapp$ indicates that $\vartwo$ is in 
normal form with respect to a value frame containing $\var$.
This dependency comes from the reduction step 
$\var \tovv{\rulesub\var\vartwo} \vartwo$.

Just as normal forms are parameterised by a value frame $\vset$,
representing the set of variables that are bound to values, evaluation 
must also be generalised by parameterising it with value frame, to 
establish a precise relation between reduction and normal forms.
Consequently, we define the set of 
\textbf{reduction rules related to a value frame $\vset$} as follows:
\[
  \RulesV\vset 
  \eqdef
  \set{\ruledb, \rulelsv} \cup \set{\rulesub\var\val \ST \var \in \vset}
\]
A term $\tm$ belongs to the set $\VRed\vset$ of 
\textbf{reducible terms under a value frame $\vset$} if there is a 
step kind $\rulename \in \RulesV\vset$ and a term $\tm'$ such that 
$\tm \tovv\rulename \tm'$; $\tm$ belongs to the set $\VIrred\vset$ of 
\textbf{irreducible terms under $\vset$} if $\tm \notin \VRed\vset$.

\begin{lemma}[Weakening lemma]
\label{lem:vnf_weakening}
Let $\tm \in \VNF\vset\appflag$ and let $\vset'$ be such that
$\vset' \disj \rv\tm$.
Then, $\tm \in \VNF{\vset\cup\vset'}\appflag$.
\end{lemma}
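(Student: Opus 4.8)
The plan is a structural induction on the derivation of $\tm \in \VNF\vset\appflag$, maintaining the disjointness hypothesis $\vset' \disj \rv\tm$ on the premises and reconstructing the derivation with the enlarged frame $\vset\cup\vset'$. Every case reduces to routine manipulation of the defining clauses of $\rv{\cdot}$.

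In the axiom \ruleVNFVar, $\tm$ is a variable $\var$ with $\var\notin\vset$ and $\rv\tm=\set\var$; since $\vset'\disj\set\var$ we get $\var\notin\vset'$, hence $\var\notin\vset\cup\vset'$, and \ruleVNFVar applies again. For \ruleVNFLam there is nothing to check, as an abstraction lies in $\VNF{\vset''}\nonapp$ for every frame $\vset''$. For \ruleVNFApp, $\tm$ is an application $\tm_1\,\tm_2$ with $\rv\tm=\rv{\tm_1}\cup\rv{\tm_2}$, so $\vset'$ is disjoint from each of $\rv{\tm_1}$ and $\rv{\tm_2}$; the \ih gives $\tm_1\in\VNF{\vset\cup\vset'}\app$ and $\tm_2\in\VNF{\vset\cup\vset'}\nonapp$, and \ruleVNFApp re-applies.

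The two ES cases \ruleVNFEsVal and \ruleVNFEsNonVal require slightly more care, and constitute the only real (though minor) obstacle. Here $\tm=\tm_1\esub\var{\tm_2}$ with $\rv\tm=(\rv{\tm_1}\setminus\set\var)\cup\rv{\tm_2}$, so $\vset'\disj\rv\tm$ does \emph{not} directly give $\vset'\disj\rv{\tm_1}$ because $\var$ might lie in $\vset'$. One resolves this by applying the \ih to the left premise $\tm_1\in\VNF{\vset\cup\set\var}\appflag$ with the weakening set $\vset'\setminus\set\var$, which is disjoint from $\rv{\tm_1}$ (equivalently, one may assume $\var\notin\vset'$ by $\alpha$-conversion); this yields $\tm_1\in\VNF{(\vset\cup\set\var)\cup(\vset'\setminus\set\var)}\appflag=\VNF{(\vset\cup\vset')\cup\set\var}\appflag$, which is exactly the left premise of the reconstructed derivation. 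Since $\rv{\tm_2}\subseteq\rv\tm$, we also have $\vset'\disj\rv{\tm_2}$, so the \ih gives $\tm_2\in\VNF{\vset\cup\vset'}\nonapp$; the side condition $\valPred{\tm_2}$ (respectively $\neg(\valPred{\tm_2})$ for \ruleVNFEsNonVal) is untouched by enlarging the frame, and re-applying the rule finishes the case. Everything else is immediate from the induction hypothesis.
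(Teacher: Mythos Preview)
Your proof is correct and matches the paper's approach (a straightforward induction on the derivation). One tiny imprecision: in \ruleVNFEsNonVal the left premise is $\tm_1\in\VNF\vset\appflag$, not $\tm_1\in\VNF{\vset\cup\set\var}\appflag$, so that case is actually simpler than you suggest---the same $\alpha$-conversion (or $\vset'\setminus\set\var$) trick applies, now targeting $\tm_1\in\VNF{\vset\cup\vset'}\appflag$ directly.
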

\hiddenproof{
  The proof is straightforward by induction on the derivation of $\tm \in \VNF{\vset}{\appflag}$.
}{
  % Label: lem:vnf_weakening

\begin{proof}
By induction on the derivation of $\tm \in \VNF\vset\appflag$.
\begin{enumerate}
\item \ruleVNFVar.
  Then $\tm = \var \in \VNF\vset\appflag$ with $\var \notin \vset$.
  Given $\vset'$ such that $\vset' \disj \set{\var}$, then 
  $\var \notin \vset \cup \vset'$, so applying rule \ruleVNFVar we 
  obtain $\var \in \VNF{\vset \cup \vset'}\appflag$.
\item \ruleVNFLam.
  Then $\tm = \lam\var\tmtwo \in \VNF\vset\nonapp$, where 
  $\appflag = \nonapp$.
  Given $\vset'$ such that $\vset' \disj \emptyset$, we obtain
  $\lam\var\tmtwo \in \VNF{\vset \cup \vset'}\nonapp$
  by applying rule \ruleVNFLam.
\item \ruleVNFApp.
  Then $\tm = \tmtwo \, \tmthree$, and 
  $\tmtwo \, \tmthree \in \VNF\vset\appflag$ is derived from
  $\tmtwo \in \VNF\vset\app$ and $\tmthree \in \VNF\vset\nonapp$.
  Given $\vset'$ such that $\vset' \disj (\rv\tmtwo \cup \rv\tmthree)$,
  note that $\vset' \disj \rv\tmtwo$ and $\vset' \disj \rv\tmthree$ 
  hold.
  We obtain $\tmtwo \in \VNF{\vset \cup \vset'}\app$
  by the \ih on $\tmtwo$, and 
  $\tmthree \in \VNF{\vset \cup \vset'}\nonapp$ by the \ih on 
  $\tmthree$.
  Applying rule \ruleVNFApp we conclude
  $\tmtwo \, \tmthree \in \VNF{\vset \cup \vset'}\appflag$.
\item \ruleVNFEsVal.
  Then $\tm = \tmtwo\esub\var\tmthree \in \VNF\vset\appflag$ is 
  derived from $\tmtwo \in \VNF{\vset \cup \set\var}\appflag$ and
  $\tmthree \in \VNF\vset\nonapp$, with $\valPred\tmthree$.
  Let $\vset'$ be such that $\vset' \disj 
  ((\rv\tmtwo \setminus \set\var) \cup \rv\tmthree)$; by 
  $\alpha$-conversion we assume $\var \notin \vset'$.
  Then, $\vset' \disj \rv\tmtwo$ and $\vset' \disj \rv\tmthree$.
  By the \ih on $\tmtwo$ and $\tmthree$,
  we yield $\tmtwo \in \VNF{\vset \cup \set\var \cup \vset'}\appflag$,
  and $\tmthree \in \VNF{\vset \cup \vset'}\nonapp$, respectively.
  Applying rule \ruleVNFEsVal we conclude with
  $\tmtwo\esub\var\tmthree \in \VNF{\vset \cup \vset'}\appflag$.
\item \ruleVNFEsNonVal.
  Analogous to the previous case.
  % Then $\tm = \tmtwo\esub\var\tmthree \in \VNF\vset\appflag$ is 
  % derived from $\tmtwo \in \VNF\vset\appflag$ and 
  % $\tmthree \in \VNF\vset\nonapp$, with $\neg(\valPred\tmthree)$.
  % Let $\vset'$ be such that 
  % $\vset' \disj ((\rv\tmtwo \setminus \set\var) \cup \rv\tmthree)$;
  % by $\alpha$-conversion we assume $\var \notin \vset'$.
  % Then, $\vset' \disj \rv\tmtwo$ and $\vset' \disj \rv\tmthree$.
  % By the \ih on $\tmtwo$ and $\tmthree$,
  % we yield $\tmtwo \in \VNF{\vset \cup \vset'}\appflag$, and
  % $\tmthree \in \VNF{\vset \cup \vset'}\nonapp$, respectively.
  % Applying rule \ruleVNFEsNonVal we conclude with
  % $\tmtwo\esub\var\tmthree \in \VNF{\vset \cup \vset'}\appflag$.
\end{enumerate}
\end{proof}
  
}

% \subsection{Characterisation of Normal Forms for \Nonuseful Open \CBV}
% \label{sec:normal-forms-non-useful}

We can now show the two main results.
First, we show soundness of the syntactic characterisation of normal 
forms with respect to the parametrised reduction rules.
More specifically, we show that given any value frame $\vset$ and any 
\positionalFlag $\appflag$, a term in normal form under $\vset$ and 
$\appflag$ is in the set $\VIrred\vset$.

\begin{lemma}
\label{lem:absVNF}
For any $\vset$, $(\lam\var\tm)\sctx \notin \VNF\vset\app$.
\end{lemma}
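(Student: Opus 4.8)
The plan is to proceed by induction on the substitution context $\sctx$, keeping the quantification over the value frame $\vset$ inside the induction, so that we actually prove the uniform statement ``for every $\vset$, $(\lam\var\tm)\sctx \notin \VNF\vset\app$''. The whole argument reduces to inspecting which inference rules of $\VNF{\cdot}{\cdot}$ can possibly derive a membership judgement whose subject has the syntactic shape $(\lam\var\tm)\sctx$.

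In the base case $\sctx = \ctxhole$, the subject is the bare abstraction $\lam\var\tm$. Among the five rules, \ruleVNFVar concludes a variable, \ruleVNFApp concludes an application, and \ruleVNFEsVal and \ruleVNFEsNonVal conclude a closure $\tm'\esub{\vartwo}{\tmtwo}$; none of these can match $\lam\var\tm$. The only rule whose conclusion is an abstraction is \ruleVNFLam, and it forces the positional flag to be $\nonapp$. Hence $\lam\var\tm \notin \VNF\vset\app$ for any $\vset$.

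In the inductive step $\sctx = \sctxtwo\esub{\vartwo}{\tmtwo}$, the subject is the closure $\bigl((\lam\var\tm)\sctxtwo\bigr)\esub{\vartwo}{\tmtwo}$, so the only rules that could derive a membership judgement for it are \ruleVNFEsVal and \ruleVNFEsNonVal. In either case the left premise requires $(\lam\var\tm)\sctxtwo \in \VNF{\vset'}{\app}$, where $\vset' = \vset\cup\set{\vartwo}$ in the first rule and $\vset' = \vset$ in the second (both rules preserve the positional flag $\app$). But the induction hypothesis, applied to $\sctxtwo$ with value frame $\vset'$, states exactly $(\lam\var\tm)\sctxtwo \notin \VNF{\vset'}{\app}$ --- a contradiction. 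Therefore $(\lam\var\tm)\sctx \notin \VNF\vset\app$.

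There is essentially no real obstacle: the lemma is a routine syntactic inversion. The one point that needs a little care is that rules \ruleVNFEsVal and \ruleVNFEsNonVal modify the value frame, so the induction must be carried out with the ``for every $\vset$'' quantifier in scope --- fixing $\vset$ at the outset would not suffice. This is also why I would induct on the shape of $\sctx$ (equivalently, on $(\lam\var\tm)\sctx$) rather than attempt an inversion on a derivation: the claim is precisely that no such derivation exists, so there is nothing to invert until the shape of the subject is exposed.
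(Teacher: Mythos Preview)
Your proposal is correct and takes exactly the approach the paper indicates: a straightforward induction on $\sctx$. Your point about keeping the quantification over $\vset$ inside the induction is the right observation, since \ruleVNFEsVal enlarges the value frame in its left premise.
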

\hiddenproof{
  The proof is straightforward by induction on $\sctx$.
}{
  % Label: lem:absVNF

\begin{proof}
By induction on $\sctx$.
\begin{itemize}
\item $\sctx = \ctxhole$.
  Immediate, since $\lam\var\tm \in \VNF\vset\app$ cannot be derived 
  by any rule.
\item $\sctx = \sctx'\esub\vartwo\tmtwo$.
  We reason by contradiction.
  Suppose that 
  $(\lam\var\tm)\sctx'\esub\vartwo\tmtwo \in \VNF\vset\app$, which
  can be derived by two different reduction rules, \ruleVNFEsVal and
  \ruleVNFEsNonVal. Both cases are analogous, so we analyze when we
  use rule \ruleVNFEsVal.
  In particular, $(\lam\var\tm)\sctx' \in \VNF{\vset \cup \set\vartwo}\app$,
  but we reach a contradiction since 
  $(\lam\var\tm)\sctx' \notin \VNF{\vset \cup \set\vartwo}\app$ by 
  the \ih on $\sctx'$.
  Therefore, it must be the case that
  $(\lam\var\tm)\sctx'\esub\vartwo\tmtwo \notin \VNF\vset\app$, so
  we are done.
\end{itemize}
\end{proof}
  
}

\begin{proposition}[Soundness of \nonuseful normal forms]
If $\tm \in \VNF\vset\appflag$, then $\tm \in \VIrred\vset$.
\end{proposition}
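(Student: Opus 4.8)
The plan is to proceed by induction on the derivation of $\tm \in \VNF\vset\appflag$, showing in each case that there is no step $\tm \tovv\rulename \tm'$ with $\rulename \in \RulesV\vset$. I will freely use the monotonicity property $\RulesV\vset \subseteq \RulesV{\vset'}$ whenever $\vset \subseteq \vset'$, which is immediate from the definition of $\RulesV{\_}$, together with \cref{lem:absVNF}. Each case is handled by inversion on $\tovv{}$: depending on the top constructor of $\tm$, only a few reduction rules can fire, and each possibility is discarded either directly or via the induction hypothesis.

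The base cases are immediate. If $\tm = \var$ (rule \ruleVNFVar), then the premise gives $\var \notin \vset$, while the only rule applicable to a variable is \ruleVSub, which would yield $\var \tovv{\rulesub\var\val}\val$ with step kind $\rulesub\var\val$; for this to witness $\tm \in \VRed\vset$ one would need $\rulesub\var\val \in \RulesV\vset$, forcing $\var \in \vset$, a contradiction. If $\tm = \lam\var\tm_0$ (rule \ruleVNFLam), then $\tm$ is the left-hand side of no reduction rule and there is no congruence rule descending under abstractions, so $\tm$ is irreducible outright.

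For the application case (rule \ruleVNFApp), $\tm = \tm_0\,\tmtwo_0$ with $\tm_0 \in \VNF\vset\app$ and $\tmtwo_0 \in \VNF\vset\nonapp$; a step from $\tm_0\,\tmtwo_0$ arises only through \ruleVDb, \ruleVAppL, or \ruleVAppR. The first requires $\tm_0$ to be of the form $(\lam\var\tm_1)\sctx$, which is impossible since $\tm_0 \in \VNF\vset\app$ contradicts \cref{lem:absVNF}; the latter two would make $\tm_0$ or $\tmtwo_0$ reducible under $\vset$, against the induction hypothesis applied to the corresponding premise. For the closure cases $\tm = \tm_0\esub\var\tmtwo_0$, a step must come from \ruleVLsv, \ruleVEsL, or \ruleVEsR. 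In case \ruleVNFEsVal we have $\tm_0 \in \VNF{\vset\cup\set\var}\appflag$, $\tmtwo_0 \in \VNF\vset\nonapp$, and $\tmtwo_0$ of the form $\val\sctx$: a \ruleVEsR step makes $\tmtwo_0$ reducible under $\vset$, contradicting the induction hypothesis; a \ruleVEsL step with step kind $\rulename \in \RulesV\vset \subseteq \RulesV{\vset\cup\set\var}$ makes $\tm_0$ reducible under $\vset\cup\set\var$, again contradicting the induction hypothesis; and a \ruleVLsv step requires $\tm_0 \tovv{\rulesub\var\val}\tm_0'$, but $\rulesub\var\val \in \RulesV{\vset\cup\set\var}$ since $\var \in \vset\cup\set\var$, so this too contradicts irreducibility of $\tm_0$ under $\vset\cup\set\var$. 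Case \ruleVNFEsNonVal is analogous but simpler: the hypothesis that $\tmtwo_0$ is not of the form $\val\sctx$ excludes \ruleVLsv directly, and the remaining two cases are discarded by the induction hypotheses on $\tm_0 \in \VNF\vset\appflag$ and $\tmtwo_0 \in \VNF\vset\nonapp$, with no change of value frame.

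No step of this argument poses a real difficulty; the only place demanding attention is the \ruleVNFEsVal case, where recursing into $\tm_0$ enlarges the value frame to $\vset\cup\set\var$, so one must carefully track which frame each induction hypothesis refers to and appeal to the monotonicity of $\RulesV{\_}$ to match the step kind of a \ruleVEsL step against the larger frame. The sole external ingredient is \cref{lem:absVNF}, needed precisely to rule out a \ruleVDb redex in the application case.
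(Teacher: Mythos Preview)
Your proof is correct and follows the natural approach: induction on the derivation of $\tm \in \VNF\vset\appflag$, with case analysis on the possible reduction rules in each case, using \cref{lem:absVNF} to exclude $\ruleVDb$ in the application case and the monotonicity $\RulesV\vset \subseteq \RulesV{\vset\cup\set\var}$ to handle the frame extension in the \ruleVNFEsVal case. This is essentially the same route as the paper's proof, which proceeds by the same induction and relies on the same lemma.
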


\begin{proof}
By induction on the derivation of $\tm \in \VNF\vset\appflag$.
\begin{enumerate}
\item \ruleVNFVar.
  Then, $\tm = \var$ and
  \[
    \indrule{\ruleVNFVar}{
      \var \notin \vset
    }{
      \var \in \VNF\vset\appflag
    }
  \]
  The only rule that would allow us to reduce $\var$ is \ruleVSub, as 
  follows: $\var \tovv{\rulesub\var\val} \val$ with $\var \in \vset$.
  But it is impossible since the premise of rule \ruleVNFVar states 
  that $\var \notin \vset$.
  We then conclude with $\var \in \VIrred\vset$.
\item \ruleVNFLam.
  Then, $\tm = \lam\var\tmtwo \in \VNF\vset\nonapp$,
  with $\appflag = \nonapp$.
  There are no rules to reduce abstractions, so it is immediate to 
  conclude with $\lam\var\tmtwo \in \VIrred\vset$.
\item \ruleVNFApp.
  Then, $\tm= \tmtwo \, \tmthree$ and 
  \[
    \indrule{\ruleVNFApp}{
      \tmtwo \in \VNF\vset\app
      \sep
      \tmthree \in \VNF\vset\nonapp
    }{
      \tmtwo \, \tmthree \in \VNF\vset\appflag
    }
  \]
  Three rules would allow us to reduce $\tmtwo \, \tmthree$:
  \ruleVDb, \ruleVAppL, and \ruleVAppR.
  We show that none of these apply:
  \begin{itemize}
  \item
    We cannot apply rule \ruleVDb to conclude 
    $\tmtwo\,\tmthree \in \VRed\vset$, because it would imply 
    $\abs\tmtwo$ at the same time that 
    $\tmtwo \in \VNF\vset\app$, which is impossible by 
    \cref{lem:absVNF}.
  \item
    By the \ih on $\tmtwo$, we yield $\tmtwo \in \VIrred\vset$. 
    Then, we cannot apply rule \ruleVAppL to conclude with 
    $\tmtwo\,\tmthree \in \VRed\vset$.
  \item
    By the \ih on $\tmthree$, we yield $\tmthree \in \VIrred\vset$. 
    Then, we cannot apply rule \ruleVAppR to conclude with 
    $\tmtwo\,\tmthree \in \VRed\vset$.
  \end{itemize}
  And we are done since $\tmtwo\,\tmthree \in \VIrred\vset$ by 
  definition.
\item \ruleVNFEsVal.
  Then, $\tm = \tmtwo\esub\var\tmthree$ and
  \[
    \indrule{\ruleVNFEsVal}{
      \tmtwo \in \VNF{\vset \cup \set\var}\appflag
      \sep
      \tmthree \in \VNF\vset\nonapp
      \sep
      \valPred\tmthree
    }{
      \tmtwo\esub\var\tmthree \in \VNF\vset\appflag
    }
  \]
  Three rules would allow us to reduce $\tmtwo\esub\var\tmthree$:
  \ruleVLsv, \ruleVEsL, and \ruleVEsR.
  We show that none of these apply:
  \begin{itemize}
  \item
    By the \ih on $\tmtwo$, we yield 
    $\tmtwo \in \VIrred{\vset \cup \set\var}$.
    Then, we cannot apply rules \ruleVLsv and \ruleVEsL to conclude
    with $\tmtwo\esub\var\tmthree \in \VRed\vset$.
  \item
    By the \ih on $\tmthree$, we yield
    $\tmthree \in \VIrred\vset$.
    Then, we cannot conclude $\tmtwo\esub\var\tmthree \in \VRed\vset$.
  \end{itemize}
  And we are done since $\tmtwo\esub\var\tmthree \in \VIrred\vset$ by 
  definition.
\item \ruleVNFEsNonVal.
  Then, $\tm = \tmtwo\esub\var\tmthree$ and
  \[
    \indrule{\ruleVNFEsNonVal}{
      \tmtwo \in \VNF\vset\appflag
      \sep
      \tmthree \in \VNF\vset\nonapp
      \sep
      \neg(\valPred\tmthree)
    }{
      \tmtwo\esub\var\tmthree \in \VNF\vset\appflag
    }
  \]
  Three rules would allow us to reduce $\tmtwo\esub\var\tmthree$:
  \ruleVLsv, \ruleVEsL, and \ruleVEsR.
  We show that case \ruleVLsv does not apply, as the remaining cases
  are analogous to the previous item.

  Since $\neg(\valPred\tmthree)$, then we cannot conclude 
  $\tmtwo\esub\var\tmthree \in \VRed\vset$ by rule \ruleVLsv.
\end{enumerate}
\end{proof}

Completeness intuitively states that any term in $\VIrred{\vset}$ is 
in normal form under the same $\vset$ and any 
\positionalFlag $\appflag$.
However, we need to consider the set $\VIrred\vset$ with caution, 
since given an irreducible abstraction in an applied position, the 
normal form predicate would fail, as it means that the computation of 
the whole term can continue with a $\dbsym$-step. Formally,
\begin{lemma}
\label{lem:nu_sub_change_of_values}
If $\tm \tovv{\rulesub{\var}{\val}} \tm'$
then for every value $\valtwo$ there exists $\tm''$
such that $\tm \tovv{\rulesub{\var}{\valtwo}} \tm''$. 
\end{lemma}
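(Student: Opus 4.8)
The plan is to argue by induction on the derivation of $\tm \tovv{\rulesub\var\val} \tm'$. The key observation is that the value $\val$ has very limited influence on the shape of such a derivation: it appears only as the conclusion of the base rule \ruleVSub (namely $\var \tovv{\rulesub\var\val} \val$) and in the side condition $\vartwo \notin \fv\rulename$ of the congruence rule \ruleVEsL, which, since $\fv{\rulesub\var\val} = \set\var \cup \fv\val$, amounts to $\vartwo \neq \var$ and $\vartwo \notin \fv\val$. All the remaining congruence rules \ruleVAppL, \ruleVAppR, \ruleVEsR are completely insensitive to the value carried by the step kind. So the strategy is to replay the given derivation, replacing $\val$ by $\valtwo$ throughout, and to take care of the single side condition of \ruleVEsL.

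Before the induction I would fix $\var$, $\val$, $\valtwo$ (with the usual convention $\var \notin \fv\valtwo$) and, using that terms are taken up to $\alpha$-equivalence, choose the representative of $\tm$ so that every variable bound in $\tm$ (by an abstraction or by an ES) avoids $\fv\val \cup \fv\valtwo$, renaming the given derivation of $\tm \tovv{\rulesub\var\val} \tm'$ accordingly; this freshness invariant is inherited by all subterms, hence survives the recursive calls. The cases are then: (i) \ruleVSub, where $\tm = \var$ and $\tm' = \val$, and we take $\tm'' \eqdef \valtwo$ via \ruleVSub again; (ii) \ruleVAppL, \ruleVAppR, \ruleVEsR, where we apply the induction hypothesis to the immediate subterm being reduced and re-apply the same congruence rule, all of which carry no side condition mentioning the value; (iii) \ruleVEsL, where $\tm = \tmthree\esub\vartwo\tmfour$, $\tm' = \tmthree'\esub\vartwo\tmfour$, $\tmthree \tovv{\rulesub\var\val} \tmthree'$, and $\vartwo \neq \var$, $\vartwo \notin \fv\val$. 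Here the induction hypothesis on $\tmthree \tovv{\rulesub\var\val} \tmthree'$ gives some $\tmfive$ with $\tmthree \tovv{\rulesub\var\valtwo} \tmfive$; since $\vartwo$ is bound in $\tm$, the chosen representative ensures $\vartwo \notin \fv\valtwo$, and $\vartwo \neq \var$, so \ruleVEsL applies and yields $\tm \tovv{\rulesub\var\valtwo} \tmfive\esub\vartwo\tmfour$, as required.

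The only genuine obstacle is precisely this interaction with the side condition of \ruleVEsL: for a careless choice of representative, a free variable of $\valtwo$ could coincide with the name $\vartwo$ of an explicit substitution that the substitution path crosses, blocking the rule. Since the statement concerns $\alpha$-equivalence classes, this is neutralised once and for all by the freshness convention above. Alternatively --- and perhaps more transparently for the write-up --- one may first establish the auxiliary fact that $\tm \tovv{\rulesub\var\val} \tm'$ entails $\var \in \fv\tm$ (an immediate induction on the derivation, using that in every rule the substituted occurrence stays free, in particular $\vartwo \neq \var$ in \ruleVEsL), and then prove by induction on $\tm$ that $\var \in \fv\tm$ implies $\tm$ admits a $\rulesub\var\valtwo$-step for every value $\valtwo$ with $\var\notin\fv\valtwo$; combining the two gives the lemma.
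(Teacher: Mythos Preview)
Your proposal is correct and follows exactly the approach the paper takes: a straightforward induction on the derivation of $\tm \tovv{\rulesub\var\val} \tm'$. You have in fact been more explicit than the paper (which only records ``straightforward by induction on the derivation''), correctly isolating the single non-trivial point, namely the side condition $\vartwo \notin \fv{\rulesub\var\val}$ in \ruleVEsL, and resolving it via the $\alpha$-convention on bound variables.
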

\hiddenproof{
  The proof is straightforward by induction on the derivation of 
  $\tm \tovv{\rulesub\var\val} \tm'$.  
}{
  % Label: lem:nu_sub_change_of_values

\begin{proof}
By induction on the derivation of $\tm \tovv{\rulesub\var\val} \tm'$.
\begin{itemize}
\item \ruleVSub.
  Then $\tm = \var \tovv{\rulesub\var\val} \val = \tm'$.
  Taking $\tm'' = \valtwo$ with $\valtwo$ any value
  and applying rule \ruleVSub, we derive
  $\var \tovv{\rulesub\var\valtwo} \valtwo$.
\item \ruleVAppL.
  Then 
  \[
    \indrule{\ruleVAppL}{
      \tmtwo \tovv{\rulesub\var\val} \tmtwo'
    }{
      \tm = \tmtwo \, \tmthree \tovv{\rulesub\var\val} \tmtwo' \, \tmthree = \tm'
    }
  \]

  By the \ih on $\tmtwo$, for every value $\valtwo$ there exists 
  $\tmtwo''$ such that $\tmtwo \tovv{\rulesub\var\valtwo} \tmtwo''$.
  Applying rule \ruleVAppL, we derive
  $\tmtwo \, \tmthree \tovv{\rulesub\var\valtwo} \tmtwo'' \, \tmthree$,
  where $\tm'' = \tmtwo'' \, \tmthree$.
\item \ruleVAppR.
  Analogous to the previous case.
  % Then 
  % \[
  %   \indrule{\ruleVAppR}{
  %     \tmthree \tovv{\rulesub\var\val} \tmthree'
  %   }{
  %     \tm = \tmtwo \, \tmthree \tovv{\rulesub\var\val} \tmtwo \, \tmthree' = \tm'
  %   }
  % \]

  % By the \ih on $\tmthree$, for every value $\valtwo$ there exists 
  % $\tmthree''$ such that $\tmthree \tovv{\rulesub\var\valtwo} \tmthree''$.
  % Applying rule \ruleVAppR, we derive
  % $\tmtwo \, \tmthree \tovv{\rulesub\var\valtwo} \tmtwo \, \tmthree''$,
  % where $\tm'' = \tmtwo \, \tmthree''$.
\item \ruleVEsL.
  Then
  \[
    \indrule{\ruleVEsL}{
      \tmtwo \tovv{\rulesub\var\val} \tmtwo'
      \sep
      \vartwo \notin \fv{\rulesub\var\val}
    }{
      \tm = \tmtwo\esub\vartwo\tmthree \tovv{\rulesub\var\val} \tmtwo'\esub\vartwo\tmthree = \tm'
    }
  \]

  By the \ih on $\tmtwo$, for every value $\valtwo$ there exists 
  $\tmtwo''$ such that $\tmtwo \tovv{\rulesub\var\valtwo} \tmtwo''$.
  By $\alpha$-conversion, we assume $\vartwo \notin \valtwo$, hence 
  $\vartwo \notin \fv{\rulesub\var\valtwo}$.
  Applying rule \ruleVEsL, we derive
  $\tmtwo\esub\vartwo\tmthree \tovv{\rulesub\var\valtwo} 
  \tmtwo''\esub\vartwo\tmthree$, where 
  $\tm'' = \tmtwo''\esub{\vartwo}{\tmthree}$, so we are done.
\item \ruleVEsR. % Análogo al caso \ruleVAppR.
  Analogous to the previous case.
  % Then
  % \[
  %   \indrule{\ruleVEsR}{
  %     \tmthree \tovv{\rulesub\var\val} \tmthree'
  %   }{
  %     \tm = \tmtwo\esub\vartwo\tmthree \tovv{\rulesub\var\val} \tmtwo\esub\vartwo{\tmthree'} = \tm'
  %   }
  % \]

  % By the \ih on $\tmthree$, for every value $\valtwo$ there exists 
  % $\tmthree''$ such that $\tmthree \tovv{\rulesub\var\valtwo} \tmthree''$.
  % Applying rule \ruleVEsR, we derive
  % $\tmtwo\esub\vartwo\tmthree \tovv{\rulesub\var\valtwo} 
  % \tmtwo\esub\vartwo{\tmthree''}$, where 
  % $\tm'' = \tmtwo\esub\vartwo{\tmthree''}$, so we are done.
\end{itemize}
\end{proof}

}

\begin{lemma}
\label{lem:nu_term_reducible_arg_esub_reducible}
Let $\var \notin \vset$.
If $\tm \in \VRed{\vset'}$, then $\tm\esub\var\tmtwo \in \VRed\vset$,
where $\vset' = \vset \cup \set\var$ if $\valPred\tmtwo$, and 
$\vset' = \vset$ otherwise.
\end{lemma}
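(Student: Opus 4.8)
The plan is a direct case analysis on the step kind witnessing $\tm \in \VRed{\vset'}$; no induction is needed, since the work is carried by \cref{lem:nu_sub_change_of_values}, which lets us replace the value in a $\rulesub$-step by any other value. By definition of $\VRed{\vset'}$ there are a step kind $\rulename \in \RulesV{\vset'}$ and a term $\tm'$ with $\tm \tovv\rulename \tm'$, and it suffices to exhibit some step $\tm\esub\var\tmtwo \tovv{\rulename'} \tm''$ with $\rulename' \in \RulesV\vset$.

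If $\rulename \in \set{\ruledb,\rulelsv}$, then $\rulename \in \RulesV\vset$ and $\fv\rulename = \emptyset$, so $\var \notin \fv\rulename$ and rule \ruleVEsL gives $\tm\esub\var\tmtwo \tovv\rulename \tm'\esub\var\tmtwo$. Otherwise $\rulename = \rulesub\vartwo\val$ with $\vartwo \in \vset'$; since $\var \notin \vset$, we distinguish two cases. If $\vartwo \neq \var$, then $\vartwo \in \vset$ regardless of the shape of $\vset'$, so every $\rulesub$-step on $\vartwo$ belongs to $\RulesV\vset$; picking a variable $\varthree \notin \set{\var,\vartwo}$ and applying \cref{lem:nu_sub_change_of_values} yields $\tm \tovv{\rulesub\vartwo\varthree} \tm''$, and since $\var \notin \set\vartwo \cup \set\varthree = \fv{\rulesub\vartwo\varthree}$, rule \ruleVEsL gives $\tm\esub\var\tmtwo \tovv{\rulesub\vartwo\varthree} \tm''\esub\var\tmtwo$. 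If $\vartwo = \var$, then $\var \in \vset'$ forces $\vset' = \vset \cup \set\var$, hence $\valPred\tmtwo$, i.e., $\tmtwo = \val'\sctx$ for some value $\val'$ and substitution context $\sctx$; by the variable convention $\var \notin \fv{\val'}$, so $\rulesub\var{\val'}$ is a legal step kind and \cref{lem:nu_sub_change_of_values} yields $\tm \tovv{\rulesub\var{\val'}} \tm''$, whence rule \ruleVLsv gives $\tm\esub\var{\val'\sctx} \tovv\rulelsv \tm''\esub\var{\val'}\sctx$, with $\rulelsv \in \RulesV\vset$. In every case $\tm\esub\var\tmtwo$ is $\rulename'$-reducible for some $\rulename' \in \RulesV\vset$, so $\tm\esub\var\tmtwo \in \VRed\vset$.

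The delicate part is purely bookkeeping on free variables: rule \ruleVEsL forbids $\var$ from occurring in the step kind, and rule \ruleVLsv forces the substituted value to be exactly the head value of $\tmtwo$; in both nontrivial cases this means the value of the inner $\rulesub$-step must be re-chosen, which is precisely what \cref{lem:nu_sub_change_of_values} supplies, provided one checks that the replacement value yields a legal step kind (hence the choice $\varthree \neq \vartwo$) and that the variable convention excludes $\var \in \fv{\val'}$. I expect this to be the only (minor) obstacle; the remaining steps are immediate from the inductive rules defining $\tovv{}$ and the definition of $\RulesV\vset$.
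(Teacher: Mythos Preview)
Your proof is correct and follows the natural approach: case analysis on the step kind witnessing $\tm \in \VRed{\vset'}$, using rule \ruleVEsL for the $\ruledb$/$\rulelsv$ cases and for $\rulesub\vartwo\val$ with $\vartwo \neq \var$ (after replacing the value via \cref{lem:nu_sub_change_of_values} to ensure $\var \notin \fv\rulename$), and rule \ruleVLsv for the case $\vartwo = \var$ (again replacing the value to match the head value of $\tmtwo$). This is precisely the decomposition the paper uses, and your identification of \cref{lem:nu_sub_change_of_values} as the key auxiliary lemma is on target.
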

% Label: lem:nu_term_reducible_arg_esub_reducible

\begin{proof}
By definition, there exists a step kind 
$\rulename \in \RulesV{\vset'}$ and a term $\tm'$ such that 
$\tm \tovv{\vset'} \tm'$. We have two cases, depending on whether 
$\var \in \fv\rulename$ or not:
\begin{itemize}
\item 
  If $\var \in \fv\rulename$, then $\rulename = \rulesub\vartwo\valtwo$, 
  with $\vartwo \in \vset'$.
  We have two subcases:
  \begin{itemize}
  \item 
    If $\var = \vartwo$, then $\tm \tovv{\rulesub\var\valtwo} \tm'$,
    and we have to analyze the form of $\tmtwo$:
    \begin{itemize}
    \item 
      If $\valPred\tmtwo$, then we can apply rule \ruleVLsv and 
      conclude with $\tm\esub\var\tmtwo \in \VRed\vset$.
    \item 
      If $\neg(\valPred\tmtwo)$, then $\vset' = \vset$.
      This case is impossible since $\var \notin \vset$ by hypothesis
      but at the same time $\var = \vartwo \in \vset$.
    \end{itemize}
  \item 
    If $\var \neq \vartwo$, then $\var \in \fv\valtwo$.
    Let $\valtwo'$ be a value such that $\var \notin \fv{\valtwo'}$.
    We can then apply \cref{lem:nu_sub_change_of_values}, yielding
    $\tm''$ such that $\tm \tovv{\rulesub\vartwo{\valtwo'}} \tm''$.
    Applying rule \ruleVEsL, we derive 
    $\tm\esub\var\tmtwo \tovv{\rulesub\vartwo{\valtwo'}} \tm''\esub\var\tmtwo$,
    with $\vartwo \in \vset$,
    so that $\tm\esub\var\tmtwo \in \VRed\vset$.
  \end{itemize}
\item 
  If $\var \notin \fv\rulename$, then we can apply rule \ruleVEsL to
  reduce $\tm\esub\var\tmtwo$, so we conclude with
  $\tm\esub\var\tmtwo \in \VRed\vset$.
\end{itemize}
\end{proof}

\begin{proposition}[Completeness of \nonuseful normal forms]
If $\tm \in \VIrred\vset$ and $(\abs\tm \implies \appflag = \nonapp)$,
then $\tm \in \VNF\vset\appflag$.
\end{proposition}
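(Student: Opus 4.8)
The plan is to prove this by structural induction on $\tm$, following the term grammar $\tm \eqgram \var \mid \lam\var\tmtwo \mid \tm_1\,\tm_2 \mid \tm_1\esub\var{\tm_2}$, and in each case to build a derivation of $\tm \in \VNF\vset\appflag$ directly, arguing by contradiction against $\tm \in \VIrred\vset$ whenever a premise of the relevant $\VNF{}{}$-rule might fail.

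For the easy cases: if $\tm = \var$, the side condition $\abs\tm \implies \appflag = \nonapp$ is vacuous, and if $\var \in \vset$ then choosing any variable $\vartwo \neq \var$ yields $\var \tovv{\rulesub\var\vartwo} \vartwo$ by \ruleVSub with $\rulesub\var\vartwo \in \RulesV\vset$, contradicting irreducibility; hence $\var \notin \vset$ and \ruleVNFVar applies. If $\tm = \lam\var\tmtwo$ then $\abs\tm$ holds, so $\appflag = \nonapp$ by hypothesis, and \ruleVNFLam concludes. If $\tm = \tm_1\,\tm_2$ then $\abs\tm$ is false (so the side condition is vacuous); the congruence rules \ruleVAppL and \ruleVAppR preserve step kinds, so $\tm_1, \tm_2 \in \VIrred\vset$; moreover $\neg(\abs{\tm_1})$, since otherwise $\tm$ would be a $\dbsym$-redex by \ruleVDb; thus the induction hypothesis applies to $\tm_1$ with flag $\app$ and to $\tm_2$ with flag $\nonapp$ (both side conditions vacuous or trivial), and \ruleVNFApp concludes.

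The interesting case is $\tm = \tm_1\esub\var{\tm_2}$. By $\alpha$-conversion I would assume $\var \notin \vset$. The congruence rule \ruleVEsR preserves step kinds, so $\tm_2 \in \VIrred\vset$, and the induction hypothesis with flag $\nonapp$ gives $\tm_2 \in \VNF\vset\nonapp$. Set $\vset' \eqdef \vset \cup \set\var$ if $\valPred{\tm_2}$ and $\vset' \eqdef \vset$ otherwise. The key step is to obtain $\tm_1 \in \VIrred{\vset'}$ from $\tm \in \VIrred\vset$, which is exactly the contrapositive of \cref{lem:nu_term_reducible_arg_esub_reducible}. Since $\tm = \tm_1\esub\var{\tm_2}$ is of the form $(\lam\vartwo\tmthree)\sctx$ precisely when $\tm_1$ is (the outer ES being absorbed into the substitution context), the hypothesis transfers: $\abs{\tm_1} \implies \appflag = \nonapp$. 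The induction hypothesis then gives $\tm_1 \in \VNF{\vset'}\appflag$, and I would close the case with \ruleVNFEsVal when $\valPred{\tm_2}$ and with \ruleVNFEsNonVal otherwise.

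The hard part is that closure case, and specifically \cref{lem:nu_term_reducible_arg_esub_reducible}: one cannot naively push a reduction of $\tm_1$ through the enclosing ES, because of the variable-capture restriction $\var \notin \fv\rulename$ on \ruleVEsL — a step $\tm_1 \tovv{\rulesub\vartwo\val} \tm_1'$ may have $\var \in \fv\val$, so one must first replace $\val$ by an $\alpha$-equivalent value avoiding $\var$ (hence the need for \cref{lem:nu_sub_change_of_values}), and when $\vartwo = \var$ one must instead route the step through \ruleVLsv to obtain an $\lsvsym$-step on $\tm$. All of this is already bundled into \cref{lem:nu_term_reducible_arg_esub_reducible}, so the induction above only needs to invoke it; the remaining care points are the $\alpha$-renaming making $\var \notin \vset$ and the bookkeeping of whether $\var$ joins the value frame ($\vset'$ versus $\vset$, according to whether $\tm_2$ is of the form $\val\sctx$).
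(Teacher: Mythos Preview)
Your proof is correct and matches the paper's approach: structural induction on $\tm$, with the closure case handled via the contrapositive of \cref{lem:nu_term_reducible_arg_esub_reducible} (which in turn relies on \cref{lem:nu_sub_change_of_values} for the variable-capture issue you correctly identify). Your handling of the side condition $\abs\tm \implies \appflag = \nonapp$ in each case --- vacuous for variables and applications, forced for abstractions, and transferred along the ES via $\abs{\tm_1} \iff \abs{\tm_1\esub\var{\tm_2}}$ --- is exactly what is needed.
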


\begin{proof}
By induction on $\tm$.
\begin{enumerate}
\item $\tm = \var$.
  The only rule that would reduce $\var$ is \ruleVSub.
  By hypothesis, $\var \in \VIrred\vset$ so that $\var \notin \vset$
  by the definition of $\VRed\vset$.
  Then, by rule \ruleVNFVar, we derive $\var \in \VNF\vset\appflag$.
\item $\tm = \lam\var\tmtwo$.
  Then $\appflag = \nonapp$ by the hypothesis, so applying rule 
  \ruleVNFLam we derive $\lam\var\tmtwo \in \VNF\vset\nonapp$.
\item $\tm = \tmtwo \, \tmthree$.
  We necessarily have $\tmtwo \in \VIrred\vset$, since otherwise 
  $\tmtwo \, \tmthree$ would by in $\VRed\vset$ by rule \ruleVAppL.
  Moreover, $\neg\abs\tmtwo$, because otherwise $\tmtwo \, \tmthree$ 
  would reduce by the rule \ruleVDb.
  By the \ih on $\tmtwo$, we yield (1) $\tmtwo \in \VNF\vset\app$.
  
  Likewise, we necessarily have $\tmthree \in \VIrred\vset$, since 
  otherwise $\tmtwo \, \tmthree$ would by in $\VRed\vset$ by rule 
  \ruleVAppR.
  Then, (2) $\tmthree \in \VNF\vset\nonapp$ by the \ih on $\tmthree$.
  Applying rule \ruleVNFApp with (1) and (2) as premises, we derive 
  $\tmtwo \, \tmthree \in \VNF\vset\appflag$.
\item $\tm = \tmtwo\esub\var\tmthree$.
  By $\alpha$-conversion, we assume $\var \notin \vset$.
  We necessarily have $\tmthree \in \VIrred\vset$, since otherwise 
  $\tmtwo\esub\var\tmthree \in \VRed\vset$ by rule \ruleVEsR.
  Then, (1) $\tmthree \in \VNF\vset\nonapp$ by the \ih on $\tmthree$.
  We have to analyze two cases, depending on whether $\valPred\tmtwo$
  holds or not. Both cases are analogous, so we only focus on the case
  where $\valPred\tmtwo$ holds.
  Since $\tmtwo\esub\var\tmthree \in \VIrred\vset$, then 
  $\tmtwo \in \VIrred{\vset \cup \set\var}$ by contraposition of 
  \cref{lem:nu_term_reducible_arg_esub_reducible}.
  Moreover, $\appflag = \nonapp$, so the implication in the 
  hypothesis trivially holds.
  Therefore, we can apply the \ih on $\tmtwo$, yielding
  (2) $\tmtwo \in \VNF{\vset \cup \set\var}\appflag$. 
  Applying rule \ruleVNFEsVal to (1) and (2), we derive
  $\tmtwo\esub\var\tmthree \in \VNF\vset\appflag$.
\end{enumerate}
\end{proof}

The following corollary combines the results of soundness and 
completeness, so that the set of terms that are in normal form 
according to the inductive predicate $\VNF\vset\appflag$ is exactly 
the set $\VIrred\vset$.
Recall that the parameters $\vset$ and $\appflag$ are used in the
definition of \LOCBV to define evaluation \emph{inductively}.
However, we are actually interested in the \emph{top-level} case, \ie, 
in the evaluation of an \emph{isolated} term. 
In the case of top-level term, the value frame $\vset$ is empty, 
because there is no surrounding context binding any variable, and the 
\positionalFlag $\appflag$ is taken to be non-applied ($\nonapp$), 
because an isolated term is never applied.

\begin{corollary}[Characterisation of \nonuseful normal forms]
\label{thm:characterization_of_non_useful_normal_forms}
$\tm \in \VNF\emptyset\nonapp$ iff $\tm \in \VIrred\emptyset$.
\end{corollary}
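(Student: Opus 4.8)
The plan is to obtain the corollary directly from the two preceding propositions —soundness and completeness of \nonuseful normal forms— by instantiating their parameters at the top-level case $\vset = \emptyset$ and $\appflag = \nonapp$. No new argument is needed; the statement is just the conjunction of the two one-directional results at this particular choice of parameters.

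For the left-to-right implication, suppose $\tm \in \VNF\emptyset\nonapp$. I would apply soundness of \nonuseful normal forms with $\vset = \emptyset$ and $\appflag = \nonapp$, which immediately gives $\tm \in \VIrred\emptyset$. For the right-to-left implication, suppose $\tm \in \VIrred\emptyset$. I would invoke completeness of \nonuseful normal forms with $\vset = \emptyset$ and $\appflag = \nonapp$. Its side condition $(\abs\tm \implies \appflag = \nonapp)$ is vacuously satisfied here, since the consequent $\appflag = \nonapp$ holds by the very choice of $\appflag$; hence the proposition applies and yields $\tm \in \VNF\emptyset\nonapp$. Combining the two implications gives the stated equivalence.

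There is essentially no obstacle at this level: all the real work has already been discharged in the soundness and completeness propositions and their supporting lemmas (such as \cref{lem:nu_sub_change_of_values} and \cref{lem:nu_term_reducible_arg_esub_reducible}). The only point worth making explicit is \emph{why} the top-level instantiation $\appflag = \nonapp$ is the right one: an isolated term never occurs in an applied position, so fixing the positional flag to $\nonapp$ loses nothing, and it conveniently forces the abstraction side condition of completeness to be trivially true —which is exactly what allows the two one-directional statements to combine into an \emph{exact} characterisation of $\VIrred\emptyset$ with no residual hypothesis. If anything deserves care in the write-up, it is simply recalling that $\emptyset$ is a legitimate value frame and that $\nonapp$ satisfies the side conditions appearing in the completeness proposition, so that both instantiations are indeed licit.
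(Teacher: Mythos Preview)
Your proposal is correct and matches the paper's approach exactly: the corollary is obtained by instantiating the soundness and completeness propositions at $\vset = \emptyset$, $\appflag = \nonapp$, with the side condition of completeness becoming vacuous because $\appflag = \nonapp$. There is nothing to add.
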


An example of this characterisation is the term $\vartwo\esub\var\vartwo$
we showed at the beginning of the section: it is in 
$\VNF\emptyset\nonapp$, as shown above, and it is in 
$\VIrred\emptyset$, since none of the rules in $\RulesV\emptyset$ are 
applicable to it.

\section{Proofs of Section~\ref{sec:usefulcbv} ``Useful Call-by-Value''}
\label{app:usefulcbv}
This section of the appendix is organised in two subsections:
in \cref{sec:normal-forms-useful}, we give an inductive 
characterisation of normal forms.
In \cref{sec:diamond}, we show that \UOCBV enjoys the diamond property.

\subsection{Normal Forms for Useful Open Call-by-Value}
\label{sec:normal-forms-useful}

This section provides a syntactic characterisation of
normal forms of \UOCBV,
together with their corresponding soundness and completeness results
(\cref{thm:characterization_of_useful_normal_forms}).

Characterising normal forms for useful evaluation is not simple~\cite{AccattoliL22}.
Our inductive characterisation is similar to the one given for the 
\LOCBV calculus, in that it uses parameters.
Specifically, we use an \abstractionframe $\aset$, a structure frame
$\sset$, and a \positionalFlag $\appflag \in \set{\app, \nonapp}$.
These three parameters give information about the evaluation context
in which the (sub)term is considered to be a normal form.
As explained before in~\cref{sec:usefulcbv}, the frames $\aset$ and 
$\sset$ track variables that are hereditary abstractions or 
structures, respectively, whereas the \positionalFlag tracks applied
positions of subterms with respect to the context. 
For the top-level terms, the positional element is always in a 
non-applied position (\ie, $\nonapp$), as in the characterisation
of normal forms of \LOCBV. 
The set of \defn{normal forms} of \UOCBV under $\aset$, $\sset$, 
and $\appflag$, also called \defn{$(\aset,\sset,\appflag)$-normal forms}, 
is written $\NF\aset\sset\appflag$, and is defined inductively as:
\[
  \inferrule{
    \var \in \aset \implies \appflag = \nonapp
  }{
    \var \in \NF\aset\sset\appflag
  }\ruleUNFVar
  \HS
  \inferrule{
  }{ 
    \lam\var\tm \in \NF\aset\sset\nonapp
  }\ruleUNFLam
  \HS
  \inferrule{
    \tm \in \NF\aset\sset\app
    \sep
    \tmtwo \in \NF\aset\sset\nonapp
  }{
    \tm\,\tmtwo \in \NF\aset\sset\appflag
  }\ruleUNFApp
\]
\[
  \inferrule{
    \tm \in \NF{\aset \cup \set\var}\sset\appflag
    \sep
    \tmtwo \in \NF\aset\sset\nonapp
    \sep
    \tmtwo \in \HAbs\aset
  }{
    \tm\esub\var\tmtwo \in \NF\aset\sset\appflag
  }\ruleUNFEsAbs
  \HS
  \inferrule{
    \tm \in \NF\aset{\sset \cup \set\var}\appflag
    \sep
    \tmtwo \in \NF\aset\sset\nonapp
    \sep
    \tmtwo \in \Struct\sset
  }{
    \tm\esub\var\tmtwo \in \NF\aset\sset\appflag
  }\ruleUNFEsStruct
\]

If a variable is an hereditary abstraction ($\var \in \aset$) and it 
is applied ($\appflag = \app$), substituting the variable contributes 
to creating a $\dbsym$-redex, so the variable is not in normal form. 
Otherwise, it is in normal form, according to rule \ruleUNFVar.
For example, the term $\var\esub\var\id$ is in normal form if 
non-applied ($\appflag = \nonapp$), whereas it reduces to 
$\id\esub\var\id$ if applied ($\appflag = \app$).
To derive $\var\esub\var\id \in \NF\emptyset\emptyset\app$, we would
end up with a premise stating 
$\var \in \set\var \implies \app = \nonapp$, which does not hold.

Abstractions are in normal form if they are not applied.
In rule \ruleUNFEsAbs (resp. \ruleUNFEsStruct), the abstraction 
(resp. structure) frame of the left premise is extended with the 
bound variable in the ES, given that it is bound to an hereditary 
abstraction (resp. structure), exactly as in the left premises of the 
reduction rule \ruleUEsLAbs (resp. \ruleUEsLStruct).

We can now present the two main results of this section.
The first one is soundness of the syntactic characterisation of 
normal forms with respect to the parametrized reduction rules.
More specifically, we state that given any abstraction and structure 
frames $\aset$ and $\sset$ and a \positionalFlag $\appflag$, a term 
in normal form under these parameters belongs to the set
$\Irred\aset\sset\appflag$ of irreducible terms.

\begin{lemma}
\label{lem:absnf}
For any $\aset$ and $\sset$, 
$(\lam\var\tm)\sctx \notin \NF\aset\sset\app$.
\end{lemma}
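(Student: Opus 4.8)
The plan is to argue by induction on the substitution context $\sctx$, using inversion on the inductive definition of $\NF\aset\sset\appflag$ at each stage. The key observation driving the whole argument is that the positional flag in the conclusion of each rule is constrained: rule \ruleUNFLam can only place an abstraction in $\NF\aset\sset\nonapp$, never in $\NF\aset\sset\app$, and the closure rules \ruleUNFEsAbs and \ruleUNFEsStruct propagate the same flag $\appflag$ down to their left premise while only \emph{enlarging} the frames.

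\textbf{Base case} $\sctx = \ctxhole$. Here $(\lam\var\tm)\sctx = \lam\var\tm$. I would inspect the five rules defining $\NF$: the only one whose conclusion can match a term of the form $\lam\var\tm$ is \ruleUNFLam, and it yields membership exclusively with positional flag $\nonapp$. Hence $\lam\var\tm \notin \NF\aset\sset\app$ for any $\aset,\sset$.

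\textbf{Inductive step} $\sctx = \sctx'\esub\vartwo\tmtwo$. Then $(\lam\var\tm)\sctx = ((\lam\var\tm)\sctx')\esub\vartwo\tmtwo$, a closure. Suppose for contradiction it lies in $\NF\aset\sset\app$. By inversion, the derivation must end with either \ruleUNFEsAbs or \ruleUNFEsStruct, since these are the only rules with a closure as conclusion. In the first case the left premise gives $(\lam\var\tm)\sctx' \in \NF{\aset\cup\set\vartwo}\sset\app$; in the second it gives $(\lam\var\tm)\sctx' \in \NF\aset{\sset\cup\set\vartwo}\app$. Either way, the flag is still $\app$ and the term is $(\lam\var\tm)\sctx'$ with the strictly smaller context $\sctx'$, so the induction hypothesis (applied with the enlarged frame) yields a contradiction. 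This closes the lemma.

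I do not expect any real obstacle: the lemma is a pure structural inversion fact, and the only thing to be careful about is enumerating exactly which rules of the inductive definition of $\NF$ can have a conclusion of the required syntactic shape (an abstraction in the base case, a closure in the inductive step), and then reading off the positional flag. This mirrors the earlier \cref{lem:absVNF} for \LOCBV, proved there by the same induction on $\sctx$.
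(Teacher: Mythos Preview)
Your proposal is correct and matches the paper's approach: the paper simply states that the proof is straightforward by induction on $\sctx$, and your inversion argument at each step (only \ruleUNFLam matches an abstraction and forces $\nonapp$; only \ruleUNFEsAbs/\ruleUNFEsStruct match a closure and propagate the flag $\app$ to the left premise) is exactly the intended unfolding of that induction.
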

\hiddenproof{
  The proof is straightforward by induction on $\sctx$.
}{
  % Label: lem:absnf

\begin{proof}
By induction on $\sctx$.
\begin{itemize}
\item $\sctx = \ctxhole$.
  Immediate, since $\lam\var\tm \in \NF\aset\sset\app$ cannot be 
  derived by any rule.
\item $\sctx = \sctx'\esub\vartwo\tmtwo$. 
  We reason by contradiction.
  Suppose 
  $(\lam\var\tm)\sctx'\esub\vartwo\tmtwo \in \NF\aset\sset\app$,
  which can be derived by two different reduction rules.
  Both cases are analogous, so we analyze when we use rule \ruleUNFEsAbs.
  In particular, $(\lam\var\tm)\sctx' \in \NF{\aset \cup \set\vartwo}\sset\app$, 
  but we reach a contradiction since 
  $(\lam\var\tm)\sctx' \notin \NF{\aset \cup \set\vartwo}\sset\app$
  by the \ih on $\sctx'$.
  Therefore, it must be the case that
  $(\lam\var\tm)\sctx'\esub\vartwo\tmtwo \notin \NF\aset\sset\app$,
  so we are done.
\end{itemize}
\end{proof}

}

\begin{lemma}
\label{lem:disjunction}
If $\aset \disj \sset$,
then either $\tm \notin \HAbs\aset$ or $\tm \notin \Struct\sset$.
\end{lemma}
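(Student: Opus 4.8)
The plan is to prove the contrapositive by structural induction on $\tm$: if $\tm \in \HAbs\aset$ and $\tm \in \Struct\sset$, then $\aset \cap \sset \neq \emptyset$. The lemma then follows immediately by contraposition, since ``$\tm \notin \HAbs\aset$ or $\tm \notin \Struct\sset$'' is exactly the negation of ``$\tm \in \HAbs\aset$ and $\tm \in \Struct\sset$''.

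First I would dispatch the easy shapes of $\tm$. If $\tm = \lam\var\tmtwo$, then $\tm \notin \Struct\sset$ by \cref{rem:habs_st}(5) (taking the substitution context to be empty), so the hypothesis is vacuous; symmetrically, if $\tm = \tmtwo\,\tmthree$ is an application, then $\tm \notin \HAbs\aset$ by \cref{rem:habs_st}(2), again vacuous. If $\tm = \var$, the only rule deriving $\var \in \HAbs\aset$ is \ruleHAbsVar, so $\var \in \aset$; and the only rule deriving $\var \in \Struct\sset$ is \ruleStructVar, so $\var \in \sset$. Hence $\var \in \aset \cap \sset$, and we are done.

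The interesting case is $\tm = \tmtwo\esub\var\tmthree$. The judgement $\tm \in \HAbs\aset$ was derived either by \ruleHAbsSubi (so $\tmtwo \in \HAbs\aset$ and $\var \notin \aset$) or by \ruleHAbsSubii (so $\tmtwo \in \HAbs{\aset \cup \set\var}$, $\var \notin \aset$, and $\tmthree \in \HAbs\aset$); symmetrically, $\tm \in \Struct\sset$ was derived either by \ruleStructSubi (so $\tmtwo \in \Struct\sset$ and $\var \notin \sset$) or by \ruleStructSubii (so $\tmtwo \in \Struct{\sset \cup \set\var}$, $\var \notin \sset$, and $\tmthree \in \Struct\sset$). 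This yields four sub-cases. In the three sub-cases where both derivations recurse on $\tmtwo$, I apply the induction hypothesis to $\tmtwo$ under the relevant (possibly extended) frames, obtaining a common element in their intersection; then, using the side conditions $\var \notin \aset$ and/or $\var \notin \sset$ to rule out $\var$ as that element, I conclude $\aset \cap \sset \neq \emptyset$. In the remaining sub-case (\ruleHAbsSubii with \ruleStructSubii) one also has $\tmthree \in \HAbs\aset$ and $\tmthree \in \Struct\sset$, so the induction hypothesis applied directly to $\tmthree$ already gives $\aset \cap \sset \neq \emptyset$.

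The only point requiring care — and the main (though still routine) obstacle — is the bookkeeping in these four ES sub-cases: one must pick the correct subterm ($\tmtwo$ versus $\tmthree$) and the correct frames when invoking the induction hypothesis, and use the fact that the bound variable $\var$ is excluded from $\aset$ (resp.\ $\sset$) by the side conditions of \ruleHAbsSubi/\ruleHAbsSubii (resp.\ \ruleStructSubi/\ruleStructSubii) precisely to push the nonempty intersection back down to $\aset \cap \sset$. Everything else is immediate from the shape of the inference rules.
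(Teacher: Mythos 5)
Your proposal is correct and takes essentially the same route as the paper: induction with inversion on the $\HAbs{}$/$\Struct{}$ rules, discharging the abstraction and application cases via \cref{rem:habs_st}, extracting $\var \in \aset \cap \sset$ in the variable case, and in the explicit-substitution case using the side conditions $\var \notin \aset$ and $\var \notin \sset$ to push the nonempty intersection down from the extended frames, switching to $\tmthree$ in the \ruleHAbsSubii/\ruleStructSubii subcase. In particular, your choice of recursing on $\tmthree$ (rather than $\tmtwo$) in that last subcase is exactly the step that makes the argument go through, since both frames extended with $\var$ would intersect trivially.
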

% Label: lem:disjunction

\begin{proof}
By induction on $\tm$.
\begin{itemize}
\item $\tm = \var$.
  Since $\aset \disj \sset$,
  $\var$ cannot be simultaneously in $\aset$ and $\sset$,
  so either $\var \notin \aset$ and $\var \notin \HAbs\aset$,
  or $\var \notin \sset$ and $\var \notin \Struct\sset$.
\item $\tm = \lam\var\tmtwo$.
  Immediate, since $\lam\var\tmtwo \in \Struct\sset$ cannot be 
  derived by any rule, thus $\lam\var\tmtwo \notin \Struct\sset$.
\item $\tm = \tmtwo\,\tmthree$.
  Immediate, since $\tmtwo\,\tmthree \in \HAbs\aset$ cannot be 
  derived by any rule, thus $\tmtwo\,\tmthree \notin \HAbs\aset$.
\item $\tm = \tmtwo\esub\var\tmthree$.
  To show that $\tm \notin \HAbs\aset$ or $\tm \notin \Struct\sset$,
  we assume $\tm \in \HAbs\aset$, and argue that $\tm \notin \Struct\sset$.
  We can assume $\var \notin \aset \cup \sset$ by $\alpha$-conversion.
  We have the following subcases, depending on the rule used to 
  derive $\tm \in \HAbs\aset$:
  \begin{itemize}
  \item \ruleHAbsSubi.
    Then $\tmtwo \in \HAbs\aset$ and $\var \notin \aset$.
    By the \ih on $\tmtwo$, we yield $\tmtwo \notin \Struct\sset$.
    Therefore, $\tmtwo\esub\var\tmthree \in \Struct\sset$ cannot be 
    derived by rule \ruleStructSubi.
    Similarly, since $\aset \disj (\sset \cup \set\var)$, then 
    $\tmtwo \notin \Struct{\sset \cup \set\var}$ by the \ih on $\tmtwo$.
    Therefore, $\tmtwo\esub\var\tmthree \in \Struct\sset$ cannot be 
    derived using rule \ruleStructSubii.
  \item \ruleHAbsSubii.
    Then $\tmtwo \in \HAbs{\aset \cup \set\var}$, $\var \notin \aset$, 
    and $\tmthree \in \HAbs\aset$ hold.
    Since $(\aset \cup \set\var) \disj \sset$, then 
    $\tmtwo \notin \Struct\sset$ by the \ih on $\tmtwo$.
    Therefore, $\tmtwo\esub\var\tmthree \in \Struct\sset$ cannot be 
    derived using rule \ruleStructSubi.

    On the other hand, $\tmthree \notin \Struct\sset$ by the \ih on 
    $\tmthree$, so $\tmtwo\esub\var\tmthree \in \Struct\sset$ cannot 
    be derived using rule \ruleStructSubii.
  \end{itemize}
  We thus conclude with $\tmtwo\esub\var\tmthree \notin \Struct\sset$.
\end{itemize}
\end{proof}

\begin{proposition}[Soundness of Useful Normal Forms]
\label{prop:soundness-characterization-normal-forms}
If $\inv\aset\sset\tm$ and $\tm \in \NF\aset\sset\appflag$,
then $\tm \in \Irred\aset\sset\appflag$.
\end{proposition}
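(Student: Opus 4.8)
The plan is to argue by induction on the derivation of $\tm \in \NF\aset\sset\appflag$. In each case I assume, toward a contradiction, that $\tm \tov\rulename\aset\sset\appflag \tm'$ for some step kind $\rulename$ and term $\tm'$, and I perform a case analysis on the last rule of this reduction derivation; only the reduction rules whose left-hand side matches the shape of $\tm$ need to be considered. Whenever I invoke the induction hypothesis on an immediate subterm of $\tm$, I must check that this subterm satisfies the correctness property; this will always follow from the global hypothesis $\inv\aset\sset\tm$ together with the freshness side condition ($\var \notin \aset \cup \sset$) carried by whichever congruence rule is being analysed, which guarantees that extending a frame with the bound variable keeps the two frames disjoint and keeps the free variables of the subterm covered.

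Consider first the three base-shape cases. Case \ruleUNFVar: $\tm = \var$, and the only rule with a variable on the left is \ruleUSub, whose conclusion is $\var \tov{\rulesub\var\val}{\aset \cup \set\var}\sset\app \val$; for this to be a step with parameters $(\aset,\sset,\appflag)$ we would need $\var \in \aset$ and $\appflag = \app$, contradicting the side condition $\var \in \aset \implies \appflag = \nonapp$. Case \ruleUNFLam: no reduction rule has an abstraction as its left-hand side (reduction is weak), so $\lam\var\tm$ is irreducible. Case \ruleUNFApp: $\tm = \tmtwo\,\tmthree$, and the applicable rules are \ruleUDb, \ruleUAppL, \ruleUAppR. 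Rule \ruleUDb would force $\tmtwo$ to be of the form $(\lam\var\tmfour)\sctx$, contradicting \cref{lem:absnf} since $\tmtwo \in \NF\aset\sset\app$; rule \ruleUAppL would give $\tmtwo \tov\rulename\aset\sset\app \tmtwo'$, contradicting the induction hypothesis applied to $\tmtwo \in \NF\aset\sset\app$; rule \ruleUAppR would give $\tmthree \tov\rulename\aset\sset\nonapp \tmthree'$, contradicting the induction hypothesis applied to $\tmthree \in \NF\aset\sset\nonapp$.

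Now the explicit-substitution cases, where $\tm = \tmtwo\esub\var\tmthree$ and the candidate rules are \ruleULsv, \ruleUEsR, \ruleUEsLAbs, \ruleUEsLStruct. Case \ruleUNFEsAbs (so $\tmthree \in \HAbs\aset$): rules \ruleULsv and \ruleUEsLAbs would reduce $\tmtwo$ under $(\aset \cup \set\var, \sset, \appflag)$, contradicting the induction hypothesis on $\tmtwo \in \NF{\aset\cup\set\var}\sset\appflag$ (after checking the invariant $\inv{\aset\cup\set\var}\sset\tmtwo$ as indicated above, using $\var \notin \aset \cup \sset$); rule \ruleUEsR would reduce $\tmthree$ under $(\aset,\sset,\nonapp)$, contradicting the induction hypothesis on $\tmthree \in \NF\aset\sset\nonapp$; rule \ruleUEsLStruct requires $\tmthree \in \Struct\sset$, which is impossible by \cref{lem:disjunction} (using $\aset \disj \sset$ and $\tmthree \in \HAbs\aset$). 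Case \ruleUNFEsStruct (so $\tmthree \in \Struct\sset$) is symmetric: rules \ruleULsv and \ruleUEsLAbs require $\tmthree \in \HAbs\aset$, impossible by \cref{lem:disjunction}; rule \ruleUEsR contradicts the induction hypothesis on $\tmthree$; rule \ruleUEsLStruct would reduce $\tmtwo$ under $(\aset, \sset \cup \set\var, \appflag)$, contradicting the induction hypothesis on $\tmtwo \in \NF\aset{\sset\cup\set\var}\appflag$.

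The one genuinely delicate point — the "main obstacle", though it is still routine — is maintaining the correctness property $\inv{}{}{}$ along the induction when descending into the body of an explicit substitution: one must verify that extending the abstraction frame (resp. structure frame) with the bound variable keeps the two frames disjoint and keeps the free variables of the body within their union. Both facts follow immediately from $\inv\aset\sset\tm$ and the freshness side condition $\var \notin \aset \cup \sset$ present in each of \ruleULsv, \ruleUEsLAbs, \ruleUEsLStruct, so no real difficulty arises; everything else is a straightforward shape analysis on the reduction rules.
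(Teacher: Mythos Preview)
Your proof is correct and follows essentially the same approach as the paper: induction on the derivation of $\tm \in \NF\aset\sset\appflag$, then for each case a contradiction-by-case-analysis on the last reduction rule, using \cref{lem:absnf} to exclude \ruleUDb in the application case and \cref{lem:disjunction} to exclude the mismatched $\HAbs{}$/$\Struct{}$ congruence rules in the explicit-substitution cases. The invariant maintenance you flag as the ``main obstacle'' is indeed routine (handled either by the $\var\notin\aset\cup\sset$ side conditions or by $\alpha$-conversion), exactly as you say.
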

% Label: prop:soundness-characterization-normal-forms

\begin{proof}
By induction on the derivation of $\tm \in \NF\aset\sset\appflag$.
\begin{itemize}
\item \ruleUNFVar.
  Then $\var \in \NF\aset\sset\appflag$, with premise
  $\var \in \aset \implies \appflag = \nonapp$.
  The only rule for reducing variables is \ruleUSub,
  with the condition that $\var \in \aset$.
  This implies by hypothesis that $\appflag$ must be $\nonapp$, contrary 
  to the request of rule \ruleUSub that $\appflag = \app$.
  Hence, no reduction rule applies and we are done.
\item \ruleUNFLam. 
  Immediate, since $\tm$ is an abstraction, and there are no rules 
  for reducing terms of this form.
\item \ruleUNFApp. 
  Then, $\tm = \tmtwo\,\tmthree$ and
  \[
    \inferrule{
      \tmtwo \in \NF\aset\sset\app
      \sep
      \tmthree \in \NF\aset\sset\nonapp
    }{
      \tmtwo\,\tmthree \in \NF\aset\sset\appflag
    }\ruleUNFApp
  \]

  Moreover, $\inv\aset\sset{\tmtwo\,\tmthree }$ implies 
  $\inv\aset\sset\tmtwo$ and $\inv\aset\sset\tmthree$.
  There are three rules for reducing applications: \ruleUDb, \ruleUAppL,
  and \ruleUAppR.
  We show that none of these apply:
  \begin{itemize}
    \item
      We cannot apply rule \ruleUDb to conclude 
      $\tmtwo\,\tmthree \in \Red\aset\sset\appflag$, because it would 
      imply that $\abs\tmtwo$ at the same time that 
      $\tmtwo \in \NF\aset\sset\app$, which contradicts \cref{lem:absnf}.
    \item
      By the \ih on $\tmtwo$, we yield $\tmtwo \in \Irred\aset\sset\app$.
      Then, we cannot apply rule \ruleUAppL to conclude with
      $\tmtwo\,\tmthree \in \Red\aset\sset\appflag$.
    \item
      By the \ih on $\tmthree$, we yield $\tmthree \in \Irred\aset\sset\nonapp$.
      Then, we cannot apply rule \ruleUAppR to conclude with
      $\tmtwo\,\tmthree \in \Red\aset\sset\appflag$.
  \end{itemize}
  And we are done since $\tmtwo\,\tmthree \in \Irred\aset\sset\appflag$ 
  by definition.
\item \ruleUNFEsAbs. 
  Then, $\tm = \tmtwo\esub\var\tmthree$ and
  \[
    \inferrule{
      \tmtwo \in \NF{\aset \cup \set\var}\sset\appflag
      \sep
      \tmthree \in \NF\aset\sset\nonapp
      \sep
      \tmthree \in \HAbs\aset
    }{
      \tmtwo\esub\var\tmthree \in \NF\aset\sset\appflag
    }\ruleUNFEsAbs
  \]
  We assume $\var \notin \aset\cup\sset$ by $\alpha$-conversion.
  Moreover, $\inv\aset\sset{\tmtwo\esub\var\tmthree}$ implies 
  $\inv{\aset \cup \set\var}\sset\tmtwo$ and $\inv\aset\sset\tmthree$.
  There are four rules for reducing closures: \ruleULsv, \ruleUEsR,
  \ruleUEsLAbs, and \ruleUEsLStruct.
  We show that none of these apply:
  \begin{itemize}
    \item 
      By the \ih on $\tmtwo$, we yield 
      $\tmtwo \in \Irred{\aset \cup \set\var}\sset\appflag$.
      Then, we cannot apply rules \ruleULsv and \ruleUEsLAbs to 
      conclude with $\tmtwo\esub\var\tmthree \in \Red\aset\sset\appflag$.
    \item 
      By the \ih on $\tmthree$, we yield 
      $\tmthree \in \Irred\aset\sset\nonapp$.
      Then, we cannot apply rule \ruleUEsR to conclude with 
      $\tmtwo\esub\var\tmthree \in \Red\aset\sset\appflag$.
    \item 
      Since $\aset \disj \sset$ holds by the invariant,
      and $\tmthree \in \HAbs\aset$ holds by the hypothesis,
      then $\tmthree \notin \Struct\sset$ by \cref{lem:disjunction}.
      Hence, we cannot apply rule \ruleUEsLStruct to conclude with 
      $\tmtwo\esub\var\tmthree \in \Red\aset\sset\appflag$. 
  \end{itemize}
  And we are done since $\tmtwo\esub\var\tmthree \in \Irred\aset\sset\appflag$ 
  by definition.
\item \ruleUNFEsStruct. 
  Then, $\tm = \tmtwo\esub\var\tmthree$ and
  \[
    \inferrule{
      \tmtwo \in \NF\aset{\sset \cup \set\var}\appflag
      \sep
      \tmthree \in \NF\aset\sset\nonapp
      \sep
      \tmthree \in \Struct\sset
    }{
      \tmtwo\esub\var\tmthree \in \NF\aset\sset\appflag
    }\ruleUNFEsStruct
  \]
  We assume $\var \notin \aset\cup\sset$ by $\alpha$-conversion.
  Moreover, $\inv\aset\sset{\tmtwo\esub\var\tmthree}$ implies 
  $\inv\aset{\sset \cup \set\var}\tmtwo$ and $\inv\aset\sset\tmthree$.
  There are four rules for reducing closures: \ruleULsv, \ruleUEsR,
  \ruleUEsLAbs, and \ruleUEsLStruct.
  We show that none of these apply:
  \begin{itemize}
    \item 
      We cannot apply rule \ruleULsv to conclude 
      $\tmtwo\esub\var\tmthree \in \Red\aset\sset\appflag$, because 
      it would imply that $\tmthree = \val\sctx \in \HAbs\sset$, while
      at the same time $\tmthree \in \Struct\sset$ by hypothesis,
      which contradicts \cref{lem:disjunction}, as $\aset \disj \sset$
      holds by the invariant.
    \item 
      By the \ih on $\tmthree$, we yield 
      $\tmthree \in \Irred\aset\sset\nonapp$. 
      Then, we cannot apply rule \ruleUEsR to conclude with 
      $\tmtwo\esub\var\tmthree \in \Red\aset\sset\appflag$.
    \item 
      By the \ih on $\tmtwo$, we yield 
      $\tmtwo \in \Irred\aset{\sset \cup \set\var}\appflag$.
      Then, we cannot apply rule \ruleUEsLStruct to conclude with 
      $\tmtwo\esub\var\tmthree \in \Red\aset\sset\appflag$.
    \item 
      Since $\aset \disj \sset$ holds, and $\tmthree \in \Struct\sset$ 
      holds by the hypothesis, then $\tmthree \notin \HAbs\aset$ by 
      \cref{lem:disjunction}.
      Hence, we cannot apply rule \ruleUEsLAbs to conclude with 
      $\tmtwo\esub\var\tmthree \in \Red\aset\sset\appflag$.
  \end{itemize}
  And we are done since $\tmtwo\esub\var\tmthree \in \Irred\aset\sset\appflag$ 
  by definition.
\end{itemize}
\end{proof}

We now yield the lemmas necessary to achieve completeness of the
inductive definition of normal forms (\cref{prop:completeness_nf}).

\begin{lemma}
\label{lem:nf_in_HAbs_or_Struct}
Let $\inv\aset\sset\tm$.
If $\tm \in \NF\aset\sset\appflag$, then 
$\tm \in \HAbs\aset \cup \Struct\sset$.
Furthermore, if $\appflag = \app$, then $\tm \in \Struct\sset$.
\end{lemma}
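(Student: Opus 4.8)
The plan is to proceed by induction on the derivation of $\tm \in \NF\aset\sset\appflag$, using the six inference rules that define the predicate. The correctness hypothesis $\inv\aset\sset\tm$ (namely $\aset \disj \sset$ and $\fv\tm \subseteq \aset \cup \sset$) must be threaded through the induction, checking at each step that it still holds for the subterms. I expect to appeal to \cref{rem:habs_st} frequently, in particular to the monotonicity facts (1) and (4) and to the fact that any $(\lam\var\tm)\sctx$ lies in $\HAbs\aset$ for every $\aset$.

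\textbf{Base cases.} For \ruleUNFVar, we have $\tm = \var$ with the side condition $\var \in \aset \implies \appflag = \nonapp$. Since $\fv\var = \set\var \subseteq \aset \cup \sset$, either $\var \in \aset$, giving $\var \in \HAbs\aset$ by \ruleHAbsVar, or $\var \in \sset$, giving $\var \in \Struct\sset$ by \ruleStructVar. For the ``furthermore'' part: if $\appflag = \app$, the side condition forces $\var \notin \aset$, hence $\var \in \sset$ and $\var \in \Struct\sset$, as required. For \ruleUNFLam, $\tm = \lam\var\tm'$ with $\appflag = \nonapp$; then $\tm \in \HAbs\aset$ by \ruleHAbsLam, and the ``furthermore'' clause is vacuous since $\appflag \neq \app$.

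\textbf{Inductive cases.} For \ruleUNFApp, $\tm = \tm_1 \, \tm_2$ with $\tm_1 \in \NF\aset\sset\app$. The induction hypothesis applied to $\tm_1$ (whose free variables are contained in $\fv\tm \subseteq \aset\cup\sset$, so $\inv\aset\sset{\tm_1}$ holds) together with the ``furthermore'' clause in the app-position gives $\tm_1 \in \Struct\sset$; then $\tm_1 \, \tm_2 \in \Struct\sset$ by \ruleStructApp, which handles both the main claim and the ``furthermore'' clause (an application is never in $\HAbs\aset$ by \cref{rem:habs_st}(2), so being a structure is the only option anyway). For \ruleUNFEsAbs, $\tm = \tm_1\esub\var{\tm_2}$ with $\tm_1 \in \NF{\aset\cup\set\var}\sset\appflag$ and $\tm_2 \in \HAbs\aset$; here I would first check the side condition implicit in $\inv{}{}{}$ that $\var \notin \aset$ (this comes from the reduction-rule format and the convention that $\var$ is fresh, and I should verify it is part of what $\inv$ guarantees or follows from $\alpha$-renaming), so that $\inv{\aset\cup\set\var}\sset{\tm_1}$ holds. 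By the induction hypothesis, $\tm_1 \in \HAbs{\aset\cup\set\var} \cup \Struct\sset$. If $\tm_1 \in \HAbs{\aset\cup\set\var}$, then $\tm_1\esub\var{\tm_2} \in \HAbs\aset$ by \ruleHAbsSubii (using $\tm_2 \in \HAbs\aset$ and $\var\notin\aset$); if instead $\tm_1 \in \Struct\sset$, then $\tm_1\esub\var{\tm_2} \in \Struct\sset$ by \ruleStructSubi (using $\var\notin\sset$, which follows from $\aset\disj\sset$ and... wait, we need $\var\notin\sset$: this should also be available from the invariant). In the ``furthermore'' case $\appflag = \app$, the induction hypothesis gives $\tm_1 \in \Struct\sset$ directly, so $\tm \in \Struct\sset$. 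The case \ruleUNFEsStruct is symmetric, using \ruleStructSubii and \ruleHAbsSubi instead.

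\textbf{Main obstacle.} The delicate point is the ES cases: I must be careful that the freshness/disjointness side conditions ($\var \notin \aset$, $\var \notin \sset$) needed to apply \ruleHAbsSubi/\ruleHAbsSubii and \ruleStructSubi/\ruleStructSubii are genuinely available. These are built into the reduction and normal-form rule formats (\ruleUEsLAbs, \ruleUEsLStruct, \ruleUNFEsAbs, \ruleUNFEsStruct all carry $\var\notin\aset\cup\sset$ as a premise), so the derivation of $\tm \in \NF\aset\sset\appflag$ already supplies them; the only genuine bookkeeping is propagating $\inv{}{}{}$ to the subterm $\tm_1$ with the enlarged frame, which is routine since enlarging $\aset$ (or $\sset$) by $\var$ preserves both disjointness (as $\var$ is fresh for the other frame) and the free-variable containment (as $\fv{\tm_1} \subseteq \fv{\tm} \cup \set\var$). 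No other step presents difficulty.
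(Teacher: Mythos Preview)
Your approach is correct and is essentially the one the paper takes: induction on the derivation of $\tm \in \NF\aset\sset\appflag$, using the ``furthermore'' clause recursively in the \ruleUNFApp case to get $\tm_1 \in \Struct\sset$, and splitting the ES cases according to whether the inductive hypothesis lands the body in $\HAbs{}$ or $\Struct{}$.

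One small inaccuracy to flag: you write that ``\ruleUNFEsAbs, \ruleUNFEsStruct all carry $\var\notin\aset\cup\sset$ as a premise'', but in fact the normal-form rules (unlike the reduction rules \ruleUEsLAbs/\ruleUEsLStruct) do \emph{not} list this premise explicitly. The freshness $\var \notin \aset \cup \sset$ is instead available by $\alpha$-conversion (Barendregt's convention), since $\var$ is bound in $\tm_1\esub\var{\tm_2}$. This is exactly what you need to verify $\inv{\aset\cup\set\var}\sset{\tm_1}$ (resp.\ $\inv\aset{\sset\cup\set\var}{\tm_1}$) and to apply \ruleHAbsSubii/\ruleStructSubi (resp.\ \ruleStructSubii/\ruleHAbsSubi), so the argument goes through unchanged once you cite the right justification. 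Also, there are five rules, not six.
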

% Label: lem:nf_in_HAbs_or_Struct

\begin{proof}
By induction on the derivation of $\tm \in \NF\aset\sset\appflag$.
\begin{enumerate}
  \item \ruleUNFVar. 
    Then
    \[
      \inferrule{
        \var \in \aset \implies \appflag = \nonapp
      }{
        \var \in \NF\aset\sset\appflag
      }\ruleUNFVar
    \]

    Given $\inv\aset\sset\var$, we analyse two cases depending on 
    whether $\var \in \aset$ or $\var \in \sset$.
    If $\var \in \aset$, then $\var \in \HAbs\aset$ by rule 
    \ruleHAbsVar, and $\appflag = \nonapp$ by premise of the rule 
    \ruleUNFVar.
    If $\var \in \sset$, then $\var \in \Struct\sset$ by rule 
    \ruleStructVar.
    Thus, we conclude with $\var \in \HAbs\aset \cup \Struct\sset$.
  \item \ruleUNFLam.
    Then, $\lam\var\tmtwo \in \NF\aset\sset\nonapp$.
    By rule \ruleHAbsLam, we derive $\lam\var\tmtwo \in \HAbs\aset$,
    so we conclude with 
    $\lam\var\tmtwo \in \HAbs\aset \cup \Struct\sset$.
  \item \ruleUNFApp.
    Then
    \[
      \inferrule{
        \tmtwo \in \NF\aset\sset\app
        \sep
        \tmthree \in \NF\aset\sset\nonapp
      }{
        \tmtwo \, \tmthree \in \NF\aset\sset\appflag
      }\ruleUNFApp
    \]
    
    Since $\inv\aset\sset{\tmtwo \, \tmthree}$ implies in particular 
    $\inv\aset\sset\tmtwo$, we can apply the \ih on $\tm$, yielding
    $\tmtwo \in \Struct\sset$, as the positional flag is $\app$.
    Then, $\tmtwo \, \tmthree \in \Struct\sset$ by rule 
    \ruleStructApp, which implies
    $\tmtwo \, \tmthree \in \HAbs\aset \cup \Struct\sset$.
  \item \ruleUNFEsAbs.
    Then
    \[
      \inferrule{
        \tmtwo \in \NF{\aset \cup \set\var}\sset\appflag
        \sep
        \tmthree \in \NF\aset\sset\nonapp
        \sep
        \tmthree \in \HAbs\aset
      }{
        \tmtwo\esub\var\tmthree \in \NF\aset\sset\appflag
      }\ruleUNFEsAbs
    \]
    We assume $\var \notin \aset \cup \sset$ by $\alpha$-conversion,
    so that $\inv\aset\sset{\tmtwo\esub\var\tmthree}$ implies in 
    particular $\inv{\aset \cup \set\var}\sset\tmtwo$.
    We can then apply the \ih on $\tmtwo$, yielding 
    $\tmtwo \in \HAbs{\aset \cup \set\var} \cup \Struct\sset$.
    We have two possible cases, depending on whether 
    $\tmtwo \in \HAbs{\aset \cup \set\var}$ or $\tmtwo \in \Struct\sset$.
    Moreover, if $\appflag = \app$, then the latter is the only 
    possible case.
    
    If $\tmtwo \in \HAbs{\aset \cup \set\var}$, then 
    $\tmtwo\esub\var\tmthree \in \HAbs\aset$ by rule \ruleHAbsSubii.
    If $\tmtwo \in \Struct\sset$, then 
    $\tmtwo\esub\var\tmthree \in \Struct\sset$ by rule \ruleStructSubi.
   \item \ruleUNFEsStruct.
    Analogous to the previous case.
    % Then
    % \[
    %   \inferrule{
    %     \tmtwo \in \NF\aset{\sset \cup \set\var}\appflag
    %     \sep
    %     \tmthree \in \NF\aset\sset\nonapp
    %     \sep
    %     \tmthree \in \Struct\sset
    %   }{
    %     \tmtwo\esub\var\tmthree \in \NF\aset\sset\appflag
    %   }\ruleUNFEsStruct
    % \]
    % We assume $\var \notin \aset \cup \sset$ by $\alpha$-conversion,
    % so that $\inv\aset\sset{\tmtwo\esub\var\tmthree}$ implies in 
    % particular $\inv\aset{\sset \cup \set\var}\tmtwo$.
    % We can then apply the \ih on $\tmtwo$, yielding 
    % $\tmtwo \in \HAbs\aset \cup \Struct{\sset \cup \set\var}$.
    % We have two possible cases, depending on whether 
    % $\tmtwo \in \HAbs\aset$ or $\tmtwo \in \Struct{\sset \cup \set\var}$.
    % Moreover, if $\appflag = \app$, then the latter is the only 
    % possible case.

    % If $\tmtwo \in \HAbs\aset$, then 
    % $\tmtwo\esub\var\tmthree \in \HAbs\aset$ by rule \ruleHAbsSubi.
    % If $\tmtwo \in \Struct{\sset \cup \set\var}$, then
    % $\tmtwo\esub\var\tmthree \in \Struct\sset$ by rule \ruleStructSubii.
\end{enumerate}
\end{proof}

\begin{remark}
\label{rem:t_reduces_with_subvar_var_occurs_free_in_t}
If $\tm \tov{\rulesub\var\val}\aset\sset\appflag \tm'$,
then $\var \in \fv\tm$.
\end{remark}

\begin{lemma}
\label{lem:sub_var_in_aset}
Let $\inv\aset\sset\tm$.
If $\tm \tov{\rulesub\var\val}\aset\sset\appflag \tm'$, 
then $\var \in \aset$.
\end{lemma}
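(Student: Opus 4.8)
The plan is to argue by induction on the derivation of the step $\tm \tov{\rulesub\var\val}\aset\sset\appflag \tm'$, inspecting the last rule applied (and maintaining the invariant $\inv\aset\sset\tm$ along the way, which is routinely preserved when passing to subterms, using $\aset \disj \sset$ and the inclusion of free variables). First I would note that rules \ruleUDb and \ruleULsv can be discarded immediately, since they produce steps of kind $\ruledb$ and $\rulelsv$ rather than $\rulesub\var\val$; hence the last rule is one of \ruleUSub, \ruleUAppL, \ruleUAppR, \ruleUEsR, \ruleUEsLAbs, or \ruleUEsLStruct.

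The base case \ruleUSub is immediate: there the step is $\var \tov{\rulesub\var\val}{\aset\cup\set\var}\sset\app \val$, so the abstraction frame is of the form $\aset \cup \set\var$ and hence contains $\var$. For the congruence rules \ruleUAppL, \ruleUAppR, \ruleUEsR, and \ruleUEsLStruct, the abstraction frame of the premise coincides with that of the conclusion, so applying the induction hypothesis to the subderivation yields $\var \in \aset$ directly --- note in particular that in \ruleUEsLStruct it is the \emph{structure} frame, not the abstraction frame, that is extended.

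The only case requiring an extra argument is \ruleUEsLAbs, where $\tm = \tmtwo\esub\vartwo\tmthree$ and the step comes from a premise $\tmtwo \tov{\rulesub\var\val}{\aset\cup\set\vartwo}\sset\appflag \tmtwo'$. Here the induction hypothesis only gives $\var \in \aset \cup \set\vartwo$, and I would conclude by invoking the side condition $\vartwo \notin \fv\rulename$ of \ruleUEsLAbs: since $\fv{\rulesub\var\val} = \set\var \cup \fv\val$, this forces $\vartwo \neq \var$, and therefore $\var \in \aset$. This is the one delicate point of the proof, but it is exactly what the freshness condition on the ES-bound variable is there to guarantee, so I expect no genuine obstacle; the remainder is a routine case inspection.
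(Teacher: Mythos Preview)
Your proposal is correct and follows the natural approach: induction on the derivation of the step, with the only nontrivial case being \ruleUEsLAbs, where the side condition $\vartwo \notin \fv{\rulesub\var\val}$ is precisely what rules out $\var = \vartwo$ after the induction hypothesis gives $\var \in \aset \cup \set\vartwo$. This matches the paper's argument (the proof there is a routine induction on the reduction derivation, and the invariant $\inv\aset\sset\tm$ is carried along only to keep the inductive hypotheses well-formed, exactly as you indicate).
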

% Label: lem:sub_var_in_aset

\begin{proof}
By induction on the derivation of 
$\tm \tov{\rulesub\var\val}\aset\sset\appflag \tm'$.
The interesting case is the \ruleUSub rule.
Then, $\var \tov{\rulesub\var\val}{\aset' \cup \set\var}\sset\appflag \val$,
where $\aset = \aset' \cup \set\var$, so $\var \in \aset$ trivially.

The remaining cases are straightforward by resorting to the \ih.
For example, in the case of rule \ruleUEsLAbs, we have:
\[
  \inferrule{
    \tmtwo \tov{\rulesub\var\val}{\aset \cup \set\vartwo}\sset\appflag \tmtwo'
    \sep
    \tmthree \in \HAbs\aset
    \sep
    \vartwo \notin \aset \cup \sset
    \sep
    \vartwo \notin \fv{\rulesub\var\val}
  }{
    \tmtwo\esub\vartwo\tmthree
    \tov{\rulesub\var\val}\aset\sset\appflag
    \tmtwo'\esub\vartwo\tmthree
  }\ruleUEsLAbs
\] 

Note that $\inv\aset\sset{\tmtwo\esub\vartwo\tmthree}$ implies in 
particular $\inv{\aset \cup \set\vartwo}\sset\tmtwo$, so 
$\var \in \aset \cup \set\vartwo$ by the \ih on $\tmtwo$.
Furthermore, we assume $\vartwo \neq \var$ by $\alpha$-conversion, so 
$\var \in \aset$, as required.
\end{proof}

\begin{lemma}
\label{lem:sub_change_of_values}
Let $\inv\aset\sset\tm$.
If $\tm \tov{\rulesub\var\val}\aset\sset\appflag \tm'$,
then for every value $\valtwo$ there exists $\tm''$ such that 
$\tm \tov{\rulesub\var\valtwo}\aset\sset\appflag \tm''$.
\end{lemma}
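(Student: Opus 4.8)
The plan is to induct on the derivation of $\tm \tov{\rulesub\var\val}\aset\sset\appflag \tm'$, mirroring the proof of \cref{lem:nu_sub_change_of_values} but tracking the extra parameters. Only rules \ruleUSub, \ruleUAppL, \ruleUAppR, \ruleUEsR, \ruleUEsLAbs, and \ruleUEsLStruct can conclude a $\rulesub\var\val$-step (rules \ruleUDb and \ruleULsv yield $\ruledb$- and $\rulelsv$-steps instead), so there are six cases. At the outset I would use $\alpha$-conversion, which is legitimate since reduction is stable under $\alpha$-equivalence and $\fv\valtwo$ is finite, to assume that every bound variable of $\tm$ is fresh with respect to $\var$ and to $\fv\valtwo$; this discharges the variable-capture side conditions appearing below.

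In the base case \ruleUSub one has $\tm = \var$ and $\tm' = \val$ with frames $(\aset\cup\set\var,\sset)$ and flag $\app$; the value $\val$ occurs in no side condition of this rule, so $\var \tov{\rulesub\var\valtwo}{\aset\cup\set\var}\sset\app \valtwo$ holds directly, taking $\tm'' \defeq \valtwo$. For the congruence rules \ruleUAppL, \ruleUAppR, and \ruleUEsR, one applies the induction hypothesis to the unique premise carrying the $\rulesub\var\val$-step to obtain the matching $\rulesub\var\valtwo$-step, then re-applies the same rule; the extra premise $\tm \in \Struct\sset$ of \ruleUAppR does not mention $\val$ and is preserved verbatim, and the implicit correctness invariant propagates to the relevant subterm since $\fv\tm \subseteq \fv{(\tm\,\tmtwo)}$ and analogously for ES. These cases are therefore routine.

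The cases \ruleUEsLAbs and \ruleUEsLStruct require the only (minor) care. Here $\tm = \tmthree\esub\vartwo\tmfour$, the premise is a step $\tmthree \tov{\rulesub\var\val}{\aset'}{\sset'}\appflag \tmthree'$ over frames $(\aset',\sset')$ obtained by adding $\vartwo$ to $\aset$ (resp.\ to $\sset$), with side conditions $\vartwo \notin \aset\cup\sset$ and $\vartwo \notin \fv{\rulesub\var\val} = \set\var \cup \fv\val$, plus $\tmfour \in \HAbs\aset$ (resp.\ $\tmfour \in \Struct\sset$). The induction hypothesis gives $\tmthree \tov{\rulesub\var\valtwo}{\aset'}{\sset'}\appflag \tmthree''$; the side conditions not mentioning the value are unchanged, and $\vartwo \notin \fv{\rulesub\var\valtwo} = \set\var \cup \fv\valtwo$ holds exactly by the initial freshness choice. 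Re-applying the rule yields $\tmthree\esub\vartwo\tmfour \tov{\rulesub\var\valtwo}\aset\sset\appflag \tmthree''\esub\vartwo\tmfour$, which closes the induction. The only real obstacle is this variable-capture bookkeeping, together with keeping the implicit invariant $\inv\aset\sset\tm$ consistent across the recursive calls; both are handled cleanly once the fresh-naming convention is fixed once and for all.
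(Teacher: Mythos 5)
Your proposal is correct and follows essentially the same route as the paper: a straightforward induction on the derivation of $\tm \tov{\rulesub\var\val}\aset\sset\appflag \tm'$, with the base case \ruleUSub replacing $\val$ by $\valtwo$ and the congruence cases rebuilt from the induction hypothesis, the only delicate point being the $\vartwo \notin \fv{\rulesub\var\valtwo}$ side condition in \ruleUEsLAbs/\ruleUEsLStruct, which you correctly discharge by $\alpha$-conversion exactly as the paper does.
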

% Label: lem:sub_change_of_values

\begin{proof}
By induction on the derivation of 
$\tm \tov{\rulesub\var\val}\aset\sset\appflag \tm'$.
The interesting case is the \ruleUSub rule.
Then, $\var \tov{\rulesub\var\val}{\aset' \cup \set\var}\sset\app \val$,
so $\var \in \aset$. 
Hence, taking any value $\valtwo$, we yield
$\var \tov{\rulesub\var\valtwo}{\aset' \cup \set\var}\sset\app \valtwo 
= \tm''$ by rule \ruleUSub.

The remaining cases are straightforward by resorting to the \ih.
For example, in the case of the \ruleUEsLAbs rule, we have:
\[
  \inferrule{
    \tmtwo \tov{\rulesub\var\val}{\aset \cup \set\vartwo}\sset\appflag \tmtwo'
    \sep
    \tmthree \in \HAbs\aset
    \sep
    \vartwo \notin \aset \cup \sset
    \sep
    \vartwo \notin \fv{\rulesub\var\val}
  }{
    \tmtwo\esub\vartwo\tmthree
    \tov{\rulesub\var\val}\aset\sset\appflag
    \tmtwo'\esub\vartwo\tmthree
  }\ruleUEsLAbs
\]

Note that $\inv\aset\sset{\tmtwo\esub\var\tmthree}$ implies in 
particular $\inv{\aset \cup \set\vartwo}\sset\tmtwo$, so we can apply 
the \ih on $\tmtwo$, yielding $\tmtwo''$ such that
$\tmtwo \tov{\rulesub\var\valtwo}{\aset \cup \set\vartwo}\sset\appflag 
\tmtwo''$.
Moreover, we assume $\vartwo \notin \fv\valtwo$ by $\alpha$-conversion,
so we can apply rule \ruleUEsLAbs to conclude with
$\tmtwo\esub\vartwo\tmthree 
\tov{\rulesub\var\valtwo}\aset\sset\appflag \tmtwo''\esub\vartwo\tmthree 
= \tm''$.
\end{proof}

\begin{lemma}
\label{lem:term_reducible_arg_abstraction_esub_reducible}
Let $\inv{\aset \cup \set\var}\sset\tm$ and $\tmtwo \in \HAbs\aset$,
with $\var \notin \aset$.
If $\tm \in \Red{\aset \cup \set\var}\sset\appflag$,
then $\tm\esub\var\tmtwo \in \Red\aset\sset\appflag$.
\end{lemma}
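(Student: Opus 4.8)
The plan is to turn the hypothesis $\tm\in\Red{\aset\cup\set\var}\sset\appflag$ into an explicit reduction $\tm\tov\rulename{\aset\cup\set\var}\sset\appflag\tm'$ and then, depending on the step kind $\rulename$, close $\tm\esub\var\tmtwo$ under either rule \ruleUEsLAbs or rule \ruleULsv. Two facts are used throughout. First, from the hypothesis $\inv{\aset\cup\set\var}\sset\tm$ we get $(\aset\cup\set\var)\disj\sset$, hence $\var\notin\sset$; together with the hypothesis $\var\notin\aset$ this gives $\var\notin\aset\cup\sset$, a side condition common to both rules. Second, since $\tmtwo\in\HAbs\aset$, by \cref{rem:habs_st}(3) we may write $\tmtwo=\valtwo\sctx$; moreover, the correctness invariant maintained on $\tm\esub\var\tmtwo$ gives $\fv\tmtwo\subseteq\aset\cup\sset$, so $\var\notin\fv\tmtwo$, and after $\alpha$-renaming $\domSctx\sctx$ away from $\var$ we may assume $\var\notin\fv\valtwo$. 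Then I would proceed by cases on $\rulename$.

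If $\rulename\in\set{\ruledb,\rulelsv}$, then $\fv\rulename=\emptyset$, so $\var\notin\fv\rulename$, and rule \ruleUEsLAbs applies directly: its premises are exactly $\tm\tov\rulename{\aset\cup\set\var}\sset\appflag\tm'$, $\tmtwo\in\HAbs\aset$, $\var\notin\aset\cup\sset$ and $\var\notin\fv\rulename$. This yields $\tm\esub\var\tmtwo\tov\rulename\aset\sset\appflag\tm'\esub\var\tmtwo$, hence $\tm\esub\var\tmtwo\in\Red\aset\sset\appflag$. If $\rulename=\rulesub\vartwo\val$ with $\vartwo\neq\var$, the only obstacle to reusing \ruleUEsLAbs is that possibly $\var\in\fv\val$; I would sidestep this by picking a variable $\varthree\notin\set\var\cup\set\vartwo$, so that $\rulesub\vartwo\varthree$ is a well-formed step kind, and invoking \cref{lem:sub_change_of_values} (applicable since $\inv{\aset\cup\set\var}\sset\tm$) to get $\tm\tov{\rulesub\vartwo\varthree}{\aset\cup\set\var}\sset\appflag\tm''$; since $\var\notin\fv{\rulesub\vartwo\varthree}=\set\vartwo\cup\set\varthree$, rule \ruleUEsLAbs now applies and again gives $\tm\esub\var\tmtwo\in\Red\aset\sset\appflag$.

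The remaining case is $\rulename=\rulesub\var\val$, where the reduction of $\tm$ substitutes precisely the variable bound by the ES. Here \ruleUEsLAbs is unusable, since its side condition $\var\notin\fv\rulename$ fails, so I would instead close under \ruleULsv. Using \cref{lem:sub_change_of_values} once more (with $\inv{\aset\cup\set\var}\sset\tm$), and recalling $\var\notin\fv\valtwo$ so that $\rulesub\var\valtwo$ is a valid step kind, I get $\tm\tov{\rulesub\var\valtwo}{\aset\cup\set\var}\sset\appflag\tm''$. Then the remaining premises of \ruleULsv hold, namely $\var\notin\aset\cup\sset$ and $\valtwo\sctx=\tmtwo\in\HAbs\aset$, so $\tm\esub\var{\valtwo\sctx}=\tm\esub\var\tmtwo\tov\rulelsv\aset\sset\appflag\tm''\esub\var\valtwo\sctx$, and once more $\tm\esub\var\tmtwo\in\Red\aset\sset\appflag$.

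The main obstacle is this last case: one has to realise that when $\tm$ reduces by substituting $\var$ itself, the induced reduction of $\tm\esub\var\tmtwo$ must be an $\lsvsym$-step produced by \ruleULsv rather than a congruence step, which forces realigning the substituted value with the head value $\valtwo$ of $\tmtwo$ by means of \cref{lem:sub_change_of_values}; the surrounding $\alpha$-conversion and correctness-invariant bookkeeping, needed so that $\var\notin\fv\valtwo$ and the manipulated step kinds are well-formed, must be threaded carefully.
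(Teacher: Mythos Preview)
Your proposal is correct and follows the expected approach: split on the step kind $\rulename$ witnessing $\tm\in\Red{\aset\cup\set\var}\sset\appflag$, close under \ruleUEsLAbs when $\var\notin\fv\rulename$, and close under \ruleULsv when $\rulename=\rulesub\var\val$, using \cref{lem:sub_change_of_values} to realign the substituted value. This is precisely the structure the paper sets up by placing \cref{lem:sub_change_of_values} immediately before this lemma.
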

% Label: lem:term_reducible_arg_abstraction_esub_reducible

\begin{proof}
By definition, there exist $\rulename$ and $\tm'$ such that
$\tm \tov\rulename{\aset \cup \set\var}\sset\appflag \tm'$.
Moreover, $\var \notin \aset \cup \sset$ holds by the hypothesis.
There are two cases, depending on whether $\var \in \fv\rulename$ or not:
\begin{itemize}
\item 
  If $\var \in \fv\rulename$,
  then $\rulename = \rulesub\vartwo\val$.
  We have two subcases, depending on whether $\var = \vartwo$ or not:
  \begin{itemize}
  \item $\var = \vartwo$.
    Since $\tmtwo \in \HAbs\aset$, then $\tmtwo$ is of the form 
    $\valtwo\sctx$ by \cref{rem:habs_st}.
    Applying \cref{lem:sub_change_of_values},
    we yield $\tm''$ such that 
    $\tm \tov{\rulesub\var\valtwo}{\aset \cup \set\var}\sset\appflag \tm''$.
    We can apply rule \ruleULsv, yielding 
    $\tm\esub\var{\valtwo\sctx} \tov\rulelsv\aset\sset\appflag 
     \tm'\esub\var\valtwo\sctx$.
  \item $\var \neq \vartwo$.
    Let $\valtwo$ be a value such that $\var \notin \fv{\valtwo}$.
    There exists $\tm''$ such that 
    $\tm \tov{\rulesub{\vartwo}{\valtwo}}{\aset \cup \set{\var}}{\sset}{\appflag} \tm''$
    by \cref{lem:sub_change_of_values}.
    We can apply rule \ruleUEsLAbs, yielding 
    $\tm\esub\var\tmtwo \tov{\rulesub\vartwo\valtwo}\aset\sset\appflag 
    \tm''\esub\var\tmtwo$.
  \end{itemize}
\item 
  If $\var \notin \fv\rulename$, we can apply rule \ruleUEsLAbs, 
  yielding $\tm\esub\var\tmtwo \tov\rulename\aset\sset\appflag 
  \tm'\esub\var\tmtwo$.
\end{itemize}
In either case, we conclude with
$\tm\esub\var\tmtwo \in \Red\aset\sset\appflag$.
\end{proof}

\begin{lemma}
\label{lem:term_reducible_arg_struct_esub_reducible}
Let $\inv\aset{\sset \cup \set\var}\tm$ and $\tmtwo \in \Struct\sset$,
with $\var \notin \sset$.
If $\tm \in \Red\aset{\sset \cup \set\var}\appflag$,
then $\tm\esub\var\tmtwo \in \Red\aset\sset\appflag$.
\end{lemma}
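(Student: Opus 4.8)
The plan is to mirror the proof of \cref{lem:term_reducible_arg_abstraction_esub_reducible}, replacing the appeal to rule \ruleUEsLAbs by rule \ruleUEsLStruct. Assume $\tm \in \Red\aset{\sset\cup\set\var}\appflag$, so there are a step kind $\rulename$ and a term $\tm'$ with $\tm \tov\rulename\aset{\sset\cup\set\var}\appflag \tm'$. From the correctness property $\inv\aset{\sset\cup\set\var}\tm$ we get $\aset \cap (\sset\cup\set\var) = \emptyset$, hence $\var \notin \aset$; together with the hypothesis $\var \notin \sset$ this gives $\var \notin \aset\cup\sset$. Combined with $\tmtwo \in \Struct\sset$, the only side condition still missing in order to instantiate rule \ruleUEsLStruct is $\var \notin \fv\rulename$.

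First I would dispatch the easy case: if $\rulename \in \set{\ruledb,\rulelsv}$, then $\fv\rulename = \emptyset$, so $\var \notin \fv\rulename$ holds trivially, and rule \ruleUEsLStruct yields $\tm\esub\var\tmtwo \tov\rulename\aset\sset\appflag \tm'\esub\var\tmtwo$, whence $\tm\esub\var\tmtwo \in \Red\aset\sset\appflag$.

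The interesting case is $\rulename = \rulesub\vartwo\val$. Here I first note that $\var \neq \vartwo$: by \cref{lem:sub_var_in_aset} (applicable since $\inv\aset{\sset\cup\set\var}\tm$ holds) we have $\vartwo \in \aset$, while $\var \notin \aset$. To discharge $\var \notin \fv\rulename = \set\vartwo \cup \fv\val$, I would change the substituted value: pick a fresh variable $\varfour$ distinct from both $\var$ and $\vartwo$ (so that in particular $\vartwo \notin \fv\varfour$, keeping $\rulesub\vartwo\varfour$ a legal step kind). By \cref{lem:sub_change_of_values} there is a term $\tm''$ with $\tm \tov{\rulesub\vartwo\varfour}\aset{\sset\cup\set\var}\appflag \tm''$; since $\fv{\rulesub\vartwo\varfour} = \set\vartwo\cup\set\varfour$ and $\var$ is neither $\vartwo$ nor $\varfour$, we get $\var \notin \fv{\rulesub\vartwo\varfour}$. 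Rule \ruleUEsLStruct then applies with step kind $\rulesub\vartwo\varfour$, giving $\tm\esub\var\tmtwo \tov{\rulesub\vartwo\varfour}\aset\sset\appflag \tm''\esub\var\tmtwo$, hence $\tm\esub\var\tmtwo \in \Red\aset\sset\appflag$.

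The only delicate point — and the one that forces the detour through \cref{lem:sub_change_of_values} — is the variable-capture side condition $\var \notin \fv\rulename$ in the substitution case; everything else is a direct instantiation of rule \ruleUEsLStruct. The argument is entirely parallel to the hereditary-abstraction case treated in \cref{lem:term_reducible_arg_abstraction_esub_reducible}, with $\HAbs{}$ and \ruleUEsLAbs replaced by $\Struct{}$ and \ruleUEsLStruct throughout.
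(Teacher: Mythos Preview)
Your proposal is correct and follows essentially the same approach the paper takes: it is the structure-side analogue of \cref{lem:term_reducible_arg_abstraction_esub_reducible}, replacing \ruleUEsLAbs by \ruleUEsLStruct and handling the $\var\notin\fv\rulename$ side condition in the substitution case via \cref{lem:sub_var_in_aset} (to get $\vartwo\in\aset$, hence $\vartwo\neq\var$) and \cref{lem:sub_change_of_values} (to swap $\val$ for a fresh value avoiding $\var$). One minor remark: in the structure case the possibility $\vartwo=\var$ never arises (since $\vartwo\in\aset$ but $\var\notin\aset$), which slightly simplifies matters compared to the abstraction case where $\vartwo\in\aset\cup\set\var$ and the subcase $\vartwo=\var$ must be discharged via \ruleULsv; you implicitly rely on this, and it is worth making explicit.
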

% Label: lem:term_reducible_arg_struct_esub_reducible

\begin{proof}
By definition, there exist $\rulename$ and $\tm'$ such that
$\tm \tov\rulename\aset{\sset \cup \set\var}\appflag \tm'$.
Moreover, $\var \notin \aset \cup \sset$ holds by the hypothesis.
There are two cases, depending on whether $\var \in \fv\rulename$ or not:
\begin{itemize}
\item 
  If $\var \in \fv\rulename$, then $\rulename = \rulesub\vartwo\val$.
  We have two subcases, depending on whether $\var = \vartwo$ or not:
  \begin{itemize}
  \item $\var = \vartwo$.
    This case is not possible, since $\var = \vartwo \in \aset$ by 
    \cref{lem:sub_var_in_aset}, and at the same time 
    $\var \notin \aset$ by the hypothesis.
  \item $\var \neq \vartwo$.
    Let $\valtwo$ be a value such that $\var \notin \fv\valtwo$.
    There exists $\tm''$ such that 
    $\tm \tov{\rulesub\vartwo\valtwo}\aset{\sset \cup \set\var}\appflag \tm''$
    by \cref{lem:sub_change_of_values}.
    We can apply rule \ruleUEsLStruct, yielding $\tm\esub\var\tmtwo 
    \tov{\rulesub\vartwo\valtwo}\aset\sset\appflag \tm''\esub\var\tmtwo$.
  \end{itemize}
\item 
  If $\var \notin \fv\rulename$, we can apply rule \ruleUEsLStruct, 
  yielding $\tm\esub\var\tmtwo \tov\rulename\aset\sset\appflag \tm'\esub\var\tmtwo$.
\end{itemize}
In either case, we conclude with 
$\tm\esub\var\tmtwo \in \Red\aset\sset\appflag$.
\end{proof}

Completeness states that an irreducible term 
$\tm \in \Irred\aset\sset\appflag$ is in normal form with respect to 
the same parameters $\aset$, $\sset$, and $\appflag$; that is, 
$\tm \in \NF\aset\sset\appflag$. 
However, there is an exception to this statement, since the context 
surrounding $\tm$ has to be taken into account.
In particular, an irreducible abstraction is \emph{not} considered to 
be a normal form if it occurs in an applied position, because the 
evaluation of the whole term (including the surrounding context) can 
proceed by means of a $\ruledb$-step. 
An hereditary abstraction $\tm$ must be either a term of the form 
$(\lam\var{\tm'})\sctx$, or a term which is reducible in an applied 
position, such as $\var\esub\var\id$.
Thus, an applied hereditary abstraction is \emph{always} reducible.

The first part of the following proposition covers the case in which
an irreducible term is a normal form, while the second and third parts 
cover the case of applied hereditary abstractions, which are not in 
normal form even if they are irreducible.
\begin{proposition}[Completeness of Useful Normal Forms]
\label{prop:completeness_nf}
Let $\inv\aset\sset\tm$.
\begin{enumerate}
\item \label{it:hyp1}
  If $\tm \in \Irred\aset\sset\appflag$
  and $(\tm \in \HAbs\aset \implies \appflag = \nonapp)$,
  then $\tm \in \NF\aset\sset\appflag$.
\item \label{it:hyp2}
  If $\tm \in \HAbs\aset$,
  then either $\abs\tm$ or $\tm \in \Red\aset\sset\app$.
\item \label{it:hyp3}
  If $\tm \in \HAbs\aset$, then $\tm \, \tmtwo \in \Red\aset\sset\appflag$, 
  for any term $\tmtwo$.
 \end{enumerate}
\end{proposition}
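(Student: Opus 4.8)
The plan is to prove items~(1) and~(2) \emph{simultaneously} by induction on the size of $\tm$, and to derive item~(3) as an immediate consequence of item~(2) at the same size: if $\tm \in \HAbs\aset$, then item~(2) gives either $\abs\tm$, so that $\tm\,\tmtwo$ is a \ruleUDb-redex, or $\tm \in \Red\aset\sset\app$, so that $\tm\,\tmtwo \in \Red\aset\sset\appflag$ by the congruence rule \ruleUAppL. Throughout I assume, by $\alpha$-renaming, that the variable bound by any ES is fresh with respect to $\aset$, $\sset$ and to all values occurring in step kinds, and I silently propagate the correctness invariant to subterms (e.g.\ $\inv{\aset\cup\set\var}\sset{\tm_1}$ and $\inv\aset\sset{\tm_2}$ hold whenever $\inv\aset\sset{\tm_1\esub\var\tm_2}$ does).

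For item~(2), \cref{rem:habs_st}(3) gives $\tm = \val\sctx$. If $\val$ is an abstraction then $\abs\tm$ holds; otherwise $\val = \vartwo$ is a variable and I show $\vartwo\sctx \in \Red\aset\sset\app$ by an inner induction on $\sctx$. If $\sctx = \ctxhole$, then $\vartwo \in \HAbs\aset$ is not an abstraction, so it comes from rule \ruleHAbsVar and $\vartwo \in \aset$; then \ruleUSub produces a step from $\vartwo$ with flag $\app$. If $\sctx = \sctxtwo\esub\varthree\tmthree$, then $\varthree \notin \aset$ and, by monotonicity (\cref{rem:habs_st}(1)) applied to whichever of \ruleHAbsSubi, \ruleHAbsSubii produced $\tm$, we have $\vartwo\sctxtwo \in \HAbs{\aset\cup\set\varthree}$. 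If $\tmthree$ is reducible with flag $\nonapp$ we conclude by \ruleUEsR. Otherwise $\tmthree$ is irreducible, hence, by item~(1) of the induction (on the strictly smaller $\tmthree$) and \cref{lem:nf_in_HAbs_or_Struct}, $\tmthree \in \HAbs\aset \cup \Struct\sset$. If $\tmthree \in \HAbs\aset$, the inner induction gives $\vartwo\sctxtwo \in \Red{\aset\cup\set\varthree}\sset\app$ (using that $\neg(\abs{\vartwo\sctxtwo})$) and \cref{lem:term_reducible_arg_abstraction_esub_reducible} lifts it to $\tm \in \Red\aset\sset\app$. If $\tmthree \in \Struct\sset$, then $\tmthree \notin \HAbs\aset$ by \cref{lem:disjunction}, so $\tm$ was produced by \ruleHAbsSubi with $\vartwo\sctxtwo \in \HAbs\aset$; the inner induction then gives $\vartwo\sctxtwo \in \Red\aset{\sset\cup\set\varthree}\app$ and \cref{lem:term_reducible_arg_struct_esub_reducible} concludes.

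Item~(1) goes by case analysis on $\tm$. For $\tm = \var$: the only rule that can fire is \ruleUSub, which needs $\var$ in the abstraction frame and flag $\app$, so $\var \in \Irred\aset\sset\appflag$ exactly when $\var \in \aset \implies \appflag = \nonapp$, which is the premise of \ruleUNFVar. For $\tm$ an abstraction, the side hypothesis forces $\appflag = \nonapp$ and \ruleUNFLam applies. For $\tm = \tm_1\,\tm_2$: a reduction of $\tm_1$ with flag $\app$ would lift by \ruleUAppL, so $\tm_1 \in \Irred\aset\sset\app$; also $\tm_1 \notin \HAbs\aset$, since otherwise item~(3) makes $\tm$ reducible; hence the induction hypothesis gives $\tm_1 \in \NF\aset\sset\app$ and then $\tm_1 \in \Struct\sset$ by \cref{lem:nf_in_HAbs_or_Struct}; now a reduction of $\tm_2$ with flag $\nonapp$ would lift by \ruleUAppR, so $\tm_2 \in \Irred\aset\sset\nonapp$, the induction hypothesis gives $\tm_2 \in \NF\aset\sset\nonapp$, and \ruleUNFApp concludes. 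For $\tm = \tm_1\esub\var\tm_2$: a reduction of $\tm_2$ with flag $\nonapp$ would lift by \ruleUEsR, so $\tm_2 \in \Irred\aset\sset\nonapp$, whence, by the induction hypothesis and \cref{lem:nf_in_HAbs_or_Struct}, $\tm_2 \in \HAbs\aset$ or $\tm_2 \in \Struct\sset$. If $\tm_2 \in \HAbs\aset$, I show $\tm_1 \in \Irred{\aset\cup\set\var}\sset\appflag$: a step of $\tm_1$ whose step kind does not mention $\var$ lifts by \ruleUEsLAbs; a step of kind $\rulesub\var\val$ lifts by \ruleULsv, after using \cref{lem:sub_change_of_values} to replace $\val$ by the head value $\valtwo$ of $\tm_2 = \valtwo\sctxtwo$; and a step of kind $\rulesub\vartwo\val$ with $\vartwo \neq \var$ is reduced to the first case by the same lemma. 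Moreover, $\tm_1 \in \HAbs{\aset\cup\set\var}$ would give $\tm \in \HAbs\aset$ via \ruleHAbsSubii, so the side hypothesis is inherited by $\tm_1$; the induction hypothesis and \ruleUNFEsAbs then give $\tm \in \NF\aset\sset\appflag$. The case $\tm_2 \in \Struct\sset$ is symmetric via \ruleUEsLStruct and \ruleUNFEsStruct; here $\var$ lies in the structure frame, so \ruleUSub cannot act on it and $\tm_1$ admits no $\rulesub\var\val$-step, and $\tm_1 \in \HAbs\aset$ gives $\tm \in \HAbs\aset$ via \ruleHAbsSubi.

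The delicate point, and the step I expect to be the main obstacle, is the closure case of item~(1): proving that irreducibility of $\tm_1$ in the frame extended by $\var$ lifts to irreducibility of $\tm_1\esub\var\tm_2$. The issue is that a $\rulesub\var\val$-step of $\tm_1$ violates the side condition $\var \notin \fv\rulename$ of the congruence rules \ruleUEsLAbs and \ruleUEsLStruct, and can only be recovered through rule \ruleULsv, which applies precisely because in that case $\tm_2 \in \HAbs\aset$ already has the shape $\valtwo\sctxtwo$ demanded by \ruleULsv and \cref{lem:sub_change_of_values} lets us realign the substituted value with $\valtwo$. Managing in parallel the side hypothesis ``$\tm_1 \in \HAbs{\aset\cup\set\var} \implies \appflag = \nonapp$'' and the correctness invariant $\inv{}{}{}$ as one descends into subterms and extends the frames is laborious but routine.
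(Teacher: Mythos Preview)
Your proposal is correct and follows essentially the same approach as the paper: a simultaneous induction on the structure of $\tm$ establishing items~(1) and~(2), with item~(3) as an immediate corollary of item~(2). The only stylistic difference is that in the closure case of item~(1) you re-argue inline the lifting of reducibility through $\esub\var\tmtwo$, whereas the paper packages exactly this step as \cref{lem:term_reducible_arg_abstraction_esub_reducible} and \cref{lem:term_reducible_arg_struct_esub_reducible}; you already invoke those lemmas in item~(2), and you could equally use them (by contraposition) in item~(1) to shorten the ``delicate point'' you identify.
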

% Label: prop:completeness_nf

\begin{proof}
Statement \ref{it:hyp3} is an immediate consequence of statement \ref{it:hyp2},
since if $\abs\tm$, then $\tm \, \tmtwo \tov\dbsym\aset\sset\appflag$-reduces
by rule \ruleUDb.
On the other hand, if $\tm \in \Red\aset\sset\app$, then 
$\tm \, \tmtwo \in \Red\aset\sset\appflag$ by applying rule \ruleUAppL.

Statements \ref{it:hyp1} and \ref{it:hyp2} are proved by simultaneous 
induction on $\tm$.
\begin{enumerate}
\item \quad
  \begin{itemize}
  \item $\tm = \var$.
    If $\var \in \HAbs\aset$, it must be the case that $\var \in \aset$.
    Thus, $\appflag = \nonapp$.
    We can then apply rule \ruleUNFVar, yielding
    $\var \in \NF\aset\sset\appflag$.
  \item $\tm = \lam\var\tmtwo$.
    Since $\lam\var\tmtwo \in \HAbs\aset$, then $\appflag = \nonapp$ 
    by hypothesis. 
    We derive $\lam\var\tmtwo \in \NF\aset\sset\nonapp$ by rule \ruleUNFLam.
  \item $\tm = \tmtwo \, \tmthree$. 
    Note that $\inv\aset\sset{\tmtwo \, \tmthree}$ implies
    $\inv\aset\sset\tmtwo$ and $\inv\aset\sset\tmthree$.

    We necessarily have $\tmtwo \in \Irred\aset\sset\app$, since 
    otherwise $\tmtwo \, \tmthree$ would be in $\Red\aset\sset\appflag$ 
    by rule \ruleUAppL.
    By the \ih (\ref{it:hyp1}) on $\tmtwo$, we yield 
    (a) $\tmtwo \in \NF\aset\sset\app$.

    As a consequence, $\tmtwo \in \Struct{\sset}$ by 
    \cref{lem:nf_in_HAbs_or_Struct}.
    And thus we necessarily have $\tmthree \in \Irred\aset\sset\nonapp$, 
    since otherwise $\tmtwo \, \tmthree$ would be in 
    $\Red\aset\sset\appflag$ by rule \ruleUAppR.
    Then, (b) $\tmthree \in \NF\aset\sset\nonapp$ by the \ih (\ref{it:hyp1}) 
    on $\tmthree$.
    Applying rule \ruleUNFApp with (a) and (b) as premises, we derive 
    $\tmtwo \, \tmthree \in \NF\aset\sset\appflag$.
  \item $\tm = \tmtwo\esub\var\tmthree$.
    We necessarily have $\tmthree \in \Irred\aset\sset\nonapp$, 
    since otherwise $\tmtwo\esub\var\tmthree$ would be in 
    $\Red\aset\sset\appflag$ by rule \ruleUEsR.
    Moreover, $\inv\aset\sset{\tmtwo\esub\var\tmthree}$ implies in 
    particular $\inv\aset\sset\tmthree$.
    We can apply the \ih (\ref{it:hyp1}) on $\tmthree$, yielding
    $\tmthree \in \NF\aset\sset\nonapp$.
    
    % Note that we may assume $\var \notin (\aset \cup \sset)$ by $\alpha$-conversion,
    By \cref{lem:nf_in_HAbs_or_Struct}, we have
    that $\tmthree$ is either in $\HAbs\aset$ or in $\Struct\sset$.
    We analyse both cases:
    \begin{itemize}
    \item $\tmthree \in \HAbs\aset$. 
      Then $\inv\aset\sset{\tmtwo\esub\var\tmthree}$ implies 
      in particular $\inv{\aset \cup \set\var}\sset\tmtwo$.
      Then $\tmtwo \in \Irred{\aset \cup \set\var}\sset\appflag$ by 
      the contraposition of \cref{lem:term_reducible_arg_abstraction_esub_reducible}.
      
      Moreover, if
      $\tmtwo\esub\var\tmthree \in \HAbs\aset$ holds, 
      it can be derived either by (1) \ruleHAbsSubi, meaning that 
      $\tmtwo \in \HAbs\aset$, and thus $\tmtwo \in \HAbs{\aset \cup \set\var}$ 
      by \cref{rem:habs_st}, or it can be derived either by 
      (2) \ruleHAbsSubii, meaning that $\tmtwo \in \HAbs{\aset \cup \set\var}$.
      So in this case, by the hypothesis in (\ref{it:hyp1}), we have 
      $\appflag = \nonapp$.
      We can then apply the \ih (\ref{it:hyp1}) on $\tmtwo$, yielding 
      $\tmtwo \in \NF{\aset \cup \set\var}\sset\appflag$.
      By applying rule \ruleUNFEsAbs we derive 
      $\tmtwo\esub\var\tmthree \in \NF\aset\sset\appflag$.
    \item $\tmthree \in \Struct\sset$.
      Then $\inv\aset\sset{\tmtwo\esub\var\tmthree}$ implies in 
      particular $\inv\aset{\sset \cup \set\var}\tmtwo$.
      Then $\tmtwo \in \Irred\aset{\sset \cup \set\var}\appflag$ by 
      the contraposition of \cref{lem:term_reducible_arg_struct_esub_reducible}.
      
      Since $\aset \disj \sset$ holds by the invariant, then
      $\tmthree \notin \HAbs\aset$ by \cref{lem:disjunction}, as
      $\tmthree \in \Struct\sset$.
      As a consequence, if $\tmtwo\esub\var\tmthree \in \HAbs\aset$,
      we necessarily have $\tmtwo \in \HAbs\aset$ by rule \ruleHAbsSubi,
      and not by rule \ruleHAbsSubii.
      So in this case, by the hypothesis in (\ref{it:hyp1}), we have 
      $\appflag = \nonapp$. 
      We can then apply the \ih (\ref{it:hyp1}) on $\tmtwo$, yielding 
      $\tmtwo \in \NF\aset{\sset \cup \set\var}\appflag$.
      By applying rule \ruleUNFEsStruct we derive 
      $\tmtwo\esub\var\tmthree \in \NF\aset\sset\appflag$.
    \end{itemize}
  \end{itemize}
\item \quad
  \begin{itemize}
  \item $\tm = \var$.
    Then, $\var \in \HAbs\aset$ is derived only by rule \ruleHAbsVar,
    so $\var \in \aset$.
    We can apply rule \ruleUSub, yielding
    $\var \tov{\rulesub\var\val}\aset\sset\app \val$, so we are done
    since $\var \in \Red\aset\sset\app$.
  \item $\tm = \lam\vartwo\tmtwo$. Immediate since $\abs\tm$ by definition.
  \item $\tm = \tmtwo \, \tmthree$. 
    This case is not possible since $\tmtwo \, \tmthree \in \HAbs\aset$ 
    never holds.
  \item $\tm = \tmtwo\esub\var\tmthree$.
    There are two cases depending on the rule we use to derive 
    $\tmtwo\esub\var\tmthree \in \HAbs\aset$:
    \begin{itemize}
    \item \ruleHAbsSubi.
      Then
      \[
        \inferrule{
          \tmtwo \in \HAbs\aset
          \sep
          \var \notin \aset
        }{
          \tmtwo\esub\var\tmthree \in \HAbs\aset
        }\ruleHAbsSubi
      \]

      If $\tmthree \in \Red\aset\sset\nonapp$, then applying rule 
      \ruleUEsR we yield 
      $\tmtwo\esub\var\tmthree \in \Red\aset\sset\appflag$, for any 
      $\appflag$, in particular $\appflag = \app$.

      Otherwise, $\tmthree \in \Irred\aset\sset\nonapp$.
      Moreover, $\inv\aset\sset{\tmtwo\esub\var\tmthree}$ implies in 
      particular $\inv\aset\sset\tmthree$. 
      We can then apply the \ih (\ref{it:hyp1}) on $\tmthree$, 
      yielding $\tmthree \in \NF\aset\sset\nonapp$.
      Then, $\tmthree \in \HAbs\aset \cup \Struct\sset$ by 
      \cref{lem:nf_in_HAbs_or_Struct}.
      We reason by cases:
      \begin{itemize}
      \item 
        If $\tmthree \in \HAbs\aset$, then
        $\inv\aset\sset{\tmtwo\esub\var\tmthree}$ implies in 
        particular $\inv{\aset \cup \set\var}\sset\tmtwo$.
        We can then apply the \ih (\ref{it:hyp2}) on $\tmtwo$, yielding
        either $\abs\tmtwo$ or $\tmtwo \in \Red{\aset \cup \set\var}\sset\app$.
        If the former, we conclude with $\abs{\tmtwo\esub\var\tmthree}$
        by definition.
        If the latter, then 
        $\tmtwo\esub\var\tmthree \in \Red\aset\sset\app$ by 
        \cref{lem:term_reducible_arg_abstraction_esub_reducible}, and 
        we are done.
      \item
        If $\tmthree \in \Struct\sset$, then
        $\inv\aset\sset{\tmtwo\esub\var\tmthree}$ implies in 
        particular $\inv\aset{\sset \cup \set\var}\tmtwo$.
        We can then apply the \ih (\ref{it:hyp2}) on $\tmtwo$, yielding
        either $\abs\tmtwo$ or $\tmtwo \in \Red\aset{\sset \cup \set\var}\app$.
        If the former, we conclude with $\abs{\tmtwo\esub\var\tmthree}$
        by definition.
        If the latter, then 
        $\tmtwo\esub\var\tmthree \in \Red\aset\sset\app$ by
        \cref{lem:term_reducible_arg_struct_esub_reducible}, and we 
        are done.
      \end{itemize}
    \item \ruleHAbsSubii.
      Then
      \[
        \inferrule{
          \tmtwo \in \HAbs{\aset \cup \set\var}
          \sep
          \var \notin \aset
          \sep
          \tmthree \in \HAbs\aset
        }{
          \tmtwo\esub\var\tmthree \in \HAbs\aset
        }\ruleHAbsSubii
      \]

      Moreover, $\inv\aset\sset{\tmtwo\esub\var\tmthree}$ implies in 
      particular $\inv{\aset \cup \set\var}\sset{\tmtwo}$.
      By the \ih (\ref{it:hyp2}) on $\tmtwo$, we yield either 
      $\abs\tmtwo$ or $\tmtwo \in \Red{\aset \cup \set\var}\sset\app$.
      If the former, we conclude with $\abs{\tmtwo\esub\var\tmthree}$
      by definition.
      If the latter, we apply \cref{lem:term_reducible_arg_abstraction_esub_reducible},
      yielding $\tmtwo\esub\var\tmthree \in \Red\aset\sset\app$, so
      we are done.
    \end{itemize}
  \end{itemize}
\end{enumerate}
\end{proof}

The following corollary combines soundness
(\cref{prop:soundness-characterization-normal-forms}) and 
completeness (\cref{prop:completeness_nf}) for a term $\tm$ in top-level 
position, \ie, when the \abstractionframe $\aset$ is empty, the 
structure frame $\sset$ is the set of all free variables of $\tm$,
and the positional flag is $\nonapp$:

\begin{corollary}[Characterisation of useful normal forms]
\label{thm:characterization_of_useful_normal_forms}
$\tm \in \NF{\emptyset}{\fv{\tm}}{\nonapp}$ iff $\tm \in \Irred{\emptyset}{\fv{\tm}}{\nonapp}$.
\end{corollary}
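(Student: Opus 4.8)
The plan is to obtain the corollary as a direct instantiation of the soundness and completeness results for useful normal forms, \cref{prop:soundness-characterization-normal-forms} and \cref{prop:completeness_nf}, applied to the particular parameters $\aset \eqdef \emptyset$, $\sset \eqdef \fv\tm$, and $\appflag \eqdef \nonapp$. The preliminary observation, already noted just after the definition of the correctness property, is that $\inv\emptyset{\fv\tm}\tm$ always holds: indeed $\emptyset \cap \fv\tm = \emptyset$ and $\fv\tm \subseteq \emptyset \cup \fv\tm$ both hold trivially. Since $\inv\aset\sset\tm$ is exactly the invariant assumed by both propositions, it is available for free in this top-level case, and no extra work is needed to establish it.

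For the left-to-right implication I would assume $\tm \in \NF\emptyset{\fv\tm}\nonapp$ and apply \cref{prop:soundness-characterization-normal-forms} with the parameters above; this immediately yields $\tm \in \Irred\emptyset{\fv\tm}\nonapp$. For the right-to-left implication I would assume $\tm \in \Irred\emptyset{\fv\tm}\nonapp$ and apply the first part of \cref{prop:completeness_nf} (item~\ref{it:hyp1}) with the same parameters. Its side condition requires that $\tm \in \HAbs\emptyset$ implies $\appflag = \nonapp$; since we have fixed $\appflag = \nonapp$, the conclusion of that implication holds outright, so the side condition is vacuously satisfied. Hence $\tm \in \NF\emptyset{\fv\tm}\nonapp$, which closes the proof.

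The corollary itself is therefore essentially a bookkeeping step; the genuine content lies in the two propositions it combines, and that is where I would expect the main obstacle. In particular, \cref{prop:completeness_nf} cannot be proved in the naive form ``irreducible implies normal form'' by a plain induction: a term such as $\var\esub\var\id$ is irreducible on its own but becomes reducible once placed in an applied position, so the induction must carry the extra clauses about applied hereditary abstractions (items~\ref{it:hyp2} and~\ref{it:hyp3}) in order to propagate this information through the congruence rules \ruleUAppL and \ruleUEsLAbs. The case analyses over the reduction rules and over the normal-form rules---supported by the auxiliary lemmas \cref{lem:absnf}, \cref{lem:nf_in_HAbs_or_Struct}, \cref{lem:sub_var_in_aset}, \cref{lem:sub_change_of_values}, and the two ``reducible-under-extended-frame'' lemmas \cref{lem:term_reducible_arg_abstraction_esub_reducible} and \cref{lem:term_reducible_arg_struct_esub_reducible}---are where the real work happens, whereas the present corollary merely specialises everything to $\aset = \emptyset$, $\sset = \fv\tm$, $\appflag = \nonapp$.
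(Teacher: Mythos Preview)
Your proposal is correct and matches the paper's approach exactly: the corollary is obtained by instantiating \cref{prop:soundness-characterization-normal-forms} and item~\ref{it:hyp1} of \cref{prop:completeness_nf} at $\aset=\emptyset$, $\sset=\fv\tm$, $\appflag=\nonapp$, using that $\inv\emptyset{\fv\tm}\tm$ holds trivially and that the side condition of item~\ref{it:hyp1} is satisfied because $\appflag=\nonapp$.
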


\label{ex:nf-useful}
An example of this result is the term $(\var \, \vartwo)\esub\vartwo\id$.
On one hand, the term is in $\Red\emptyset{\set\var}\nonapp$, since 
it cannot reduce using any reduction rule, and on the other hand it 
belongs to the set of normal forms by the following derivation:
\[
  \inferrule{
    \inferrule{
      \inferrule{
      }{
        \var \in \NF{\set\vartwo}{\set\var}\app
      }\ruleUNFVar
      \inferrule{
      }{
        \vartwo \in \NF{\set\vartwo}{\set\var}\nonapp
      }\ruleUNFVar
    }{
      \var \, \vartwo \in \NF{\set\vartwo}{\set\var}\nonapp
    }\ruleUNFApp
    \sep
    \inferrule{
    }{
      \id \in \NF\emptyset{\set\var}\nonapp
    }\ruleUNFLam
    \sep
    \inferrule{
    }{
      \id \in \HAbs\emptyset
    }\ruleHAbsLam
  }{
    (\var \, \vartwo)\esub\vartwo\id \in \NF\emptyset{\set\var}\nonapp
  }\ruleUNFEsAbs
\]

% -------------------------------------------------------------------

\subsection{Diamond Property}
\label{sec:diamond}

\begin{definition}[Expansion of abstraction and structure frames]
Let $\aset$ be an \abstractionframe.
We inductively define the \defn{expansion} of $\aset$ under $\sctx$,
written $\expansion\aset\sctx$, as follows:
\begin{align*}
    \expansion\aset\ctxhole 
  & \eqdef \aset
\\
    \expansion\aset{\sctxtwo\esub\var\tm} 
  & \eqdef 
    \left\{ 
      \begin{array}{ll}
        \expansion\aset\sctxtwo \cup \set\var & \text{if } \tm \in \HAbs\aset \\
        \expansion\aset\sctxtwo               & \text{otherwise}
      \end{array}
    \right.
\end{align*}

Analogously, let $\sset$ be a structure frame.
We inductively define the \defn{expansion} of $\sset$ under $\sctx$,
written $\expansion\sset\sctx$, as follows:
\begin{align*}
    \expansion\sset\ctxhole 
  & \eqdef \sset
\\
    \expansion\sset{\sctxtwo\esub\var\tm} 
  & \eqdef 
    \left\{ 
      \begin{array}{ll}
        \expansion\sset\sctxtwo \cup \set\var & \text{if } \tm \in \Struct\sset \\
        \expansion\sset\sctxtwo               & \text{otherwise}
      \end{array}
    \right.
\end{align*}
\end{definition}

\begin{lemma}
\label{lem:t_Struct_t_StructExp}
  $\tm\sctx \in \Struct\sset$ if and only if
  $\tm \in \Struct{\expansion\sset\sctx}$.
\end{lemma}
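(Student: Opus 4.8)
The plan is to prove the equivalence by induction on the substitution context $\sctx$, leaning on two auxiliary facts: monotonicity of $\Struct{-}$ in its frame, i.e. $\Struct\sset \subseteq \Struct\ssettwo$ whenever $\sset \subseteq \ssettwo$ (the fourth item of \cref{rem:habs_st}), and the observation that a closure $\tm'\esub\var\tmtwo'$ can be derived to be in $\Struct\sset$ only through rules \ruleStructSubi\ and \ruleStructSubii\ (rules \ruleStructVar\ and \ruleStructApp\ apply only to variables and applications). In the base case $\sctx = \ctxhole$ we have $\tm\sctx = \tm$ and $\expansion\sset\sctx = \sset$ by definition of expansion, so both sides of the stated equivalence are literally $\tm \in \Struct\sset$.

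For the inductive step, write $\sctx = \sctxtwo\esub\var\tmtwo$, so that $\tm\sctx = (\tm\sctxtwo)\esub\var\tmtwo$; by $\alpha$-renaming we take $\var$ fresh, in particular $\var \notin \sset$ and $\var$ occurring neither in $\tm$ nor in the arguments of $\sctxtwo$. We split on whether $\tmtwo \in \Struct\sset$. If $\tmtwo \notin \Struct\sset$, then $\expansion\sset\sctx = \expansion\sset\sctxtwo$ by definition, and since rule \ruleStructSubii\ is blocked by its failing premise $\tmtwo \in \Struct\sset$, we get $(\tm\sctxtwo)\esub\var\tmtwo \in \Struct\sset$ iff $\tm\sctxtwo \in \Struct\sset$ (via \ruleStructSubi, whose side condition $\var \notin \sset$ holds by freshness); applying the induction hypothesis to $\sctxtwo$ at frame $\sset$ closes this case. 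If $\tmtwo \in \Struct\sset$, then $\expansion\sset\sctx = \expansion\sset\sctxtwo \cup \set\var$; by inversion $(\tm\sctxtwo)\esub\var\tmtwo \in \Struct\sset$ holds iff $\tm\sctxtwo \in \Struct\sset$ (via \ruleStructSubi) or $\tm\sctxtwo \in \Struct{\sset \cup \set\var}$ (via \ruleStructSubii), and by monotonicity the first disjunct is subsumed by the second, so the condition is just $\tm\sctxtwo \in \Struct{\sset \cup \set\var}$. The induction hypothesis applied to $\sctxtwo$ at frame $\sset \cup \set\var$ then gives $\tm\sctxtwo \in \Struct{\sset\cup\set\var}$ iff $\tm \in \Struct{\expansion{(\sset\cup\set\var)}{\sctxtwo}}$, and one finally checks that $\expansion{(\sset\cup\set\var)}{\sctxtwo} = \expansion\sset\sctxtwo \cup \set\var$: since $\var$ does not occur in the arguments of $\sctxtwo$, expanding at $\sset$ and at $\sset \cup \set\var$ selects exactly the same binders of $\sctxtwo$ (again by monotonicity of $\Struct{-}$ together with freshness), so the two expansions differ only by $\var$.

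I expect the last step of the inductive case to be the main obstacle: it is where one has to reconcile the way $\expansion\sset{-}$ threads the frame along the list of explicit substitutions with the way a derivation of membership in $\Struct{-}$ accumulates the bound variables of those substitutions, and this is exactly where the freshness convention on the binders of $\sctx$ (together with monotonicity of $\Struct{-}$) is essential. The rest of the argument is routine case analysis on the last rule of the $\Struct{-}$ derivation, and the analogous statement for $\HAbs{-}$ — should it be needed — would follow by the same induction, using the corresponding monotonicity and inversion properties for hereditary abstractions.
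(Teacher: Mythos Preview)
Your overall approach---induction on $\sctx$, splitting on whether the outermost ES argument lies in $\Struct\sset$, and reducing to the induction hypothesis at a possibly larger frame---matches the paper's, and the base case and the first branch of the inductive step are correct.

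The gap is in your freshness assumption. You claim that by $\alpha$-renaming you may take the outermost binder $\var$ in $(\tm\sctxtwo)\esub\var\tmtwo$ fresh for the arguments of $\sctxtwo$; but the scope of $\var$ is precisely $\tm\sctxtwo$, so $\var$ may legitimately occur free in those arguments, and renaming $\var$ renames those occurrences along with it---you never obtain an inner context whose arguments avoid the outer binder. This breaks the identity $\expansion{(\sset\cup\set\var)}{\sctxtwo}=\expansion\sset{\sctxtwo}\cup\set\var$: with $\sctxtwo=\ctxhole\esub{\var_1}{\var}$ and $\sset=\set\varthree$, the left-hand side is $\set{\varthree,\var,\var_1}$ (since $\var\in\Struct{\sset\cup\set\var}$) while the right-hand side is $\set{\varthree,\var}$. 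Worse, the lemma itself fails on this data: taking $\tm=\var_1$ and $\sctx=\ctxhole\esub{\var_1}{\var}\esub{\var}{\varthree\,\varthree}$, one checks $\tm\sctx\in\Struct{\set\varthree}$ (by two applications of \ruleStructSubii) while $\expansion{\set\varthree}{\sctx}=\set{\varthree,\var}$ and $\var_1\notin\Struct{\set{\varthree,\var}}$. So the statement needs an extra hypothesis excluding such inter-binder dependencies in $\sctx$ before any inductive argument can close; this is how the paper's applications of the lemma are actually set up (via ambient $\alpha$-conventions on $\domSctx\sctx$). Even once that hypothesis is in place, to conclude that expanding at $\sset$ and at $\sset\cup\set\var$ selects the same binders you need both monotonicity \emph{and} a strengthening fact (if $\var\notin\fv\tmthree$ then $\tmthree\in\Struct{\sset\cup\set\var}$ implies $\tmthree\in\Struct\sset$, \cref{lem:strengthening_a_s}), not monotonicity alone.
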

% Label: lem:t_Struct_t_StructExp

\begin{proof}
We prove both implications by induction on the length of $\sctx$.
In the two directions we skip the base case, as it is straightforward.

We first show $\tm\sctx \in \Struct\sset$ implies 
$\tm \in \Struct{\expansion\sset\sctx}$.
Then, we are in the case where $\sctx = \sctxtwo\esub\var\tmtwo$. 
The judgement $\tm\sctxtwo\esub\var\tmtwo \in \Struct\sset$ 
can be derived either by rule \ruleStructSubi or by rule \ruleStructSubii:
\begin{itemize}
\item \ruleStructSubi. 
  Then, $\tm\sctxtwo \in \Struct{\sset}$ and $\var \notin \sset$.
  We apply the \ih on $\sctxtwo$, yielding 
  $\tm \in \Struct{\expansion\sset\sctxtwo}$.
  Since $\expansion\sset\sctxtwo \subseteq \expansion\sset{\sctxtwo\esub\var\tmtwo}$
  by definition of $\expansion\sset{\sctxtwo\esub\var\tmtwo}$, then 
  $\Struct{\expansion\sset\sctxtwo} \subseteq \Struct{\expansion\sset{\sctxtwo\esub\var\tmtwo}}$
  by \cref{rem:habs_st}.
  Therefore, $\tm \in \Struct{\expansion\sset\sctx}$.
\item \ruleStructSubii. 
  Then, $\tm\sctxtwo \in \Struct{\sset \cup \set\var}$, 
  $\var \notin \sset$, and $\tmtwo \in \Struct\sset$.
  We apply the \ih on $\sctxtwo$, yielding 
  $\tm \in \Struct{\expansion\sset\sctxtwo \cup \set\var}$.
  Since $\tmtwo \in \Struct\sset$, then $\expansion\sset\sctx 
  = \expansion\sset\sctxtwo \cup \set\var$.
  We then conclude with $\tm \in \Struct{\expansion\sset\sctx}$.
\end{itemize}

We now show $\tm \in \Struct{\expansion\sset\sctx}$ 
implies $\tm\sctx \in \Struct\sset$.
And we are in the case where $\sctx = \sctxtwo\esub\var\tmtwo$, so 
$\tm \in \Struct{\expansion\sset{\sctxtwo\esub\var\tmtwo}}$ by hypothesis.
We have two cases, by definition of $\expansion\sset{\sctxtwo\esub\var\tmtwo}$.
\begin{itemize}
\item 
  If $\tmtwo \in \Struct\sset$, then 
  $\expansion\sset{\sctxtwo\esub\var\tmtwo} 
  = \expansion\sset\sctxtwo \cup \set\var$, so 
  $\tm \in \Struct{\expansion\sset\sctxtwo \cup \set\var}$, and
  $\tm\sctxtwo \in \Struct{\sset \cup \set\var}$ by the \ih on $\sctxtwo$.
  Moreover, $\var \notin \expansion\sset\sctxtwo$ by $\alpha$-conversion
  so in particular, $\var \notin \sset$, as 
  $\sset \subseteq \expansion\sset\sctxtwo$.
  Applying rule \ruleStructSubii, we yield
  $\tm\sctxtwo\esub\var\tmtwo \in \Struct\sset$.
\item 
  If $\tmtwo \notin \Struct\sset$, then
  $\expansion\sset{\sctxtwo\esub\var\tmtwo} = \expansion\sset\sctxtwo$,
  so $\tm \in \Struct{\expansion\sset\sctxtwo}$.
  Moreover, $\var \notin \expansion\sset\sctxtwo$ by $\alpha$-conversion
  so in particular, $\var \notin \sset$.
  Then, $\tm\sctxtwo \in \Struct\sset$ by the \ih on $\sctxtwo$.
  Applying rule \ruleStructSubi, we yield
  $\tm\sctxtwo\esub\var\tmtwo \in \Struct\sset$.
\end{itemize}
\end{proof}

\begin{lemma}
\label{lem:habs_rulesub_closed_reduction}
Let $\inv{\aset \cup \set\var}\sset\tm$, and 
let $\asettwo$ be a set of variables disjoint from $\aset$.
If $\tm \tov{\rulesub\var\val}{\aset \cup \set\var}\sset\appflag \tm'$
with $\val \in \HAbs{\aset \cup \asettwo}$
and $\tm \in \HAbs{\aset \cup \set\var}$,
then $\tm' \in \HAbs{\aset \cup \set\var \cup \asettwo}$.
\end{lemma}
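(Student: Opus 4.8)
The plan is to prove the statement by induction on the derivation of the reduction step $\tm \tov{\rulesub\var\val}{\aset\cup\set\var}\sset\appflag \tm'$, using pervasively that $\HAbs{}$ is monotone in its frame (\cref{rem:habs_st}, monotonicity of $\HAbs{}$), so that any judgement $\tmfour\in\HAbs\aset$ can be weakened to $\tmfour\in\HAbs{\aset'}$ whenever $\aset\subseteq\aset'$; this is what lets us absorb the extra variables of $\asettwo$. Since $\rulename=\rulesub\var\val$ is a substitution step kind, the last rule of the derivation is one of \ruleUSub, \ruleUAppL, \ruleUAppR, \ruleUEsR, \ruleUEsLAbs, or \ruleUEsLStruct --- the rules \ruleUDb and \ruleULsv are excluded, since they produce only $\ruledb$- and $\rulelsv$-steps. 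Before starting the induction I would assume, by $\alpha$-conversion, that every variable bound by an ES in $\tm$ is fresh; in particular such a variable does not belong to $\asettwo$, which is exactly what keeps the abstraction frame disjoint from $\asettwo$ when it grows in the inductive cases.

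The two base kinds of case are immediate. For \ruleUSub we have $\tm=\var$, $\tm'=\val$, and $\val\in\HAbs{\aset\cup\asettwo}\subseteq\HAbs{\aset\cup\set\var\cup\asettwo}$ by monotonicity. The cases \ruleUAppL and \ruleUAppR are vacuous, because there $\tm$ is an application, and an application is never a hereditary abstraction (\cref{rem:habs_st}), contradicting the hypothesis $\tm\in\HAbs{\aset\cup\set\var}$.

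The remaining three cases are the ES-congruences, all with $\tm=\tmfour\esub\vartwo\tmfive$; in each one I would first invert the hypothesis $\tm\in\HAbs{\aset\cup\set\var}$, which must come from \ruleHAbsSubi or \ruleHAbsSubii. In case \ruleUEsR the step acts on the argument, $\tmfive\to\tmfive'$ at frame $\aset\cup\set\var$: if $\tm$ is a hereditary abstraction by \ruleHAbsSubi the argument plays no role, so $\tm'=\tmfour\esub\vartwo{\tmfive'}\in\HAbs{\aset\cup\set\var}$ follows at once and then weakens by $\asettwo$; if by \ruleHAbsSubii, I would note that $\inv{\aset\cup\set\var}\sset\tmfive$ follows from $\inv{\aset\cup\set\var}\sset\tm$ (as $\fv\tmfive\subseteq\fv\tm$), apply the induction hypothesis to get $\tmfive'\in\HAbs{\aset\cup\set\var\cup\asettwo}$, and rebuild $\tm'$ with \ruleHAbsSubii. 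In cases \ruleUEsLAbs and \ruleUEsLStruct the step acts on the body $\tmfour$ under the \emph{enlarged} frame --- $\aset\cup\set\var\cup\set\vartwo$ in the first, $\sset\cup\set\vartwo$ in the second --- so I would apply the induction hypothesis with $\aset$ replaced by $\aset\cup\set\vartwo$ (resp.\ $\sset$ replaced by $\sset\cup\set\vartwo$), checking the sub-invariant from $\inv{\aset\cup\set\var}\sset\tm$, the side condition $\vartwo\notin\aset\cup\sset$, and the freshness of $\vartwo$. A convenient simplification is that in \ruleUEsLStruct the subcase in which $\tm\in\HAbs{\aset\cup\set\var}$ is derived by \ruleHAbsSubii is impossible: it would require $\tmfive$ to lie both in $\HAbs{\aset\cup\set\var}$ and in $\Struct\sset$, contradicting \cref{lem:disjunction} since $(\aset\cup\set\var)\disj\sset$ by the invariant.

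The main bookkeeping obstacle will be the \ruleUEsLAbs case: from the induction hypothesis one only obtains $\tmfour'\in\HAbs{\aset\cup\set\var\cup\set\vartwo\cup\asettwo}$, \ie with $\vartwo$ still in the frame, so to conclude $\tm'=\tmfour'\esub\vartwo\tmfive\in\HAbs{\aset\cup\set\var\cup\asettwo}$ one must close with \ruleHAbsSubii rather than \ruleHAbsSubi; this needs $\tmfive\in\HAbs{\aset\cup\set\var\cup\asettwo}$, available by monotonicity from the side condition $\tmfive\in\HAbs{\aset\cup\set\var}$ of rule \ruleUEsLAbs, and $\vartwo\notin\aset\cup\set\var\cup\asettwo$, which holds by the side condition $\vartwo\notin\aset$, the correctness invariant, and the freshness assumption. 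Beyond this frame-tracking, every step is a routine application of monotonicity of $\HAbs{}$ and manipulation of the correctness invariant.
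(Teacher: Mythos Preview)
Your proposal is correct and follows essentially the same approach as the paper: induction on the derivation of the reduction step, with the application cases vacuous by \cref{rem:habs_st}, the ES-congruence cases handled by inverting the $\HAbs{}$ hypothesis (\ruleHAbsSubi/\ruleHAbsSubii), applying the induction hypothesis with the suitably adjusted frames, and rebuilding via monotonicity of $\HAbs{}$; the impossible \ruleHAbsSubii subcase under \ruleUEsLStruct is ruled out exactly as you say, via \cref{lem:disjunction}.
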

% Label: lem:habs_rulesub_closed_reduction

\begin{proof}
By induction on the derivation of 
$\tm \tov{\rulesub\var\val}{\aset \cup \set\var}\sset\appflag \tm'$.
Note that cases \ruleUAppL and \ruleUAppR are impossible since 
$\tm = \tmtwo \, \tmthree$, which cannot be an element of 
$\HAbs{\aset \cup \set\var}$ by \cref{rem:habs_st}.
We analyse the remaining cases.
\begin{itemize}
\item \ruleUSub.
  Then
  $\tm = \var \tov{\rulesub\var\val}{\aset \cup \set\var}\sset\app \val = \tm'$,
  with $\appflag = \app$.
  Since $\val \in \HAbs{\aset \cup \asettwo}$ by hypothesis,
  then $\val \in \HAbs{\aset \cup \set\var \cup \asettwo}$ by \cref{rem:habs_st}, 
  as $\aset \cup \asettwo \subseteq \aset \cup \set\var \cup \asettwo$.
\item \ruleUEsR.
  Then
  \[
    \inferrule{
      \tmthree \tov{\rulesub\var\val}{\aset \cup \set\var}\sset\nonapp \tmthree'
    }{
      \tm = \tmtwo\esub\vartwo\tmthree
      \tov{\rulesub\var\val}{\aset \cup \set\var}\sset\appflag 
      \tmtwo\esub\vartwo{\tmthree'} = \tm'
    }\ruleUEsR
  \]

  We consider two cases: whether 
  $\tmtwo\esub\vartwo\tmthree \in \HAbs{\aset \cup \set\var}$ is 
  derived by rule \ruleHAbsSubi or by rule \ruleHAbsSubii.
  For both cases we consider $\vartwo \notin \asettwo$ by $\alpha$-conversion.
  \begin{itemize}
  \item \ruleHAbsSubi.
    Then $\tmtwo \in \HAbs{\aset \cup \set\var}$ 
    and $\vartwo \notin \aset \cup \set\var$.
    We have $\tmtwo \in \HAbs{\aset \cup \set\var \cup \asettwo}$
    by \cref{rem:habs_st}.
    We apply rule \ruleHAbsSubi, yielding
    $\tmtwo\esub\vartwo{\tmthree'} \in \HAbs{\aset \cup \set\var \cup \asettwo}$.
  \item \ruleHAbsSubii.
    Then $\tmtwo \in \HAbs{\aset \cup \set\var \cup \set\vartwo}$,
    $\vartwo \notin \aset \cup \set\var$,
    and $\tmthree \in \HAbs{\aset \cup \set\var}$.
    We have $\tmtwo \in \HAbs{\aset \cup \set\var \cup \set\vartwo \cup \asettwo}$
    and $\tmthree \in \HAbs{\aset \cup \set\var \cup \asettwo}$
    by \cref{rem:habs_st}, so 
    $\tmthree' \in \HAbs{\aset \cup \set\var \cup \asettwo}$ 
    by the \ih on $\tmthree$.
    We apply rule \ruleHAbsSubii, yielding
    $\tmtwo\esub\vartwo{\tmthree'} \in \HAbs{\aset \cup \set\var \cup \asettwo}$.
  \end{itemize}
\item \ruleUEsLAbs.
  Then
  \[
    \inferrule{
      \tmtwo \tov{\rulesub\var\val}{\aset \cup \set\var \cup \set\vartwo}\sset\appflag \tmtwo'
      \sep
      \tmthree \in \HAbs{\aset \cup \set\var}
      \sep
      \vartwo \notin (\aset \cup \set\var) \cup \sset
      \sep
      \vartwo \notin \fv{\rulesub\var\val}
    }{
      \tm = \tmtwo\esub\vartwo\tmthree
      \tov{\rulesub\var\val}{\aset \cup \set\var}\sset\appflag 
      \tmtwo'\esub\vartwo\tmthree = \tm'
    }\ruleUEsLAbs
  \]
  The judgement $\tmtwo\esub\var\tmthree \in \HAbs{\aset \cup \set\var}$
  can be derived either by rule \ruleHAbsSubi or by rule \ruleHAbsSubii.
  The former has $\tmtwo \in \HAbs{\aset \cup \set\var}$ as premise,
  and since $\aset \cup \set\var \subseteq \aset \cup \set\var \cup \set\vartwo$,
  then $\tmtwo \in \HAbs{\aset \cup \set\var \cup \set\vartwo}$
  by \cref{rem:habs_st}.
  And it is also the case that $\tmthree \in \HAbs{\aset \cup \set\var}$
  by premise of the rule \ruleUEsLAbs, so in both cases we proceed as follows.
  Since $\inv{\aset \cup \set\var}\sset{\tmtwo\esub\var\tmthree}$
  then in particular $\inv{\aset \cup \set\var \cup \set\vartwo}\sset\tmtwo$.
  Moreover, $\val \in \HAbs{\aset\cup\set\vartwo\cup\asettwo}$
  by \cref{rem:habs_st}, so we apply the \ih on $\tmtwo$, yielding
  $\tmtwo' \in \HAbs{(\aset \cup \set\vartwo) \cup \set\var \cup \asettwo}$.
  Moreover, $\tmthree \in \HAbs{\aset \cup \set\var \cup \asettwo}$
  by \cref{rem:habs_st}, as 
  $\aset \cup \set\var \subseteq \aset \cup \set\var \cup \asettwo$.
  And we assume $\vartwo \notin \asettwo$ by $\alpha$-conversion,
  so we apply rule \ruleHAbsSubii, yielding
  $\tmtwo'\esub\var\tmthree \in \HAbs{\aset \cup \set\var \cup \asettwo}$.
\item \ruleUEsLStruct.
  Then
  \[
    \inferrule{
      \tmtwo \tov{\rulesub\var\val}{\aset \cup \set\var}{\sset \cup \set\vartwo}\appflag \tmtwo'
      \sep
      \tmthree \in \Struct\sset
      \sep
      \vartwo \notin (\aset \cup \set\var) \cup \sset
      \sep
      \vartwo \notin \fv{\rulesub\var\val}
    }{
      \tm = \tmtwo\esub\vartwo\tmthree 
      \tov{\rulesub\var\val}{\aset \cup \set\var}\sset\appflag 
      \tmtwo'\esub\vartwo\tmthree = \tm'
    }\ruleUEsLStruct
  \]

  We consider two cases depending on whether
  $\tmtwo\esub\var\tmthree \in \HAbs{\aset \cup \set\var}$
  is derived using rule \ruleHAbsSubi or rule \ruleHAbsSubii:
  \begin{itemize}
  \item \ruleHAbsSubi.
    Then $\tmtwo \in \HAbs{\aset \cup \set\var}$ 
    and $\vartwo \notin \aset \cup \set\var$.
    Since $\inv{\aset \cup \set\var}\sset{\tmtwo\esub\var\tmthree}$
    then in particular $\inv{\aset \cup \set\var}{\sset \cup \set\vartwo}\tmtwo$.
    We apply the \ih on $\tmtwo$, yielding
    $\tmtwo' \in \HAbs{\aset \cup \set\var \cup \asettwo}$.
    Applying rule \ruleHAbsSubi we obtain
    $\tmtwo'\esub\var\tmthree \in \HAbs{\aset \cup \set\var \cup \asettwo}$.
  \item \ruleHAbsSubii.
    Then $\tmtwo \in \HAbs{\aset \cup \set\var \cup \set\vartwo}$,
    $\vartwo \notin \aset \cup \set\var$, and $\tmthree \in \HAbs{\aset \cup \set\var}$.
    Moreover, $\tmthree \in \Struct\sset$ by premise of the rule \ruleUEsLStruct.
    Then $(\aset \cup \set\var) \cap \sset \neq \emptyset$
    by contraposition of \cref{lem:disjunction},
    which contradicts $\inv{\aset \cup \set\var}\sset{\tmtwo\esub\var\tmthree}$.
    Hence this case is impossible.
  \end{itemize}
\end{itemize}
\end{proof}

\begin{definition}[Hereditary variables]
The set of \defn{hereditary variables} under a variable $\var$ is
written $\HVar\var$ and is defined inductively as follows:
\[
  \inferrule{
  }{
    \var \in \HVar\var
  }\ruleHVarVar
  \HS
  \inferrule{
    \tm \in \HVar\var
    \sep
    \var \neq \vartwo
  }{
    \tm\esub\vartwo\tmtwo \in \HVar\var
  }\ruleHVarSubi
  \HS
  \inferrule{
    \tm \in \HVar\vartwo
    \sep
    \tmtwo \in \HVar\var
  }{
    \tm\esub\vartwo\tmtwo \in \HVar\var
  }\ruleHVarSubii
\]
\end{definition}

\begin{remark}
\label{rem:t_varx_tL_varx}
If $\tm \in \HVar\var$ and $\var \notin \dom\sctx$,
then $\tm\sctx \in \HVar\var$.
\end{remark}

\begin{lemma}
\label{lem:varx_included_habsxext}
Let $\aset$ be an \abstractionframe and let $\var$ be any variable.
If $\tm \in \HVar\var$, then $\tm \in \HAbs{\aset \cup \set\var}$.
\end{lemma}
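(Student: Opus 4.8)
The plan is to prove the statement by induction on the derivation of $\tm \in \HVar\var$, keeping the induction hypothesis universally quantified over the abstraction frame $\aset$ (and noting that, since the statement also quantifies over the anchor variable, the hypothesis may be applied to a sub-derivation whose anchor variable is not $\var$ — this will be needed in the \ruleHVarSubii case). Before splitting into cases I would use $\alpha$-conversion to assume that the variable bound by any explicit substitution occurring at the root of $\tm$ is fresh, in particular not a member of $\aset \cup \set\var$, so that the freshness side-conditions of the $\HAbs{}$ rules are automatically satisfied.

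The three cases are then routine applications of the $\HAbs{}$ introduction rules. For \ruleHVarVar, $\tm = \var$ and $\var \in \aset \cup \set\var$, so rule \ruleHAbsVar applies immediately. For \ruleHVarSubi, $\tm = \tmthree\esub\vartwo\tmtwo$ with $\tmthree \in \HVar\var$ and $\vartwo \neq \var$; I would apply the induction hypothesis to $\tmthree$ with frame $\aset$ to get $\tmthree \in \HAbs{\aset \cup \set\var}$, then conclude $\tm \in \HAbs{\aset \cup \set\var}$ by rule \ruleHAbsSubi, using $\vartwo \notin \aset \cup \set\var$. For \ruleHVarSubii, $\tm = \tmthree\esub\vartwo\tmtwo$ with $\tmthree \in \HVar\vartwo$ and $\tmtwo \in \HVar\var$; here I would apply the induction hypothesis twice — once to $\tmthree$ (at anchor variable $\vartwo$) with frame $\aset \cup \set\var$ to obtain $\tmthree \in \HAbs{\aset \cup \set\var \cup \set\vartwo}$, and once to $\tmtwo$ with frame $\aset$ to obtain $\tmtwo \in \HAbs{\aset \cup \set\var}$ — and then invoke rule \ruleHAbsSubii, again using $\vartwo \notin \aset \cup \set\var$.

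The only real subtlety, and the step I would be most careful about, is the frame bookkeeping: the induction hypothesis must be stated generally enough in $\aset$, and in the \ruleHVarSubii case it must additionally be invoked at the shifted anchor variable $\vartwo$ and with the enlarged frame $\aset \cup \set\var$, matching precisely the shape of the premise of rule \ruleHAbsSubii. Combined with the initial $\alpha$-renaming to discharge the freshness conditions, nothing further is required; Remark~\ref{rem:habs_st}(1) could be used to relax frame mismatches but is not strictly necessary here.
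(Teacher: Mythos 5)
Your proposal is correct and takes essentially the same route as the paper, which proves this lemma by a straightforward induction on the derivation of $\tm \in \HVar\var$. The generalized induction hypothesis (over both the frame $\aset$ and the anchor variable), the application of \ruleHAbsVar, \ruleHAbsSubi, and \ruleHAbsSubii in the three cases, and the use of $\alpha$-conversion to discharge the freshness side-conditions are exactly the expected details.
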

\hiddenproof{
  The proof is straightforward by induction on the derivation of the 
  judgement $\tm \in \HVar\var$.
}{
  % Label: lem:varx_included_habsxext

\begin{proof}
By induction on the judgement $\tm \in \HVar\var$.
\begin{enumerate}
\item \ruleHVarVar.
  Then $\tm = \var$.
  Since $\var \in \set\var$,
  then $\var \in \HAbs{\aset \cup \set\var}$
  by rule \ruleHAbsVar.
\item \ruleHVarSubi.
  Then $\tm = \tmtwo\esub\vartwo\tmthree$.
  The judgement $\tmtwo\esub\vartwo\tmthree \in \HVar\var$ is derived 
  from $\tmtwo \in \HVar\var$ and $\var \neq \vartwo$.
  We apply the \ih on $\tmtwo$,
  yielding $\tmtwo \in \HAbs{\aset \cup \set\var}$.
  By $\alpha$-conversion we assume $\vartwo \notin \aset \cup \set\var$.
  Applying rule \ruleHAbsSubi we obtain
  $\tmtwo \esub\var\tmthree \in \HAbs{\aset \cup \set\var}$.
\item \ruleHVarSubii.
  Then $\tm = \tmtwo\esub\vartwo\tmthree$.
  The judgement $\tmtwo\esub\vartwo\tmthree \in \HVar\var$
  is derived from
  $\tmtwo \in \HVar\vartwo$ and $\tmthree \in \HVar\var$.
  We apply the \ih on $\tmtwo$,
  yielding $\tmtwo \in \HAbs{\aset \cup  \set\var \cup \set\vartwo}$,
  where we assume $\vartwo \notin \aset \cup \set\var$
  by $\alpha$-conversion.
  On the other hand,
  $\tmthree \in \HAbs{\aset \cup \set\var}$
  by the \ih on $\tmthree$.
  Applying rule \ruleHAbsSubii we conclude
  $\tmtwo\esub\var\tmthree \in \HAbs{\aset \cup \set\var}$.
\end{enumerate}
\end{proof}
  
}

\begin{lemma}
\label{lem:t_habs_hvar}
Let $\aset$ be an \abstractionframe, and let 
$\aset = \aset_1 \cup \aset_2$ be an arbitrary partition.
If $\tm \in \HAbs\aset$, then either $\tm \in \HAbs{\aset_1}$
or there exists $\var \in \aset_2$ such that $\tm \in \HVar\var$.
\end{lemma}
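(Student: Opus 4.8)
The plan is to argue by structural induction on the derivation of $\tm \in \HAbs\aset$, carrying the decomposition $\aset = \aset_1 \cup \aset_2$ through the premises and, at the binding rule, invoking the induction hypothesis against a re-balanced decomposition. Throughout I will freely use that $\aset_1 \subseteq \aset$ and $\aset_2 \subseteq \aset$, so that $\vartwo \notin \aset$ forces $\vartwo$ out of both $\aset_1$ and $\aset_2$. The two axiom cases are immediate: if the last rule is \ruleHAbsLam then $\tm$ is an abstraction, hence $\tm \in \HAbs{\aset_1}$ by \ruleHAbsLam (left disjunct); if it is \ruleHAbsVar then $\tm = \var$ with $\var \in \aset_1 \cup \aset_2$, and we conclude $\tm \in \HAbs{\aset_1}$ by \ruleHAbsVar when $\var \in \aset_1$, or $\tm \in \HVar\var$ by \ruleHVarVar when $\var \in \aset_2$.

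For the congruence rule \ruleHAbsSubi we have $\tm = \tm_0\esub\vartwo\tmtwo$ with $\tm_0 \in \HAbs\aset$ and $\vartwo \notin \aset$; I would apply the induction hypothesis to $\tm_0$ with the same decomposition. If it yields $\tm_0 \in \HAbs{\aset_1}$, then since $\vartwo \notin \aset_1$ rule \ruleHAbsSubi gives $\tm \in \HAbs{\aset_1}$. If it yields $\tm_0 \in \HVar\var$ for some $\var \in \aset_2$, then $\var \in \aset$ while $\vartwo \notin \aset$, so $\var \neq \vartwo$ and rule \ruleHVarSubi gives $\tm \in \HVar\var$.

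The real work is the binding rule \ruleHAbsSubii, where $\tm = \tm_0\esub\vartwo\tmtwo$ with $\tm_0 \in \HAbs{\aset \cup \set\vartwo}$, $\vartwo \notin \aset$, and $\tmtwo \in \HAbs\aset$. The crucial move is to apply the induction hypothesis to $\tm_0$ against the decomposition $\aset \cup \set\vartwo = \aset_1 \cup (\aset_2 \cup \set\vartwo)$, i.e.\ absorbing the freshly bound variable $\vartwo$ into the second block. If this returns $\tm_0 \in \HAbs{\aset_1}$, then (again $\vartwo \notin \aset_1$) rule \ruleHAbsSubi already gives $\tm \in \HAbs{\aset_1}$. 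Otherwise it returns $\tm_0 \in \HVar{\var''}$ for some $\var'' \in \aset_2 \cup \set\vartwo$, and I would split on whether $\var'' = \vartwo$. If $\var'' \in \aset_2$ (so $\var'' \neq \vartwo$), then \ruleHVarSubi gives $\tm \in \HVar{\var''}$. If $\var'' = \vartwo$, i.e.\ $\tm_0 \in \HVar\vartwo$, I would then apply the induction hypothesis to $\tmtwo \in \HAbs\aset$ with the original decomposition: if $\tmtwo \in \HAbs{\aset_1}$, use \cref{lem:varx_included_habsxext} to lift $\tm_0 \in \HVar\vartwo$ to $\tm_0 \in \HAbs{\aset_1 \cup \set\vartwo}$, and then rule \ruleHAbsSubii (with $\vartwo \notin \aset_1$) gives $\tm \in \HAbs{\aset_1}$; if instead $\tmtwo \in \HVar\var$ for some $\var \in \aset_2$, then rule \ruleHVarSubii combines $\tm_0 \in \HVar\vartwo$ with $\tmtwo \in \HVar\var$ into $\tm \in \HVar\var$.

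I expect the main obstacle to be precisely this \ruleHAbsSubii case: one has to realise that the recursive call on the body $\tm_0$ must be made with $\vartwo$ pushed onto the $\aset_2$-side of the decomposition, and, in the subcase $\var'' = \vartwo$, one needs \cref{lem:varx_included_habsxext} to convert the $\HVar{}$-information back into $\HAbs{}$-information before \ruleHAbsSubii can be reapplied. Everything else reduces to routine checks of the freshness side conditions $\vartwo \notin \aset \supseteq \aset_1,\aset_2$.
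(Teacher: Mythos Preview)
Your proof is correct and follows the natural route: induction on the derivation of $\tm \in \HAbs\aset$, with the key manoeuvre in the \ruleHAbsSubii case of shifting the bound variable $\vartwo$ into the $\aset_2$-component before invoking the induction hypothesis on the body, and then falling back on \cref{lem:varx_included_habsxext} plus a second induction-hypothesis call on $\tmtwo$ when the body turns out to be in $\HVar\vartwo$. This is exactly the argument the paper gives; the case analysis and the auxiliary lemma you invoke coincide with the paper's own proof.
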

%Label: lem:t_habs_hvar

\begin{proof}
By induction on the derivation of
$\tm \in \HAbs{\aset_1 \cup \aset_2}$.
\begin{itemize}
\item \ruleHAbsVar.
  Then $\tm = \vartwo$,
  and $\vartwo \in \aset_1 \cup \aset_2$
  by premise of the rule \ruleHAbsVar.
  There are two cases, depending on whether
  $\vartwo \in \aset_1$ or $\vartwo \in \aset_2$.
  If $\vartwo \in \aset_1$, then $\vartwo \in \HAbs{\aset_1}$
  by rule \ruleHAbsVar.
  If $\vartwo \in \aset_2$, then $\vartwo \in \HVar\vartwo$
  by rule \ruleHVarVar.
\item \ruleHAbsLam.
  Then $\tm = \lam\vartwo\tmtwo$.
  We apply rule \ruleHAbsLam, 
  yielding $\lam\vartwo\tmtwo \in \HAbs{\aset_1}$.
\item \ruleHAbsSubi.
  Then $\tm = \tmtwo\esub\vartwo\tmthree$.
  The judgement 
  $\tmtwo\esub\vartwo\tmthree \in \HAbs{\aset_1 \cup \aset_2}$
  is derived from $\tmtwo \in \HAbs{\aset_1 \cup \aset_2}$ 
  and $\vartwo \notin \aset_1 \cup \aset_2$.
  We take $\aset' = \aset_1 \cup (\aset_2 \cup \set{\vartwo})$ and 
  apply the \ih on $\tmtwo$, yielding two possible cases:
  \begin{itemize}
  \item $\tmtwo \in \HAbs{\aset_1}$.
    Since $\vartwo \notin \aset_1$,
    we can apply rule \ruleHAbsSubi, yielding
    $\tmtwo\esub\vartwo\tmthree \in \HAbs{\aset_1}$.
  \item $\tmtwo \in \HVar\var$, for some $\var \in \aset_2$.
    Since $\vartwo \notin \aset_2$, then $\var \neq \vartwo$
    and we can apply rule \ruleHVarSubi,
    yielding $\tmtwo\esub\vartwo\tmthree \in \HVar\vartwo$.
  \end{itemize}
\item \ruleHAbsSubii.
  Then $\tm = \tmtwo\esub\vartwo\tmthree$.
  The judgement 
  $\tmtwo\esub\vartwo\tmthree \in \HAbs{\aset_1 \cup \aset_2}$
  is derived from 
  $\tmtwo \in \HAbs{(\aset_1 \cup \aset_2) \cup \set\vartwo}$, 
  $\vartwo \notin \aset_1 \cup \aset_2$,
  and $\tmthree \in \HAbs{\aset_1 \cup \aset_2}$.
  By the \ih on $\tmtwo$ we have two possible cases:
  \begin{itemize}
  \item $\tmtwo \in \HAbs{\aset_1}$.
    Since $\vartwo \notin \aset_1$,
    we can apply rule \ruleHAbsSubi, yielding
    $\tmtwo\esub\vartwo\tmthree \in \HAbs{\aset_1}$.
  \item
    $\tmtwo \in \HVar\var$ for some 
    $\var \in (\aset_2 \cup \set\vartwo)$.
    We have two cases, depending on whether $\var = \vartwo$ or not:
    \begin{itemize}
    \item
      If $\var = \vartwo$, then by the \ih on $\tmthree$ 
      we have two possible subcases:
      \begin{itemize}
      \item $\tmthree \in \HAbs{\aset_1}$.
        Since
        $\vartwo \notin \aset_1$,
        and $\tmtwo \in \HAbs{\aset_1 \cup \set{\vartwo}}$
        by \cref{lem:varx_included_habsxext},
        then $\tmtwo\esub\vartwo\tmthree \in \HAbs{\aset_1}$
        by rule \ruleHAbsSubii.
      \item $\tmthree \in \HVar\varthree$ for some $\varthree \in \aset_2$.
        Then 
        $\tmtwo\esub\vartwo\tmthree \in \HVar\varthree$
        by rule \ruleHVarSubii.
      \end{itemize}
    \item 
      If $\var \neq \vartwo$, then
      $\tmtwo\esub\vartwo\tmthree \in \HVar\var$
      by rule \ruleHVarSubi.
    \end{itemize}
  \end{itemize}
\end{itemize}
\end{proof}

\begin{lemma}
\label{lem:tL_hAbs_t_hAbsExp}
$\tm\sctx \in \HAbs\aset$ if and only if
$\tm \in \HAbs{\expansion\aset\sctx}$.
\end{lemma}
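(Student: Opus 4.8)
The plan is to prove the biconditional by induction on the structure of the substitution context $\sctx$, keeping the frame $\aset$ universally quantified (so that the induction hypothesis may be applied at an extended frame). In the base case $\sctx = \ctxhole$ we have $\tm\ctxhole = \tm$ and, by definition, $\expansion\aset\ctxhole = \aset$; hence both sides of the statement are literally ``$\tm \in \HAbs\aset$'' and there is nothing to prove.

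For the inductive step, write $\sctx = \sctxtwo\esub\var\tmthree$, so that $\tm\sctx = (\tm\sctxtwo)\esub\var\tmthree$, and use $\alpha$-conversion to assume the bound variable $\var$ is fresh; in particular $\var \notin \aset$, which automatically discharges the freshness side-conditions of the rules \ruleHAbsSubi and \ruleHAbsSubii. I would then split on whether $\tmthree \in \HAbs\aset$, mirroring the definition of $\expansion\aset\sctx$. If $\tmthree \notin \HAbs\aset$, then $\expansion\aset\sctx = \expansion\aset\sctxtwo$, and the only $\HAbs{}$-rule that can conclude $(\tm\sctxtwo)\esub\var\tmthree \in \HAbs\aset$ is \ruleHAbsSubi (since \ruleHAbsSubii requires $\tmthree \in \HAbs\aset$); thus $\tm\sctx \in \HAbs\aset$ holds iff $\tm\sctxtwo \in \HAbs\aset$, which by the induction hypothesis at frame $\aset$ is equivalent to $\tm \in \HAbs{\expansion\aset\sctxtwo} = \HAbs{\expansion\aset\sctx}$, as desired.

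If instead $\tmthree \in \HAbs\aset$, then $\expansion\aset\sctx$ is obtained by continuing the expansion of $\sctxtwo$ at the frame $\aset \cup \set\var$. Here rules \ruleHAbsSubi and \ruleHAbsSubii overlap, but using monotonicity of $\HAbs{}$ in the frame (\cref{rem:habs_st}(1)) one sees that whenever \ruleHAbsSubi applies so does \ruleHAbsSubii (because $\tm\sctxtwo \in \HAbs\aset \subseteq \HAbs{\aset\cup\set\var}$ and $\tmthree \in \HAbs\aset$); hence $\tm\sctx \in \HAbs\aset$ holds iff $\tm\sctxtwo \in \HAbs{\aset \cup \set\var}$, and the induction hypothesis at frame $\aset \cup \set\var$ turns this into $\tm \in \HAbs{\expansion{\aset\cup\set\var}{\sctxtwo}} = \HAbs{\expansion\aset\sctx}$.

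The main obstacle is exactly this last case: one must be sure that the frame under which $\expansion{}{}$ recurses through the remaining explicit substitutions of $\sctxtwo$ is precisely $\aset \cup \set\var$, i.e., that the freshly bound variable $\var$ is threaded through the rest of the expansion so that it matches the frame forced by \ruleHAbsSubii. When the expansion operator is read so as to propagate the extended frame, this identification is immediate; if one works instead with a presentation of $\expansion{}{}$ that merely records the variables added to $\aset$, then one additionally needs to show that any variable appearing in $\expansion{\aset\cup\set\var}{\sctxtwo}$ but not in $\expansion\aset\sctxtwo \cup \set\var$ is a hereditary variable of $\var$ (via \cref{lem:t_habs_hvar}), and that adjoining such variables to a frame already containing $\var$ leaves $\HAbs{}$-membership unchanged (via \cref{lem:varx_included_habsxext} and \cref{rem:t_varx_tL_varx}). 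The rest of the argument — threading the frame parameter and the $\alpha$-freshness conventions through the induction — is routine.
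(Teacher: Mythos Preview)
Your induction on $\sctx$ is the paper's approach, and the base case and the subcase $\tmthree \notin \HAbs\aset$ are correct. You also correctly locate the crux: in the subcase $\tmthree \in \HAbs\aset$, the induction hypothesis at $\aset\cup\set\var$ yields $\tm \in \HAbs{\expansion{\aset\cup\set\var}\sctxtwo}$, whereas the paper's recursion clause makes the goal $\tm \in \HAbs{\expansion\aset\sctxtwo \cup \set\var}$. Under the ``threading'' reading you describe these coincide by definition and the induction closes immediately; this is the reading under which the paper's one-line sketch makes sense.

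The gap is your proposed repair for the literal non-threading definition. That repair cannot work, because under the paper's written recursion clause the biconditional is actually false: take $\aset = \emptyset$, $\tm = \var_1$, $\sctxtwo = \ctxhole\esub{\var_1}\var$ and $\tmthree = \id$. Then $\tm\sctx = \var_1\esub{\var_1}\var\esub\var\id \in \HAbs\emptyset$ (two applications of \ruleHAbsSubii), yet $\expansion\emptyset\sctx = \set\var$ since the inner test ``$\var \in \HAbs\emptyset$?'' fails, and $\var_1 \notin \HAbs{\set\var}$. The lemmas you invoke concern \emph{terms} in $\HVar\var$; they do not let you discard a domain variable like $\var_1$ from the frame when that variable is precisely what $\tm$'s membership depends on. In short, the statement is only provable under the threading semantics for $\expansion{}{}$ (equivalently, under the side-condition that the outermost bound variable $\var$ does not occur free in the bodies inside $\sctxtwo$), and in that setting your proof is correct and matches the paper's.
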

\hiddenproof{
  The proof is straightforward by induction, analogous to \cref{lem:t_Struct_t_StructExp}.
}{
  % Label: lem:tL_hAbs_t_hAbsExp

\begin{proof}
We prove both implications by induction on the length of $\sctx$.
In the two directions we skip the base case, as it is straightforward.

We first show $\tm\sctx \in \HAbs\aset$ implies 
$\tm \in \HAbs{\expansion\aset\sctx}$.
Then, we are in the case where $\sctx = \sctxtwo\esub\var\tmtwo$.
The judgement $\tm\sctxtwo\esub\var\tmtwo \in \HAbs\aset$ can be 
derived either by rule \ruleHAbsSubi or by rule \ruleHAbsSubii:
\begin{itemize}
\item \ruleHAbsSubi. 
  Then $\tm\sctxtwo \in \HAbs\aset$ and $\var \notin \aset$.
  We apply the \ih on $\sctxtwo$, yielding
  $\tm \in \HAbs{\expansion\aset\sctxtwo}$.
  Since
  $\expansion\aset\sctxtwo \subseteq \expansion\aset{\sctxtwo\esub\var\tmtwo}$
  by definition of
  $\expansion\aset{\sctxtwo\esub\var\tmtwo}$,
  then
  $\HAbs{\expansion\aset\sctxtwo} \subseteq \HAbs{\expansion\aset{\sctxtwo\esub\var\tmtwo}}$
  by \cref{rem:habs_st}.
  Therefore
  $\tm \in \HAbs{\expansion\aset\sctx}$.
\item \ruleHAbsSubii. 
  Then
  $\tm\sctxtwo \in \HAbs{\aset \cup \set\var}$, $\var \notin \aset$, 
  and $\tmtwo \in \HAbs\aset$.
  We apply the \ih on $\sctxtwo$, yielding 
  $\tm \in \HAbs{\expansion\aset\sctxtwo \cup \set\var}$.
  Since $\tmtwo \in \HAbs\aset$,
  then $\expansion\aset\sctx = \expansion\aset\sctxtwo \cup \set\var$.
\end{itemize}

We now show $\tm \in \HAbs{\expansion\aset\sctx}$ implies $\tm\sctx \in \HAbs\aset$.
And we are in the case where $\sctx = \sctxtwo\esub\var\tmtwo$,
so $\tm \in \HAbs{\expansion\aset{\sctxtwo\esub\var\tmtwo}}$.
We have to analyse two cases by definition of 
$\expansion\aset{\sctxtwo\esub\var\tmtwo}$, depending on whether 
$\tmtwo \in \HAbs\aset$ or not:
\begin{itemize}
\item $\tmtwo \in \HAbs\aset$.
  Then
  $\expansion\aset{\sctxtwo\esub\var\tmtwo} 
  = \expansion\aset\sctxtwo \cup \set\var$,
  so $\tm \in \HAbs{\expansion\aset\sctxtwo \cup \set\var}$.
  By $\alpha$-conversion, we assume 
  $\var \notin \expansion\aset\sctxtwo$.
  There are two possible subcases by \cref{lem:t_habs_hvar}:
  \begin{itemize}
  \item 
    If $\tm \in \HAbs{\expansion\aset\sctxtwo}$,
    then $\tm\sctxtwo \in \HAbs\aset$ by the \ih on $\sctxtwo$.
    Applying rule \ruleHAbsSubi, we yield 
    $\tm\sctxtwo\esub\var\tmtwo \in \HAbs\aset$.
  \item 
    If $\tm \in \HVar\var$, 
    then $\tm\sctxtwo \in \HVar{\var}$
    by \cref{rem:t_varx_tL_varx},
    and $\HVar\var \subseteq \HAbs{\aset \cup \set\var}$
    by \cref{lem:varx_included_habsxext},
    so we have $\tm\sctxtwo \in \HAbs{\aset \cup \set\var}$.
    By rule \ruleHAbsSubii we conclude
    $\tm\sctxtwo\esub\var\tmtwo \in \HAbs\aset$.
  \end{itemize}
\item $\tmtwo \notin \HAbs\aset$.
  Then 
  $\expansion\aset{\sctxtwo\esub\var\tmtwo} 
  = \expansion\aset\sctxtwo$,
  so $\tm \in \HAbs{\expansion\aset\sctxtwo}$.
  We apply the \ih on $\sctxtwo$, yielding
  $\tm\sctxtwo \in \HAbs\aset$.
  By $\alpha$-conversion we assume
  $\var \notin \expansion\aset\sctxtwo$,
  and in particular $\var \notin \aset$.
  Applying rule \ruleHAbsSubi
  we conclude $\tm\sctxtwo\esub\var\tmtwo \in \HAbs\aset$.
\end{itemize}
\end{proof}

}

\begin{lemma}[Hereditary abstractions and structures are closed by reduction]
\label{lem:HAbs_Struct_closed_reduction}
Let $\inv\aset\sset\tm$.
\begin{enumerate}
\item
  Let $\tm \in \HAbs\aset$
  and $\tm \tov\rulename\aset\sset\appflag \tm'$
  where $\rulename \in \set{\ruledb, \rulelsv}$.
  Then $\tm' \in \HAbs\aset$.
\item
  If $\tm \in \Struct\sset$
  and $\tm \tov\rulename\aset\sset\appflag \tm'$,
  then $\tm' \in \Struct\sset$.
\end{enumerate}
\end{lemma}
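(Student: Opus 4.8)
The plan is to prove both statements simultaneously by induction on the derivation of the step $\tm \tov\rulename\aset\sset\appflag \tm'$, inspecting the rule used at the root. For the congruence rules (\ruleUAppL, \ruleUAppR, \ruleUEsR, \ruleUEsLAbs, \ruleUEsLStruct) the argument is uniform: from $\tm \in \HAbs\aset$ (resp.\ $\tm \in \Struct\sset$) one reads off, via the relevant clause of the definition of $\HAbs{}$ (resp.\ $\Struct{}$), which premise the active subterm satisfies; then one applies the induction hypothesis to that subterm — re-deriving the correctness invariant for the enlarged frame, which is routine from $\inv\aset\sset\tm$ — and reassembles with the same (or the companion) clause. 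Several subcases are vacuous: applications are never in $\HAbs\aset$ by \cref{rem:habs_st}(2), so \ruleUDb, \ruleUAppL, \ruleUAppR cannot apply to a hereditary abstraction; $(\lam\var\tm)\sctx$ is never in $\Struct\sset$ by \cref{rem:habs_st}(5), so \ruleUDb cannot apply to a structure; \ruleUSub requires the invariant $\inv{\aset\cup\set\var}\sset\var$, which forces $\var \notin \sset$, incompatible with $\var \in \Struct\sset$; and whenever the ES-argument of a closure would have to be simultaneously a hereditary abstraction and a structure, \cref{lem:disjunction} gives a contradiction (this disposes of the \ruleHAbsSubii-branch in the \ruleUEsLStruct case and of the \ruleStructSubii-branch in the \ruleUEsLAbs case).

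The delicate case is \ruleULsv, i.e.\ $\tm = \tm_0\esub\var{\val\sctx} \tov\rulelsv\aset\sset\appflag (\tm_0'\esub\var\val)\sctx = \tm'$ coming from $\tm_0 \tov{\rulesub\var\val}{\aset\cup\set\var}\sset\appflag \tm_0'$ with $\var \notin \aset\cup\sset$ and $\val\sctx \in \HAbs\aset$. For Part~1 the point is to propagate hereditary-abstraction-ness through the inner $\rulesub$-step (which is \emph{not} covered by the Part~1 induction hypothesis, since its step kind is $\rulesub\var\val$). First, $\tm \in \HAbs\aset$ forces $\tm_0 \in \HAbs{\aset\cup\set\var}$ (in either clause \ruleHAbsSubi or \ruleHAbsSubii, using \cref{rem:habs_st}(1)). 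Second, from $\val\sctx \in \HAbs\aset$ and \cref{lem:tL_hAbs_t_hAbsExp}, the copied value satisfies $\val \in \HAbs{\expansion\aset\sctx}$; setting $\asettwo \eqdef \expansion\aset\sctx \setminus \aset$ (disjoint from $\aset$ by construction), we are exactly in position to invoke \cref{lem:habs_rulesub_closed_reduction}, which yields $\tm_0' \in \HAbs{\expansion\aset\sctx \cup \set\var}$. Finally, since $\var \notin \expansion\aset\sctx$ (because $\var \notin \aset$ and $\var \notin \domSctx\sctx$ by the usual $\alpha$-renaming convention), clause \ruleHAbsSubii gives $\tm_0'\esub\var\val \in \HAbs{\expansion\aset\sctx}$, and a second use of \cref{lem:tL_hAbs_t_hAbsExp} gives $\tm' = (\tm_0'\esub\var\val)\sctx \in \HAbs\aset$. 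For Part~2 in this case the argument is lighter: $\tm \in \Struct\sset$ forces $\tm_0 \in \Struct\sset$ via clause \ruleStructSubi (clause \ruleStructSubii would make $\val\sctx$ both a structure and, by hypothesis, a hereditary abstraction, contradicting \cref{lem:disjunction}); applying the Part~2 induction hypothesis to the inner $\rulesub$-step — legitimate since Part~2 is stated for an arbitrary step kind — gives $\tm_0' \in \Struct\sset \subseteq \Struct{\expansion\sset\sctx}$; then clause \ruleStructSubi gives $\tm_0'\esub\var\val \in \Struct{\expansion\sset\sctx}$ (again $\var \notin \expansion\sset\sctx$), and \cref{lem:t_Struct_t_StructExp} concludes $\tm' \in \Struct\sset$.

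The main obstacle is thus the \ruleULsv case of Part~1: one must track which value is being copied into $\tm_0$, recognise that it is hereditary-abstract only relative to the \emph{expanded} frame $\expansion\aset\sctx$ rather than $\aset$ itself, invoke \cref{lem:habs_rulesub_closed_reduction} with the auxiliary frame $\asettwo = \expansion\aset\sctx \setminus \aset$, and then re-fold the outer substitution context through \cref{lem:tL_hAbs_t_hAbsExp}. Everything else reduces to careful bookkeeping of the frame parameters $\aset$, $\sset$ and of the correctness invariants, together with repeated appeals to the disjointness fact \cref{lem:disjunction} to prune the impossible branches.
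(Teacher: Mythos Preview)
Your proposal is correct and follows essentially the same approach as the paper's proof: induction on the reduction derivation, with the vacuous cases pruned exactly as you describe (via \cref{rem:habs_st} and \cref{lem:disjunction}), and the key \ruleULsv case of Part~1 handled by expanding the frame through \cref{lem:tL_hAbs_t_hAbsExp}, invoking \cref{lem:habs_rulesub_closed_reduction} with $\asettwo = \expansion\aset\sctx \setminus \aset$, and folding back. The only cosmetic difference is that the paper treats the two items as two separate inductions rather than a simultaneous one, which is immaterial since neither part's inductive hypothesis is used in the other.
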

\hiddenproof{
  By induction on the derivation of
  $\tm \tov\rulename\aset\sset\appflag \tm'$.
}{
  % Label: HAbs_Struct_closed_reduction

\begin{proof}
We prove each item independently, by induction on the derivation of 
$\tm \tov\rulename\aset\sset\appflag \tm'.
$\begin{enumerate}
\item 
  Cases \ruleUDb, \ruleUAppL, and \ruleUAppR are impossible since
  $\tm = \tmtwo\,\tmthree$, which cannot be an element of $\HAbs\aset$
  by \cref{rem:habs_st}. We analyse the remaining cases:
  \begin{itemize}
  % \item \ruleUDb.
  %   Then
  %   $\tm = (\lam\var\tmtwo)\sctx \, \tmthree 
  %   \tov\ruledb\aset\sset\appflag \tmtwo\esub\var\tmthree\sctx = \tm'$,
  %   with $\rulename = \ruledb$.
  %   This case does not apply since no applications belong to 
  %   $\HAbs\aset$ by \cref{rem:habs_st}, contradicting the hypothesis
  %   $\tm \in \HAbs\aset$.
  \item \ruleULsv.
    Then
    \[
      \indrule{\ruleULsv}{
        \tmtwo \tov{\rulesub\var\val}{\aset \cup \set\var}\sset\appflag \tmtwo'
        \sep
        \var \notin \aset \cup \sset
        \sep
        \val\sctx \in \HAbs\aset
      }{
        \tm = \tmtwo\esub\var{\val\sctx} 
        \tov\rulelsv\aset\sset\appflag 
        \tmtwo'\esub\var\val\sctx = \tm'
      }
    \]
    where $\rulename = \rulelsv$.
    The judgement $\tmtwo\esub\var{\val\sctx} \in \HAbs\aset$ can be 
    derived either by rule \ruleHAbsSubi or by rule \ruleHAbsSubii.
    The former has $\tmtwo \in \HAbs\aset$ as premise, and since 
    $\aset \subseteq \aset \cup \set\var$, then 
    $\tmtwo \in \HAbs{\aset \cup \set\var}$ by \cref{rem:habs_st}.
    And it is also the case that $\val\sctx \in \HAbs\aset$ by 
    premise of the rule \ruleULsv, so in both cases we proceed as follows.
    Given $\inv\aset\sset{\tmtwo\esub\var\tmthree}$,
    then in particular $\inv{\aset \cup \set\var}\sset\tmtwo$ holds.
    We assume $\var \notin \domSctx\sctx$, and $\aset \disj \domSctx\sctx$
    by $\alpha$-conversion.
    Since $\val\sctx \in \HAbs\aset$, then 
    $\val \in \HAbs{\expansion\aset\sctx}$ 
    by \cref{lem:tL_hAbs_t_hAbsExp}.
    We apply \cref{lem:habs_rulesub_closed_reduction}, taking 
    $\asettwo \eqdef \expansion\aset\sctx \setminus \aset$, which 
    yields $\tmtwo' \in \HAbs{\expansion\aset\sctx \cup \set\var}$.
    Then $\tmtwo'\esub\var\val \in \HAbs{\expansion\aset\sctx}$
    by rule \ruleHAbsSubii, and 
    $\tmtwo'\esub\var\val\sctx \in \HAbs\aset$ by 
    \cref{lem:tL_hAbs_t_hAbsExp}.
  % \item \ruleUAppL.
  %   Then $\tm = \tmtwo \, \tmthree 
  %   \tov\rulename\aset\sset\appflag \tmtwo' \, \tmthree = \tm'$.
  %   This case is impossible since no applications belong to 
  %   $\HAbs\aset$ by \cref{rem:habs_st}, contradicting the 
  %   hypothesis $\tm \in \HAbs\aset$.
  % \item \ruleUAppR.
  %   Analogous to the previous case.
    % Then
    % $ 
    %   \tm = \tmtwo \, \tmthree
    %   \tov\rulename\aset\sset\appflag
    %   \tmtwo \, \tmthree' = \tm'
    % $.
    % This case is not possible since
    % $\tmtwo \, \tmthree$ cannot be an element of $\HAbs\aset$
    % by \cref{rem:habs_st}, contradicting the hypothesis
    % $\tm \in \HAbs\aset$.
  \item \ruleUEsR.
    Then
    \[
      \indrule{\ruleUEsR}{
        \tmthree \tov\rulename\aset\sset\nonapp \tmthree'
      }{
        \tm = \tmtwo\esub\var\tmthree
        \tov\rulename\aset\sset\appflag
        \tmtwo\esub\var{\tmthree'} = \tm'
      }
    \]
    We analyse two cases, 
    depending on which rule we use to derive
    $\tmtwo\esub{\var}{\tmthree} \in \HAbs{\aset}$:
    \begin{itemize}
    \item \ruleHAbsSubi.
      Then, $\tmtwo \in \HAbs\aset$ and $\var \notin \aset$,
      therefore $\tmtwo\esub\var{\tmthree'} \in \HAbs\aset$
      by rule \ruleHAbsSubi.
    \item \ruleHAbsSubii.
      Then, $\tmtwo \in \HAbs{\aset \cup \set\var}, \var \notin \aset$,
      and $\tmthree \in \HAbs\aset$.
      Since $\inv\aset\sset{\tmtwo\esub\var\tmthree}$,
      then $\inv\aset\sset\tmthree$.
      We apply the \ih on $\tmthree$, 
      yielding $\tmthree' \in \HAbs\aset$.
      Therefore,
      $\tmtwo\esub\var{\tmthree'} \in \HAbs\aset$
      by rule \ruleHAbsSubii.
    \end{itemize}
  \item \ruleUEsLAbs.
    Then
    \[
      \indrule{\ruleUEsLAbs}{
        \tmtwo \tov\rulename{\aset \cup \set\var}\sset\appflag \tmtwo'
        \sep
        \tmthree \in \HAbs\aset
        \sep
        \var \notin \aset \cup \sset
        \sep
        \var \notin \fv\rulename
      }{
        \tm = \tmtwo\esub\var\tmthree
        \tov\rulename\aset\sset\appflag
        \tmtwo'\esub\var\tmthree = \tm'
      }
    \]
    The judgement $\tmtwo\esub\var\tmthree \in \HAbs\aset$
    can be derived either by rule \ruleHAbsSubi or by rule \ruleHAbsSubii.
    Note that the former has $\tmtwo \in \HAbs\aset$ as premise, and since
    $\aset \subseteq \aset \cup \set\var$, 
    then $\tmtwo \in \HAbs{\aset \cup \set\var}$ by \cref{rem:habs_st}.
    And it is also the case that $\tmthree \in \HAbs\aset$
    by premise of the rule \ruleUEsLAbs, so in both cases we proceed as follows.
    Since $\inv\aset\sset{\tmtwo\esub\var\tmthree}$,
    then in particular $\inv{\aset \cup \set\var}\sset\tmtwo$.
    We apply the \ih on $\tmtwo$, yielding
    $\tmtwo' \in \HAbs{\aset \cup \set\var}$.
    By rule \ruleHAbsSubii we conclude
    $\tmtwo'\esub\var\tmthree \in \HAbs\aset$.
  \item \ruleUEsLStruct.
    Then
    \[
      \indrule{\ruleUEsLStruct}{
        \tmtwo \tov\rulename\aset{\sset \cup \set\var}\appflag \tmtwo'
        \sep
        \tmthree \in \Struct\sset
        \sep
        \var \notin \aset \cup \sset
        \sep
        \var \notin \fv\rulename
      }{
        \tm = \tmtwo\esub\var\tmthree
        \tov\rulename\aset\sset\appflag
        \tmtwo'\esub\var\tmthree = \tm'
      }
    \]
    Since $\inv\aset\sset{\tmtwo\esub\var\tmthree}$,
    then in particular $\inv\aset{\sset \cup \set\var}\tmtwo$.
    We analyse two cases, depending on which rule we use to derive
    $\tmtwo\esub\var\tmthree \in \HAbs\aset$:
    \begin{itemize}
    \item \ruleHAbsSubi.
      Then, $\tmtwo \in \HAbs\aset$ and $\var \notin \aset$.
      We apply the \ih on $\tmtwo$, yielding $\tmtwo' \in \HAbs\aset$.
      We apply rule \ruleHAbsSubi and conclude
      $\tmtwo'\esub\var\tmthree \in \HAbs\aset$.
    \item \ruleHAbsSubii.
      Then $\tmtwo \in \HAbs{\aset \cup \set\var}, \var \notin \aset$,
      and $\tmthree \in \HAbs\aset$.
      We also have $\tmthree \in \Struct\sset$
      by premise of the rule \ruleUEsLStruct.
      Then, $(\aset \cap \sset) \neq \emptyset$ by 
      contraposition of \cref{lem:disjunction}, which contradicts 
      $\inv\aset\sset{\tmtwo\esub\var\tmthree}$.
      Therefore this case is not possible.
    \end{itemize}
  \end{itemize}
\item \quad
  \begin{itemize}
  \item \ruleUDb.
    Then
    $\tm = (\lam\var\tmtwo)\sctx \, \tmthree 
     \tov\ruledb\aset\sset\appflag
     \tmtwo\esub\var\tmthree\sctx = \tm'$,
    with $\rulename = \ruledb$.
    The judgement $(\lam\var\tmtwo)\sctx \, \tmthree \in \Struct\sset$
    can only be derived from the rule \ruleStructApp,
    so $(\lam\var\tmtwo)\sctx \in \Struct\sset$.
    But by \cref{rem:habs_st}, 
    this term cannot be an element of $\Struct\sset$.
    Therefore this case is impossible.
  \item \ruleUSub.
    Then
    $\tm = \var \tov{\rulesub\var\val}{\aset' \cup \set\var}\sset\app 
    \val = \tm'$,
    with
    $\rulename = \rulesub\var\val$, $\aset = \aset' \cup \set\var$, 
    and $\appflag = \app$.
    Since $\var \in \aset' \cup \set\var$ and
    $\inv{\aset' \cup \set\var}\sset\var$ implies $\var \notin \sset$
    by \cref{lem:disjunction},
    then it is not the case that $\var \in \Struct\sset$,
    as we cannot apply rule \ruleStructVar, the only rule to derive
    such judgement.
    Therefore this case is impossible.
  \item \ruleULsv.
    Then
    \[
      \indrule{\ruleULsv}{
        \tmtwo \tov{\rulesub\var\val}{\aset \cup \set\var}\sset\appflag \tmtwo'
        \sep
        \var \notin \aset \cup \sset
        \sep
        \val\sctx \in \HAbs\aset
      }{
        \tm = \tmtwo\esub\var{\val\sctx} 
        \tov\rulelsv\aset\sset\appflag
        \tmtwo'\esub\var\val\sctx = \tm'
      }
    \]
    where $\rulename = \rulelsv$.
    We analyse two cases, depending on which rule we use to derive
    $\tmtwo\esub\var{\val\sctx} \in \Struct\sset$:
    \begin{itemize}
    \item \ruleStructSubi.
      Then $\tmtwo \in \Struct\sset$ and $\var \notin \sset$, so
      $\tmtwo' \in \Struct\sset$ by the \ih on $\tmtwo$.
      Since $\sset \subseteq \expansion\sset\sctx$
      then $\tmtwo' \in \Struct{\expansion\sset\sctx}$
      by \cref{rem:habs_st}.
      We apply rule \ruleStructSubi, yielding
      $\tmtwo'\esub\var\val \in \Struct{\expansion\sset\sctx}$.
      Then
      $\tmtwo'\esub\var\val\sctx \in \Struct\sset$
      by \cref{lem:t_Struct_t_StructExp}.
    \item \ruleStructSubii.
      Then $\tmtwo \in \Struct{\sset \cup \set\var}, \var \notin \sset$,
      and $\val\sctx \in \Struct\sset$.
      By premise of the rule \ruleULsv we also have
      $\val\sctx \in \HAbs\aset$, so $\aset \cap \sset \neq \emptyset$ 
      by contraposition of \cref{lem:disjunction},
      which contradicts $\inv\aset\sset{\tmtwo\esub\var{\val\sctx}}$.
      Therefore this case is impossible.
    \end{itemize}
  \item \ruleUAppL.
    Then
    \[
      \indrule{\ruleUAppL}{
        \tmtwo \tov\rulename\aset\sset\app \tmtwo'
      }{
        \tm = \tmtwo \, \tmthree
        \tov\rulename\aset\sset\appflag
        \tmtwo' \, \tmthree = \tm'
      }
    \]
    The judgement $\tmtwo \, \tmthree \in \Struct\sset$ 
    can only be derived by the rule \ruleStructApp,
    with premise $\tmtwo \in \Struct\sset$.
    Then $\tmtwo' \in \Struct\sset$
    by the \ih on $\tmtwo$.
    We apply rule \ruleStructApp, yielding
    $\tmtwo' \, \tmthree \in \Struct\sset$.
  \item \ruleUAppR.
    Then
    \[
      \indrule{\ruleUAppR}{
        \tmtwo \in \Struct\sset
        \sep
        \tmthree \tov\rulename\aset\sset\nonapp \tmthree'
      }{
        \tm = \tmtwo \, \tmthree
        \tov\rulename\aset\sset\appflag
        \tmtwo \, \tmthree' = \tm'
      }
    \]
    Since $\tmtwo \in \Struct\sset$ we apply rule \ruleStructApp, 
    yielding $\tmtwo \, \tmthree' \in \Struct\sset$.
  \item \ruleUEsR.
    Then
    \[
      \indrule{\ruleUEsR}{
        \tmthree \tov\rulename\aset\sset\nonapp \tmthree'
      }{
        \tm = \tmtwo\esub\var\tmthree
        \tov\rulename\aset\sset\appflag
        \tmtwo\esub\var{\tmthree'} = \tm'
      }
    \]
    We analyse two cases, depending on which rule we use to derive
    $\tmtwo\esub\var\tmthree \in \Struct\sset$:
    \begin{itemize}
    \item \ruleStructSubi.
      Then $\tmtwo \in \Struct\sset$ and $\var \notin \sset$.
      We apply rule \ruleStructSubi, and conclude
      $\tmtwo\esub\var{\tmthree'} \in \Struct\sset$.
    \item \ruleStructSubii.
      Then $\tmtwo \in \Struct{\sset \cup \set\var}, \var \notin \sset$
      and $\tmthree \in \Struct\sset$.
      We apply the \ih on $\tmthree$, yielding
      $\tmthree' \in \Struct\sset$.
      Therefore
      $\tmtwo\esub\var{\tmthree'} \in \Struct\sset$
      by rule \ruleStructSubii.
    \end{itemize}
  \item \ruleUEsLAbs.
    Then
    \[
      \indrule{\ruleUEsLAbs}{
        \tmtwo \tov\rulename{\aset \cup \set\var}\sset\appflag \tmtwo'
        \sep
        \tmthree \in \HAbs\aset
        \sep
        \var \notin \aset \cup \sset
        \sep
        \var \notin \fv\rulename
      }{
        \tm = \tmtwo\esub\var\tmthree
        \tov\rulename\aset\sset\appflag
        \tmtwo'\esub\var\tmthree = \tm'
      }
    \]
    We analyse two cases, depending on which rule we use to derive
    $\tmtwo\esub\var\tmthree \in \Struct\sset$:
    \begin{itemize}
    \item \ruleStructSubi.
      Then $\tmtwo \in \Struct\sset$ and $\var \notin \sset$.
      We apply the \ih on $\tmtwo$, yielding
      $\tmtwo' \in \Struct\sset$.
      We apply rule \ruleStructSubi and conclude
      $\tmtwo'\esub\var\tmthree \in \Struct\sset$.
    \item \ruleStructSubii.
      Then $\tmtwo \in \Struct{\sset \cup \set\var}, \var \notin \sset$
      and $\tmthree \in \Struct\sset$.
      By premise of the rule \ruleUEsLAbs we also have
      $\tmthree \in \HAbs\aset$.
      Then $\aset \cap \sset \neq \emptyset$ 
      by contraposition of \cref{lem:disjunction},
      which contradicts $\inv\aset\sset{\tmtwo\esub\var\tmthree}$.
      Therefore this case is impossible.
    \end{itemize}
  \item \ruleUEsLStruct.
    Then
    \[
      \indrule{\ruleUEsLStruct}{
        \tmtwo \tov\rulename\aset{\sset \cup \set\var}\appflag \tmtwo'
        \sep
        \tmthree \in \Struct\sset
        \sep
        \var \notin \aset \cup \sset
        \sep
        \var \notin \fv\rulename
      }{
        \tm = \tmtwo\esub\var\tmthree
        \tov\rulename\aset\sset\appflag
        \tmtwo'\esub\var\tmthree = \tm'
      }
    \]
    We have two cases, depending on which rule we use to derive
    $\tmtwo\esub\var\tmthree \in \Struct\sset$:
    \begin{itemize}
    \item \ruleStructSubi.
      Then $\tmtwo \in \Struct\sset$ and $\var \notin \sset$.
      Since $\sset \subseteq \sset \cup \set\var$, then
      $\tmtwo \in \Struct{\sset \cup \set\var}$ by \cref{rem:habs_st}.
      We apply the \ih on $\tmtwo$, yielding
      $\tmtwo' \in \Struct{\sset \cup \set\var}$.
      Since $\tmthree \in \Struct\sset$
      by premise of the rule \ruleUEsLStruct,
      we then apply rule \ruleStructSubii and conclude
      $\tmtwo'\esub\var\tmthree \in \Struct\sset$.
    \item \ruleStructSubii.
      Then $\tmtwo \in \Struct{\sset \cup \set\var}, \var \notin \sset$,
      and $\tmthree \in \Struct\sset$.
      Since $\sset \subseteq \sset \cup \set\var$, then
      $\tmtwo \in \Struct{\sset \cup \set\var}$ by \cref{rem:habs_st}.
      We apply the \ih on $\tmtwo$, yielding
      $\tmtwo' \in \Struct{\sset \cup \set\var}$.
      By rule \ruleStructSubii we conclude
      $\tmtwo'\esub\var\tmthree \in \Struct\sset$.
    \end{itemize}
  \end{itemize}
\end{enumerate}
\end{proof}

}

Given a substitution context $\sctx$, we can use the reduction rules
defined in \cref{sec:usefulcbv} to define the notion of \defn{useful
reduction for substitution contexts}, by just understanding these
elements as terms, where $\ctxhole$ is taken as a free variable.

\begin{remark}
\label{rem:valL_reduces_in_L} 
\mbox{}
\begin{enumerate}
\item
  If $(\lam\var\tm)\sctx \tov\rulename\aset\sset\app \tmtwo$
  and $\ctxhole \notin \aset \cup \sset$,
  then $\tmtwo$ is of the form $(\lam\var\tm)\sctx'$,
  and $\sctx \tov\rulename\aset{\sset \cup \set\ctxhole}\appflag \sctx'$.
\item
  If $\val\sctx \tov{\rulename}{\aset}{\sset}{\nonapp} \tm$
  and $\ctxhole \notin \aset \cup \sset$,
  then there exists $\sctx'$ such that $\tm = \val\sctx'$,
  and $\sctx \tov{\rulename}{\aset}{\sset \cup \set{\ctxhole}}{\appflag} \sctx'$.
\item
  If $\sctx \tov\rulename\aset{\sset \cup \set\ctxhole}\appflag \sctx'$,
  then $\tm\sctx \tov\rulename\aset\sset\appflag \tm\sctx'$.
\end{enumerate}
\end{remark}

\begin{lemma}
\label{lem:weakening_of_reducion_sets}
If $\tm \tov{\rulename}{\aset}{\sset}{\appflag} \tm'$,
then for all sets $\aset'$ and $\sset'$
such that $\aset \subseteq \aset'$ and $\sset \subseteq \sset'$
it holds that $\tm \tov{\rulename}{\aset'}{\sset'}{\appflag} \tm'$.
\end{lemma}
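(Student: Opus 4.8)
The plan is to argue by induction on the derivation of $\tm \tov{\rulename}{\aset}{\sset}{\appflag} \tm'$, having fixed arbitrary $\aset',\sset'$ with $\aset\subseteq\aset'$ and $\sset\subseteq\sset'$. Since terms---and hence derivations---are taken up to $\alpha$-equivalence, in every rule that binds a variable $\var$ I will freely assume, by the variable convention, that $\var$ has been chosen fresh with respect to $\aset'\cup\sset'$ (and, where the rule requires it, also $\var\notin\fv\rulename$); this renaming is the only non-routine ingredient, and I come back to it at the end.

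First the base and congruence cases. For \ruleUDb the side conditions do not mention $\aset$ or $\sset$, so the very same rule instance concludes with the parameters $\aset',\sset'$. For \ruleUSub, from $\var\tov{\rulesub\var\val}{\aset\cup\set\var}\sset\app\val$ and $\var\in\aset\cup\set\var\subseteq\aset'$ we get $\aset'=\aset'\cup\set\var$, so \ruleUSub gives $\var\tov{\rulesub\var\val}{\aset'}{\sset'}\app\val$. For the application-congruence rules \ruleUAppL, \ruleUAppR and the right-ES congruence \ruleUEsR, I apply the induction hypothesis to the premise (keeping its positional flag: $\app$ in \ruleUAppL, $\nonapp$ in \ruleUAppR and \ruleUEsR) and then re-apply the same rule; in \ruleUAppR the side condition $\tm\in\Struct\sset$ is lifted to $\tm\in\Struct{\sset'}$ using monotonicity of $\Struct{-}$ (\cref{rem:habs_st}(4)).

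The remaining cases are the left-ES rules \ruleULsv, \ruleUEsLAbs and \ruleUEsLStruct, whose premises carry the freshness condition $\var\notin\aset\cup\sset$. Having renamed $\var$ so that $\var\notin\aset'\cup\sset'$ (and $\var\notin\fv\rulename$), I apply the induction hypothesis to the premise with the enlarged frames: for \ruleULsv and \ruleUEsLAbs using $\aset\cup\set\var\subseteq\aset'\cup\set\var$ together with $\sset\subseteq\sset'$, and for \ruleUEsLStruct using $\aset\subseteq\aset'$ together with $\sset\cup\set\var\subseteq\sset'\cup\set\var$. The other side conditions transfer by monotonicity: $\val\sctx\in\HAbs\aset$ (in \ruleULsv) and $\tmtwo\in\HAbs\aset$ (in \ruleUEsLAbs) still hold with $\aset'$ in place of $\aset$ by \cref{rem:habs_st}(1), while $\tmtwo\in\Struct\sset$ (in \ruleUEsLStruct) still holds with $\sset'$ by \cref{rem:habs_st}(4); and $\var\notin\aset'\cup\sset'$, $\var\notin\fv\rulename$ hold by the freshness assumption. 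Re-applying the rule in question then concludes, now with parameters $\aset',\sset'$. The one point that requires care---and the only possible obstacle---is exactly this freshness step: enlarging $\aset$ to $\aset'$ or $\sset$ to $\sset'$ could otherwise clash with a bound variable of the term, and the clean remedy is to rename that variable first, which is legitimate because $\tov{\rulename}{\aset}{\sset}{\appflag}$ is compatible with $\alpha$-equivalence.
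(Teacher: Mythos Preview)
Your proof is correct and takes essentially the same approach as the paper: induction on the derivation of the reduction, using the monotonicity of $\HAbs{-}$ and $\Struct{-}$ from \cref{rem:habs_st}, and invoking $\alpha$-conversion to keep the bound variable fresh with respect to the enlarged frames in the \ruleULsv, \ruleUEsLAbs, and \ruleUEsLStruct cases. Your explicit remark that this freshness step is the only delicate point, and that it is justified because terms (hence derivations) are taken up to $\alpha$-equivalence, is exactly the right observation.
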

% Label: lem:weakening_of_reducion_sets

\begin{proof}
By induction on the derivation of 
$\tm \tov\rulename\aset\sset\appflag \tm'$.
\begin{itemize}
\item \ruleUDb.
  Then $\tm = (\lam\var\tmtwo)\sctx \, \tmthree 
  \tov\ruledb\aset\sset\appflag \tmtwo\esub\var\tmthree\sctx = \tm'$,
  where $\rulename = \ruledb$.
  By rule \ruleUDb, we conclude
  $(\lam\var\tmtwo)\sctx \, \tmthree 
  \tov\ruledb{\aset'}{\sset'}\appflag \tmtwo\esub\var\tmthree\sctx$.
\item \ruleUSub.
  Then
  $\tm = \var \tov{\rulesub\var\val}{\aset \cup \set\var}\sset\app \val = \tm'$,
  where $\rulename = \rulesub\var\val$ and $\appflag = \app$.
  By rule \ruleUSub, we conclude
  $\var \tov{\rulesub\var\val}{\aset' \cup \set\var}{\sset'}\app \val$.
\item \ruleULsv.
  Then
  \[
    \inferrule{
      \tmtwo 
        \tov{\rulesub\var\val}{\aset \cup \set\var}\sset\appflag
        \tmtwo'
      \sep
      \var \notin \aset \cup \sset
      \sep
      \val\sctx \in \HAbs\aset
    }{
      \tm = \tmtwo\esub\var{\val\sctx} \tov\rulelsv\aset\sset\appflag
      \tmtwo'\esub\var\val\sctx = \tm'
    }\ruleULsv
  \]
  where $\rulename = \rulelsv$.
  If $\aset \subseteq \aset'$ and $\sset \subseteq \sset'$, then in 
  particular $\aset \cup \set\var \subseteq \aset' \cup \set\var$.
  Therefore,
  $\tmtwo \tov{\rulesub\var\val}{\aset' \cup \set\var}{\sset'}\appflag \tmtwo'$
  by the \ih on $\tmtwo$.
  Moreover, $\val\sctx \in \HAbs{\aset'}$ by \cref{rem:habs_st}, and 
  we can always assume $\var \notin \aset' \cup \sset'$.
  Applying rule \ruleULsv, we conclude
  $\tmtwo\esub\var{\val\sctx} \tov\rulelsv{\aset'}{\sset'}\appflag
  \tmtwo'\esub\var\val\sctx$.
\item \ruleUAppL.
  Straightforward by the \ih.
  % Then
  % \[
  %   \inferrule{
  %     \tmtwo \tov\rulename\aset\sset\app \tmtwo'
  %   }{
  %     \tm = \tmtwo \, \tmthree \tov\rulename\aset\sset\appflag \tmtwo' \, \tmthree = \tm'
  %   }\ruleUAppL
  % \]
  % We have $\tmtwo \tov\rulename{\aset'}{\sset'}\app \tmtwo'$ by the 
  % \ih on $\tmtwo$.
  % Applying rule \ruleUAppL, we conclude with
  % $\tmtwo \, \tmthree \tov\rulename{\aset'}{\sset'}\appflag \tmtwo' \, \tmthree$.
\item \ruleUAppR.
  Then
  \[
    \inferrule{
      \tmtwo \in \Struct\sset
      \sep
      \tmthree \tov\rulename\aset\sset\nonapp \tmthree'
    }{
      \tm = \tmtwo \, \tmthree \tov\rulename\aset\sset\appflag \tmtwo \, \tmthree' = \tm'
    }\ruleUAppR
  \]
  We have $\tmthree \tov\rulename{\aset'}{\sset'}\nonapp \tmthree'$
  by the \ih on $\tmthree$.
  Moreover, $\tmtwo \in \Struct{\sset'}$ by \cref{rem:habs_st}, so 
  applying rule \ruleUAppR, we conclude with
  $\tmtwo \, \tmthree \tov\rulename{\aset'}{\sset'}\appflag \tmtwo \, \tmthree'$.
\item \ruleUEsR.
  Straightforward by the \ih.
  % Then
  % \[
  %   \inferrule{
  %     \tmthree \tov\rulename\aset\sset\nonapp \tmthree'
  %   }{
  %     \tm = \tmtwo\esub\var\tmthree \tov\rulename\aset\sset\appflag \tmtwo\esub\var{\tmthree'} = \tm'
  %   }\ruleUEsR
  % \]
  % Then $\tmthree \tov\rulename{\aset'}{\sset'}\nonapp \tmthree'$ by 
  % the \ih on $\tmthree$.
  % We can then apply rule \ruleUEsR, yielding
  % $\tmtwo\esub\var\tmthree \tov\rulename{\aset'}{\sset'}\appflag \tmtwo\esub\var{\tmthree'}$.
\item \ruleUEsLAbs.
  Then
  \[
    \inferrule{
      \tmtwo \tov\rulename{\aset \cup \set\var}\sset\appflag \tmtwo'
      \sep
      \tmthree \in \HAbs\aset
      \sep
      \var \notin \aset \cup \sset
      \sep
      \var \notin \fv\rulename
    }{
      \tm = \tmtwo\esub\var\tmthree \tov\rulename\aset\sset\appflag \tmtwo'\esub\var\tmthree = \tm'
    }\ruleUEsLAbs
  \]
  As $\aset \cup \set{\var} \subseteq \aset' \cup \set{\var}$, we can
  then apply the \ih on $\tmtwo$, yielding
  $\tmtwo \tov\rulename{\aset' \cup \set\var}{\sset'}\appflag \tmtwo'$.
  Moreover, $\tmthree \in \HAbs{\aset'}$ by \cref{rem:habs_st}.
  We assume $\var \notin \aset' \cup \sset'$ also, by $\alpha$-conversion.
  Applying rule \ruleUEsLAbs, we conclude with
  $\tmtwo\esub\var\tmthree \tov\rulename{\aset'}{\sset'}\appflag \tmtwo'\esub\var\tmthree$.
\item \ruleUEsLStruct.
  Analogous to the previous case.
  % Then
  % \[
  %   \inferrule{
  %     \tmtwo \tov\rulename\aset{\sset \cup \set\var}\appflag \tmtwo'
  %     \sep
  %     \tmthree \in \Struct\sset
  %     \sep
  %     \var \notin \aset \cup \sset
  %     \sep
  %     \var \notin \fv\rulename
  %   }{
  %     \tm = \tmtwo\esub\var\tmthree \tov\rulename\aset\sset\appflag \tmtwo'\esub\var\tmthree = \tm'
  %   }\ruleUEsLStruct
  % \]
  % As $\sset \cup \set\var \subseteq \sset' \cup \set\var$, we can 
  % then apply the \ih on $\tmtwo$, yielding
  % $\tmtwo \tov\rulename{\aset'}{\sset' \cup \set\var}\appflag \tmtwo'$.
  % Moreover, $\tmthree \in \Struct{\sset'}$ by \cref{rem:habs_st}.
  % We assume $\var \notin \aset' \cup \sset'$ also, by $\alpha$-conversion.
  % Applying rule \ruleUEsLStruct, we conclude with
  % $\tmtwo\esub\var\tmthree \tov\rulename{\aset'}{\sset'}\appflag \tmtwo'\esub\var\tmthree$.
\end{itemize}
\end{proof}

\begin{restatable}[Towards the Diamond Property]{proposition}{diamond}
\label{prop:diamond-property}
Let $\inv\aset\sset\tm$ and suppose that
$\tm \tov{\rulename_1}\aset\sset\appflag \tm_1$ and
$\tm \tov{\rulename_2}\aset\sset\appflag \tm_2$, with $\tm_1 \neq \tm_2$.
Assume that if $\rulename_1 = \rulesub\var{\val_1}$ and 
$\rulename_2 = \rulesub\var{\val_2}$, then $\val_1 = \val_2$.
Let $\asettwo^1,\asettwo^2$ be sets of variables, disjoint from $\aset$,
such that if $\rulename_i = \rulesub\var{\val_i}$, then 
$\val_i \in \HAbs{\aset \cup \asettwo^i}$, for all $i \in \set{1,2}$.
Then, there exists a term $\tm'$ such that
$\tm_1 \tov{\rulename_2}{\aset \cup \asettwo^1}\sset\appflag \tm'$ and
$\tm_2 \tov{\rulename_1}{\aset \cup \asettwo^2}\sset\appflag \tm'$.
\end{restatable}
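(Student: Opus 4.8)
The plan is to establish \cref{prop:diamond-property} by induction on the term $\tm$, analysing the last rule of each of the two derivations $\tm \tov{\rulename_1}\aset\sset\appflag \tm_1$ and $\tm \tov{\rulename_2}\aset\sset\appflag \tm_2$. Throughout I would take $\asettwo^i = \emptyset$ whenever $\rulename_i$ is not of the form $\rulesub\var{\val_i}$, so that the extra frames matter only in the presence of a free-variable substitution step. First I would dispatch the \emph{vacuous} overlaps, i.e.\ those for which $\tm_1 = \tm_2$ is forced, contradicting the hypothesis: two root steps via \ruleUDb; two root steps via \ruleUSub (necessarily on the same variable, with $\val_1 = \val_2$ by hypothesis); and the combinations $\set{\ruleUEsLAbs, \ruleUEsLStruct}$ and $\set{\ruleULsv, \ruleUEsLStruct}$ on a closure $\tmtwo\esub\var\tmthree$, which would require $\tmthree$ to be simultaneously in $\HAbs\aset$ and in $\Struct\sset$, impossible by \cref{lem:disjunction} since $\inv\aset\sset\tm$ gives $\aset \disj \sset$.

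Next I would handle the cases where one step fires a root redex while the other is a congruence step, together with the genuinely ``orthogonal'' congruence cases; in all of these the two redexes occupy disjoint positions and simply commute. For \ruleUDb at the root of $\tm = (\lam\var{\tm_0})\sctx\,\tmtwo$ against \ruleUAppL or \ruleUAppR, \cref{rem:valL_reduces_in_L} guarantees that reducing inside $(\lam\var{\tm_0})\sctx$ keeps it of the shape $(\lam\var{\tm_0})\sctx'$ (resp.\ reducing $\tmtwo$ yields $\tmtwo'$), so the $\dbsym$-redex survives and both steps can be replayed, converging on $\tm_0\esub\var{\tmtwo}\sctx'$ (resp.\ $\tm_0\esub\var{\tmtwo'}\sctx$). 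The orthogonal congruence cases --- \ruleUAppL versus \ruleUAppR on $\tmtwo\,\tmthree$, and \ruleUEsR versus \ruleUEsLAbs or \ruleUEsLStruct on $\tmtwo\esub\var\tmthree$ --- are analogous; the only points to verify are that the side conditions ($\tmtwo \in \Struct\sset$ for \ruleUAppR, and $\tmthree \in \HAbs\aset$ or $\tmthree \in \Struct\sset$ for the closure rules) persist after the other step, which follows from \cref{lem:HAbs_Struct_closed_reduction}, closure of hereditary abstractions and structures under reduction.

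It remains to treat the recursive cases, where both steps descend into the same immediate subterm, and to re-wrap the witness provided by the \ih. When both steps are \ruleUAppL (or both \ruleUAppR, \ruleUEsR, \ruleUEsLAbs, or \ruleUEsLStruct), the two inner steps act on the same subterm under the same --- possibly frame-enlarged --- parameters, and the \ih yields the common reduct directly; \cref{lem:weakening_of_reducion_sets} is used to line up the frames appearing on the two sides with the $\aset \cup \asettwo^i$ demanded by the statement, and \cref{lem:HAbs_Struct_closed_reduction} to re-check the rigidity side conditions of the closure rules. I expect the \emph{main obstacle} to be the block of cases involving \ruleULsv on a closure $\tmtwo\esub\var{\val\sctx}$ --- namely \ruleULsv versus \ruleULsv, \ruleULsv versus \ruleUEsLAbs, and \ruleULsv versus \ruleUEsR --- where the internal $\rulesub\var\val$-step must be fed to the \ih. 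Here I would obtain a suitable auxiliary frame from $\val\sctx \in \HAbs\aset$: by \cref{lem:tL_hAbs_t_hAbsExp} one has $\val \in \HAbs{\expansion\aset\sctx}$, so $\asettwo \defeq \expansion\aset\sctx \setminus (\aset \cup \set\var)$ meets the hypothesis of \cref{prop:diamond-property} for the inner step; after applying the \ih, \cref{lem:habs_rulesub_closed_reduction} shows that the reducts of $\tmtwo$ remain hereditary abstractions over the enlarged frame, so \ruleULsv can be replayed, and \cref{lem:tL_hAbs_t_hAbsExp} and \cref{lem:t_Struct_t_StructExp} transport the relevant memberships through $\sctx$ when $\val$ is copied out and $\sctx$ is pushed to the top (for the \ruleUEsR interaction, using also \cref{rem:valL_reduces_in_L} to reduce inside $\sctx$). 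The bookkeeping of which variables have entered $\aset$, $\sset$, or the auxiliary $\asettwo$ sets across the induction is what makes this step subtle; everything else is routine. Finally, \cref{thm:diamond-property-top-level} follows as the instance $\aset = \emptyset$, $\appflag = \nonapp$, $\asettwo^1 = \asettwo^2 = \emptyset$, where the value-membership hypothesis is vacuously satisfied because $\rulename_1, \rulename_2 \in \set{\ruledb,\rulelsv}$.
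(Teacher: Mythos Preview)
Your proposal is correct and matches the paper's induction on $\tm$ with case analysis on the last rules of the two derivations, invoking the same auxiliary lemmas. Two small slips worth fixing: the \ruleUDb--\ruleUAppR overlap is actually vacuous (the head $(\lam\var{\tm_0})\sctx$ is never a structure, so \ruleUAppR cannot fire), and \ruleULsv--\ruleUEsR acts on disjoint subterms and needs no \ih---there \cref{lem:habs_rulesub_closed_reduction} is used to show the ES argument $\val\sctx'$ remains in $\HAbs{\aset\cup\asettwo^2}$ after a $\mathsf{sub}$-step, not to establish any property of $\tmtwo$.
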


\sloppy
\begin{proof}
By induction on $\tm$.
Case $\tm = \lam\var\tmtwo$ is impossible, as there are no rules for 
reducing abstractions.
\begin{enumerate}
\item $\tm = \var$.
  This case does not apply since we would have
  $\var \tov{\rulesub\var{\val_1}}{\aset \cup \set\var}\sset\app 
  \val_1 = \tm_1$ and
  $\var \tov{\rulesub\var{\val_2}}{\aset \cup \set\var}\sset\app 
  \val_2 = \tm_2$, with $\val_1 = \val_2 = \val$ by the hypothesis
  but which contradicts the hypothesis stating $\tm_1 \neq \tm_2$.
\item $\tm = \tmtwo \, \tmthree$.
  Since $\inv\aset\sset{\tmtwo \, \tmthree}$, then 
  $\inv\aset\sset\tmtwo$ and $\inv\aset\sset\tmthree$ hold.
  Applications reduce by rules \ruleUDb, \ruleUAppL, and \ruleUAppR,
  so we consider the following subcases:
  \begin{enumerate}
  \item \ruleUDb-\ruleUDb.
    This case does not apply since it ends up being that 
    $\tm_1 = \tm_2$, which contradicts the hypothesis.
  \item \ruleUDb-\ruleUAppL.
    Then,
    $\tm = (\lam\var{\tmtwo'})\sctx \, \tmthree 
    \tov\ruledb\aset\sset\appflag 
    \tmtwo'\esub\var\tmthree\sctx = \tm_1$, where
    $\tmtwo = (\lam\var{\tmtwo'})\sctx$ and $\rulename_1 = \ruledb$.
    On the other hand,
    $\tm = (\lam\var{\tmtwo'})\sctx \, \tmthree 
    \tov{\rulename_2}\aset\sset\appflag \tmtwo_2 \, \tmthree = \tm_2$
    is derived from 
    $(\lam\var{\tmtwo'})\sctx \tov{\rulename_2}\aset\sset\app \tmtwo_2$,
    where $\tmtwo_2$ is of the form $(\lam\var{\tmtwo'})\sctx'$
    by \cref{rem:valL_reduces_in_L}.
    Then, $\tm_2 = (\lam\var{\tmtwo'})\sctx' \, \tmthree
    \tov\ruledb\aset\sset\appflag
    \tmtwo'\esub\var\tmthree\sctx' = \tm'$ by rule \ruleUDb, and
    $\tm_1 = \tmtwo'\esub\var\tmthree\sctx 
    \tov{\rulename_2}\aset\sset\appflag 
    \tmtwo'\esub\var\tmthree\sctx' = \tm'$ by 
    \cref{rem:valL_reduces_in_L}.
    And we conclude by applying \cref{lem:weakening_of_reducion_sets}
    on both reduction steps to extend $\aset$ to  
    $\aset \cup \asettwo^1$ and $\aset \cup \asettwo^2$, respectively.
    The following diagram summarises the proof:
    \[
      \xymatrixcolsep{3pc}
      \xymatrix{
          \tm = (\lam\var{\tmtwo'})\sctx \, \tmthree 
            \arUr\ruledb\aset\sset\appflag 
            \arUd{\rulename_2}\aset\sset\appflag
        & \tmtwo'\esub\var\tmthree\sctx = \tm_1
            \arsdUd{\rulename_2}{\aset \cup \asettwo^1}\sset\appflag
      \\
          \tm_2 = (\lam\var{\tmtwo'})\sctx' \, \tmthree
            \arsdUr\ruledb{\aset \cup \asettwo^2}\sset\appflag
        & \tmtwo'\esub\var\tmthree\sctx' = \tm'
      }
    \]
  \item \ruleUDb-\ruleUAppR.
    Then,
    $\tm = (\lam\var{\tmtwo'})\sctx \, \tmthree 
    \tov\ruledb\aset\sset\appflag \tmtwo'\esub\var\tmthree\sctx = \tm_1$,
    where
    $\tmtwo = (\lam\var{\tmtwo'})\sctx$ and $\rulename_1 = \ruledb$.
    On the other hand,
    $\tm = (\lam\var{\tmtwo'})\sctx \, \tmthree 
    \tov{\rulename_2}\aset\sset\appflag 
    (\lam\var{\tmtwo'})\sctx \, \tmthree_2 = \tm_2$ is derived from
    $(\lam\var{\tmtwo'})\sctx \in \Struct\sset$ and
    $\tmthree \tov{\rulename_2}\aset\sset\nonapp \tmthree_2$.
    This case is impossible since
    $(\lam\var{\tmtwo'})\sctx$ cannot be an element of $\Struct\sset$
    by \cref{rem:habs_st}.
  \item \ruleUAppL-\ruleUAppR.
    Then,
    $\tm = \tmtwo \, \tmthree \tov{\rulename_1}\aset\sset\appflag \tmtwo_1 \, \tmthree = \tm_1$
    derived from 
    (1) $\tmtwo \tov{\rulename_1}\aset\sset\app \tmtwo_1$.
    On the other hand,
    $\tm = \tmtwo \, \tmthree \tov{\rulename_2}\aset\sset\appflag \tmtwo \, \tmthree_2 = \tm_2$,
    derived from (2) $\tmtwo \in \Struct\sset$ and
    (3) $\tmthree \tov{\rulename_2}\aset\sset\nonapp \tmthree_2$.
    We can apply rule \ruleUAppL with (1) as premise, yielding
    $\tm_2 = \tmtwo \, \tmthree_2 \tov{\rulename_1}\aset\sset\appflag
    \tmtwo_1 \, \tmthree_2 = \tm'$.
    Having (1) and (2), then $\tmtwo_1 \in \Struct\sset$ by 
    \cref{lem:HAbs_Struct_closed_reduction}.
    With this result and (3) we can apply rule \ruleUAppR, yielding
    $\tm_1 = \tmtwo_1 \, \tmthree \tov{\rulename_2}\aset\sset\appflag
    \tmtwo_1 \, \tmthree_2 = \tm'$.
    And we conclude by applying \cref{lem:weakening_of_reducion_sets}
    on both reductions to extend $\aset$ to $\aset \cup \asettwo^1$ 
    and $\aset \cup \asettwo^2$, respectively.
    The following diagram summarises the proof:
    \[
      \xymatrixcolsep{5pc}
      \xymatrix{
          \tm = \tmtwo \, \tmthree 
            \arUr{\rulename_1}\aset\sset\appflag
            \arUd{\rulename_2}\aset\sset\appflag
        & \tmtwo_1 \, \tmthree = \tm_1
            \arsdUd{\rulename_2}{\aset \cup \asettwo^1}\sset\appflag
      \\
          \tm_2 = \tmtwo \, \tmthree_2
            \arsdUr{\rulename_1}{\aset \cup \asettwo^2}\sset\appflag
        & \tmtwo_1 \, \tmthree_2 = \tm'
      }
    \]
  \item \ruleUAppL-\ruleUAppL.
    Then,
    $\tm = \tmtwo \, \tmthree \tov{\rulename_1}\aset\sset\appflag \tmtwo_1 \, \tmthree = \tm_1$
    derived from $\tmtwo \tov{\rulename_1}\aset\sset\app \tmtwo_1$.
    On the other hand,
    $\tm = \tmtwo \, \tmthree \tov{\rulename_2}\aset\sset\appflag \tmtwo_2 \, \tmthree = \tm_2$,
    derived from 
    $\tmtwo \tov{\rulename_2}\aset\sset\app \tmtwo_2$, 
    where $\tmtwo_1 \neq \tmtwo_2$
    since $\tmtwo_1 \, \tmthree \neq \tmtwo_2 \, \tmthree$ by hypothesis.
    We apply the \ih on $\tmtwo$, yielding $\tmtwo'$ such that
    $\tmtwo_1 \tov{\rulename_2}{\aset \cup \asettwo^1}\sset\app \tmtwo'$
    and
    $\tmtwo_2 \tov{\rulename_1}{\aset \cup \asettwo^2}\sset\app \tmtwo'$.
    Applying rule \ruleUAppL to reduce both $\tmtwo_1 \, \tmthree$ and
    $\tmtwo_2 \, \tmthree$, we obtain
    $\tm_1 = \tmtwo_1 \, \tmthree \tov{\rulename_2}{\aset \cup \asettwo^1}\sset\appflag \tmtwo' \, \tmthree = \tm'$
    and
    $\tm_2 = \tmtwo_2 \, \tmthree \tov{\rulename_1}{\aset \cup \asettwo^2}\sset\appflag \tmtwo' \, \tmthree = \tm'$,
    respectively.
    The following diagram summarises the proof:
    \[
      \xymatrixcolsep{5pc}
      \xymatrix{
          \tm = \tmtwo \, \tmthree 
            \arUr{\rulename_1}\aset\sset\appflag 
            \arUd{\rulename_2}\aset\sset\appflag
        & \tmtwo_1 \, \tmthree = \tm_1
            \arsdUd{\rulename_2}{\aset \cup \asettwo^1}\sset\appflag
      \\
          \tm_2 = \tmtwo_2 \, \tmthree
            \arsdUr{\rulename_1}{\aset \cup \asettwo^2}\sset\appflag
        & \tmtwo' \, \tmthree = \tm'
      }
    \]
  \item \ruleUAppR-\ruleUAppR.
    Analogous to the previous case.
  \end{enumerate}
\item $\tm = \tmtwo\esub\var\tmthree$.
  Since $\inv\aset\sset{\tmtwo\esub\var\tmthree}$, then 
  $\inv{\aset \cup \set\var}\sset\tmtwo$, 
  $\inv\aset{\sset \cup \set\var}\tmtwo$, and $\inv\aset\sset\tmthree$.
  Closures reduce by rules \ruleULsv, \ruleUEsR, \ruleUEsLAbs, and 
  \ruleUEsLStruct, so we consider the following cases:
  \begin{enumerate}
  \item \ruleULsv-\ruleULsv.
    Then
    \[
      \inferrule{
        \tmtwo \tov{\rulesub\var\val}{\aset \cup \set\var}\sset\appflag \tmtwo_1
        \sep
        \var \notin \aset \cup \sset
        \sep
        \val\sctx \in \HAbs\aset
      }{
        \tm = \tmtwo\esub\var{\val\sctx}
        \tov\rulelsv\aset\sset\appflag
        \tmtwo_1\esub\var\val\sctx = \tm_1
      }\ruleULsv
    \]
    where $\tmthree = \val\sctx$ and $\rulename_1 = \rulelsv$.
    On the other hand,
    \[
      \inferrule{
        \tmtwo \tov{\rulesub\var\val}{\aset \cup \set\var}\sset\appflag \tmtwo_2
        \sep
        \var \notin \aset \cup \sset
        \sep
        \val\sctx \in \HAbs\aset
      }{
        \tm = \tmtwo\esub\var{\val\sctx}
        \tov\rulelsv\aset\sset\appflag
        \tmtwo_2\esub\var\val\sctx = \tm_2
      }\ruleULsv
    \]
    where $\rulename_2 = \ruleULsv$ and $\tmtwo_1 \neq \tmtwo_2$,
    since $\tmtwo_1\esub\var{\val\sctx} \neq \tmtwo_2\esub\var{\val\sctx}$.
    We apply the \ih on $\tmtwo$, yielding $\tmtwo'$ such that
    $\tmtwo_1 \tov{\rulesub\var\val}{\aset \cup \set\var \cup \asettwo^1}\sset\appflag \tmtwo'$ and
    $\tmtwo_2 \tov{\rulesub\var\val}{\aset \cup \set\var \cup \asettwo^2}\sset\appflag \tmtwo'$.
    Moreover,
    $\tmtwo_1 \tov{\rulesub\var\val}{\expansion\aset\sctx \cup \set\var \cup \asettwo^1}{\expansion\sset\sctx}\appflag \tmtwo'$ and
    $\tmtwo_2 \tov{\rulesub\var\val}{\expansion\aset\sctx \cup \set\var \cup \asettwo^2}{\expansion\sset\sctx}\appflag \tmtwo'$
    by \cref{lem:weakening_of_reducion_sets}.
    Furthermore, the hypothesis $\val\sctx \in \HAbs\aset$ implies
    $\val \in \HAbs{\expansion\aset\sctx}$ by 
    \cref{lem:tL_hAbs_t_hAbsExp}, so 
    $\val \in \HAbs{\expansion\aset\sctx \cup \asettwo^i}$ by
    \cref{lem:weakening_of_reducion_sets}.
    Since 
    $\var \notin \aset \cup \sset$ by premise in rule \ruleULsv, 
    and $\var \notin \domSctx\sctx \cup \asettwo^1 \cup \asettwo^2$
    by $\alpha$-conversion, 
    then $\var \notin \expansion\aset\sctx \cup \asettwo^1 \cup \expansion\sset\sctx$
    and $\var \notin \expansion\aset\sctx \cup \asettwo^2 \cup \expansion\sset\sctx$.
    We can then apply rule \ruleULsv on both 
    $\tmtwo_1\esub\var\val$ and $\tmtwo_2\esub\var\val$, yielding
    $\tmtwo_1\esub\var\val 
    \tov\rulelsv{\expansion\aset\sctx \cup \asettwo^1}{\expansion\sset\sctx}\appflag 
    \tmtwo'\esub\var\val$ and
    $\tmtwo_2\esub\var\val 
    \tov\rulelsv{\expansion\aset\sctx \cup \asettwo^2}{\expansion\sset\sctx}\appflag 
      \tmtwo'\esub\var\val$, respectively.
    To conclude,
    $\tm_1 = \tmtwo_1\esub\var\val\sctx 
    \tov\rulelsv{\aset \cup \asettwo^1}\sset\appflag 
    \tmtwo'\esub\var\val\sctx = \tm'$ and
    $\tm_2 = \tmtwo_2\esub\var\val\sctx 
    \tov\rulelsv{\aset \cup \asettwo^2}\sset\appflag 
    \tmtwo'\esub\var\val\sctx = \tm'$
    by successively applying rule \ruleUEsLAbs or \ruleUEsLStruct
    accordingly.
    The following diagram summarises the proof:
    \[
      \xymatrixcolsep{5pc}
      \xymatrix{
        \tm = \tmtwo\esub\var{\val\sctx} 
          \arUr\rulelsv\aset\sset\appflag 
          \arUd\rulelsv\aset\sset\appflag
      & \tmtwo_1\esub\var\val\sctx = \tm_1
          \arsdUd\rulelsv{\aset \cup \asettwo^2}\sset\appflag
      \\
        \tm_2 = \tmtwo_2\esub\var\val\sctx
          \arsdUr\rulelsv{\aset \cup \asettwo^1}\sset\appflag
      & \tmtwo'\esub\var\val\sctx = \tm'
      }
    \]
  \item \ruleULsv-\ruleUEsR. 
    Then
    \[
      \inferrule{
        (1)\ \tmtwo \tov{\rulesub\var\val}{\aset \cup \set\var}\sset\appflag \tmtwo_1
        \sep
        (2)\ \var \notin \aset \cup \sset
        \sep
        (3)\ \val\sctx \in \HAbs\aset
      }{
        \tm = \tmtwo\esub\var{\val\sctx}
        \tov\rulelsv\aset\sset\appflag
        \tmtwo_1\esub\var\val\sctx = \tm_1
      }\ruleULsv
    \]
    where $\tmthree = \val\sctx$ and $\rulename_1 = \rulelsv$.
    On the other hand,
    $\tm = \tmtwo\esub\var{\val\sctx} 
    \tov{\rulename_2}\aset\sset\appflag 
    \tmtwo\esub\var{\tmthree_2} = \tm_2$, derived from
    $(4)\ \val\sctx \tov{\rulename_2}\aset\sset\nonapp \tmthree_2$.
    Moreover, $\tmthree_2 = \val\sctx'$ by \cref{rem:valL_reduces_in_L}.
    We can then apply accordingly rule \ruleUEsR, \ruleUEsLAbs, or 
    \ruleUEsLStruct to $(\rulename_2, \aset, \sset, \appflag)$-reduce 
    $\tm_1 = \tmtwo_1\esub\var\val\sctx$ to the term 
    $\tmtwo_1\esub\var\val\sctx' = \tm'$.
    And we extend $\aset$ to $\aset \cup \asettwo^1$ by applying 
    \cref{lem:weakening_of_reducion_sets}.

    On the other hand, we analyse two possible cases,
    depending on the form $\rulename_2$ can have:
    \begin{itemize}
    \item $\rulename_2 \in \set{\ruledb, \rulelsv}$.
      Since (3) and (4), then $\val\sctx' \in \HAbs\aset$ 
      by \cref{lem:HAbs_Struct_closed_reduction}.
      We can apply rule \ruleULsv with this result and (1) and (2) as
      premises, yielding
      $\tm_2 = \tmtwo\esub\var{\val\sctx'} 
      \tov\rulelsv\aset\sset\appflag \tmtwo_1\esub\var\val\sctx' = \tm'$.
      Then, we extend $\aset$ to $\aset \cup \asettwo^2$ by 
      \cref{lem:weakening_of_reducion_sets}.
    \item $\rulename_2 = \rulesub{\var_2}{\val_2}$.
      By $\alpha$-conversion, we may assume $\var_2 \neq \var$, and 
      $\var_2 \in \aset$ by \cref{lem:sub_var_in_aset}.
      Moreover, $\val_2 \in \HAbs{\aset \cup \asettwo^2}$ by hypothesis.
      Since (4) and (3), then $\val\sctx' \in \HAbs{\aset \cup \asettwo^2}$
      by \cref{lem:habs_rulesub_closed_reduction}.
      Moreover, taking (1), we yield
      $\tmtwo 
      \tov{\rulesub\var\val}{\aset \cup \set\var \cup \asettwo^1}\sset\appflag 
      \tmtwo_1$ by \cref{lem:weakening_of_reducion_sets}.
      We apply rule \ruleULsv, concluding
      $\tm_2 = \tmtwo\esub\var{\val\sctx'} 
      \tov\rulelsv{\aset \cup \asettwo^2}\sset\appflag 
      \tmtwo\esub\var\val\sctx' = \tm'$.
    \end{itemize}
    The following diagram summarises the proof:
    \[
      \xymatrixcolsep{5pc}
      \xymatrix{
        \tm = \tmtwo\esub\var{\val\sctx}
          \arUr\rulelsv\aset\sset\appflag 
          \arUd{\rulename_2}\aset\sset\appflag
      & \tmtwo_1\esub\var\val\sctx = \tm_1
          \arsdUd{\rulename_2}{\aset \cup \asettwo^1}\sset\appflag
      \\
        \tm_2 = \tmtwo\esub\var{\val\sctx'}
          \arsdUr\rulelsv{\aset \cup \asettwo^2}\sset\appflag
      & \tmtwo_1\esub\var\val\sctx' = \tm'
      }
    \]
  \item \ruleULsv-\ruleUEsLAbs.
    Then
    \[
      \inferrule{
        \tmtwo \tov{\rulesub\var\val}{\aset \cup \set\var}\sset\appflag \tmtwo_1
        \sep
        \var \notin \aset \cup \sset
        \sep
        \val\sctx \in \HAbs\aset
      }{
        \tm= \tmtwo\esub\var{\val\sctx}
        \tov\rulelsv\aset\sset\appflag
        \tmtwo_1\esub\var\val\sctx =\tm_1
      }\ruleULsv
    \]
    where $\tmthree = \val\sctx$ and $\rulename_1 = \rulelsv$.
    On the other hand,
    \[
      \inferrule{
        \tmtwo \tov{\rulename_2}{\aset \cup \set\var}\sset\appflag \tmtwo_2
        \sep
        \val\sctx \in \HAbs\aset
        \sep
        \var \notin \aset \cup \sset
        \sep
        (1)\ \var \notin \fv{\rulename_2}
      }{
         \tm = \tmtwo\esub\var{\val\sctx}
         \tov{\rulename_2}\aset\sset\appflag
         \tmtwo_2\esub\var{\val\sctx} = \tm_2
      }\ruleUEsLAbs
    \]
    We can apply the \ih on $\tmtwo$, yielding $\tmtwo'$ such that
    $\tmtwo_1 
    \tov{\rulename_2}{\aset \cup \set\var \cup \asettwo^1}\sset\appflag 
    \tmtwo'$ and 
    $(2)\ \tmtwo_2 
    \tov{\rulesub\var\val}{\aset \cup \set\var \cup \asettwo^2}\sset\appflag 
    \tmtwo'$.
    Moreover,
    $(3)\ \tmtwo_1 
    \tov{\rulename_2}{\expansion\aset\sctx \cup \set\var \cup \asettwo^1}{\expansion\sset\sctx}\appflag 
    \tmtwo'$ by \cref{lem:weakening_of_reducion_sets}.
    The hypothesis $\val\sctx \in \HAbs\aset$ implies 
    $\val \in \HAbs{\expansion\aset\sctx}$ by \cref{lem:tL_hAbs_t_hAbsExp}
    and $\val \in \HAbs{\expansion\aset\sctx \cup \asettwo^i}$ by
    \cref{lem:weakening_of_reducion_sets}, with $i \in \set{1, 2}$.
    Hence, we have 
    $(4)\ \val \in \HAbs{\expansion\aset\sctx \cup \asettwo^1}$ and 
    $(5)\ \val\sctx \in \HAbs{\aset \cup \asettwo^2}$ by 
    \cref{lem:tL_hAbs_t_hAbsExp}.
    Since $\var \notin \aset \cup \sset$ by premise in rule \ruleULsv
    and $\var \notin \asettwo^1 \cup  \asettwo^2 \cup \domSctx\sctx$
    by $\alpha$-conversion, then 
    $(6)\ \var \notin (\aset \cup \asettwo^2) \cup \sset$ and 
    $(7)\ \var \notin (\expansion\aset\sctx \cup \asettwo^1) \cup \expansion\sset\sctx$.
    Therefore,
    $\tmtwo_1\esub\var\val 
    \tov{\rulename_2}{\expansion\aset\sctx \cup \asettwo^1}{\expansion\sset\sctx}\appflag 
    \tmtwo'\esub\var\val$ by rule \ruleUEsLAbs,
    with (3), (4), (7), and (1) as premises.
    And we conclude with
    $\tm_1 = \tmtwo_1\esub\var\val\sctx 
    \tov{\rulename_2}{\aset \cup \asettwo^1}\sset\appflag 
    \tmtwo'\esub\var\val\sctx = \tm'$ by successively applying rule 
    \ruleUEsLAbs or rule \ruleUEsLStruct accordingly.
    On the other hand, by rule \ruleULsv with (2), (6), and (5) as 
    premises, we obtain
    $\tm_2 = \tmtwo_2\esub\var{\val\sctx} 
    \tov\rulelsv{\aset \cup \asettwo^2}\sset\appflag 
    \tmtwo'\esub\var\val\sctx = \tm'$.
    The following diagram summarises the proof:
    \[
      \xymatrixcolsep{5pc}
      \xymatrix{
        \tm = \tmtwo\esub\var{\val\sctx} 
          \arUr\rulelsv\aset\sset\appflag 
          \arUd{\rulename_2}\aset\sset\appflag
      & \tmtwo_1\esub\var\val\sctx = \tm_1
          \arsdUd{\rulename_2}{\aset \cup \asettwo^1}\sset\appflag
      \\
        \tm_2 = \tmtwo_2\esub\var{\val\sctx}
          \arsdUr\rulelsv{\aset \cup \asettwo^2}\sset\appflag
      & \tmtwo'\esub\var\val\sctx = \tm'
      }
    \]
  \item \ruleULsv-\ruleUEsLStruct.
    Then
    \[
      \inferrule{
        \tmtwo \tov{\rulesub\var\val}{\aset \cup \set\var}\sset\appflag \tmtwo_1
        \sep
        \var \notin \aset \cup \sset
        \sep
        \val\sctx \in \HAbs\aset
      }{
        \tm = \tmtwo\esub\var{\val\sctx}
        \tov\rulelsv\aset\sset\appflag
        \tmtwo_1\esub\var{\val}\sctx = \tm_1
      }\ruleULsv
    \]
    where $\tmthree = \val\sctx$ and $\rulename_1 = \rulelsv$.

    On the other hand,
    $\tm = \tmtwo\esub\var{\val\sctx} 
    \tov{\rulename_2}\aset\sset\appflag \tmtwo_2\esub\var{\val\sctx} = \tm_2$,
    derived from 
    $\tmtwo \tov{\rulename_2}\aset{\sset \cup \set\var}\appflag \tmtwo_2$,
    $\val\sctx \in \Struct\sset$, $\var \notin \aset \cup \sset$, and 
    $\var \notin \fv{\rulename_2}$.
    Since $\inv\aset\sset{\val\sctx}$, then in particular 
    $\aset \disj \sset$.
    Hence,
    $\val\sctx \in \HAbs\aset$ and $\val\sctx \in \Struct\sset$,
    and at the same time
    $\val\sctx \notin \HAbs\aset$ or $\val\sctx \notin \Struct\sset$
    by \cref{lem:disjunction}.
    Therefore we reach a contradiction, so this case is not possible.
  \item \ruleUEsR-\ruleUEsR.
    Then
    $\tm = \tmtwo\esub\var\tmthree \tov{\rulename_1}\aset\sset\appflag
     \tmtwo\esub\var{\tmthree_1} = \tm_1$,
    derived from $\tmthree \tov{\rulename_1}\aset\sset\nonapp \tmthree_1$.

    On the other hand,
    $\tm = \tmtwo\esub\var\tmthree \tov{\rulename_2}\aset\sset\appflag
    \tmtwo\esub\var{\tmthree_2} = \tm_2$, derived from
    $\tmthree \tov{\rulename_2}\aset\sset\nonapp \tmthree_2$,
    where $\tmthree_1 \neq \tmthree_2$
    since $\tmtwo\esub\var{\tmthree_1} \neq \tmtwo\esub\var{\tmthree_2}$
    by hypothesis.
    We can apply the \ih on $\tmthree$, yielding $\tmthree'$ such that
    $\tmthree_1 \tov{\rulename_2}{\aset \cup \asettwo^1}\sset\nonapp \tmthree'$
    and 
    $\tmthree_2 \tov{\rulename_1}{\aset \cup \asettwo^2}\sset\nonapp \tmthree'$.
    Applying rule \ruleUEsR on both $\tmtwo\esub\var{\tmthree_1}$ and
    $\tmtwo\esub\var{\tmthree_2}$, 
    we obtain $\tm_1 = \tmtwo\esub\var{\tmthree_1} 
    \tov{\rulename_2}{\aset \cup \asettwo^1}\sset\appflag 
    \tmtwo\esub\var{\tmthree'} = \tm'$ 
    and $\tm_2 = \tmtwo\esub\var{\tmthree_2} 
    \tov{\rulename_1}{\aset \cup \asettwo^2}\sset\appflag 
    \tmtwo\esub\var{\tmthree'} = \tm'$, respectively.
    The following diagram summarises the proof:
    \[
      \xymatrixcolsep{5pc}
      \xymatrix{
        \tm = \tmtwo\esub{\var}{\tmthree}
          \arUr{\rulename_1}{\aset}{\sset}{\appflag} 
          \arUd{\rulename_2}{\aset}{\sset}{\appflag}
      & \tmtwo\esub{\var}{\tmthree_1} = \tm_1
          \arsdUd{\rulename_2}{\aset \cup \asettwo^1}{\sset}{\appflag}
      \\
        \tm_2 = \tmtwo\esub{\var}{\tmthree_2}
          \arsdUr{\rulename_1}{\aset \cup \asettwo^2}{\sset}{\appflag}
      & \tmtwo\esub{\var}{\tmthree'} = \tm'
      }
    \]
  \item \ruleUEsLAbs-\ruleUEsLAbs. % Analogo al caso EsR-EsR.
    Analogous to the previous case.
  \item \ruleUEsLStruct-\ruleUEsLStruct.
    Analogous to the previous case.
  \item \ruleUEsR-\ruleUEsLAbs.
    Then, 
    $\tm = \tmtwo\esub\var\tmthree 
    \tov{\rulename_1}\aset\sset\appflag 
    \tmtwo\esub\var{\tmthree_1} = \tm_1$ which is derived from
    $(1)\ \tmthree \tov{\rulename_1}\aset\sset\nonapp \tmthree_1$.

    On the other hand, we have
    $\tm = \tmtwo\esub\var\tmthree \tov{\rulename_2}\aset\sset\appflag
    \tmtwo_2\esub\var\tmthree = \tm_2$, derived from
    $(2)\ \tmtwo \tov{\rulename_2}{\aset \cup \set\var}\sset\appflag \tmtwo_2$,
    $(3)\ \tmthree \in \HAbs\aset$, 
    $(4)\ \var \notin \aset \cup \sset$, and 
    $(5)\ \var \notin \fv{\rulename_2}$.
    We can apply rule \ruleUEsLAbs with (2), (3), (4), and (5) as 
    premises, yielding
    $\tm_1 = \tmtwo\esub\var{\tmthree_1} 
    \tov{\rulename_2}\aset\sset\appflag 
    \tmtwo_2\esub\var{\tmthree_1} = \tm'$.
    Then, we can apply rule \ruleUEsR with (1) as premise, yielding
    $\tm_2 = \tmtwo_2\esub\var\tmthree 
    \tov{\rulename_1}\aset\sset\appflag 
    \tmtwo_2\esub\var{\tmthree_1} = \tm'$.
    We conclude by applying \cref{lem:weakening_of_reducion_sets} on 
    both reductions to extend $\aset$ to $\aset \cup \asettwo^1$ and 
    $\aset \cup \asettwo^2$, respectively.
    The following diagram summarises the proof:
    \[
      \xymatrixcolsep{5pc}
      \xymatrix{
        \tm = \tmtwo\esub\var\tmthree
          \arUr{\rulename_1}\aset\sset\appflag 
          \arUd{\rulename_2}\aset\sset\appflag
      & \tmtwo\esub\var{\tmthree_1} = \tm_1
          \arsdUd{\rulename_2}{\aset \cup \asettwo^1}\sset\appflag
      \\
        \tm_2 = \tmtwo_2\esub\var\tmthree
          \arsdUr{\rulename_1}{\aset \cup \asettwo^2}\sset\appflag
      & \tmtwo_2\esub\var{\tmthree_1} = \tm'
      }
    \]
  \item \ruleUEsR-\ruleUEsLStruct.
    Analogous to the previous case.
  \item \ruleUEsLAbs-\ruleUEsLStruct.
    Then,
    $\tm = \tmtwo\esub\var\tmthree
    \tov{\rulename_1}\aset\sset\appflag
    \tmtwo_1\esub\var\tmthree = \tm_1$, which is derived from
    $\tmtwo \tov{\rulename_1}{\aset \cup \set\var}\sset\appflag \tmtwo_1$,
    $\tmthree \in \HAbs\aset$, $\var \notin \aset \cup \sset$, and 
    $\var \notin \fv{\rulename_1}$.

    On the other hand, 
    $\tm = \tmtwo\esub\var\tmthree \tov{\rulename_2}\aset\sset\appflag 
    \tmtwo_2\esub\var\tmthree = \tm_2$, which is derived from
    $\tmtwo \tov{\rulename_2}\aset{\sset \cup \set\var}\appflag \tmtwo_2$,
    $\tmthree \in \Struct\sset$, $\var \notin \aset \cup \sset$, and 
    $\var \notin \fv{\rulename_2}$.
    Since $\inv\aset\sset\tmthree$ then in particular $\aset \disj \sset$
    so $\tmthree \notin \HAbs\aset$ or $\tmthree \notin \Struct\sset$
    by \cref{lem:disjunction}.
    However, at the same time, we yield $\tmthree \in \HAbs\aset$ and 
    $\tmthree \in \Struct\sset$ by premises of rules \ruleUEsLAbs and
    \ruleUEsLStruct, respectively.
    Therefore we reach a contradiction, so this case is not possible.
  \end{enumerate}
\end{enumerate}
\end{proof}

The following theorem is a particular case of \cref{prop:diamond-property}:
\diamondproperty*

\section{Proofs of Section~\ref{sec:relating} ``Relating \Nonuseful and Useful Call-by-Value''}
\label{app:relating}

This section presents the technical details regarding the relation 
between both \LOCBV and \UOCBV.
We start first by discussing in \cref{sec:rewriting-tovalas} the
partial unfolding operation we introduced in~\cref{sec:relating}:
we give in particular a characterisation of the partial unfolding 
operation via rewriting. 
Our main goal here is to show that the relation $\tovalas$ is 
terminating (\cref{cor:tovalas-terminating}). 
For achieving this result, we prove that there is a decreasing measure, 
which we define in this same subsection.
Then, we move on to \cref{sec:relating-locbv-uocbv}, where we show 
the technical results that are necessary to relate \LOCBV and \UOCBV 
(\cref{coro:t_UNF_unfolding_t_VNF}).

\subsection{Characterising the Unfolding Operation via Rewriting}
\label{sec:rewriting-tovalas}

The reduction relation under a value assignment $\valas$, written 
$\tovalas$, is defined as follows:
\[
  \tovalas\, \eqdef\, \tovv\rulelsv \cup_{\var\in\dom\valas} (\tovv{\rulesub\var{\valas(\var)}})
\]

A term $\tm$ is said to be \defn{$\valas$-reducible} if there exists 
$\tmtwo$ such that $\tm \tovalas \tmtwo$. 
Note that $\dom\valas$ plays the role of a value frame, so this notion 
is closely related to the set $\VRed\vset$ (defined in~\cref{app:opencbv}).

We start by proving that $\tovalas$ is \emph{terminating} (\cf \cref{sec:preliminary_notions}).
To prove this, we define a \emph{measure} $\measvalas{\_}$ on terms 
and we show that it is strictly decreasing w.r.t. $\tovalas$ 
(\cref{lem:decreasing-measure}).
The measure $\measvalas{\_}$ is inspired by de Vrijer's direct proof 
of the finite developments theorem~\cite{Vrijer85}, and defined by 
means of intermediate functions $\measvar\var{\_}$ and $\meas{\_}$.

Given a term $\tm$ and a variable $\var$, the 
\defn{potential number of occurences of $\var$ in $\tm$}, written 
$\measvar\var\tm$, is defined as 0 if $\var \notin \fv\tm$, and 
otherwise is defined recursively as follows:
\[
  \begin{array}{rcl@{\hspace{1cm}}rcl}
    \measvar\var\var
  & \eqdef
  & 1
  & 
    \measvar\var{\tm \, \tmtwo}
  & \eqdef
  & \measvar\var\tm + \measvar\var\tmtwo
  \\
    \measvar\var{\lam\vartwo\tm}
  & \eqdef
  & 0
  & \measvar\var{\tm\esub\vartwo\tmtwo}
  & \eqdef
  & \measvar\var\tm + \measvar\var\tmtwo\cdot(1 + \measvar\vartwo\tm)
  \end{array}
\]

This gives an \emph{overapproximation} of the number of free reachable
occurrences of $\var$ in the unfolding of $\tm$.
By this we mean that $\measvar{\var}{\tm}$ counts the \emph{potential} number
of free occurrences of $\var$ in the unfolding of $\tm$,
even if not all the substitutions are 
performed during useful reduction for some reason.
For example, $\measvar{\vartwo}{\var\esub{\var}{\vartwo \, \vartwo}} =
4$, although the substitution of $\var$ by $\vartwo \, \vartwo$
is never executed because it is not useful.
Another example is $\tm_0 \eqdef \var\esub{\var}{\vartwo\esub{\vartwo}{\varthree}}$.
We have $\measvar{\varthree}{\tm_0} = 4$, despite the fact that $\varthree$
only occurs three times in the corresponding unfolding
$\varthree\esub{\var}{\varthree}\esub{\vartwo}{\varthree}$.

We now define the \textbf{measure of $\tm$}, written $\meas{\tm}$, as follows:
\[
  \begin{array}{rcl@{\hspace{1cm}}rcl}
      \meas{\var}                   & \eqdef & 0 
    & \meas{\tm \, \tmtwo}          & \eqdef & \meas{\tm} + \meas{\tmtwo} 
  \\
      \meas{\lam{\var}{\tm}}        & \eqdef & 0
    & \meas{\tm\esub{\var}{\tmtwo}} & \eqdef & \meas{\tm} + \measvar{\var}{\tm} + \meas{\tmtwo}\cdot(1 + \measvar{\var}{\tm})
  \end{array}
\]
This measure can be seen as the number of
steps that the evaluation takes to perform all the
substitutions in the longest reduction sequence starting at $\tm$.
Notice that $ \meas{\val} = 0$ for any value $\val$.

Furthermore, given $\valas$ a value assignment,
we define the \textbf{measure of $\tm$ under $\valas$}, written $\measvalas{\tm}$, as
$\measvalas{\tm} \eqdef \meas{\tm} + \textstyle\sum_{\var \in \dom{\valas}} \measvar{\var}{\tm}$.

For example, take again the term $\tm_0$ from above.
Since $\tovalas$-reduction is non-deterministic,
we consider two different reduction sequences from $\tm_0$,
of lengths $2$ and $3$ respectively,
where the second sequence is the longest possible.
In this example, $\valas$ is the empty value assignment $(\valas = \evalas)$:
\[
  \begin{array}{l@{\,\,}l@{\,\,}l@{\,\,}l@{\,\,}l@{\,\,}l@{\,\,}l}
    \tm_0 = \var\esub{\var}{\vartwo\esub{\vartwo}{\varthree}}
  & \to &
    \tmtwo = \var\esub{\var}{\varthree\esub{\vartwo}{\varthree}}
  & \to &
    \tm_3 = \varthree\esub{\var}{\varthree}\esub{\vartwo}{\varthree}
  \\
    \tm_0 = \var\esub{\var}{\vartwo\esub{\vartwo}{\varthree}}
  & \to &
    \tm_1 = \vartwo\esub{\var}{\vartwo}\esub{\vartwo}{\varthree}
  & \to &
    \tm_2 = \vartwo\esub{\var}{\varthree}\esub{\vartwo}{\varthree} \\
  &&& \to &
    \tm_3 = \varthree\esub{\var}{\varthree}\esub{\vartwo}{\varthree}
  \end{array}
\] 
Then
$\measvar{\varthree}{\tm_0} = 4$, $\measvar{\varthree}{\tm_1} = 3$,
$\measvar{\varthree}{\tm_2}= 3$, $\measvar{\varthree}{\tm_3} = 3$,
and $\measvar{\varthree}{\tmtwo} = 4$.
Note for example that $\measvar{\varthree}{\tm_0} = 4$
is greater than the actual number of free reachable occurrences
of $\varthree$ in its unfolding (which is $3$).

Note also that $\meas{\tm_0} = 3$,
$\meas{\tm_1} = 2$,
$\meas{\tm_2} = 1$, $\meas{\tm_3} = 0$,
and $\meas{\tmtwo} = 1$.
These are upper bounds for the number of substitution steps required to
completely unfold these terms.
For instance, $\meas{\tm_0} = 3$, even though the first reduction sequence
reaches the unfolding in only two steps.
The important point is that the measure strictly decreases at each step
of any reduction sequence, as we prove at the end of this subsection. 

Given a value frame $\vset$,
the set of \textbf{substitution contexts in normal form under $\vset$}
is written $\CtxNF{\vset}$ and is defined inductively as follows:
\[
\indrule{\ruleCtxNFEmpty}{
  \emptyPremise
}{
  \ctxhole \in \CtxNF{\vset}
}
\indrule{\ruleCtxNFAddVal}{
  \sctx \in \CtxNF{\vset \cup \set{\var}}
  \sep
  \tm \in \VNF{\vset}{\nonapp}
  \sep
  \valPred{\tm}
}{
  \sctx\esub{\var}{\tm} \in \CtxNF{\vset}
}
\]
\[
\indrule{\ruleCtxNFAddNonVal}{
  \sctx \in \CtxNF{\vset}
  \sep
  \tm \in \VNF{\vset}{\nonapp}
  \sep
  \neg(\valPred{\tm})
}{
  \sctx\esub{\var}{\tm} \in \CtxNF{\vset}
}
\]

\begin{definition}[Expansion of value frames]
Let $\vset$ be a value frame.
We inductively define the {\em expansion of $\vset$ under $\sctx$},
written $\expansion{\vset}{\sctx}$, as follows:
\[
\begin{array}{rcl}
  \expansion{\vset}{\ctxhole}
& \eqdef
& \vset
\\
  \expansion{\vset}{\sctx'\esub{\var}{\tm}}
& \eqdef
& \left\{ 
    \begin{array}{ll}
      \expansion{\vset}{\sctx'} \cup \set{\var} & \text{if } \valPred{\tm}
    \\
      \expansion{\vset}{\sctx'}                 & \text{otherwise}
    \end{array}
\right.
\end{array}
\]
\end{definition}

\begin{lemma}
\label{lem:splitting_ctx_nf}
The following are equivalent:
\begin{enumerate}
\item
  $\tm\sctx \in \VNF{\vset}{\appflag}$
\item
  $\tm \in \VNF{\expansion{\vset}{\sctx}}{\appflag}$ 
  and $\sctx \in \CtxNF{\vset}$.
\end{enumerate}
\end{lemma}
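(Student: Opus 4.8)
The plan is to prove the equivalence (1) $\iff$ (2) by induction on the structure of the substitution context $\sctx$, at each step inverting the two normal-form rules for explicit substitutions, \ruleVNFEsVal and \ruleVNFEsNonVal, and the two ``add'' rules for context normal forms, \ruleCtxNFAddVal and \ruleCtxNFAddNonVal. Since for a term of the form $(\tm\sctxtwo)\esub\var\tmtwo$ exactly one of \ruleVNFEsVal, \ruleVNFEsNonVal applies depending on whether $\valPred\tmtwo$ holds (and likewise for $\CtxNF\vset$), each inversion step is a genuine biconditional, so the argument can be organised as a chain of equivalences rather than two separate implications.

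For the base case $\sctx = \ctxhole$ everything collapses: $\tm\sctx = \tm$, $\expansion\vset\ctxhole = \vset$ by definition of expansion, and $\ctxhole \in \CtxNF\vset$ always holds by \ruleCtxNFEmpty, so both (1) and (2) reduce to $\tm \in \VNF\vset\appflag$. For the inductive step $\sctx = \sctxtwo\esub\var\tmtwo$, so $\tm\sctx = (\tm\sctxtwo)\esub\var\tmtwo$, I split on $\valPred\tmtwo$. If $\valPred\tmtwo$ holds, inversion of \ruleVNFEsVal makes (1) equivalent to the conjunction of $\tm\sctxtwo \in \VNF{\vset\cup\set\var}\appflag$ and $\tmtwo \in \VNF\vset\nonapp$; applying the induction hypothesis to $\sctxtwo$ with value frame $\vset\cup\set\var$ rewrites the first conjunct as $\tm \in \VNF{\expansion{\vset\cup\set\var}{\sctxtwo}}\appflag$ together with $\sctxtwo \in \CtxNF{\vset\cup\set\var}$. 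Since $\valPred\tmtwo$, the definition of expansion gives $\expansion\vset{\sctxtwo\esub\var\tmtwo} = \expansion\vset{\sctxtwo}\cup\set\var$, and the auxiliary fact below gives $\expansion{\vset\cup\set\var}{\sctxtwo} = \expansion\vset{\sctxtwo}\cup\set\var$, so the first component is exactly $\tm \in \VNF{\expansion\vset\sctx}\appflag$; meanwhile the premises of \ruleCtxNFAddVal are precisely $\sctxtwo \in \CtxNF{\vset\cup\set\var}$, $\tmtwo \in \VNF\vset\nonapp$, and $\valPred\tmtwo$, so the remaining conjuncts amount to $\sctx \in \CtxNF\vset$. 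Collecting these yields (2). The case $\neg\valPred\tmtwo$ is entirely analogous, using \ruleVNFEsNonVal and \ruleCtxNFAddNonVal; here the value frame is not enlarged ($\expansion\vset{\sctxtwo\esub\var\tmtwo} = \expansion\vset{\sctxtwo}$), so the induction hypothesis is applied to $\sctxtwo$ with frame $\vset$ directly.

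The only ingredient the argument rests on is that expansion commutes with frame enlargement: $\expansion{\vset \cup W}{\sctx} = \expansion\vset{\sctx} \cup W$ for every context $\sctx$, every value frame $\vset$, and every finite set of variables $W$. This follows by a routine induction on $\sctx$, the point being that whether a term has the form $\val\sctx'$ — the condition branching the definition of $\expansion\vset{\sctx'\esub\var\tm}$ — is purely syntactic and hence independent of the value frame, so the two branches behave uniformly under enlargement. I expect this bookkeeping to be the only delicate spot of the whole proof: the frame parameter must be kept synchronised between the recursion peeling outermost explicit substitutions off a $\VNF\vset\appflag$ derivation (which replaces $\vset$ by $\vset\cup\set\var$) and the recursion in the definition of $\expansion\vset\sctx$ (which leaves $\vset$ in place and records $\var$ separately), and the commutation fact is exactly what reconciles the two. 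No freshness condition on $\var$ is needed, since the statement and every rule involved are relative to the ambient frame.
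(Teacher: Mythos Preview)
Your proposal is correct and follows the same approach as the paper, which simply records ``simultaneously by induction on $\sctx$'' without further detail. Your write-up is in fact more careful: you correctly identify that the induction hypothesis must be invoked at the enlarged frame $\vset\cup\set\var$ in the $\valPred\tmtwo$ case, and that matching $\expansion{\vset\cup\set\var}{\sctxtwo}$ against $\expansion\vset{\sctxtwo\esub\var\tmtwo} = \expansion\vset{\sctxtwo}\cup\set\var$ requires the commutation identity $\expansion{\vset\cup W}{\sctx} = \expansion\vset\sctx \cup W$, which the paper leaves implicit.
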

\hiddenproof{
  Simultaneously by induction on $\sctx$.
}{./proofs/splitting_ctx_nf}

\begin{definition}
Let $\varphi : \mathsf{Var} \to \mathbb{N}$.
Given a substitution context $\sctx$, the
\textbf{potential number of occurrences of $\var$ in $\sctx$ under $\varphi$},
written $\measvarPhi{\var}{\sctx}{\varphi}$, is recursively defined as follows:
\[
  \measvarPhi{\var}{\ctxhole}{\varphi}
    \eqdef 
    \varphi(\var) 
  \sep \sep \sep 
  \measvarPhi{\var}{\sctx'\esub{\vartwo}{\tm}}{\varphi}
    \eqdef 
    \measvarPhi{\var}{\sctx'}{\varphi}
    + \measvar{\var}{\tm}\cdot(1 + \measvarPhi{\vartwo}{\sctx'}{\varphi})
\]

We define the \textbf{measure of $\sctx$ under $\varphi$},
written $\measPhi{\sctx}{\varphi}$, as follows:
\[
    \measPhi{\ctxhole}{\varphi}
 \eqdef 
 0
\sep\sep\sep
    \measPhi{\sctx'\esub{\var}{\tm}}{\varphi}
   \eqdef 
     \measPhi{\sctx'}{\varphi}
    + \measvarPhi{\var}{\sctx'}{\varphi}
    + \meas{\tm}\cdot(1 + \measvarPhi{\var}{\sctx'}{\varphi})
\]
\end{definition}

For any term $\tm$, we define $\varphi_\tm(\var) \eqdef \measvar\var\tm$.
\begin{lemma}[Splitting of measures]
\label{lem:splitting_measVar_meas_tL}
Let $\tm$ be a term and $\sctx$ be a substitution context.
Then:
\begin{enumerate}
\item
  $\measvar\var{\tm\sctx} = \measvarPhi\var\sctx{\varphi_t}$
\item
  $\meas{\tm\sctx} = \meas\tm +  \measPhi\sctx{\varphi_t}$
\end{enumerate}
\end{lemma}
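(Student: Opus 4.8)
The plan is to prove the two identities together by structural induction on the substitution context $\sctx$. The point is that the context measures $\measvarPhi\var\sctx{\varphi_\tm}$ and $\measPhi\sctx{\varphi_\tm}$ are exactly the ``contextualised'' reformulations of $\measvar\var{\_}$ and $\meas{\_}$, with the number of occurrences of $\var$ that ``reach the hole'' initialised to $\varphi_\tm(\var) = \measvar\var\tm$; so the proof amounts to peeling off one explicit substitution of $\sctx$ at a time and matching the recursive clauses on the two sides. Part~(1) is self-contained, while part~(2) consumes part~(1), so it is cleanest to establish part~(1) first and then part~(2), the latter invoking the former on strictly smaller contexts.

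First, in the base case $\sctx = \ctxhole$ we have $\tm\sctx = \tm$, and the two claims reduce to $\measvar\var\tm = \varphi_\tm(\var) = \measvarPhi\var\ctxhole{\varphi_\tm}$ and $\meas\tm = \meas\tm + 0 = \meas\tm + \measPhi\ctxhole{\varphi_\tm}$, both immediate from the definitions of $\varphi_\tm$, $\measvarPhi{\_}{\ctxhole}{\_}$ and $\measPhi\ctxhole{\_}$.

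For the inductive step write $\sctx = \sctx'\esub\vartwo\tmtwo$, so $\tm\sctx = (\tm\sctx')\esub\vartwo\tmtwo$. For part~(1), unfold $\measvar\var{(\tm\sctx')\esub\vartwo\tmtwo} = \measvar\var{\tm\sctx'} + \measvar\var\tmtwo\cdot(1 + \measvar\vartwo{\tm\sctx'})$, then apply the \ih{} for part~(1) to $\sctx'$, once with the variable $\var$ and once with the variable $\vartwo$, rewriting $\measvar\var{\tm\sctx'}$ as $\measvarPhi\var{\sctx'}{\varphi_\tm}$ and $\measvar\vartwo{\tm\sctx'}$ as $\measvarPhi\vartwo{\sctx'}{\varphi_\tm}$; the resulting expression is literally the definition of $\measvarPhi\var{\sctx'\esub\vartwo\tmtwo}{\varphi_\tm}$. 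For part~(2), unfold $\meas{(\tm\sctx')\esub\vartwo\tmtwo} = \meas{\tm\sctx'} + \measvar\vartwo{\tm\sctx'} + \meas\tmtwo\cdot(1 + \measvar\vartwo{\tm\sctx'})$, use the \ih{} for part~(2) on $\sctx'$ to replace $\meas{\tm\sctx'}$ by $\meas\tm + \measPhi{\sctx'}{\varphi_\tm}$, and use part~(1) (already proved, for $\sctx'$) to replace $\measvar\vartwo{\tm\sctx'}$ by $\measvarPhi\vartwo{\sctx'}{\varphi_\tm}$; collecting the terms yields $\meas\tm + \measPhi{\sctx'\esub\vartwo\tmtwo}{\varphi_\tm}$ by the definition of $\measPhi{\_}{\_}$.

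The only step that is not purely mechanical — and hence the main obstacle — is the bookkeeping with variable occurrences: the recursive clauses defining $\measvar{\_}{\_}$ are stated under the implicit case split on whether the variable occurs free, so before unfolding $\measvar\var{(\tm\sctx')\esub\vartwo\tmtwo}$ on the explicit substitution $\esub\vartwo\tmtwo$ one must invoke the $\alpha$-renaming convention for the bound variables introduced by $\sctx$ (in particular $\vartwo \notin \fv\tm$ and $\vartwo \notin \domSctx{\sctx'}$), so that peeling off $\esub\vartwo\tmtwo$ behaves as the recursive equations predict and so that $\varphi_\tm(\vartwo)$ faithfully records the occurrences of $\vartwo$ reaching the hole. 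Once this freshness is fixed, both inductions are straightforward computations.
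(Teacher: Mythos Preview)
Your proposal is correct and matches the paper's proof essentially verbatim: simultaneous induction on $\sctx$, with the base case immediate and the inductive step unfolding the outermost explicit substitution, applying the \ih{} for part~(1) twice (once for $\var$, once for the bound variable $\vartwo$) and the \ih{} for part~(2) once. Your remark about $\alpha$-renaming is a fair point of hygiene that the paper leaves implicit under its standing convention.
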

\hiddenproof{
  Simultaneously by induction on $\sctx$.
}{
  % Label: lem:splitting_measVar_meas_tL

\begin{proof}
We prove each item simultaneously, by induction on $\sctx$.
\begin{enumerate}
\item \quad
  \begin{enumerate}
  \item $\sctx = \ctxhole$.
    Then
    $
      \measvar{\var}{\tm}
    = \varphi_\tm(\var)
    = \measvarPhi{\var}{\ctxhole}{\varphi_\tm}
    $.
  \item $\sctx = \sctx'\esub{\vartwo}{\tmtwo}$.
    Then
    \[
    \begin{array}{rcll}
        \measvar{\var}{\tm\sctx'\esub{\vartwo}{\tmtwo}}
      & =
      & \measvar{\var}{\tm\sctx'} + \measvar{\var}{\tmtwo}\cdot(1 + \measvar{\vartwo}{\tm\sctx'})
    \\
      & =
      & \measvarPhi{\var}{\sctx'}{\varphi_\tm} + \measvar{\var}{\tmtwo}\cdot(1 + \measvar{\vartwo}{\tm\sctx'})
      & \text{(By \ih (1) on $\sctx'$)}
    \\
      & =
      & \measvarPhi{\var}{\sctx'}{\varphi_\tm} + \measvar{\var}{\tmtwo}\cdot(1 + \measvarPhi{\vartwo}{\sctx'}{\varphi_\tm})
      & \text{(By \ih (1) on $\sctx'$)}
    \\
      & =
      & \measvarPhi{\var}{\sctx'\esub{\vartwo}{\tmtwo}}{\varphi_\tm}
    \end{array}
    \] 
  \end{enumerate}
\item \quad
  \begin{enumerate}
  \item $\sctx = \ctxhole$.
    Then
    $
      \meas{\tm}
    = \meas{\tm} + 0
    = \meas{\tm} +  \measPhi{\ctxhole}{\varphi_\tm}
    $.
  \item $\sctx = \sctx'\esub{\var}{\tmtwo}$.
    Then
    \[
    \begin{array}{rcll}
        \meas{\tm\sctx'\esub{\var}{\tmtwo}}
      & =
      & \meas{\tm\sctx'} + \measvar{\var}{\tm\sctx'} + \meas{\tmtwo}\cdot(1 + \measvar{\var}{\tm\sctx'})
    \\
      & =
      & \meas{\tm} + \measPhi{\sctx'}{\varphi_\tm} 
        + \measvar{\var}{\tm\sctx'} + \meas{\tmtwo}\cdot(1 + \measvar{\var}{\tm\sctx'})
      & \text{(By \ih (2) on $\sctx'$)}
    \\
      & =
      & \meas{\tm} + \measPhi{\sctx'}{\varphi_\tm} 
        + \measvarPhi{\var}{\sctx'}{\varphi_{\tm}} + \meas{\tmtwo}\cdot(1 + \measvar{\var}{\tm\sctx'})
      & \text{(By \ih (1) on $\sctx'$)}
    \\
      & =
      & \meas{\tm} + \measPhi{\sctx'}{\varphi_\tm} 
        + \measvarPhi{\var}{\sctx'}{\varphi_{\tm}} + \meas{\tmtwo}\cdot(1 + \measvarPhi{\var}{\sctx'}{\varphi_{\tm}})
      & \text{(By \ih (1) on $\sctx'$)}
    \\
      & =
      & \meas{\tm} + \measPhi{\sctx'\esub{\vartwo}{\tmtwo}}{\varphi_\tm}
    \end{array}
    \]
  \end{enumerate}
\end{enumerate}
\end{proof}

}

\begin{lemma}
\label{lem:t_disj_L_measLtL_comp}
Let $\tm$ and $\tmtwo$ be terms and $\sctx$ be a substitution context.
If $\fv\tm \disj \domSctx\sctx$, then:
\begin{enumerate}
\item
  $\measvarPhi\var\sctx{\varphi_{\tm\esub\vartwo\tmtwo}} 
  \leq \measvar\var\tm + \measvarPhi\var\sctx{\varphi_\tmtwo}\cdot(1 + \measvar\vartwo\tm)$ 
  for any variable $\var$
\item
  $\measPhi\sctx{\varphi_{\tm\esub\var\tmtwo}} 
  \leq \measPhi\sctx{\varphi_\tmtwo}\cdot(1 + \measvar\var\tm)$
\end{enumerate}
\end{lemma}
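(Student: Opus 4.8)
The plan is to prove both items \emph{simultaneously} by induction on the substitution context $\sctx$. The only way the hypothesis $\fv\tm \disj \domSctx\sctx$ will be used is in the guise: every variable bound by some explicit substitution of $\sctx$ does not occur free in $\tm$, and hence has $\measvar{\cdot}{\tm} = 0$. In the base case $\sctx = \ctxhole$, item (2) is immediate since $\measPhi{\ctxhole}\varphi = 0$ for every $\varphi$, so both sides are $0$; for item (1), $\measvarPhi\var{\ctxhole}{\varphi_{\tm\esub\vartwo\tmtwo}} = \measvar\var{\tm\esub\vartwo\tmtwo}$, which by the defining clause of $\measvar{\cdot}{\cdot}$ on closures (and the convention that $\measvar{\cdot}{\cdot}$ vanishes outside the free variables, turning that clause into a $\leq$ in general) is bounded by $\measvar\var\tm + \measvar\var\tmtwo\cdot(1+\measvar\vartwo\tm) = \measvar\var\tm + \measvarPhi\var{\ctxhole}{\varphi_\tmtwo}\cdot(1+\measvar\vartwo\tm)$, as required.

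For the inductive step write $\sctx = \sctxtwo\esub\varthree\tmthree$; then $\domSctx\sctxtwo \subseteq \domSctx\sctx$, so the inductive hypothesis applies to $\sctxtwo$, and $\varthree \in \domSctx\sctx$ gives $\varthree \notin \fv\tm$, i.e. $\measvar\varthree\tm = 0$. For item (1), unfolding the recursive clause of $\measvarPhi\var{\sctxtwo\esub\varthree\tmthree}{\cdot}$ exhibits the left-hand side as $\measvarPhi\var\sctxtwo{\varphi_{\tm\esub\vartwo\tmtwo}} + \measvar\var\tmthree\cdot(1 + \measvarPhi\varthree\sctxtwo{\varphi_{\tm\esub\vartwo\tmtwo}})$. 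I would bound the first summand using the inductive hypothesis (1) at the variable $\var$, and the nested quantity $\measvarPhi\varthree\sctxtwo{\varphi_{\tm\esub\vartwo\tmtwo}}$ using the inductive hypothesis (1) at the variable $\varthree$ — here the spurious additive term $\measvar\varthree\tm$ that the hypothesis would normally produce vanishes precisely because $\measvar\varthree\tm = 0$, yielding $\measvarPhi\varthree\sctxtwo{\varphi_{\tm\esub\vartwo\tmtwo}} \leq \measvarPhi\varthree\sctxtwo{\varphi_\tmtwo}\cdot(1+\measvar\vartwo\tm)$. Expanding the target right-hand side $\measvar\var\tm + \measvarPhi\var{\sctxtwo\esub\varthree\tmthree}{\varphi_\tmtwo}\cdot(1+\measvar\vartwo\tm)$ with the same recursive clause, the required inequality reduces to the elementary natural-number fact that multiplication by $1+\measvar\vartwo\tm \ge 1$ is monotone and absorbs the residual term $\measvar\var\tmthree\cdot(1+\measvarPhi\varthree\sctxtwo{\varphi_\tmtwo})$ against $\measvar\var\tmthree\cdot(1+\measvarPhi\varthree\sctxtwo{\varphi_\tmtwo}\cdot(1+\measvar\vartwo\tm))$; the net slack is exactly a factor $\measvar\vartwo\tm \geq 0$. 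Item (2) is handled in the same spirit: unfold the recursive clause of $\measPhi{\sctxtwo\esub\varthree\tmthree}{\cdot}$, apply the inductive hypothesis (2) to the $\measPhi\sctxtwo{\cdot}$ summand and the inductive hypothesis (1) at $\varthree$ to the $\measvarPhi\varthree\sctxtwo{\cdot}$ summand (again using $\measvar\varthree\tm = 0$), and conclude by the same arithmetic, the only nontrivial point being $\meas\tmthree \le \meas\tmthree\cdot(1+\measvar\var\tm)$.

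The main obstacle will be the arithmetic bookkeeping of the inductive step, and in particular recognising \emph{where} the disjointness hypothesis is needed: to bound the nested quantity $\measvarPhi\varthree\sctxtwo{\cdot}$ one must invoke item (1) of the inductive hypothesis at the \emph{bound} variable $\varthree$ of the outermost explicit substitution of $\sctx$, and without $\measvar\varthree\tm = 0$ this produces an extra term $\measvar\var\tmthree\cdot\measvar\varthree\tm$ (resp. $\meas\tmthree\cdot\measvar\varthree\tm$) that cannot in general be absorbed. Once this is noticed, the remaining manipulations are routine monotonicity of $+$ and $\cdot$ on $\Nat$.
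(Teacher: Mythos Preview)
Your proposal is correct and follows essentially the same approach as the paper: simultaneous induction on $\sctx$, invoking the inductive hypothesis for item~(1) at both the outer variable and the variable $\varthree$ bound by the outermost explicit substitution, and using the disjointness hypothesis precisely to kill the additive term $\measvar\varthree\tm$ that this second invocation produces. Your identification of where and why the hypothesis $\fv\tm \disj \domSctx\sctx$ is needed matches the paper's use exactly, and the residual arithmetic (the slack of $\measvar\var\tmthree\cdot\measvar\vartwo\tm$ in item~(1) and of $\meas\tmthree\cdot\measvar\var\tm$ in item~(2)) is the same.
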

\hiddenproof{
  Simultaneously by induction on $\sctx$.
}{
  % Label: lem:t_disj_L_measLtL_comp

\begin{proof}
We prove both items simultaneously by induction on $\sctx$.
\begin{enumerate}
\item \quad
  \begin{itemize}
  \item $\sctx = \ctxhole$.
    Then
    \begin{align*}
        \measvarPhi\var\ctxhole{\varphi_{\tm\esub\vartwo\tmtwo}}
      = \varphi_{\tm\esub\vartwo\tmtwo}(\var)
      = \measvar\var{\tm\esub\vartwo\tmtwo}
      = \measvar\var\tm + \measvar\var\tmtwo\cdot(1 + \measvar\vartwo\tm)
    \\
      = \measvar\var\tm + \varphi_\tmtwo(\var)\cdot(1 + \measvar\vartwo\tm)
      = \measvar\var\tm + \measvarPhi\var\ctxhole{\varphi_\tmtwo}\cdot(1 + \measvar\vartwo\tm)
    \end{align*}
  \item $\sctx = \sctxtwo\esub{\var_1}{\tm_1}$.
    Then
    \[
      \begin{array}{cll}
        & \measvarPhi\var{\sctxtwo\esub{\var_1}{\tm_1}}{\varphi_{\tm\esub\vartwo\tmtwo}} 
      \\
        = 
        & \measvarPhi\var\sctxtwo{\varphi_{\tm\esub\vartwo\tmtwo}}
          + \measvar\var{\tm_1}\cdot(1 + \measvarPhi{\var_1}\sctxtwo{\varphi_{\tm\esub\vartwo\tmtwo}})
      \\
        \leq
        & \measvar\var\tm + \measvarPhi\var\sctxtwo{\varphi_\tmtwo}\cdot(1 + \measvar\vartwo\tm)
          + \measvar\var{\tm_1}\cdot(1 + \measvarPhi{\var_1}\sctxtwo{\varphi_{\tm\esub\vartwo\tmtwo}})
        & \text{(By \ih (1) on $\sctxtwo$)}
      \\
        \leq
        & \measvar\var\tm + \measvarPhi\var\sctxtwo{\varphi_\tmtwo}\cdot(1 + \measvar\vartwo\tm)
        \\
        & + \measvar\var{\tm_1}\cdot(1 + \measvar{\var_1}\tm + \measvarPhi{\var_1}\sctxtwo{\varphi_\tmtwo}\cdot(1 + \measvar\vartwo\tm))
        & \text{(By \ih (1) on $\sctxtwo$)}
      \\
        = 
        & \measvar\var\tm
          + \measvarPhi\var\sctxtwo{\varphi_\tmtwo}\cdot(1 + \measvar\vartwo\tm)
        \\
        & + \measvar\var{\tm_1}\cdot(1 + 0 + \measvarPhi{\var_1}\sctxtwo{\varphi_\tmtwo}\cdot(1 + \measvar\vartwo\tm))
        & \text{(By hypothesis)}
      \\
        = 
        & \measvar\var\tm 
          + \measvarPhi\var\sctxtwo{\varphi_\tmtwo}\cdot(1 + \measvar\vartwo\tm)
        \\
        & + \measvar\var{\tm_1}\cdot(1 + \measvarPhi{\var_1}\sctxtwo{\varphi_\tmtwo}\cdot(1 + \measvar\vartwo\tm))
      \\
        = 
        & \measvar\var\tm 
          + \measvarPhi\var\sctxtwo{\varphi_\tmtwo}\cdot(1 + \measvar\vartwo\tm)
          + \measvar\var{\tm_1}
        \\
        & + \measvar\var{\tm_1}\cdot\measvarPhi{\var_1}\sctxtwo{\varphi_\tmtwo}\cdot(1 + \measvar\vartwo\tm)
      \\
        \leq
        & \measvar\var\tm 
          + \measvarPhi\var\sctxtwo{\varphi_\tmtwo}\cdot(1 + \measvar\vartwo\tm)
          + \measvar\var{\tm_1}
        \\
        & + \measvar\var{\tm_1}\cdot\measvarPhi{\var_1}\sctxtwo{\varphi_\tmtwo}\cdot(1 + \measvar\vartwo\tm)
          + \measvar\var{\tm_1}\cdot\measvar\vartwo\tm
      \\
        =
        & \measvar\var\tm 
          + \measvarPhi\var\sctxtwo{\varphi_\tmtwo}\cdot(1 + \measvar\vartwo\tm)
        \\
        & + \measvar\var{\tm_1}\cdot(1 + \measvar\vartwo\tm + \measvarPhi{\var_1}\sctxtwo{\varphi_\tmtwo}\cdot(1 + \measvar\vartwo\tm))
      \\
        =
        & \measvar\var\tm 
          + \measvarPhi\var\sctxtwo{\varphi_\tmtwo}\cdot(1 + \measvar\vartwo\tm)
        \\
        & + \measvar\var{\tm_1}\cdot(1 + \measvar\vartwo\tm + \measvarPhi{\var_1}\sctxtwo{\varphi_\tmtwo} + \measvarPhi{\var_1}\sctxtwo{\varphi_\tmtwo}\cdot\measvar\vartwo\tm)
      \\
        =
        & \measvar\var\tm 
          + \measvarPhi\var\sctxtwo{\varphi_\tmtwo}\cdot(1 + \measvar\vartwo\tm)
          + \measvar\var{\tm_1}\cdot(1 + \measvarPhi{\var_1}\sctxtwo{\varphi_\tmtwo})\cdot(1 + \measvar\vartwo\tm)
      \\
        =
        & \measvar\var\tm 
          + (\measvarPhi\var\sctxtwo{\varphi_\tmtwo} + \measvar\var{\tm_1}\cdot(1 + \measvarPhi{\var_1}\sctxtwo{\varphi_\tmtwo}))\cdot(1 + \measvar\vartwo\tm)
      \\
        =
        & \measvar\var\tm + \measvarPhi\var{\sctxtwo\esub{\var_1}{\tm_1}}{\varphi_\tmtwo}\cdot(1 + \measvar\vartwo\tm)
      \end{array}
    \]
  \end{itemize}
\item \quad
  \begin{itemize}
  \item $\sctx = \ctxhole$.
    Then
    $
        \measPhi\ctxhole{\varphi_{\tm\esub\var\tmtwo}} 
      = 0
      = 0\cdot(1 + \measvar\var\tm)
      = \measPhi\ctxhole{\varphi_\tmtwo}\cdot(1 + \measvar\var\tm)
    $.
  \item $\sctx = \sctxtwo\esub{\var_1}{\tm_1}$.
    Then
    \[
    \begin{array}{cll}
      &
        \measPhi{\sctxtwo\esub{\var_1}{\tm_1}}{\varphi_{\tm\esub\var\tmtwo}} 
    \\
      = 
      & \measPhi\sctxtwo{\varphi_{\tm\esub\var\tmtwo}}
        + \measvarPhi{\var_1}\sctxtwo{\varphi_{\tm\esub\var\tmtwo}}
        + \meas{\tm_1}\cdot(1 + \measvarPhi{\var_1}\sctxtwo{\varphi_{\tm\esub\var\tmtwo}})
    \\
      \leq
      & \measPhi\sctxtwo{\varphi_\tmtwo}\cdot(1 + \measvar\var\tm)
        + \measvarPhi{\var_1}\sctxtwo{\varphi_{\tm\esub\var\tmtwo}}
        + \meas{\tm_1}\cdot(1 + \measvarPhi{\var_1}\sctxtwo{\varphi_{\tm\esub\var\tmtwo}})
      & \text{(By \ih (2) on $\sctxtwo$)}
    \\
      \leq
      & \measPhi\sctxtwo{\varphi_\tmtwo}\cdot(1 + \measvar\var\tm)
        + \measvar{\var_1}\tm 
        + \measvarPhi{\var_1}\sctxtwo{\varphi_\tmtwo}\cdot(1 + \measvar\var\tm)
      & \text{(By \ih (1) on $\sctxtwo$)}
      \\
      & + \meas{\tm_1}\cdot(1 + \measvarPhi{\var_1}\sctxtwo{\varphi_{\tm\esub\var\tmtwo}})
    \\
      \leq
      & \measPhi\sctxtwo{\varphi_\tmtwo}\cdot(1 + \measvar\var\tm)
        + \measvar{\var_1}\tm 
        + \measvarPhi{\var_1}\sctxtwo{\varphi_\tmtwo}\cdot(1 + \measvar\var\tm)
      & \text{(By \ih (1) on $\sctxtwo$)}
      \\
      & + \meas{\tm_1}\cdot(1 + \measvar{\var_1}\tm + \measvarPhi{\var_1}\sctxtwo{\varphi_\tmtwo}\cdot(1 + \measvar\var\tm))
    \\
      =
      & \measPhi\sctxtwo{\varphi_\tmtwo}\cdot(1 + \measvar\var\tm) + 0 
        + \measvarPhi{\var_1}\sctxtwo{\varphi_\tmtwo}\cdot(1 + \measvar\var\tm)
      & \text{(By hypothesis)}
      \\
      & + \meas{\tm_1}\cdot(1 + 0 + \measvarPhi{\var_1}\sctxtwo{\varphi_\tmtwo}\cdot(1 + \measvar\var\tm))
    \\
      =
      & \measPhi\sctxtwo{\varphi_\tmtwo}\cdot(1 + \measvar\var\tm)
        + \measvarPhi{\var_1}\sctxtwo{\varphi_\tmtwo}\cdot(1 + \measvar\var\tm)
      \\
      & + \meas{\tm_1}\cdot(1 + \measvarPhi{\var_1}\sctxtwo{\varphi_\tmtwo}\cdot(1 + \measvar\var\tm))
    \\
      =
      & \measPhi\sctxtwo{\varphi_\tmtwo}\cdot(1 + \measvar\var\tm)
        + \measvarPhi{\var_1}\sctxtwo{\varphi_\tmtwo}\cdot(1 + \measvar\var\tm) 
      \\
      & + \meas{\tm_1}\cdot(1 + \measvarPhi{\var_1}\sctxtwo{\varphi_\tmtwo})\cdot(1 + \measvar\var\tm)
    \\
      =
      & (\measPhi\sctxtwo{\varphi_\tmtwo} 
        + \measvarPhi{\var_1}\sctxtwo{\varphi_\tmtwo} 
        + \meas{\tm_1}\cdot(1 + \measvarPhi{\var_1}\sctxtwo{\varphi_\tmtwo})
        )\cdot(1 + \measvar\var\tm)
    \\
      =
      & \measPhi{\sctxtwo\esub{\var_1}{\tm_1}}{\varphi_\tmtwo}\cdot(1 + \measvar\var\tm)
    \end{array}
    \]
  \end{itemize}
\end{enumerate}
\end{proof}

}

\begin{lemma}
\label{lem:t_measvart_geq_measvartp}
Let $\tm$ be a term such that $\tm \tovalas \tm'$ for some term $\tm'$,
and let $\var$ be a variable such that if $\tm$ 
$(\rulesub\vartwo\val)$-reduces to $\tm'$, then $\var \notin \fv\val$.
Then, $\measvar\var\tm \geq \measvar\var{\tm'}$.
Moreover, if $\var = \vartwo$, then $\measvar\var\tm > \measvar\var{\tm'}$.
\end{lemma}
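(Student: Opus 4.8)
I would argue by induction on the derivation of $\tm \tovalas \tm'$. Since $\tovalas$ is $\tovv\rulelsv \cup \bigcup_{\varthree\in\dom\valas}\tovv{\rulesub\varthree{\valas(\varthree)}}$, the last rule applied is one of \ruleVSub, \ruleVLsv, \ruleVAppL, \ruleVAppR, \ruleVEsL, \ruleVEsR (never \ruleVDb, which would produce a $\ruledb$-step). The congruence cases \ruleVAppL and \ruleVAppR are immediate from the induction hypothesis and the additivity $\measvar{\var}{\tm_1\,\tm_2} = \measvar{\var}{\tm_1} + \measvar{\var}{\tm_2}$; the cases \ruleVEsL and \ruleVEsR are similar but must track the multiplicative factor $1+\measvar{\varthree}{\tm}$ appearing in $\measvar{\var}{\tm\esub\varthree\tmtwo} = \measvar{\var}{\tm} + \measvar{\var}{\tmtwo}\cdot(1+\measvar{\varthree}{\tm})$, using monotonicity of this expression in each summand. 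The subtlety, however, is that inside \ruleVLsv (and inside these congruences over a $\rulesub$-step) the premise is a $\rulesub\vartwo\val$-step in which the observed variable $\var$ may well lie in $\fv\val$, so the literal statement of the lemma is not a strong enough induction hypothesis. To fix this I would first prove a sharper statement about $\rulesub$-steps: \emph{if $\tm \tovv{\rulesub\vartwo\val}\tm'$ then $\vartwo\notin\fv\val$ (part of the side condition on the step kind) and there is an integer $k$ with $1\le k\le\measvar{\vartwo}{\tm}$ such that, for every variable $\var'$,}
\[
  \measvar{\var'}{\tm'} \;=\; \measvar{\var'}{\tm} - k\cdot[\var'{=}\vartwo] + k\cdot\measvar{\var'}{\val},
\]
\emph{where $[\var'{=}\vartwo]$ is $1$ if $\var'=\vartwo$ and $0$ otherwise.} Intuitively $k$ is the ``weight'' of the unique occurrence of $\vartwo$ replaced by $\val$, i.e. the product of the factors $(1+\measvar{\varthree_i}{-})$ of the ESs under which that occurrence sits; importantly $k$ does not depend on the observed variable $\var'$.

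The Claim is proved by induction on the derivation of $\tm \tovv{\rulesub\vartwo\val}\tm'$: in \ruleVSub take $k=1$; in \ruleVAppL, \ruleVAppR, \ruleVEsL propagate $k$ unchanged (using in \ruleVEsL the side condition $\varthree\notin\fv\val$ so that $\measvar{\varthree}{-}$ of the reduced subterm is unaffected, and $k\le\measvar{\vartwo}{\tm}$ to recover the bound after wrapping); in \ruleVEsR propagate $k$ multiplied by $1+\measvar{\varthree}{\tm}$, which is where the weight genuinely grows. The literal lemma for $\rulesub$-steps is then the instance $\var'=\var$: when $\var\notin\fv\val$ — exactly the standing hypothesis for $\rulesub$-steps — one gets $\measvar{\var}{\tm'} = \measvar{\var}{\tm} - k\cdot[\var{=}\vartwo]\le\measvar{\var}{\tm}$, strict when $\var=\vartwo$ since $k\ge1$. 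For the \ruleVLsv case, write the step as $\tm = \tmthree\esub\vartwo{\val\sctx}\tovv\rulelsv\tmthree'\esub\vartwo\val\sctx = \tm'$ coming from $\tmthree\tovv{\rulesub\vartwo\val}\tmthree'$ (so $\tm' = (\tmthree'\esub\vartwo\val)\sctx$, with $\sctx$ pushed outside the ES), and, by $\alpha$-conversion, assume $\vartwo\notin\fv{\val\sctx}$ and $\domSctx\sctx\disj(\fv\tmthree\cup\fv\val)$. Using \cref{lem:splitting_measVar_meas_tL} I would rewrite $\measvar{\var}{\tm'} = \measvarPhi{\var}{\sctx}{\varphi_{\tmthree'\esub\vartwo\val}}$ and $\measvar{\var}{\val\sctx} = \measvarPhi{\var}{\sctx}{\varphi_{\val}}$; then plugging the Claim into the definition of $\varphi_{\tmthree'\esub\vartwo\val}$, together with $\measvar{\vartwo}{\tmthree'} = \measvar{\vartwo}{\tmthree} - k$ (the Claim at $\var'=\vartwo$, recalling $\measvar{\vartwo}{\val}=0$), yields the pointwise inequality $\varphi_{\tmthree'\esub\vartwo\val}(\var') = \measvar{\var'}{\tmthree} - k\cdot[\var'{=}\vartwo] + \measvar{\var'}{\val}\cdot(1+\measvar{\vartwo}{\tmthree}) \le \varphi_{\tmthree\esub\vartwo\val}(\var')$. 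Since $\measvarPhi{\var}{\sctx}{-}$ is monotone in its function argument, $\measvar{\var}{\tm'}\le\measvarPhi{\var}{\sctx}{\varphi_{\tmthree\esub\vartwo\val}} = \measvar{\var}{(\tmthree\esub\vartwo\val)\sctx}$, and \cref{lem:t_disj_L_measLtL_comp}(1) (its disjointness hypothesis secured by the $\alpha$-conversion above) plus \cref{lem:splitting_measVar_meas_tL} give $\measvar{\var}{(\tmthree\esub\vartwo\val)\sctx}\le\measvar{\var}{\tmthree} + \measvar{\var}{\val\sctx}\cdot(1+\measvar{\vartwo}{\tmthree}) = \measvar{\var}{\tmthree\esub\vartwo{\val\sctx}} = \measvar{\var}{\tm}$. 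As this is a $\rulelsv$-step, only the non-strict inequality is required and there is no ``$\vartwo$'' to match $\var$.

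\textbf{Main obstacle.} The delicate case is \ruleVLsv, where the substitution context $\sctx$ migrates from inside the ES $\esub\vartwo{\val\sctx}$ to the top level; tracking how potential-occurrence counts transform under this migration is exactly what the context-splitting lemmas \cref{lem:splitting_measVar_meas_tL} and \cref{lem:t_disj_L_measLtL_comp} are for. Moreover, the arithmetic only closes because the effect of a $\rulesub$-step on $\measvar{\var'}{-}$ is pinned down \emph{exactly} (via the weight $k$, and via the cancellation $\measvar{\vartwo}{\tmthree'} = \measvar{\vartwo}{\tmthree}-k$), whereas the naive bound $\measvar{\var'}{\tm'}\le\measvar{\var'}{\tm}+\measvar{\var'}{\val}\cdot\measvar{\vartwo}{\tm}$ loses a factor and is too weak. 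The remaining points of care are the $\alpha$-conversion bookkeeping needed to apply \cref{lem:t_disj_L_measLtL_comp}, and checking that $k$ is genuinely independent of the observed variable so that it may be instantiated at both $\var$ and $\vartwo$.
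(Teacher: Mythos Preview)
Your analysis is correct, and in fact sharper than the paper's own argument. The paper also proceeds by induction on the derivation of $\tm \tovalas \tm'$ and, in the \ruleVLsv case, uses the same two auxiliary lemmas (\cref{lem:splitting_measVar_meas_tL} and \cref{lem:t_disj_L_measLtL_comp}) you appeal to. However, the paper's chain of inequalities begins by invoking the induction hypothesis on the premise $\tmtwo \tovv{\rulesub\varthree\val} \tmtwo'$ at the observed variable $\var$ to conclude $\measvar\var\tmtwo \geq \measvar\var{\tmtwo'}$, citing only ``$\var \neq \varthree$ by $\alpha$-conversion''. This is precisely the step you flagged as illegitimate: nothing guarantees $\var \notin \fv\val$, and indeed for $\tmtwo = \varthree$, $\val = \var$ one has $\measvar\var\tmtwo = 0 < 1 = \measvar\var{\tmtwo'}$, so that intermediate line is false even though the overall inequality holds.

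Your ``weight $k$'' claim is exactly the missing ingredient. It pins down the effect of a $\rulesub\vartwo\val$-step on $\measvar{\var'}{-}$ \emph{uniformly} in $\var'$, so that the increase in $\measvar\var{-}$ (by $k\cdot\measvar\var\val$) and the decrease in $\measvar\vartwo{-}$ (by $k$) can be traded off against each other inside the expression $\measvar\var{\tmthree'} + \measvar\var\val\cdot(1+\measvar\vartwo{\tmthree'})$; the $k\cdot\measvar\var\val$ terms cancel and you recover $\varphi_{\tmthree'\esub\vartwo\val} \le \varphi_{\tmthree\esub\vartwo\val}$ pointwise. The remaining steps---monotonicity of $\measvarPhi\var\sctx{-}$, then \cref{lem:t_disj_L_measLtL_comp}(1) and \cref{lem:splitting_measVar_meas_tL}---are the same as in the paper. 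Two small remarks: the monotonicity of $\measvarPhi\var\sctx{-}$ in its function argument is not stated in the paper, but is an immediate induction on $\sctx$; and your $\alpha$-conversion assumption $\domSctx\sctx \disj \fv\tmthree$ is indeed available, since after the \ruleVLsv step $\sctx$ wraps $\tmthree'\esub\vartwo\val$ and capture must be avoided.
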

\hiddenproof{
  By induction on the derivation of the judgement $\tm \tovalas \tm'$.
}{
  % Label: lem:t_measvart_geq_measvartp

\begin{proof}
By induction on the derivation of the judgement $\tm \tovalas \tm'$.
\begin{enumerate}
\item \ruleVSub.
  Then, $\tm = \vartwo \tovv{\rulesub\vartwo\val} \val = \tm'$,
  with $\val = \valas(\var)$.
  We analyse two cases, depending on whether $\var = \vartwo$ or not:
  \begin{enumerate}
  \item $\var = \vartwo$.
    By definition, $\vartwo \notin \fv\val$.
    Hence, $\measvar\vartwo\vartwo = 1 > 0 = \measvar\vartwo\val$.
  \item $\var \neq \vartwo$.
    By hypothesis $\var \notin \fv\val$.
    Hence, $\measvar\var\vartwo = 0 = \measvar\var\val$.
\end{enumerate}
\item \ruleVLsv.
  Then 
  \[
    \inferrule{
      \tmtwo \tovv{\rulesub\varthree\val} \tmtwo'
    }{
      \tm = \tmtwo\esub\varthree{\val\sctx} \tovv\ruleVLsv 
      \tmtwo'\esub\varthree\val\sctx = \tm'
    }\ruleVLsv
  \]
  Hence
  \[
    \begin{array}{rcll}
        \measvar\var{\tmtwo\esub\varthree{\val\sctx}}
      & =
      & \measvar\var\tmtwo + \measvar\var{\val\sctx}\cdot(1 + \measvar\varthree\tmtwo)
    \\
      & \geq
      &   \measvar\var{\tmtwo'} 
        + \measvar\var{\val\sctx}\cdot(1 + \measvar\varthree\tmtwo)
      & \text{(By \ih on $\tmtwo$, as $\var \neq \varthree$ by $\alpha$-conversion})
    \\
      & >
      &   \measvar\var{\tmtwo'} 
        + \measvar\var{\val\sctx}\cdot(1 + \measvar\varthree{\tmtwo'})
      & \text{(By \ih on $\tmtwo$, since $\varthree \notin \fv\val$)}
    \\
      & =
      &   \measvar\var{\tmtwo'} 
        + \measvarPhi\var{\sctx}{\varphi_\val}\cdot(1 + \measvar\varthree{\tmtwo'})
      & \text{(By \cref{lem:splitting_measVar_meas_tL} (1))}
    \\
      & \geq
      & \measvarPhi\var{\sctx}{\varphi_{\tmtwo'\esub\varthree\val}}
      & \text{(By \cref{lem:t_disj_L_measLtL_comp} (1))}
    \\
      & =
      & \measvar\var{\tmtwo'\esub\varthree\val\sctx}
      & \text{(By \cref{lem:splitting_measVar_meas_tL} (1))}
    \end{array}
  \]
\item \ruleVAppL.
  Then 
  \[
    \inferrule{
      \tmtwo \tovv\rulename \tmtwo'
    }{
      \tm = \tmtwo \, \tmthree \tovv\rulename
      \tmtwo' \, \tmthree = \tm'
    }\ruleVAppL
  \]
  Hence 
  $\measvar\var{\tmtwo \, \tmthree} 
  = \measvar\var\tmtwo + \measvar\var\tmthree
  \geq \measvar\var{\tmtwo'} + \measvar\var\tmthree
  = \measvar\var{\tmtwo' \, \tmthree}$ by the \ih on $\tmtwo$.
  Note that if $\rulename = \rulesub\var{\valas(\var)}$, 
  it should be the $>$ symbol instead of the $\geq$ one.
\item \ruleVAppR.
  Analogous to the previous case.
\item \ruleVEsL.
  Then 
  \[
    \inferrule{
      \tmtwo \tovv\rulename \tmtwo'
      \sep
      \vartwo \notin \fv\rulename
    }{
      \tm = \tmtwo\esub\vartwo\tmthree \tovv\rulename
      \tmtwo'\esub\vartwo\tmthree = \tm'
    }\ruleVEsL
  \]
  Hence
  \[
    \begin{array}{rcll}
        \measvar\var{\tmtwo\esub\vartwo\tmthree}
      & =
      & \measvar\var\tmtwo + \measvar\var\tmthree\cdot(1 + \measvar\vartwo\tmtwo)
    \\
      & \geq
      & \measvar\var{\tmtwo'} + \measvar\var\tmthree\cdot(1 + \measvar\vartwo\tmtwo)
      & \text{(By \ih on $\tmtwo$)}
    \\
      & \geq
      & \measvar\var{\tmtwo'} + \measvar\var\tmthree\cdot(1 + \measvar\vartwo{\tmtwo'})
      & \text{(By \ih on $\tmtwo$)}
    \\
      & =
      & \measvar\var{\tmtwo'\esub\vartwo\tmthree}
    \end{array}
  \]
  Note that if $\rulename = \rulesub\var{\valas(\var)}$, 
  it should be the $>$ symbol instead of the $\geq$ one.
\item \ruleVEsR.
  Then 
  \[
    \inferrule{
      \tmthree \tovv\rulename \tmthree'
    }{
        \tm = \tmtwo\esub\vartwo\tmthree 
        \tovv\rulename
        \tmtwo\esub\vartwo{\tmthree'} = \tm'
    }\ruleVEsR
  \]
  Hence
  \[
    \begin{array}{rcll}
        \measvar\var{\tmtwo\esub\vartwo\tmthree}
      & =
      & \measvar\var\tmtwo + \measvar\var\tmthree\cdot(1 + \measvar\vartwo\tmtwo)
    \\
      & \geq
      & \measvar\var\tmtwo + \measvar\var{\tmthree'}\cdot(1 + \measvar\vartwo\tmtwo)
      & \text{(By \ih on $\tmthree$)}
    \\
      & =
      & \measvar\var{\tmtwo\esub\vartwo{\tmthree'}}
    \end{array}
  \]
  Note that if $\rulename = \rulesub\var{\valas(\var)}$, 
  it should be the $>$ symbol instead of the $\geq$ one.
\end{enumerate}
\end{proof}

}

Now we can state a lemma that involves the general notion of decreasing measure:
\begin{restatable}{lemma}{decreasingmeasure}
\label{lem:decreasing-measure}
Let $\tm$ be a term. Then:
\begin{itemize}
\item
  If $\tm \tovv\rulename \tm'$, then $\meas\tm \geq \meas{\tm'}$ when 
  $\rulename = \rulesub\var\val$ or $\meas\tm > \meas{\tm'}$ when 
  $\rulename = \rulelsv$.
\item
  If $\tm \tovalas \tm'$, then $\measvalas\tm > \measvalas{\tm'}$.
\end{itemize}
\end{restatable}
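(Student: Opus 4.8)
The plan is to establish both items by induction, relying on the measure-manipulation lemmas already proved: \cref{lem:splitting_measVar_meas_tL} (splitting $\measvar\var{\tm\sctx}$ and $\meas{\tm\sctx}$ across a substitution context), \cref{lem:t_disj_L_measLtL_comp} (comparing $\measPhi\sctx{\varphi_{\tm\esub\var\tmtwo}}$ with $\measPhi\sctx{\varphi_\tmtwo}\cdot(1+\measvar\var\tm)$), and \cref{lem:t_measvart_geq_measvartp} (monotonicity of the per-variable counts $\measvar\var{\_}$ under $\tovalas$, with strict decrease when the measured variable is the substituted one). For the first item I would argue by induction on the derivation of $\tm \tovv\rulename \tm'$; since $\rulename \in \set{\rulesub\var\val,\rulelsv}$ and the congruence rules \ruleVAppL, \ruleVAppR, \ruleVEsL, \ruleVEsR preserve the step kind while only \ruleVSub and \ruleVLsv introduce $\rulesub$- and $\rulelsv$-steps respectively, the \ruleVDb case never occurs.

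The base case \ruleVSub is immediate, since $\tm = \var$, $\tm' = \val$, and $\meas\var = 0 = \meas\val$ (the measure of any value is $0$). The cases \ruleVAppL, \ruleVAppR follow from the \ih using $\meas{\tmtwo\,\tmthree} = \meas\tmtwo + \meas\tmthree$. For \ruleVEsL and \ruleVEsR, where the step occurs inside a closure $\tmtwo\esub\vartwo\tmthree$, I would expand $\meas{\tmtwo\esub\vartwo\tmthree} = \meas\tmtwo + \measvar\vartwo\tmtwo + \meas\tmthree\cdot(1+\measvar\vartwo\tmtwo)$ and combine the \ih on the reduced subterm with \cref{lem:t_measvart_geq_measvartp} applied to $\measvar\vartwo{\_}$; the side condition $\vartwo\notin\fv\rulename$ of \ruleVEsL supplies the hypothesis of that lemma, and the factor $1+\measvar\vartwo\tmtwo \geq 1$ being strictly positive preserves inequalities and their strictness.

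The main obstacle is the \ruleVLsv case: $\tm = \tmtwo\esub\var{\val\sctx} \tovv\rulelsv \tmtwo'\esub\var\val\sctx = \tm'$ from an inner step $\tmtwo \tovv{\rulesub\var\val} \tmtwo'$. Here I would first use $\meas\val = 0$ together with \cref{lem:splitting_measVar_meas_tL} to rewrite $\meas{\val\sctx} = \measPhi\sctx{\varphi_\val}$ and $\meas{\tm'} = \meas{\tmtwo'} + \measvar\var{\tmtwo'} + \measPhi\sctx{\varphi_{\tmtwo'\esub\var\val}}$, then bound $\measPhi\sctx{\varphi_{\tmtwo'\esub\var\val}} \leq \measPhi\sctx{\varphi_\val}\cdot(1+\measvar\var{\tmtwo'})$ via \cref{lem:t_disj_L_measLtL_comp} (legitimate after assuming $\fv{\tmtwo'}\disj\domSctx\sctx$ by $\alpha$-conversion). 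Finally, the \ih on the inner step gives $\meas\tmtwo \geq \meas{\tmtwo'}$, while \cref{lem:t_measvart_geq_measvartp} applied to that inner step — a $\tovalas$-step for the singleton assignment $\set{\var\mapsto\val}$, legitimate since $\var\notin\fv\val$ — gives the \emph{strict} $\measvar\var\tmtwo > \measvar\var{\tmtwo'}$ because the measured variable is exactly the substituted one. Chaining, $\meas\tm = \meas\tmtwo + \measvar\var\tmtwo + \measPhi\sctx{\varphi_\val}\cdot(1+\measvar\var\tmtwo) > \meas{\tmtwo'} + \measvar\var{\tmtwo'} + \measPhi\sctx{\varphi_\val}\cdot(1+\measvar\var{\tmtwo'}) \geq \meas{\tm'}$.

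For the second item, note that $\tm \tovalas \tm'$ is either a $\tovv\rulelsv$-step or a $\tovv{\rulesub\vartwo{\valas(\vartwo)}}$-step with $\vartwo\in\dom\valas$, and recall $\measvalas\tm = \meas\tm + \sum_{\var\in\dom\valas}\measvar\var\tm$. In the first case, the first item gives $\meas\tm > \meas{\tm'}$ and \cref{lem:t_measvart_geq_measvartp} gives $\measvar\var\tm \geq \measvar\var{\tm'}$ for every $\var\in\dom\valas$ (its $\rulesub$-side condition is vacuous here), so summing yields $\measvalas\tm > \measvalas{\tm'}$. In the second case, the first item gives $\meas\tm \geq \meas{\tm'}$; for $\var = \vartwo$, \cref{lem:t_measvart_geq_measvartp} gives the strict $\measvar\vartwo\tm > \measvar\vartwo{\tm'}$, where $\vartwo\notin\fv{\valas(\vartwo)}$ follows from idempotence of $\valas$ (i.e.\ $\dom\valas\disj\fv{\im\valas}$); for $\var\in\dom\valas$ with $\var\neq\vartwo$ the same lemma gives $\measvar\var\tm \geq \measvar\var{\tm'}$, again using $\var\notin\fv{\valas(\vartwo)}$. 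Since $\vartwo\in\dom\valas$, the strictly decreasing summand is present, whence $\measvalas\tm > \measvalas{\tm'}$.
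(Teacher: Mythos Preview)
Your proposal is correct and follows essentially the same approach as the paper. The paper factors the first item into two separate lemmas (\cref{lem:tsub_meast_geq_meastp} for the $\rulesub$ case and \cref{lem:tlsv_meast_greater_meastp} for the $\rulelsv$ case, the latter invoking the former), whereas you do a single simultaneous induction; the key computations in the \ruleVLsv case, the use of \cref{lem:splitting_measVar_meas_tL}, \cref{lem:t_disj_L_measLtL_comp}, and \cref{lem:t_measvart_geq_measvartp}, and the case split for the second item all match the paper's argument.
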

\begin{proof}
The proof uses 
\cref{lem:tsub_meast_geq_meastp,lem:tlsv_meast_greater_meastp,lem:t_measvalas_greater_meastp}.
\end{proof}

\begin{lemma}
\label{lem:tsub_meast_geq_meastp}
Let $\tm \tovv{\rulesub\var\val} \tm'$.
Then, $\meas\tm \geq \meas{\tm'}$.
\end{lemma}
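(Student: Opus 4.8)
The plan is to proceed by induction on the derivation of $\tm \tovv{\rulesub\var\val} \tm'$. Only five rules can conclude a step of kind $\rulesub\var\val$: the axiom \ruleVSub and the four congruence rules \ruleVAppL, \ruleVAppR, \ruleVEsL, and \ruleVEsR (note that \ruleVLsv concludes an $\lsvsym$-step, so it does not apply). Throughout I will use the fact, recalled in the text, that $\meas\val = 0$ for every value $\val$, together with the side condition $\var \notin \fv\val$ attached to the step kind $\rulesub\var\val$.

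For the base case \ruleVSub we have $\tm = \var \tovv{\rulesub\var\val} \val = \tm'$, and then $\meas\tm = \meas\var = 0 = \meas\val = \meas{\tm'}$, so in fact equality holds. The two application cases \ruleVAppL and \ruleVAppR are immediate: if, say, $\tm = \tmtwo\,\tmthree \tovv{\rulesub\var\val} \tmtwo'\,\tmthree = \tm'$ comes from $\tmtwo \tovv{\rulesub\var\val} \tmtwo'$, then the induction hypothesis gives $\meas\tmtwo \geq \meas{\tmtwo'}$, hence $\meas{\tmtwo\,\tmthree} = \meas\tmtwo + \meas\tmthree \geq \meas{\tmtwo'} + \meas\tmthree = \meas{\tmtwo'\,\tmthree}$. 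Similarly, for \ruleVEsR, where $\tm = \tmtwo\esub\vartwo\tmthree \tovv{\rulesub\var\val} \tmtwo\esub\vartwo{\tmthree'}$ from $\tmthree \tovv{\rulesub\var\val} \tmthree'$, the induction hypothesis gives $\meas\tmthree \geq \meas{\tmthree'}$, and since the coefficient $1 + \measvar\vartwo\tmtwo$ is unchanged and non-negative, the definition of $\meas{\tmtwo\esub\vartwo\tmthree}$ yields $\meas\tm \geq \meas{\tm'}$ at once.

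The only case requiring some care is \ruleVEsL, where $\tm = \tmtwo\esub\vartwo\tmthree \tovv{\rulesub\var\val} \tmtwo'\esub\vartwo\tmthree = \tm'$ with $\tmtwo \tovv{\rulesub\var\val} \tmtwo'$ and $\vartwo \notin \fv{\rulesub\var\val} = \set\var \cup \fv\val$. Here both $\meas\tmtwo$ and the potential number of occurrences $\measvar\vartwo\tmtwo$ occur in $\meas{\tmtwo\esub\vartwo\tmthree} = \meas\tmtwo + \measvar\vartwo\tmtwo + \meas\tmthree \cdot (1 + \measvar\vartwo\tmtwo)$, so both must be bounded. The bound $\meas\tmtwo \geq \meas{\tmtwo'}$ is the induction hypothesis. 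For $\measvar\vartwo\tmtwo \geq \measvar\vartwo{\tmtwo'}$, I will invoke \cref{lem:t_measvart_geq_measvartp} on the step $\tmtwo \tovv{\rulesub\var\val} \tmtwo'$ with $\vartwo$ as the tracked variable: its hypothesis ``$\vartwo \notin \fv\val$'' holds because $\vartwo \notin \fv{\rulesub\var\val}$, and it yields $\measvar\vartwo\tmtwo \geq \measvar\vartwo{\tmtwo'}$ (no strict inequality is needed). Combining these two monotonicity facts with the monotonicity of $+$ and of multiplication by the non-negative factors in the definition of $\meas{\_}$ on closures gives $\meas{\tmtwo\esub\vartwo\tmthree} \geq \meas{\tmtwo'\esub\vartwo\tmthree}$, closing the induction. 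The main (and fairly mild) obstacle is exactly this \ruleVEsL case: one must observe that the summand $\measvar\vartwo\tmtwo$ and its reappearance as a multiplicative coefficient both have to be controlled, which is precisely what \cref{lem:t_measvart_geq_measvartp} delivers.
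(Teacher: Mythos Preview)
Your proposal is correct and matches the paper's own proof essentially step for step: induction on the derivation of the $\rulesub\var\val$-step, with the only nontrivial case being \ruleVEsL, where you correctly combine the induction hypothesis on $\meas\tmtwo$ with \cref{lem:t_measvart_geq_measvartp} (applied with the tracked variable $\vartwo$, using $\vartwo \notin \fv{\rulesub\var\val}$) to control both occurrences of $\measvar\vartwo\tmtwo$.
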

\hiddenproof{
  By induction on the derivation of the judgement 
  $\tm \tovv{\rulesub\var\val} \tm'$.
}{
  % Label: lem:tsub_meast_geq_meastp

\begin{proof}
By induction on the derivation of the judgement 
$\tm \tovv{\rulesub\var\val} \tm'$.
\begin{itemize}
\item \ruleVSub.
  Then, $\tm = \var \tovv{\rulesub\var\val} \val = \tm'$.
  Hence, $\meas\var = 0 = \meas\val$.
\item \ruleVAppL.
  Then
  \[
    \inferrule{
      \tmtwo \tovv{\rulesub\var\val} \tmtwo'
    }{
      \tm = \tmtwo \, \tmthree \tovv{\rulesub\var\val} \tmtwo' \, \tmthree = \tm'
    }\ruleVAppL
  \]
  Hence by the \ih on $\tmtwo$ we obtain
  $\meas{\tmtwo \, \tmthree} = \meas\tmtwo + \meas\tmthree 
  \geq \meas{\tmtwo'} + \meas\tmthree = \meas{\tmtwo' \, \tmthree}$.
\item \ruleVAppR.
  Analogous to the previous case.
\item \ruleVEsL.
  Then
  \[
    \inferrule{
      \tmtwo \tovv{\rulesub\var\val} \tmtwo'
      \sep
      \vartwo \notin \fv{\rulesub\var\val}
    }{
      \tm = \tmtwo\esub\vartwo\tmthree \tovv{\rulesub\var\val} \tmtwo'\esub\vartwo\tmthree = \tm'
    }\ruleVEsL
  \]
  Hence
  \[
    \begin{array}{rcll}
        \meas{\tmtwo\esub\vartwo\tmthree}
      & =
      & \meas\tmtwo + \measvar\vartwo\tmtwo + \meas\tmthree\cdot(1 + \measvar\vartwo\tmtwo)
    \\
      & \geq 
      & \meas{\tmtwo'} + \measvar\vartwo\tmtwo + \meas\tmthree\cdot(1 + \measvar\vartwo\tmtwo) 
      & \text{(By \ih on $\tmtwo$)}
    \\
      & \geq
      & \meas{\tmtwo'} + \measvar\vartwo{\tmtwo'} + \meas\tmthree\cdot(1 + \measvar\vartwo\tmtwo) 
      & \text{(By \cref{lem:t_measvart_geq_measvartp})}
    \\
      & \geq
      & \meas{\tmtwo'} + \measvar\vartwo{\tmtwo'} + \meas\tmthree\cdot(1 + \measvar\vartwo{\tmtwo'}) 
      & \text{(By \cref{lem:t_measvart_geq_measvartp})}
    \\
      & =
      & \meas{\tmtwo'\esub\vartwo\tmthree} 
    \end{array}
  \]
\item \ruleVEsR.
  Then
  \[
    \inferrule{
      \tmthree \tovv{\rulesub\var\val} \tmthree'
    }{
      \tm = \tmtwo\esub\vartwo\tmthree \tovv{\rulesub\var\val} \tmtwo\esub\vartwo{\tmthree'} = \tm'
    }\ruleVEsR
  \]
  Hence
  \[
    \begin{array}{rcll}
        \meas{\tmtwo\esub\vartwo\tmthree}
      & =
      & \meas\tmtwo + \measvar\vartwo\tmtwo + \meas\tmthree\cdot(1 + \measvar\vartwo\tmtwo)
    \\
      & \geq 
      & \meas\tmtwo + \measvar\vartwo\tmtwo + \meas{\tmthree'}\cdot(1 + \measvar\vartwo\tmtwo) 
      & \text{(By \ih on $\tmthree$)}
    \\
      & =
      & \meas{\tmtwo\esub\vartwo{\tmthree'}} 
    \end{array}
  \]
\end{itemize}
\end{proof}

}

\begin{lemma}
\label{lem:tlsv_meast_greater_meastp}
Let $\tm \tovv\rulelsv \tm'$.
Then, $\meas\tm > \meas{\tm'}$.
\end{lemma}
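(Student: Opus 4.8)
The plan is to prove \cref{lem:tlsv_meast_greater_meastp} by induction on the derivation of $\tm \tovv\rulelsv \tm'$, mirroring the case analysis of the proof of \cref{lem:tsub_meast_geq_meastp} but keeping track of where a strict inequality is produced. Since \ruleVSub introduces only a $\rulesub\var\val$-step and never an $\rulelsv$-step, the only rule that can introduce the $\rulelsv$-step at the root is \ruleVLsv; the remaining applicable rules are the four congruence rules \ruleVAppL, \ruleVAppR, \ruleVEsL, and \ruleVEsR. So there are five cases to treat.

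The four congruence cases are routine and essentially repeat the corresponding cases of \cref{lem:tsub_meast_geq_meastp}, except that the \ih now supplies a \emph{strict} decrease, which propagates. In \ruleVAppL and \ruleVAppR the measure of the context around the reduced subterm is additive, so $\meas{\tm} > \meas{\tm'}$ follows at once. In \ruleVEsR, where $\tm = \tmtwo\esub\vartwo\tmthree \tovv\rulelsv \tmtwo\esub\vartwo{\tmthree'}$, one has $\meas{\tm} - \meas{\tm'} = (\meas{\tmthree} - \meas{\tmthree'})\cdot(1 + \measvar\vartwo\tmtwo)$, which is strictly positive because $1 + \measvar\vartwo\tmtwo \ge 1$ and $\meas{\tmthree} > \meas{\tmthree'}$ by the \ih. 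In \ruleVEsL, where $\tm = \tmtwo\esub\vartwo\tmthree \tovv\rulelsv \tmtwo'\esub\vartwo\tmthree$ with $\tmtwo \tovv\rulelsv \tmtwo'$, I would first observe that an $\rulelsv$-step is in particular a $\tovalas$-step, so \cref{lem:t_measvart_geq_measvartp} gives $\measvar\vartwo\tmtwo \ge \measvar\vartwo{\tmtwo'}$; combined with $\meas{\tmtwo} > \meas{\tmtwo'}$ from the \ih, this yields $\meas{\tm} = \meas{\tmtwo} + \measvar\vartwo\tmtwo + \meas{\tmthree}\cdot(1 + \measvar\vartwo\tmtwo) > \meas{\tmtwo'} + \measvar\vartwo{\tmtwo'} + \meas{\tmthree}\cdot(1 + \measvar\vartwo{\tmtwo'}) = \meas{\tm'}$.

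The main case --- and the only one requiring genuine computation --- is \ruleVLsv: here $\tm = \tmtwo\esub\varthree{\val\sctx} \tovv\rulelsv \tmtwo'\esub\varthree\val\sctx = \tm'$, derived from $\tmtwo \tovv{\rulesub\varthree\val} \tmtwo'$, and, by the definition of the step kind $\rulesub\varthree\val$, $\varthree \notin \fv\val$, so $\meas{\val} = 0$ and $\measvar\varthree\val = 0$. The plan is to rewrite both measures using the two splitting lemmas: by \cref{lem:splitting_measVar_meas_tL}(2) one gets $\meas{\tm} = \meas{\tmtwo} + \measvar\varthree\tmtwo + \measPhi\sctx{\varphi_\val}\cdot(1 + \measvar\varthree\tmtwo)$ and $\meas{\tm'} = \meas{\tmtwo'} + \measvar\varthree{\tmtwo'} + \measPhi\sctx{\varphi_{\tmtwo'\esub\varthree\val}}$, while \cref{lem:t_disj_L_measLtL_comp}(2) (using $\fv{\tmtwo'} \disj \domSctx\sctx$, available by $\alpha$-conversion) gives $\measPhi\sctx{\varphi_{\tmtwo'\esub\varthree\val}} \le \measPhi\sctx{\varphi_\val}\cdot(1 + \measvar\varthree{\tmtwo'})$. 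Then I would combine three facts about the inner $\rulesub\varthree\val$-step: $\meas{\tmtwo} \ge \meas{\tmtwo'}$ by \cref{lem:tsub_meast_geq_meastp}; $\measvar\varthree\tmtwo > \measvar\varthree{\tmtwo'}$ by the strict (``moreover'') part of \cref{lem:t_measvart_geq_measvartp}, applied with the counted variable equal to the substituted variable $\varthree$; and $\measPhi\sctx{\varphi_\val}\cdot(1 + \measvar\varthree\tmtwo) \ge \measPhi\sctx{\varphi_\val}\cdot(1 + \measvar\varthree{\tmtwo'})$ as a consequence. Adding these gives $\meas{\tm} > \meas{\tm'}$, with the strictness coming entirely from the middle term. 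The main obstacle in this case is the careful bookkeeping of the $\alpha$-conversion side conditions needed to invoke \cref{lem:t_disj_L_measLtL_comp} and to guarantee $\varthree \notin \fv\val$; once those are in place, everything else is arithmetic on natural numbers.
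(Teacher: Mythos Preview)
Your proposal is correct and follows essentially the same approach as the paper: induction on the derivation of $\tm \tovv\rulelsv \tm'$, with the four congruence cases handled by the \ih\ plus \cref{lem:t_measvart_geq_measvartp}, and the key \ruleVLsv case handled by combining \cref{lem:tsub_meast_geq_meastp}, the strict part of \cref{lem:t_measvart_geq_measvartp} (for the substituted variable $\varthree$), \cref{lem:splitting_measVar_meas_tL}(2), and \cref{lem:t_disj_L_measLtL_comp}(2). One tiny remark: $\meas\val = 0$ holds for every value by definition, not because $\varthree \notin \fv\val$; only $\measvar\varthree\val = 0$ uses that side condition.
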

\hiddenproof{
  By induction on the derivation of the judgement $\tm \tovv\rulelsv \tm'$.
}{
  % Label: lem:tlsv_meast_greater_meastp

\begin{proof}
By induction on the derivation of the judgement 
$\tm \tovv\rulelsv \tm'$.
\begin{itemize}
\item \ruleVLsv.
  Then
  \[
    \inferrule{
      \tmtwo \tovv{\rulesub\var\val} \tmtwo'
    }{
      \tm = \tmtwo\esub\var{\val\sctx} \tovv\rulelsv \tmtwo'\esub\var\val\sctx = \tm'
    }\ruleVLsv
  \]
  Hence
  \[
    \begin{array}{rcll}
        \meas{\tmtwo\esub\var{\val\sctx}}
      & =
      & \meas\tmtwo + \measvar\var\tmtwo + \meas{\val\sctx}\cdot(1 + \measvar\var\tmtwo)
    \\
      & >
      & \meas\tmtwo + \measvar\var{\tmtwo'} + \meas{\val\sctx}\cdot(1 + \measvar\var\tmtwo)
      & \text{(By \cref{lem:t_measvart_geq_measvartp})}
    \\
      & >
      & \meas\tmtwo + \measvar\var{\tmtwo'} + \meas{\val\sctx}\cdot(1 + \measvar\var{\tmtwo'})
      & \text{(By \cref{lem:t_measvart_geq_measvartp})}
    \\
      & \geq
      & \meas{\tmtwo'} + \measvar\var{\tmtwo'} + \meas{\val\sctx}\cdot(1 + \measvar\var{\tmtwo'})
      & \text{(By \cref{lem:tsub_meast_geq_meastp})}
    \\
      & \geq
      & \meas{\tmtwo'} + \measvar\var{\tmtwo'} + (\meas\val + \measPhi\sctx{\varphi_\val})\cdot(1 + \measvar\var{\tmtwo'})
      & \text{(By \cref{lem:splitting_measVar_meas_tL} (2))}
    \\
      & =
      & \meas{\tmtwo'} + \measvar\var{\tmtwo'} + (0 + \measPhi\sctx{\varphi_\val})\cdot(1 + \measvar\var{\tmtwo'})
    \\
      & =
      & \meas{\tmtwo'} + \measvar\var{\tmtwo'} + \measPhi\sctx{\varphi_\val}\cdot(1 + \measvar\var{\tmtwo'})
    \\
      & \geq
      & \meas{\tmtwo'} + \measvar\var{\tmtwo'} +  \measPhi\sctx{\varphi_{\tmtwo'\esub\var\val}}
      & \text{(By \cref{lem:t_disj_L_measLtL_comp} (2))}
    \\
      & =
      & \meas{\tmtwo'} + \measvar\var{\tmtwo'} + 0\cdot(1 + \measvar\var{\tmtwo'})
      + \measPhi\sctx{\varphi_{\tmtwo'\esub\var\val}}
    \\
      & =
      & \meas{\tmtwo'} + \measvar\var{\tmtwo'} + \meas\val\cdot(1 + \measvar\var{\tmtwo'})
      + \measPhi\sctx{\varphi_{\tmtwo'\esub\var\val}}
    \\
      & =
      & \meas{\tmtwo'\esub\var\val} + \measPhi\sctx{\varphi_{\tmtwo'\esub\var\val}}
    \\
      & =
      & \meas{\tmtwo'\esub\var\val\sctx}
      & \text{(By \cref{lem:splitting_measVar_meas_tL} (2))}
    \end{array}
  \]
\item \ruleVAppL.
  Then
  \[
    \inferrule{
      \tmtwo \tovv\rulelsv \tmtwo'
    }{
      \tm = \tmtwo \, \tmthree \tovv\rulelsv \tmtwo' \, \tmthree = \tm'
    }\ruleVAppL
  \]
  Hence by the \ih on $\tmtwo$, we obtain
  $   \meas{\tmtwo \, \tmthree} 
    = \meas\tmtwo + \meas\tmthree
    > \meas{\tmtwo'} + \meas\tmthree
    = \meas{\tmtwo \, \tmthree}$.
\item \ruleVAppR.
  Analogous to the previous case.
\item \ruleVEsL.
  Then
  \[
    \inferrule{
      \tmtwo \tovv\rulelsv \tmtwo'
    }{
      \tm = \tmtwo\esub\var\tmthree \tovv\rulelsv \tmtwo'\esub\var\tmthree = \tm'
    }\ruleVEsL
  \]
  where $\fv\rulelsv = \emptyset$, so we have $\var \notin \fv\rulelsv$.
  Hence
  \[
    \begin{array}{rcll}
        \meas{\tmtwo\esub\var\tmthree}
      & =
      & \meas\tmtwo + \measvar\var\tmtwo + \meas\tmthree\cdot(1 + \measvar\var\tmtwo)
    \\
      & >
      & \meas{\tmtwo'} + \measvar\var\tmtwo + \meas\tmthree\cdot(1 + \measvar\var\tmtwo)
      & \text{(By \ih on $\tmtwo$)}
    \\
      & \geq
      & \meas{\tmtwo'} + \measvar\var{\tmtwo'} + \meas\tmthree\cdot(1 + \measvar\var{\tmtwo'})
      & \text{(By \cref{lem:t_measvart_geq_measvartp})}
    \\
      & =
      & \meas{\tmtwo'\esub\var\tmthree}
    \end{array}
  \]
\item \ruleVEsR.
  Then
  \[
    \inferrule{
      \tmthree \tovv\rulelsv \tmthree'
    }{
      \tm = \tmtwo\esub\var\tmthree \tovv\rulelsv \tmtwo\esub\var{\tmthree'} = \tm'
    }\ruleVEsR
  \]
  Hence,
  \[
    \begin{array}{rcll}
        \meas{\tmtwo\esub\var\tmthree}
      & =
      & \meas\tmtwo + \measvar\var\tmtwo + \meas\tmthree\cdot(1 + \measvar\var\tmtwo)
    \\
      & >
      & \meas\tmtwo + \measvar\var\tmtwo + \meas{\tmthree'}\cdot(1 + \measvar\var\tmtwo)
      & \text{(By \ih on $\tmthree$)}
    \\
      & =
      & \meas{\tmtwo\esub\var{\tmthree'}}
    \end{array}
  \]
\end{itemize}
\end{proof}

}

\begin{lemma}
\label{lem:t_measvalas_greater_meastp}
Let $\tm \tovalas \tm'$. 
Then, $\measvalas\tm > \measvalas{\tm'}$.
\end{lemma}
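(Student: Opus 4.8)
The plan is to argue by a direct case analysis on how the step $\tm \tovalas \tm'$ arises. By definition $\tovalas$ decomposes into $\tovv\rulelsv$ together with the steps $\tovv{\rulesub\var{\valas(\var)}}$ for $\var \in \dom\valas$, so there are exactly two cases to treat. In both cases the goal $\measvalas\tm > \measvalas{\tm'}$ will follow from the identity $\measvalas\tm = \meas\tm + \sum_{\vartwo \in \dom\valas} \measvar\vartwo\tm$ by showing that each summand weakly decreases and at least one of them strictly decreases; the bounds on the individual summands come from \cref{lem:tlsv_meast_greater_meastp}, \cref{lem:tsub_meast_geq_meastp}, and \cref{lem:t_measvart_geq_measvartp}.

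For the case $\tm \tovv\rulelsv \tm'$, I would invoke \cref{lem:tlsv_meast_greater_meastp} to get $\meas\tm > \meas{\tm'}$. For every $\vartwo \in \dom\valas$, \cref{lem:t_measvart_geq_measvartp} applies --- its side condition only constrains $\rulesub$-steps and is therefore vacuous here --- yielding $\measvar\vartwo\tm \geq \measvar\vartwo{\tm'}$. Summing these inequalities over $\dom\valas$ and adding $\meas\tm > \meas{\tm'}$ gives $\measvalas\tm > \measvalas{\tm'}$.

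For the case $\tm \tovv{\rulesub\var{\valas(\var)}} \tm'$ with $\var \in \dom\valas$, set $\val \eqdef \valas(\var)$. Here \cref{lem:tsub_meast_geq_meastp} gives $\meas\tm \geq \meas{\tm'}$. The crucial observation is that, since $\valas$ is a value assignment, $\dom\valas \cap \fv{\im\valas} = \emptyset$; hence no variable of $\dom\valas$ is free in $\val$. Applying \cref{lem:t_measvart_geq_measvartp} with the tracked variable equal to $\var$ itself, the side condition $\var \notin \fv\val$ holds, so its ``moreover'' clause gives the strict inequality $\measvar\var\tm > \measvar\var{\tm'}$. Applying the same lemma with the tracked variable equal to any other $\vartwo \in \dom\valas$, again $\vartwo \notin \fv\val$, so $\measvar\vartwo\tm \geq \measvar\vartwo{\tm'}$. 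Summing over $\dom\valas$ (the $\var$-summand strictly, all the others weakly) and adding $\meas\tm \geq \meas{\tm'}$ yields $\measvalas\tm > \measvalas{\tm'}$.

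The only delicate point is matching the hypotheses of \cref{lem:t_measvart_geq_measvartp} in the second case --- in particular, reading off from $\dom\valas \cap \fv{\im\valas} = \emptyset$ that every variable in $\dom\valas$ avoids $\fv{\valas(\var)}$, which is exactly what licenses both the weak comparisons and the one strict comparison; everything else is routine arithmetic with sums of non-negative integers.
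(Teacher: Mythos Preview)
Your proposal is correct and follows essentially the same approach as the paper: a case split on whether the step is $\rulelsv$ or $\rulesub\var{\valas(\var)}$, using \cref{lem:tlsv_meast_greater_meastp} (resp.\ \cref{lem:tsub_meast_geq_meastp}) for the $\meas{\_}$ part and \cref{lem:t_measvart_geq_measvartp} for the summands, with the strict drop coming from $\meas{\_}$ in the first case and from the $\var$-summand in the second. Your explicit verification that every $\vartwo \in \dom\valas$ satisfies $\vartwo \notin \fv{\valas(\var)}$ via the idempotence condition on $\valas$ is in fact more careful than the paper, which leaves this check implicit.
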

\hiddenproof{
  By case analysis on the rule used to derive $\tm \tovalas \tm'$.
}{
  % Label: lem:t_measvalas_greater_meastp

\begin{proof}
We analyse two cases, depending on the rule which is used to derive 
$\tm \tovalas \tm'$.
\begin{itemize}
\item 
  If $\tm \tovv\rulelsv \tm'$, then
  \[
    \begin{array}{rcll}
        \measvalas\tm
      & =
      & \meas\tm + \sum_{\var \in \dom\valas} \measvar\var\tm
    \\
      & >
      & \meas{\tm'} + \sum_{\var \in \dom\valas} \measvar\var\tm
      & \text{(By \cref{lem:tlsv_meast_greater_meastp})}
    \\
      & \geq
      & \meas{\tm'} + \sum_{\var \in \dom\valas} \measvar\var{\tm'}
      & \text{(By \cref{lem:t_measvart_geq_measvartp})}
    \\
      & =
      & \measvalas{\tm'}
    \end{array}
  \]
\item 
  If $\tm \tovv{\rulesub\var{\valas(\var)}} \tm'$, with 
  $\var \in \dom\valas$, then
  \[
    \begin{array}{rcll}
        \measvalas\tm
      & =
      & \meas\tm + \sum_{\var \in \dom\valas} \measvar\var\tm
    \\
      & \geq
      & \meas{\tm'} + \sum_{\var \in \dom\valas} \measvar\var\tm
      & \text{(By \cref{lem:tsub_meast_geq_meastp})}
    \\
      & >
      & \meas{\tm'} + \sum_{\var \in \dom\valas} \measvar\var{\tm'}
      & \text{(By \cref{lem:t_measvart_geq_measvartp})}
    \\
      & =
      & \measvalas{\tm'}
    \end{array}
  \]
\end{itemize}
\end{proof}

}

\begin{lemma}
\label{lem:vL_reduction_tovalas}
Let $\val$ be a value and $\sctx$ a substitution context. Then:
\begin{enumerate}
\item
  If there exists $\tm$ such that $\val\sctx \tovv\rulename \tm$,
  then there exist $\val'$ and $\sctx'$ such that $\tm = \val'\sctx'$.
\item
  If $\val\sctx \tovv\rulename \val'\sctx'$, then 
  $\tm\esub\var\val\sctx \tovv\rulename \tm\esub\var{\val'}\sctx'$,
  for any term $\tm$.
\end{enumerate}
\end{lemma}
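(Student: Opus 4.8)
The plan is to prove item (1) first, by induction on the substitution context $\sctx$, and then item (2), again by induction on $\sctx$, invoking item (1) in its inductive step. The guiding observation is that a term of the shape $\val\sctx$ is either a bare value (when $\sctx = \ctxhole$) or a closure (when $\sctx$ is non-empty), so the rules \ruleVDb, \ruleVAppL and \ruleVAppR can never fire at it; when $\sctx = \ctxhole$ the only applicable rule is \ruleVSub, and only if $\val$ is a variable, since abstractions are irreducible under weak reduction; and when $\sctx = \sctxtwo\esub\vartwo\tmtwo$ the applicable top-level rules are exactly \ruleVLsv, \ruleVEsL and \ruleVEsR.

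For item (1): in the base case $\sctx = \ctxhole$, if $\val$ is an abstraction the hypothesis is vacuous, and if $\val = \vartwo$ the unique step is $\vartwo \tovv{\rulesub\vartwo\valtwo} \valtwo$ via \ruleVSub, whose result $\valtwo = \valtwo\ctxhole$ already has the required form. In the inductive step $\sctx = \sctxtwo\esub\vartwo\tmtwo$, I case-split on the last rule of the given derivation of $\val\sctx \tovv\rulename \tm$. Under \ruleVEsR the body is untouched and $\tm = (\val\sctxtwo)\esub\vartwo{\tmtwo'}$ has the required form; under \ruleVEsL the body reduces, $\val\sctxtwo \tovv\rulename \tmthree$, so the induction hypothesis for $\sctxtwo$ rewrites $\tmthree$ as $\val_1\sctx_1$ and thus $\tm = \val_1(\sctx_1\esub\vartwo\tmtwo)$; under \ruleVLsv the body reduces and $\tmtwo = \valtwo\sctx''$, so again the induction hypothesis gives $\tmthree = \val_1\sctx_1$, yielding $\tm = \val_1(\sctx_1\esub\vartwo\valtwo\sctx'')$.

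For item (2): read $\tm\esub\var\val\sctx$ as the context $\sctx$ filled with the closure $\tm\esub\var\val$, and induct on $\sctx$. In the base case $\sctx = \ctxhole$, item (1) forces $\val = \vartwo$, $\rulename = \rulesub\vartwo\valtwo$, $\val' = \valtwo$, $\sctx' = \ctxhole$, and the target step $\tm\esub\var\vartwo \tovv{\rulesub\vartwo\valtwo} \tm\esub\var\valtwo$ is an instance of \ruleVEsR applied to the premise $\vartwo \tovv{\rulesub\vartwo\valtwo} \valtwo$. In the inductive step $\sctx = \sctxtwo\esub{\var_1}{\tm_1}$, I again split on the last rule deriving $\val\sctx \tovv\rulename \val'\sctx'$: the \ruleVEsR case (reduction inside $\tm_1$) re-propagates directly under the outer ES $\esub\var\val$ by a fresh \ruleVEsR step; in the \ruleVEsL and \ruleVLsv cases the body $\val\sctxtwo$ reduces to something which, by item (1), is again of the form $\val_1\sctx_1$, so the induction hypothesis of item (2) applies to $\sctxtwo$ and produces $(\tm\esub\var\val)\sctxtwo \tovv\rulename (\tm\esub\var{\val_1})\sctx_1$, and re-wrapping this step inside $\esub{\var_1}{\tm_1}$ using \ruleVEsL (whose side condition $\var_1 \notin \fv\rulename$ is inherited from the original derivation), respectively inside $\esub{\var_1}{\valtwo\sctx''}$ using \ruleVLsv, reconstructs exactly the desired reduction $(\tm\esub\var\val)\sctx \tovv\rulename (\tm\esub\var{\val'})\sctx'$.

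The argument is essentially bookkeeping about nested substitution contexts, and I do not anticipate a genuine obstacle. The two points deserving attention are: assuming, by $\alpha$-conversion, that the variables bound by $\sctx$ are disjoint from $\set\var \cup \fv\val \cup \fv\tm$, so that wrapping the additional ES $\esub\var\val$ causes no capture and does not interact with the step taking place to its right; and checking that the side condition $\var_1 \notin \fv\rulename$ needed to re-apply \ruleVEsL in item (2) is precisely the one carried by the original instance of \ruleVEsL. The main practical risk is simply mis-associating $\tm\esub\var\val\sctx$ when re-plugging the contexts.
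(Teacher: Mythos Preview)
Your proposal is correct and follows essentially the same approach as the paper: both prove item (2) by induction on $\sctx$, case-splitting on the last rule and invoking item (1) to recover the $\val_1\sctx_1$ shape of the reduct of the body. The only cosmetic difference is that for item (1) the paper inducts on the derivation of $\val\sctx \tovv\rulename \tm$ rather than on $\sctx$, but since the derivation structure mirrors $\sctx$ here, the two inductions are equivalent.
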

% Label lem:vL_reduction_tovalas

\begin{proof}
We prove each item independently.
\begin{enumerate}
\item
  By induction on the derivation of $\val\sctx \tovv\rulename \tm$.
  Note that it is not possible to apply rules \ruleVDb, \ruleVAppL 
  nor \ruleVAppR, given that $\val\sctx$ is not an application.
  Then, we are left to analyse the following cases:
  \begin{itemize}
  \item \ruleVSub.
    Then, $\var \tovv{\rulesub\var\valtwo} \valtwo$, where
    $\val = \var$, $\sctx = \ctxhole$, and $\tm = \valtwo$,
    so we are done, with $\val' = \valtwo$ and $\sctx' = \ctxhole$.
  \item \ruleVLsv.
    Then, $\sctx = \sctx_1\esub\var{\valtwo\sctx_2}$, and thus
    \[
      \inferrule{
        \val\sctx_1 \tovv{\rulesub\var\valtwo} \tm_1
      }{
        \val\sctx_1\esub\var{\valtwo\sctx_2} \tovv\rulelsv
        \tm_1\esub\var\valtwo\sctx_2 = \tm
      }\ruleVLsv
    \]
    We apply the \ih on $\val\sctx_1$, yielding $\val'_1$ and $\sctx'_1$
    such that $\tm_1 = \val'_1\sctx'_1$. 
    And we are done, with $\val' = \val'_1$ and 
    $\sctx' = \sctx'_1\esub\var\valtwo\sctx_2$.
  \item \ruleVEsL.
    Then, $\sctx = \sctx_1\esub\var\tmtwo$, and thus
    \[
      \inferrule{
        \val\sctx_1 \tovv\rulename \tm_1
        \sep
        \var \notin \fv\rulename
      }{
        \val\sctx_1\esub\var\tmtwo \tovv\rulename \tm_1\esub\var\tmtwo
      }\ruleVEsL
    \]

    We apply the \ih on $\val\sctx_1$, yielding $\val'_1$ and $\sctx'_1$
    such that $\tm_1 = \val'_1\sctx'_1$.
    And we are done, with $\val' = \val'_1$ and 
    $\sctx' = \sctx'_1\esub\var\tmtwo$.
  \item \ruleVEsR.
    Then, $\sctx = \sctx_1\esub\var\tmtwo$, thus deriving
    $\val\sctx_1\esub\var\tmtwo \tovv\rulename \val\sctx_1\esub\var{\tmtwo'}$
    from
    $\tmtwo \tovv\rulename \tmtwo'$.
    We are done with $\val' = \val$ and $\sctx' = \sctx_1\esub\var{\tmtwo'}$,
  \end{itemize}  
\item
  By induction on the length of $\sctx$.
  \begin{itemize}
  \item $\sctx = \ctxhole$.
    Then, $\val \tovv\rulename \val'$, which can only be derived by 
    rule \ruleVSub.
    We obtain $\tm\esub\var\val \tovv\rulename \tm\esub\var{\val'}$
    by applying rule \ruleVEsR.
  \item $\sctx = \sctx_1\esub\vartwo\tmtwo$.
    We analyse different cases, depending on the rule used to derive
    $\val\sctx_1\esub\vartwo\tmtwo \tovv\rulename \val'\sctx'$.
    Since it is not possible to apply rules \ruleVSub, \ruleVDb, 
    \ruleVAppL, and \ruleVAppR, given that
    $\val\sctx_1\esub\vartwo\tmtwo$ is neither a variable nor an 
    application, so we are left to analyse the following three cases:
    \begin{itemize}
    \item \ruleVLsv.
      Then $\tmtwo = \valtwo\sctx_2$ and $\rulename = \rulelsv$, deriving
      \[
        \inferrule{
          \val\sctx_1 \tovv{\rulesub\vartwo\valtwo} \tmthree
        }{
          \val\sctx_1\esub\vartwo{\valtwo\sctx_2} \tovv\rulelsv \tmthree\esub\vartwo\valtwo\sctx_2
        }\ruleVLsv
      \]

      Moreover, $\tmthree = \val'\sctx_3$ by point (1), hence we 
      necessarily have $\sctx' = \sctx_3\esub\vartwo\valtwo\sctx_2$. 
      Thus in particular $\val\sctx_1 \tovv{}\val'\sctx_3$.
      We apply the \ih on $\sctx_1$, yielding
      $\tm\esub\var\val\sctx_1 \tovv{\rulesub\vartwo\valtwo} \tm\esub\var{\val'}\sctx_3$.
      Applying rule \ruleVLsv, we obtain
      $\tm\esub\var\val\sctx = 
      \tm\esub\var\val\sctx_1\esub\vartwo{\valtwo\sctx_2} 
      \tovv{\rulesub\vartwo\valtwo} 
      \tm\esub\var{\val'}\sctx_3\esub\vartwo\valtwo\sctx_2 =
      \tm\esub\var{\val'}\sctx'$.
    \item \ruleVEsL.
      Then
      \[
        \inferrule{
          \val\sctx_1 \tovv\rulename \tmthree
          \sep
          \vartwo \notin \fv\rulename
        }{
          \val\sctx_1\esub\vartwo\tmtwo \tovv\rulename \tmthree\esub\vartwo\tmtwo
        }\ruleVEsL
      \]

      Moreover, $\tmthree = \val'\sctx_3$ by point (1), hence we 
      necessarily have $\sctx' = \sctx_3\esub\vartwo\tmtwo$.
      Thus in particular $\val\sctx_1 \tovv{} \val'\sctx_3$.
      We apply the \ih on $\sctx_1$, yielding
      $\tm\esub\var\val\sctx_1 \tovv\rulename \tm\esub\var{\val'}\sctx_3$.
      Given that $\vartwo \notin \fv\rulename$, we can then apply 
      rule \ruleVEsL, yielding
      $\tm\esub\var\val\sctx =
      \tm\esub\var\val\sctx_1\esub\vartwo\tmtwo \tovv\rulename 
      \tm\esub\var{\val'}\sctx_3\esub\vartwo\tmtwo =
      \tm\esub\var{\val'}\sctx'$.
    \item \ruleVEsR.
      Then,
      $\val\sctx_1\esub\vartwo\tmtwo \tovv\rulename \val\sctx_1\esub\vartwo{\tmtwo'}$
      is derived from
      $\tmtwo \tovv\rulename \tmtwo'$.
      Hence, $\val' = \val$ and $\sctx' = \sctx_1\esub\vartwo{\tmtwo'}$.
      By applying \ruleVEsR, we obtain
      $\tm\esub\var\val\sctx =
      \tm\esub\var\val\sctx_1\esub\vartwo\tmtwo \tovv\rulename
      \tm\esub\var\val\sctx_1\esub\vartwo{\tmtwo'} =
      \tm\esub\var{\val'}\sctx'$.
    \end{itemize}
  \end{itemize}
\end{enumerate}
\end{proof}

\begin{lemma}
\label{lem:t_reducessub_t1_vL_reduces_v1L1_t1_reduces_t1p}
If $\tm \tovv{\rulesub\var\val} \tm_1$ and 
$\val\sctx \tovv\rulename \val'\sctx'$, then there exists $\tm'_1$ 
such that $\tm_1 \tovveq\rulename \tm'_1$.
\end{lemma}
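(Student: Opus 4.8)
The plan is to prove the statement by a two-level case analysis: the outer one on how the reduction $\val\sctx \tovv\rulename \val'\sctx'$ affects the head value $\val$, and the inner one — in the single non-trivial case — by induction on the derivation of $\tm \tovv{\rulesub\var\val}\tm_1$.

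First I would establish the structural fact governing the shape of $\val\sctx \tovv\rulename \val'\sctx'$: since \LOCBV is weak and $\val$ is a value — so $\val\sctx$ is never a $\dbsym$-redex and no step enters under an abstraction — a routine induction on $\sctx$, following exactly the case pattern in the proof of \cref{lem:vL_reduction_tovalas}(1) (examining which of \ruleVSub, \ruleVLsv, \ruleVEsL, \ruleVEsR applies), shows that either $\val' = \val$, or else $\val$ is a variable $\vartwo$ whose head occurrence gets substituted by the value $\val'$, in which case $\rulename = \rulesub\vartwo{\val'}$ when $\vartwo \notin \domSctx\sctx$, and $\rulename = \rulelsv$ when $\vartwo \in \domSctx\sctx$.

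Then the proof splits as follows. If $\val' = \val$, or if $\val = \vartwo \in \domSctx\sctx$ (the $\rulelsv$ case), I take $\tm'_1 \defeq \tm_1$, so that $\tm_1 \tovveq\rulename \tm'_1$ holds by reflexivity of $\tovveq\rulename$. The remaining case is $\val = \vartwo \notin \domSctx\sctx$ with $\rulename = \rulesub\vartwo{\val'}$, so in particular $\vartwo \notin \fv{\val'}$. Here I claim that $\tm_1$ is $(\rulesub\vartwo{\val'})$-reducible, which yields a witness $\tm'_1$ with $\tm_1 \tovv{\rulesub\vartwo{\val'}}\tm'_1$, hence a fortiori $\tm_1 \tovveq\rulename \tm'_1$. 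I prove the claim by induction on the derivation of $\tm \tovv{\rulesub\var\vartwo}\tm_1$ (recalling $\val = \vartwo$): the base case \ruleVSub has $\tm = \var$, $\tm_1 = \vartwo$, and $\vartwo \tovv{\rulesub\vartwo{\val'}}\val'$ by \ruleVSub; the congruence cases \ruleVAppL, \ruleVAppR, \ruleVEsR just propagate the inductive hypothesis with no side condition; and for \ruleVEsL with binder $\var'$, the original derivation already forces $\var' \notin \fv{\rulesub\var\vartwo} = \set\var \cup \set\vartwo$, hence $\var' \neq \vartwo$, while $\var' \notin \fv{\val'}$ may be assumed by $\alpha$-renaming the binders along the path, so \ruleVEsL re-applies with the new step kind $\rulesub\vartwo{\val'}$.

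I expect the main obstacle to be precisely this last case. One must resist the tempting but false claim ``$\vartwo \in \fv{\tm_1}$ implies $\tm_1$ is $(\rulesub\vartwo{\val'})$-reducible'' — which breaks down, e.g., for a normal form like $\lam\var\vartwo$ — and instead exploit that the free occurrence of $\vartwo$ in $\tm_1$ is exactly the one produced by the hypothesised step $\tm \tovv{\rulesub\var\vartwo}\tm_1$, so that it sits under a weak context; this is what the induction on that derivation makes precise. The rest — the structural analysis of the reduction of $\val\sctx$ and the $\alpha$-conversion bookkeeping for the \ruleVEsL side conditions — is routine.
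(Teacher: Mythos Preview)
Your proposal is correct, but it takes a noticeably more elaborate route than the paper's proof. The paper proceeds by a single, uniform induction on the derivation of $\tm \tovv{\rulesub\var\val}\tm_1$: the base case \ruleVSub is handled by reflexivity (take $\tm'_1 \eqdef \tm_1 = \val$), and each congruence case just propagates the inductive hypothesis, with $\alpha$-conversion supplying the side condition $\vartwo \notin \fv\rulename$ needed for \ruleVEsL. In particular, the paper's argument never inspects the hypothesis $\val\sctx \tovv\rulename \val'\sctx'$ at all. By contrast, you first perform a genuine structural case analysis on that reduction, dispatch two of the three resulting cases by reflexivity, and in the remaining case (head value $\val = \vartwo$ substituted by a top-level $\rulesub\vartwo{\val'}$-step) prove the strictly stronger fact that $\tm_1$ admits an \emph{actual} $\rulesub\vartwo{\val'}$-step. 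So your approach yields more information in that sub-case, at the price of the extra structural lemma and case split; the paper's proof is shorter precisely because the reflexive closure in the conclusion lets it avoid ever distinguishing whether the head $\val$ was touched.
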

% Label: lem:t_reducessub_t1_vL_reduces_v1L1_t1_reduces_t1p

\begin{proof}
By induction on the derivation of $\tm \tovv{\rulesub\var\val} \tm_1$.
The most interesting cases are \ruleVSub and \ruleVEsL, while the 
rest are analogous to the \ruleVEsL case.
\begin{itemize}
\item \ruleVSub.
  Then, $\tm = \var \tovv{\rulesub\var\val} \val = \tm_1$, so we take 
  $\tm'_1 \eqdef \val$, so we are done.
% \item \ruleVAppL.
%   Then,
%   \[
%     \inferrule{
%       \tmtwo \tovv{\rulesub\var\val} \tmtwo_1
%     }{
%       \tm = \tmtwo \, \tmthree \tovv{\rulesub\var\val} \tmtwo_1 \, \tmthree = \tm_1
%     }\ruleVAppL
%   \]
%   We apply the \ih on $\tmtwo$, yielding $\tmtwo'_1$ such that
%   $\tmtwo_1 \tovveq\rulename \tmtwo'_1$.
%   Applying rule \ruleVAppL, we obtain
%   $\tm_1 = \tmtwo_1 \, \tmthree \tovveq\rulename \tmtwo'_1 \, \tmthree = \tm'_1$.
% \item \ruleVAppR.
%   Then
%   \[
%     \inferrule{
%       \tmthree \tovv{\rulesub\var\val} \tmthree_1
%     }{
%       \tm = \tmtwo \, \tmthree \tovv{\rulesub\var\val} \tmtwo \, \tmthree_1 = \tm_1
%     }\ruleVAppR
%   \]
%   We apply the \ih on $\tmthree$, yielding $\tmthree'_1$ such that
%   $\tmthree_1 \tovveq\rulename \tmthree'_1$.
%   Applying rule \ruleVAppR we obtain
%   $\tm_1 = \tmtwo \, \tmthree_1 \tovveq\rulename \tmtwo \, \tmthree'_1 = \tm'_1$.
\item \ruleVEsL.
  Then
  \[
    \inferrule{
      \tmtwo \tovv{\rulesub\var\val} \tmtwo_1
      \sep
      \vartwo \notin \fv{\rulesub\var\val}
    }{
      \tm = \tmtwo\esub\vartwo\tmthree \tovv{\rulesub\var\val} \tmtwo_1\esub\vartwo\tmthree = \tm_1
    }\ruleVEsL
  \]
  We apply the \ih on $\tmtwo$, yielding $\tmtwo'_1$ such that
  $\tmtwo_1 \tovveq\rulename \tmtwo'_1$.
  We may assume $\vartwo \notin \fv\rulename$ by $\alpha$-conversion,
  so we can apply rule \ruleVEsL, yielding
  $\tm_1 = \tmtwo_1\esub\vartwo\tmthree \tovveq\rulename \tmtwo'_1\esub\vartwo\tmthree = \tm'_1$.
% \item \ruleVEsR.
%   Then
%   \[
%     \inferrule{
%       \tmthree \tovv{\rulesub\var\val} \tmthree_1
%     }{
%       \tm = \tmtwo\esub\vartwo\tmthree \tovv{\rulesub\var\val} \tmtwo\esub\vartwo{\tmthree_1} = \tm_1
%     }\ruleVEsR
%   \]
%   We apply the \ih on $\tmthree$, yielding $\tmthree'_1$ such that
%   $\tmthree_1 \tovveq\rulename \tmthree'_1$.
%   Applying rule \ruleVEsR we obtain
%   $\tm_1 = \tmtwo\esub\vartwo{\tmthree_1} \tovveq\rulename \tmtwo\esub\vartwo{\tmthree'_1} = \tm'_1$.
\end{itemize}
\end{proof}

\begin{proposition}[Local Confluence / WCR of $\tovalas$]
\label{lem:tovalas_wcr}
Let $\tm$ be a $\valas$-reducible term such that
$\tm \tovv{\rulename_1} \tm_1$ and
$\tm \tovv{\rulename_2} \tm_2$, with 
$\rulename_1, \rulename_2 \in \tovalas$.
Then, there exists $\tm'$ such that $\tm_1 \tovvn{\rulename_2} \tm'$
and $\tm_2 \tovvn{\rulename_1} \tm'$.
\end{proposition}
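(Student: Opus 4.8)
The plan is to prove local confluence of $\tovalas$ by induction on the structure of $\tm$, with a case analysis on the last inference rule used in each of the two derivations $\tm \tovv{\rulename_1} \tm_1$ and $\tm \tovv{\rulename_2} \tm_2$. Since $\tovalas$ is the union of $\tovv\rulelsv$ with the fixed-value substitution steps $\tovv{\rulesub\var{\valas(\var)}}$ for $\var \in \dom\valas$, the relevant rules are \ruleVSub, \ruleVLsv, \ruleVAppL, \ruleVAppR, \ruleVEsL, and \ruleVEsR. First I would dispose of the easy families: if $\tm_1 = \tm_2$ (in particular if both steps fire the same redex) we take $\tm' \eqdef \tm_1$; if the two steps reduce \emph{parallel} subterms — one via \ruleVAppL and the other via \ruleVAppR, or one in the body and the other in the argument of an explicit substitution — the two reductions commute in one step each, each being preserved when the other is performed, and the side condition $\var \notin \fv\rulename$ of \ruleVEsL remains valid because $\fv\rulelsv = \emptyset$ and the $\rulesub\var{\valas(\var)}$-steps are $\alpha$-renamed away from the binder; if the two steps reduce the \emph{same} proper subterm through the same congruence rule, the induction hypothesis applied to that smaller subterm (which is again $\valas$-reducible) closes the diagram after re-wrapping the joining sequences with the congruence rule. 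In all these cases the labels $\rulename_1$, $\rulename_2$ end up in the required positions automatically.

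The substantive work consists of the critical pairs involving rule \ruleVLsv on a closure $\tm = \tmtwo\esub\var{\val\sctx}$, whose redex hides an internal substitution step $\tmtwo \tovv{\rulesub\var\val} \tmtwo_1$; note that the decomposition of the argument as $\val\sctx$ (outermost value plus trailing explicit substitutions) is unique, so two \ruleVLsv-steps on $\tm$ use the same $\val$ and the same $\sctx$. There are three sub-cases. When the other step also fires \ruleVLsv on the same closure, the problem reduces to joining the internal steps $\tmtwo \tovv{\rulesub\var\val} \tmtwo_1$ and $\tmtwo \tovv{\rulesub\var\val}\tmtwo_2$: since $\var \notin \fv\val$, substituting one occurrence of $\var$ neither erases nor duplicates the occurrence targeted by the other step, so these steps enjoy a diamond property, which I would prove by an auxiliary induction on $\tmtwo$ and then transport back into the closure using \cref{lem:vL_reduction_tovalas}(2). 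When the other step reduces the argument $\val\sctx$ (rule \ruleVEsR), \cref{lem:vL_reduction_tovalas}(1) guarantees the reduced argument is still of the form $\val'\sctx'$, \cref{lem:t_reducessub_t1_vL_reduces_v1L1_t1_reduces_t1p} lets me replay the internal substitution after the argument has moved, and \cref{lem:vL_reduction_tovalas}(2) lets me replay the argument move inside the closure, using \cref{lem:nu_sub_change_of_values} to realign the substituted value when the step on $\val\sctx$ itself changes $\val$. When the other step reduces the body $\tmtwo$ through \ruleVEsL by some step kind $\rulename_2$, I need to join $\tmtwo \tovv{\rulesub\var\val}\tmtwo_1$ with $\tmtwo \tovv{\rulename_2}\tmtwo_2$ inside $\tmtwo$, again appealing to \cref{lem:nu_sub_change_of_values} when $\rulename_2$ is itself a substitution by a different value.

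The main obstacle I anticipate is precisely this last family: the auxiliary $\rulesub\var\val$-step buried inside a \ruleVLsv-redex is \emph{not} a $\tovalas$-step (its value $\val$ comes from the explicit substitution, not from $\valas$), so the plain induction hypothesis does not literally apply to it. I would resolve this by strengthening the statement proved by induction to also cover parallel reductions in which one or both steps is an arbitrary linear substitution step $\tovv{\rulesub\var\val}$ with $\var \notin \fv\val$; this is harmless, since such steps commute with everything else for the same ``no erasure, no duplication'' reason as above, and only afterwards specialise to the $\tovalas$-labelled statement of the proposition. A secondary, purely clerical point is to keep the step-kind labels attached to every arrow while drawing the critical-pair diagrams, so that the conclusion is obtained in its precise commuting-square shape rather than merely as the existence of a common reduct.
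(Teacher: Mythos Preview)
Your approach is correct and mirrors the paper's: structural induction on $\tm$ with a case analysis on the last rules used, the same critical pairs involving \ruleVLsv, and the same supporting lemmas (\cref{lem:vL_reduction_tovalas}, \cref{lem:nu_sub_change_of_values}, \cref{lem:t_reducessub_t1_vL_reduces_v1L1_t1_reduces_t1p}). The one point where your plan diverges is the obstacle you flag: the paper neither strengthens the statement nor mounts an auxiliary induction. It simply applies the main induction hypothesis on the subterm $\tmtwo$, which is legitimate because the proposition is implicitly universally quantified over $\valas$ --- so when descending into the body of $\tmtwo\esub\var{\val\sctx}$ one may instantiate the IH with the extended assignment $\valas \cup (\var \mapsto \val)$, under which the internal step $\tmtwo \tovv{\rulesub\var\val} \tmtwo_i$ \emph{is} a $\to_{\valas'}$-step. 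Your proposed strengthening is sound but more machinery than needed; the paper's direct route is shorter. One small slip: in the \ruleVLsv--\ruleVLsv case you cite \cref{lem:vL_reduction_tovalas}(2) to ``transport back into the closure'', but that lemma reduces \emph{inside} an ES argument; the correct tools to rewrap are rule \ruleVLsv itself followed by iterated \ruleVEsL to cross $\sctx$, exactly as the paper does.
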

% Label: lem:tovalas_wcr

\begin{proof}
By induction on $\tm$.
Case $\tm = \lam\var\tmtwo$ is impossible, since there are no rules 
to reduce abstractions. We analyse the remaining cases.
\begin{itemize}
\item $\tm = \var$.
  The only rule to reduce $\var$ is \ruleVSub, and thus
  $\tm = \var \tovv{\rulesub\var{\dom\valas}} \valas(\var) = \tm_1 = \tm_2$.
  So we are done.
\item $\tm = \tmtwo \, \tmthree$.
  We can $\valas$-reduce $\tm$ with the rules \ruleVAppL and \ruleVAppR,
  so we have the following cases:
  \begin{itemize}
  \item \ruleVAppL-\ruleVAppR.
    Then, $\tm = \tmtwo \, \tmthree \tovv{\rulename_1} 
    \tmtwo_1 \, \tmthree = \tm_1$, which is derived from 
    (1) $\tmtwo \tovv{\rulename_1} \tmtwo_1$, and we have
    $\tm = \tmtwo \, \tmthree \tovv{\rulename_2} 
    \tmtwo \, \tmthree_2 = \tm_2$, which is derived from 
    (2) $\tmthree \tovv{\rulename_2} \tmthree_2$.
    We apply rule \ruleVAppL with (1) as premise, yielding
    $\tm_2 = \tmtwo \, \tmthree_2 \tovv{\rulename_1} 
    \tmtwo_1 \, \tmthree_2 = \tm'$.
    And we apply rule \ruleVAppR with (2) as premise, yielding
    $\tm_1 = \tmtwo_1 \, \tmthree \tovv{\rulename_2} 
    \tmtwo_1 \, \tmthree_2 = \tm'$.
    The following diagram summarises the proof:
    \[
      \xymatrixcolsep{2pc}
      \xymatrix{
          \tm = \tmtwo \, \tmthree 
            \arVr{\rulename_1} 
            \arVd{\rulename_2}
        & \tmtwo_1 \, \tmthree = \tm_1
            \arsdVd{\rulename_2}
      \\
          \tm_2 = \tmtwo \, \tmthree_2
            \arsdVr{\rulename_1}
        & \tmtwo_1 \, \tmthree_2 = \tm'
      }
    \]
  \item \ruleVAppL-\ruleVAppL.
    We have
    $\tm = \tmtwo \, \tmthree \tovv{\rulename_1} 
    \tmtwo_1 \, \tmthree = \tm_1$, which is derived from 
    $\tmtwo \tovv{\rulename_1} \tmtwo_1$, and we have
    $\tm = \tmtwo \, \tmthree \tovv{\rulename_2} 
    \tmtwo_2 \, \tmthree = \tm_2$, which is derived from 
    $\tmtwo \tovv{\rulename_2} \tmtwo_2$.
    We apply the \ih on $\tmtwo$, yielding $\tmtwo'$ such that
    $\tmtwo_1 \tovvn{\rulename_2} \tmtwo'$ and
    $\tmtwo_2 \tovvn{\rulename_1}\ \tmtwo'$.
    Applying rule \ruleVAppL to reduce both $\tmtwo_1 \, \tmthree$ 
    and $\tmtwo_2 \, \tmthree$, we obtain
    $\tm_1 = \tmtwo_1 \, \tmthree \tovvn{\rulename_2} \tmtwo' \, \tmthree = \tm'$
    and
    $\tm_2 = \tmtwo_2 \, \tmthree \tovvn{\rulename_1} \tmtwo' \, \tmthree = \tm'$
    respectively.
    The following diagram summarises the proof:
    \[
      \xymatrixcolsep{2pc}
      \xymatrix{
          \tm = \tmtwo \, \tmthree 
            \arVr{\rulename_1} 
            \arVd{\rulename_2}
        & \tmtwo_1 \, \tmthree = \tm_1
            \arsdVdn{\rulename_2}
      \\
          \tm_2 = \tmtwo_2 \, \tmthree
            \arsdVrn{\rulename_1}
        & \tmtwo' \, \tmthree = \tm'
      }
    \]
  \item \ruleVAppR-\ruleVAppR.
    Analogous to the previous case.
  \end{itemize}
\item $\tm = \tmtwo\esub\var\tmthree$.
  We can $\valas$-reduce $\tm$ with the rules \ruleVLsv, \ruleVEsL, 
  and \ruleVEsR, so we have the following cases:
  \begin{itemize}
  \item \ruleVLsv-\ruleVLsv.
    We have
    \[
      \inferrule{
        \tmtwo \tovv{\rulesub\var\val} \tmtwo_1
      }{
        \tm = \tmtwo\esub\var{\val\sctx}
        \tovv\rulelsv
        \tmtwo_1\esub\var\val\sctx = \tm_1
      }\ruleVLsv
    \]
    where $\tmthree = \val\sctx$ and $\rulename_1 = \rulelsv$,
    and we have
    \[
      \inferrule{
        \tmtwo \tovv{\rulesub\var\val} \tmtwo_2
      }{
        \tm = \tmtwo\esub\var{\val\sctx} \tovv\rulelsv
        \tmtwo_2\esub\var\val\sctx = \tm_2
      }\ruleVLsv
    \]
    where $\rulename_2 = \ruleVLsv$.
    We apply the \ih on $\tmtwo$, yielding $\tmtwo'$ such that
    $\tmtwo_1 \tovvn{\rulesub\var\val} \tmtwo'$ and
    $\tmtwo_2 \tovvn{\rulesub\var\val} \tmtwo'$.
    We can then apply rule \ruleVLsv on both 
    $\tmtwo_1\esub\var\val$ and $\tmtwo_2\esub\var\val$, yielding
    $\tm_1 = \tmtwo_1\esub\var\val \tovvn\rulelsv \tmtwo'\esub\var\val$
    and
    $\tm_2 = \tmtwo_2\esub\var\val \tovvn\rulelsv \tmtwo'\esub\var\val$ resp.
    To conclude,
    $\tmtwo_1\esub\var\val\sctx \tovvn\rulelsv \tmtwo'\esub\var\val\sctx$
    and
    $\tmtwo_2\esub\var\val\sctx \tovvn\rulelsv \tmtwo'\esub\var\val\sctx$
    is derived by successively applying rule \ruleVEsL,
    as $\fv\rulelsv = \emptyset$.
    The following diagram summarises the proof:
    \[
      \xymatrixcolsep{2pc}
      \xymatrix{
          \tm = \tmtwo\esub\var{\val\sctx} 
            \arVr\rulelsv 
            \arVd\rulelsv
        & \tmtwo_1\esub\var\val\sctx = \tm_1
            \arsdVdn\rulelsv
      \\
          \tm_2 = \tmtwo_2\esub\var\val\sctx
            \arsdVrn\rulelsv
        & \tmtwo'\esub\var\val\sctx = \tm'
      }
    \]
  \item \ruleVLsv-\ruleVEsL.
    We have
    \[
      \inferrule{
        \tmtwo \tovv{\rulesub\var\val} \tmtwo_1
      }{
        \tm = \tmtwo\esub\var{\val\sctx}
        \tovv\rulelsv
        \tmtwo_1\esub\var\val\sctx = \tm_1
      }\ruleVLsv
    \]
    where $\tmthree = \val\sctx$ and $\rulename_1 = \rulelsv$,
    and we have
    \[
      \inferrule{
        \tmtwo \tovv{\rulename_2} \tmtwo_2
        \sep
        \var \notin \fv{\rulename_2} \ (1)
      }{
         \tm = \tmtwo\esub\var{\val\sctx}
         \tovv{\rulename_2}
         \tmtwo_2\esub\var{\val\sctx} = \tm_2
      }\ruleVEsL
    \]
    We apply the \ih on $\tmtwo$, yielding $\tmtwo'$ such that
    (2) $\tmtwo_1 \tovvn{\rulename_2} \tmtwo'$ and 
    (3) $\tmtwo_2 \tovvn{\rulesub\var\val} \tmtwo'$.
    Then,
    $\tmtwo_1\esub\var\val \tovvn{\rulename_2} \tmtwo'\esub\var\val$
    by rule \ruleVEsL, with (2) and (1) as premises.
    And we conclude that
    $\tm_1 = \tmtwo_1\esub\var\val\sctx \tovvn{\rulename_2} 
    \tmtwo'\esub\var\val\sctx$ is derived by successively applying 
    rule \ruleVEsL.
    On the other hand,
    $\tm_2 = \tmtwo_2\esub\var{\val\sctx} \tovvn\rulelsv
    \tmtwo'\esub\var\val\sctx$ by rule \ruleVLsv with (3) as premise.
    The following diagram summarises the proof:
    \[
      \xymatrixcolsep{2pc}
      \xymatrix{
          \tm = \tmtwo\esub\var{\val\sctx} 
            \arVr\rulelsv 
            \arVd{\rulename_2}
        & \tmtwo_1\esub\var\val\sctx = \tm_1
            \arsdVdn{\rulename_2}
      \\
          \tm_2 = \tmtwo_2\esub\var{\val\sctx}
            \arsdVrn\rulelsv
        & \tmtwo'\esub\var\val\sctx = \tm'
      }
    \]
  \item \ruleVLsv-\ruleVEsR. 
    We have
    \[
      \inferrule{
        \tmtwo \tovv{\rulesub\var\val} \tmtwo_1 \ (1)
      }{
        \tm = \tmtwo\esub\var{\val\sctx}
        \tovv\rulelsv
        \tmtwo_1\esub\var\val\sctx = \tm_1
      }\ruleVLsv
    \]
    where $\tmthree = \val\sctx$ and $\rulename_1 = \rulelsv$, and 
    we have
    $\tm = \tmtwo\esub\var{\val\sctx} \tovv{\rulename_2}
    \tmtwo\esub\var{\tmthree_2} = \tm_2$, which is derived from
    (2) $\val\sctx \tovv{\rulename_2} \tmthree_2$.
    Moreover, $\tmthree_2 = \val'\sctx'$ by 
    \cref{lem:vL_reduction_tovalas} (1).
    By \cref{lem:nu_sub_change_of_values}, there exists $\tmtwo'_1$ 
    such that $\tmtwo \tovv{\rulesub\var{\val'}} \tmtwo'_1$, hence we 
    can apply rule \ruleVLsv, yielding
    $\tm_2 = \tmtwo\esub\var{\val'\sctx'} \tovv\rulelsv 
    \tmtwo'_1\esub\var{\val'}\sctx' = \tm'$.
    On the other hand, we have
    $\tmtwo_1\esub\var\val\sctx \tovv{\rulename_2} \tmtwo_1\esub\var{\val'}\sctx'$
    by \cref{lem:vL_reduction_tovalas} (2).
    And since $\tmtwo_1 \tovv{\rulename_2} \tmtwo'_1$ 
    by \cref{lem:t_reducessub_t1_vL_reduces_v1L1_t1_reduces_t1p},
    we can derive 
    $\tmtwo_1\esub\var{\val'} \tovv{\rulename_2} \tmtwo'_1\esub\var{\val'}$
    by rule \ruleVEsL, given that $\var \notin \fv{\rulename_2}$ by $\alpha$-conversion.
    Then,
    $\tmtwo_1\esub\var{\val'}\sctx' \tovv{\rulename_2} \tmtwo'_1\esub\var{\val'}\sctx'$
    by applying rule \ruleVEsL (length of $\sctx'$) times.
    The following diagram summarises the proof:
    \[
      \xymatrixcolsep{2pc}
      \xymatrix{
          \tm = \tmtwo\esub\var{\val\sctx}
            \arVr\rulelsv 
            \arVdd{\rulename_2}
        & \tmtwo_1\esub\var\val\sctx = \tm_1
            \arsdVd{\rulename_2}
      \\
        & \tmtwo_1\esub\var{\val'}\sctx'
            \arsdVd{\rulename_2}
      \\
          \tm_2 = \tmtwo\esub\var{\val'\sctx'}
            \arsdVr\rulelsv
        & \tmtwo'_1\esub\var{\val'}\sctx' = \tm'
      }
    \]
  \item \ruleVEsL-\ruleVEsL.
    We have
    $\tm = \tmtwo\esub\var\tmthree \tovv{\rulename_1}
    \tmtwo_1\esub\var\tmthree = \tm_1$ which is derived from
    $\tmtwo \tovv{\rulename_1} \tmtwo_1$ and 
    $\var \notin \fv{\rulename_1}$, and we have
    $\tm = \tmtwo\esub\var\tmthree \tovv{\rulename_2}
    \tmtwo_2\esub\var\tmthree = \tm_2$ which is derived from
    $\tmtwo \tovv{\rulename_2} \tmtwo_2$ and 
    $\var \notin \fv{\rulename_2}$.
    We apply the \ih on $\tmtwo$, yielding $\tmtwo'$ such that
    $\tmtwo_1 \tovvn{\rulename_2} \tmtwo'$ and
    $\tmtwo_2 \tovvn{\rulename_1} \tmtwo'$.
    Applying rule \ruleVEsL, we derive
    $\tm_1 = \tmtwo_1\esub\var\tmthree \tovvn{\rulename_2} \tmtwo'\esub\var\tmthree = \tm'$
    and
    $\tm_2 = \tmtwo_2\esub\var\tmthree \tovvn{\rulename_1} \tmtwo'\esub\var\tmthree = \tm'$
    respectively.
    The following diagram summarises the proof:
    \[
      \xymatrixcolsep{2pc}
      \xymatrix{
          \tm = \tmtwo\esub\var\tmthree
            \arVr{\rulename_1} 
            \arVd{\rulename_2}
        & \tmtwo_1\esub\var\tmthree = \tm_1
            \arsdVdn{\rulename_2}
      \\
          \tm_2 = \tmtwo_2\esub\var\tmthree
            \arsdVrn{\rulename_1}
        & \tmtwo'\esub\var\tmthree = \tm'
      }
    \]
  \item \ruleVEsL-\ruleVEsR.
    We have
    $\tm = \tmtwo\esub\var\tmthree \tovv{\rulename_1}
    \tmtwo_1\esub\var\tmthree = \tm_1$,
    which is derived from (1) $\tmtwo \tovv{\rulename_1} \tmtwo_1$
    and (2) $\var \notin \fv{\rulename_1}$, and we have
    $\tm = \tmtwo\esub\var\tmthree \tovv{\rulename_1}
    \tmtwo\esub\var{\tmthree_2} = \tm_2$,
    which is derived from (3) $\tmthree \tovv{\rulename_2} \tmthree_2$.
    We can apply rule \ruleVEsR with (3) as premise, yielding
    $\tm_1 = \tmtwo_1\esub\var\tmthree \tovv{\rulename_2}
    \tmtwo_1\esub\var{\tmthree_2}$.
    On the other hand, we can apply rule \ruleVEsL with (1) and (2) 
    as premises, yielding
    $\tm_2 = \tmtwo\esub\var{\tmthree_2} \tovv{\rulename_1}
    \tmtwo_1\esub\var{\tmthree_2}$.
    The following diagram summarises the proof:
    \[
      \xymatrixcolsep{2pc}
      \xymatrix{
          \tmtwo\esub\var\tmthree
            \arVr{\rulename_1} 
            \arVd{\rulename_2}
        & \tmtwo_1\esub\var\tmthree = \tm_1
            \arsdVd{\rulename_2}
      \\
          \tm_2 = \tmtwo\esub\var{\tmthree_2}
            \arsdVr{\rulename_1}
        & \tmtwo_1\esub\var{\tmthree_2} = \tm'
      }
    \]
  \item \ruleVEsR-\ruleVEsR. % Analogo al caso esL-esL
    We have
    $
     \tm = \tmtwo\esub\var\tmthree
     \tovv{\rulename_1}
     \tmtwo\esub\var{\tmthree_1} = \tm_1
    $,
    which is derived from
    $\tmthree \tovv{\rulename_1} \tmthree_1$;
    and we have
    $
      \tm = \tmtwo\esub\var\tmthree
      \tovv{\rulename_2}
      \tmtwo\esub\var{\tmthree_2} = \tm_2
    $
    which is derived from
    $\tmthree \tovv{\rulename_2} \tmthree_2$.
    We apply \ih on $\tmthree$, yielding $\tmthree'$ such that
    $\tmthree_1 \tovvn{\rulename_2} \tmthree'$
    and
    $\tmthree_2 \tovvn{\rulename_1} \tmthree'$.
    Applying rule \ruleVEsR we derive
    $
      \tm_1 = \tmtwo\esub\var{\tmthree_1} 
      \tovvn{\rulename_2}
      \tmtwo\esub\var{\tmthree'} = \tm'$
    and
    $
      \tm_2 = \tmtwo\esub\var{\tmthree_2} 
      \tovvn{\rulename_1}
      \tmtwo\esub\var{\tmthree'} = \tm'
    $
    respectively.
    The following diagram summarises the proof:
    \[
      \xymatrixcolsep{2pc}
      \xymatrix{
          \tmtwo\esub\var\tmthree
            \arVr{\rulename_1} 
            \arVd{\rulename_2}
        & \tmtwo\esub\var{\tmthree_1} = \tm_1
            \arsdVdn{\rulename_2}
      \\
          \tm_2 = \tmtwo\esub\var{\tmthree_2}
            \arsdVrn{\rulename_1}
        & \tmtwo\esub\var{\tmthree'} = \tm'
      }
    \]
  \end{itemize}
\end{itemize}
\end{proof}

Note that the second point of \cref{lem:decreasing-measure} provides 
a \emph{decreasing measure} for $\tovalas$ reduction, and the 
previous lemma states that $\tovalas$ is locally confluent.
Hence:
\begin{theorem}
\label{cor:tovalas-terminating}
\label{cor:tovalas_confluent}
\label{cor:uniqueness}
The reduction relation $\tovalas$ is terminating and confluent.
In particular, a term $\tm$ always has a unique $\tovalas$-normal form.
\end{theorem}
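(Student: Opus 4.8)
The plan is to assemble the result from the two technical lemmas just proved, plus the abstract rewriting facts recalled in \cref{sec:preliminary_notions}.

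First, \textbf{termination}. By the second item of \cref{lem:decreasing-measure}, whenever $\tm \tovalas \tm'$ we have $\measvalas\tm > \measvalas{\tm'}$. Since $\measvalas{\_}$ takes values in $\Nat$, no strictly decreasing sequence of such values can be infinite, so there is no infinite $\tovalas$-sequence starting from any term; hence $\tovalas$ is terminating. Note that the measure depends on the value assignment $\valas$, but this is harmless: $\valas$ is fixed throughout, and the claim is about the relation $\tovalas$ for that fixed $\valas$.

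Second, \textbf{confluence}. \cref{lem:tovalas_wcr} establishes that $\tovalas$ is locally confluent (WCR): any two one-step $\tovalas$-reducts of a common term can be joined by $\tovalas$-reduction sequences. Combining termination with local confluence, Newman's Lemma (recalled in \cref{sec:preliminary_notions}: ``any relation verifying termination and local confluence is confluent~\cite{Terese03}'') yields that $\tovalas$ is confluent.

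Finally, for \textbf{uniqueness of normal forms}: given any term $\tm$, termination guarantees that a $\tovalas$-normal form exists (iterate reduction; it must halt), and confluence guarantees that it is unique (again as recalled in \cref{sec:preliminary_notions}: a confluent relation has at most one normal form per term). I do not expect any real obstacle here, since all the substantive work — defining the measure $\measvalas{\_}$ and proving it decreases, and carrying out the critical-pair analysis for local confluence — has already been discharged in \cref{lem:decreasing-measure,lem:tovalas_wcr}; the present statement is just a one-line corollary of those together with Newman's Lemma.
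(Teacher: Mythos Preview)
Your proposal is correct and follows essentially the same approach as the paper: termination from the decreasing measure of \cref{lem:decreasing-measure}, confluence from local confluence (\cref{lem:tovalas_wcr}) via Newman's Lemma, and uniqueness of normal forms from confluence. The paper's proof is even more terse than yours, but the structure is identical.
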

\begin{proof}
Termination is a straightforward consequence of 
\cref{lem:decreasing-measure}.
Confluence results from the local confluence of $\tovalas$ 
(\cref{lem:tovalas_wcr} in \cref{app:relating}) and 
Newman's Lemma~(\cf \cref{sec:preliminary_notions}).
Confluence ensures the uniqueness of normal forms.
\end{proof}

Notably, from this theorem we conclude that
\begin{corollary}
\label{cor:tolsv-terminating}
The reduction relation $\tovv\rulelsv$ is terminating.
\end{corollary}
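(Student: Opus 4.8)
The plan is to derive this as an almost immediate consequence of \cref{cor:tovalas-terminating}, which asserts that $\tovalas$ is terminating for every value assignment $\valas$. The key observation is that the relation $\tovalas$ was defined as $\tovalas\, \eqdef\, \tovv\rulelsv \cup_{\var\in\dom\valas} (\tovv{\rulesub\var{\valas(\var)}})$, so when we instantiate $\valas$ with the \emph{empty} value assignment $\evalas$, the indexed union over $\var \in \dom\evalas$ is empty (since $\dom\evalas = \emptyset$), and hence $\tovalas[\evalas]$ coincides exactly with $\tovv\rulelsv$.

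Concretely, first I would note that $\tovalas[\evalas] = \tovv\rulelsv$ as binary relations on terms. Then, by \cref{cor:tovalas-terminating} applied to $\valas \eqdef \evalas$, the relation $\tovalas[\evalas]$ is terminating, i.e., there is no infinite $\tovalas[\evalas]$-sequence starting at any term. Transporting this along the equality of relations, there is no infinite $\tovv\rulelsv$-sequence either, which is precisely the claim that $\tovv\rulelsv$ is terminating.

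As an alternative route that avoids even this small step, one could invoke \cref{lem:decreasing-measure} directly: its first item states that $\tm \tovv\rulelsv \tm'$ implies $\meas\tm > \meas{\tm'}$, so the natural-number-valued measure $\meas{\_}$ strictly decreases along every $\tovv\rulelsv$-step, and since $(\mathbb{N}, >)$ is well-founded, $\tovv\rulelsv$ must be terminating. I do not anticipate any real obstacle here; the only point requiring a line of care is spelling out that the empty value assignment collapses the defining union of $\tovalas$ onto the single component $\tovv\rulelsv$, which is a direct unfolding of the definition.

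\begin{proof}
The relation $\tovalas$ was defined as
$\tovalas\, \eqdef\, \tovv\rulelsv \cup_{\var\in\dom\valas} (\tovv{\rulesub\var{\valas(\var)}})$.
Taking $\valas \eqdef \evalas$, the empty value assignment, we have $\dom\evalas = \emptyset$, so the indexed union is empty and $\tovalas[\evalas]$ coincides with $\tovv\rulelsv$. By \cref{cor:tovalas-terminating}, $\tovalas[\evalas]$ is terminating, hence so is $\tovv\rulelsv$. (Equivalently, by the first item of \cref{lem:decreasing-measure}, the measure $\meas{\_}$ strictly decreases along every $\tovv\rulelsv$-step, and $\mathbb{N}$ with $>$ is well-founded.)
\end{proof}
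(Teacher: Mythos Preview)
Your proof is correct and matches the paper's intended argument: the corollary is stated immediately after \cref{cor:tovalas-terminating} with no separate proof, since instantiating $\valas \eqdef \evalas$ collapses $\tovalas[\evalas]$ to $\tovv\rulelsv$ exactly as you spell out. Your alternative via \cref{lem:decreasing-measure} is also sound and is in fact how \cref{cor:tovalas-terminating} itself is established.
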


Let $\valas$ be a value assignment. The following corollary relates 
the normal form of the reduction relation $\tovalas$ and the partial 
unfolding of a term under $\valas$.
\begin{corollary}
\label{coro:unfolding_unique_nf}
For any term $\tm$ and any value assigment $\valas$, the (unique) 
$\valas$-normal form of $\tm$ is $\unvalas\tm$.
\end{corollary}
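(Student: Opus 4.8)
The plan is to prove that the unique $\valas$-normal form of $\tm$ (which exists by \cref{cor:uniqueness}) is exactly the partial unfolding $\unvalas\tm$. By \cref{cor:uniqueness} it suffices to establish:
\begin{enumerate}
\item $\tm \tonvalas \unvalas\tm$; and
\item $\unvalas\tm$ is $\tovalas$-irreducible.
\end{enumerate}
I would prove (1) and (2) simultaneously, by induction on the structure of $\tm$, keeping the value assignment $\valas$ universally quantified, since the explicit-substitution case of the partial unfolding recurses on a strict subterm but under the enlarged assignment $\valas' \defeq \valas \cup (\var \mapsto \val)$ (which, by the $\alpha$-conventions accompanying the definition of $\unvalas{\_}$, is again a legal value assignment). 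Two easy auxiliary facts, both by routine structural induction mirroring the definition of $\unvalas{\_}$, will be used throughout: the ``saturation'' property $\rv{\unvalas\tm} \cap \dom\valas = \emptyset$, and more precisely $\rv{\unvalas\tm} \subseteq (\rv\tm \setminus \dom\valas) \cup \rv{\im\valas}$; and the monotonicity $\tovalas \subseteq \tovalas[\valas']$ whenever $\valas'$ extends $\valas$ (immediate, as the two assignments agree on $\dom\valas$), so that $\tovalas[\valas']$-normality implies $\tovalas$-normality.

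For (1): the variable case is a single \ruleVSub step (or zero steps if $\var \notin \dom\valas$), the abstraction case is trivial (no rule reduces an abstraction), and the application case follows by lifting the two inductive reductions through the congruences \ruleVAppL and \ruleVAppR. For a closure $\tm\esub\var\tmtwo$ in the ``otherwise'' branch of the definition, I lift $\tmtwo \tonvalas \unvalas\tmtwo$ through \ruleVEsR and $\tm \tonvalas \unvalas\tm$ through \ruleVEsL (its side condition $\var \notin \fv\rulename$ holds after $\alpha$-renaming $\var$ away from $\dom\valas \cup \fv{\im\valas}$). In the branch where $\unvalas\tmtwo = \val\sctx$ and $\var \in \rv\tm$, so that $\unvalas{\tm\esub\var\tmtwo} = (\unvalas[\valas']{\tm})\esub\var\val\sctx$: I first reduce $\tm\esub\var\tmtwo \tonvalas \tm\esub\var{\val\sctx}$ via \ruleVEsR and the IH on $\tmtwo$; then, as $\var \in \rv\tm$ makes $\tm$ be $\rulesub\var\val$-reducible (a routine induction, \cf \cref{rem:t_reduces_with_subvar_var_occurs_free_in_t} and \cref{lem:nu_term_reducible_arg_esub_reducible}), a single \ruleVLsv step yields $\tm\esub\var{\val\sctx} \tovv\rulelsv \tm_1\esub\var\val\sctx$, \emph{relocating} $\sctx$ outside the substitution; finally, applying the IH to the subterm $\tm$ under $\valas'$ and invoking \cref{cor:uniqueness}, the term $\unvalas[\valas']{\tm}$ is the $\valas'$-normal form of $\tm$, hence reachable from its reduct $\tm_1$, and the reduction $\tm_1 \tonvalas[\valas'] \unvalas[\valas']{\tm}$ lifts through $\esub\var\val$ and the ESs of $\sctx$ ($\rulesub\var\val$-steps via \ruleVLsv, other $\valas'$-steps via \ruleVEsL, and all of them through $\sctx$ via \ruleVEsL), reaching exactly $(\unvalas[\valas']{\tm})\esub\var\val\sctx$.

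For (2): the variable and abstraction cases are immediate --- observe that $\dbsym$ is \emph{not} among the rules generating $\tovalas$, so even when $\unvalas\tm$ unfolds to a $\dbsym$-redex (as can happen inside an application) it remains $\tovalas$-normal --- and the application case follows from the two IHs since $\tovalas$ contracts $\rulelsv$-redexes only at closures and $\rulesub$-redexes only at variables. For a closure, I combine the IHs on $\tm$ (under $\valas$ or $\valas'$, according to the branch) and on $\tmtwo$, transferring normality from $\tovalas[\valas']$ to $\tovalas$ by monotonicity. In the branch with the relocated context the only new candidate redexes are a \ruleVLsv step at $\esub\var\val$ itself --- excluded by saturation, since $\var \notin \rv{\unvalas[\valas']{\tm}}$ --- and \ruleVLsv steps at the ESs of $\sctx$: such a step would already be available in $\unvalas\tmtwo = \val\sctx$, contradicting the IH on $\tmtwo$, once one checks, using the $\alpha$-freshness of $\domSctx\sctx$ together with the reachable-variable bound on $\unvalas[\valas']{\tm}$, that placing $\unvalas[\valas']{\tm}$ with its substitution $\esub\var\val$ above $\sctx$ introduces no new reachable occurrence of a variable of $\domSctx\sctx$. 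With (1) and (2) proven, \cref{cor:uniqueness} yields that $\unvalas\tm$ is the unique $\valas$-normal form of $\tm$, which is the claim.

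The hard part will be precisely this closure case in the branch where $\unvalas\tmtwo$ is a value in a substitution context and $\var$ is reachable: there one must track how \ruleVLsv simultaneously substitutes a single occurrence \emph{and} relocates the trailing context $\sctx$, checking that the reached term is \emph{syntactically} $\unvalas{\tm\esub\var\tmtwo}$ and not merely equivalent to it up to $\equiv$, and one must carefully bookkeep which reachable-variable occurrences grafting $\unvalas[\valas']{\tm}$ atop $\sctx$ creates or destroys, so as to rule out spurious new redexes. Isolating the two auxiliary lemmas stated above (reachable variables after unfolding; extension of the value assignment) is what makes this bookkeeping manageable.
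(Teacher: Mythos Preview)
Your proposal is correct and follows the same overall strategy as the paper: establish that $\tm \tonvalas \unvalas\tm$ and that $\unvalas\tm$ is $\tovalas$-irreducible, then invoke uniqueness of normal forms (\cref{cor:uniqueness}). The paper factors these two facts out as a standalone lemma (\cref{lem:unvalas_t_valasnf_t}), and further isolates the closure/main-branch normality argument as \cref{lem:tEsvL_NF_if_t_and_vL_NF}, so its proof of the corollary is three lines of citations; your version inlines those lemmas into a single simultaneous structural induction, and your explicit use of confluence in the closure case of (1) to get $\tm_1 \tonvalas[\valas'] \unvalas[\valas']\tm$ is arguably cleaner than the paper's decomposition of that reduction sequence.
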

\begin{proof}
By \cref{cor:tovalas-terminating}, $\tovalas$ is terminating, so 
there exists a term $\tmtwo$ such that $\tm \tonvalas \tmtwo$ and 
$\tmtwo$ is in $\valas$-normal form. 
By \cref{lem:unvalas_t_valasnf_t}(\ref{unvalas:dos}) 
(see \cref{app:relating}) we have $\tm \tonvalas \unvalas\tm$. 
Moreover, by \cref{lem:unvalas_t_valasnf_t}(\ref{unvalas:uno}) 
$\unvalas\tm$ is in $\valas$-normal form. 
Therefore by confluence of $\tovalas$ (\cref{cor:tovalas_confluent}), 
we conclude that $\unvalas\tm = \tmtwo$. 
\end{proof}

In particular, in the top-level case, $\unvalasevalas{\tm}$ 
is the (unique) $\rulelsv$-normal form of $\tm$.

\subsection{Relating \LOCBV and \UOCBV}
\label{sec:relating-locbv-uocbv}

\begin{lemma}
\label{lem:tEsvL_NF_if_t_and_vL_NF}
Let $\val\sctx$ be a term in $\valas$-normal form and $\tm$ be a term 
in $(\valas \cup (\var \mapsto \val))$-normal form.
Then, $\tm\esub\var\val\sctx$ is in $\valas$-normal form.
\end{lemma}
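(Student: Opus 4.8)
The goal is to show that $\tm\esub\var\val\sctx = (\tm\esub\var\val)\sctx$ admits no $\tovalas$-step --- that is, neither a $\tovv\rulelsv$-step nor a $\tovv{\rulesub\vartwo{\valas(\vartwo)}}$-step for any $\vartwo \in \dom\valas$. The plan is to proceed by induction on the substitution context $\sctx$, keeping $\valas$ universally quantified so that the inductive hypothesis may be instantiated at a larger value assignment. Throughout I would use freely that $\valas \cup (\var \mapsto \val)$ is a value assignment (so $\dom\valas \cap \fv\val = \emptyset$, and by $\alpha$-conversion $\var \notin \dom\valas \cup \fv{\im\valas} \cup \fv\val$ and the variables bound by $\sctx$ are fresh for $\tm$, $\val$, $\valas$), together with two easy facts: enlarging the value assignment enlarges the reduction (in particular $\tovalas \subseteq \tovalas[\valas \cup (\var\mapsto\val)]$), and a closure can reduce only by \ruleVLsv, \ruleVEsL, or \ruleVEsR.

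For the base case $\sctx = \ctxhole$ the term is the closure $\tm\esub\var\val$, and I would rule out each of the three possible steps. A step derived by \ruleVLsv must reduce the body via $\tm \tovv{\rulesub\var\val} \tm'$ (the argument $\val$, being a value, equals $\val\ctxhole$), so $\tm$ is $(\valas \cup (\var\mapsto\val))$-reducible, contradicting the second hypothesis. A step derived by \ruleVEsL of kind $\rulename$ --- where $\rulename$ is $\rulelsv$ or $\rulesub\vartwo{\valas(\vartwo)}$, and $\var \notin \fv\rulename$ by freshness --- gives $\tm \tovv\rulename \tm'$ with $\rulename$ a $\tovalas$-kind, hence $\tm$ is $(\valas \cup (\var\mapsto\val))$-reducible: contradiction again. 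A step derived by \ruleVEsR would reduce the value $\val$; abstractions do not reduce, and a variable $\val$ reduces only by \ruleVSub with kind $\rulesub\val{\_}$, which cannot be a $\tovalas$-kind since that requires $\val \in \dom\valas$, excluded by $\fv\val \cap \dom\valas = \emptyset$.

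For the inductive step, writing $\sctx = \sctxtwo\esub\vartwo\tmtwo$, the term is $(\tm\esub\var\val\sctxtwo)\esub\vartwo\tmtwo$ and $\val\sctx = (\val\sctxtwo)\esub\vartwo\tmtwo$. First I would unpack the hypothesis that $(\val\sctxtwo)\esub\vartwo\tmtwo$ is $\valas$-normal: pushing steps back up through the outer ES (in $\tmtwo$ via \ruleVEsR, in $\val\sctxtwo$ via \ruleVEsL, and --- when $\valPred\tmtwo$ holds --- also via \ruleVLsv) yields that $\tmtwo$ is $\valas$-normal and that $\val\sctxtwo$ is normal for $\valas'$, where $\valas'$ is $\valas$ itself when $\tmtwo$ is not of the form $\val\sctx$, and $\valas \cup (\vartwo \mapsto \valtwo)$, with $\valtwo$ the head value of $\tmtwo$, otherwise. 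Since $\vartwo$ is fresh for $\tm$, $\tm$ remains $(\valas' \cup (\var\mapsto\val))$-normal, and $\valas' \cup (\var\mapsto\val)$ is again a value assignment, so the inductive hypothesis applied to the shorter context $\sctxtwo$ gives that $\tm\esub\var\val\sctxtwo$ is $\valas'$-normal. Re-wrapping the ES $\esub\vartwo\tmtwo$, any $\tovalas$-step on $(\tm\esub\var\val\sctxtwo)\esub\vartwo\tmtwo$ is then one of: a step inside $\tmtwo$ (excluded, $\tmtwo$ is $\valas$-normal); a step derived by \ruleVLsv through $\esub\vartwo\tmtwo$, which forces $\valPred\tmtwo$ with head $\valtwo$ and produces a $\rulesub\vartwo\valtwo$-step in $\tm\esub\var\val\sctxtwo$, contradicting its $\valas'$-normality since then $\vartwo \mapsto \valtwo \in \valas'$; or a step derived by \ruleVEsL, i.e.\ a $\tovalas$-step in $\tm\esub\var\val\sctxtwo$, contradicting $\valas'$-normality because $\tovalas \subseteq \tovalas[\valas']$.

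The main obstacle is the unpacking in the inductive step: getting the case split right on whether $\vartwo$ is adjoined to the value assignment (this split mirrors the two branches of the definition of partial unfolding and hinges on the interplay between \ruleVLsv on one side and \ruleVEsL and \ruleVEsR on the other), plus the bookkeeping certifying that $\valas$, $\valas \cup (\var\mapsto\val)$, and $\valas'$ all meet the domain/image disjointness condition of value assignments. I expect \cref{lem:vL_reduction_tovalas} (on reductions of terms of the form $\val\sctx$), \cref{lem:nu_sub_change_of_values}, and \cref{lem:nu_term_reducible_arg_esub_reducible} read contrapositively, together with $\alpha$-conversion, to supply exactly what is needed; I would factor the unpacking out as a short auxiliary lemma.
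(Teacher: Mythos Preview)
The paper's proof is loaded from an external file (\verb|\hinput{./proofs/tEsvL_NF_if_t_and_vL_NF}|) not included in the provided source, so a direct comparison is impossible; induction on $\sctx$ is the natural strategy and almost certainly what the paper does. Your argument is essentially sound: the base case is clean, and in the inductive step the case split on $\valPred\tmtwo$ together with a generalised $\valas'$ is the right idea.

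Two corrections. First, \cref{lem:nu_term_reducible_arg_esub_reducible} concerns $\VRed\vset$ (which also includes $\ruledb$-steps and allows \emph{arbitrary} values in substitutions), not $\tovalas$; you need the direct rule-pushing you already sketch, not that lemma. Second, you cannot assume by $\alpha$-conversion that the variables bound by $\sctx$ are fresh for $\val$: renaming such a $\vartwo$ also renames its captured occurrences in the head of $\val\sctx$, so after renaming the new bound variable $\vartwo'$ is again free in the new head value $\val'$. Concretely, when $\vartwo\in\fv\val$ and $\valPred\tmtwo$, your $\valas'\cup(\var\mapsto\val)$ fails the domain/image-disjointness condition for value assignments, and the IH as you formulate it does not literally apply. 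The simplest repair is to observe that the relation $\tovalas$ and hence the notion of $\valas$-normality make sense for any partial function from variables to values---idempotence plays no role in this argument---and to prove the lemma at that generality; alternatively, keep $\valas$ fixed and argue directly that a $\rulesub\vartwo\valtwo$-step in $\tm\esub\var\val\sctxtwo$ must live in $\val$ or in $\sctxtwo$ (since $\vartwo\notin\fv\tm$ \emph{can} be arranged) and hence transfers to $\val\sctx$, contradicting the first hypothesis.
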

% Label: lem:tEsvL_NF_if_t_and_vL_NF

\begin{proof}
By induction on $\sctx$.
\begin{itemize}
\item $\sctx = \ctxhole$.
  There are three rules to reduce $\tm\esub\var\val$.
  We argue that none of them apply.
  \begin{enumerate}
  \item 
    If $\tm\esub\var\val$ reduces by rule \ruleVLsv, then
    \[
      \inferrule{
        \tm \tovv{\rulesub\var\val} \tm'
      }{
        \tm\esub\var\val \tovv\rulelsv \tm'\esub\var\val
      }\ruleVLsv
    \]
    but we reach a contradiction since $\tm$ is in 
    $(\valas \cup (\var \mapsto \val))$-normal form.
  \item 
    If $\tm\esub\var\val$ reduces by rule \ruleVEsL, given a rule 
    name $\rulename$ such that $\tovv\rulename \in \tovalas$, we have
    \[
      \inferrule{
        \tm \tovv\rulename \tm'
        \sep
        \var \notin \fv\rulename
      }{
        \tm\esub\var\val \tovv\rulename \tm'\esub\var\val
      }\ruleVEsL
    \]
    but we reach a contradiction, since $\tm$ is in 
    $(\valas \cup (\var \mapsto \val))$-normal form.
  \item 
    If $\tm\esub\var\val$ reduces by rule \ruleVEsR, given a rule 
    name $\rulename$ such that $\tovv\rulename \in \tovalas$, we have
    \[
      \inferrule{
        \val \tovv\rulename \val'
      }{
        \tm\esub\var\val \tovv\rulename \tm\esub\var{\val'}
      }\ruleVEsR
    \]
    but we reach a contradiction since $\val$ is a value, and so it 
    is in $\valas$-normal form.
  \end{enumerate}
\item $\sctx = \sctxtwo\esub\vartwo\tmtwo$.
  There are three rules to reduce $\tm\esub\var\val\sctxtwo\esub\vartwo\tmtwo$.
  We argue that none of them apply.
  \begin{enumerate}
  \item 
    If $\tm\esub\var\val\sctxtwo\esub\vartwo\tmtwo$ reduces by rule 
    \ruleVLsv, then $\tmtwo = \valtwo\sctx_1$ and
    \[
      \inferrule{
        \tm\esub\var\val\sctxtwo \tovv{\rulesub\vartwo\valtwo} \tmfour
      }{
        \tm\esub\var\val\sctxtwo\esub\vartwo{\valtwo\sctx_1} \tovv\rulelsv \tmfour\esub\vartwo\valtwo\sctx_1
      }\ruleVLsv
    \]
    Since $\val\sctxtwo\esub\vartwo{\valtwo\sctx_1}$ is in 
    $\valas$-normal form, then $\val\sctxtwo$ is in 
    $(\valas \cup (\vartwo \mapsto \valtwo))$-normal form.
    By $\alpha$-conversion, $\vartwo \notin \fv\tm$, so we also 
    have that $\tm$ is in 
    $(\valas \cup (\var \mapsto \val) \cup (\vartwo \mapsto \valtwo))$-normal form.
    Then, $\tm\esub\var\val\sctxtwo$ is in 
    $(\valas \cup (\vartwo \mapsto \valtwo))$-normal form by the \ih 
    on $\sctxtwo$, so we reach a contradiction.
    Hence $\tm\esub\var\val\sctx$ is in $\valas$-normal form.
  \item 
    If $\tm\esub\var\val\sctxtwo\esub\vartwo\tmtwo$ reduces by rule 
    \ruleVEsL, given a rule name $\rulename$ such that 
    $\tovv\rulename \in \tovalas$, we have
    \[
      \inferrule{
        \tm\esub\var\val\sctxtwo \tovv\rulename \tmfour
        \sep
        \vartwo \notin \fv\rulename
      }{
        \tm\esub\var\val\sctxtwo\esub\var\tmtwo \tovv\rulename \tmfour\esub\var\tmtwo
      }\ruleVEsL
    \]
    since $\val\sctx$ is in $\valas$-normal form, then 
    $\val\sctxtwo$ is in $(\valas \cup (\vartwo \mapsto \valtwo))$-normal form.
    By $\alpha$-conversion, $\vartwo \notin \fv\tm$, so we also have 
    that $\tm$ is in 
    $(\valas \cup (\var \mapsto \val) \cup (\vartwo \mapsto \valtwo))$-normal form.
    Then, $\tm\esub\var\val\sctxtwo$ is in 
    $(\valas \cup (\vartwo \mapsto \valtwo))$-normal form by the \ih 
    on $\sctxtwo$, so we reach a contradiction.
    Hence $\tm\esub\var\val\sctx$ is in $\valas$-normal form.
  \item 
    If $\tm\esub\var\val\sctxtwo\esub\vartwo\tmtwo$ reduces by rule 
    \ruleVEsR, given a rule name $\rulename$ such that 
    $\tovv\rulename \in \tovalas$, we have
    \[
      \inferrule{
        \tmtwo \tovv\rulename \tmtwo'
      }{
        \tm\esub\var\val\sctxtwo\esub\vartwo\tmtwo \tovv\rulename 
        \tm\esub\var\val\sctxtwo\esub\vartwo{\tmtwo'}
      }\ruleVEsR
    \]
    but we reach a contradiction since $\val\sctx$ is in 
    $\valas$-normal form by hypothesis.
  \end{enumerate}
\end{itemize}
\end{proof}

\begin{lemma}
\label{lem:unvalas_t_valasnf_t}
Let $\tm$ be a term and $\valas$ a value assignment. Then:
\begin{enumerate}
\item \label{unvalas:uno} $\unvalas{\tm}$ is in $\valas$-normal form
\item \label{unvalas:dos} $\tm \tonvalas \unvalas\tm$
\end{enumerate}
\end{lemma}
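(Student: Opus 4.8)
The plan is to prove the two items simultaneously by structural induction on $\tm$, mirroring the recursive definition of $\unvalas\tm$. The base cases are immediate. If $\tm=\var$ and $\var\in\dom\valas$, then $\unvalas\var=\valas(\var)$: this is a value and, by the idempotence requirement $\dom\valas\disj\fv{\im\valas}$ on value assignments, it contains no free variable of $\dom\valas$, hence is $\valas$-irreducible; and $\var\tovv{\rulesub\var{\valas(\var)}}\valas(\var)$ witnesses item~2. If $\tm=\var$ with $\var\notin\dom\valas$, then $\unvalas\var=\var$ is trivially $\valas$-irreducible and reduces to itself in zero steps. If $\tm=\lam\var\tmtwo$, then $\unvalas\tm=\lam\var\tmtwo$, which is $\valas$-irreducible because $\tovalas$ never enters abstractions, and item~2 is vacuous. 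If $\tm=\tmtwo\,\tmthree$, then $\unvalas\tm=\unvalas\tmtwo\,\unvalas\tmthree$: item~2 follows by composing the two reductions provided by the inductive hypothesis for item~2 through the congruence rules \ruleVAppL and \ruleVAppR, and item~1 holds because $\tovalas$ contains no $\ruledb$-step, so the only $\valas$-redexes of an application lie in its immediate subterms, both of which are $\valas$-irreducible by the inductive hypothesis for item~1.

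The core of the argument is the case $\tm=\tmtwo\esub\var\tmthree$, where one splits along the two branches of the definition of $\unvalas\tm$; throughout we may assume, by $\alpha$-conversion, that $\var$ is fresh with respect to $\dom\valas$ and $\fv{\im\valas}$. In the ``otherwise'' branch, $\unvalas\tm=\unvalas\tmtwo\esub\var{\unvalas\tmthree}$. For item~2 I would first reduce inside the argument, $\tmtwo\esub\var\tmthree\tonvalas\tmtwo\esub\var{\unvalas\tmthree}$ via \ruleVEsR and the inductive hypothesis for item~2 on $\tmthree$, and then inside the body, $\tmtwo\esub\var{\unvalas\tmthree}\tonvalas\unvalas\tmtwo\esub\var{\unvalas\tmthree}$ via \ruleVEsL and the inductive hypothesis for item~2 on $\tmtwo$; the side condition $\var\notin\fv\rulename$ of \ruleVEsL holds for every step kind $\rulename$ occurring in $\tovalas$ by the freshness of $\var$. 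For item~1, the two inductive hypotheses give that $\unvalas\tmtwo$ and $\unvalas\tmthree$ are both $\valas$-irreducible, so $\unvalas\tmtwo\esub\var{\unvalas\tmthree}$ can only be contracted by rule \ruleVLsv, which requires $\unvalas\tmthree$ to have the shape $\val\sctx$ and $\var\in\rv{\unvalas\tmtwo}$; since (by a routine bound on the reachable variables of an unfolding, together with the freshness of $\var$) the latter forces $\var\in\rv\tmtwo$, both conditions together contradict the defining condition of the ``otherwise'' branch, which is precisely the negation of ``$\unvalas\tmthree=\val\sctx$ and $\var\in\rv\tmtwo$''.

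The remaining branch has $\unvalas\tmthree=\val\sctx$ and $\var\in\rv\tmtwo$, so $\unvalas\tm=\unvalas[\valas']{\tmtwo}\esub\var\val\sctx$ where $\valas'\defeq\valas\cup(\var\mapsto\val)$ is again a valid value assignment. Item~1 is immediate from \cref{lem:tEsvL_NF_if_t_and_vL_NF}: the inductive hypothesis for item~1 on $\tmthree$ gives that $\val\sctx$ is $\valas$-irreducible, and the inductive hypothesis for item~1 on $\tmtwo$ under $\valas'$ gives that $\unvalas[\valas']{\tmtwo}$ is $\valas'$-irreducible. For item~2 I would again reduce inside the argument to reach $\tmtwo\esub\var{\val\sctx}$, and then lift the $\valas'$-reduction $\tmtwo\tonvalas[\valas']\unvalas[\valas']{\tmtwo}$ (the inductive hypothesis for item~2 on $\tmtwo$ under $\valas'$) to a $\valas$-reduction $\tmtwo\esub\var{\val\sctx}\tonvalas\unvalas[\valas']{\tmtwo}\esub\var\val\sctx$. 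This lifting is the main obstacle: it calls for an auxiliary simulation lemma, proved by induction on the length of the $\valas'$-reduction, that translates each $\tovv{\rulesub\var\val}$-step on the body into a step by rule \ruleVLsv on the closure (wrapped in instances of \ruleVEsL to traverse the substitution context $\sctx$, which is pushed out of the explicit substitution after the first such step) and each remaining $\valas$-step directly through \ruleVEsL; the bookkeeping about how the explicit substitution's argument evolves from $\val\sctx$ to $\val$, and the verification of the \ruleVEsL side conditions after the required $\alpha$-renamings, is where the only real work lies, the rest being routine rule-chasing.
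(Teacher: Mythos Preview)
Your proposal is correct and follows the same inductive structure as the paper's proof; in fact the two places you flag as needing real work are precisely where the paper's own argument is loose. In item~1 of the ``otherwise'' closure branch, the paper dismisses all remaining reductions by saying ``the only way to reduce the whole term is by first reducing $\unvalas\tmtwo$'' and appeals to the $\valas$-irreducibility of $\unvalas\tmtwo$, but this does not exclude a $\rulesub\var\val$-premise needed for \ruleVLsv (since $\var\notin\dom\valas$); your reachable-variables bound is what actually closes the gap, and it is indeed routine once one shows that every $\tovalas$-step can only shrink the set of reachable occurrences of a variable fresh for $\valas$ (a one-line check per step kind). In item~2 of the main closure branch, the paper asserts that the reduction $\tmtwo\tonvalas[\valas']\unvalas[\valas']\tmtwo$ can be written as a $\valas$-prefix followed by a \emph{single} $\valas'$-step and then fires \ruleVLsv once; this is simply false when $\var$ has several reachable occurrences in $\tmtwo$ (take $\tmtwo=\var\,\var$). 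Your step-by-step simulation lemma, tracking how the substitution context $\sctx$ is pushed out at the first $\rulesub\var\val$-step and traversed by \ruleVEsL thereafter, is the correct fix.
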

% Label: lem:unvalas_t_valasnf_t

\begin{proof}
We prove each item separately, and on both we proceed by induction on $\tm$.
\begin{enumerate}
\item \quad
  \begin{itemize}
  \item $\tm = \var$.
    There are two cases, depending on the form of $\unvalas\var$:
    \begin{itemize}
    \item 
      If $\var \in \dom\valas$, then $\unvalas\var = \valas(\var)$.
      If $\valas(\var)$ is a variable, then $\valas(\var)$ is in 
      $\valas$-normal form as there are no reduction rules in
      $\tovalas$ to reduce $\valas(\var)$: $\valas(\var) \notin
      \dom\valas$ because $\valas$ is idempotent. 
      If $\valas(\var)$ is an abstraction, then $\valas(\var) $ is in
      $\valas$-normal form as there are no reduction rules to reduce
      abstractions.
    \item 
      Otherwise, $\var \notin \dom\valas$ and $\unvalas\var = \var$.
      We conclude $\var$ is in $\valas$-normal form 
      since there are no reduction rules in $\tovalas$ to reduce $\var$.
    \end{itemize}
  \item $\tm = \lam\var\tmtwo$.
    Then, $\unvalas{(\lam\var\tmtwo)} = \lam\var\tmtwo$, which is in 
    $\valas$-normal form since there are no reduction rules in 
    $\tovalas$ to reduce an abstraction.
  \item $\tm = \tmtwo \, \tmthree$.
    Then, $\unvalas{(\tmtwo \, \tmthree)} = \unvalas\tmtwo \, \unvalas\tmthree$.
    There are two rules that allow us to reduce 
    $\unvalas\tmtwo \, \unvalas\tmthree$. We argue that neither applies:
    \begin{itemize}
    \item 
      We can apply the \ih on $\tmtwo$, yielding $\unvalas\tmtwo$ is 
      in $\valas$-normal form, so $\unvalas\tmtwo \, \unvalas\tmthree$ 
      does not reduce by rule \ruleVAppL.
    \item
      We can apply the \ih on $\tmthree$, yielding $\unvalas\tmthree$ 
      is in $\valas$-normal form, so 
      $\unvalas\tmtwo \, \unvalas\tmthree$ does not reduce by rule 
      \ruleVAppR.
    \end{itemize}
  \item 
    $\tm = \tmtwo\esub\var\tmthree$.
    There are two cases, depending on the form of 
    $\unvalas{\tmtwo\esub\var\tmthree}$:
    \begin{itemize}
    \item
      If $\unvalas\tmthree = \val\sctx$ and $\var \in \rv\tmtwo$, 
      then $\unvalas{\tmtwo\esub\var\tmthree} = 
      \unvalas[\valas \cup (\var \mapsto \val)]\tmtwo\esub\var\val\sctx$,
      which is in $\valas$-normal form by the \ih and by 
      \cref{lem:tEsvL_NF_if_t_and_vL_NF}.
    \item
      Otherwise, $\unvalas{\tmtwo\esub\var\tmthree} = 
      \unvalas\tmtwo\esub\var{\unvalas\tmthree}$.
      Let us reason by contradiction, assuming that the term is $\valas$-reducible.
      Firstly,
      if $\unvalas\tmtwo\esub\var{\unvalas\tmthree}$
      reduces by rule $\ruleVEsR$, 
      then it is also the case that $\unvalas\tmthree$ is $\valas$-reducible,
      therefore we reach a contradiction
      since $\unvalas\tmthree$ is in $\valas$-normal form
      by \ih on $\tmthree$.
      Then the only way to reduce the whole term is by first reducing
      $\unvalas\tmtwo$,
      but we reach again a contradiction,
      since $\unvalas\tmtwo$ is in $\valas$-normal form
      by \ih on $\tmtwo$.
      Then $\unvalas\tmtwo\esub\var{\unvalas\tmthree}$
      is in $\valas$-normal form.      
    \end{itemize}
  \end{itemize}
\item \quad
  \begin{itemize}
  \item $\tm = \var$.
    There are two cases depending on the form of $\unvalas\var$:
    \begin{itemize}
    \item
      If $\var \in \dom\valas$, then $\unvalas\var = \valas(\var)$.
      We conclude $\var \tovv{\rulesub\var{\valas(\var)}} \valas(\var)$
      by rule \ruleVSub.
    \item
      Otherwise, $\var \tonvalas \var =  \unvalas\var$ in zero steps, 
      so we are done.
    \end{itemize}
  \item $\tm = \lam\var\tmtwo$.
    Then, 
    $\lam\var\tmtwo \tonvalas \lam\var\tmtwo = \unvalas{(\lam\var\tmtwo)}$
    in zero steps, so we are done.
  \item $\tm = \tmtwo \, \tmthree$.
    Then $\unvalas{(\tmtwo \, \tmthree)} = 
    \unvalas\tmtwo \, \unvalas\tmthree$.
    We can apply the \ih on $\tmtwo$, yielding 
    $\tmtwo \tonvalas \unvalas\tmtwo$.
    Hence, we have the reduction sequence 
    $\tmtwo \, \tmthree \tonvalas \unvalas\tmtwo \, \tmthree$ in 
    which each step is obtained by applying rule \ruleVAppL.
    Analogously, we can apply the \ih on $\tmthree$, yielding 
    $\tmthree \tonvalas \unvalas\tmthree$.
    Hence we have the reduction sequence 
    $\unvalas\tmtwo \, \tmthree \tonvalas \unvalas\tmtwo \, \unvalas\tmthree$
    in which each step is obtained 
    by applying rule \ruleVAppR.
    Then we can conclude $\tmtwo \, \tmthree \tonvalas \unvalas\tmtwo \, \unvalas\tmthree$.
  \item 
    $\tm = \tmtwo\esub\var\tmthree$.
    There are two cases depending on the form of $\unvalas{\tmtwo\esub\var\tmthree}$:
    \begin{itemize}
    \item
      If $\unvalas\tmthree = \val\sctx$ and $\var \in \rv\tmtwo$, 
      then $\unvalas{\tmtwo\esub\var\tmthree} = 
      \unvalas[\valas \cup (\var \mapsto \val)]\tmtwo\esub\var\val\sctx$.
      We can apply \ih on $\tmthree$, yielding 
      $\tmthree \tonvalas \unvalas\tmthree$.
      Hence we have the reduction sequence
      $\tmtwo\esub\var\tmthree \tonvalas \tmtwo\esub\var{\val\sctx}$
      in which each step is obtained by applying rule \ruleVEsR.
      On the other hand, we can apply the \ih on $\tmtwo$, yielding 
      $\tmtwo \tonvalas[\valas \cup (\var \mapsto \val)]
      \unvalas[\valas \cup (\var \mapsto \val)]\tmtwo$;
      we can then write this reduction sequence as
      $\tmtwo \tonvalas \tmtwo'
      \tovalas[\valas \cup (\var \mapsto \val)]
      \unvalas[\valas \cup (\var \mapsto \val)]\tmtwo$, 
      for some $ \tmtwo'$.
      Then
      $\tmtwo\esub\var{\val\sctx} \tonvalas \tmtwo'\esub\var{\val\sctx}$
      by rule \ruleVEsL, and
      $\tmtwo'\esub\var{\val\sctx} \tovalas 
      \unvalas[\valas \cup (\var \mapsto \val)]\tmtwo\esub\var\val\sctx$
      by rule \ruleVLsv.
      Then we can conclude 
      $\tmtwo\esub\var\tmthree \tonvalas 
      \unvalas[\valas \cup (\var \mapsto \val)]\tmtwo\esub\var\val\sctx$.
    \item
      Otherwise, $\unvalas{\tmtwo\esub\var\tmthree} = 
      \unvalas\tmtwo\esub\var{\unvalas\tmthree}$.
      We can apply the \ih on $\tmtwo$, yielding 
      $\tmtwo \tonvalas \unvalas\tmtwo$.
      Hence, we have the reduction sequence
      $\tmtwo\esub\var\tmthree \tonvalas \unvalas\tmtwo\esub\var\tmthree$
      in which each step is obtained by applying rule \ruleVEsL, since 
      $\var$ does not occur free in the rule name by $\alpha$-conversion.
      On the other hand, we can apply the \ih on $\tmthree$, yielding 
      $\tmthree \tonvalas \unvalas\tmthree$.
      Hence, we have the reduction sequence 
      $\unvalas\tmtwo\esub\var\tmthree \tonvalas 
      \unvalas\tmtwo\esub\var{\unvalas\tmthree}$ in which each step 
      is obtained by applying rule \ruleVEsR.
      Therefore, we can conclude 
      $\tmtwo\esub\var\tmthree \tonvalas 
      \unvalas\tmtwo\esub\var{\unvalas\tmthree}$.
    \end{itemize}
  \end{itemize}
\end{enumerate}
\end{proof}

\begin{remark}
\label{rem:unfolding_vL}
If $\valPred\tm$, then $\valPred{\unvalas\tm}$.
\end{remark}

\begin{remark}
\label{rem:abs_valas_abs_valasext}
If $\abs{\unvalas\tm}$, then $\abs{\unvalas[\valas']\tm}$,
with $\dom\valas \subseteq \dom{\valas'}$.
\end{remark}

\begin{lemma}
\label{lem:valasExt_vL_valas_vL}
Let $\valas_1$ and $\valas_2$ be value assignments and $\tm$ be a term.
Then:
\begin{enumerate}
\item
  $\valPred{\unvalas[\valas_1]\tm}$
  if and only if 
  $\valPred{\unvalas[\valas_2]\tm}$.
\item
  Let $\val$ be a value.
  If $\tm \in \Struct\sset$ and ($\unvalas[\valas_1]\var =
  \unvalas[\valas_2]\var$ for all $\var \in \sset$), then there
  exists $\sctx_1$ such that $\unvalas[\valas_1]\tm = \val\sctx_1$ 
  if and only if there exists $\sctx_2$ such that
  $\unvalas[\valas_2]\tm = \val\sctx_2$.
\end{enumerate}
\end{lemma}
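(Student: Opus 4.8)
The plan is to establish the two items in order: item (1) by a direct induction on $\tm$ with \emph{no} assumption relating $\valas_1$ and $\valas_2$, and item (2) by first strengthening it and then invoking item (1).

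For item (1), I would induct on $\tm$. If $\tm$ is a variable then $\unvalas[\valas_i]{\tm}$ is again a variable or an abstraction, hence of the form $\val\sctx$; if $\tm$ is an abstraction both unfoldings equal $\tm$; if $\tm = \tmtwo\,\tmthree$ both unfoldings are applications, which are never of the form $\val\sctx$; so in each base case the two sides of the equivalence have the same truth value. The only interesting case is $\tm = \tmtwo\esub\var\tmthree$, and here the point is that the two branches of the definition of $\unvalas[\valas_i]{\tmtwo\esub\var\tmthree}$ are selected by the condition ``$\valPred{\unvalas[\valas_i]{\tmthree}}$ and $\var \in \rv\tmtwo$'': the second conjunct does not depend on $\valas_i$, and the first is equivalent for $i = 1,2$ by the induction hypothesis on $\tmthree$; hence $\unvalas[\valas_1]{\tm}$ and $\unvalas[\valas_2]{\tm}$ fall into the same branch. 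In the ``otherwise'' branch the unfolding has the shape $\unvalas[\valas_i]{\tmtwo}\esub\var{\unvalas[\valas_i]{\tmthree}}$, which is of the form $\val\sctx$ iff $\unvalas[\valas_i]{\tmtwo}$ is, so the induction hypothesis on $\tmtwo$ concludes; in the first branch the unfolding has the shape $\unvalas[\valas_i \cup (\var \mapsto \val_i)]{\tmtwo}\esub\var{\val_i}\sctx$ (with $\unvalas[\valas_i]{\tmthree} = \val_i\sctx$), and after the easy remark that wrapping a term in explicit substitutions preserves the ``$\valPred$'' status, the induction hypothesis on $\tmtwo$ again concludes — legitimately applied to the \emph{extended} value assignments, since item (1) is being proved for an arbitrary pair of assignments.

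For item (2), the bare ``iff of the form $\val\sctx$'' does not carry enough information through the induction, so I would prove the strengthened claim $(2')$: under the hypotheses of item (2), \emph{either} neither $\unvalas[\valas_1]{\tm}$ nor $\unvalas[\valas_2]{\tm}$ is of the form $\val'\sctx$, \emph{or} there is a \emph{single} value $\val'$ such that $\unvalas[\valas_1]{\tm}$ and $\unvalas[\valas_2]{\tm}$ are respectively of the form $\val'\sctx_1$ and $\val'\sctx_2$. Item (2) then follows by taking $\val' = \val$ (and by symmetry of the hypotheses it suffices to prove this dichotomy once). I would prove $(2')$ by induction on $\tm$, distinguishing the last rule of a derivation of $\tm \in \Struct\sset$ (the choice of derivation is immaterial, as the unfolding depends only on $\tm$), using item (1). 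For \ruleStructVar the two unfoldings of $\var \in \sset$ are literally equal by hypothesis; for \ruleStructApp both unfoldings are applications, so the first disjunct holds outright. For the two explicit-substitution rules, item (1) applied to the argument (together with the $\valas$-independence of ``$\var \in \rv\tmtwo$'') again forces both unfoldings into the same branch; in the ``otherwise'' branch we conclude by the induction hypothesis on the body $\tmtwo$, taken with frame $\sset$ (for \ruleStructSubi) or $\sset \cup \set\var$ (for \ruleStructSubii) — in the latter case $\alpha$-renaming makes $\var$ fresh for $\valas_1,\valas_2$, so the extended frames are still matched by the assignments — and likewise in the first branch, after the remark that wrapping in explicit substitutions preserves both the ``$\valPred$'' status and the underlying value.

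The delicate point — the ``hard part'' — is \ruleStructSubii combined with the first unfolding branch: there $\tm = \tmtwo\esub\var\tmthree$ with $\tmtwo \in \Struct{\sset \cup \set\var}$ and, crucially, $\tmthree \in \Struct\sset$, and one needs the value added to the assignment at $\var$ to be the \emph{same} on both sides in order to invoke the induction hypothesis on $\tmtwo$, whose frame now contains $\var$. This is exactly what the strengthened form $(2')$ provides when applied to the structure $\tmthree$: item (1) rules out the ``neither is of the form $\val'\sctx$'' disjunct (we are in the branch where $\unvalas[\valas_i]{\tmthree}$ \emph{is} of that form), so $(2')$ yields a single value $\val'$ with $\unvalas[\valas_1]{\tmthree} = \val'\sctx_1$ and $\unvalas[\valas_2]{\tmthree} = \val'\sctx_2$; hence $\valas_1 \cup (\var \mapsto \val')$ and $\valas_2 \cup (\var \mapsto \val')$ still agree on $\sset \cup \set\var$, and the induction hypothesis on $\tmtwo$ applies. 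Everything else (substitution-context bookkeeping, freshness conventions for bound variables) is routine.
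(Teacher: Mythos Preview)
Your proposal is correct and follows essentially the same approach as the paper: item (1) by induction on $\tm$, item (2) by induction using item (1) as a lemma, with the crux being the \ruleStructSubii case in the first unfolding branch, handled by first applying the inductive hypothesis to $\tmthree$ to obtain the \emph{same} head value on both sides. One remark: your strengthening $(2')$ is in fact equivalent to item (2) as stated, since the ``Let $\val$ be a value'' already quantifies universally over $\val$; the paper exploits exactly this in the \ruleStructSubii case by instantiating the inductive hypothesis on $\tmthree$ at $\val := \val_a$ to conclude $\unvalas[\valas_2]{\tmthree} = \val_a\sctx_b$, which is precisely what your $(2')$ packages.
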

% Label: lem:valasExt_vL_valas_vL

\begin{proof}
We prove the two statements simultaneously.
\begin{enumerate}
\item
  We reason by induction on $\tm$, showing only the left-to-right 
  implication since the other is similar.
  \begin{itemize}
  \item $\tm = \var$.
    There are two cases, depending on the form of $\unvalas[\valas_2]\var$.
    If $\unvalas[\valas_2]\var = \valas_2(\var)$, then we conclude 
    $\valPred{\valas_2(\var)}$ by definition.
    Otherwise, $\unvalas[\valas_2]\var = \var$, and we conclude since 
    $\valPred\var$.
  \item $\tm = \lam\var\tmtwo$.
    Immediate, since 
    $\unvalas[\valas_2]{(\lam\var\tmtwo)} = \lam\var\tmtwo$ by definition.
  \item $\tm = \tmtwo \, \tmthree$.
    This case is not possible since there are no $\val_1, \sctx_1$ 
    such that $\unvalas[\valas_1]{(\tm \, \tmtwo)} 
    = \unvalas[\valas_1]\tmtwo \, \unvalas[\valas_1]\tmthree
    = \val_1\sctx_1$.
  \item 
    $\tm = \tmtwo\esub\var\tmthree$.
    We analyse two cases, depending on the form of 
    $\unvalas[\valas_1]{\tmtwo\esub\var\tmthree}$.
    \begin{itemize}
    \item
      If $\unvalas[\valas_1]\tmthree = \val_a\sctx_a$ and 
      $\var \in \rv\tmtwo$, then 
      $\unvalas[\valas_1]{\tmtwo\esub\var\tmthree} = 
      \unvalas[\valas_1 \cup (\var \mapsto \val_a)]\tmtwo\esub\var{\val_a}\sctx_a$.
      Thus by hypothesis $\valPred{\unvalas[\valas_1 \cup (\var \mapsto \val_a)]\tmtwo}$.
      By the \ih (1) on $\tmthree$, there exist $\val_b$ and $\sctx_b$ 
      such that $\unvalas[\valas_2]\tmthree = \val_b\sctx_b$.
      We are then in the case where
      $\unvalas[\valas_2]{\tmtwo\esub\var\tmthree}
      = \unvalas[\valas_2 \cup (\var \mapsto \val_b)]\tmtwo\esub\var{\val_b}\sctx_b$.
      We can apply the \ih (1) on $\tmtwo$, yielding 
      $\valPred{\unvalas[\valas_2 \cup (\var \mapsto \val_b)]\tmtwo}$.
      Therefore, 
      $\valPred{\unvalas[\valas_2]{\tmtwo\esub\var\tmthree}}$.
    \item
      Otherwise, $\unvalas[\valas_1]{\tmtwo\esub\var\tmthree} 
      = \unvalas[\valas_1]\tmtwo\esub\var{\unvalas[\valas_1]\tmthree}$.
      Thus by hypothesis $\valPred{\unvalas[\valas_1]\tmtwo}$.
      If $\var \notin \rv\tmtwo$, then we are in the case $\unvalas[\valas_2]{\tmtwo\esub\var\tmthree} 
      =\unvalas[\valas_2]\tmtwo\esub\var{\unvalas[\valas_2]\tmthree}$.
      If $\neg(\valPred{\unvalas[\valas_1]\tmthree})$, then the 
      \ih (1) on $\tmthree$ states $\neg(\valPred{\unvalas[\valas_2]\tmthree})$, 
      and so we are also in the case $\unvalas[\valas_2]{\tmtwo\esub\var\tmthree}
      = \unvalas[\valas_2]\tmtwo\esub\var{\unvalas[\valas_2]\tmthree}$.
      Hence, we can apply the \ih (1) on $\tmtwo$, yielding 
      $\valPred{\unvalas[\valas_2]\tmtwo}$.
      Therefore, $\valPred{\unvalas[\valas_2]{\tmtwo\esub\var\tmthree}}$.
    \end{itemize}
  \end{itemize}
\item
  By induction on the derivation of the judgement $\tm \in \Struct\sset$, 
  showing only the left-to-right implication since the other is similar.
  \begin{itemize}
  \item \ruleStructVar.
    Then, $\tm = \vartwo \in \Struct\sset$, with $\vartwo \in \sset$.
    By hypothesis
    $\unvalas[\valas_1]\vartwo = \val\sctx_1$ and
    $\unvalas[\valas_1]\vartwo = \unvalas[\valas_2]\vartwo$.
    Hence we conclude $\unvalas[\valas_2]\vartwo = \val\sctx_2$
    with $\sctx_2 \eqdef \sctx_1$.
  \item \ruleStructApp.
    Then, $\tm = \tmtwo \, \tmthree \in \Struct\sset$, which is 
    derived from $\tmtwo \in \Struct\sset$.
    This case is impossible since there are no $\val$, $\sctx_1$ such 
    that $\unvalas[\valas_1]{(\tmtwo \, \tmthree)} = 
    \unvalas[\valas_1]\tmtwo \, \unvalas[\valas_1]\tmthree = \val\sctx_1$.
  \item \ruleStructSubi.
    Then
    \[
      \inferrule{
        \tmtwo \in \Struct\sset
        \sep
        \vartwo \notin \sset
      }{
        \tm = \tmtwo\esub\vartwo\tmthree \in \Struct\sset
      }\ruleStructSubi
    \]
    By hypothesis, $\unvalas[\valas_1]{\tmtwo\esub\vartwo\tmthree}
    = \val\sctx_1$.
    We analyse two cases depending on the form of 
    $\unvalas[\valas_1]{\tmtwo\esub\vartwo\tmthree}$: 
    \begin{itemize}
    \item
      If $\unvalas[\valas_1]\tmthree = \val_a\sctx_a$ and 
      $\vartwo \in \rv\tmtwo$, then 
      $\unvalas[\valas_1]{\tmtwo\esub\vartwo\tmthree} =
      \unvalas[\valas_1 \cup (\vartwo \mapsto \val_a)]\tmtwo\esub\vartwo{\val_a}\sctx_a$.
      Thus, by hypothesis there exists $\sctx_c$ such that
      $\unvalas[\valas_1 \cup (\vartwo \mapsto \val_a)]\tmtwo = \val\sctx_c$,
      where $\sctx_1 = \sctx_c\esub\vartwo{\val_a}\sctx_a$.
      By the \ih (1) on $\tmthree$, there exist $\val_b$, $\sctx_b$ 
      such that $\unvalas[\valas_2]\tmthree = \val_b\sctx_b$.
      Hence, we are in the case 
      $\unvalas[\valas_2]{\tmtwo\esub\vartwo\tmthree} = 
      \unvalas[\valas_2 \cup (\vartwo \mapsto \val_b)]\tmtwo\esub\vartwo{\val_b}\sctx_b$.
      We are still in the case
      $\unvalas[\valas_1 \cup (\vartwo \mapsto \val_a)]\var 
      = \unvalas[\valas_2 \cup (\vartwo \mapsto \val_b)]\var$
      for all $\var \in \sset$, since $\vartwo \notin \sset$.
      We can then apply the \ih (2) on $\tmtwo$, yielding $\sctx_d$ 
      such that $\unvalas[\valas_2 \cup (\vartwo \mapsto \val_b)]\tmtwo 
      = \val\sctx_d$.
      Therefore, 
      $\unvalas[\valas_2]{\tmtwo\esub\vartwo\tmthree} = \val\sctx_2$,
      where $\sctx_2 =\sctx_d\esub\vartwo{\val_b}\sctx_b$.
    \item
      Otherwise,
      $\unvalas[\valas_1]{\tmtwo\esub\vartwo\tmthree} =
      \unvalas[\valas_1]\tmtwo\esub\vartwo{\unvalas[\valas_1]\tmthree}$.
      By hypothesis, there exists $\sctx_c$ such that
      $\unvalas[\valas_1]\tmtwo = \val\sctx_c$, where 
      $\sctx_{1} =  \sctx_c\esub\vartwo{\unvalas[\valas_1]\tmthree}$.
      If $\var \notin \rv\tmtwo$, then 
      $\unvalas[\valas_2]{\tmtwo\esub\var\tmthree} = 
      \unvalas[\valas_2]\tmtwo\esub\var{\unvalas[\valas_2]\tmthree}$.
      If $\neg(\valPred{\unvalas[\valas_1]\tmthree})$, then the 
      \ih (1) on $\tmthree$ states $\neg(\valPred{\unvalas[\valas_2]\tmthree})$,
      and so we are also in the case 
      $\unvalas[\valas_2]{\tmtwo\esub\var\tmthree} = 
      \unvalas[\valas_2]\tmtwo\esub\var{\unvalas[\valas_2]\tmthree}$.
      Hence, we can apply the \ih (2) on $\tmtwo$, yielding $\sctx_d$ 
      such that $\unvalas[\valas_2]\tmtwo = \val\sctx_d$.
      Therefore, 
      $\unvalas[\valas_2]{\tmtwo\esub\vartwo\tmthree} = \val\sctx_2$, 
      where $\sctx_2 = \sctx_d\esub\vartwo{\unvalas[\valas_2]\tmthree}$.
    \end{itemize}
  \item \ruleStructSubii.
    Then
    \[
      \inferrule{
        \tmtwo \in \Struct{\sset \cup \set\vartwo}
        \sep
        \vartwo \notin \sset
        \sep
        \tmthree \in \Struct\sset
      }{
        \tm = \tmtwo\esub\vartwo\tmthree \in \Struct\sset
      }\ruleStructSubii
    \]
    By hypothesis, there exists $\sctx_1$ such that 
    $\unvalas[\valas_1]{\tmtwo\esub\vartwo\tmthree} = \val\sctx_1$.
    Depending on the form of 
    $\unvalas[\valas_1]{\tmtwo\esub\vartwo\tmthree}$, we have two 
    cases to analyse:
    \begin{itemize}
    \item
      If $\unvalas[\valas_1]\tmthree = \val_a\sctx_a$ and 
      $\var \in \rv\tmtwo$, then $\val\sctx_1 
      = \unvalas[\valas_1]{\tmtwo\esub\vartwo\tmthree} 
      = \unvalas[\valas_1 \cup (\vartwo \mapsto \val_a)]\tmtwo\esub\vartwo{\val_a}\sctx_a$.
      Thus by hypothesis it must be the case that 
      $\unvalas[\valas_1 \cup (\vartwo \mapsto \val_a)]\tmtwo = \val\sctx_c$,
      with $\sctx_1 = \sctx_c\esub\vartwo{\val_a}\sctx_a$, for some $\sctx_c$.
      By the \ih (2) on $\tmthree$, there exists $\sctx_b$ such that 
      $\unvalas[\valas_2]\tmthree = \val_a\sctx_b$.
      Moreover, $\var \in \rv\tmtwo$, so 
      $\unvalas[\valas_2]{\tmtwo\esub\vartwo\tmthree} 
      = \unvalas[\valas_2 \cup (\vartwo \mapsto \val_a)]\tmtwo\esub\vartwo{\val_a}\sctx_b$.
      On the other hand, 
      $\unvalas[\valas_1 \cup (\vartwo \mapsto \val_a)]\var = 
      \unvalas[\valas_2 \cup (\vartwo \mapsto \val_a)]\var$ for all 
      $\var \in \sset \cup \set\vartwo$.
      We then apply the \ih (2) on $\tmtwo$, yielding $\sctx_{c'}$ 
      such that 
      $\unvalas[\valas_2 \cup (\vartwo \mapsto \val_a)]\tmtwo = \val\sctx_{c'}$.
      Hence, we conclude
      $\unvalas[\valas_2]\tm 
      =\unvalas[\valas_2 \cup (\vartwo \mapsto \val_a)]\tmtwo\sctx_{c'}\esub\vartwo{\val_a}\sctx_b$,
      where $\sctx_2 = \sctx_{c'}\esub\vartwo{\val_a}\sctx_b$.
    \item
      Otherwise, $\val\sctx_1 
      = \unvalas[\valas_1]{\tmtwo\esub\vartwo\tmthree} 
      = \unvalas[\valas_1]\tmtwo\esub\vartwo{\unvalas[\valas_1]\tmthree}$.
      Thus, by hypothesis it must be the case that 
      $\unvalas[\valas_1]\tmtwo = \val\sctx_c$, with 
      $\sctx_1 = \sctx_c\esub\vartwo{\unvalas[\valas_1]\tmthree}$, 
      for some $\sctx_c$.
      If $\var \notin \rv\tmtwo$, then we are in the case where 
      $\unvalas[\valas_2]{\tmtwo\esub\var\tmthree} = 
      \unvalas[\valas_2]\tmtwo\esub\var{\unvalas[\valas_2]\tmthree}$.
      If $\neg(\valPred{\unvalas[\valas_1]\tmthree})$, then the 
      \ih (1) on $\tmthree$ states 
      $\neg(\valPred{\unvalas[\valas_2]\tmthree})$, and so we are also 
      in the case $\unvalas[\valas_2]{\tmtwo\esub\var\tmthree} = 
      \unvalas[\valas_2]\tmtwo\esub\var{\unvalas[\valas_2]\tmthree}$.
      We then apply the \ih (2) on $\tmtwo$, yielding $\sctx_{c'}$ 
      such that $\unvalas[\valas_2]\tmtwo = \val\sctx_{c'}$.
      Hence, we conclude
      $\unvalas[\valas_2]\tm 
      =\unvalas[\valas_2]\tmtwo\sctx_{c'}\esub\vartwo{\unvalas[\valas_2]\tmthree}$,
      where $\sctx_2 = \sctx_{c'}\esub\vartwo{\unvalas[\valas_2]\tmthree}$.
    \end{itemize}
  \end{itemize}
\end{enumerate}
\end{proof}

\begin{definition}[Compatibility]
Given a value assigment $\valas$, we say $\valas$ is 
\textbf{compatible} with the sets of variables $\aset$ and $\sset$,
written $\compat\valas\aset\sset$ if the following three conditions hold:
\begin{enumerate}
\item
  The value assignment must affect all variables in $\aset$ and some 
  of the variables in $\sset$ 
  (\ie, $\aset \subseteq \dom\valas \subseteq \aset \cup \sset$).
\item
  Variables in $\aset$ must be mapped to abstractions
  (\ie, $\abs{\valas(\var)}$ must hold for every $\var \in \aset$).
\item
  Variables in $\sset$ affected by $\valas$ must be mapped to variables
  (\ie, $\valas(\var)$ must be a variable for every 
  $\var \in \sset \cap \dom{\valas}$).
\end{enumerate}
\end{definition}

\begin{lemma}
\label{lem:compatible_unfold}
Let $\aset$ and $\sset$ be sets of variables, $\tm$ be any term, and 
$\valas$ be a value assignment such that $\inv\aset\sset\tm$ and 
$\compat\valas\aset\sset$ hold.
Then:
\begin{enumerate}
\item 
  If $\tm \in \HAbs\aset$, then $\abs{\unvalas\tm}$.
\item 
  If $\tm \in \Struct\sset$ and $\unvalas\tm = \val\sctx$, then 
  $\val$ is a variable.
\end{enumerate}
\end{lemma}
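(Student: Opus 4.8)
The plan is to prove the two statements \emph{simultaneously}, by induction on the structure of $\tm$ (in the spirit of \cref{lem:valasExt_vL_valas_vL}). In each case I would first inspect the last rule used to derive $\tm \in \HAbs\aset$, respectively $\tm \in \Struct\sset$, and then expand $\unvalas\tm$ according to the two branches of its definition. Throughout I would use $\alpha$-conversion to keep the bound variable of every closure fresh, hence outside $\aset \cup \sset$ and outside $\dom\valas$; the latter is legitimate because $\compat\valas\aset\sset$ forces $\dom\valas \subseteq \aset \cup \sset$.

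The base cases are short. If $\tm = \var$: for the first statement, \ruleHAbsVar gives $\var \in \aset$, so $\var \in \dom\valas$ and $\unvalas\var = \valas(\var)$, which is an abstraction, both by $\compat\valas\aset\sset$; for the second, \ruleStructVar gives $\var \in \sset$, and then either $\var \notin \dom\valas$ and $\unvalas\var = \var$, or $\var \in \sset \cap \dom\valas$ and $\unvalas\var = \valas(\var)$ is a variable, again by compatibility. If $\tm = \lam\var\tmtwo$: the first statement is immediate since $\unvalas{(\lam\var\tmtwo)} = \lam\var\tmtwo$, and the second is vacuous (no abstraction is a structure). If $\tm = \tmtwo \, \tmthree$: the first is vacuous by \cref{rem:habs_st}, and for the second the hypothesis $\unvalas\tm = \val\sctx$ fails, since $\unvalas{(\tmtwo\,\tmthree)} = \unvalas\tmtwo \, \unvalas\tmthree$ is an application.

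The real work is the closure case $\tm = \tmtwo\esub\var\tmthree$. I would split on whether $\unvalas\tm$ takes the \emph{substituting} branch, i.e.\ whether $\valPred{\unvalas\tmthree}$ (say $\unvalas\tmthree = \val'\sctx'$) and $\var \in \rv\tmtwo$, in which case $\unvalas\tm = \unvalas[\valas']{\tmtwo}\esub\var{\val'}\sctx'$ with $\valas' \eqdef \valas \cup (\var \mapsto \val')$; otherwise $\unvalas\tm = \unvalas\tmtwo\esub\var{\unvalas\tmthree}$. In all sub-cases the head of $\unvalas\tm$ is the head of $\unvalas[\valas']{\tmtwo}$ (resp.\ of $\unvalas\tmtwo$), so both statements follow by applying the induction hypothesis to $\tmtwo$ under appropriately updated frames: in the substituting branch one extends $\aset$ by $\var$ when $\val'$ is an abstraction and $\sset$ by $\var$ when $\val'$ is a variable; in the non-substituting branch one keeps $\valas$ and, when the last rule was \ruleHAbsSubii or \ruleStructSubii, extends $\sset$ by $\var$ (harmless for compatibility since $\var \notin \dom\valas$, and it is exactly the frame witnessing $\tmtwo \in \Struct{\sset \cup \set\var}$). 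Checking that these updates preserve $\inv\aset\sset\tmtwo$ and $\compat{\valas}{\aset}{\sset}$ is routine, using $\var \notin \aset \cup \sset$ and $\aset \cap \sset = \emptyset$; the only point that needs care is that $\val'$ is a variable precisely when required, which one gets from the induction hypothesis: for the second statement under \ruleStructSubii from $\tmthree \in \Struct\sset$, and for the first statement under \ruleHAbsSubii one gets dually that $\val'$ is an abstraction (and in particular $\valPred{\unvalas\tmthree}$) from $\tmthree \in \HAbs\aset$.

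I expect the main obstacle to be the non-substituting branch of the first statement under rule \ruleHAbsSubii: there $\tmtwo \in \HAbs{\aset \cup \set\var}$ yet $\var \notin \dom\valas$, so the induction hypothesis cannot be applied with the frame $\aset \cup \set\var$ (compatibility would fail on $\aset \cup \set\var \subseteq \dom\valas$). The fix is to notice that in this branch $\var \notin \rv\tmtwo$, apply \cref{lem:t_habs_hvar} with the partition $\aset_1 \eqdef \aset$, $\aset_2 \eqdef \set\var$ of $\aset \cup \set\var$, and rule out the alternative $\tmtwo \in \HVar\var$ by a straightforward induction showing that $\tmtwo \in \HVar\var$ implies $\var \in \rv\tmtwo$; this yields $\tmtwo \in \HAbs\aset$, so the induction hypothesis applies with the unchanged frame $\aset$. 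This interplay between the reachable-variable side-condition governing the unfolding and the domain constraint governing compatibility is the delicate part; the rest is bookkeeping of the $\inv{\cdot}{\cdot}{\cdot}$ and $\compat{\cdot}{\cdot}{\cdot}$ invariants.
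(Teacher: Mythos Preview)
Your proposal is correct and close in spirit to the paper's proof, but you take a different route at exactly the point you flag as the obstacle. In the non-substituting branch of \ruleHAbsSubii you go through \cref{lem:t_habs_hvar} and an auxiliary induction showing $\tmtwo \in \HVar\var \Rightarrow \var \in \rv\tmtwo$, in order to strengthen $\tmtwo \in \HAbs{\aset\cup\set\var}$ to $\tmtwo \in \HAbs\aset$ and keep $\valas$ unchanged. The paper avoids this detour: since $\tmthree \in \HAbs\aset$, the induction hypothesis on $\tmthree$ already gives $\unvalas\tmthree = \val'\sctx'$ with $\val'$ an abstraction, so $\valas' \eqdef \valas \cup (\var\mapsto\val')$ is compatible with $(\aset\cup\set\var,\sset)$ regardless of whether $\var \in \rv\tmtwo$; one then applies the induction hypothesis to $\tmtwo$ with $\valas'$, and in the non-substituting branch simply observes $\unvalas[\valas']\tmtwo = \unvalas\tmtwo$ because $\var \notin \rv\tmtwo$. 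This is more uniform (the two branches share the same setup) and needs no extra lemma.

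A second minor difference: for \ruleHAbsSubi in the substituting branch you case-split on whether $\val'$ is a variable or an abstraction and adjust $(\aset,\sset)$ accordingly. The paper instead applies the induction hypothesis to $\tmtwo$ with the \emph{unchanged} $\valas$ (extending only $\sset$ by $\var$, harmless since $\var\notin\dom\valas$), obtains $\abs{\unvalas\tmtwo}$, and then upgrades to $\abs{\unvalas[\valas']\tmtwo}$ via \cref{rem:abs_valas_abs_valasext}. Both work; yours is more symmetric with item~2, the paper's saves a case distinction.
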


We prove each item independently.

\begin{enumerate}
\item By induction on the derivation of $\tm \in \HAbs\aset$.
  \begin{itemize}
  \item \ruleHAbsVar.
    Then, $\tm = \var \in \HAbs\aset$ with $\var \in \aset$. 
    Given $\compat\valas\aset\sset$, then $\unvalas\var = \valas(\var)$
    and $\abs{\valas(\var)}$.
  \item \ruleHAbsLam.
    Then, $\tm = \lam\var\tmtwo \in \HAbs\aset$. 
    We are done, since $\unvalas{(\lam\var\tmtwo)} = \lam\var\tmtwo$, 
    and $\abs{\lam\var\tmtwo}$ holds.
  \item \ruleHAbsSubi.
    Then, $\tm = \tmtwo\esub\var\tmthree \in \HAbs\aset$ which is 
    derived from $\tmtwo \in \HAbs\aset$ and $\var \notin \aset$.
    Since $\inv\aset\sset{\tmtwo\esub\var\tmthree}$ holds, then it
    implies in particular $\inv\aset{\sset \cup \set\var}\tmtwo$.
    We proceed by showing that 
    $\compat\valas\aset{\sset \cup \set\var}$ holds:
    \begin{itemize}
    \item[a.]
      Since $\aset \subseteq \dom\valas \subseteq \aset \cup \sset$
      by the hypothesis $\compat\valas\aset\sset$, then 
      $\aset \subseteq \dom\valas \subseteq \aset \cup (\sset \cup \set\var)$.
    \item[b.]
      Let $\vartwo \in \aset$. 
      The premise $\var \notin \aset$ implies $\var \neq \vartwo$. 
      Thus, $\abs{\valas(\vartwo)}$ by the hypothesis $\compat\valas\aset\sset$.
    \item[c.]
      Let $\vartwo \in (\sset \cup \set\var) \cap \dom\valas$. 
      Since $\var \notin \dom\valas$ by $\alpha$-conversion, then 
      $\vartwo \neq \var$. 
      Thus, $\valas(\vartwo)$ is a variable by the hypothesis 
      $\compat\valas\aset\sset$.
    \end{itemize}
    We can now apply the \ih (1) on $\tmtwo$, yielding 
    $\abs{\unvalas\tmtwo}$. We analyse two cases:
    \begin{itemize}
    \item
      If $\unvalas\tmthree = \val\sctx$ and $\var \in \rv\tmtwo$, 
      then $\unvalas{\tmtwo\esub\var\tmthree} = 
      \unvalas[\valas \cup (\var \mapsto \val)]\tmtwo\esub\var\val\sctx$.
      Since $\abs{\unvalas\tmtwo}$ by the \ih (1) on $\tmtwo$, then 
      $\abs{\unvalas[\valas \cup (\var \mapsto \val)]\tmtwo}$ by 
      \cref{rem:abs_valas_abs_valasext}.
      Hence, we conclude with 
      $\abs{\unvalas[\valas \cup (\var \mapsto \val)]\tmtwo\esub\var\val\sctx}$.
    \item
      Otherwise, $\unvalas{\tmtwo\esub\var\tmthree} =
      \unvalas\tmtwo\esub\var{\unvalas\tmthree}$.
      Since $\abs{\unvalas\tmtwo}$ by the \ih (1) on $\tmtwo$, then 
      $\abs{\unvalas\tmtwo\esub\var{\unvalas\tmthree}}$.
    \end{itemize}
  \item \ruleHAbsSubii.
    Then, $\tm = \tmtwo\esub\var\tmthree \in \HAbs\aset$ which is 
    derived from $\tmtwo \in \HAbs{\aset \cup \set\var}$, 
    $\var \notin \aset$, and $\tmthree \in \HAbs\aset$.
    Since $\inv\aset\sset{\tmtwo\esub\var\tmthree}$ holds, then it
    implies both $\inv{\aset \cup \set\var}\sset\tmtwo$ and 
    $\inv\aset\sset\tmthree$.    

    Applying the \ih (1) on $\tmthree$, we yield $\abs{\unvalas\tmthree}$.
    In particular, $\unvalas\tmthree = \val\sctx$.
    We now prove that 
    $\compat{\valas \cup (\var \mapsto \val)}{\aset \cup \set\var}\sset$
    holds:
    \begin{itemize}
    \item[a.]
      $\aset \cup \set\var 
      \subseteq \dom{\valas \cup (\var \mapsto \val)} 
      \subseteq (\aset \cup \set\var) \cup \sset$
      by the hypothesis $\compat\valas\aset\sset$
    \item[b.]
      Let $\vartwo \in (\aset \cup \set\var)$.
      If $\vartwo = \var$, then 
      $(\valas \cup (\var \mapsto \val))(\var) = \val$, and $\abs\val$
      by the \ih (1) on $\tmthree$.
      If $\vartwo \neq \var$, then 
      $(\valas \cup (\var \mapsto \val))(\vartwo) = \valas(\vartwo)$, 
      and $\abs{\valas(\vartwo)}$ by the hypothesis 
      $\compat\valas\aset\sset$.
    \item[c.]
      Let $\vartwo \in (\sset \cap \dom{\valas \cup (\var \mapsto \val)})$.
      Since $\inv{\aset \cup \set\var}\sset\tmtwo$, then $\var \notin \sset$
      and thus $\vartwo \neq \var$.
      We conclude that $(\valas \cup (\var \mapsto \val))(\vartwo) =
      \valas(\vartwo)$ is a variable by the hypothesis 
      $\compat\valas\aset\sset$.
    \end{itemize}

    We analyse two cases, depending on whether $\var \in \rv\tmtwo$ or not:
    \begin{itemize}
    \item
      If $\var \in \rv\tmtwo$, then 
      $\unvalas{\tmtwo\esub\var\tmthree} = 
      \unvalas[\valas \cup (\var \mapsto \val)]\tmtwo\esub\var\val\sctx$.
      Then, 
      $\abs{\unvalas[\valas \cup (\var \mapsto \val)]\tmtwo}$ by the 
      \ih (1) on $\tmtwo$.
      Therefore, $\abs{\unvalas[\valas \cup (\var \mapsto \val)]\tmtwo\esub\var\val\sctx}$.
    \item
      Otherwise, $\unvalas{\tmtwo\esub\var\tmthree} = 
      \unvalas\tmtwo\esub\var{\unvalas\tmthree}$.
      And since $\var \notin \rv\tmtwo$, then
      $\unvalas\tmtwo = \unvalas[\valas \cup (\var \mapsto \val)]\tmtwo$.
      Similarly to the previous case, we obtain 
      $\abs{\unvalas[\valas \cup (\var \mapsto \val)]\tmtwo}$ by the 
      \ih (1) on $\tmtwo$, thus having $\abs{\unvalas\tmtwo}$, so 
      that we conclude with
      $\abs{\unvalas\tmtwo\esub\var{\unvalas\tmthree}}$.
    \end{itemize}
  \end{itemize}
\item By induction on the derivation of $\tm \in \Struct\sset$.
  \begin{itemize}
  \item \ruleStructVar.
    Then, $\tm = \var \in \Struct\sset$ with $\var \in \sset$.
    We analyse two cases depending on whether $\var \in \dom\valas$ or not:
    \begin{itemize}
    \item 
      If $\var \in \dom\valas$, then $\unvalas\var = \valas(\var)$.
      Since $\compat\valas\aset\sset$, then $\valas(\var)$ is a variable.
    \item 
      Otherwise, $\unvalas\var = \var$, so we are done.
    \end{itemize}
  \item \ruleStructApp.
    Then, $\tm = \tmtwo \, \tmthree \in \Struct\sset$.
    This case is not possible, as $\unvalas{(\tmtwo \, \tmthree)} = 
    \unvalas\tmtwo \, \unvalas\tmthree$ is not of the form $\val\sctx$.
  \item \ruleStructSubi.
    Then, $\tm = \tmtwo\esub\var\tmthree \in \Struct\sset$ which is 
    derived from $\tmtwo \in \Struct\sset$ and $\var \notin \sset$.
    We analyse two cases:
    \begin{itemize}
    \item
      If $\unvalas\tmthree = \val_{\tmthree}\sctx_\tmthree$ and 
      $\var \in \rv\tmtwo$, then $\unvalas{\tmtwo\esub\var\tmthree} =
      \unvalas[\valas \cup (\var \mapsto \val_\tmthree)]\tmtwo\esub\var{\val_\tmthree}\sctx_\tmthree$.
      Since $\unvalas\tm$ has the form $\val\sctx$, the same holds for 
      $\unvalas[\valas \cup (\var \mapsto \val_\tmthree)]\tmtwo$.
      That is, $\unvalas[\valas \cup (\var \mapsto \val_\tmthree)]\tmtwo 
      = \val\sctxtwo$.
      
      Since $\var \notin \sset$ by hypothesis, we have 
      $\unvalas[\valas \cup (\var \mapsto \val_\tmthree)]\vartwo = 
      \unvalas\vartwo$ for all $\vartwo \in \sset$.
      We can then apply \cref{lem:valasExt_vL_valas_vL} (2), yielding
      $\sctx_1$ such that $\unvalas\tmtwo = \val\sctx_1$.
      Applying the \ih (2) on $\tmtwo$, we conclude that $\val$ is a 
      variable.
    \item
      Otherwise, $\unvalas{\tmtwo\esub\var\tmthree} =
      \unvalas\tmtwo\esub\var{\unvalas\tmthree}$.
      Since $\unvalas\tm$ is of the form $\val\sctx$, the same holds
      for $\unvalas\tmtwo$.
      That is, $\unvalas\tmtwo = \val\sctxtwo$.
      Following the same reasoning from case \ruleHAbsSubi in the 
      previous item, $\compat\valas\aset{\sset \cup \set\var}$ holds.
      Moreover, $\tmtwo \in \Struct{\sset \cup \set\var}$ by 
      \cref{rem:habs_st}.
      Since $\inv\aset\sset{\tmtwo\esub\var\tmthree}$ implies in 
      particular $\inv\aset{\sset \cup \set\var}\tmtwo$, we can then 
      apply the \ih (2) on $\tmtwo$, yielding that $\val$ is a variable.
    \end{itemize}
  \item \ruleStructSubii.
    Then, $\tm = \tmtwo\esub\var\tmthree \in \Struct\sset$ which is 
    derived from $\tmtwo \in \Struct{\sset \cup \set\var}$,
    $\var \notin \sset$, and $\tmthree \in \Struct\sset$.
    Since $\inv\aset\sset{\tmtwo\esub\var\tmthree}$ holds, then it
    implies in particular $\inv\aset{\sset \cup \set\var}\tmtwo$.
    We analyse two cases, depending on the form 
    $\unvalas{\tmtwo\esub\var\tmthree}$ has.
    \begin{itemize}
    \item 
      If $\unvalas\tmthree = \val_{\tmthree}\sctx_\tmthree$ and
      $\var \in \rv\tmtwo$, then $\unvalas{\tmtwo\esub\var\tmthree} = 
      \unvalas[\valas \cup (\var \mapsto \val_\tmthree)]\tmtwo\esub\var{\val_\tmthree}\sctx_\tmthree$.
      Moreover, since $\unvalas\tm = \val\sctx$, then 
      $\unvalas[\valas \cup (\var \mapsto \val_\tmthree)]\tmtwo$ is
      of the form $\val\sctxtwo$.
      Since $\inv\aset\sset{\tmtwo\esub\var\tmthree}$ holds, then it
      implies in particular $\inv\aset\sset\tmthree$.
      Given that $\tmthree \in \Struct\sset$ and 
      $\unvalas\tmthree = \val_{\tmthree}\sctx_\tmthree$, then 
      $\val_\tmthree$ is a variable by the \ih (2) on $\tmthree$.
      We now prove that 
      $\compat{\valas \cup (\var \mapsto \val_\tmthree)}\aset{\sset \cup \set\var}$
      holds.
      \begin{itemize}
      \item[a.]
        $\aset
        \subseteq \dom{\valas \cup (\var \mapsto \val_\tmthree)} 
        \subseteq \aset \cup (\sset \cup \set\var)$
        by the hypothesis $\compat\valas\aset\sset$
      \item[b.]
        Let $\vartwo \in \aset$.
        Then, $\vartwo \neq \var$ since 
        $\inv\aset{\sset \cup \set\var}\tmtwo$. 
        Thus, $(\valas \cup (\var \mapsto \val_\tmthree))(\vartwo) = \valas(\vartwo)$, 
        and $\abs{\valas(\vartwo)}$ by the hypothesis 
        $\compat\valas\aset\sset$.
      \item[c.]
        Let $\vartwo \in 
        ((\sset \cup \set\var) \cap \dom{\valas \cup (\var \mapsto \val_\tmthree)})$.
        If $\vartwo = \var$, then 
        $(\valas \cup (\var \mapsto \val_\tmthree))(\var) = \val_\tmthree$
        which is a variable by the \ih (2) on $\tmthree$.
        If $\vartwo \neq \var$, then 
        $(\valas \cup (\var \mapsto \val))(\vartwo) = \valas(\vartwo)$,
        which is a variable by the hypothesis $\compat\valas\aset\sset$.
      \end{itemize}
      We can then apply the \ih on $\tmtwo$, yielding that $\val$ is
      a variable.
    \item 
      Otherwise, $\unvalas{\tmtwo\esub\var\tmthree} = 
      \unvalas\tmtwo\esub\var{\unvalas\tmthree}$.
      Moreover, since $\unvalas\tm = \val\sctx$, then 
      $\unvalas\tmtwo$ is of the form $\val\sctxtwo$.
      Since $\inv\aset\sset{\tmtwo\esub\var\tmthree}$ holds, then it
      implies in particular $\inv\aset\sset\tmthree$.
      Given that $\tmthree \in \Struct\sset$ and 
      $\unvalas\tmthree = \val_{\tmthree}\sctx_\tmthree$, then 
      $\val_\tmthree$ is a variable by the \ih (2) on $\tmthree$.
      We now prove that 
      $\compat\valas\aset{\sset \cup \set\var}$
      holds.
      \begin{itemize}
      \item[a.]
        $\aset \subseteq \dom\valas 
        \subseteq \aset \cup (\sset \cup \set\var)$
        by the hypothesis $\compat\valas\aset\sset$
      \item[b.]
        Let $\vartwo \in \aset$.
        Then, $\vartwo \neq \var$ since 
        $\inv\aset{\sset \cup \set\var}\tmtwo$. 
        Thus $\abs{\valas(\vartwo)}$ by the hypothesis 
        $\compat\valas\aset\sset$.
      \item[c.]
        Let $\vartwo \in 
        ((\sset \cup \set\var) \cap \dom\valas)$.
        Then, $\var \notin \dom\valas$ by $\alpha$-conversion, so
        $\vartwo \neq \var$.
        Thus $\valas(\vartwo)$ is a variable by the hypothesis 
        $\compat\valas\aset\sset$.
      \end{itemize}
      We can then apply the \ih on $\tmtwo$, yielding that $\val$ is
      a variable.
    \end{itemize}
  \end{itemize}
\end{enumerate}

\begin{lemma}
\label{lem:esub_valas_compat_ext_compat}
Let $\tm\esub\var\tmtwo$ be any closure, $\aset$ and $\sset$ be sets
of variables, and $\valas$ be any value assignment such that
$\inv\aset\sset{\tm\esub\var\tmtwo}$ and $\compat\valas\aset\sset$ hold.
Then:
\begin{enumerate}
\item 
  If $\tmtwo \in \HAbs\aset$ and $\unvalas\tmtwo = \val\sctx$, then 
  $\compat{\valas \cup (\var \mapsto \val)}{\aset \cup \set\var}\sset$.
\item
  If $\tmtwo \in \Struct\sset$, then 
  $\compat{\valas'}\aset{\sset \cup \set\var}$, where 
  $\valas' = \valas \cup (\var \mapsto \val)$ if $\var \in \rv\tm$ 
  and $\unvalas{\tmtwo}$ is of the form $\val\sctx$ for some $\val$ 
  and $\sctx$, or $\valas' = \valas$ otherwise.
\end{enumerate}
\end{lemma}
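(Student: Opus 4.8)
The plan is to prove both items directly, by expanding the definition of $\compat{\cdot}{\cdot}{\cdot}$ and verifying its three defining clauses in each case; no induction is needed here, since the structural work has already been carried out in \cref{lem:compatible_unfold}, which we invoke only to pin down the shape of the freshly-added value. Two preliminary observations set things up. First, from $\inv\aset\sset{\tm\esub\var\tmtwo}$ we get $\inv\aset\sset\tmtwo$, because $\fv\tmtwo \subseteq \fv{\tm\esub\var\tmtwo} \subseteq \aset \cup \sset$ and $\aset \cap \sset = \emptyset$; together with the hypothesis $\compat\valas\aset\sset$ this makes \cref{lem:compatible_unfold} applicable to the subterm $\tmtwo$. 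Second, by $\alpha$-conversion we choose the bound variable $\var$ fresh, so that $\var \notin \dom\valas \cup \aset \cup \sset$ (the fact $\var \notin \aset$ would anyway follow from $\var \notin \dom\valas$ and $\aset \subseteq \dom\valas$).

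For the first item, assume $\tmtwo \in \HAbs\aset$ and $\unvalas\tmtwo = \val\sctx$. By \cref{lem:compatible_unfold}(1) we have $\abs{\unvalas\tmtwo}$, so $\val\sctx$ has the form $(\lam\vartwo{\tmthree})\sctxtwo$; since $\val$ is a value, its only way to head such a term is to be an abstraction, i.e. $\abs\val$ holds. It then remains to check the three clauses of $\compat{\valas \cup (\var \mapsto \val)}{\aset \cup \set\var}\sset$: (a) $\aset \cup \set\var \subseteq \dom\valas \cup \set\var = \dom{\valas \cup (\var \mapsto \val)} = \dom\valas \cup \set\var \subseteq (\aset \cup \sset) \cup \set\var = (\aset \cup \set\var) \cup \sset$, using $\aset \subseteq \dom\valas \subseteq \aset \cup \sset$; (b) for $\vartwo \in \aset \cup \set\var$, if $\vartwo = \var$ then $(\valas \cup (\var \mapsto \val))(\var) = \val$ is an abstraction by the above, and if $\vartwo \in \aset$ then $(\valas \cup (\var \mapsto \val))(\vartwo) = \valas(\vartwo)$ is an abstraction by $\compat\valas\aset\sset$; (c) for $\vartwo \in \sset \cap \dom{\valas \cup (\var \mapsto \val)}$, freshness gives $\vartwo \neq \var$, so $\vartwo \in \sset \cap \dom\valas$ and $\valas(\vartwo)$ is a variable by $\compat\valas\aset\sset$.

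For the second item, assume $\tmtwo \in \Struct\sset$ and split according to the definition of $\valas'$. If $\var \in \rv\tm$ and $\unvalas\tmtwo = \val\sctx$ for some $\val,\sctx$, then $\valas' = \valas \cup (\var \mapsto \val)$, and \cref{lem:compatible_unfold}(2) ensures that $\val$ is a variable; the three clauses of $\compat{\valas'}\aset{\sset \cup \set\var}$ are then verified exactly as in the first item, with clause~(a) using $\aset \subseteq \dom\valas \subseteq \aset \cup \sset$, clause~(b) using $\var \notin \aset$ together with $\compat\valas\aset\sset$, and clause~(c) relying on $\val$ being a variable precisely for the subcase $\vartwo = \var$ (and on $\compat\valas\aset\sset$ for $\vartwo \in \sset \cap \dom\valas$). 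In the complementary case $\valas' = \valas$, and $\compat\valas\aset{\sset \cup \set\var}$ follows at once from $\compat\valas\aset\sset$: clauses (a) and (b) are unchanged (using $\aset \cup \sset \subseteq \aset \cup (\sset \cup \set\var)$), and clause~(c) holds because $\var \notin \dom\valas$, so enlarging $\sset$ with $\var$ adds no new element to $\sset \cap \dom\valas$.

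There is no real obstacle: the whole proof is a bookkeeping check of the three compatibility clauses under a short case analysis. The only step worth care is determining the shape of the newly-substituted value ---an abstraction in the first item, a variable in the second--- which is exactly the content of \cref{lem:compatible_unfold}; applying that lemma, in turn, reduces to the routine observation that its side hypotheses $\inv\aset\sset\tmtwo$ and $\compat\valas\aset\sset$ are inherited by the subterm $\tmtwo$.
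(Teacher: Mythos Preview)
Your proof is correct and follows essentially the same approach as the paper's: both verify the three compatibility clauses directly, invoking \cref{lem:compatible_unfold} to determine that the newly-bound value $\val$ is an abstraction in item~1 and a variable in item~2. The only cosmetic difference is that you split item~2 explicitly into the two subcases for $\valas'$, whereas the paper handles them jointly within each clause; the reasoning is otherwise identical.
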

% Label: lem:esub_valas_compat_ext_compat

\begin{proof}
Since $\inv\aset\sset{\tm\esub\var\tmtwo}$ holds, then
$\inv{\aset \cup \set\var}\sset\tm$,
$\inv\aset{\sset \cup \set\var}\tm$, and $\inv\aset\sset\tmtwo$.
We prove each item independently.
\begin{enumerate}
\item
  We show that
  $\compat{\valas \cup (\var \mapsto \val)}{\aset \cup \set\var}\sset$ 
  holds by checking the conditions:
  \begin{enumerate}
  \item[a.]
    Since $\aset \subseteq \dom\valas \subseteq \aset \cup \sset$ by 
    the hypothesis $\compat\valas\aset\sset$, then
    $\aset \cup \set\var \subseteq \dom\valas \cup \set\var \subseteq 
    (\aset \cup \set\var) \cup \sset$.
  \item[b.]
    Let $\vartwo \in \aset \cup \set\var$. 
    If $\vartwo \neq  \var$, then 
    $(\valas \cup (\var \mapsto \val))(\vartwo) = \valas(\vartwo)$,
    and $\abs{\valas(\vartwo)}$ holds by the hypothesis 
    $\compat\valas\aset\sset$.
    Otherwise, $\vartwo = \var$ and $\abs{\val\sctx}$ by 
    \cref{lem:compatible_unfold} (1) and thus $\abs\val$ holds.
  \item [c.]
    Let $\vartwo \in \sset \cap \dom{\valas \cup (\var \mapsto \val)}$.
    Since $\inv{\aset \cup \set\var}\sset\tm$, then $\var \notin \sset$
    so $\var \neq \vartwo$ and thus the property holds by the 
    hypothesis $\compat\valas\aset\sset$.
  \end{enumerate}
\item
  We show that $\compat{\valas'}\aset{\sset \cup \set\var}$ holds by 
  checking the conditions:
  \begin{enumerate}
  \item[a.]
    Since $\aset \subseteq \dom\valas \subseteq \aset \cup \sset$ by 
    the hypothesis $\compat\valas\aset\sset$, then 
    $\aset \subseteq \dom{\valas'} \subseteq \aset \cup (\sset \cup \set\var)$.
  \item[b.]
    Let $\vartwo \in \aset$.
    Since $\inv\aset{\sset \cup \set\var}\tm$ implies $\var \notin \aset$,
    then $\var \neq \vartwo$ so $\valas'(\vartwo) = \valas(\vartwo)$ and
    thus $\abs{\valas'(\vartwo)}$ by the hypothesis $\compat\valas\aset\sset$.
  \item[c.]
    Let $\vartwo \in (\sset \cup \set\var) \cap \dom{\valas'}$. 
    If $\vartwo \neq \var$, then $\valas'(\vartwo) = \valas(\vartwo)$ 
    and thus $\valas'(\vartwo)$ is a variable by the hypothesis 
    $\compat\valas\aset\sset$.
    If $\vartwo = \var$, there are two subcases, depending on whether 
    $\var \in \rv\tmtwo$.
    If $\var \in \rv\tmtwo$, then $\valas' = \valas \cup (\var \mapsto \val)$,
    so we have that $\valas'(\var) = \val$, and we are in the case 
    $\unvalas\tmtwo = \val\sctx$, where $\val$ is a variable by 
    \cref{lem:compatible_unfold} (2).
    If $\var \notin \rv\tmtwo$, then $\valas' = \valas$, so 
    $\valas'(\var) = \valas(\var)$.
    Note that we can assume $\var \notin \dom\valas$ by 
    $\alpha$-conversion, so $\valas(\var) = \var$.
  \end{enumerate}
\end{enumerate}
\end{proof}

\begin{proposition}
\label{prop:useful_nf_unfold_to_nf}
\label{prop:unfold_reducible_is_reducible}
Let $\tm$ be a term, $\valas$ a value assignment, and $\aset, \sset$ 
sets of variables.
Suppose $\inv\aset\sset\tm$ and  $\compat\valas\aset\sset$ hold.
\begin{enumerate}
\item
  \label{prop:useful_nf_unfold_to_nf:part}
  If $\tm \in \NF\aset\sset\appflag$,
  then $\unvalas\tm \in \VNF{\dom\valas}\appflag$.
\item
  \label{prop:unfold_reducible_is_reducible:part}
  If $\tm \tov\rulename\aset\sset\appflag \tm'$,
  then either there exists a term $\tm''$ such that 
  $\unvalas\tm \tovv\dbsym \tm''$, or 
  $\abs{\unvalas\tm}$ and $\appflag = \app$.
\end{enumerate}
\end{proposition}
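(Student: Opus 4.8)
The plan is to prove the two items by two structural inductions: the first by induction on the derivation of $\tm \in \NF\aset\sset\appflag$, the second by induction on the derivation of $\tm \tov\rulename\aset\sset\appflag \tm'$. In both, the invariants $\inv\aset\sset\tm$ and $\compat\valas\aset\sset$ are threaded through the subterms. The central step always occurs at a closure $\tmtwo\esub\var\tmthree$, where one must \emph{(i)} predict which branch of the definition of $\unvalas{\tmtwo\esub\var\tmthree}$ fires, \emph{(ii)} re-establish compatibility for the body $\tmtwo$ with the appropriate value assignment, and---in the first item---\emph{(iii)} reconstruct a \LOCBV normal form out of the separately-treated pieces. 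For \emph{(i)}, \cref{lem:compatible_unfold} pins down the shape of $\unvalas\tmthree$: if $\tmthree \in \HAbs\aset$ then $\unvalas\tmthree$ is an abstraction inside a substitution context, \ie $\unvalas\tmthree = \val\sctx$ with $\val$ an abstraction; and if $\tmthree \in \Struct\sset$ and $\unvalas\tmthree$ has the form $\val\sctx$, then $\val$ is a variable. Moreover, by \cref{lem:valasExt_vL_valas_vL}~(1), whether $\unvalas\tmthree$ has value form is independent of $\valas$, so the branch is determined by $\tmthree$ together with whether $\var \in \rv\tmtwo$. For \emph{(ii)}, \cref{lem:esub_valas_compat_ext_compat} supplies the compatible assignment for the body, namely $\valas \cup (\var \mapsto \val)$ or $\valas$ according to the case.

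For the first item, the leaf cases are immediate: $\unvalas{(\lam\var\tm)} = \lam\var\tm$ lands in $\VNF{\dom\valas}\nonapp$ by \ruleVNFLam; and for $\var$, using compatibility, $\unvalas\var$ is either $\var$ itself with $\var \notin \dom\valas$, or an abstraction (when $\var \in \aset$, in which case the premise of \ruleUNFVar forces $\appflag = \nonapp$), or a variable bound to a variable of $\sset$, which by idempotence of $\valas$ lies outside $\dom\valas$---in all cases the result is in $\VNF{\dom\valas}\appflag$ by \ruleVNFVar or \ruleVNFLam. The application case \ruleUNFApp follows from the \ih on both subterms and \ruleVNFApp. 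The substitution cases \ruleUNFEsAbs and \ruleUNFEsStruct split according to \emph{(i)}. In the substituting branch, $\unvalas{\tmtwo\esub\var\tmthree} = \unvalas[\valas \cup (\var \mapsto \val)]\tmtwo \esub\var\val\sctx$; the \ih on $\tmtwo$ (with the extended assignment and the frames updated as in the \UOCBV rule) gives $\unvalas[\valas \cup (\var \mapsto \val)]\tmtwo \in \VNF{\dom\valas \cup \set\var}\appflag$, the \ih on $\tmthree$ gives $\val\sctx \in \VNF{\dom\valas}\nonapp$, and one reassembles: \cref{lem:splitting_ctx_nf} peels off $\sctx$ (passing to the expanded frame $\expansion{\dom\valas}\sctx$), the weakening lemma \cref{lem:vnf_weakening} accommodates the variables of $\sctx$ (fresh by $\alpha$-conversion), rule \ruleVNFEsVal reinserts $\esub\var\val$, and \cref{lem:splitting_ctx_nf} reattaches $\sctx$. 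In the other branch, $\unvalas{\tmtwo\esub\var\tmthree} = \unvalas\tmtwo \esub\var{\unvalas\tmthree}$, and one applies \ruleVNFEsVal or \ruleVNFEsNonVal according to whether $\unvalas\tmthree$ has value form, again using the \ih on $\tmtwo$ and $\tmthree$.

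For the second item, the cases producing the first disjunct rest on the fact that the unfolding of a term $(\lam\var\tmtwo)\sctx$ is again an abstraction inside a substitution context: indeed $(\lam\var\tmtwo)\sctx \in \HAbs\aset$ by \cref{rem:habs_st}~(6), so \cref{lem:compatible_unfold}~(1) gives $\abs{\unvalas{((\lam\var\tmtwo)\sctx)}}$. Hence in \ruleUDb the unfolding $\unvalas\tm = \unvalas{((\lam\var\tmtwo)\sctx)}\,\unvalas\tmthree$ is a $\dbsym$-redex, and in \ruleUAppL, if the \ih on the left subterm returns the alternative ``its unfolding is an abstraction and $\appflag = \app$'', then $\unvalas\tm$ is again a top-level $\dbsym$-redex, while a returned $\dbsym$-step is propagated by \ruleVAppL. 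The cases \ruleUSub and \ruleUAppR are immediate: for \ruleUSub, compatibility makes $\unvalas\var$ an abstraction, yielding the second disjunct; and \ruleUAppR cannot return the impossible alternative since its premise is at flag $\nonapp$. The remaining congruence cases \ruleULsv, \ruleUEsR, \ruleUEsLAbs, \ruleUEsLStruct split on \emph{(i)}: the \ih is applied to the reduced subterm with the compatible assignment from \cref{lem:esub_valas_compat_ext_compat}, a returned $\dbsym$-step is propagated through the surrounding closure by \ruleVEsL or \ruleVEsR (and, when the redex lies inside the substitution context attached to a value, by \cref{rem:valL_reduces_in_L}), and a returned ``abstraction and $\appflag = \app$'' alternative is preserved because an abstraction followed by an explicit substitution and a substitution context is still an abstraction.

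The main obstacle is the closure case in both items: the whole argument is driven by keeping precise track of which branch of the partial-unfolding definition applies, of the value assignment for which compatibility holds in the body, and---in the first item---of how to rebuild a \LOCBV normal form once the body, the inserted $\esub\var\val$, and the substitution context $\sctx$ are handled separately, which is what forces the repeated use of \cref{lem:splitting_ctx_nf}, \cref{lem:vnf_weakening}, and the expansion $\expansion\vset\sctx$. A recurring minor technicality, needed whenever $\var \notin \rv\tmtwo$, is that extending $\valas$ by a binding for $\var$ does not alter the partial unfolding of $\tmtwo$, so that the \ih applied with the extended assignment transports back to $\unvalas\tmtwo$; this is a routine induction on $\tmtwo$, using weakness of unfolding and that unfolding follows only reachable positions.
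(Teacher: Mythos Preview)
Your proposal is correct and follows essentially the same approach as the paper: two independent inductions on the derivations of $\tm \in \NF\aset\sset\appflag$ and $\tm \tov\rulename\aset\sset\appflag \tm'$, respectively, with the closure cases driven by \cref{lem:compatible_unfold} and \cref{lem:esub_valas_compat_ext_compat} to predict the unfolding branch and re-establish compatibility, and the reassembly in item~1 done via \cref{lem:splitting_ctx_nf} and \cref{lem:vnf_weakening}. One small remark: your reference to \cref{rem:valL_reduces_in_L} in the \ruleUEsR case is slightly off, since that remark concerns useful reduction $\tou$ rather than $\tovv{}$; the paper handles that subcase by hand, observing that the $\dbsym$-redex lies inside some $\sctx_1\esub\vartwo\tmfour\sctx_2$ and propagating with \ruleVEsL/\ruleVEsR, which is exactly what you intend.
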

%Label: prop:useful_nf_unfold_to_nf

\begin{proof}
We prove each item independently.
\begin{enumerate}
\item
  By induction on the derivation of the judgement 
  $\tm \in \NF\aset\sset\appflag$.
  The interesting cases are \ruleUNFVar, \ruleUNFEsAbs, and 
  \ruleUNFEsStruct.
  \begin{itemize}
  \item \ruleUNFVar.
    Then 
    \[
      \inferrule{
        \var \in \aset \implies \appflag = \nonapp
      }{
        \var \in \NF\aset\sset\appflag
      }
    \]
    Where $\tm = \var$. Moreover, $\inv\aset\sset\var$ implies 
    $\aset \disj \sset$ and
    $\fv\var = \set\var \subseteq \aset \cup \sset$.
    Thus, either $\var \in \aset$ or $\var \in \sset$. 
    We analyse both cases:
    \begin{itemize}
    \item 
      If $\var \in \aset$, then in particular $\appflag = \nonapp$.
      Since $\compat\valas\aset\sset$ holds, $\var \in \dom\valas$ 
      and $\abs{\unvalas\var}$, and moreover $\unvalas\var$ is of the 
      form $\lam{\var_1}{\tm_1}$ by definition of $\valas(\var)$.
      Applying rule \ruleVNFLam, we yield
      $\unvalas\var \in \VNF{\dom\valas}\nonapp$.
    \item 
      If $\var \in \sset$, then $\var \in \dom\valas$ or not by 
      definition of $\compat\valas\aset\sset$.
      If $\var \in \dom\valas$, then $\var \in \dom\valas \cap \sset$,
      and therefore $\unvalas\var = \vartwo$.
      By idempotency of value assignments, 
      $\vartwo \notin \dom\valas$. 
      So we yield $\vartwo \in \VNF{\dom\valas}\appflag$ by rule 
      \ruleVNFVar.
      Otherwise, $\var \notin \dom\valas$, thus
      $\unvalas\var = \var \in \VNF{\dom\valas}\appflag$ by rule 
      \ruleVNFVar.
    \end{itemize}
  \item \ruleUNFEsAbs.
    Then
    \[
      \inferrule{
        (1)\ \tmtwo \in \NF{\aset \cup \set\var}\sset\appflag
        \sep
        (2)\ \tmthree \in \NF\aset\sset\nonapp
        \sep
        (3)\ \tmthree \in \HAbs\aset
      }{
        \tm = \tmtwo\esub\var\tmthree \in \NF\aset\sset\appflag
      }\ruleUNFEsAbs
    \]
    Since (3), then $\valPred\tmthree$ by \cref{rem:habs_st}, and 
    thus (4) $\unvalas\tmthree = \val\sctx$ by \cref{rem:unfolding_vL},
    for some $\val$ and $\sctx$.
    Moreover, $\inv\aset\sset{\tmtwo\esub\var\tmthree}$ implies both
    $\inv{\aset \cup \set\var}\sset\tmtwo$ and
    $\inv\aset\sset\tmthree$.
    Given (3) and (4), the first item of 
    \cref{lem:esub_valas_compat_ext_compat} gives
    $\compat{\valas \cup (\var \mapsto \val)}{\aset \cup \set\var}\sset$.
    Since (2), we can then apply the \ih on $\tmthree$, yielding
    $\unvalas\tmthree = \val\sctx \in \VNF{\dom\valas}\nonapp$.

    Since (1), we can then apply the \ih on $\tmtwo$, yielding 
    (5) $\unvalas[\valas \cup (\var \mapsto \val)]\tmtwo \in 
    \VNF{\dom\valas \cup \set\var}\appflag$.
    We consider two cases depending on whether $\var \in \rv\tmtwo$ 
    or not:
    \begin{itemize}
    \item \label{it:inrv}
      If $\var \in \rv\tmtwo$, then 
      $\unvalas{\tmtwo\esub\var\tmthree} = 
      \unvalas[\valas \cup (\var \mapsto \val)]\tmtwo\esub\var\val\sctx$.
      We obtain
      (6) $\val \in \VNF{\expansion{\dom\valas}\sctx}\nonapp$ and
      $\sctx \in \CtxNF{\dom\valas}$ by \cref{lem:splitting_ctx_nf}.
      We assume 
      $\domSctx\sctx \disj 
      \fv{\unvalas[\valas \cup (\var \mapsto \val)]\tmtwo}$ by 
      $\alpha$-conversion, so 
      $\domSctx\sctx \disj 
      \rv{\unvalas[\valas \cup (\var \mapsto \val)]\tmtwo}$.
      Hence we can apply \cref{lem:vnf_weakening} to (5), yielding
      (7) $\unvalas[\valas \cup (\var \mapsto \val)]\tmtwo \in 
      \VNF{\expansion{\dom\valas}\sctx \cup \set\var}\appflag$.
      Applying rule \ruleVNFEsVal with (7) and (6) as premises, we get
      $\unvalas[\valas \cup (\var \mapsto \val)]\tmtwo\esub\var\val 
      \in \VNF{\expansion{\dom\valas}\sctx}\appflag$.
      We conclude $\unvalas\tm = 
      \unvalas[\valas \cup (\var \mapsto \val)]\tmtwo\esub\var\val\sctx 
      \in \VNF{\dom\valas}\appflag$ by \cref{lem:splitting_ctx_nf}.
    \item \label{it:esLAbsxnotrv}
      If $\var \notin \rv\tmtwo$, then 
      $\unvalas{\tmtwo\esub\var\tmthree} = 
      \unvalas\tmtwo\esub\var{\unvalas\tmthree}$.
      Given $\var \notin \rv\tmtwo$, then 
      $\unvalas[\valas \cup (\var \mapsto \val)]\tmtwo = \unvalas\tmtwo$.
      We conclude $\unvalas\tm = 
      \unvalas\tmtwo\esub\var{\unvalas\tmthree} \in 
      \VNF{\dom\valas}\appflag$ by rule \ruleVNFEsVal.
    \end{itemize}
  \item \ruleUNFEsStruct.
    Then
    \[
      \inferrule{
        (1)\ \tmtwo \in \NF\aset{\sset \cup \set\var}\appflag
        \sep
        (2)\ \tmthree \in \NF\aset\sset\nonapp
        \sep
        (3)\ \tmthree \in \Struct\sset
      }{
        \tmtwo\esub\var\tmthree \in \NF\aset\sset\appflag
      }\ruleUNFEsStruct
    \]
    We analyse two cases based on the form of 
    $\unvalas{\tmtwo\esub\var\tmthree}$:
    \begin{itemize}
    \item
      If $\unvalas\tmthree = \val\sctx$ and $\var \in \rv\tmtwo$, 
      then $\unvalas{\tmtwo\esub\var\tmthree} = 
      \unvalas[\valas \cup (\var \mapsto \val)]\tmtwo\esub\var\val\sctx$,
      and the proof follows analogously to case \ref{it:inrv} in rule 
      \ruleUNFEsAbs.
    \item
      Otherwise, $\unvalas{\tmtwo\esub\var\tmthree} = 
      \unvalas\tmtwo\esub\var{\unvalas\tmthree}$.
      We further distinguish two subcases:
      \begin{itemize}
      \item 
        If $\valPred{\unvalas\tmthree}$, the proof follows analogously 
        ---yet more simply--- than case \ref{it:esLAbsxnotrv} in rule 
        \ruleUNFEsAbs.
      \item
        Otherwise, $\inv\aset\sset{\tmtwo\esub\var\tmthree}$ implies
        both $\inv\aset{\sset \cup \set\var}\tmtwo$ and
        $\inv\aset\sset\tmthree$.
        Since (2), we can then apply the \ih on $\tmthree$, yielding
        (4) $\unvalas\tmthree \in \VNF{\dom\valas}\nonapp$.
        Moreover, given (3), the second item of 
        \cref{lem:esub_valas_compat_ext_compat} implies
        $\compat\valas\aset{\sset \cup \set\var}$.
        Since (1), we can then apply the \ih on $\tmtwo$, yielding
        (5) $\unvalas\tmtwo \in \VNF{\dom\valas}\appflag$.
        Applying rule \ruleVNFEsNonVal with (4) and (5) as premises,
        we obtain $\unvalas\tm = 
        \unvalas\tmtwo\esub\var{\unvalas\tmthree} \in \VNF{\dom\valas}\appflag$.
      \end{itemize}
    \end{itemize}
  \end{itemize}
\item
  % Label: prop:unfold_reducible_is_reducible
  By induction on the derivation of 
  $\tm \tov\rulename\aset\sset\appflag \tm'$.
  \begin{itemize}
  \item \ruleUSub.
    Then, $\tm = \var 
    \tov{\rulesub\var\val}{\aset' \cup \set\var}\sset\app \val = \tm'$,
    where $\rulename = \rulesub\var\val$, 
    $\aset = \aset' \cup \set\var$, and $\appflag = \app$.
    Given $\var \in \aset' \cup \set{\var}$ and
    $\compat\valas{\aset' \cup \set\var}\sset$, then 
    $\abs{\valas(\var)}$ (\ie, $\abs{\unvalas\var}$).
  \item \ruleUDb.
    Then,
    $\tm = (\lam\var\tmtwo)\sctx \, \tmthree
     \tov\ruledb\aset\sset\appflag
     \tmtwo\esub\var\tmthree\sctx = \tm''$,
    where $\rulename = \ruledb$.
    We have $\unvalas{((\lam\var\tmtwo)\sctx \, \tmthree)} = 
    \unvalas{((\lam\var\tmtwo)\sctx)} \, \unvalas\tmthree$.
    Moreover, $(\lam\var\tmtwo)\sctx \in \HAbs\aset$ by 
    \cref{rem:habs_st}, then $\abs{\unvalas{((\lam\var\tmtwo)\sctx)}}$
    by \cref{lem:compatible_unfold} (1),
    so $\unvalas{((\lam\var\tmtwo)\sctx)}$ by definition is of the form 
    $(\lam\var\tmtwo)\sctxtwo$.
    Then, $(\lam\var\tmtwo)\sctxtwo \, \unvalas\tmthree \tovv\ruledb 
    \tmtwo\esub\var{\unvalas\tmthree}\sctxtwo$ by rule \ruleVDb.
  \item \ruleULsv.
    Then
    \[
      \inferrule{
        \tmtwo
        \tov{\rulesub\var\val}{\aset \cup \set\var}\sset\appflag
        \tmtwo'
        \sep
        \var \notin \aset \cup \sset
        \sep
        \val\sctx \in \HAbs\aset
      }{
        \tm = \tmtwo\esub\var{\val\sctx}
        \tov\rulelsv\aset\sset\appflag
        \tmtwo'\esub\var\val\sctx = \tm'
      }\ruleULsv
    \]
    where $\rulename = \rulelsv$.
    Since $\inv\aset\sset{\tmtwo\esub\var{\val\sctx}}$ holds, then it 
    implies in particular $\inv{\aset \cup \set\var}\sset{\val\sctx}$.
    We then have $\abs{\unvalas{(\val\sctx)}}$ by 
    \cref{lem:compatible_unfold} (1), and $\unvalas{(\val\sctx)}$ is 
    of the form $\val'\sctxtwo$ by \cref{rem:unfolding_vL}.
    Moreover, 
    $\compat{\valas \cup (\var \mapsto \val')}{\aset \cup \set\var}\sset$
    by \cref{lem:esub_valas_compat_ext_compat} (1).
    We can then apply the \ih on $\tmtwo$, yielding two possible cases:
    \begin{itemize}
    \item 
      There exists $\tmtwo''$ such that
      $\unvalas[\valas \cup (\var \mapsto \val')]\tmtwo \tovv\ruledb \tmtwo''$.
      We analyse two subcases, depending on whether 
      $\var \in \rv\tmtwo$ or not:
      \begin{itemize}
      \item 
        If $\var \in \rv\tmtwo$, then
        $\unvalas{\tmtwo\esub\var{\val\sctx}} =
        \unvalas[\valas \cup (\var \mapsto \val')]\tmtwo\esub\var{\val'}\sctxtwo 
        \tovv\ruledb \tmtwo''\esub\var{\val'}\sctxtwo$
        by applying (length of $\sctxtwo + 1$) times rule \ruleVEsL,
        so we are done.
      \item 
        If $\var \notin \rv\tmtwo$, then
        $\unvalas{\tmtwo\esub\var{\unvalas\tmthree}} = 
        \unvalas\tmtwo\esub\var{\unvalas\tmthree} =
        \unvalas[\valas \cup (\var \mapsto \val')]\tmtwo\esub\var{\unvalas{(\val\sctx)}}
        \tovv\ruledb \tmtwo''\esub\var{\unvalas\tmthree} = \tm''$
        by rule \ruleVEsL.
      \end{itemize}
    \item 
      $\abs{\unvalas[\valas \cup (\var \mapsto \val')]\tmtwo}$ and 
      $\appflag = \app$.
      Then, $\abs{\unvalas{\tmtwo\esub\var{\val\sctx}}}$, and 
      $\appflag = \app$, so we are done.
    \end{itemize}
  \item \ruleUAppL.
    Then
    \[
      \inferrule{
        \tmtwo \tov\rulename\aset\sset\app \tmtwo'
      }{
        \tm = \tmtwo \, \tmthree \tov\rulename\aset\sset\appflag 
        \tmtwo' \, \tmthree = \tm'
      }\ruleUAppL
    \]

    Since $\inv\aset\sset{\tmtwo \, \tmthree}$ holds, then it implies 
    in particular $\inv\aset\sset\tmtwo$.
    We analyse two cases by the \ih on $\tmtwo$:
    \begin{itemize}
    \item
      There exists $\tmtwo''$ such that 
      $\unvalas\tmtwo \tovv\ruledb \tmtwo''$.
      Then,
      $\unvalas\tmtwo \, \unvalas\tmthree \tovv\ruledb 
      \tmtwo'' \, \unvalas\tmthree = \tm''$ by rule \ruleVAppL.
    \item
      $\abs{\unvalas\tmtwo}$ and $\appflag = \app$.
      Then, $\unvalas\tmtwo$ is of the form $(\lam\var\tmfour)\sctx$,
      thus $(\lam\var\tmfour)\sctx \, \unvalas\tmthree \tovv\ruledb 
      \tmfour\esub\var{\unvalas\tmthree}\sctx = \tm''$ by rule 
      \ruleVDb.
    \end{itemize}
  \item \ruleUAppR.
    Then
    \[
      \inferrule{
        \tmtwo \in \Struct\sset
        \sep
        \tmthree \tov\rulename\aset\sset\nonapp \tmthree'
      }{
        \tm = \tmtwo \, \tmthree \tov\rulename\aset\sset\appflag 
        \tmtwo \, \tmthree' = \tm'
      }\ruleUAppR
    \]

    Since $\inv\aset\sset{\tmtwo \, \tmthree}$ holds, then it implies 
    in particular $\inv\aset\sset\tmthree$.
    We analyse two cases by the \ih on $\tmthree$:
    \begin{itemize}
    \item
      There exists $\tmthree''$ such that 
      $\unvalas\tmthree \tovv\ruledb \tmthree''$.
      Then,
      $\unvalas\tmtwo \, \unvalas\tmthree \tovv\ruledb 
      \unvalas\tmtwo \, \tmthree'' = \tm''$ by rule \ruleVAppR.
    \item
      $\abs{\unvalas\tmthree}$ and $\appflag = \app$.
      This case is impossible since $\appflag = \nonapp$ by premise 
      of rule \ruleUAppR.
    \end{itemize}
  \item \ruleUEsR.
    Then
    \[
      \inferrule{
        \tmthree \tov\rulename\aset\sset\nonapp \tmthree'
      }{
        \tm = \tmtwo\esub\var\tmthree \tov\rulename\aset\sset\appflag 
        \tmtwo\esub\var{\tmthree'} = \tm'
      }\ruleUEsR
    \]

    Since $\inv\aset\sset{\tmtwo\esub\var\tmthree}$ holds, then it 
    implies in particular $\inv\aset\sset\tmthree$.
    We analyse two cases by the \ih on $\tmthree$:
    \begin{itemize}
    \item
      There exists $\tmthree''$ such that 
      $\unvalas\tmthree \tovv\ruledb \tmthree''$.
      By definition of $\unvalas{\tmtwo\esub\var\tmthree}$, there 
      are two subcases:
      \begin{itemize}
      \item 
        If $\unvalas\tmthree = \val\sctx$ and $\var \in \rv\tmtwo$,
        then $\unvalas{\tmtwo\esub\var\tmthree} = 
        \unvalas[\valas \cup (\var \mapsto \val)]\tmtwo\esub\var\val\sctx$.
        Since $\val\sctx$ must have the form 
        $\val\sctx_1\esub\vartwo\tmfour\sctx_2$ with 
        $\tmfour \tovv\ruledb \tmfour'$.
        Then, 
        $\unvalas[\valas \cup (\var \mapsto \val)]\tmtwo\esub\var\val\sctx_1\esub\vartwo\tmfour\sctx_2
        \tovv\ruledb
        \unvalas[\valas \cup (\var \mapsto \val)]\tmtwo\esub\var\val\sctx_1\esub\vartwo{\tmfour'}\sctx_2 = \tm''$
        by applying rules \ruleVEsL and \ruleVEsR as needed.
      \item 
        Otherwise, $\unvalas{\tmtwo\esub\var\tmthree} = 
        \unvalas\tmtwo\esub\var{\unvalas\tmthree}$.
        Applying rule \ruleVEsR, we yield
        $\unvalas\tmtwo\esub\var{\unvalas\tmthree} \tovv\ruledb
        \unvalas\tmtwo\esub\var{\tmthree''} = \tm''$.
      \end{itemize}
    \item
      $\abs{\unvalas\tmthree}$ and $\appflag = \app$.
      This case is impossible since $\appflag = \nonapp$ by premise 
      of rule \ruleUEsR.
    \end{itemize}
  \item \ruleUEsLAbs.
    Then
    \[
      \inferrule{
        (1)\ \tmtwo \tov\rulename{\aset \cup \set\var}\sset\appflag \tmtwo'
        \sep
        (2)\ \tmthree \in \HAbs\aset
        \sep
        (3)\ \var \notin \aset \cup \sset
        \sep
        (4)\ \var \notin \fv\rulename
      }{
        \tm = \tmtwo\esub\var\tmthree \tov\rulename\aset\sset\appflag 
        \tmtwo'\esub\var\tmthree = \tm'
      }\ruleUEsLAbs
    \]

    Since $\inv\aset\sset{\tmtwo\esub\var\tmthree}$ holds, then it 
    implies in particular $\inv{\aset \cup \set\var}\sset\tmtwo$.
    Given (2), then $\unvalas\tmthree = \val\sctx$ by 
    \cref{rem:unfolding_vL}.
    By the first item of \cref{lem:esub_valas_compat_ext_compat},
    $\compat{\valas \cup (\var \mapsto \val)}{\aset \cup \set\var}\sset$
    holds, so we analyse two cases by the \ih on $\tmtwo$:
    \begin{itemize}
    \item
      There exists $\tmtwo''$ such that 
      $\unvalas[\valas \cup (\var \mapsto \val)]\tmtwo \tovv\ruledb \tmtwo''$.
      By definition of $\unvalas{\tmtwo\esub\var\tmthree}$, there are 
      two subcases:
      \begin{itemize}
      \item 
        If $\unvalas\tmthree = \val\sctx$ and $\var \in \rv\tmtwo$,
        then $\unvalas{\tmtwo\esub\var\tmthree} =
        \unvalas[\valas \cup (\var \mapsto \val)]\tmtwo\esub\var\val\sctx$.
        We conclude
        $\unvalas[\valas \cup (\var \mapsto \val)]\tmtwo\esub\var\val\sctx
        \tovv\ruledb \tmtwo''\esub\var\val\sctx = \tm''$
        by successively applying rule \ruleVEsL.
      \item 
        Otherwise, $\unvalas{\tmtwo\esub\var\tmthree} = 
        \unvalas\tmtwo\esub\var{\unvalas\tmthree}$.
        Then $\unvalas[\valas \cup (\var \mapsto \val)]\tmtwo = \unvalas\tmtwo$
        given $\var \notin \rv\tmtwo$. 
        Applying rule \ruleVEsL, we obtain
        $\unvalas\tmtwo\esub\var{\unvalas\tmthree} \tovv\ruledb
        \tmtwo''\esub\var{\unvalas\tmthree} = \tm''$.
      \end{itemize}
    \item
      $\abs{\unvalas[\valas \cup (\var \mapsto \val)]\tmtwo}$ and 
      $\appflag = \app$.
      It is immediate to conclude $\abs{\unvalas{\tmtwo\esub\var\tmthree}}$
      and $\appflag = \app$.
    \end{itemize}
  \item \ruleUEsLStruct.
    Then
    \[
      \inferrule{
        \tmtwo \tov\rulename\aset{\sset \cup \set\var}\appflag \tmtwo'
        \sep
        \tmthree \in \Struct\sset
        \sep
        \var \notin \aset \cup \sset
        \sep
        \var \notin \fv\rulename
      }{
        \tm = \tmtwo\esub\var\tmthree \tov\rulename\aset\sset\appflag \tmtwo'\esub\var\tmthree = \tm'
      }\ruleUEsLStruct
    \]

    Since $\inv\aset\sset{\tmtwo\esub\var\tmthree}$ holds, then it
    implies in particular $\inv\aset{\sset \cup \set\var}\tmtwo$.
    Moreover, by the second item of \cref{lem:esub_valas_compat_ext_compat},
    given $\valas' = \valas$ or 
    $\valas' = \valas \cup (\var \mapsto \val)$, we yield
    $\compat{\valas'}\aset{\sset \cup \set\var}$.
    We can apply \ih on $\tmtwo$, yielding two cases:
    \begin{itemize}
    \item
      There exists $\tmtwo''$ such that 
      $\unvalas[\valas']\tmtwo \tovv\ruledb \tmtwo''$.
      We analyse two subcases:
      \begin{itemize}
      \item 
        If $\unvalas\tmthree = \val\sctx$ and $\var \in \rv\tmtwo$,
        then $\valas' = \valas \cup (\var \mapsto \val)$ and
        $\unvalas{\tmtwo\esub\var\tmthree} = 
        \unvalas[\valas \cup (\var \mapsto \val)]\tmtwo\esub\var\val\sctx$.
        Thus, 
        $\unvalas[\valas \cup (\var \mapsto \val)]\tmtwo\esub\var\val\sctx
        \tovv\ruledb \tmtwo''\esub\var\val\sctx = \tm''$ by 
        successively applying rule \ruleVEsL.
      \item 
        Otherwise, $\valas' = \valas$ and 
        $\unvalas{\tmtwo\esub\var\tmthree} = 
        \unvalas\tmtwo\esub\var{\unvalas\tmthree}$.
        Applying rule \ruleVEsL, we obtain
        $\unvalas\tmtwo\esub\var{\unvalas\tmthree} \tovv\ruledb
        \tmtwo''\esub\var{\unvalas\tmthree} = \tm''$.
      \end{itemize}
    \item
      $\abs{\unvalas\tmtwo}$ and $\appflag = \app$.
      It is immediate to conclude $\abs{\unvalas{\tmtwo\esub\var\tmthree}}$
      and $\appflag = \app$.
    \end{itemize}
  \end{itemize}
\end{enumerate}
\end{proof}

\begin{corollary}
\label{coro:t_UNF_unfolding_t_VNF}
Let $\tm$ be a term.
Then, $\tm \in \NF\emptyset{\fv\tm}\nonapp$
if and only if $\unvalasevalas\tm \in \VNF\emptyset\nonapp$.
\end{corollary}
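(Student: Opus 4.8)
The plan is to derive \cref{coro:t_UNF_unfolding_t_VNF} as the instantiation at top-level of the generalised \cref{prop:useful_nf_unfold_to_nf}, combined with the characterisations of normal forms in \LOCBV and \UOCBV already established. First I would observe that for any term $\tm$, the empty value assignment $\evalas$ is compatible with the frames $\aset = \emptyset$ and $\sset = \fv\tm$, i.e.\ $\compat\evalas\emptyset{\fv\tm}$ holds: its domain is empty, so all three conditions (domain containment $\emptyset \subseteq \emptyset \subseteq \fv\tm$, abstractions for $\aset$-variables, variables for $\sset$-variables in the domain) are vacuously satisfied. Likewise $\inv\emptyset{\fv\tm}\tm$ holds, since $\emptyset \cap \fv\tm = \emptyset$ and $\fv\tm \subseteq \emptyset \cup \fv\tm$. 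This sets up the hypotheses needed for both directions.

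For the left-to-right direction, assume $\tm \in \NF\emptyset{\fv\tm}\nonapp$. Applying \cref{prop:useful_nf_unfold_to_nf}(\ref{prop:useful_nf_unfold_to_nf:part}) with $\valas \eqdef \evalas$, $\aset \eqdef \emptyset$, $\sset \eqdef \fv\tm$, $\appflag \eqdef \nonapp$, we immediately get $\unvalasevalas\tm \in \VNF{\dom\evalas}\nonapp = \VNF\emptyset\nonapp$, using $\dom\evalas = \emptyset$. For the right-to-left direction, I would argue contrapositively: suppose $\tm \notin \NF\emptyset{\fv\tm}\nonapp$. By \cref{thm:characterization_of_useful_normal_forms}, this means $\tm \notin \Irred\emptyset{\fv\tm}\nonapp$, i.e.\ there exist $\rulename$ and $\tm'$ with $\tm \tov\rulename\emptyset{\fv\tm}\nonapp \tm'$. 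Then \cref{prop:unfold_reducible_is_reducible}(\ref{prop:unfold_reducible_is_reducible:part}) applies (again with the empty value assignment and the parameters above), giving two possibilities: either there is $\tm''$ with $\unvalasevalas\tm \tovv\dbsym \tm''$, or $\abs{\unvalasevalas\tm}$ and $\appflag = \app$. The second case is impossible here since $\appflag = \nonapp$, so we are in the first case, and hence $\unvalasevalas\tm$ is $\tovvTop$-reducible, in particular not in $\VIrred\emptyset$. By \cref{thm:characterization_of_non_useful_normal_forms}, this gives $\unvalasevalas\tm \notin \VNF\emptyset\nonapp$, which is exactly the contrapositive of what we wanted.

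A small point worth making explicit in the write-up is that the statement of \cref{prop:unfold_reducible_is_reducible} quantifies over \emph{all} step kinds $\rulename$ (not just top-level ones), so the reducibility of $\tm$ under $\tovTop{\fv\tm}$ —which by definition only considers $\dbsym$ and $\lsvsym$ steps— is precisely the hypothesis needed; the $\rulesub\var\val$ steps never arise at top level since they require the variable to be in an applied position inside the term. The main obstacle is not really in this corollary at all —it is a clean packaging result— but rather in making sure the quantifier and parameter conventions match up: in particular that $\VNF\emptyset\nonapp$ as used in \cref{app:opencbv} is the same set referred to here, and that the top-level irreducibility notion $\tovvTop$-irreducible coincides with $\tm \in \VIrred\emptyset$, which follows because the auxiliary $\rulesub\var\val$ steps cannot fire on an isolated term (there is no enclosing application). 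I would state these identifications briefly and then conclude.

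\begin{proof}
Fix a term $\tm$ and write $\sset \eqdef \fv\tm$. Note that $\inv\emptyset\sset\tm$ holds trivially, and the empty value assignment $\evalas$ is compatible with $\emptyset$ and $\sset$, i.e.\ $\compat\evalas\emptyset\sset$, since $\dom\evalas = \emptyset$ makes all three defining conditions vacuous. Also $\dom\evalas = \emptyset$.

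$(\Rightarrow)$ Assume $\tm \in \NF\emptyset{\fv\tm}\nonapp$. By \cref{prop:useful_nf_unfold_to_nf}(\ref{prop:useful_nf_unfold_to_nf:part}), applied with $\valas \eqdef \evalas$, $\aset \eqdef \emptyset$, $\sset \eqdef \fv\tm$, and $\appflag \eqdef \nonapp$, we obtain $\unvalasevalas\tm \in \VNF{\dom\evalas}\nonapp = \VNF\emptyset\nonapp$.

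$(\Leftarrow)$ We prove the contrapositive. Assume $\tm \notin \NF\emptyset{\fv\tm}\nonapp$. By \cref{thm:characterization_of_useful_normal_forms}, $\tm \notin \Irred\emptyset{\fv\tm}\nonapp$, so there exist a step kind $\rulename$ and a term $\tm'$ with $\tm \tov\rulename\emptyset{\fv\tm}\nonapp \tm'$. Applying \cref{prop:unfold_reducible_is_reducible}(\ref{prop:unfold_reducible_is_reducible:part}) with the same parameters, either there exists $\tm''$ with $\unvalasevalas\tm \tovv\dbsym \tm''$, or $\abs{\unvalasevalas\tm}$ and $\appflag = \app$. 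The latter is impossible since $\appflag = \nonapp$. Hence $\unvalasevalas\tm$ is $\tovvTop$-reducible, so $\unvalasevalas\tm \notin \VIrred\emptyset$, and therefore $\unvalasevalas\tm \notin \VNF\emptyset\nonapp$ by \cref{thm:characterization_of_non_useful_normal_forms}. This is the contrapositive of the claim.
\end{proof}
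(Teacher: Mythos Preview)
Your proof is correct and follows essentially the same approach as the paper's: the forward direction is a direct instantiation of \cref{prop:useful_nf_unfold_to_nf}(\ref{prop:useful_nf_unfold_to_nf:part}) at top level, and the backward direction is the contrapositive via \cref{thm:characterization_of_useful_normal_forms} and \cref{prop:unfold_reducible_is_reducible}(\ref{prop:unfold_reducible_is_reducible:part}), ruling out the $\appflag = \app$ disjunct. You are slightly more explicit than the paper in checking $\compat\evalas\emptyset{\fv\tm}$ and $\inv\emptyset{\fv\tm}\tm$ and in invoking \cref{thm:characterization_of_non_useful_normal_forms} at the end, but the structure is identical.
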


\begin{proof}
The `only if' direction is an immediate consequence of 
\cref{prop:useful_nf_unfold_to_nf} (\ref{prop:useful_nf_unfold_to_nf:part}).

For the `if' direction, it suffices to show the contrapositive, 
namely that $\tm \notin \NF\emptyset{\fv\tm}\nonapp$
implies $\unvalasevalas\tm \notin \VNF\emptyset\nonapp$.
By \cref{thm:characterization_of_useful_normal_forms}, this 
implication is equivalent to showing that if 
$\tm \tov\rulename\emptyset{\fv\tm}\nonapp \tm'$, then there exist a 
step kind $\rulename'$ and a term $\tm''$ such that 
$\unvalasevalas\tm \tovv{\rulename'} \tm''$.
This is a consequence of \cref{prop:unfold_reducible_is_reducible} 
(\ref{prop:unfold_reducible_is_reducible:part}), so we are done.
\end{proof}

\section{Proofs of Section~\ref{sec:usefulcbv_invariant} ``\UOCBV is Invariant''}
\label{app:usefulcbv_invariant}
In this section, we develop the technical details that are necessary 
to show that our inductive characterisation of useful evaluation is 
invariant. 
Recall that we proceed in two stages, following the technique used 
in~\cite{AccattoliC15}: on one hand we prove
a high-level implementation theorem (\cref{thm:high_level_implementation}). 
On the other hand, we show a low-level implementation theorem (\cref{thm:low_level_implementation}). 
This second stage requires more development than the first one, given 
that we need to relate our \UOCBV strategy with the \glamour abstract 
machine introduced in~\cite{AccattoliC15}.

\subsection{Low-Level Implementation}

Recall that a term is \defn{pure} if it contains no ESs.
We write $\purePred\tm$ to denote that $\tm$ is pure.

Let $\aset$ be an abstraction frame and $\sset$ a structure frame.
The set of \defn{stable terms} under $(\aset,\sset)$, written 
$\Stable\aset\sset$, is inductively defined as follows:
\[
  \inferrule{
  }{
    \var \in \Stable\aset\sset
  }\ruleStableVar
  \HS
  \inferrule{
    \purePred\tm
  }{
    \lam\var\tm \in \Stable\aset\sset
  }\ruleStableAbs
\]
\[
  \inferrule{
    \tm \in \Stable\aset\sset
    \sep
    \tmtwo \in \Stable\aset\sset
  }{
    \tm \, \tmtwo \in \Stable\aset\sset
  }\ruleStableApp
\]
\[
  \inferrule{
    \tm \in \Stable{\aset \cup \set\var}\sset
    \sep
    \tmtwo \in \Stable\aset\sset
    \sep
    \tmtwo \in \HAbs\aset
    \sep
    \var \notin \aset \cup \sset
  }{
    \tm\esub\var\tmtwo \in \Stable\aset\sset
  }\ruleStableESHAbs
\]
\[
  \inferrule{
    \tm \in \Stable\aset{\sset \cup \set\var}
    \sep
    \tmtwo \in \Stable\aset\sset
    \sep
    \tmtwo \in \Struct\sset
    \sep
    \var \notin \aset \cup \sset
  }{
    \tm\esub\var\tmtwo \in \Stable\aset\sset
  }\ruleStableESStruct
\]

\begin{remark}
\label{rem:pure_implies_stable}
Let $\tm$ be a term, $\aset$ an abstraction frame, and $\sset$ a
structure frame.
If $\purePred\tm$, then $\tm \in \Stable\aset\sset$.
\end{remark}

\begin{remark}
\label{lem:stable_then_stableFramesExtended}
Let $\tm$ be a term, $\aset$ an abstraction frame, and $\sset$ a 
structure frame. 
Then:
\begin{enumerate}
\item
  Let $\asettwo$ be an abstraction frame such that $\asettwo \disj \fv\tm$,
  and $\ssettwo$ be a structure frame such that $\ssettwo \disj \fv\tm$ 
  and $\asettwo \disj \ssettwo$.
  If $\tm \in \Stable\aset\sset$, then $\tm \in \Stable{\aset \cup \asettwo}{\sset \cup \ssettwo}$.
\item
  Let $\asettwo$ be an abstraction frame such that $\aset \disj \asettwo$
  and $\ssettwo$ be a structure frame such that $\sset \disj \ssettwo$,
  and moreover, $\asettwo \disj \ssettwo$.
  If $\tm \in \Stable{\aset}{\sset}$, then $\tm \in \Stable{\aset \cup \asettwo}{\sset \cup \ssettwo}$.
\end{enumerate}
\end{remark}

Let $\aset$ be an abstraction frame and $\sset$ a structure frame.
The set of \defn{stable substitution contexts} under $(\aset,\sset)$,
written $\StableCtx\aset\sset$, is inductively defined as follows:
\[
  \inferrule{
  }{
    \ctxhole \in \StableCtx\aset\sset
  }\ruleStableCtxEmpty
\]
\[
  \inferrule{
    \sctx \in \StableCtx{\aset \cup \set\var}\sset
    \sep
    \tm \in \Stable\aset\sset
    \sep
    \tm \in \HAbs\aset
    \sep
    \var \notin \aset \cup \sset
  }{
    \sctx\esub\var\tm \in \StableCtx\aset\sset
  }\ruleStableCtxHAbs
\]
\[
  \inferrule{
    \sctx \in \StableCtx\aset{\sset \cup \set\var}
    \sep
    \tm \in \Stable\aset\sset
    \sep
    \tm \in \Struct\sset
    \sep
    \var \notin \aset \cup \sset
  }{
    \sctx\esub\var\tm \in \StableCtx\aset\sset
  }\ruleStableCtxStruct
\]

\begin{lemma}
\label{lem:tL_stable_iff_t_stableExtended_L_stable}
Let $\aset$ be an abstraction frame, $\sset$ be a structure frame,
$\tm$ be a term, and $\sctx$ be a substitution context such that 
$\inv\aset\sset{\tm\sctx}$ holds.
Then, $\tm\sctx \in \Stable\aset\sset$ if and only if 
$\tm \in \Stable{\expansion\aset\sctx}{\expansion\sset\sctx}$ and 
$\sctx \in \StableCtx\aset\sset$.
\end{lemma}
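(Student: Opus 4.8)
The plan is to prove both directions by induction on the substitution context $\sctx$, mirroring the structure of the analogous Lemmas \cref{lem:t_Struct_t_StructExp} and \cref{lem:tL_hAbs_t_hAbsExp} for the predicates $\Struct{}$ and $\HAbs{}$ that already appear in the excerpt. The base case $\sctx = \ctxhole$ is immediate: $\expansion\aset\ctxhole = \aset$, $\expansion\sset\ctxhole = \sset$, and $\ctxhole \in \StableCtx\aset\sset$ always holds by rule \ruleStableCtxEmpty, so $\tm\ctxhole = \tm \in \Stable\aset\sset$ is equivalent to $\tm \in \Stable{\expansion\aset\ctxhole}{\expansion\sset\ctxhole}$ together with the (trivially true) requirement $\ctxhole \in \StableCtx\aset\sset$.

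For the inductive step, suppose $\sctx = \sctxtwo\esub\var\tmtwo$, so that $\tm\sctx = (\tm\sctxtwo)\esub\var\tmtwo$. I would split on whether $\tmtwo \in \HAbs\aset$ or $\tmtwo \in \Struct\sset$; by \cref{lem:disjunction} and the invariant $\aset \disj \sset$ (implied by $\inv\aset\sset{\tm\sctx}$) at most one of these holds. First, observe that from $\inv\aset\sset{(\tm\sctxtwo)\esub\var\tmtwo}$ one extracts $\inv{\aset\cup\set\var}\sset{\tm\sctxtwo}$, $\inv\aset{\sset\cup\set\var}{\tm\sctxtwo}$, and $\inv\aset\sset\tmtwo$, and by $\alpha$-conversion we may assume $\var \notin \aset\cup\sset$, which is needed to apply the rules \ruleStableESHAbs, \ruleStableESStruct, \ruleStableCtxHAbs, \ruleStableCtxStruct. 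In the case $\tmtwo \in \HAbs\aset$: the term $(\tm\sctxtwo)\esub\var\tmtwo \in \Stable\aset\sset$ holds iff (by inversion on the stability rules, and using \cref{lem:disjunction} to rule out \ruleStableESStruct since $\tmtwo$ cannot also be in $\Struct\sset$) we have $\tm\sctxtwo \in \Stable{\aset\cup\set\var}\sset$ and $\tmtwo \in \Stable\aset\sset$. By the inductive hypothesis applied to $\sctxtwo$ with frames $\aset\cup\set\var$ and $\sset$, the first conjunct is equivalent to $\tm \in \Stable{\expansion{(\aset\cup\set\var)}{\sctxtwo}}{\expansion\sset{\sctxtwo}}$ and $\sctxtwo \in \StableCtx{\aset\cup\set\var}\sset$. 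Now I would check the bookkeeping: since $\tmtwo \in \HAbs\aset$, the definition of expansion gives $\expansion\aset{\sctxtwo\esub\var\tmtwo} = \expansion\aset{\sctxtwo}\cup\set\var$ and $\expansion\sset{\sctxtwo\esub\var\tmtwo} = \expansion\sset{\sctxtwo}$; one needs $\expansion{(\aset\cup\set\var)}{\sctxtwo} = \expansion\aset{\sctxtwo}\cup\set\var$, which should follow by a small auxiliary induction (the extra variable $\var$ does not affect which arguments land in $\HAbs{}$ or $\Struct{}$ along $\sctxtwo$, up to $\alpha$-renaming). Combining, $\tm \in \Stable{\expansion\aset{\sctx}}{\expansion\sset{\sctx}}$; and $\sctxtwo \in \StableCtx{\aset\cup\set\var}\sset$ together with $\tmtwo \in \Stable\aset\sset$, $\tmtwo \in \HAbs\aset$, $\var\notin\aset\cup\sset$ gives $\sctx = \sctxtwo\esub\var\tmtwo \in \StableCtx\aset\sset$ by rule \ruleStableCtxHAbs, and conversely. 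The case $\tmtwo \in \Struct\sset$ is symmetric, swapping the roles of the abstraction and structure frames, using rules \ruleStableESStruct and \ruleStableCtxStruct and the expansion clauses for $\Struct{}$. Finally, there is the case where $\tmtwo$ is neither in $\HAbs\aset$ nor in $\Struct\sset$: then neither \ruleStableESHAbs nor \ruleStableESStruct can derive $(\tm\sctxtwo)\esub\var\tmtwo \in \Stable\aset\sset$, and correspondingly neither \ruleStableCtxHAbs nor \ruleStableCtxStruct derives $\sctx \in \StableCtx\aset\sset$; moreover the expansions do not grow the frames, so in this subcase \emph{both} sides of the biconditional are false, and the equivalence holds vacuously. (In fact, within this subcase one should double-check, using \cref{rem:habs_st} and the analogous remark for structures, that $\tm\sctx$ genuinely fails to be stable whenever $\tmtwo$ is neither rigid kind — this is where an appeal to \cref{lem:HAbs_Struct_closed_reduction} is \emph{not} needed, but one does rely on the definition of $\Stable{}$ being syntax-directed.)

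The main obstacle I anticipate is the compatibility of frame-expansion with frame-extension, i.e.\ the identity $\expansion{(\aset\cup\set\var)}{\sctxtwo} = \expansion\aset{\sctxtwo}\cup\set\var$ (and its $\Struct{}$-counterpart), which needs $\var$ to be fresh for $\sctxtwo$ and uses monotonicity of $\HAbs{}$ and $\Struct{}$ in the frame (\cref{rem:habs_st}, items (1) and (4)); this is routine but must be stated carefully because the two expansions are defined by separate recursions that interleave when one ES argument is rigid of one kind and a later one is rigid of the other. A secondary point of care is ensuring the side condition $\var \notin \aset\cup\sset$ needed by the $\StableCtx{}$-rules is available — this is guaranteed by $\alpha$-conversion together with the invariant $\inv\aset\sset{\tm\sctx}$, exactly as in the congruence rules \ruleUEsLAbs and \ruleUEsLStruct. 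Apart from these, the proof is a direct structural induction with inversion on the stability rules, entirely parallel to the treatment of $\Struct{}$ and $\HAbs{}$ already carried out in the excerpt.
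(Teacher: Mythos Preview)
Your proposal is correct and follows essentially the same approach as the paper: induction on $\sctx$, with the inductive step splitting on whether $\tmtwo \in \HAbs\aset$ or $\tmtwo \in \Struct\sset$, using \cref{lem:disjunction} to ensure these cases are exclusive, and relying on the identity $\expansion{(\aset\cup\set\var)}{\sctxtwo} = \expansion\aset{\sctxtwo}\cup\set\var$ (which the paper simply asserts without a separate induction). Your third ``neither'' case is harmless but unnecessary, since in each direction the hypothesis (inversion on \ruleStableESHAbs/\ruleStableESStruct for the left-to-right, and on \ruleStableCtxHAbs/\ruleStableCtxStruct for the right-to-left) already forces $\tmtwo$ to be rigid.
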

% Label: lem:tL_stable_iff_t_stableExtended_L_stable

\begin{proof}
We prove both sides by induction on the length of $\sctx$.

\noindent $1 \Rightarrow 2)$
\begin{itemize}
\item $\sctx = \ctxhole$.
  On the one hand, $\expansion{\aset}{\ctxhole} = \aset$ 
  and $\expansion{\sset}{\ctxhole} = \sset$ hold by definition,
  therefore we conclude $\tm \in \Stable{\aset}{\sset}$ by hypothesis.
  Lastly, we conclude $\ctxhole \in \StableCtx{\aset}{\sset}$ applying rule
  $\ruleStableCtxEmpty$.
\item $\sctx = \sctx'\esub{\var}{\tmtwo}$.
  We can derive $\tm\sctx'\esub{\var}{\tmtwo}$ either by rule $\ruleStableESHAbs$
  or rule $\ruleStableESStruct$. Both cases are analogous, so we will only show
  case $\ruleStableESHAbs$, where we have
  $\tm\sctx' \in \Stable{\aset \cup \set{\var}}{\sset}$ and
  (1) $\tmtwo \in \Stable{\aset}{\sset}$, with (2) $\tmtwo \in \HAbs{\aset}$
  and (3) $\var \notin \aset \cup \sset$.
  We can apply \ih on $\sctx'$, yielding
  $\tm \in \Stable{\expansion{(\aset \cup \set{\var})}{\sctx'}}{\expansion{\sset}{\sctx'}}$
  and (4) $\sctx' \in \StableCtx{\aset \cup \set{\var}}{\sset}$.
  Note that $\expansion{(\aset \cup \set{\var})}{\sctx'} 
  = \expansion{\aset}{\sctx'} \cup \set{\var}
  = \expansion{\aset}{\sctx'\esub{\var}{\tmtwo}}$,
  and since $\inv{\aset}{\sset}{\tm\sctx'\esub{\var}{\tmtwo}}$ implies
  $\inv{\aset}{\sset}{\tmtwo}$, then we have $\tmtwo \notin \Struct{\sset}$
  by \cref{lem:disjunction}.
  Hence by definition
  $\expansion{\sset}{\sctx'} = \expansion{\sset}{\sctx'\esub{\var}{\tmtwo}}$.
  Therefore 
  $\tm \in \Stable{\expansion{\aset}{\sctx'\esub{\var}{\tmtwo}}}{\expansion{\sset}{\sctx'\esub{\var}{\tmtwo}}}$.
  On the other hand, we can apply rule $\ruleStableCtxHAbs$ 
  with (1), (2), (3) and (4) as premises, yielding
  $\sctx'\esub{\var}{\tmtwo} \in \StableCtx{\aset}{\sset}$.
\end{itemize}
\noindent $2 \Rightarrow 1)$
\begin{itemize}
\item $\sctx = \ctxhole$.
  Then $\expansion{\aset}{\ctxhole} = \aset$
  and $\expansion{\sset}{\ctxhole} = \sset$ hold by definition,
  therefore we conclude $\tm\ctxhole = \tm \in \Stable{\aset}{\sset}$ by hypothesis.
\item $\sctx = \sctx'\esub{\var}{\tmtwo}$.
  We have two cases, depending on whether 
  $\tmtwo \in \HAbs{\aset}$ or $\tmtwo \in \Struct{\sset}$.
  Both cases are analogous, so we will only show case $\tmtwo \in \HAbs{\aset}$.
  Given $\inv{\aset}{\sset}{\tm\sctx'\esub{\var}{\tmtwo}}$ then
  $\inv{\aset \cup \set{\var}}{\sset}{\tm\sctx'}$
  and $\inv{\aset}{\sset}{\tmtwo}$ hold, and the last statement 
  implies $\tmtwo \notin \Struct{\sset}$ by \cref{lem:disjunction}.
  Hence by definition
  $\expansion{\aset}{\sctx'\esub{\var}{\tmtwo}} 
  = \expansion{\aset}{\sctx'} \cup \set{\var}
  = \expansion{(\aset \cup \set{\var})}{\sctx'}$,
  and $\expansion{\sset}{\sctx'\esub{\var}{\tmtwo}} = \expansion{\sset}{\sctx'}$.
  Therefore (1) $\tm \in \Stable{\expansion{(\aset \cup \set{\var})}{\sctx'}}{\expansion{\sset}{\sctx'}}$.
  On the other hand, $\sctx'\esub{\var}{\tmtwo}$ can only be derived by rule
  $\ruleStableCtxHAbs$, with (2) $\sctx' \in \StableCtx{\aset \cup \set{\var}}{\sset}$,
  (3) $\tmtwo \in \Stable{\aset}{\sset}$, (4) $\tmtwo \in \HAbs{\aset}$, 
  and (5) $\var \notin \aset \cup \sset$.
  We can apply \ih on $\sctx'$ with (1) and (2) as hypothesis, yielding
  (6) $\tm\sctx' \in \Stable{\aset \cup \set{\var}}{\sset}$.
  We conclude $\tm\sctx'\esub{\var}{\tmtwo} \in \Stable{\aset}{\sset}$
  by applying rule $\ruleStableESHAbs$, with (3), (4), (5) and (6) as hypothesis.
\end{itemize}
\end{proof}

\begin{lemma}
\label{lem:sub_reduction_preserves_stability}
Let $\tm$ be a term, $\aset$, $\asettwo$ be two abstraction frames, 
and $\sset$, $\ssettwo$ be two structure frames such that 
$\aset \disj \asettwo$ and $\sset \disj \ssettwo$ and 
$\asettwo \disj \ssettwo$, and suppose that 
$\inv{\aset \cup \set\var}\sset\tm$ holds.
If $\tm \tov{\rulesub\var\val}{\aset \cup \set\var}\sset\appflag \tm'$ 
with $\val \in \Stable{\aset \cup \asettwo}{\sset \cup \ssettwo}$ and 
$\tm \in \Stable{\aset \cup \set\var}\sset$, then 
$\tm' \in \Stable{\aset \cup \set\var \cup \asettwo}{\sset \cup \ssettwo}$.
\end{lemma}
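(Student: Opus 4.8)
The plan is to prove \cref{lem:sub_reduction_preserves_stability} by induction on the derivation of the reduction step $\tm \tov{\rulesub\var\val}{\aset \cup \set\var}\sset\appflag \tm'$, mirroring the structure of the analogous \cref{lem:habs_rulesub_closed_reduction} for hereditary abstractions. The base case is rule \ruleUSub, where $\tm = \var$ and $\tm' = \val$; here the goal $\val \in \Stable{\aset \cup \set\var \cup \asettwo}{\sset \cup \ssettwo}$ follows directly from the hypothesis $\val \in \Stable{\aset \cup \asettwo}{\sset \cup \ssettwo}$ together with the weakening property of stable terms, namely the second item of \cref{lem:stable_then_stableFramesExtended} (noting $\var$ is disjoint from the relevant frames since $\var \notin \aset \cup \sset$ and, by $\alpha$-conversion, $\var \notin \asettwo \cup \ssettwo$). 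The congruence cases \ruleUAppL and \ruleUAppR are routine: we invert the relevant derivation of $\tm \in \Stable{\aset \cup \set\var}\sset$ via \ruleStableApp, apply the induction hypothesis to the reducing subterm, and for \ruleUAppR additionally observe that the non-reducing structure subterm stays stable in the enlarged frames by \cref{lem:stable_then_stableFramesExtended}.

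The substitution-context cases \ruleUEsR, \ruleUEsLAbs, and \ruleUEsLStruct require more care, because the stability of $\tm\esub\vartwo\tmthree$ can be derived by either \ruleStableESHAbs or \ruleStableESStruct, and we must cross-reference this with the side conditions of the reduction rule. For \ruleUEsR (reduction inside the ES argument $\tmthree$), we apply the induction hypothesis to $\tmthree$ and reassemble; for \ruleUEsLAbs (reduction on the left with $\tmthree \in \HAbs\aset$), we first use \cref{lem:disjunction} to rule out the \ruleStableESStruct derivation of $\tm\esub\vartwo\tmthree \in \Stable{}{}$ (since $\tmthree$ cannot simultaneously be in $\HAbs\aset$ and $\Struct\sset$ when $\aset \disj \sset$), so the stability must come from \ruleStableESHAbs, giving $\tm \in \Stable{\aset \cup \set\var \cup \set\vartwo}\sset$; then we apply the induction hypothesis with the frame $\aset \cup \set\vartwo$ in place of $\aset$ and reconstruct the closure via \ruleStableESHAbs. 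The \ruleUEsLStruct case is symmetric but needs the expansion machinery: using \cref{lem:t_Struct_t_StructExp} and \cref{lem:tL_hAbs_t_hAbsExp} (or rather the fact that $\tmthree$ carries no substitution context here—actually the $\val\sctx$ appears only in the \ruleULsv case, which I address next), we keep track of how the frames grow.

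The genuinely delicate case is \ruleULsv, where $\tm = \tmtwo\esub{\vartwo}{\valtwo\sctx}$ reduces to $\tmtwo'\esub{\vartwo}{\valtwo\sctx}$ via an inner $\rulesub\vartwo\valtwo$-step on $\tmtwo$, with the side condition $\valtwo\sctx \in \HAbs\aset$. Wait—more precisely, in our statement the outer step is itself a $\rulesub\var\val$-step, so \ruleULsv can only be the last rule if the propagated step kind is $\rulesub\var\val$ and $\var \neq \vartwo$ (by the $\var \notin \fv\rulename$ restriction, applied inside \ruleUEsLAbs/\ruleUEsLStruct which is what actually propagates a $\rulesub$ step through an ES—\ruleULsv introduces only $\lsvsym$). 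So in fact \ruleULsv cannot be the outermost rule for a $\rulesub\var\val$-labelled step at all, and this case is vacuous; the only ES-propagation rules for a $\rulesub$ step are \ruleUEsLAbs and \ruleUEsLStruct. Thus the case analysis is: \ruleUSub (base), \ruleUAppL, \ruleUAppR, \ruleUEsR, \ruleUEsLAbs, \ruleUEsLStruct. I expect the main obstacle to be bookkeeping the frame expansions consistently across the nested ES cases—ensuring that when we invoke \cref{lem:stable_then_stableFramesExtended} to weaken $\val$'s stability into progressively larger frames, the disjointness hypotheses $\aset \disj \asettwo$, $\sset \disj \ssettwo$, $\asettwo \disj \ssettwo$ continue to hold after adding the bound variable $\vartwo$ (which we arrange by $\alpha$-conversion, so that $\vartwo$ avoids $\asettwo$, $\ssettwo$, $\fv\val$, and $\domSctx\sctx$), and that the appropriate \cref{lem:disjunction} contradiction fires to eliminate the mismatched stability-derivation subcase.
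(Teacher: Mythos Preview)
Your proposal is correct and follows the same approach as the paper: induction on the derivation of the $\rulesub\var\val$-step, with the case analysis \ruleUSub, \ruleUAppL, \ruleUAppR, \ruleUEsR, \ruleUEsLAbs, \ruleUEsLStruct (and you correctly observe that \ruleULsv is vacuous since it introduces an $\lsvsym$-label, not $\rulesub\var\val$).

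One detail you glossed over: in the \ruleUEsR case, after applying the induction hypothesis to $\tmthree$ to get $\tmthree' \in \Stable{\aset \cup \set\var \cup \asettwo}{\sset \cup \ssettwo}$, you still need $\tmthree' \in \HAbs{\aset \cup \set\var \cup \asettwo}$ (respectively $\Struct{}$) to reapply \ruleStableESHAbs (respectively \ruleStableESStruct). This does not follow from stability alone; the paper invokes \cref{lem:habs_rulesub_closed_reduction} here, which says that $\HAbs{}$ is preserved under $\rulesub\var\val$-steps (with the frame enlarged by $\asettwo$). Your plan should flag this auxiliary lemma explicitly. Also, your digression about expansion lemmas (\cref{lem:t_Struct_t_StructExp}, \cref{lem:tL_hAbs_t_hAbsExp}) in the \ruleUEsLStruct case is unnecessary: no substitution context $\sctx$ appears in these congruence cases, so the case is genuinely symmetric to \ruleUEsLAbs without extra machinery.
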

%Label: lem:sub_reduction_preserves_stability

\begin{proof}
By induction on the derivation of 
$\tm \tov{\rulesub{\var}{\val}}{\aset \cup \set{\var}}{\sset}{\appflag} \tm'$.
\begin{enumerate}
\item $\ruleUSub$.
  Then
  $\tm = \var \tov{\rulesub{\var}{\val}}{\aset \cup \set{\var}}{\sset}{\app} \val = \tm'$,
  with $\appflag = \app$.
  Since $\val \in \Stable{\aset \cup \asettwo}{\sset \cup \ssettwo}$ by hypothesis,
  then $\val \in \Stable{\aset \cup \set{\var} \cup \asettwo}{\sset \cup \ssettwo}$
  by \cref{lem:stable_then_stableFramesExtended}.
\item $\ruleUAppL$.
  Then
  \[
    \indrule{\ruleUAppL}{
      \tmtwo \tov{\rulesub{\var}{\val}}{\aset \cup \set{\var}}{\sset}{\app} \tmtwo'
    }{
      \tm = \tmtwo \, \tmthree 
      \tov{\rulesub{\var}{\val}}{\aset \cup \set{\var}}{\sset}{\appflag} 
      \tmtwo' \, \tmthree = \tm'
    }
  \]
  and $\tmtwo \, \tmthree \in \Stable{\aset \cup \set{\var}}{\sset}$ can only be
  derived from rule $\ruleStableApp$, so that
  $\tmtwo \in \Stable{\aset \cup \set{\var}}{\sset}$ and 
  $\tmthree \in \Stable{\aset \cup \set{\var}}{\sset}$ hold.
  Moreover, $\inv{\aset \cup \set{\var}}{\sset}{\tmtwo \, \tmthree}$ implies
  $\inv{\aset \cup \set{\var}}{\sset}{\tmtwo}$, therefore
  we can apply \ih on $\tmtwo$, yielding 
  $\tmtwo' \in \Stable{\aset \cup \set{\var} \cup \asettwo}{\sset \cup \ssettwo}$.
  By \cref{lem:stable_then_stableFramesExtended}, we have
   $\tmthree \in \Stable{\aset \cup \set{\var} \cup \asettwo}{\sset \cup \ssettwo}$.
  We conclude
  $\tmtwo' \, \tmthree \in \Stable{\aset \cup \set{\var} \cup \asettwo}{\sset \cup \ssettwo}$
  by applying rule $\ruleStableApp$.
\item $\ruleUAppR$.
  Analogous to the previous case.
  % Then
  % \[
  %   \indrule{\ruleUAppR}{
  %     \tmtwo \in \Struct{\sset}
  %     \sep
  %     \tmthree \tov{\rulesub{\var}{\val}}{\aset \cup \set{\var}}{\sset}{\nonapp} \tmthree'
  %   }{
  %     \tm = \tmtwo \, \tmthree 
  %     \tov{\rulesub{\var}{\val}}{\aset \cup \set{\var}}{\sset}{\appflag} 
  %     \tmtwo \, \tmthree' = \tm'
  %   }
  % \]
  % and $\tmtwo \, \tmthree \in \Stable{\aset \cup \set{\var}}{\sset}$ can only be
  % derived from rule $\ruleStableApp$, so that
  % $\tmtwo \in \Stable{\aset \cup \set{\var}}{\sset}$ and 
  % $\tmthree \in \Stable{\aset \cup \set{\var}}{\sset}$ hold.
  % Moreover, $\inv{\aset \cup \set{\var}}{\sset}{\tmtwo \, \tmthree}$ implies
  % $\inv{\aset \cup \set{\var}}{\sset}{\tmthree}$, therefore
  % we can apply \ih on $\tmthree$, yielding 
  % $\tmthree' \in \Stable{\aset \cup \set{\var} \cup \asettwo}{\sset \cup \ssettwo}$.
  % By \cref{lem:stable_then_stableFramesExtended}, we have
  % (1) $\tmthree \in \Stable{\aset \cup \set{\var} \cup \asettwo}{\sset \cup \ssettwo}$.
  % We conclude
  % $\tmtwo' \, \tmthree \in \Stable{\aset \cup \set{\var} \cup \asettwo}{\sset \cup \ssettwo}$
  % by applying rule $\ruleStableApp$.
\item $\ruleUEsR$.
  Then
  \[
    \indrule{\ruleUEsR}{
      \tmthree \tov{\rulesub{\var}{\val}}{\aset \cup \set{\var}}{\sset}{\nonapp} \tmthree'
    }{
      \tm = \tmtwo\esub{\vartwo}{\tmthree} 
      \tov{\rulesub{\var}{\val}}{\aset \cup \set{\var}}{\sset}{\appflag} 
      \tmtwo\esub{\vartwo}{\tmthree'} = \tm'
    }
  \]
  By $\alpha$-conversion we may assume 
  $\vartwo \notin \asettwo \cup \ssettwo$.
  We can derive
  $\tmtwo\esub{\vartwo}{\tmthree} \in \Stable{\aset \cup \set{\var}}{\sset}$
  either by rule $\ruleStableESHAbs$ or by rule $\ruleStableESStruct$.
  Both cases are analogous, so we only show case $\ruleStableESHAbs$.
  Hence $\tmtwo \in \Stable{\aset \cup \set{\var} \cup \set{\vartwo}}{\sset}$,
  $\tmthree \in \Stable{\aset \cup \set{\var}}{\sset}$, 
  $\tmthree \in \HAbs{\aset \cup \set{\var}}$,
  and $\vartwo \notin \aset \cup \set{\var} \cup \sset$,
  so that (1) $\vartwo \notin \aset \cup \set{\var} \cup \asettwo \cup \sset \cup \ssettwo$.
  Given that $\inv{\aset \cup \set{\var}}{\sset}{\tmtwo\esub{\vartwo}{\tmthree}}$
  implies $\inv{\aset \cup \set{\var}}{\sset}{\tmthree}$,
  we can apply \ih on $\tmthree$, yielding 
  (2) $\tmthree' \in \Stable{\aset \cup \set{\var} \cup \asettwo}{\sset \cup \ssettwo}$.
  Moreover, 
  (3) $\tmtwo \in \Stable{\aset \cup \set{\var} \cup \set{\vartwo} \cup \asettwo}{\sset \cup \ssettwo}$ 
  by \cref{lem:stable_then_stableFramesExtended}.
  We have (4) $\tmthree' \in \HAbs{\aset \cup \set{\var} \cup \asettwo}$
  by \cref{lem:habs_rulesub_closed_reduction}.
  We apply rule $\ruleStableESHAbs$ with (1), (2), (3) and (4) as premises, yielding
  $\tmtwo\esub{\vartwo}{\tmthree'} \in \Stable{\aset \cup \set{\var} \cup \asettwo}{\sset \cup \ssettwo}$.
\item $\ruleUEsLAbs$.
  Then
  \[
    \indrule{\ruleUEsLAbs}{
      \tmtwo \tov{\rulesub{\var}{\val}}{\aset \cup \set{\var} \cup \set{\vartwo}}{\sset}{\appflag} \tmtwo'
      \sep
      \tmthree \in \HAbs{\aset \cup \set{\var}}
      \sep
      \vartwo \notin (\aset \cup \set{\var}) \cup \sset
      \sep
      \vartwo \notin \fv{\rulesub{\var}{\val}}
    }{
      \tm = \tmtwo\esub{\vartwo}{\tmthree} 
      \tov{\rulesub{\var}{\val}}{\aset \cup \set{\var}}{\sset}{\appflag} 
      \tmtwo'\esub{\vartwo}{\tmthree} = \tm'
    }
  \]
  Note that the judgement $\tmtwo\esub{\var}{\tmthree} \in \Stable{\aset \cup \set{\var}}{\sset}$
  can be derived only by rule $\ruleStableESHAbs$.
  Then $\tmtwo \in \Stable{\aset \cup \set{\var} \cup \set{\vartwo}}{\sset}$ and
  $\tmthree \in \Stable{\aset \cup \set{\var}}{\sset}$.
  By $\alpha$-conversion, we may assume that $\vartwo \notin \asettwo \cup \ssettwo$.
  Moreover, $\inv{\aset \cup \set{\var}}{\sset}{\tmtwo\esub{\vartwo}{\tmthree}}$
  implies $\inv{\aset \cup \set{\var} \cup \set{\vartwo}}{\sset}{\tmtwo}$, and
  note that $\val \in \Stable{\aset\cup\set{\vartwo}\cup\asettwo}{\sset \cup \ssettwo}$
  by hypothesis and \cref{rem:habs_st}.
  We can apply \ih on $\tmtwo$, yielding
  (1) $\tmtwo' \in \Stable{\aset \cup \set{\var} \cup \set{\vartwo} \cup \asettwo}{\sset \cup \ssettwo}$.
  By \cref{lem:stable_then_stableFramesExtended} we have
  $\tmthree \in \Stable{\aset \cup \set{\var} \cup \asettwo}{\sset \cup \ssettwo}$.
  And we have $\tmthree \in \HAbs{\aset \cup \set{\var} \cup \asettwo}$
  by \cref{rem:habs_st}, and 
  $\vartwo \notin \aset \cup \set{\var} \cup \asettwo \cup \sset \cup \ssettwo$.
  We conclude 
  $\tmtwo'\esub{\vartwo}{\tmthree} \in \Stable{\aset \cup \set{\var} \cup \asettwo}{\sset \cup \ssettwo}$
  by applying rule $\ruleStableESHAbs$.
\item $\ruleUEsLStruct$.
  Analogous to the previous case.
\end{enumerate}
\end{proof}

\begin{lemma}
\label{lem:reduction_preserves_stability}
Let $\tm$ be a term, $\aset$ be an abstraction frame, and $\sset$ be 
a structure frame such that $\inv\aset\sset\tm$ holds.
If $\tm \tov\rulename\aset\sset\appflag \tm'$ with 
$\rulename \in \set{\ruledb,\rulelsv}$ and $\tm \in \Stable\aset\sset$,
then $\tm' \in \Stable\aset\sset$.
\end{lemma}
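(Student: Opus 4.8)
The plan is to proceed by induction on the derivation of the step $\tm \tov\rulename\aset\sset\appflag \tm'$, following the very same pattern as the proof of \cref{lem:sub_reduction_preserves_stability}. In each case one inverts the derivation of $\tm \in \Stable\aset\sset$ to read off the shape of $\tm$, applies the induction hypothesis to the immediate subterm that is actually reduced, and rebuilds a stability derivation for $\tm'$. The cases where $\tm$ is a variable or an abstraction are vacuous, since neither is $\ruledb$- or $\rulelsv$-reducible. The auxiliary ingredients I expect to use are: \cref{lem:tL_stable_iff_t_stableExtended_L_stable}, to split and reassemble the stability of a term of the form $\tm\sctx$ into $\tm \in \Stable{\expansion\aset\sctx}{\expansion\sset\sctx}$ and $\sctx \in \StableCtx\aset\sset$; \cref{lem:HAbs_Struct_closed_reduction}, to see that a rigid ES argument stays rigid when reduced; \cref{lem:disjunction} together with the invariant $\inv\aset\sset\tm$, to decide which of \ruleStableESHAbs/\ruleStableESStruct produced the stability of a closure $\tm$; \cref{rem:pure_implies_stable} and \cref{lem:stable_then_stableFramesExtended} for frame bookkeeping; \cref{lem:tL_hAbs_t_hAbsExp} (and \cref{rem:habs_st}) to relate rigidity of $\val\sctx$ with that of $\val$ in the expanded frames; and \cref{lem:sub_reduction_preserves_stability} itself to absorb the internal sub-step of rule \ruleULsv.

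First I would dispatch the congruence rules \ruleUAppL, \ruleUAppR, \ruleUEsR, \ruleUEsLAbs, and \ruleUEsLStruct. For \ruleUAppL and \ruleUAppR, $\tm = \tmtwo\,\tmthree$ is stable only via \ruleStableApp, so both $\tmtwo$ and $\tmthree$ are stable; the induction hypothesis on the reduced component plus a fresh \ruleStableApp (and \cref{lem:stable_then_stableFramesExtended} if the frame changed) closes the case. For the three ES-congruence rules, $\tm = \tmtwo\esub\var\tmthree$ is stable via \ruleStableESHAbs or \ruleStableESStruct, the alternative being pinned down by whether $\tmthree \in \HAbs\aset$ or $\tmthree \in \Struct\sset$ (and \cref{lem:disjunction} forbids both simultaneously). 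In \ruleUEsLAbs/\ruleUEsLStruct only the body $\tmtwo$ is reduced in the extended frame, so the rigid argument $\tmthree$ is untouched and one just applies the induction hypothesis to $\tmtwo$ and reapplies the same ES rule; in \ruleUEsR the argument $\tmthree$ is reduced, so besides the induction hypothesis one invokes \cref{lem:HAbs_Struct_closed_reduction} to confirm the reduct of $\tmthree$ is still rigid, and rebuilds with the same ES rule.

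The two interesting cases are \ruleUDb and \ruleULsv. For \ruleUDb, $\tm = (\lam\var\tmtwo)\sctx\,\tmthree$ contracts to $\tm' = \tmtwo\esub\var\tmthree\sctx$; from \ruleStableApp we get $(\lam\var\tmtwo)\sctx \in \Stable\aset\sset$ and $\tmthree \in \Stable\aset\sset$, and then \cref{lem:tL_stable_iff_t_stableExtended_L_stable} together with \ruleStableAbs yields that $\tmtwo$ is pure (hence stable in any frames, by \cref{rem:pure_implies_stable}) and $\sctx \in \StableCtx\aset\sset$. The key point, guaranteed by the stability-aware form of the reduction, is that the argument $\tmthree$ of the contracted redex is rigid under $(\aset,\sset)$; granting this, $\tmtwo\esub\var\tmthree \in \Stable{\expansion\aset\sctx}{\expansion\sset\sctx}$ by \ruleStableESHAbs or \ruleStableESStruct, and peeling $\sctx$ back off with \cref{lem:tL_stable_iff_t_stableExtended_L_stable} gives $\tm' \in \Stable\aset\sset$. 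For \ruleULsv, $\tm = \tmtwo\esub\var{\val\sctx} \tov\rulelsv \tmtwo'\esub\var\val\sctx = \tm'$ is obtained from an internal step $\tmtwo \tov{\rulesub\var\val}{\aset\cup\set\var}\sset\appflag \tmtwo'$, with $\val\sctx \in \HAbs\aset$ and $\var \notin \aset \cup \sset$; since $\val\sctx$ is rigid it lies outside $\Struct\sset$ by \cref{lem:disjunction}, so $\tm$'s stability comes from \ruleStableESHAbs, giving $\tmtwo \in \Stable{\aset\cup\set\var}\sset$ and $\val\sctx \in \Stable\aset\sset$; \cref{lem:tL_stable_iff_t_stableExtended_L_stable} and \cref{lem:tL_hAbs_t_hAbsExp} then deliver $\val \in \Stable{\expansion\aset\sctx}{\expansion\sset\sctx}$, $\val \in \HAbs{\expansion\aset\sctx}$, and $\sctx \in \StableCtx\aset\sset$. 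Applying \cref{lem:sub_reduction_preserves_stability} with $\asettwo \eqdef \expansion\aset\sctx \setminus \aset$ and $\ssettwo \eqdef \expansion\sset\sctx \setminus \sset$ (whose required disjointness is obtained by running \cref{lem:disjunction} along $\sctx$, using the invariant) yields $\tmtwo'$ stable in the frames $\expansion{(\aset\cup\set\var)}\sctx,\expansion\sset\sctx$; then $\tmtwo'\esub\var\val$ is stable via \ruleStableESHAbs, and re-wrapping with $\sctx$ through \cref{lem:tL_stable_iff_t_stableExtended_L_stable} gives $\tm' \in \Stable\aset\sset$.

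The main obstacle I anticipate is the \ruleUDb case: isolating why the argument $\tmthree$ of the contracted redex is rigid under $(\aset,\sset)$ — the discipline that the stable reduction imposes precisely so that the newly created ES is well-formed — and carefully threading the frame expansions $\expansion\aset\sctx,\expansion\sset\sctx$ through \cref{lem:tL_stable_iff_t_stableExtended_L_stable}. A secondary nuisance is discharging the side conditions of \cref{lem:sub_reduction_preserves_stability} in the \ruleULsv case (freshness of $\var$ with respect to the expanded frames, disjointness of $\asettwo$ and $\ssettwo$), but these all reduce to $\alpha$-conversion ($\var \notin \domSctx\sctx$), the invariant $\inv\aset\sset\tm$, and \cref{lem:disjunction}.
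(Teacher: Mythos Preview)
Your plan matches the paper's proof: induction on the reduction derivation, with the same auxiliary lemmas in each case (\cref{lem:tL_stable_iff_t_stableExtended_L_stable}, \cref{lem:sub_reduction_preserves_stability}, and \cref{lem:tL_hAbs_t_hAbsExp} for \ruleULsv; \cref{rem:pure_implies_stable} and \cref{lem:tL_stable_iff_t_stableExtended_L_stable} for \ruleUDb; the \ih plus \cref{lem:HAbs_Struct_closed_reduction} for the congruence cases). Your instinct about the \ruleUDb obstacle is exactly right and matches the paper: the paper's proof simply case-splits on $\tmthree \in \HAbs{\expansion\aset\sctx}$ versus $\tmthree \in \Struct{\expansion\sset\sctx}$ without further ado, a dichotomy that is really supplied by the stable rule \ruleUDbStable rather than by plain \ruleUDb.
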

% Label: reduction_preserves_stability

\begin{proof}
We proceed by induction on $\tm \tov{\rulename}{\aset}{\sset}{\appflag} \tm'$.
\begin{itemize}
\item $\ruleUDb$.
  Then 
  $\tm = (\lam{\var}{\tmtwo})\sctx \, \tmthree \tov{\ruledb}{\aset}{\sset}{\appflag}
  \tmtwo\esub{\var}{\tmthree}\sctx = \tm'$,
  where $\rulename = \ruledb$.
  The hypothesis $(\lam{\var}{\tmtwo})\sctx \, \tmthree \in \Stable{\aset}{\sset}$
  is derived from:
  \[
    \indrule{\ruleStableApp}{
      (\lam{\var}{\tmtwo})\sctx \in \Stable{\aset}{\sset}
      \sep
      \tmthree \in \Stable{\aset}{\sset}
    }{
      (\lam{\var}{\tmtwo})\sctx \, \tmthree \in \Stable{\aset}{\sset}
    }
  \]
  Moreover, $\inv{\aset}{\sset}{(\lam{\var}{\tmtwo})\sctx \, \tmthree}$ implies
  $\inv{\aset}{\sset}{(\lam{\var}{\tmtwo})\sctx}$ and
  $\inv{\aset}{\sset}{\tmthree}$.
  We then have 
  $\lam{\var}{\tmtwo} \in \Stable{\expansion{\aset}{\sctx}}{\expansion{\sset}{\sctx}}$
  and (1) $\sctx \in \StableCtx{\aset}{\sset}$
  by \cref{lem:tL_stable_iff_t_stableExtended_L_stable}, 
  and this judgement can only be derived from rule $\ruleStableAbs$, 
  hence $\purePred{\tmtwo}$.
  By \cref{rem:pure_implies_stable},
  $\tmtwo \in \Stable{\expansion{\aset}{\sctx} \cup \set{\var}}{\expansion{\sset}{\sctx}}$,
  and we have to analyze whether $\tmthree \in \HAbs{\expansion{\aset}{\sctx}}$ 
  or $\tmthree \in \Struct{\expansion{\sset}{\sctx}}$.
  Since both cases are analogous, we will proceed with the first one.
  Since $\domSctx{\sctx} \disj \fv{\tmthree}$ by $\alpha$-conversion,
  then $\tmthree \in \Stable{\expansion{\aset}{\sctx}}{\expansion{\sset}{\sctx}}$ 
  holds by \cref{lem:stable_then_stableFramesExtended},
  and applying rule $\ruleStableESHAbs$ we have that
  (2) $\tmtwo\esub{\var}{\tmthree} \in \Stable{\expansion{\aset}{\sctx}}{\expansion{\sset}{\sctx}}$.
  Given that $\inv{\aset}{\sset}{(\lam{\var}{\tmtwo})\sctx \, \tmthree}$ implies
  $\inv{\aset}{\sset}{\tmtwo\esub{\var}{\tmthree}\sctx}$,
  we can apply \cref{lem:tL_stable_iff_t_stableExtended_L_stable} with (1) and (2),
  to obtain $\tmtwo\esub{\var}{\tmthree}\sctx \in \Stable{\aset}{\sset}$.
\item $\ruleULsv$.
  Then
  $\tm = \tmtwo\esub{\var}{\val\sctx} \tov{\rulelsv}{\aset}{\sset}{\appflag}
  \tmtwo'\esub{\var}{\val}\sctx = \tm'$, where $\rulename = \rulelsv$ and it is
  derived from
  $\tmtwo \tov{\rulesub{\var}{\val}}{\aset \cup \set{\var}}{\sset}{\appflag} \tmtwo'$,
  $\var \notin \aset \cup \sset$, and
  (1) $\val\sctx \in \HAbs{\aset}$.
  The hypothesis $\tmtwo\esub{\var}{\val\sctx} \in \Stable{\aset}{\sset}$
  is derived from rule $\ruleStableESHAbs$ by (1), 
  so (2) $\tmtwo \in \Stable{\aset \cup \set{\var}}{\sset}$ and
  $\val\sctx \in \Stable{\aset}{\sset}$.
  By \cref{lem:tL_stable_iff_t_stableExtended_L_stable}, we have
  (3) $\val \in \Stable{\expansion{\aset}{\sctx}}{\expansion{\sset}{\sctx}}$
  and (4) $\sctx \in \StableCtx{\aset}{\sset}$.
  On the other hand, since $\inv{\aset}{\sset}{\tmtwo\esub{\var}{\tmthree}}$
  then $\inv{\aset \cup \set{\var}}{\sset}{\tmtwo}$ holds
  and we can apply \cref{lem:sub_reduction_preserves_stability} with (2) and (3) as hypothesis,
  yielding 
  (5) $\tmtwo' \in \Stable{\expansion{(\aset \cup \set{\var})}{\sctx}}{\expansion{\sset}{\sctx}}
  = \Stable{\expansion{\aset}{\sctx} \cup \set{\var}}{\expansion{\sset}{\sctx}}$.
  Moreover (6) $\val \in \HAbs{\expansion{\aset}{\sctx}}$ by \cref{lem:tL_hAbs_t_hAbsExp}.
  we may assume $\var \notin \domSctx{\sctx}$ by $\alpha$-conversion, 
  hence we can apply rule $\ruleStableESHAbs$ with (3), (5), (6) and
  $\var \notin \expansion{\aset}{\sctx} \cup \expansion{\sset}{\sctx}$ as premises, thus
  having (7) $\tmtwo'\esub{\var}{\val} \in \Stable{\expansion{\aset}{\sctx}}{\expansion{\sset}{\sctx}}$.
  We conclude $\tmtwo'\esub{\var}{\val}\sctx \in \Stable{\aset}{\sset}$
  by applying \cref{lem:tL_stable_iff_t_stableExtended_L_stable} with
  (4) and (7) as hypothesis.
\item $\ruleUAppL$.
  Then
  $\tm = \tmtwo\,\tmthree \tov{\rulename}{\aset}{\sset}{\appflag} \tmtwo' \, \tmthree = \tm'$,
  derived from
  $\tmtwo \tov{\rulename}{\aset}{\sset}{\app} \tmtwo'$.
  The hypothesis $\tmtwo\,\tmthree \in \Stable{\aset}{\sset}$
  can only be derived from rule $\ruleStableApp$, 
  so (1) $\tmtwo \in \Stable{\aset}{\sset}$ and $\tmthree \in \Stable{\aset}{\sset}$.
  Given that $\inv{\aset}{\sset}{\tmtwo\,\tmthree}$
  implies $\inv{\aset}{\sset}{\tmtwo}$, we can apply \ih on $\tmtwo$,
  yielding (2) $\tmtwo' \in \Stable{\aset}{\sset}$.
  By rule $\ruleStableApp$ with (1) and (2) as premises,
  we conclude $\tmtwo'\,\tmthree \in \Stable{\aset}{\sset}$.
\item $\ruleUAppR$.
  Analogous to the previous case.
\item $\ruleUEsR$.
  Then
  $\tm = \tmtwo\esub{\var}{\tmthree} \tov{\rulename}{\aset}{\sset}{\appflag}
  \tmtwo\esub{\var}{\tmthree'} = \tm'$,
  derived from
  $\tmthree \tov{\rulename}{\aset}{\sset}{\nonapp} \tmthree'$.
  The hypothesis $\tmtwo\esub{\var}{\tmthree} \in \Stable{\aset}{\sset}$
  can be derived either from rule $\ruleStableESHAbs$ or rule $\ruleStableESStruct$, 
  so $\tmthree \in \Stable{\aset}{\sset}$.
  Given that $\inv{\aset}{\sset}{\tmtwo\esub{\var}{\tmthree}}$
  implies $\inv{\aset}{\sset}{\tmthree}$, we can apply \ih on $\tmthree$,
  yielding $\tmthree' \in \Stable{\aset}{\sset}$.
  We can apply either rule $\ruleStableESHAbs$ or $\ruleStableESStruct$
  to conclude $\tmtwo\esub{\var}{\tmthree'} \in \Stable{\aset}{\sset}$.
\item $\ruleUEsLAbs$.
  Then
  $\tm = \tmtwo\esub{\var}{\tmthree} \tov{\rulename}{\aset}{\sset}{\appflag}
  \tmtwo'\esub{\var}{\tmthree} = \tm'$,
  derived from
  $\tmtwo \tov{\rulename}{\aset \cup \set{\var}}{\sset}{\appflag} \tmtwo'$,
  (1) $\tmthree \in \HAbs{\aset}$,
  $\var \notin \aset \cup \sset$, and $\var \notin \fv{\rulename}$.
  The hypothesis $\tmtwo\esub{\var}{\tmthree} \in \Stable{\aset}{\sset}$ can only
  be derived from rule $\ruleStableESHAbs$, 
  so $\tmtwo \in \Stable{\aset \cup \set{\var}}{\sset}$
  and (2) $\tmthree \in \Stable{\aset}{\sset}$.
  Given that $\inv{\aset}{\sset}{\tmtwo\esub{\var}{\tmthree}}$
  implies $\inv{\aset \cup \set{\var}}{\sset}{\tmtwo}$, we can apply \ih on $\tmtwo$,
  yielding (3) $\tmtwo' \in \Stable{\aset \cup \set{\var}}{\sset}$.
  By rule $\ruleStableESHAbs$ with (1), (2) and (3) as premises, we conclude
  $\tmtwo'\esub{\var}{\tmthree} \in \Stable{\aset}{\sset}$.
\item $\ruleUEsLStruct$.
  Analogous to the previous case.
\end{itemize}
\end{proof}

We now restrict the reduction rule $\tov\rulename\aset\sset\appflag$ 
to stable terms.
The \defn{stable reduction} relation, written $\tostable\rulename\aset\sset\appflag$,
is defined by the same reduction rules as the relation $\tov\rulename\aset\sset\appflag$
(see \cref{sec:usefulcbv}), but rule \ruleUDb is replaced by
\[
  \inferrule{
    \tmtwo \in \HAbs\aset \cup \Struct\sset
  }{
    (\lam\var\tm)\sctx \, \tmtwo \tostable\ruledb\aset\sset\appflag
    \tm\esub\var\tmtwo\sctx
  }\ruleUDbStable
\]

The relation of \defn{structural equivalence} defined in 
\cref{sec:usefulcbv_invariant} can be defined alternatively as:
\[
  \inferrule{
    \var \notin \fv\tmthree
    \sep
    \vartwo \notin \fv\tmtwo
  }{
    \tm\esub\var\tmtwo\esub\vartwo\tmthree \equiv 
    \tm\esub\vartwo\tmthree\esub\var\tmtwo
  }\ruleEquivEsComm
  \HS
  \inferrule{
    \vartwo \notin \fv\tm
  }{
    \tm\esub\var\tmtwo\esub\vartwo\tmthree \equiv 
    \tm\esub\var{\tmtwo\esub\vartwo\tmthree}
  }\ruleEquivEsAssoc
\]
\[
  \inferrule{
    \var \notin \fv\tmtwo
  }{
    (\tm \, \tmtwo)\esub\var\tmthree \equiv \tm\esub\var\tmthree \, \tmtwo
  }\ruleEquivEsLDist
  \HS
  \inferrule{
    \var \notin \fv\tm
  }{
    (\tm \, \tmtwo)\esub\var\tmthree \equiv \tm \, \tmtwo\esub\var\tmthree
  }\ruleEquivEsRDist
\]
\[
  \inferrule{
    \tm \equiv \tm'
    \sep
    \tmtwo \equiv \tmtwo'
  }{
    \tm \, \tmtwo \equiv \tm' \, \tmtwo'
  }\ruleEquivCongApp
  \HS
  \inferrule{
    \tm \equiv \tm'
    \sep
    \tmtwo \equiv \tmtwo'
  }{
    \tm\esub\var\tmtwo \equiv \tm'\esub\var{\tmtwo'}
  }\ruleEquivCongES
\]
\[
  \inferrule{
  }{
    \tm \equiv \tm
  }\ruleEquivRefl
  \HS
  \inferrule{
    \tm \equiv \tmtwo
  }{
    \tmtwo \equiv \tm
  }\ruleEquivSym
  \HS
  \inferrule{
    \tm \equiv \tmtwo
    \sep
    \tmtwo \equiv \tmthree
  }{
    \tm \equiv \tmthree
  }\ruleEquivTrans
\]

\begin{remark}
\label{rem:t_equiv_vL_is_vpLp}
Let $\tm$ be a term, $\val$ be a value, and $\sctx$ be a list of ESs. 
If $\val\sctx \equiv \tm$ then $\tm$ is of the form $\val'\sctxtwo$,
and the proof of equivalence necessarily uses rule $\ruleEquivRefl$.
\end{remark}

\begin{remark}[Strengthening of Abstraction and Value Frames]
\label{lem:strengthening_a_s}
Let $\aset$ be an abstraction frame and $\sset$ be a structure set.
Let $\tm$ be a term such that $\var \notin \fv\tm$.
Then:
\begin{enumerate}
\item 
  If $\tm \in \HAbs{\aset \cup \set\var}$ then $\tm \in \HAbs\aset$.
\item 
  If $\tm \in \Struct{\sset \cup \set\var}$ then $\tm \in \Struct\sset$.
\end{enumerate}
\end{remark}

\begin{lemma}[Hereditary Abstractions and Structures are Closed by Structural Equivalence]
\label{lem:habs_structures_closed_by_equivalence}
Let $\tm$ be a term, $\aset$ be an abstraction frame, and $\sset$ be
a structure frame such that $\inv\aset\sset\tm$ holds.
If $\tm \equiv \tmtwo$, then:
\begin{enumerate}
\item \label{lem:habs_structures_closed_by_equivalence-case-one}
  $\tm \in \HAbs\aset$ if and only if $\tmtwo \in \HAbs\aset$.
\item
  $\tm \in \Struct\sset$ if and only if $\tmtwo \in \Struct\sset$.
\end{enumerate}
\end{lemma}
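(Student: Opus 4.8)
The plan is to prove both statements simultaneously by induction on the derivation of the structural equivalence $\tm \equiv \tmtwo$. By symmetry of $\equiv$ (rule \ruleEquivSym), it suffices to prove just one direction of each biconditional—say, that $\tm \in \HAbs\aset$ implies $\tmtwo \in \HAbs\aset$, and $\tm \in \Struct\sset$ implies $\tmtwo \in \Struct\sset$—because the reverse direction follows by applying the forward direction to $\tmtwo \equiv \tm$ (noting that $\inv\aset\sset\tm$ and $\tm \equiv \tmtwo$ together give $\inv\aset\sset\tmtwo$, since structural equivalence preserves free variables). Throughout, we maintain the invariant $\inv\aset\sset{-}$ on all subterms using the standard decomposition: if $\inv\aset\sset{\tm\esub\var\tmtwo}$ then $\inv{\aset\cup\set\var}\sset\tm$ and $\inv\aset\sset\tmtwo$, and disjointness $\aset \disj \sset$ is available everywhere.

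First I would handle the transitivity case (\ruleEquivTrans) by chaining the two induction hypotheses, and the reflexivity case (\ruleEquivRefl) trivially. For the congruence cases \ruleEquivCongApp and \ruleEquivCongES, I would peel off the outermost constructor: for an application $\tm_1 \, \tm_2 \in \HAbs\aset$, this is impossible by \cref{rem:habs_st}(2), so the $\HAbs$ statement is vacuous; for $\tm_1\,\tm_2 \in \Struct\sset$, rule \ruleStructApp gives $\tm_1 \in \Struct\sset$, and the \ih on $\tm_1 \equiv \tm_1'$ yields $\tm_1' \in \Struct\sset$, whence $\tm_1'\,\tm_2' \in \Struct\sset$ by \ruleStructApp. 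For \ruleEquivCongES on $\tm\esub\var\tmtwo$, I case-split on which of \ruleHAbsSubi/\ruleHAbsSubii (resp. \ruleStructSubi/\ruleStructSubii) derived membership, applying the \ih on both $\tm \equiv \tm'$ and $\tmtwo \equiv \tmtwo'$ with the appropriately extended frames, and reassembling with the same rule.

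The interesting cases are the four axioms. For \ruleEquivEsComm, $\tm\esub\var\tmtwo\esub\vartwo\tmthree \equiv \tm\esub\vartwo\tmthree\esub\var\tmtwo$ with $\var \notin \fv\tmthree$, $\vartwo \notin \fv\tmtwo$: I would unfold the derivation of the left term into its sub-derivations for $\tm$, $\tmtwo$, $\tmthree$ across the various combinations of the ES-rules (four combinations each for $\HAbs$ and $\Struct$), use \cref{rem:habs_st}(1) and \cref{lem:strengthening_a_s} to move variables in and out of frames—this is where $\var \notin \fv\tmthree$ and $\vartwo \notin \fv\tmtwo$ are essential—and rebuild in the swapped order. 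Crucially, the "cross" cases (e.g. one ES carrying a hereditary abstraction and the other carrying a structure) must be shown impossible: if $\tmtwo \in \HAbs\aset$ and $\tmthree \in \Struct\sset$ both fed the same frame configuration, \cref{lem:disjunction} applied with $\aset \disj \sset$ produces a contradiction. For \ruleEquivEsAssoc, $\tm\esub\var\tmtwo\esub\vartwo\tmthree \equiv \tm\esub\var{\tmtwo\esub\vartwo\tmthree}$ with $\vartwo \notin \fv\tm$: the right side's derivation of $\tmtwo\esub\vartwo\tmthree$ splits into sub-derivations that must match up with the left side's nested structure, again using \cref{lem:strengthening_a_s} to eliminate $\vartwo$ from the frame over $\tm$, since $\vartwo \notin \fv\tm$. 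For \ruleEquivEsLDist and \ruleEquivEsRDist, $(\tm\,\tmtwo)\esub\var\tmthree$: the $\HAbs$ statement is vacuous on both sides (applications are never hereditary abstractions), and for $\Struct$, the outer ES-rule plus \ruleStructApp on the inner application transfers structurehood of $\tm$ (or impossibility) across the distribution, with the side condition $\var \notin \fv\tmtwo$ (resp. $\var \notin \fv\tm$) handled via \cref{lem:strengthening_a_s}.

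The main obstacle I anticipate is the bookkeeping in \ruleEquivEsComm and \ruleEquivEsAssoc: each requires a careful case analysis over which membership rule was used at each ES-node (giving a $2 \times 2$ or similar split), and in several of these sub-cases one must either invoke the weakening remark \cref{rem:habs_st}(1),(4), the strengthening remark \cref{lem:strengthening_a_s}, or derive a contradiction from \cref{lem:disjunction}. Getting the frame-extension indices exactly right—tracking when $\var$ or $\vartwo$ belongs to $\aset$, to $\sset$, or to neither—is the delicate part, but it is mechanical once the invariant $\inv\aset\sset{-}$ and the disjointness $\aset \disj \sset$ are propagated correctly to every subterm.
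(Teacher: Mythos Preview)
Your approach is the paper's: induction on the derivation of $\tm \equiv \tmtwo$, case analysis over the axioms and congruence rules, with frame weakening/strengthening via \cref{rem:habs_st} and \cref{lem:strengthening_a_s}. Two points deserve correction.

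First, the symmetry shortcut does not go through as stated. If you prove only the forward implication by induction on the derivation of $\tm \equiv \tmtwo$, then in the \ruleEquivSym case the sub-derivation is of $\tmtwo \equiv \tm$, and the induction hypothesis gives only $\tmtwo \in \HAbs\aset \Rightarrow \tm \in \HAbs\aset$---the wrong direction. The paper instead proves the full biconditional in the induction, so the IH is symmetric and \ruleEquivSym becomes trivial; the price is that both directions of each asymmetric axiom (\ruleEquivEsAssoc, \ruleEquivEsLDist, \ruleEquivEsRDist) must be worked out separately, which the paper does.

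Second, there are no ``cross'' cases to rule out in \ruleEquivEsComm. The rules \ruleHAbsSubi/\ruleHAbsSubii mention only $\HAbs{}$ in their premises (and \ruleStructSubi/\ruleStructSubii only $\Struct{}$), so the $2\times 2$ split for a two-level closure is just over which $\HAbs{}$ sub-rule was used at each level; \cref{lem:disjunction} plays no role in this lemma.
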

\hiddenproof{
  Each item is proved
  by induction on the derivation of $\tm \equiv \tmtwo$.
}{
  %Label: lem:habs_structures_closed_by_equivalence

\begin{proof}
We prove both items by induction on the derivation of $\tm \equiv \tmtwo$.
Both cases are analogous, so we will only show case \ref{lem:habs_structures_closed_by_equivalence-case-one}. \\
Note that cases $\ruleEquivRefl$, $\ruleEquivSym$ and $\ruleEquivTrans$ are immediate,
and that case $\ruleEquivCongApp$ is not possible since applications cannot
belong to the set of hereditary abstractions by \cref{rem:habs_st}. \\
We first show $\tm \in \HAbs{\aset}$ implies $\tmtwo \in \HAbs{\aset}$.
We analyze the remaining cases.
\begin{itemize}
\item $\ruleEquivEsComm$.
  Then $\tm = \tmthree\esub{\var}{\tmfour}\esub{\vartwo}{\tmfive} \equiv
  \tmthree\esub{\vartwo}{\tmfive}\esub{\var}{\tmfour} = \tmtwo$, 
  with $\var \notin \fv{\tmfive}$ and $\vartwo \notin \fv{\tmfour}$ and $\var \neq \vartwo$.
  The judgment $\tmthree\esub{\var}{\tmfour}\esub{\vartwo}{\tmfive} \in \HAbs{\aset}$
  can be derived either by rule $\ruleHAbsSubi$ or $\ruleHAbsSubii$, so we
  analyze both cases:
  \begin{itemize}
  \item $\ruleHAbsSubi$.
    Then $\tmthree\esub{\var}{\tmfour} \in \HAbs{\aset}$ and $\vartwo \notin \aset$.
    In turn $\tmthree\esub{\var}{\tmfour} \in \HAbs{\aset}$ can be derived
    either by rule $\ruleHAbsSubi$ or $\ruleHAbsSubii$.
    \begin{itemize}
    \item $\ruleHAbsSubi$.
      Then $\tmthree \in \HAbs{\aset}$ and $\var \notin \aset$.
      Applying rule $\ruleHAbsSubi$ we obtain 
      $\tmthree\esub{\vartwo}{\tmfive} \in \HAbs{\aset}$,
      and applying again rule $\ruleHAbsSubi$ we conclude
      $\tmthree\esub{\vartwo}{\tmfive}\esub{\var}{\tmfour} \in \HAbs{\aset}$.
    \item $\ruleHAbsSubii$.
      Then $\tmthree \in \HAbs{\aset \cup \set{\var}}$,
      $\tmfour \in \HAbs{\aset}$ and $\var \notin \aset$.
      Since $\vartwo \notin \aset \cup \set{\var}$, we can apply rule 
      $\ruleHAbsSubi$, yielding
      $\tmthree\esub{\vartwo}{\tmfive} \in \HAbs{\aset \cup \set{\var}}$,
      and applying rule $\ruleHAbsSubii$ we conclude
      $\tmthree\esub{\vartwo}{\tmfive}\esub{\var}{\tmfour} \in \HAbs{\aset}$.
    \end{itemize}
  \item $\ruleHAbsSubii$.
    Then $\tmthree\esub{\var}{\tmfour} \in \HAbs{\aset \cup \set{\vartwo}}$,
    $\tmfive \in \HAbs{\aset}$, and $\vartwo \notin \aset$.
    In turn $\tmthree\esub{\var}{\tmfour} \in \HAbs{\aset \cup \set{\vartwo}}$
    can be derived either by rule $\ruleHAbsSubi$ or $\ruleHAbsSubii$.
    \begin{itemize}
    \item $\ruleHAbsSubi$.
      Then $\tmthree \in \HAbs{\aset \cup \set{\vartwo}}$ and $\var \notin \aset$.
      Applying rule $\ruleHAbsSubii$ we obtain 
      $\tmthree\esub{\vartwo}{\tmfive} \in \HAbs{\aset}$,
      and applying rule $\ruleHAbsSubi$ we conclude
      $\tmthree\esub{\vartwo}{\tmfive}\esub{\var}{\tmfour} \in \HAbs{\aset}$.
    \item $\ruleHAbsSubii$.
      Then $\tmthree \in \HAbs{\aset \cup \set{\vartwo} \cup \set{\var}}$,
      $\tmfour \in \HAbs{\aset \cup \set{\vartwo}}$ 
      and $\var \notin \aset \cup \set{\vartwo}$.
      Since $\vartwo \notin \aset \cup \set{\var}$, we can apply rule 
      $\ruleHAbsSubii$, yielding
      $\tmthree\esub{\vartwo}{\tmfive} \in \HAbs{\aset \cup \set{\var}}$,
      and applying again rule $\ruleHAbsSubii$ we conclude
      $\tmthree\esub{\vartwo}{\tmfive}\esub{\var}{\tmfour} \in \HAbs{\aset}$.
    \end{itemize}
  \end{itemize}
\item $\ruleEquivEsAssoc$.
  Then $\tm = \tmthree\esub{\var}{\tmfour}\esub{\vartwo}{\tmfive} \equiv
  \tmthree\esub{\var}{\tmfour\esub{\vartwo}{\tmfive}} = \tmtwo$, with
  $\vartwo \notin \fv{\tmthree}$.
  The judgment $\tmthree\esub{\var}{\tmfour}\esub{\vartwo}{\tmfive} \in \HAbs{\aset}$
  can be derived either by rule $\ruleHAbsSubi$ or $\ruleHAbsSubii$, so we
  analyze both cases:
  \begin{itemize}
  \item $\ruleHAbsSubi$.
    Then $\tmthree\esub{\var}{\tmfour} \in \HAbs{\aset}$ and (1) $\vartwo \notin \aset$.
    In turn $\tmthree\esub{\var}{\tmfour} \in \HAbs{\aset}$ can be derived
    either by rule $\ruleHAbsSubi$ or $\ruleHAbsSubii$.
    \begin{itemize}
    \item $\ruleHAbsSubi$.
      Then $\tmthree \in \HAbs{\aset}$ and $\var \notin \aset$.
      Applying rule $\ruleHAbsSubi$ to these premises we conclude 
      $\tmthree\esub{\var}{\tmfour\esub{\vartwo}{\tmfive}} \in \HAbs{\aset}$.
    \item $\ruleHAbsSubii$.
      Then (2) $\tmthree \in \HAbs{\aset \cup \set{\var}}$,
      (3) $\tmfour \in \HAbs{\aset}$ and (4) $\var \notin \aset$.
      We can apply rule $\ruleHAbsSubi$ with (3) and (1) as premises, yielding
      (5) $\tmfour\esub{\vartwo}{\tmfive} \in \HAbs{\aset}$,
      and then we apply rule $\ruleHAbsSubii$ with (2), (5) and (4) as premises 
      to conclude
      $\tmthree\esub{\var}{\tmfour\esub{\vartwo}{\tmfive}} \in \HAbs{\aset}$.
    \end{itemize}
  \item $\ruleHAbsSubii$.
    Then $\tmthree\esub{\var}{\tmfour} \in \HAbs{\aset \cup \set{\vartwo}}$,
    $\tmfive \in \HAbs{\aset}$, and $\vartwo \notin \aset$.
    In turn $\tmthree\esub{\var}{\tmfour} \in \HAbs{\aset \cup \set{\vartwo}}$
    can be derived either by rule $\ruleHAbsSubi$ or $\ruleHAbsSubii$.
    \begin{itemize}
    \item $\ruleHAbsSubi$.
      Then (1) $\tmthree \in \HAbs{\aset \cup \set{\vartwo}}$ and $\var \notin \aset$.
      Hence $\tmthree \in \HAbs{\aset}$,
      by \cref{lem:strengthening_a_s}, since (1) and 
      $\vartwo \notin \fv{\tmthree}$ by hypothesis.
      Applying rule $\ruleHAbsSubi$ we obtain 
      $\tmthree\esub{\var}{\tmfour\esub{\vartwo}{\tmfive}} \in \HAbs{\aset}$.
    \item $\ruleHAbsSubii$.
      Then (1) $\tmthree \in \HAbs{\aset \cup \set{\vartwo} \cup \set{\var}}$,
      (2) $\tmfour \in \HAbs{\aset \cup \set{\vartwo}}$ and
      $\var \notin \aset \cup \set{\vartwo}$.
      We can apply rule $\ruleHAbsSubii$ with (2), $\vartwo \notin \aset$ and
      $\tmfour \in \HAbs{\aset}$ as premises, yielding
      (3) $\tmfour\esub{\vartwo}{\tmfive} \in \HAbs{\aset}$.
      Moreover, (4) $\tmthree \in \HAbs{\aset \cup \set{\var}}$,
      by \cref{lem:strengthening_a_s}, since (1) and 
      $\vartwo \notin \fv{\tmthree}$ by hypothesis.
      Since $\var \notin \aset$, we can apply rule 
      $\ruleHAbsSubii$ again with (4), (3) and $\var \notin \aset$ as premises, yielding
      $\tmthree\esub{\var}{\tmfour\esub{\vartwo}{\tmfive}} \in \HAbs{\aset}$.
    \end{itemize}
  \end{itemize}
\item $\ruleEquivEsLDist$.
  Then $\tm = (\tmthree \, \tmfour)\esub{\var}{\tmfive} \equiv
  \tmthree\esub{\var}{\tmfive} \, \tmfour = \tmtwo$, with
  $\var \notin \fv{\tmfour}$.
  The judgment $(\tmthree \, \tmfour)\esub{\var}{\tmfive} \in \HAbs{\aset}$
  can be derived either by rule $\ruleHAbsSubi$ or $\ruleHAbsSubii$,
  and in both cases we have a judgment of the form
  $\tmthree \, \tmfour \in \HAbs{\aset'}$, which is not possible by \cref{rem:habs_st}.
  Hence this case is impossible.
\item $\ruleEquivEsRDist$.
  Analogously to the previous case, this case is not possible.
\item $\ruleEquivCongES$.
  Then $\tm = \tmthree\esub{\var}{\tmfour} \equiv
  \tmthree'\esub{\var}{\tmfour'} = \tmtwo$, derived from
  $\tmthree \equiv \tmthree'$ and $\tmfour \equiv \tmfour'$.
  Moreover, $\inv{\aset}{\sset}{\tmthree\esub{\var}{\tmfour}}$ implies
  $\inv{\aset \cup \set{\var}}{\sset}{\tmthree}$ and
  $\inv{\aset}{\sset}{\tmfour}$.
  The judgment $\tmthree\esub{\var}{\tmfour} \in \HAbs{\aset}$
  can be derived either by rule $\ruleHAbsSubi$ or $\ruleHAbsSubii$, so we
  analyze both cases:
  \begin{itemize}
  \item $\ruleHAbsSubi$.
    Then $\tmthree \in \HAbs{\aset}$ and (1) $\var \notin \aset$.
    By \ih on $\tmthree$, we have that $\tmthree' \in \HAbs{\aset}$.
    Applying rule $\ruleHAbsSubi$ we conclude 
    $\tmthree'\esub{\var}{\tmfour'} \in \HAbs{\aset}$.
  \item $\ruleHAbsSubii$.
    Then $\tmthree \in \HAbs{\aset \cup \set{\var}}$,
    $\tmfour \in \HAbs{\aset}$, and $\var \notin \aset$.
    By \ih on $\tmthree$, we have that $\tmthree' \in \HAbs{\aset \set{\var}}$,
    and by \ih on $\tmfour$, we have that $\tmfour' \in \HAbs{\aset}$.
    Applying rule $\ruleHAbsSubii$ we conclude 
    $\tmthree'\esub{\var}{\tmfour'} \in \HAbs{\aset}$.
  \end{itemize}
\end{itemize}
We now show $\tmtwo \in \HAbs{\aset}$ implies $\tm \in \HAbs{\aset}$:
\begin{itemize}
\item $\ruleEquivEsComm$.
  Then $\tm = \tmthree\esub{\var}{\tmfour}\esub{\vartwo}{\tmfive} \equiv
  \tmthree\esub{\vartwo}{\tmfive}\esub{\var}{\tmfour} = \tmtwo$, 
  with $\var \notin \fv{\tmfive}$ and $\vartwo \notin \fv{\tmfour}$ and $\var \neq \vartwo$.
  This is analogous to case $\ruleEquivEsComm$ of the previous case.
\item $\ruleEquivEsAssoc$.
  Then $\tm = \tmthree\esub{\var}{\tmfour}\esub{\vartwo}{\tmfive} \equiv
  \tmthree\esub{\var}{\tmfour\esub{\vartwo}{\tmfive}} = \tmtwo$, with
  $\vartwo \notin \fv{\tmthree}$; by $\alpha$-conversion we may assume
  $\vartwo \notin \aset$.
  The judgment $\tmthree\esub{\var}{\tmfour\esub{\vartwo}{\tmfive}} \in \HAbs{\aset}$
  can be derived either by rule $\ruleHAbsSubi$ or $\ruleHAbsSubii$, so we
  analyze both cases:
  \begin{itemize}
  \item $\ruleHAbsSubi$.
    Then $\tmthree \in \HAbs{\aset}$ and (1) $\var \notin \aset$.
    Applying rule $\ruleHAbsSubi$ we obtain 
    $\tmthree\esub{\var}{\tmfour} \in \HAbs{\aset}$.
    We can apply rule $\ruleHAbsSubi$ again, and conclude 
    $\tmthree\esub{\var}{\tmfour}\esub{\vartwo}{\tmfive} \in \HAbs{\aset}$.
  \item $\ruleHAbsSubii$.
    Then (1) $\tmthree \in \HAbs{\aset \cup \set{\var}}$ and
    $\tmfour\esub{\vartwo}{\tmfive} \in \HAbs{\aset}$ and (2) $\var \notin \aset$.
    In turn $\tmfour\esub{\vartwo}{\tmfive} \in \HAbs{\aset}$
    can be derived either by rule $\ruleHAbsSubi$ or $\ruleHAbsSubii$.
    \begin{itemize}
    \item $\ruleHAbsSubi$.
      Then (3) $\tmfour \in \HAbs{\aset}$ and $\vartwo \notin \aset$.
      Applying rule $\ruleHAbsSubii$ with (1), (2) and (3) as premises, we obtain
      $\tmthree\esub{\var}{\tmfour} \in \HAbs{\aset}$.
      Applying rule $\ruleHAbsSubi$ we conclude
      $\tmthree\esub{\var}{\tmfour}\esub{\vartwo}{\tmfive} \in \HAbs{\aset}$.
    \item $\ruleHAbsSubii$.
      Then (4) $\tmfour \in \HAbs{\aset \cup \set{\vartwo}}$,
      (5) $\tmfive \in \HAbs{\aset}$ and
      (6) $\vartwo \notin \aset$.
      By \cref{rem:habs_st},
      then (7) $\tmthree \in \HAbs{\aset \cup \set{\var} \cup \set{\vartwo}}$.
      We can apply rule $\ruleHAbsSubii$ with premises (7), (4) and 
      $\var \notin \aset \cup \set{\vartwo}$ since $\var \neq \vartwo$,
      yielding (8) $\tmthree\esub{\var}{\tmfour} \in \HAbs{\aset \cup \set{\vartwo}}$.
      We can apply rule $\ruleHAbsSubii$ again 
      with (8), (5) and (6) as premises, yielding
      $\tmthree\esub{\var}{\tmfour}\esub{\vartwo}{\tmfive} \in \HAbs{\aset}$.
    \end{itemize}
  \end{itemize}
\item $\ruleEquivEsLDist$.
  Then $\tm = (\tmthree \, \tmfour)\esub{\var}{\tmfive} \equiv
  \tmthree\esub{\var}{\tmfive} \, \tmfour = \tmtwo$, with
  $\var\notin \fv{\tmfour}$.
  By \cref{rem:habs_st},
  the term $\tmthree\esub{\var}{\tmfive} \, \tmfour \in \HAbs{\aset}$
  cannot be an hereditary abstraction, hence this case is impossible.
\item $\ruleEquivEsRDist$.
  Analogously to the previous case, this case is not possible.
\item $\ruleEquivCongES$.
  Then $\tm = \tmthree\esub{\var}{\tmfour} \equiv
  \tmthree'\esub{\var}{\tmfour'} = \tmtwo$, derived from
  $\tmthree \equiv \tmthree'$ and $\tmfour \equiv \tmfour'$.
  This is analogous to case $\ruleEquivCongES$ of the previous case.
\end{itemize}
\end{proof}

}

\structeqBisimulation*
% Label: structeq_bisimulation

\begin{proof}
We start by introducing an auxiliary equivalence relation $\equivC$
on terms, recursively defined by:
\[
  \indrule{\ruleEquivEsComm}{
    \var \notin \fv{\tmthree}
    \sep
    \vartwo \notin \fv{\tmtwo}
  }{
    \tm\esub{\var}{\tmtwo}\esub{\vartwo}{\tmthree} \equivC \tm\esub{\vartwo}{\tmthree}\esub{\var}{\tmtwo}
  }
\]
\[
  \indrule{\ruleEquivEsAssoc(1)}{
    \vartwo \notin \fv{\tm}
  }{
    \tm\esub{\var}{\tmtwo}\esub{\vartwo}{\tmthree} \equivC \tm\esub{\var}{\tmtwo\esub{\vartwo}{\tmthree}}
  }
  \indrule{\ruleEquivEsAssoc(2)}{
    \vartwo \notin \fv{\tm}
  }{
    \tm\esub{\var}{\tmtwo\esub{\vartwo}{\tmthree}} \equivC \tm\esub{\var}{\tmtwo}\esub{\vartwo}{\tmthree}
  }
\]
\[
  \indrule{\ruleEquivEsLDist(1)}{
    \var \notin \fv{\tmtwo}
  }{
    (\tm \, \tmtwo)\esub{\var}{\tmthree} \equivC \tm\esub{\var}{\tmthree} \, \tmtwo
  }
  \indrule{\ruleEquivEsLDist(2)}{
    \var \notin \fv{\tmtwo}
  }{
    \tm\esub{\var}{\tmthree} \, \tmtwo \equivC (\tm \, \tmtwo)\esub{\var}{\tmthree}
  }
\]
\[
  \indrule{\ruleEquivEsRDist(1)}{
    \var \notin \fv{\tm}
  }{
    (\tm \, \tmtwo)\esub{\var}{\tmthree} \equivC \tm \, \tmtwo\esub{\var}{\tmthree}
  }
  \indrule{\ruleEquivEsRDist(2)}{
    \var \notin \fv{\tm}
  }{
    \tm \, \tmtwo\esub{\var}{\tmthree} \equivC (\tm \, \tmtwo)\esub{\var}{\tmthree}
  }
\]
\[
  \indrule{\ruleEquivCongApp}{
    \tm \equivC \tm'
    \sep
    \tmtwo \equivC \tmtwo'
  }{
    \tm \, \tmtwo \equivC \tm' \, \tmtwo'
  }
  \indrule{\ruleEquivCongES}{
    \tm \equivC \tm'
    \sep
    \tmtwo \equivC \tmtwo'
  }{
    \tm\esub{\var}{\tmtwo} \equivC \tm'\esub{\var}{\tmtwo'}
  }
\]
Note in particular that
$\equiv$ is the reflexive-transitive closure of $\equivC$.
We divide the proof into two parts:
\begin{enumerate}
\item
  \label{it:bisimulation_ptOne}
  We show that if $\tm_0 \tostable{\rulename}{\aset}{\sset}{\appflag} \tm_1$ and
  $\tm_0 \equivC \tmtwo_0$, then there exists $\tmtwo_1$ such that
  $\tmtwo_0 \tostable{\rulename}{\aset}{\sset}{\appflag} \tmtwo_1$ and
  $\tm_1 \equiv \tmtwo_1$.
\item 
  \label{it:bisimulation_ptTwo}
  Given that $\equiv$ is the reflexive-transitive closure of $\equivC$,
  we show the same result but for the $\equiv$ relation by resorting to 
  \cref{it:bisimulation_ptOne}.
\end{enumerate}
\cref{it:bisimulation_ptTwo} is proved by induction on the reflexive and transitive
closure of $\equivC$ and is immediate.
The proof of \cref{it:bisimulation_ptOne} is by induction on the derivation of
$\tm_0 \tostable{\rulename}{\aset}{\sset}{\appflag} \tm_1$ and case analysis.
\begin{itemize}[leftmargin=*]
\item $\ruleUDbStable$.
  Then 
  $\tm_0 = (\lam{\var}{\tmthree})\sctx \, \tmfour \tostable{\ruledb}{\aset}{\sset}{\appflag}
  \tmthree\esub{\var}{\tmfour}\sctx = \tm_1$,
  with $\tmfour \in \HAbs{\aset} \cup \Struct{\sset}$.
  We also have $\tm_0 = (\lam{\var}{\tmthree})\sctx \, \tmfour \equivC \tmtwo_0$.
  We analyze the different cases according to which rule was used to derive the equivalence.
  Note that cases $\ruleEquivEsComm$, $\ruleEquivEsAssoc$ and $\ruleEquivCongES$
  are impossible due to the form of $\tm_0$.
  The relevant cases are $\ruleEquivEsLDist$, $\ruleEquivEsRDist$, and $\ruleEquivCongApp$:
  \begin{enumerate}[leftmargin=*, label=\alph*.]
  \item $\ruleEquivEsLDist(2)$.
    Then
    $\tm_0 = (\lam{\var}{\tmthree})\sctx'\esub{\vartwo}{\tmfive} \, \tmfour 
    \equivC ((\lam{\var}{\tmthree})\sctx' \, \tmfour)\esub{\vartwo}{\tmfive} = \tmtwo_0$,
    with $\vartwo \notin \fv{\tmfour}$ and $\sctx = \sctx'\esub{\vartwo}{\tmfive}$.
    There are two cases for reducing 
    $((\lam{\var}{\tmthree})\sctx' \, \tmfour)\esub{\vartwo}{\tmfive}$, 
    depending on whether $\tmfive \in \HAbs{\aset}$ or $\tmfive \in \Struct{\sset}$;
    both are analogous, so we show only the first case.
    Then we obtain
    $\tmtwo_0 = ((\lam{\var}{\tmthree})\sctx' \, \tmfour)\esub{\vartwo}{\tmfive} 
    \tostable{\ruledb}{\aset}{\sset}{\appflag}
    \tmthree\esub{\var}{\tmfour}\sctx'\esub{\vartwo}{\tmfive} = \tmtwo_1$
    by applying rules $\ruleUEsLAbs$ and $\ruleUDbStable$.
    Note that $\tmtwo_1 \equiv \tm_1$ by rule $\ruleEquivRefl$.
    The following diagram summarises the proof:
    \[
      \xymatrixrowsep{0.3pc}
      \xymatrix{
        \tm_0 = (\lam{\var}{\tmthree})\sctx'\esub{\vartwo}{\tmfive} \, \tmfour
          \arUrStable{\ruledb}{\aset}{\sset}{\appflag}
      %    \ar@3{-}[d;]_>{c}
      & \tmthree\esub{\var}{\tmfour}\sctx'\esub{\vartwo}{\tmfive} = \tm_1
      %    \ar@3{.}[d]
      \\
        \equivC
      & \equiv
      \\
        \tmtwo_0 = ((\lam{\var}{\tmthree})\sctx' \, \tmfour)\esub{\vartwo}{\tmfive}
          \arsdUrStable{\ruledb}{\aset}{\sset}{\appflag}
      & \tmthree\esub{\var}{\tmfour'}\sctx'\esub{\vartwo}{\tmfive} = \tmtwo_1
      }
    \]
  \item $\ruleEquivEsRDist(2)$.
    Then
    $\tm_0 = (\lam{\var}{\tmthree})\sctx \, (\tmfour'\esub{\vartwo}{\tmfive}) 
    \equivC ((\lam{\var}{\tmthree})\sctx \, \tmfour')\esub{\vartwo}{\tmfive} = \tmtwo_0$,
    with $\vartwo \notin \fv{(\lam{\var}{\tmthree})\sctx}$ 
    and $\tmfour = \tmfour'\esub{\vartwo}{\tmfive}$
    with $\tmfour' \in \HAbs{\aset'} \cup \Struct{\sset'}$
    where $\aset' = \expansion{\aset}{\esub{\vartwo}{\tmfive}}$ and 
    $\sset' = \expansion{\sset}{\esub{\vartwo}{\tmfive}}$.
    In particular (1) $\vartwo \notin \fv{\tmthree} \cup \fv{\sctx}$, 
    since $\var \neq \vartwo$ by $\alpha$-conversion.
    There are two cases for reducing 
    $((\lam{\var}{\tmthree})\sctx \, \tmfour')\esub{\vartwo}{\tmfive}$, 
    depending on whether $\tmfive \in \HAbs{\aset}$ or $\tmfive \in \Struct{\sset}$;
    both are analogous, so we show only the first one.
    Then we obtain
    $\tmtwo_0 = ((\lam{\var}{\tmthree})\sctx \, \tmfour')\esub{\vartwo}{\tmfive} 
    \tostable{\ruledb}{\aset}{\sset}{\appflag}
    \tmthree\esub{\var}{\tmfour'}\sctx\esub{\vartwo}{\tmfive} = \tmtwo_1$
    by applying rules $\ruleUEsLAbs$ and $\ruleUDbStable$,
        since we already know 
        $\tmfour = \tmfour'\esub{\vartwo}{\tmfive} \in \HAbs{\aset} \cup \Struct{\sset}$,
        so it is easy to note in particular 
        $\tmfour' \in \HAbs{\aset \cup \set{\vartwo}} \cup \Struct{\sset}$.
    Then $\tm_1 = \tmthree\esub{\var}{\tmfour'\esub{\vartwo}{\tmfive}}\sctx 
    \equivC \tmthree\esub{\var}{\tmfour'}\esub{\vartwo}{\tmfive}\sctx = \tm'_1$ 
    by applying rules $\ruleEquivCongES$ and $\ruleEquivEsAssoc$, and since
    $\vartwo \notin \fv{\tmthree}$ by (1).
    We conclude $\tm'_1 \equiv \tmtwo_1$ by rules $\ruleEquivCongES$ and 
    $\ruleEquivEsComm$, since $\vartwo \notin \fv{\sctx}$ by (1), 
    and $\domSctx{\sctx} \disj \fv{\tmfive}$ by $\alpha$-conversion, thus
    $\tm_1 \equiv \tmtwo_1$.
    The following diagram summarises the proof:
    \[
      \xymatrixrowsep{0.3pc}
      \xymatrix{
        \tm_0 = (\lam{\var}{\tmthree})\sctx \, \tmfour'\esub{\vartwo}{\tmfive}
          \arUrStable{\ruledb}{\aset}{\sset}{\appflag}
          % \ar@3{-}[dd]_>{c}
      & \tmthree\esub{\var}{\tmfour'\esub{\vartwo}{\tmfive}}\sctx = \tm_1
          % \ar@3{.}[d]
      \\
      & \equiv
      \\
        \equivC
      & \tmthree\esub{\var}{\tmfour'}\esub{\vartwo}{\tmfive}\sctx = \tm'_1
          % \ar@3{.}[d]
      \\
      & \equiv
      \\
        \tmtwo_0 = ((\lam{\var}{\tmthree})\sctx \, \tmfour')\esub{\vartwo}{\tmfive}
          \arsdUrStable{\ruledb}{\aset}{\sset}{\appflag}
      & \tmthree\esub{\var}{\tmfour'}\sctx\esub{\vartwo}{\tmfive} = \tmtwo_1
      }
    \]
  \item $\ruleEquivCongApp$.
    Then $\tm_0 = (\lam{\var}{\tmthree})\sctx \, \tmfour \equivC 
    \tmfive \, \tmfour' = \tmtwo_0$, where 
    (1) $(\lam{\var}{\tmthree})\sctx \equivC \tmfive$, and $\tmfour \equivC \tmfour'$.
    We split into subcases, according to the rule used to derive the equivalence (1).
    Note that the only relevant cases are the axioms of the relation $\equivC$,
    since cases $\ruleEquivEsLDist$ and $\ruleEquivEsRDist$ are impossible due to the 
    form of $(\lam{\var}{\tmthree})\sctx$.
    \begin{itemize}[leftmargin=*]
    \item $\ruleEquivEsComm$.
      Then 
      $(\lam{\var}{\tmthree})\sctx = 
      (\lam{\var}{\tmthree})\sctx'\esub{\vartwo_1}{\tmsix_1}\esub{\vartwo_2}{\tmsix_2}
      \equivC
      (\lam{\var}{\tmthree})\sctx'\esub{\vartwo_2}{\tmsix_2}\esub{\vartwo_1}{\tmsix_1}
      = \tmfive$,
      where $\vartwo_1 \notin \fv{\tmsix_2}$ and $\vartwo_2 \notin \fv{\tmsix_1}$.
      Therefore 
      $\tm_0 = (\lam{\var}{\tmthree})\sctx'\esub{\vartwo_1}{\tmsix_1}\esub{\vartwo_2}{\tmsix_2} \, \tmfour
      \equivC
      (\lam{\var}{\tmthree})\sctx'\esub{\vartwo_2}{\tmsix_2}\esub{\vartwo_1}{\tmsix_1} \, \tmfour'
      = \tmtwo_0$, and $\tmtwo_0$ $(\ruledb,\aset,\sset,\appflag)$-reduces to
      $\tmtwo_1 = \tmthree\esub{\var}{\tmfour'}\sctx'\esub{\vartwo_2}{\tmsix_2}\esub{\vartwo_1}{\tmsix_1}$
      by rule $\ruleUDbStable$, since $\tmfour' \in \HAbs{\aset} \cup \Struct{\sset}$
      by \cref{lem:habs_structures_closed_by_equivalence}
      given that $\tmfour \in \HAbs{\aset} \cup \Struct{\sset}$.
      From $\tm_1 = \tmthree\esub{\var}{\tmfour}\sctx'\esub{\vartwo_1}{\tmsix_1}\esub{\vartwo_2}{\tmsix_2}$,
      we build the derivation $\deriv$ of
      $\tmtwo_1 \equiv
      \tmthree\esub{\var}{\tmfour'}\sctx'\esub{\vartwo_1}{\tmsix_1}\esub{\vartwo_2}{\tmsix_2}
      = \tm'_1$:
      \[
        \deriv \defeq \left(
          \indrule{\ruleEquivCongES}{
            \indrule{\ruleEquivCongES}{
              \deriv'
              \indrule{\ruleEquivRefl}{
                \emptyPremise
              }{
                \tmsix_1 \equiv \tmsix_1
              }
            }{
              \tmthree\esub{\var}{\tmfour}\sctx'\esub{\vartwo_1}{\tmsix_1}
              \equiv
              \tmthree\esub{\var}{\tmfour'}\sctx'\esub{\vartwo_1}{\tmsix_1}
            }
            \indrule{\ruleEquivRefl}{
              \emptyPremise
            }{
              \tmsix_2 \equiv \tmsix_2
            }
          }{
            \tm_1 \equiv \tm'_1
          }
        \right)
      \]
      where
      \[
        \deriv' \defeq \left(
          \indrule{\ruleEquivCongES}{
            \indrule{\ruleEquivCongES}{
              \indrule{\ruleEquivRefl}{
                \emptyPremise
              }{
                \tmthree \equiv \tmthree
              }
              \indrule{}{
                \text{By hypothesis}
              }{
                \tmfour \equiv \tmfour'
              }
            }{
              \tmthree\esub{\var}{\tmfour}
              \equiv
              \tmthree\esub{\var}{\tmfour'}
            }
            \hdots
          }{
            \indrule{\ruleEquivCongES}{
              \hdots
            }{
              \tmthree\esub{\var}{\tmfour}\sctx'
              \equiv
              \tmthree\esub{\var}{\tmfour'}\sctx'
            }
          }
        \right)
      \]
      \sloppy
      and since $\vartwo_1 \notin \fv{\tmsix_2}$ and $\vartwo_2 \notin \fv{\tmsix_1}$,
      we can apply rule $\ruleEquivEsComm$, yielding
      $\tm'_1 \equiv \tmthree\esub{\var}{\tmfour'}\sctx'\esub{\vartwo_2}{\tmsix_2}\esub{\vartwo_1}{\tmsix_1} = \tmtwo_1$,
      thus $\tm_1 \equiv \tmtwo_1$.
      The following diagram summarises the proof:
      \[
        \xymatrixrowsep{0.3pc}
        \xymatrix{
          \tm_0 = (\lam{\var}{\tmthree})\sctx'\esub{\vartwo_1}{\tmsix_1}\esub{\vartwo_2}{\tmsix_2} \, \tmfour
            \arUrStable{\ruledb}{\aset}{\sset}{\appflag}
            % \ar@3{-}[dd]_>{c}
        & \tmthree\esub{\var}{\tmfour}\sctx'\esub{\vartwo_1}{\tmsix_1}\esub{\vartwo_2}{\tmsix_2} = \tm_1
            % \ar@3{.}[d]
        \\
        & \equiv
        \\
          \equivC
        & \tmthree\esub{\var}{\tmfour'}\sctx'\esub{\vartwo_1}{\tmsix_1}\esub{\vartwo_2}{\tmsix_2} = \tm'_1
            % \ar@3{.}[d]
        \\
        & \equiv
        \\
          \tmtwo_0 = (\lam{\var}{\tmthree})\sctx'\esub{\vartwo_2}{\tmsix_2}\esub{\vartwo_1}{\tmsix_1} \, \tmfour'
            \arsdUrStable{\ruledb}{\aset}{\sset}{\appflag}
        & \tmthree\esub{\var}{\tmfour'}\sctx'\esub{\vartwo_2}{\tmsix_2}\esub{\vartwo_1}{\tmsix_1} = \tmtwo_1
        }
      \]
    \item $\ruleEquivEsAssoc(1)$.
      Then 
      $(\lam{\var}{\tmthree})\sctx = 
      (\lam{\var}{\tmthree})\sctx'\esub{\vartwo_1}{\tmsix_1}\esub{\vartwo_2}{\tmsix_2}
      \equivC 
      (\lam{\var}{\tmthree})\sctx'\esub{\vartwo_1}{\tmsix_1\esub{\vartwo_2}{\tmsix_2}}
      = \tmfive$,
      where $\vartwo_2 \notin \fv{(\lam{\var}{\tmthree})\sctx'}$.
      Hence 
      $\tm_0 = (\lam{\var}{\tmthree})\sctx'\esub{\vartwo_1}{\tmsix_1}\esub{\vartwo_2}{\tmsix_2} \, \tmfour
      \equivC (\lam{\var}{\tmthree})\sctx'\esub{\vartwo_1}{\tmsix_1\esub{\vartwo_2}{\tmsix_2}} \, \tmfour' 
      = \tmtwo_0$,
      and $\tmtwo_0$ $(\ruledb,\aset,\sset,\appflag)$-reduces to
      $\tmtwo_1 = \tmthree\esub{\var}{\tmfour'}\sctx'\esub{\vartwo_1}{\tmsix_1\esub{\vartwo_2}{\tmsix_2}}$
      by rule $\ruleUDbStable$, given that $\tmfour' \in \HAbs{\aset} \cup \Struct{\sset}$
      by \cref{lem:habs_structures_closed_by_equivalence}
      since $\tmfour \in \HAbs{\aset} \cup \Struct{\sset}$.
      We have $\tm_1 \equiv
      \tmthree\esub{\var}{\tmfour'}\sctx'\esub{\vartwo_1}{\tmsix_1}\esub{\vartwo_2}{\tmsix_2} = \tm'_1$, 
      obtained from the derivation $\deriv$ from the previous subcase. 
      To conclude $\tm'_1 \equiv \tmtwo_1$ by rule $\ruleEquivEsAssoc$ since
      $\vartwo_2 \notin \fv{\tmthree\esub{\var}{\tmfour'}\sctx'}$: on one hand
      $\vartwo_2 \notin \fv{(\lam{\var}{\tmthree})\sctx'}$ by hypothesis and
      on the other hand
      $\vartwo_2 \notin \fv{\tmfour'}$ by $\alpha$-conversion.
      Thus $\tm_1 \equiv \tmtwo_2$.
      The following diagram summarises the proof:
      \[
        \xymatrixrowsep{0.3pc}
        \xymatrix{
          \tm_0 = (\lam{\var}{\tmthree})\sctx'\esub{\vartwo_1}{\tmsix_1}\esub{\vartwo_2}{\tmsix_2} \, \tmfour
            \arUrStable{\ruledb}{\aset}{\sset}{\appflag}
            % \ar@3{-}[dd]_>{c}
        & \tmthree\esub{\var}{\tmfour}\sctx'\esub{\vartwo_1}{\tmsix_1}\esub{\vartwo_2}{\tmsix_2} = \tm_1
            % \ar@3{.}[d]
        \\
        & \equiv
        \\
          \equivC
        & \tmthree\esub{\var}{\tmfour'}\sctx'\esub{\vartwo_1}{\tmsix_1}\esub{\vartwo_2}{\tmsix_2} = \tm'_1
            % \ar@3{.}[d]
        \\
        & \equiv
        \\
          \tmtwo_0 = (\lam{\var}{\tmthree})\sctx'\esub{\vartwo_1}{\tmsix_1\esub{\vartwo_2}{\tmsix_2}} \, \tmfour'
            \arsdUrStable{\ruledb}{\aset}{\sset}{\appflag}
        & \tmthree\esub{\var}{\tmfour'}\sctx'\esub{\vartwo_1}{\tmsix_1\esub{\vartwo_2}{\tmsix_2}} = \tmtwo_1
        }
      \]
    \item $\ruleEquivEsAssoc(2)$.
      Then 
      $(\lam{\var}{\tmthree})\sctx = 
      (\lam{\var}{\tmthree})\sctx'\esub{\vartwo_1}{\tmsix_1\esub{\vartwo_2}{\tmsix_2}}
      \equivC 
      (\lam{\var}{\tmthree})\sctx'\esub{\vartwo_1}{\tmsix_1}\esub{\vartwo_2}{\tmsix_2}
      = \tmfive$,
      where $\vartwo_2 \notin \fv{(\lam{\var}{\tmthree})\sctx'}$.
      The steps are analogous to the previous case,
      with following diagram summarising the proof:
      \[
        \xymatrixrowsep{0.3pc}
        \xymatrix{
          \tm_0 = (\lam{\var}{\tmthree})\sctx'\esub{\vartwo_1}{\tmsix_1\esub{\vartwo_2}{\tmsix_2}} \, \tmfour
            \arUrStable{\ruledb}{\aset}{\sset}{\appflag}
            % \ar@3{-}[dd]_>{c}
        & \tmthree\esub{\var}{\tmfour}\sctx'\esub{\vartwo_1}{\tmsix_1\esub{\vartwo_2}{\tmsix_2}} = \tm_1
            % \ar@3{.}[d]_>{c}
        \\
        & \equiv
        \\
          \equivC
        & \tmthree\esub{\var}{\tmfour'}\sctx'\esub{\vartwo_1}{\tmsix_1}\esub{\vartwo_2}{\tmsix_2} = \tm'_1
            % \ar@3{.}[d]
        \\
        & \equiv
        \\
          \tmtwo_0 = (\lam{\var}{\tmthree})\sctx'\esub{\vartwo_1}{\tmsix_1}\esub{\vartwo_2}{\tmsix_2} \, \tmfour'
            \arsdUrStable{\ruledb}{\aset}{\sset}{\appflag}
        & \tmthree\esub{\var}{\tmfour'}\sctx'\esub{\vartwo_1}{\tmsix_1}\esub{\vartwo_2}{\tmsix_2} = \tmtwo_1
        }
      \]
    \item $\ruleEquivCongES$.
      Then $(\lam{\var}{\tmthree})\sctx = (\lam{\var}{\tmthree})\sctx'\esub{\vartwo}{\tmfive}$,
      and $(\lam{\var}{\tmthree})\sctx'\esub{\vartwo}{\tmfive} \equivC \tmsix'\esub{\vartwo}{\tmfive'}$,
      with $(\lam{\var}{\tmthree})\sctx' \equivC \tmsix'$ and $\tmfive \equivC \tmfive'$.
      Then this case is not possible since we need to use rule $\ruleEquivRefl$, 
      by \cref{rem:t_equiv_vL_is_vpLp}.
    \end{itemize}
  \end{enumerate}
\item $\ruleULsv$.
  \sloppy
  Then $\tm_0 = \tmthree_0\esub{\var}{\val\sctx} \tostable{\rulelsv}{\aset}{\sset}{\appflag}
  \tmthree_1\esub{\var}{\val}\sctx = \tm_1$, where $\rulename = \rulelsv$,
  and it is derived from
  $\tmthree_0 \tostable{\rulesub{\var}{\val}}{\aset \cup \set{\var}}{\sset}{\appflag} \tmthree_1$
  and $\var \notin \aset \cup \sset$, and $\val\sctx \in \HAbs{\aset}$.
  We also have $\tm_0 = \tmthree\esub{\var}{\val\sctx} \equivC \tmtwo_0$.
  We analyze the different cases according to which rule was used to derive the equivalence.
  Note that case $\ruleEquivCongApp$ is impossible due to the form of $\tm_0$.
  \begin{enumerate}[leftmargin=*, label=\alph*.]
  \item $\ruleEquivEsComm$.
    Then $\tm_0 = \tmthree'_0\esub{\vartwo}{\tmfour}\esub{\var}{\val\sctx} \equivC
    \tmthree'_0\esub{\var}{\val\sctx}\esub{\vartwo}{\tmfour} = \tmtwo_0$, 
    with $\vartwo \notin \fv{\val\sctx}$ and $\var \notin \fv{\tmfour}$.
    The step
    $\tmthree'_0\esub{\vartwo}{\tmfour} 
    \tostable{\rulesub{\var}{\val}}{\aset \cup \set{\var}}{\sset}{\appflag} \tmthree_1$
    can be derived either from rule $\ruleUEsR$, $\ruleUEsLAbs$ or $\ruleUEsLStruct$:
    \begin{itemize}[leftmargin=*]
    \item $\ruleUEsR$.
      Then
      $\tmthree_1 = \tmthree'_0\esub{\vartwo}{\tmfour'}$,
      with
      $\tmfour \tostable{\rulesub{\var}{\val}}{\aset \cup \set{\var}}{\sset}{\nonapp}\tmfour'$.
      By \cref{rem:t_reduces_with_subvar_var_occurs_free_in_t},
      we have $\var \in \fv{\tmfour}$, but at the same time $\var \notin \fv{\tmfour}$
      by hypothesis of rule $\ruleEquivEsComm$.
      Therefore this case is not possible.
    \item $\ruleUEsLAbs$.
      Then $\tmthree_1 = \tmthree''_0\esub{\vartwo}{\tmfour}$, with 
      $\tmthree'_0 
      \tostable{\rulesub{\var}{\val}}{\aset \cup \set{\var} \cup \set{\vartwo}}{\sset}{\appflag} 
      \tmthree''_0$.
      Then
      $\tmtwo_0 = \tmthree'_0\esub{\var}{\val\sctx}\esub{\vartwo}{\tmfour} 
      \tostable{\rulelsv}{\aset}{\sset}{\appflag} 
      \tmthree''_0\esub{\var}{\val}\sctx\esub{\vartwo}{\tmfour} = \tmtwo_1$
      derived from rules $\ruleUEsLAbs$ and $\ruleULsv$.
      Since $\vartwo \notin \fv{\val\sctx}$, we can apply several times
      rules $\ruleEquivCongES$ and $\ruleEquivEsComm$, yielding 
      $\tmtwo_1 \equiv \tm_1$. The following diagram summarises the proof:
      \[
        \xymatrixrowsep{0.3pc}
        \xymatrix{
          \tm_0 = \tmthree'_0\esub{\vartwo}{\tmfour}\esub{\var}{\val\sctx}
            \arUrStable{\rulelsv}{\aset}{\sset}{\appflag}
            % \ar@3{-}[d]_>{c}
        & \tmthree''_0\esub{\vartwo}{\tmfour}\esub{\var}{\val}\sctx = \tm_1
            % \ar@3{.}[d]
        \\
          \equivC
        & \equiv
        \\
          \tmtwo_0 = \tmthree'_0\esub{\var}{\val\sctx}\esub{\vartwo}{\tmfour}
            \arsdUrStable{\rulelsv}{\aset}{\sset}{\appflag}
        & \tmthree''_0\esub{\var}{\val}\sctx\esub{\vartwo}{\tmfour} = \tmtwo_1
        }
      \]
    \item $\ruleUEsLStruct$.
      Analogous to the previous case.
    \end{itemize}
  \item $\ruleEquivEsAssoc(1)$.
    \sloppy
    Then $\tm_0 = \tmthree'_0\esub{\vartwo}{\tmfour}\esub{\var}{\val\sctx}
    \equivC \tmthree'_0\esub{\vartwo}{\tmfour\esub{\var}{\val\sctx}} = \tmtwo_0$,
    with $\var \notin \fv{\tmthree'_0}$.
    The step
    $\tmthree'_0\esub{\vartwo}{\tmfour} 
    \tostable{\rulesub{\var}{\val}}{\aset \cup \set{\var}}{\sset}{\appflag} \tmthree_1$
    can be derived either from rule $\ruleUEsR$, $\ruleUEsLAbs$ or $\ruleUEsLStruct$:
    \begin{itemize}[leftmargin=*]
    \item $\ruleUEsR$.
      Then
      $\tmthree_1 = \tmthree'_0\esub{\vartwo}{\tmfour'}$,
      with
      $\tmfour \tostable{\rulesub{\var}{\val}}{\aset \cup \set{\var}}{\sset}{\nonapp} \tmfour'$.
      An we can perform the reduction step
      $\tmtwo_0 = \tmthree'_0\esub{\vartwo}{\tmfour\esub{\var}{\val\sctx}}
      \tostable{\rulelsv}{\aset}{\sset}{\appflag} 
      \tmthree'_0\esub{\vartwo}{\tmfour'\esub{\var}{\val}\sctx} = \tmtwo_1$,
      derived from rules $\ruleUEsR$ and $\ruleULsv$.
      Then $\tm_1 = \tmthree'_0\esub{\vartwo}{\tmfour'}\esub{\var}{\val}\sctx 
      \equivC \tmthree'_0\esub{\vartwo}{\tmfour'\esub{\var}{\val}}\sctx = \tm'_1$
      by rules $\ruleEquivCongES$ and $\ruleEquivEsAssoc$, since 
      $\var \notin \fv{\tmthree'_0}$ by hypothesis.
      We conclude $\tm'_1 \equiv \tmtwo_1$ by applying several times rules
      $\ruleEquivCongES$ and $\ruleEquivEsAssoc$,
      given that we may assume $\domSctx{\sctx} \disj \fv{\tmthree'_0}$ 
      by $\alpha$-conversion. Thus $\tm_1 \equiv \tmtwo_1$.
      The following diagram summarises the proof:
      \[
        \xymatrixrowsep{0.3pc}
        \xymatrix{
          \tm_0 = \tmthree'_0\esub{\vartwo}{\tmfour}\esub{\var}{\val\sctx}
            \arUrStable{\rulelsv}{\aset}{\sset}{\appflag}
            % \ar@3{-}[dd]_>{c}
        & \tmthree'_0\esub{\vartwo}{\tmfour'}\esub{\var}{\val}\sctx = \tm_1
            % \ar@3{.}[d]
        \\
        & \equiv
        \\
          \equivC
        & \tmthree'_0\esub{\vartwo}{\tmfour'\esub{\var}{\val}}\sctx = \tm'_1
            % \ar@3{.}[d]
        \\
        & \equiv
        \\
          \tmtwo_0 = \tmthree'_0\esub{\vartwo}{\tmfour\esub{\var}{\val\sctx}}
            \arsdUrStable{\rulelsv}{\aset}{\sset}{\appflag}
        & \tmthree'_0\esub{\vartwo}{\tmfour'\esub{\var}{\val}\sctx} = \tmtwo_1
        }
      \]
    \item $\ruleUEsLAbs$.
      Then $\tmthree_1 = \tmthree''_0\esub{\vartwo}{\tmfour}$, with $\tmthree'_0 
      \tostable{\rulesub{\var}{\val}}{\aset \cup \set{\var}}{\sset}{\appflag}
      \tmthree''_0$.
      By \cref{rem:t_reduces_with_subvar_var_occurs_free_in_t}, we have 
      $\var \in \fv{\tmthree'_0}$, but at the same time 
      $\var \notin \fv{\tmthree'_0}$ by hypothesis of rule $\ruleEquivEsAssoc$.
      Therefore this case is not possible.
    \item $\ruleUEsLStruct$.
      This case is analogous to the previous one, hence it is impossible.
    \end{itemize}
  \item $\ruleEquivEsAssoc(2)$.
    Then $\tm_0 = \tmthree_0\esub{\var}{\val\sctx'\esub{\vartwo}{\tmfour}}
    \equivC \tmthree_0\esub{\var}{\val\sctx'}\esub{\vartwo}{\tmfour} = \tmtwo_0$,
    with $\vartwo \notin \fv{\tmthree_0}$ and $\sctx = \sctx'\esub{\vartwo}{\tmfour}$.
    The predicate $\val\sctx'\esub{\vartwo}{\tmfour} \in \HAbs{\aset}$ from the
    premise of rule $\ruleULsv$ can be derived either by rule $\ruleHAbsSubi$ or 
    $\ruleHAbsSubii$, so we have two subcases to derive 
    $\tmtwo_0 \tostable{\rulelsv}{\aset}{\sset}{\appflag} 
    \tmthree_1\esub{\var}{\val}\sctx'\esub{\vartwo}{\tmfour} = \tmtwo_1$.
    Since both are analogous, we only focus on the case where
    $\val\sctx'\esub{\vartwo}{\tmfour} \in \HAbs{\aset}$ is derived by 
    rule $\ruleHAbsSubii$, with $\val\sctx' \in \HAbs{\aset \cup \set{\vartwo}}$
    and $\tmfour \in \HAbs{\aset}$ and $\vartwo \notin \aset$.
    Hence we can build a derivation of
    $\tmtwo_0 \tostable{\rulelsv}{\aset}{\sset}{\appflag} \tmtwo_1$ by applying
    rules $\ruleUEsLAbs$ and $\ruleULsv$. % tambien se usa el lema \cref{lem:weakening_of_reducion_sets}.
    To conclude, we have $\tm_1 \equiv \tmtwo_1$ by applying rule $\ruleEquivRefl$.
    The following diagram summarises the proof:
      \[
        \xymatrixrowsep{0.3pc}
        \xymatrix{
          \tm_0 = \tmthree_0\esub{\var}{\val\sctx'\esub{\vartwo}{\tmfour}}
            \arUrStable{\rulelsv}{\aset}{\sset}{\appflag}
            % \ar@3{-}[d]_>{c}
        & \tmthree_1\esub{\var}{\val}\sctx'\esub{\vartwo}{\tmfour} = \tm_1
            % \ar@3{.}[d]
        \\
          \equivC
        & \equiv
        \\
          \tmtwo_0 = \tmthree_0\esub{\var}{\val\sctx'}\esub{\vartwo}{\tmfour}
            \arsdUrStable{\rulelsv}{\aset}{\sset}{\appflag}
        & \tmthree_1\esub{\var}{\val}\sctx'\esub{\vartwo}{\tmfour} = \tmtwo_1
        }
      \]
  \item $\ruleEquivEsLDist(1)$.
    \sloppy
    Then $\tm_0 = (\tmfour_0 \, \tmfour_1)\esub{\var}{\val\sctx} \equivC
    \tmfour_0\esub{\var}{\val\sctx} \, \tmfour_1 = \tmtwo_0$,
    with $\var \notin \fv{\tmfour_1}$.
    The step $\tmfour_0 \, \tmfour_1 
    \tostable{\rulesub{\var}{\val}}{\aset \cup \set{\var}}{\sset}{\appflag}
    \tmthree_1$ can be derived from rules $\ruleUAppL$ and $\ruleUAppR$:
    \begin{itemize}[leftmargin=*]
    \item $\ruleUAppL$.
      \sloppy
      Then $\tmthree_1 = \tmfour'_0 \, \tmfour_1$,
      where $\tmfour_0 \tostable{\rulesub{\var}{\val}}{\aset \cup \set{\var}}{\sset}{\appflag} \tmfour'_0$
      with $\tm_1 = (\tmfour'_0 \, \tmfour_1)\esub{\var}{\val}\sctx$,
      and $\tmtwo_0 = \tmfour_0\esub{\var}{\val\sctx} \, \tmfour_1
      \tostable{\rulelsv}{\aset}{\sset}{\appflag} 
      \tmfour'_0\esub{\var}{\val}\sctx \, \tmfour_1 = \tmtwo_1$
      by rules $\ruleUAppL$ and $\ruleULsv$.
      Since $\var \notin \fv{\tmfour_1}$ by hypothesis and
      $\domSctx{\sctx} \disj \fv{\tmfour_1}$ by $\alpha$-conversion,
      we can then apply rule $\ruleEquivEsLDist$ to conclude
      $\tm_1 \equiv \tmtwo_1$.
      The following diagram summarises the proof:
      \[
        \xymatrixrowsep{0.3pc}
        \xymatrix{
          \tm_0 = (\tmfour_0 \, \tmfour_1)\esub{\var}{\val\sctx}
            \arUrStable{\rulelsv}{\aset}{\sset}{\appflag}
            % \ar@3{-}[d]_>{c}
        & (\tmfour'_0 \, \tmfour_1)\esub{\var}{\val}\sctx = \tm_1
            % \ar@3{.}[d]
        \\
          \equivC
        & \equiv
        \\
          \tmtwo_0 = \tmfour_0\esub{\var}{\val\sctx} \, \tmfour_1
            \arsdUrStable{\rulelsv}{\aset}{\sset}{\appflag}
        & \tmfour'_0\esub{\var}{\val}\sctx \, \tmfour_1 = \tmtwo_1
        }
      \]
    \item $\ruleUAppR$.
      Then $\tmthree_1 = \tmfour_0 \, \tmfour'_1$,
      derived from $\tmfour_0 \in \Struct{\sset}$ and
      $\tmfour_1
      \tostable{\rulesub{\var}{\val}}{\aset \cup \set{\var}}{\sset}{\nonapp} 
      \tmfour'_1$.
      By \cref{rem:t_reduces_with_subvar_var_occurs_free_in_t}, we have 
      $\var \in \fv{\tmfour_1}$, but at the same time 
      $\var \notin \fv{\tmfour_1}$ by hypothesis of rule $\ruleEquivEsLDist$.
      Therefore this case is not possible.
    \end{itemize}
  \item $\ruleEquivEsRDist(1)$.
    Analogous to the previous case.
  \item $\ruleEquivCongES$.
    Then $\tm_0 = \tmthree_0\esub{\var}{\val\sctx} \equivC 
    \tmthree'_0\esub{\var}{\tmfour} = \tmtwo_0$, derived from
    $\tmthree_0 \equivC \tmthree'_0$ and $\val\sctx \equivC \tmfour$;
    note that $\tmfour = \val'\sctx'$ by \cref{rem:t_equiv_vL_is_vpLp},
    hence it must be the case that we need to use rule $\ruleEquivRefl$,
    so this case is not possible.
  \end{enumerate}
\item $\ruleUSub$.
  Then $\tm_0 = \var \tostable{\rulesub{\var}{\val}}{\aset' \cup \set{\var}}{\sset}{\app}
  \val = \tm_1$, where $\rulename = \rulesub{\var}{\val}$, 
  $\aset = \aset' \cup \set{\var}$ and $\appflag = \app$.
  We also have $\tm_0 = \var \equivC \tmtwo_0$.
  This case is not possible since there are no rules to derive $\var \equivC \tmtwo_0$.
\HIDDENFRAGMENT{
\item The remaining cases are treated in a similar way.
}{
\item $\ruleUAppL$.
  Then $\tm_0 = \tmthree_0 \, \tmthree_1 
  \tostable{\rulename}{\aset}{\sset}{\appflag} \tmthree'_0 \, \tmthree_1 = \tm_1$
  is derived from $\tmthree_0 \tostable{\rulename}{\aset}{\sset}{\app} \tmthree'_0$.
  Moreover, $\tm_0 = \tmthree_0 \, \tmthree_1 \equivC \tmtwo_0$.
  We analyze the different cases according to which rule was used to derive the equivalence.
  Cases $\ruleEquivEsComm$, $\ruleEquivEsAssoc$ and $\ruleEquivCongES$
  are impossible due to the form of $\tm_0$.
  \begin{enumerate}[leftmargin=*, label=\alph*.]
  \item $\ruleEquivEsLDist(2)$.
    Then $\tmthree_0 = \tmfour_0\esub{\var}{\tmfive} 
    \tostable{\rulename}{\aset}{\sset}{\app} \tmthree'_0$,
    and $\tm_0 \equivC (\tmfour_0 \, \tmthree_1)\esub{\var}{\tmfive} = \tmtwo_0$ 
    with $\var \notin \fv{\tmthree_1}$.
    This reduction step can be derived either from rule
    $\ruleULsv$, $\ruleUEsR$, $\ruleUEsLAbs$ or $\ruleUEsLStruct$:
    \begin{itemize}[leftmargin=*]
    \item $\ruleULsv$.
      Then $\tmfive = \val\sctx$ and $\rulename = \rulelsv$.
      The step
      $\tmthree_0 = \tmfour_0\esub{\var}{\val\sctx} \tostable{\rulelsv}{\aset}{\sset}{\app} 
      \tmfour'_0\esub{\var}{\val}\sctx = \tmthree'_0$ is derived from
      $\tmfour_0 \tostable{\rulesub{\var}{\val}}{\aset \cup \set{\var}}{\sset}{\app} 
      \tmfour'_0$.
      On the other hand, we have
      $\tm_0 = \tmfour_0\esub{\var}{\val\sctx} \, \tmthree_1 \equivC 
      (\tmfour_0 \, \tmthree_1)\esub{\var}{\val\sctx} = \tmtwo_0$.
      Hence $\tmtwo_0 \tostable{\rulelsv}{\aset}{\sset}{\appflag} 
      (\tmfour'_0 \, \tmthree_1)\esub{\var}{\val}\sctx = \tmtwo_1$, 
      derived from rules $\ruleULsv$ and $\ruleUAppL$.
      To conclude, 
      $\tm_1 = \tmfour'_0\esub{\var}{\val}\sctx \, \tmthree_1 \equiv \tmtwo_1$
      by applying rules $\ruleEquivEsLDist$ and 
      $\ruleEquivCongES$ with $\ruleEquivEsLDist$, given 
      $\var \notin \fv{\tmthree_1}$ by hypothesis and 
      $\domSctx{\sctx} \disj \fv{\tmthree_1}$ by $\alpha$-conversion.
      The following diagram summarises the proof:
      \[
        \xymatrixrowsep{0.3pc}
        \xymatrix{
          \tm_0 = \tmfour_0\esub{\var}{\val\sctx} \, \tmthree_1
            \arUrStable{\rulelsv}{\aset}{\sset}{\appflag}
            % \ar@3{-}[d]_>{c}
        & \tmfour'_0\esub{\var}{\val}\sctx \, \tmthree_1 = \tm_1
            % \ar@3{.}[d]
        \\
          \equivC
        & \equiv
        \\
          \tmtwo_0 = (\tmfour_0 \, \tmthree_1)\esub{\var}{\val\sctx}
            \arsdUrStable{\rulelsv}{\aset}{\sset}{\appflag}
        & (\tmfour'_0 \, \tmthree_1)\esub{\var}{\val}\sctx = \tmtwo_1
        }
      \]
    \item $\ruleUEsR$.
      The step
      $\tmthree_0 = \tmfour_0\esub{\var}{\tmfive} \tostable{\rulename}{\aset}{\sset}{\app} 
      \tmfour_0\esub{\var}{\tmfive'} = \tmthree'_0$ is derived from
      $\tmfive \tostable{\rulename}{\aset}{\sset}{\nonapp} \tmfive'$.
      On the other hand, we have
      $\tm_0 = \tmfour_0\esub{\var}{\tmfive} \, \tmthree_1 \equivC 
      (\tmfour_0 \, \tmthree_1)\esub{\var}{\tmfive} = \tmtwo_0$.
      Hence $\tmtwo_0 \tostable{\rulename}{\aset}{\sset}{\appflag} 
      (\tmfour_0 \, \tmthree_1)\esub{\var}{\tmfive'} = \tmtwo_1$, 
      derived from rule $\ruleUEsR$.
      To conclude, $\tm_1 = \tmfour_0\esub{\var}{\tmfive'} \equiv \tmtwo_1$
      by applying rule $\ruleEquivEsLDist$, since
      $\var \notin \fv{\tmthree_1}$ by hypothesis.
      The following diagram summarises the proof:
      \[
        \xymatrixrowsep{0.3pc}
        \xymatrix{
          \tm_0 = \tmfour_0\esub{\var}{\tmfive} \, \tmthree_1
            \arUrStable{\rulename}{\aset}{\sset}{\appflag}
            % \ar@3{-}[d]_>{c}
        & \tmfour_0\esub{\var}{\tmfive'} \, \tmthree_1 = \tm_1
            % \ar@3{.}[d]
        \\
          \equivC
        & \equiv
        \\
          \tmtwo_0 = (\tmfour_0 \, \tmthree_1)\esub{\var}{\tmfive}
            \arsdUrStable{\rulename}{\aset}{\sset}{\appflag}
        & (\tmfour_0 \, \tmthree_1)\esub{\var}{\tmfive'} = \tmtwo_1
        }
      \]
    \item $\ruleUEsLAbs$.
      The step
      $\tmthree_0 = \tmfour_0\esub{\var}{\tmfive} \tostable{\rulename}{\aset}{\sset}{\app} 
      \tmfour'_0\esub{\var}{\tmfive} = \tmthree'_0$ is derived from
      $\tmfour_0 \tostable{\rulename}{\aset \cup \set{\var}}{\sset}{\app} \tmfour'_0$
      and $\tmfive \in \HAbs{\aset}$ and $\var \notin \aset \cup \sset$ and
      $\var \notin \fv{\rulename}$.
      On the other hand, we have
      $\tm_0 = \tmfour_0\esub{\var}{\tmfive} \, \tmthree_1 \equivC 
      (\tmfour_0 \, \tmthree_1)\esub{\var}{\tmfive} = \tmtwo_0$.
      Hence $\tmtwo_0 \tostable{\rulename}{\aset}{\sset}{\appflag} 
      (\tmfour'_0 \, \tmthree_1)\esub{\var}{\tmfive} = \tmtwo_1$, 
      derived from rule $\ruleUEsLAbs$.
      To conclude, $\tm_1 = \tmfour'_0\esub{\var}{\tmfive} \, \tmthree_1 \equiv \tmtwo_1$
      by applying rule $\ruleEquivEsLDist$, since
      $\var \notin \fv{\tmthree_1}$ by hypothesis.
      The following diagram summarises the proof:
      \[
        \xymatrixrowsep{0.3pc}
        \xymatrix{
          \tm_0 = \tmfour_0\esub{\var}{\tmfive} \, \tmthree_1
            \arUrStable{\rulename}{\aset}{\sset}{\appflag}
            % \ar@3{-}[d]_>{c}
        & \tmfour'_0\esub{\var}{\tmfive} \, \tmthree_1 = \tm_1
            % \ar@3{.}[d]
        \\
          \equivC
        & \equiv
        \\
          \tmtwo_0 = (\tmfour_0 \, \tmthree_1)\esub{\var}{\tmfive}
            \arsdUrStable{\rulename}{\aset}{\sset}{\appflag}
        & (\tmfour'_0 \, \tmthree_1)\esub{\var}{\tmfive} = \tmtwo_1
        }
      \]
    \item $\ruleUEsLStruct$.
      Analogous to the previous case.
    \end{itemize}
  \item $\ruleEquivEsRDist(2)$.
    Then $\tmthree_0 
    \tostable{\rulename}{\aset}{\sset}{\app} \tmthree'_0$,
    and $\tm_0 = \tmthree_0 \, \tmfour_1\esub{\var}{\tmfive}
    \equivC (\tmthree_0 \, \tmfour_1)\esub{\var}{\tmfive} = \tmtwo_0$ 
    with $\var \notin \fv{\tmthree_0}$.
    Moreover, by $\alpha$-conversion we can assume $\var \notin \fv{\tmthree'_0}$.
    Then we have
    $\tmtwo_0 \tostable{\rulename}{\aset}{\sset}{\appflag} 
    (\tmthree'_0 \, \tmfour_1)\esub{\var}{\tmfive} = \tmtwo_1$, 
    derived from rule $\ruleUEsLAbs$ or $\ruleUEsLStruct$, depending on whether
    $\tmfive \in \HAbs{\aset}$ or $\tmfive \in \Struct{\sset}$, and from rule $\ruleUAppL$.
    To conclude, $\tm_1 = \tmthree'_0 \, \tmfour_1\esub{\var}{\tmfive} \equiv \tmtwo_1$
    by applying rule $\ruleEquivEsLDist$.
    The following diagram summarises the proof:
    \[
      \xymatrixrowsep{0.3pc}
      \xymatrix{
        \tm_0 = \tmthree_0 \, \tmfour_1\esub{\var}{\tmfive}
          \arUrStable{\rulename}{\aset}{\sset}{\appflag}
          % \ar@3{-}[d]_>{c}
      & \tmthree'_0 \, \tmfour_1\esub{\var}{\tmfive} = \tm_1
          % \ar@3{.}[d]
      \\
        \equivC
      & \equiv
      \\
        \tmtwo_0 = (\tmthree_0 \, \tmfour_1)\esub{\var}{\tmfive}
          \arsdUrStable{\rulename}{\aset}{\sset}{\appflag}
      & (\tmthree'_0 \, \tmfour_1)\esub{\var}{\tmfive} = \tmtwo_1
      }
    \]
  \item $\ruleEquivCongApp$.
    Then $\tm_0 \equivC \tmfour_0 \, \tmfour_1 = \tmtwo_0$, derived from
    $\tmthree_0 \equivC \tmfour_0$ and $\tmthree_1 \equivC \tmfour_1$.
    By \ih on $\tmthree_0$, we have that there exists $\tmfour'_0$ such that
    (1) $\tmfour_0 \tostable{\rulename}{\aset}{\sset}{\app} \tmfour'_0$ and
    $\tmthree'_0 \equiv \tmfour'_0$.
    Hence applying rule $\ruleUAppL$ with (1) as premise, we obtain $\tmtwo_0 
    \tostable{\rulename}{\aset}{\sset}{\appflag} \tmfour'_0 \, \tmfour_1 = \tmtwo_1$
    and by rule $\ruleEquivCongApp$ we conclude $\tm_1 \equiv \tmtwo_1$.
    The following diagram summarises the proof:
    \[
      \xymatrixrowsep{0.3pc}
      \xymatrix{
        \tm_0 = \tmthree_0 \, \tmthree_1
          \arUrStable{\rulename}{\aset}{\sset}{\appflag}
          % \ar@3{-}[d]_>{c}
      & \tmthree'_0 \, \tmthree_1 = \tm_1
          % \ar@3{.}[d]
      \\
        \equivC
      & \equiv
      \\
        \tmtwo_0 = \tmfour_0 \, \tmfour_1
          \arsdUrStable{\rulename}{\aset}{\sset}{\appflag}
      & \tmfour'_0 \, \tmfour_1 = \tmtwo_1
      }
    \]
  \end{enumerate}
\item $\ruleUAppR$.
  Analogous to the previous case.
\item $\ruleUEsR$.
  Then $\tm_0 = \tmthree_0\esub{\var}{\tmfour_0} 
  \tostable{\rulename}{\aset}{\sset}{\appflag} \tmthree_0\esub{\var}{\tmfour_1} = \tm_1$,
  which is derived from 
  $\tmfour_0 \tostable{\rulename}{\aset}{\sset}{\nonapp} \tmfour_1$.
  We also have $\tm_0 \equivC \tmtwo_0$, so we analyze the different cases
  according to which rule was used to derive the equivalence.
  Case $\ruleEquivCongApp$ is impossible due to the form of $\tm_0$.
  \begin{enumerate}[leftmargin=*, label=\alph*.]
  \item $\ruleEquivEsComm$.
    Then $\tmthree_0 = \tmthree\esub{\vartwo}{\tmfive}$,
    and $\tm_0 = \tmthree\esub{\vartwo}{\tmfive}\esub{\var}{\tmfour_0} \equivC
    \tmthree\esub{\var}{\tmfour_0}\esub{\vartwo}{\tmfive} = \tmtwo_0$, where
    $\var \notin \fv{\tmfive}$ and $\vartwo \notin \fv{\tmfour_0}$.
    Moreover, we can assume $\vartwo \notin \fv{\tmfour_1}$ by $\alpha$-conversion.
    We can derive $\tmtwo_0 \tostable{\rulename}{\aset}{\sset}{\appflag}
    \tmthree\esub{\var}{\tmfour_1}\esub{\vartwo}{\tmfive} = \tmtwo_1$
    from rule $\ruleUEsLAbs$ or $\ruleUEsLStruct$ depending on whether
    $\tmfive \in \HAbs{\aset}$ or $\tmfive \in \Struct{\sset}$, respectively,
    and from rule $\ruleUEsR$.
    To conclude, $\tm_1 \equiv \tmtwo_1$ by rule $\ruleEquivEsComm$,
    since $\var \notin \fv{\tmfive}$ by hypothesis.
    The following diagram summarises the proof:
    \[
      \xymatrixrowsep{0.3pc}
      \xymatrix{
        \tm_0 = \tmthree\esub{\vartwo}{\tmfive}\esub{\var}{\tmfour_0}
          \arUrStable{\rulename}{\aset}{\sset}{\appflag}
          % \ar@3{-}[d]_>{c}
      & \tmthree\esub{\vartwo}{\tmfive}\esub{\var}{\tmfour_1} = \tm_1
          % \ar@3{.}[d]_>{c}
      \\
        \equivC
      & \equiv
      \\
        \tmtwo_0 = \tmthree\esub{\var}{\tmfour_0}\esub{\vartwo}{\tmfive}
          \arsdUrStable{\rulename}{\aset}{\sset}{\appflag}
      & \tmthree\esub{\var}{\tmfour_1}\esub{\vartwo}{\tmfive} = \tmtwo_1
      }
    \]
  \item $\ruleEquivEsAssoc$ (1).
    Then $\tmthree_0 = \tmthree\esub{\vartwo}{\tmfive}$,
    and $\tm_0 = \tmthree\esub{\vartwo}{\tmfive}\esub{\var}{\tmfour_0} \equivC
    \tmthree\esub{\vartwo}{\tmfive\esub{\var}{\tmfour_0}} = \tmtwo_0$, with
    $\var \notin \fv{\tmthree}$.
    We can derive 
    $\tmtwo_0 \tostable{\rulename}{\aset}{\sset}{\appflag} 
    \tmthree\esub{\vartwo}{\tmfive\esub{\var}{\tmfour_1}} = \tmtwo_1$
    from rule $\ruleUEsR$.
    To conclude, $\tm_1 \equiv \tmtwo_1$ by rule $\ruleEquivEsAssoc$,
    since $\var \notin \fv{\tmthree}$ by hypothesis.
    The following diagram summarises the proof:
    \[
      \xymatrixrowsep{0.3pc}
      \xymatrix{
        \tm_0 = \tmthree\esub{\vartwo}{\tmfive}\esub{\var}{\tmfour_0}
          \arUrStable{\rulename}{\aset}{\sset}{\appflag}
          % \ar@3{-}[d]_>{c}
      & \tmthree\esub{\vartwo}{\tmfive}\esub{\var}{\tmfour_1} = \tm_1
          % \ar@3{.}[d]_>{c}
      \\
        \equivC
      & \equiv
      \\
        \tmtwo_0 = \tmthree\esub{\vartwo}{\tmfive\esub{\var}{\tmfour_0}}
          \arsdUrStable{\rulename}{\aset}{\sset}{\appflag}
      & \tmthree\esub{\vartwo}{\tmfive\esub{\var}{\tmfour_1}} = \tmtwo_1
      }
    \]
  \item $\ruleEquivEsAssoc$ (2).
    Then $\tmfour_0 = \tmfour\esub{\vartwo}{\tmfive}
    \tostable{\rulename}{\aset}{\sset}{\nonapp} \tmfour_1$, and 
    $\tm_0 = \tmthree_0\esub{\var}{\tmfour\esub{\vartwo}{\tmfive}} \equivC
    \tmthree_0\esub{\var}{\tmfour}\esub{\vartwo}{\tmfive} = \tmtwo_0$, 
    with $\vartwo \notin \fv{\tmthree_0}$.
    We can reduce $\tmfour_0$ to $\tmfour_1$ either by rule $\ruleULsv$, 
    $\ruleUEsR$, $\ruleUEsLAbs$ or $\ruleUEsLStruct$:
    \begin{itemize}[leftmargin=*]
    \item $\ruleULsv$.
      Then $\tmfive = \val\sctx$ and $\rulename = \rulelsv$,
      and the reduction step is derived from
      $\tmfour 
      \tostable{\rulesub{\vartwo}{\val}}{\aset \cup \set{\vartwo}}{\sset}{\nonapp}
      \tmfour'$ and $\val\sctx \in \HAbs{\aset}$ and $\vartwo \notin \aset\cup\sset$.
      On the other hand, we have 
      $\tm_0 = \tmthree_0\esub{\var}{\tmfour\esub{\vartwo}{\val\sctx}} \equivC
      \tmthree_0\esub{\var}{\tmfour}\esub{\vartwo}{\val\sctx} = \tmtwo_0$.
      Then we can derive $\tmtwo_0 \tostable{\rulelsv}{\aset}{\sset}{\appflag} 
      \tmthree_0\esub{\var}{\tmfour'}\esub{\vartwo}{\val}\sctx = \tmtwo_1$
      from rules $\ruleULsv$ and $\ruleUEsR$.
      To conclude, $\tm_1 = \tmthree_0\esub{\var}{\tmfour'\esub{\var}{\val}\sctx}
      \equiv \tmthree_0\esub{\var}{\tmfour'\esub{\vartwo}{\val}}\sctx = \tm'_1$
      by rules $\ruleEquivEsAssoc$ and rules $\ruleEquivCongES$ with 
      $\ruleEquivEsAssoc$ since by $\alpha$-conversion we may assume
      $\domSctx{\sctx} \disj \fv{\tmthree_0}$.
      Finally, $\tm'_1 \equiv \tmtwo_1$ by rules $\ruleEquivCongES$ and $\ruleEquivEsAssoc$.
      Thus $\tm_1 \equiv \tmtwo_1$.
      The following diagram summarises the proof:
      \[
        \xymatrixrowsep{0.3pc}
        \xymatrix{
          \tm_0 = \tmthree_0\esub{\var}{\tmfour\esub{\vartwo}{\val\sctx}}
            \arUrStable{\rulelsv}{\aset}{\sset}{\appflag}
            % \ar@3{-}[dd]_>{c}
        & \tmthree_0\esub{\vartwo}{\tmfour'\esub{\var}{\val}\sctx} = \tm_1
            % \ar@3{.}[d]
        \\
        & \equiv
        \\
          \equivC
        & \tmthree_0\esub{\var}{\tmfour'\esub{\vartwo}{\val}}\sctx = \tm'_1
            % \ar@3{.}[d]_>{c}
        \\
        & \equiv
        \\
          \tmtwo_0 = \tmthree_0\esub{\var}{\tmfour}\esub{\vartwo}{\val\sctx}
            \arsdUrStable{\rulelsv}{\aset}{\sset}{\appflag}
        & \tmthree_0\esub{\var}{\tmfour'}\esub{\vartwo}{\val}\sctx = \tmtwo_1
        }
      \]
    \item $\ruleUEsR$.
      The reduction step is derived from
      $\tmfive \tostable{\rulename}{\aset}{\sset}{\nonapp} \tmfive'$.
      On the other hand, we have 
      $\tm_0 = \tmthree_0\esub{\var}{\tmfour\esub{\vartwo}{\tmfive}} \equivC
      \tmthree_0\esub{\var}{\tmfour}\esub{\vartwo}{\tmfive} = \tmtwo_0$.
      Then we can derive $\tmtwo_0 \tostable{\rulename}{\aset}{\sset}{\appflag} 
      \tmthree_0\esub{\var}{\tmfour}\esub{\vartwo}{\tmfive'} = \tmtwo_1$
      from rule $\ruleUEsR$.
      To conclude, $\tm_1 = \tmthree_0\esub{\var}{\tmfour\esub{\vartwo}{\tmfive'}}
      \equiv \tmtwo_1$
      by rule $\ruleEquivEsAssoc$ since $\vartwo \notin \fv{\tmthree_0}$.
      The following diagram summarises the proof:
      \[
        \xymatrixrowsep{0.3pc}
        \xymatrix{
          \tm_0 = \tmthree_0\esub{\var}{\tmfour\esub{\vartwo}{\tmfive}}
            \arUrStable{\rulename}{\aset}{\sset}{\appflag}
            % \ar@3{-}[d]_>{c}
        & \tmthree_0\esub{\var}{\tmfour\esub{\vartwo}{\tmfive'}} = \tm_1
            % \ar@3{.}[d]_>{c}
        \\
          \ 
        & \equiv
        \\
          \tmtwo_0 = \tmthree_0\esub{\var}{\tmfour}\esub{\vartwo}{\tmfive}
            \arsdUrStable{\rulename}{\aset}{\sset}{\appflag}
        & \tmthree_0\esub{\var}{\tmfour}\esub{\vartwo}{\tmfive'} = \tmtwo_1
        }
      \]
    \item $\ruleUEsLAbs$.
      The reduction step is derived from
      $\tmfour \tostable{\rulename}{\aset \cup \set{\vartwo}}{\sset}{\nonapp} \tmfour'$
      and $\tmfive \in \HAbs{\aset}$ and $\vartwo \notin \aset\cup\sset$ and
      $\vartwo \notin \fv{\rulename}$.
      On the other hand, we have 
      $\tm_0 = \tmthree_0\esub{\var}{\tmfour\esub{\vartwo}{\tmfive}} \equivC
      \tmthree_0\esub{\var}{\tmfour}\esub{\vartwo}{\tmfive} = \tmtwo_0$.
      Then we can derive $\tmtwo_0 \tostable{\rulename}{\aset}{\sset}{\appflag} 
      \tmthree_0\esub{\var}{\tmfour'}\esub{\vartwo}{\tmfive} = \tmtwo_1$
      from rules $\ruleUEsLAbs$ and $\ruleUEsR$.
      To conclude, $\tm_1 = \tmthree_0\esub{\var}{\tmfour'\esub{\vartwo}{\tmfive}}
      \equiv \tmtwo_1$ by rule $\ruleEquivEsAssoc$ since 
      $\vartwo \notin \fv{\tmthree_0}$.
      The following diagram summarises the proof:
      \[
        \xymatrixrowsep{0.3pc}
        \xymatrix{
          \tm_0 = \tmthree_0\esub{\var}{\tmfour\esub{\vartwo}{\tmfive}}
            \arUrStable{\rulename}{\aset}{\sset}{\appflag}
            % \ar@3{-}[d]_>{c}
        & \tmthree_0\esub{\var}{\tmfour'\esub{\vartwo}{\tmfive}} = \tm_1
            % \ar@3{.}[d]_>{c}
        \\
          \equivC
        & \equiv
        \\
          \tmtwo_0 = \tmthree_0\esub{\var}{\tmfour}\esub{\vartwo}{\tmfive}
            \arsdUrStable{\rulename}{\aset}{\sset}{\appflag}
        & \tmthree_0\esub{\var}{\tmfour'}\esub{\vartwo}{\tmfive} = \tmtwo_1
        }
      \]
    \item $\ruleUEsLStruct$.
      Analogous to the previous case.
    \end{itemize}
  \item $\ruleEquivEsLDist(1)$.
    Then $\tm_0 = (\tmthree_1 \, \tmthree_2)\esub{\var}{\tmfour_0} \equivC
    \tmthree_1\esub{\var}{\tmfour_0} \, \tmthree_2 = \tmtwo_0$, 
    with $\var \notin \fv{\tmthree_2}$.
    We can derive $\tmtwo_0 \tostable{\rulename}{\aset}{\sset}{\appflag} 
    \tmthree_1\esub{\var}{\tmfour_1}\,\tmthree_2 = \tmtwo_1$ 
    from rules $\ruleUAppL$ and $\ruleUEsR$.
    To conclude, $\tm_1 = (\tmthree_1\,\tmthree_2)\esub{\var}{\tmfour_1} \equiv
    \tmtwo_1$ by rule $\ruleEquivEsLDist$, 
    since $\var \notin \fv{\tmthree_2}$.
    The following diagram summarises the proof:
    \[
      \xymatrixrowsep{0.3pc}
      \xymatrix{
        \tm_0 = (\tmthree_1 \, \tmthree_2)\esub{\var}{\tmfour_0}
          \arUrStable{\rulename}{\aset}{\sset}{\appflag}
          % \ar@3{-}[d]_>{c}
      & (\tmthree_1 \, \tmthree_2)\esub{\var}{\tmfour_1} = \tm_1
          % \ar@3{.}[d]_>{c}
      \\
        \equivC
      & \equiv
      \\
        \tmtwo_0 = \tmthree_1\esub{\var}{\tmfour_0} \, \tmthree_2
          \arsdUrStable{\rulename}{\aset}{\sset}{\appflag}
      & \tmthree_1\esub{\var}{\tmfour_1} \, \tmthree_2 = \tmtwo_1
      }
    \]
  \item $\ruleEquivEsRDist(1)$.
    Analogous to the previous case.
  \item $\ruleEquivCongES$.
    Then $\tm_0 \equivC \tmthree'_0\esub{\var}{\tmfour'_0} = \tmtwo_0$, 
    derived from $\tmthree_0 \equivC \tmthree'_0$ and $\tmfour_0 \equivC \tmfour'_0$.
    By \ih on $\tmfour_0$, there exists $\tmfour'_1$ such that
    $\tmfour'_0 \tostable{\rulename}{\aset}{\sset}{\nonapp} \tmfour'_1$ and
    $\tmfour_1 \equiv \tmfour'_1$.
    Then we can derive $\tmtwo_0 \tostable{\rulename}{\aset}{\sset}{\appflag} 
    \tmthree'_0\esub{\var}{\tmfour'_1} = \tmtwo_1$ 
    from rule $\ruleUEsR$.
    To conclude, $\tm_1 = \tmthree_0\esub{\var}{\tmfour_1} \equiv \tmtwo_1$ 
    by rule $\ruleEquivCongES$, where $\tmthree_0 \equiv \tmthree'_0$ by the
    hypothesis $\tmthree_0 \equivC \tmthree'_0$, and 
    $\tmfour_1 \equiv \tmfour'_1$ by the \ih.
    The following diagram summarises the proof:
    \[
      \xymatrixrowsep{0.3pc}
      \xymatrix{
        \tm_0 = \tmthree_0\esub{\var}{\tmfour_0}
          \arUrStable{\rulename}{\aset}{\sset}{\appflag}
          % \ar@3{-}[d]_>{c}
      & \tmthree_0\esub{\var}{\tmfour_1} = \tm_1
          % \ar@3{.}[d]_>{c}
      \\
        \equivC
      & \equiv
      \\
        \tmtwo_0 = \tmthree'_0\esub{\var}{\tmfour'_0}
          \arsdUrStable{\rulename}{\aset}{\sset}{\appflag}
      & \tmthree'_0\esub{\var}{\tmfour'_1} = \tmtwo_1
      }
    \]
  \end{enumerate}
\item $\ruleUEsLAbs$.
  Then $\tm_0 = \tmthree_0\esub{\var}{\tmfour_0} 
  \tostable{\rulename}{\aset}{\sset}{\appflag} \tmthree_1\esub{\var}{\tmfour_0}
  = \tm_1$,
  which is derived from $\tmthree_0 
  \tostable{\rulename}{\aset \cup \set{\var}}{\sset}{\appflag} \tmthree_1$
  and $\tmfour_0 \in \HAbs{\aset}$
  and $\var \notin \aset \cup \sset$
  and $\var \notin \fv{\rulename}$.
  We also have $\tm_0 \equivC \tmtwo_0$, so we analyze the different cases according to
  which rule was used to derive the equivalence. Case $\ruleEquivCongApp$ is
  impossible due to the form of $\tm_0$.
  \begin{enumerate}[leftmargin=*, label=\alph*.]
  \item $\ruleEquivEsComm$.
    Then $\tmthree_0 = \tmthree\esub{\vartwo}{\tmfive} 
    \tostable{\rulename}{\aset \cup \set{\var}}{\sset}{\appflag} \tmthree_1$,
    and $\tm_0 = \tmthree\esub{\vartwo}{\tmfive}\esub{\var}{\tmfour_0} \equivC
    \tmthree\esub{\var}{\tmfour_0}\esub{\vartwo}{\tmfive}= \tmtwo_0$,
    with $\var \notin \fv{\tmfive}$ and $\vartwo \notin \fv{\tmfour_0}$.
    The reduction step can be derived either from rule $\ruleULsv$, $\ruleUEsR$,
    $\ruleUEsLAbs$ or $\ruleUEsLStruct$:
    \begin{itemize}[leftmargin=*]
    \item $\ruleULsv$.
      Then $\tmfive = \val\sctx$ and $\rulename = \rulelsv$, and the reduction
      step is derived from $\tmthree 
      \tostable{\rulesub{\var}{\val}}{\aset \cup \set{\var} \cup \set{\vartwo}}{\sset}{\appflag}
      \tmthree'$ and $\val\sctx \in \HAbs{\aset \cup \set{\var}}$ and
      $\vartwo \notin \aset \cup \set{\var} \cup \sset$.
      On the other hand, we have 
      $\tm_0 = \tmthree\esub{\vartwo}{\val\sctx}\esub{\var}{\tmfour_0} 
      \equivC \tmthree\esub{\var}{\tmfour_0}\esub{\vartwo}{\val\sctx} = \tmtwo_0$.
      We can derive $\tmtwo_0 \tostable{\rulelsv}{\aset}{\sset}{\appflag}
      \tmthree'\esub{\var}{\tmfour_0}\esub{\vartwo}{\val}\sctx = \tmtwo_1$
      by rules $\ruleULsv$ and $\ruleUEsLAbs$.
      Moreover, 
      $\tm_1 \equiv \tmtwo_1$ by applying several times rules $\ruleEquivCongES$
      and $\ruleEquivEsComm$, since from the derivation of 
      $\tmtwo_0 \tostable{\rulename}{\aset}{\sset}{\appflag} \tmtwo_1$ 
      we may assume $\var \notin \fv{\val\sctx}$ by $\alpha$-conversion,
      and $\domSctx{\sctx} \disj \fv{\tmfour_0}$
      The following diagram summarises the proof:
      \[
        \xymatrixrowsep{0.3pc}
        \xymatrix{
          \tm_0 = \tmthree\esub{\vartwo}{\val\sctx}\esub{\var}{\tmfour_0}
            \arUrStable{\rulelsv}{\aset}{\sset}{\appflag}
            % \ar@3{-}[d]_>{c}
        & \tmthree'\esub{\vartwo}{\val}\sctx\esub{\var}{\tmfour_0} = \tm_1
            % \ar@3{.}[d]
        \\
          \equivC
        & \equiv
        \\
          \tmtwo_0 = \tmthree\esub{\var}{\tmfour_0}\esub{\vartwo}{\val\sctx}
            \arsdUrStable{\rulelsv}{\aset}{\sset}{\appflag}
        & \tmthree'\esub{\var}{\tmfour_0}\esub{\vartwo}{\val}\sctx = \tmtwo_1
        }
      \]
    \item $\ruleUEsR$.
      The reduction step is derived from 
      $\tmfive \tostable{\rulename}{\aset \cup \set{\var}}{\sset}{\nonapp} \tmfive'$.
      On the other hand, we have 
      $\tm_0 = \tmthree\esub{\vartwo}{\tmfive}\esub{\var}{\tmfour_0} 
      \equivC \tmthree\esub{\var}{\tmfour_0}\esub{\vartwo}{\tmfive} = \tmtwo_0$.
      We can derive $\tmtwo_0 \tostable{\rulename}{\aset}{\sset}{\appflag}
      \tmthree\esub{\var}{\tmfour_0}\esub{\vartwo}{\tmfive'} = \tmtwo_1$
      by rule $\ruleUEsR$.
      From the derivation of 
      $\tmtwo_0 \tostable{\rulename}{\aset}{\sset}{\appflag} \tmtwo_1$ 
      we may assume $\var \notin \fv{\tmfive'}$ by $\alpha$-conversion, 
      so that by applying rule $\ruleEquivEsComm$, we conclude
      $\tm_1 \equiv \tmtwo_1$.
      The following diagram summarises the proof:
      \[
        \xymatrixrowsep{0.3pc}
        \xymatrix{
          \tm_0 = \tmthree\esub{\vartwo}{\tmfive}\esub{\var}{\tmfour_0}
            \arUrStable{\rulename}{\aset}{\sset}{\appflag}
            % \ar@3{-}[d]_>{c}
        & \tmthree\esub{\vartwo}{\tmfive'}\esub{\var}{\tmfour_0} = \tm_1
            % \ar@3{.}[d]_>{c}
        \\
          \equivC
        & \equiv
        \\
          \tmtwo_0 = \tmthree\esub{\var}{\tmfour_0}\esub{\vartwo}{\tmfive}
            \arsdUrStable{\rulename}{\aset}{\sset}{\appflag}
        & \tmthree\esub{\var}{\tmfour_0}\esub{\vartwo}{\tmfive'} = \tmtwo_1
        }
      \]
    \item $\ruleUEsLAbs$.
      The reduction step is derived from $\tmthree 
      \tostable{\rulename}{\aset \cup \set{\var} \cup \set{\vartwo}}{\sset}{\appflag}
      \tmthree'$ and $\tmfive \in \HAbs{\aset \cup \set{\var}}$ and
      $\vartwo \notin \aset \cup \set{\var} \cup \sset$ and $\vartwo \notin \fv{\rulename}$.
      On the other hand,
      $\tm_0 = \tmthree\esub{\vartwo}{\tmfive}\esub{\var}{\tmfour_0} 
      \equivC \tmthree\esub{\var}{\tmfour_0}\esub{\vartwo}{\tmfive} = \tmtwo_0$.
      We can derive $\tmtwo_0 \tostable{\rulename}{\aset}{\sset}{\appflag}
      \tmthree'\esub{\var}{\tmfour_0}\esub{\vartwo}{\tmfive} = \tmtwo_1$
      by rule $\ruleUEsLAbs$.
      To conclude, $\tmtwo_1 \equiv \tm_1$ by rule $\ruleEquivEsComm$ since 
      $\var \notin \fv{\tmfive}$ and $\vartwo \notin \fv{\tmfour_0}$ by hypothesis.
      The following diagram summarises the proof:
      \[
        \xymatrixrowsep{0.3pc}
        \xymatrix{
          \tm_0 = \tmthree\esub{\vartwo}{\tmfive}\esub{\var}{\tmfour_0}
            \arUrStable{\rulename}{\aset}{\sset}{\appflag}
            % \ar@3{-}[d]_>{c}
        & \tmthree'\esub{\vartwo}{\tmfive}\esub{\var}{\tmfour_0} = \tm_1
            % \ar@3{.}[d]_>{c}
        \\
          \equivC
        & \equiv
        \\
          \tmtwo_0 = \tmthree\esub{\var}{\tmfour_0}\esub{\vartwo}{\tmfive}
            \arsdUrStable{\rulename}{\aset}{\sset}{\appflag}
        & \tmthree'\esub{\var}{\tmfour_0}\esub{\vartwo}{\tmfive} = \tmtwo_1
        }
      \]
    \item $\ruleUEsLStruct$.
      Analogous to the previous case.
    \end{itemize}
  \item $\ruleEquivEsAssoc$ (1).
    Then $\tmthree_0 = \tmthree\esub{\vartwo}{\tmfive} 
    \tostable{\rulename}{\aset \cup \set{\var}}{\sset}{\appflag} \tmthree_1$,
    and $\tm_0 = \tmthree\esub{\vartwo}{\tmfive}\esub{\var}{\tmfour_0} \equivC
    \tmthree\esub{\vartwo}{\tmfive\esub{\var}{\tmfour_0}} = \tmtwo_0$,
    with $\var \notin \fv{\tmthree}$.
    The reduction step can be derived either from rule $\ruleULsv$, $\ruleUEsR$,
    $\ruleUEsLAbs$ or $\ruleUEsLStruct$:
    \begin{itemize}[leftmargin=*]
    \item $\ruleULsv$.
      Then $\tmfive = \val\sctx$ and $\rulename = \rulelsv$, and the reduction
      step is derived from $\tmthree 
      \tostable{\rulesub{\vartwo}{\val}}{\aset \cup \set{\var} \cup \set{\vartwo}}{\sset}{\appflag}
      \tmthree'$ and $\val\sctx \in \HAbs{\aset \cup \set{\var}}$ and
      $\vartwo \notin \aset \cup \set{\var} \cup \sset$.
      On the other hand, we have 
      $\tm_0 = \tmthree\esub{\vartwo}{\val\sctx}\esub{\var}{\tmfour_0} 
      \equivC \tmthree\esub{\vartwo}{\val\sctx\esub{\var}{\tmfour_0}} = \tmtwo_0$.
      We can derive $\tmtwo_0 \tostable{\rulelsv}{\aset}{\sset}{\appflag}
      \tmthree'\esub{\vartwo}{\val}\sctx\esub{\var}{\tmfour_0} = \tmtwo_1$
      by rule $\ruleULsv$, and $\tmtwo_1 \equiv \tm_1$ by rule $\ruleEquivRefl$.
      The following diagram summarises the proof:
      \[
        \xymatrixrowsep{0.3pc}
        \xymatrix{
          \tm_0 = \tmthree\esub{\vartwo}{\val\sctx}\esub{\var}{\tmfour_0}
            \arUrStable{\rulelsv}{\aset}{\sset}{\appflag}
            % \ar@3{-}[d]_>{c}
        & \tmthree'\esub{\vartwo}{\val}\sctx\esub{\var}{\tmfour_0} = \tm'_1
            % \ar@3{.}[d]
        \\
          \equivC
        & \equiv
        \\
          \tmtwo_0 = \tmthree\esub{\vartwo}{\val\sctx\esub{\var}{\tmfour_0}}
            \arsdUrStable{\rulelsv}{\aset}{\sset}{\appflag}
        & \tmthree'\esub{\vartwo}{\val}\sctx\esub{\var}{\tmfour_0} = \tmtwo_1
        }
      \]
    \item $\ruleUEsR$.
      The reduction step is derived from 
      $\tmfive \tostable{\rulename}{\aset \cup \set{\var}}{\sset}{\nonapp} \tmfive'$.
      On the other hand, we have 
      $\tm_0 = \tmthree\esub{\vartwo}{\tmfive}\esub{\var}{\tmfour_0} 
      \equivC \tmthree\esub{\vartwo}{\tmfive\esub{\var}{\tmfour_0}} = \tmtwo_0$.
      We can derive $\tmtwo_0 \tostable{\rulename}{\aset}{\sset}{\appflag}
      \tmthree\esub{\vartwo}{\tmfive'\esub{\var}{\tmfour_0}} = \tmtwo_1$
      from rules $\ruleUEsR$ and $\ruleUEsLAbs$.
      Applying rule $\ruleEquivEsAssoc$, given $\var \notin \fv{\tmthree}$, 
      we conclude $\tm_1 \equiv \tmtwo_1$.
      The following diagram summarises the proof:
      \[
        \xymatrixrowsep{0.3pc}
        \xymatrix{
          \tm_0 = \tmthree\esub{\vartwo}{\tmfive}\esub{\var}{\tmfour_0}
            \arUrStable{\rulename}{\aset}{\sset}{\appflag}
            % \ar@3{-}[d]_>{c}
        & \tmthree\esub{\vartwo}{\tmfive'}\esub{\var}{\tmfour_0} = \tm_1
            % \ar@3{.}[d]_>{c}
        \\
          \equivC
        & \equiv
        \\
          \tmtwo_0 = \tmthree\esub{\vartwo}{\tmfive\esub{\var}{\tmfour_0}}
            \arsdUrStable{\rulename}{\aset}{\sset}{\appflag}
        & \tmthree\esub{\vartwo}{\tmfive'\esub{\var}{\tmfour_0}} = \tmtwo_1
        }
      \]
    \item $\ruleUEsLAbs$.
      Then the reduction step is derived from $\tmthree 
      \tostable{\rulename}{\aset \cup \set{\var} \cup \set{\vartwo}}{\sset}{\appflag}
      \tmthree'$ and $\tmfive \in \HAbs{\aset \cup \set{\var}}$ and
      $\vartwo \notin \aset \cup \set{\var} \cup \sset$ and $\vartwo \notin \fv{\rulename}$.
      On the other hand, we have 
      $\tm_0 = \tmthree\esub{\vartwo}{\tmfive}\esub{\var}{\tmfour_0} 
      \equivC \tmthree\esub{\vartwo}{\tmfive\esub{\var}{\tmfour_0}} = \tmtwo_0$.
      We can derive $\tmtwo_0 \tostable{\rulename}{\aset}{\sset}{\appflag}
      \tmthree'\esub{\vartwo}{\tmfive\esub{\var}{\tmfour_0}} = \tmtwo_1$
      by rule $\ruleUEsLAbs$.
      To conclude, $\tmtwo_1 \equiv \tm_1$ by rule $\ruleEquivEsAssoc$, 
      since from the derivation of 
      $\tmtwo_0 \tostable{\rulename}{\aset}{\sset}{\appflag} \tmtwo_1$ 
      we may assume $\var \notin \fv{\tmthree'}$ by $\alpha$-conversion.
      The following diagram summarises the proof:
      \[
        \xymatrixrowsep{0.3pc}
        \xymatrix{
          \tm_0 = \tmthree\esub{\vartwo}{\tmfive}\esub{\var}{\tmfour_0}
            \arUrStable{\rulename}{\aset}{\sset}{\appflag}
            % \ar@3{-}[d]_>{c}
        & \tmthree'\esub{\vartwo}{\tmfive}\esub{\var}{\tmfour_0} = \tm_1
            % \ar@3{.}[d]_>{c}
        \\
          \equivC
        & \equiv
        \\
          \tmtwo_0 = \tmthree\esub{\vartwo}{\tmfive\esub{\var}{\tmfour_0}}
            \arsdUrStable{\rulename}{\aset}{\sset}{\appflag}
        & \tmthree'\esub{\vartwo}{\tmfive\esub{\var}{\tmfour_0}} = \tmtwo_1
        }
      \]
    \item $\ruleUEsLStruct$.
      Analogous to the previous case.
    \end{itemize}
  \item $\ruleEquivEsAssoc$ (2).
    Then $\tmfour_0 = \tmfour\esub{\vartwo}{\tmfive}$, and the reduction
    $\tm_0 = \tmthree_0\esub{\var}{\tmfour\esub{\vartwo}{\tmfive}}
    \tostable{\rulename}{\aset}{\sset}{\appflag}
    \tmthree_1\esub{\var}{\tmfour\esub{\vartwo}{\tmfive}} = \tm_1$ is derived from
    $\tmthree_0 \tostable{\rulename}{\aset \cup \set{\var}}{\sset}{\appflag}\tmthree_1$
    and (1) $\tmfour\esub{\vartwo}{\tmfive} \in \HAbs{\aset}$ and 
    $\var \notin \aset \cup \sset$ and $\var \notin \fv{\rulename}$.
    Moreover,
    $\tm_0 = \tmthree_0\esub{\var}{\tmfour\esub{\vartwo}{\tmfive}} \equivC
    \tmthree_0\esub{\var}{\tmfour}\esub{\vartwo}{\tmfive} = \tmtwo_0$,
    with $\vartwo \notin \fv{\tmthree_0}$.
    The judgment (1) can be derived either by rule $\ruleHAbsSubi$ or $\ruleHAbsSubii$.
    Let us only work with the case in which it is derived by the latter; 
    then we obtain the reduction step 
    $\tmtwo_0 \tostable{\rulename}{\aset}{\sset}{\appflag}
    \tmthree_1\esub{\var}{\tmfour}\esub{\vartwo}{\tmfive} = \tmtwo_1$
    by applying rule $\ruleUEsLAbs$.
    To conclude, $\tmtwo_1 \equiv \tm_1$ by rule $\ruleEquivEsAssoc$, 
    since from the derivation of 
    $\tm_0 \tostable{\rulename}{\aset}{\sset}{\appflag} \tm_1$ we may assume 
    $\vartwo \notin \fv{\tmthree_1}$ by $\alpha$-conversion.
    The following diagram summarises the proof:
    \[
      \xymatrixrowsep{0.3pc}
      \xymatrix{
        \tm_0 = \tmthree_0\esub{\var}{\tmfour\esub{\vartwo}{\tmfive}}
          \arUrStable{\rulename}{\aset}{\sset}{\appflag}
          % \ar@3{-}[d]_>{c}
      & \tmthree_1\esub{\var}{\tmfour\esub{\vartwo}{\tmfive}} = \tm_1
          % \ar@3{.}[d]_>{c}
      \\
        \equivC
      & \equiv
      \\
        \tmtwo_0 = \tmthree_0\esub{\var}{\tmfour}\esub{\vartwo}{\tmfive}
          \arsdUrStable{\rulename}{\aset}{\sset}{\appflag}
      & \tmthree_1\esub{\var}{\tmfour}\esub{\vartwo}{\tmfive} = \tmtwo_1
      }
    \]
  \item $\ruleEquivEsLDist(1)$.
    Then the reduction step $\tm_0 \tostable{\rulename}{\aset}{\sset}{\appflag}
    \tm_1$ is derived from $\tmthree_0 = \tmfive_1 \, \tmfive_2 
    \tostable{\rulename}{\aset \cup \set{\var}}{\sset}{\appflag} \tmthree_1$
    and $\tmfour_0 \in \HAbs{\aset}$ and $\var \notin \aset \cup \sset$ and
    $\var \notin \fv{\rulename}$.
    Moreover, $\tm_0 = (\tmfive_1 \, \tmfive_2)\esub{\var}{\tmfour_0} \equivC
    \tmfive_1\esub{\var}{\tmfour_0} \, \tmfive_2 = \tmtwo_0$,
    with $\var \notin \fv{\tmfive_2}$.
    The reduction of $\tmfive_1 \, \tmfive_2$ to $\tmthree_1$ can be derived 
    either from rule $\ruleUDbStable$, $\ruleUAppL$ or $\ruleUAppR$:
    \begin{itemize}[leftmargin=*]
    \item $\ruleUDbStable$.
      Then $\tmfive_1 = (\lam{\vartwo}{\tmfive})\sctx$ and $\rulename = \ruledb$, 
      so we have $(\lam{\vartwo}{\tmfive})\sctx \, \tmfive_2 
      \tostable{\ruledb}{\aset \cup \set{\var}}{\sset}{\appflag} 
      \tmfive\esub{\vartwo}{\tmfive_2}\sctx$
      where $\tmfive_2 \in \HAbs{\aset} \cup \Struct{\sset}$.
      On the other hand,
      $\tm_0 = ((\lam{\vartwo}{\tmfive})\sctx \, \tmfive_2)\esub{\var}{\tmfour_0} 
      \equivC (\lam{\vartwo}{\tmfive})\sctx\esub{\var}{\tmfour_0} \, \tmfive_2 = \tmtwo_0$.
      We can derive $(\lam{\vartwo}{\tmfive})\sctx\esub{\var}{\tmfour_0} \, \tmfive_2 
      \tostable{\ruledb}{\aset}{\sset}{\appflag} 
      \tmfive\esub{\vartwo}{\tmfive_2}\sctx\esub{\var}{\tmfour_0} = \tmtwo_1$
      by rule $\ruleUDbStable$, and $\tmtwo_1 \equiv \tm_1$ by rule $\ruleEquivRefl$.
      The following diagram summarises the proof:
      \[
        \xymatrixrowsep{0.3pc}
        \xymatrix{
          \tm_0 = ((\lam{\vartwo}{\tmfive})\sctx \, \tmfive_2)\esub{\var}{\tmfour_0}
            \arUrStable{\ruledb}{\aset}{\sset}{\appflag}
            % \ar@3{-}[d]_>{c}
        & \tmfive\esub{\vartwo}{\tmfive_2}\sctx\esub{\var}{\tmfour_0} = \tm_1
            % \ar@3{.}[d]
        \\
          \equivC
        & \equiv
        \\
          \tmtwo_0 = (\lam{\vartwo}{\tmfive})\sctx\esub{\var}{\tmfour_0} \, \tmfive_2
            \arsdUrStable{\ruledb}{\aset}{\sset}{\appflag}
        & \tmfive\esub{\vartwo}{\tmfive_2}\sctx\esub{\var}{\tmfour_0} = \tmtwo_1
        }
      \]
    \item $\ruleUAppL$.
      The reduction of $\tmfive_1 \, \tmfive_2$ to $\tmthree_1$ is derived from 
      $\tmfive_1 \tostable{\rulename}{\aset \cup \set{\var}}{\sset}{\app} \tmfive'_1$.
      On the other hand,
      $\tm_0 = (\tmfive_1 \, \tmfive_2)\esub{\var}{\tmfour_0} 
      \equivC \tmfive_1\esub{\var}{\tmfour_0} \, \tmfive_2 = \tmtwo_0$.
      We can derive $\tmtwo_0 \tostable{\rulename}{\aset}{\sset}{\appflag}
      \tmfive'_1\esub{\var}{\tmfour_0} \, \tmfive_2 = \tmtwo_1$
      from rules $\ruleUAppL$ and $\ruleUEsLAbs$.
      Applying rule $\ruleEquivEsLDist$, since $\var \notin \fv{\tmfive_2}$, 
      we conclude $\tm_1 \equiv \tmtwo_1$.
      The following diagram summarises the proof:
      \[
        \xymatrixrowsep{0.3pc}
        \xymatrix{
          \tm_0 = (\tmfive_1 \, \tmfive_2)\esub{\var}{\tmfour_0}
            \arUrStable{\rulename}{\aset}{\sset}{\appflag}
            % \ar@3{-}[d]_>{c}
        & (\tmfive'_1 \, \tmfive_2)\esub{\var}{\tmfour_0} = \tm_1
            % \ar@3{.}[d]_>{c}
        \\
          \equivC
        & \equiv
        \\
          \tmtwo_0 = \tmfive_1\esub{\var}{\tmfour_0} \, \tmfive_2
            \arsdUrStable{\rulename}{\aset}{\sset}{\appflag}
        & \tmfive'_1\esub{\var}{\tmfour_0} \, \tmfive_2 = \tmtwo_1
        }
      \]
    \item $\ruleUAppR$.
      Then the reduction of $\tmfive_1 \, \tmfive_2$ to $\tmthree_1$ is derived from
      $\tmfive_1 \in \Struct{\sset}$ and 
      $\tmfive_2 \tostable{\rulename}{\aset \cup \set{\var}}{\sset}{\nonapp} \tmfive'_2$.
      On the other hand, we have 
      $\tm_0 = (\tmfive_1 \, \tmfive_2)\esub{\var}{\tmfour_0} 
      \equivC \tmfive_1\esub{\var}{\tmfour_0} \, \tmfive_2 = \tmtwo_0$.
      We can derive $\tmtwo_0 \tostable{\rulename}{\aset}{\sset}{\appflag}
      \tmfive_1\esub{\var}{\tmfour_0} \, \tmfive'_2 = \tmtwo_1$
      from rule $\ruleUAppR$, since $\tmfive_1\esub{\var}{\tmfour_0} \in \Struct{\sset}$
      by $\ruleStructSubi$.
      From the derivation of 
      $\tmtwo_0 \tostable{\rulename}{\aset}{\sset}{\appflag} \tmtwo_1$ 
      we may assume $\var \notin \fv{\tmfive'_2}$ by $\alpha$-conversion, 
      so that by applying rule $\ruleEquivEsLDist$, we conclude
      $\tm_1 \equiv \tmtwo_1$.
      The following diagram summarises the proof:
      \[
        \xymatrixrowsep{0.3pc}
        \xymatrix{
          \tm_0 = (\tmfive_1 \, \tmfive_2)\esub{\var}{\tmfour_0}
            \arUrStable{\rulename}{\aset}{\sset}{\appflag}
            % \ar@3{-}[d]_>{c}
        & (\tmfive_1 \, \tmfive'_2)\esub{\var}{\tmfour_0} = \tm_1
            % \ar@3{.}[d]_>{c}
        \\
          \equivC
        & \equiv
        \\
          \tmtwo_0 = \tmfive_1\esub{\var}{\tmfour_0} \, \tmfive_2
            \arsdUrStable{\rulename}{\aset}{\sset}{\appflag}
        & \tmfive_1\esub{\var}{\tmfour_0} \, \tmfive'_2 = \tmtwo_1
        }
      \]
    \end{itemize}
  \item $\ruleEquivEsRDist(1)$.
    Then the reduction step $\tm_0 \tostable{\rulename}{\aset}{\sset}{\appflag}
    \tm_1$ is derived from $\tmthree_0 = \tmfive_1 \, \tmfive_2 
    \tostable{\rulename}{\aset \cup \set{\var}}{\sset}{\appflag} \tmthree_1$
    and $\tmfour_0 \in \HAbs{\aset}$ and $\var \notin \aset \cup \sset$ and
    $\var \notin \fv{\rulename}$.
    Moreover, $\tm_0 = (\tmfive_1 \, \tmfive_2)\esub{\var}{\tmfour_0} \equivC
    \tmfive_1 \, \tmfive_2\esub{\var}{\tmfour_0} = \tmtwo_0$,
    with $\var \notin \fv{\tmfive_1}$.
    The reduction of $\tmfive_1 \, \tmfive_2$ to $\tmthree_1$ can be derived 
    either from rule $\ruleUDbStable$, $\ruleUAppL$ or $\ruleUAppR$.
    The only relevant case is $\ruleUDbStable$, the others being analogous to subcases 
    $\ruleUAppL$ and $\ruleUAppR$ of case $\ruleEquivEsLDist$.
    Hence $\tmfive_1 = (\lam{\vartwo}{\tmfive})\sctx$ and $\rulename = \ruledb$, 
    with $\tmfive_2 \in \HAbs{\aset \cup \set{\var}} \cup \Struct{\sset}$.
    Therefore we have $(\lam{\vartwo}{\tmfive})\sctx \, \tmfive_2 
    \tostable{\ruledb}{\aset \cup \set{\var}}{\sset}{\appflag} 
    \tmfive\esub{\vartwo}{\tmfive_2}\sctx$.
    On the other hand,
    $\tm_0 = ((\lam{\vartwo}{\tmfive})\sctx \, \tmfive_2)\esub{\var}{\tmfour_0} 
    \equivC (\lam{\vartwo}{\tmfive})\sctx \, \tmfive_2\esub{\var}{\tmfour_0} = \tmtwo_0$.
    Given $\tmfive_2 \in \HAbs{\aset \cup \set{\var}} \cup \Struct{\sset}$
    we can derive $\tmfive_2\esub{\var}{\tmfour_0} \in \HAbs{\aset} \cup \Struct{\sset}$,
    so $\tmtwo_0 \tostable{\ruledb}{\aset}{\sset}{\appflag} 
    \tmfive\esub{\vartwo}{\tmfive_2\esub{\var}{\tmfour_0}}\sctx = \tmtwo_1$
    by rule $\ruleUDbStable$. And $\tmtwo_1 \equiv 
    \tmfive\esub{\vartwo}{\tmfive_2}\esub{\var}{\tmfour_0}\sctx = \tmtwo'_1$ 
    by rule $\ruleEquivCongES$, and by rule $\ruleEquivEsAssoc$ 
    on the body of the explicit substitution.
    To conclude, $\tmtwo'_1 \equiv \tm_1$ by applying rule $\ruleEquivEsComm$ and
    rules $\ruleEquivCongES$ and $\ruleEquivEsComm$,
    since $\var \notin \fv{\sctx}$ by hypothesis, and we may assume 
    $\domSctx{\sctx} \disj \fv{\tmfour_0}$ by $\alpha$-conversion.
    The following diagram summarises the proof:
      \[
        \xymatrixrowsep{0.3pc}
        \xymatrix{
          \tm_0 = ((\lam{\vartwo}{\tmfive})\sctx \, \tmfive_2)\esub{\var}{\tmfour_0}
            \arUrStable{\ruledb}{\aset}{\sset}{\appflag}
            % \ar@3{-}[dd]_>{c}
        & \tmfive\esub{\vartwo}{\tmfive_2}\sctx\esub{\var}{\tmfour_0} = \tm_1
            % \ar@3{.}[d]
        \\
        & \equiv
        \\
          \equivC
        & \tmfive\esub{\vartwo}{\tmfive_2}\esub{\var}{\tmfour_0}\sctx = \tmtwo'_1
            % \ar@3{.}[d]_>{c}
        \\
        & \equiv
        \\
          \tmtwo_0 = (\lam{\vartwo}{\tmfive})\sctx \, \tmfive_2\esub{\var}{\tmfour_0}
            \arsdUrStable{\ruledb}{\aset}{\sset}{\appflag}
        & \tmfive\esub{\vartwo}{\tmfive_2\esub{\var}{\tmfour_0}}\sctx = \tmtwo_1
        }
      \]
  \item $\ruleEquivCongES$.
    Then $\tm_0 \equivC \tmthree'_0\esub{\var}{\tmfour'_0} = \tmtwo_0$, derived 
    from $\tmthree_0 \equivC \tmthree'_0$ and $\tmfour_0 \equivC \tmfour'_0$.
    By \ih on $\tmthree_0$, there exists $\tmthree'_1$ such that
    $\tmthree'_0 \tostable{\rulename}{\aset \cup \set{\var}}{\sset}{\appflag} \tmthree'_1$ and
    $\tmthree_1 \equiv \tmthree'_1$.
    Then we can derive $\tmtwo_0 \tostable{\rulename}{\aset}{\sset}{\appflag} 
    \tmthree'_1\esub{\var}{\tmfour'_0} = \tmtwo_1$ 
    from rule $\ruleUEsLAbs$.
    To conclude, $\tm_1 = \tmthree_1\esub{\var}{\tmfour_0} \equiv \tmtwo_1$ 
    by rule $\ruleEquivCongES$, where $\tmthree'_0 \equiv \tmthree'_1$ by the \ih
    and $\tmfour_0 \equiv \tmfour'_0$ by hypothesis.
    The following diagram summarises the proof:
    \[
      \xymatrixrowsep{0.3pc}
      \xymatrix{
        \tm_0 = \tmthree_0\esub{\var}{\tmfour_0}
          \arUrStable{\rulename}{\aset}{\sset}{\appflag}
          % \ar@3{-}[d]_>{c}
      & \tmthree_1\esub{\vartwo}{\tmfour_0} = \tm_1
          % \ar@3{.}[d]
      \\
        \equivC
      & \equiv
      \\
        \tmtwo_0 = \tmthree'_0\esub{\var}{\tmfour'_0}
          \arsdUrStable{\rulename}{\aset}{\sset}{\appflag}
      & \tmthree'_1\esub{\var}{\tmfour'_0} = \tmtwo_1
      }
    \]
  \end{enumerate}
\item $\ruleUEsLStruct$.
  Is analogous to case $\ruleUEsLAbs$.
}
\end{itemize}
\end{proof}

\subsubsection{Simulation of GLAMOUr steps in \UOCBV}
For completeness, we recall the syntax and relevant details of the
\glamour abstract machine.

\begin{definition}[Syntax of the \glamour]
The set of \defn{states} $(\state, \statetwo, \hdots)$,
\defn{dumps} $(\dump, \dumptwo, \hdots)$,
\defn{stacks} $(\stack, \stacktwo, \hdots)$, 
\defn{stack items} $(\stackitem, \stackitemtwo, \hdots)$, and
\defn{global environments} $(\genv, \genvtwo, \hdots)$
are given by the following grammars:
\[
  \begin{array}{rcll}
    \state     & \eqgram & \glamourst{\dump}{\code}{\stack}{\genv}
  \\
    \dump      & \eqgram & \estack \mid \dump \cons \pair{\code}{\stack}
  \\
    \stack     & \eqgram & \estack \mid \stackiteml \cons \stack
      & \text{where $\lab \in \set{\alive,\dead}$}                       
  \\
    \stackitem & \eqgram & \code \mid \pair{\code}{\stack}
  \\
    \genv      & \eqgram & \estack \mid \esub{\var}{\stackiteml} \cons \genv
      & \text{where $\lab \in \set{\alive,\dead}$}                       
  \end{array}
\]
where \defn{codes} $(\code, \codetwo, \hdots)$ are 
terms with no explicit substitutions (\ie, pure terms),
but \emph{they are not} considered up to $\alpha$-equivalence.
We use to decorate stack items with \defn{labels} $\lab \in \set{\alive, \dead}$,
writing $\stackiteml$ rather than $\stackitem$.
By convention this label always indicates the shape of $\stackitem$,
so in particular,
$\lab = \alive$ if and only if $\stackitem$ is of the form $\tm$,
and
$\lab = \dead$ if and only if $\stackitem$ is of the form $\pair{\tm}{\stack}$.
Intuitively, these labels indicate whether a stack item unfolds to 
a hereditary abstraction ($\alive$) or a structure ($\dead$).

Let $\state = \glamourst{\dump}{\code}{\stack}{\genv}$.
A \defn{binding occurrence} for a variable $\var$ in $\state$ is
either the leftmost occurrence of $\var$ in an abstraction $\lam{\var}{\code}$
or the leftmost occurrence of $\var$ in an element $\esub{\var}{\stackitem}$
of the environment.
We say that $\state$ is \defn{well-named}
if the three following conditions hold:
\begin{enumerate}
\item
  Each variable $\var$ has at most one binding occurrence.
  For example,
  $\glamourst{\pair{\lam{\var}{\lam{\vartwo}{\vartwo}}}{\estack}}{\varfour}{\estack}{\esub{\varthree}{\varfour}}$
  is well-named, while
  $\glamourst{\estack}{\lam{\var}{\var}}{\estack}{\esub{\var}{\varfour}}$ and
  $\glamourst{\estack}{\lam{\var}{\lam{\var}{\vartwo}}}{\estack}{\estack}$
  are not.
\item
  If there is some binding occurrence for a variable $\var$
  in an abstraction,then, all occurrences of $\var$ only
  occur inside the body of this abstraction.
  For example,
  $\glamourst{\estack}{\lam{\var}{\lam{\vartwo}{\var}}}{\var}{\estack}$
  is not well-named.
\item
  If there is some binding occurrence for a variable $\var$
  in an environment of the form $\genv_1\cons\esub{\var}{\stackitem}\cons\genv_2$, then there are no other occurrences of $\var$
  in $\genv_2$.
  For example,
  $\glamourst{\pair{\estack}{\estack}}{\vartwo}{\vartwo}{\esub{\var}{\vartwo}\esub{\varthree}{\var}}$
  is not well-named.
\end{enumerate}

\end{definition}

\begin{definition}[Decoding of components]
The decoding of the components of the \glamour abstract machine
into the syntax of our calculus is given by the following function:
\[
  \begin{array}{rcll}
    \decode{\glamourst{\dump}{\code}{\stack}{\genv}} & \defeq & \decodep{\genv}{\decodep{\dump}{\decodep{\stack}{\tm}}}
  \\
    \decode{\estack}                                 & \defeq & \ectx
  \\
    \decode{\dump \cons \pair{\code}{\stack}}        & \defeq & \decodep{\dump}{\decodep{\stack}{\tm\ectx}}
  \\
    \decode{\stackiteml \cons \stack}                & \defeq & \decodep{\stack}{\ectx\decode{\stackiteml}}
  \\
    \decode{\esub{\var}{\stackiteml} \cons \genv}    & \defeq & \decodep{\genv}{\ectx\esub{\var}{\decode{\stackiteml}}}
  \\
    \decode{\pair{\code}{\stack}^\lab}               & \defeq & \decodep{\stack}{\tm}
  \\
    \decode{\code^\lab}                              & \defeq & \tm 
  \\
    \decode{\code}                                   & \defeq & \tm
  \end{array}
\]
In the two last lines, $\tm$ denotes a term which is $\alpha$-equivalent to $\code$.
\end{definition}

\begin{definition}[Transitions of the \glamour abstract machine]
The transitions of the \glamour abstract machine are defined as follows:
\[
  \begin{array}{c|c|c|ccc|c|c|ccc}
    \usefmaca{\genv}{\dump}{\code \, \codethree}{\stack}
  & \tomachcone &
   \usefmaca{\genv}{\dump\cons \pair{\code}{\stack}}{\codethree}{\estack}
  \\
    \usefmaca{\genv}{\dump}{\tocode{\lam{\var}{\tm}}}{\stackiteml\cons\stack}
  & \tomachum &
    \usefmaca{\esub\var{\stackiteml} \cons \genv}{\dump}{\code}{\stack}
  \\
    \usefmaca{\genv}{\dump \cons \pair{\code}{\stack}}{\tocode{\lam{\var}{\tmthree}}}{\estack}
  & \tomachctwo &
    \usefmaca{\genv}{\dump}{\code}{\herval{(\tocode{\lam{\var}{\tmthree}})}\cons\stack}
  \\
    \usefmaca{\genv}{\dump\cons \pair{\code}{\stack}}{\tocode{\var}}{\stacktwo}
  & \tomachcthree &
    \usefmaca{\genv}{\dump}{\code}{\pair{\tocode{\var}}{\stacktwo}^\dead\cons\stack}
  & \tocode{\var} \notin \domSctx{\genv}
  \\
    \usefmaca{\genv_1 \cons \esub\var{\phi^\dead} \cons \genv_2}{\dump\cons\pair{\code}{\stack}}{\tocode{\var}}{\stacktwo}
  & \tomachcfour &
    \usefmaca{\genv_1 \cons \esub\var{\phi^\dead} \cons \genv_2}{\dump}{\code}{\pair{\tocode{\var}}{\stacktwo}^\dead\cons\stack}
  \\
    \usefmaca{\genv_1 \cons \esub\var{\herval{\codethree}} \cons \genv_2}{\dump\cons\pair{\code}{\stack}}{\tocode{\var}}{\estack}
  & \tomachcfive &
    \usefmaca{\genv_1 \cons \esub\var{\herval{\codethree}} \cons \genv_2}{\dump}{\code}{\herval{\tocode{\var}}\cons\stack}
  \\
    \usefmaca{\genv_1 \cons \esub\var{\herval\codethree} \cons \genv_2}{\dump}{\tocode{\var}}{\stackiteml\cons\stack}
  & \tomachue &
    \usefmaca{\genv_1 \cons \esub\var{\herval\codethree} \cons \genv_2}{\dump}{\rename{\codethree}}{\stackiteml\cons\stack}
  \end{array}
\]
where $\rename{\codetwo}$ is any code $\alpha$-equivalent to
$\codetwo$ that preserves well-naming of the machine. Note that in
\cite{AccattoliC15} there are two syntactically different sorts of
variables, corresponding to free and bound variables
respectively. Here we use only one sort of variables,
but this is just a matter of presentation.
\end{definition}

We say that a state $\statei$ is \defn{initial} if and only if it is 
of the form $\statei = \glamourst\estack\code\estack\estack$ for some 
code $\code$.
A state $\state$ is \defn{reachable} if and only if there exists an 
initial state $\statei$ such that $\statei \tomachhole{}^* \state$.

Let $\statei = \glamourst\estack{\code_0}\estack\estack$ be an initial 
state in the \glamour, and $\genv$ a global environment.
We say that a stack item $\stackitem$ is \defn{$\genv$-\stabilized} 
if and only if:
\begin{enumerate}
\item
  $\stackiteml = \code$ 
  implies $\decode{\stackiteml} = \tm \in \HAbs{\aset}$,
  where $\aset = \expansion{\emptyset}{\decgenv}$ is an abstraction frame,
  and $\decgenv$ is the list of substitution contexts resulting
  from the decoding of $\genv$.
  This is equivalent to saying that if $\lab = \alive$ then $\decode{\stackiteml} \in \HAbs{\aset}$.
\item
  $\stackiteml = \pair{\code}{\stack}$
  implies $\decode{\stackiteml} = \decodep{\stack}{\tm} \in \Struct{\sset}$,
  where $\sset = \expansion{\fv{\code_0}}{\decgenv}$ is a structure frame,
  and $\decgenv$ is the list of substitution contexts resulting
  from the decoding of $\genv$.
  This is equivalent to saying that if $\lab = \dead$ then $\decode{\stackiteml} \in \Struct{\sset}$.
\end{enumerate}

\begin{remark}
\label{rem:stable_when_removing_dumps_stacks}
\quad
\begin{enumerate}
\item
  $\decodep{\dump}{\tm} \in \Stable{\aset}{\sset}$
  if and only if $\tm \in \Stable{\aset}{\sset}$
  and $\decode{\dump} \in \Stable{\aset}{\sset}$.
\item
  $\decodep{\stack}{\tm} \in \Stable{\aset}{\sset}$ if and only if
  $\tm \in \Stable{\aset}{\sset}$
  and $\decode{\stack} \in \Stable{\aset}{\sset}$.
\item
  $\decodep{\genv}{\tm} \in \Stable{\aset}{\sset}$ if and only if
  $\tm \in \Stable{\expansion{\aset}{\decode{\genv}}}{\expansion{\sset}{\decode{\genv}}}$
  and $\decode{\genv} \in \Stable{\aset}{\sset}$.
\end{enumerate}
\end{remark}

\begin{lemma}[(New) Invariants for the \glamour]
\label{lem:invariants_glamour}
Let $\stater$ be a reachable state from an initial state
$\glamourst\estack{\code_0}\estack\estack$. 
Then, $\stater$ verifies the following invariant:
\begin{enumerate}
\item
  $\decode\stater \in \Stable\emptyset{\fv{\tm_0}}$
\item
  If $\stater = \glamourst\dump\code{\stack_1\cons\stackiteml\cons\stack_2}\genv$,
  then $\stackiteml$ is $\genv$-\stabilized.
\item
  If $\stater = \glamourst\dump\code\stack{\genv_1\cons\esub\var\stackiteml\cons\genv_2}$,
  then $\stackiteml$ is $\genv_2$-\stabilized.
\end{enumerate}
\end{lemma}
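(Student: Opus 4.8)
The plan is to prove \cref{lem:invariants_glamour} by induction on the length of the reduction $\statei \tomachhole{}^* \stater$ from an initial state. The base case is immediate: an initial state $\glamourst\estack{\code_0}\estack\estack$ decodes to a pure term $\code_0$, which is stable by \cref{rem:pure_implies_stable}, and the dump, stack, and environment are all empty, so conditions (2) and (3) hold vacuously. For the inductive step I would assume $\statei \tomachhole{}^* \state$ satisfies the three invariants and that $\state \tomachhole{} \state'$ by one of the seven transition kinds, and verify that $\state'$ still satisfies them.

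First I would handle invariant (1). For the administrative transitions $\tomachcone$ through $\tomachcfive$, we have $\decode\state = \decode{\state'}$ by \cref{glamour_simulation}(3), so stability of $\decode{\state'}$ follows directly from the induction hypothesis. For $\tomachum$ and $\tomachue$ we know from \cref{glamour_simulation}(1)--(2) that $\decode\state \toustable\rulename \equiv \decode{\state'}$; the plan is then to invoke \cref{lem:reduction_preserves_stability} to see that the target of the $\toustable$-step is stable (using that $\toustable \subseteq \tov{}{}{}{}$ and that $\inv\emptyset{\fv{\tm_0}}{\decode\state}$ holds since $\decode\state$ is stable hence has the right free-variable inclusions), and then \cref{structeq_bisimulation}, or rather the fact that $\equiv$ preserves stable terms, to transfer stability across the structural equivalence. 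This needs the remark that $\equiv$ maps stable terms to stable terms, which should follow from \cref{lem:habs_structures_closed_by_equivalence} together with an easy induction on the $\equiv$-axioms; if such a lemma is not already available I would state and prove it as an auxiliary step.

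Next, invariants (2) and (3), the genuinely new content. These track that every stack item and every environment entry is $\genv$-rigid, i.e. decodes to a hereditary abstraction (if labelled $\alive$) or a structure (if labelled $\dead$) with respect to the frames read off from the environment. I would go transition by transition. The interesting cases are: $\tomachctwo$, which pushes $\herval{(\tocode{\lam\var\tmthree})}$ onto the stack — an abstraction is always a hereditary abstraction by \cref{rem:habs_st}(6), so the $\alive$ label is justified; $\tomachcthree$ and $\tomachcfour$, which push $\pair{\tocode\var}{\stacktwo}^\dead$ — here I must argue $\decodep{\stacktwo}{\var} \in \Struct\sset$, using that $\var\notin\domSctx\genv$ (for $\tomachcthree$) or that $\var$ is bound to a $\dead$-labelled item in $\genv$ (for $\tomachcfour$, by the induction hypothesis on invariant (3) and the definition of the structure frame $\sset = \expansion{\fv{\code_0}}{\decgenv}$), combined with the fact that applying a structure to arguments stays a structure (\ruleStructApp, iterated along $\stacktwo$); $\tomachcfive$, which pushes $\herval{\tocode\var}$ and needs $\var$ hereditarily bound to an abstraction, again by the induction hypothesis on (3); and $\tomachum$, which moves the top stack item $\stackiteml$ into the environment as $\esub\var\stackiteml$ — here the item was $\genv$-rigid by the induction hypothesis on (2), and I need it to remain $\genv$-rigid as an environment entry, which is essentially the same condition since the frames only grow. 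The transition $\tomachue$ replaces the focus code by $\rename\codethree$, which does not touch stacks or environments, so (2) and (3) are inherited modulo $\alpha$-renaming.

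The main obstacle I anticipate is bookkeeping the frame arithmetic: the abstraction frame $\aset = \expansion\emptyset{\decode\genv}$ and structure frame $\sset = \expansion{\fv{\code_0}}{\decode\genv}$ change when $\genv$ is extended by $\tomachum$, and I must check that items which were rigid with respect to the smaller frames remain rigid with respect to the larger ones — this is the monotonicity \cref{rem:habs_st}(1) and (4), but one has to be careful that the newly bound variable $\var$ is added to the correct frame (the $\HAbs{}$ frame when $\lab=\alive$, the $\Struct{}$ frame when $\lab=\dead$), which is exactly what the expansion operations $\expansion\aset\sctx$ and $\expansion\sset\sctx$ are designed to do, and what \cref{lem:tL_hAbs_t_hAbsExp} and \cref{lem:t_Struct_t_StructExp} let me exploit. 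A secondary subtlety is that the labels $\lab$ must genuinely record the shape of the decoded item (abstraction-like versus application-headed), so in each push transition I must confirm the label matches; fortunately the machine's transitions are built precisely to maintain this, and \cref{rem:habs_st} makes each check a one-liner.
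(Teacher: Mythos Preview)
Your overall strategy and your treatment of invariants (2) and (3) are essentially what the paper does. However, there is a circularity problem in your plan for invariant (1). You propose to invoke \cref{glamour_simulation} to obtain $\decode\state = \decode{\state'}$ (administrative) or $\decode\state \toustable_\rulename\equiv\decode{\state'}$ (for $\tomachum$, $\tomachue$), and then push stability through. But the proof of \cref{glamour_simulation} \emph{uses} \cref{lem:invariants_glamour}: in its $\tomachum$ and $\tomachue$ cases, the step marked $(\star)$ is justified precisely by items~1 and~2 of the invariants lemma (stability of the decoded state, and $\genv$-rigidity of the top stack item are needed to fire the stable $\ruleUDbStable$ or $\ruleULsv$ rule). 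So you cannot appeal to the simulation lemma here.

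The paper instead establishes invariant (1) directly, transition by transition. For the administrative transitions it computes $\decode\state = \decode{\state'}$ syntactically (these calculations are in fact what \cref{glamour_simulation}(3) later reuses, not the other way around). For $\tomachum$ and $\tomachue$ it does not go through a reduction step at all: it decomposes $\decode\state \in \Stable\emptyset{\fv{\tm_0}}$ via \cref{rem:stable_when_removing_dumps_stacks} into stability of the dump, stack, environment, focus, and stack item separately, and then reassembles these pieces to obtain stability of $\decode{\state'}$. This avoids both the circularity and your auxiliary lemma that $\equiv$ preserves stability. Your per-transition analysis for (2) and (3) is fine and matches the paper.
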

% Label: lem:invariants_glamour

\begin{proof}
We prove that each transition step $\tomachhole{}$ in the \glamour machine
preserves the invariant \ie, if $\state \tomachhole{} \statetwo$ and $\state$ verifies the invariant,
then $\statetwo$ verifies the invariant as well.
We proceed by induction on the transition step $\tomachhole{}$.
\begin{itemize}
\item
  If $\state = \glamourst{\dump}{\code\,\codethree}{\stack}{\genv} \tomachcone 
  \glamourst{\dump\cons\pair{\code}{\stack}}{\codethree}{\estack}{\genv} = \statetwo$
  then
  \[
    \begin{array}{rcl}
      \decode{\glamourst{\dump}{\code\,\codethree}{\stack}{\genv}}
    & = &
      \ctxof{(\decodep{\genv}{\decodep{\dump}{\decode{\stack}}})}{\tm\,\tmthree}
    \\
    & = &
      \decodep{\genv}{\decodep{\dump}{\decodep{\stack}{\tm\,\tmthree}}}
    \\
    & = &
      \ctxof{(\decodep{\genv}{\decodep{\dump}{\decodep{\stack}{\tm\ectx}}})}{\tmthree}
    \\
    & =  &
    \ctxof{(\decodep{\genv}{(\decodep{\dump}{\decodep{\stack}{\tm\ectx}})\ctxholep{\ectx}})}{\tmthree}
    \\
    & = &
      \ctxof{(\decodep{\genv}{(\decodep{\dump}{\decodep{\stack}{\tm\ectx}})\ctxholep{\decode{\estack}}})}{\tmthree}
    \\
    & = &
      \ctxof{(\decodep{\genv}{\decode{\dump\cons\pair{\code}{\stack}}\ctxholep{\decode{\estack}}})}{\tmthree}
    \\
    & = &
      \decode{\glamourst{\dump\cons\pair{\code}{\stack}}{\codethree}{\estack}{\genv}}
    \end{array}
  \]
  Therefore $\decode{\statetwo} \in \Stable{\emptyset}{\fv{\tm_0}}$ holds 
  as the translation of both states is the same.
  Item 2 of the invariant trivially holds
  for $\statetwo$ since it has an empty stack, 
  and item 3 holds for $\statetwo$
  since it holds for $\state$ which has the same environment $\genv$.
\item
  If $\state = \glamourst{\dump}{\lam{\var}{\code}}{\stackiteml\cons\stack}{\genv} \tomachum
  \glamourst{\dump}{\code}{\stack}{\esub\var{\stackiteml} \cons \genv} = \statetwo$,
  then $\state$ verifies the invariant:
  \begin{enumerate}
  \item[1'.]
    $\decode{\glamourst{\dump}{\lam{\var}{\code}}{\stackiteml\cons\stack}{\genv}} 
    = \decodep{\genv}{\decodep{\dump}{\decodep{\stack}{(\lam{\var}{\tm})\,\decode{\stackiteml}}}}
    \in \Stable{\emptyset}{\fv{\tm_0}}$. 
    Then by \cref{rem:stable_when_removing_dumps_stacks}
    $\decode{\dump}$, $\decode{\stack}$ and $\decode{\stackiteml}$ are all in 
    $\Stable{\expansion{\emptyset}{\decode{\genv}}}{\expansion{\fv{\tm_0}}{\decode{\genv}}}$,
    and $\decode{\genv}\in\Stable{\emptyset}{\fv{\tm_0}}$
  \item[2'.]
    $\stackiteml$ is $\genv$-\stabilized, and
    if $\stack$ is of the form $\stack_1 \cons \stackitemtwol \cons \stack_2$,
    then $\stackitemtwol$ is $\genv$-\stabilized
  \item[3'.]
    If $\genv$ is of the form $\genv_1 \cons \esub{\vartwo}{\stackitemtwol} \cons \genv_2$,
    then $\stackitemtwol$ is $\genv_2$-\stabilized.
  \end{enumerate}
  Let us show that the invariant holds for $\statetwo$:
  \begin{enumerate}
  \item
    $\decodep{\genv}{\decodep{\dump}{\decodep{\stack}{\tm\esub{\var}{\decode{\stackiteml}}}}}
    \in \Stable{\emptyset}{\fv{\tm_0}}$:
    this holds directly from (1').
  \item 
    If $\stack$ is of the form $\stack_1 \cons \stackitemtwol \cons \stack_2$,
    then $\stackitemtwol$ is $\genv$-\stabilized:
    this holds directly from (2').
  \item 
    $\stackiteml$ is $\genv$-\stabilized, by item (2'), and the rest of this condition holds by (3').
  \end{enumerate}
\item 
  If $\state = \glamourst{\dump}{\var}{\stackiteml\cons\stack}{\genv}
  \tomachue \glamourst{\dump}{\rename{\code}}{\stackiteml\cons\stack}{\genv} = \statetwo$, 
  where $\genv = \genv_1 \cons \esub\var{\herval\code} \cons \genv_2$,
  then $\state$ verifies the invariant:
  \begin{enumerate}
  \item[1'.]
    $\decode{\glamourst{\dump}{\var}{\stackiteml\cons\stack}{\genv_1 \cons \esub\var{\herval\code} \cons \genv_2}} 
    =
    \decodep{\genv_2}{\decodep{\dump}{\decodep{\stack}{\var \, \decode{\stackiteml}}}\decode{\genv_1\esub{\var}{\code^\alive}}}
    \in \Stable{\emptyset}{\fv{\tm_0}}$.
    Then by \cref{rem:stable_when_removing_dumps_stacks}
    $\decode{\dump}$, $\decode{\stack}$ and $\decode{\stackiteml}$ are all in 
    $\Stable{\expansion{\emptyset}{\decode{\genv}}}{\expansion{\fv{\tm_0}}{\decode{\genv}}}$,
    and $\decode{\genv} = \decode{\genv_1\cons\esub{\var}{\code^\alive}\cons\genv_2}\in\Stable{\emptyset}{\fv{\tm_0}}$
  \item[2'.]
    $\stackiteml$ is $\genv$-\stabilized
  \item[3'.]
    $\code^\alive$ is $\genv_2$-\stabilized
  \end{enumerate}
  Let us show that the invariant holds for $\statetwo$:
  \begin{enumerate}
  \item
    $\decodep{\genv_2}{\decodep{\dump}{\decodep{\stack}{\tm \, \decode{\stackiteml}}}\decode{\genv_1\esub{\var}{\code^\alive}}}
     \in \Stable{\emptyset}{\fv{\tm_0}}$: this holds directly from (1')
  \item
    $\stackiteml$ is $\genv_1\cons\esub{\var}{\code^\alive}\cons\genv_2$-\stabilized:
    this holds directly from (2')
  \item
    $\code^\alive$ is $\genv_2$-\stabilized:
    this holds directly from (3').
  \end{enumerate}
\item
  If $\state = \glamourst{\dump \cons \pair{\code}{\stack}}{\lam{\var}{\codethree}}{\estack}{\genv} \tomachctwo 
  \glamourst{\dump}{\code}{\herval{(\lam{\var}{\codethree})}\cons\stack}{\genv} = \statetwo$
  then
  \begin{align*}
    \decode{\glamourst{\dump \cons \pair{\code}{\stack}}{\lam{\var}{\codethree}}{\estack}{\genv}}
  & =
    \ctxof{(\decodep{\genv}{\decodep{\dump \cons \pair{\code}{\stack}}{\decode{\estack}}})}{\lam{\var}{\tmthree}}
  \\
  & =
    \ctxof{
      (\decodep{\genv}{\ctxof{\decodep{\dump}{\decodep{\stack}{\tm\ectx}}}{\decode{\estack}}})
    }{
      \lam{\var}{\tmthree}
    }
  \\
  & =
    \ctxof{
      (\decodep{\genv}{\ctxof{\decodep{\dump}{\decodep{\stack}{\tm\ectx}}}{\ectx}})
    }{
      \lam{\var}{\tmthree}
    }
  \\
  & =
    \ctxof{
      (\decodep{\genv}{\decodep{\dump}{\decodep{\stack}{\tm\ectx}}})
    }{
      \lam{\var}{\tmthree}
    }
  \\
  & =
    (\decodep{\genv}{\decodep{\dump}{\decodep{\stack}{\tm\,\lam{\var}{\tmthree}}}})
  \\
  & =
    \ctxof{(\decodep{\genv}{\decodep{\dump}{\decodep{\stack}{\ectx{(\lam{\var}{\tmthree})}}}})}{\tm}
  \\
  & =
    \ctxof{(\decodep{\genv}{\decodep{\dump}{\decode{\herval{(\lam{\var}{\codethree})} \cons \stack}}})}{\tm}
  \\
  & =
    \decode{\glamourst{\dump}{\code}{\herval{(\lam{\var}{\codethree})} \cons \stack}{\genv}}
  \end{align*}
  % \[
  %   \begin{array}{rcl}
  %     \decode{\glamourst{\dump \cons \pair{\code}{\stack}}{\lam{\var}{\codethree}}{\estack}{\genv}}
  %   & = &
  %     \ctxof{(\decodep{\genv}{\decodep{\dump \cons \pair{\code}{\stack}}{\decode{\estack}}})}{\lam{\var}{\tmthree}}
  %   \\
  %   & = &
  %     \ctxof{
  %       (\decodep{\genv}{\ctxof{\decodep{\dump}{\decodep{\stack}{\tm\ectx}}}{\decode{\estack}}})
  %     }{
  %       \lam{\var}{\tmthree}
  %     }
  %   \\
  %   & = &
  %     \ctxof{
  %       (\decodep{\genv}{\ctxof{\decodep{\dump}{\decodep{\stack}{\tm\ectx}}}{\ectx}})
  %     }{
  %       \lam{\var}{\tmthree}
  %     }
  %   \\
  %   & = &
  %     \ctxof{
  %       (\decodep{\genv}{\decodep{\dump}{\decodep{\stack}{\tm\ectx}}})
  %     }{
  %       \lam{\var}{\tmthree}
  %     }
  %   \\
  %   & = &
  %     (\decodep{\genv}{\decodep{\dump}{\decodep{\stack}{\tm\,\lam{\var}{\tmthree}}}})
  %   \\
  %   & = &
  %     \ctxof{(\decodep{\genv}{\decodep{\dump}{\decodep{\stack}{\ectx{(\lam{\var}{\tmthree})}}}})}{\tm}
  %   \\
  %   & = &
  %     \ctxof{(\decodep{\genv}{\decodep{\dump}{\decode{\herval{(\lam{\var}{\codethree})} \cons \stack}}})}{\tm}
  %   \\
  %   & = &
  %     \decode{\glamourst{\dump}{\code}{\herval{(\lam{\var}{\codethree})} \cons \stack}{\genv}}
  %   \end{array}
  % \]
  Therefore $\decode{\statetwo} \in \Stable{\emptyset}{\fv{\tm_0}}$ holds 
  as the translation of both states is the same.
  Item 2 of the invariant trivially holds
  for $\statetwo$ since $\lam{\var}{\tmthree} \in \HAbs{\aset}$
  for any abstraction frame $\aset$.
  Item 3 holds for $\statetwo$ since it holds for $\state$ which has the same environment $\genv$.
\item
  If
  $\state = \glamourst{\dump\cons \pair{\code}{\stack}}{\var}{\stacktwo}{\genv}
    \tomachcthree
    \glamourst{\dump}{\code}{\pair{\var}{\stacktwo}^\dead\cons\stack}{\genv} = \statetwo$,
  with $\var \notin \domSctx{\genv}$, then
  \[
    \begin{array}{rcl}
      \decode{\glamourst{\dump\cons \pair{\code}{\stack}}{\var}{\stacktwo}{\genv}}
    & = &
      \ctxof{
        (\decodep{\genv}{\decodep{\dump \cons \pair{\code}{\stack}}{\decode{\stacktwo}}})
      }{
        \var
      }
    \\
    & = &
      \ctxof{
        (\decodep{\genv}{\ctxof{\decodep{\dump}{\decodep{\stack}{\tm\ectx}}}{\decode{\stacktwo}}})
      }{
        \var
      }
    \\
    & = &
      \ctxof{
        (\decodep{\genv}{\decodep{\dump}{\decodep{\stack}{\tm \, \decode{\stacktwo}}}})
      }{
        \var
      }
    \\
    & = &
      \decodep{\genv}{\decodep{\dump}{\decodep{\stack}{\tm \, \decodep{\stacktwo}{\var}}}}
    \\
    & = &
      \ctxof{(\decodep{\genv}{\decodep{\dump}{\decodep{\stack}{\ectx{\decodep{\stacktwo}{\var}}}}})}{\tm}
    \\
    & = &
      \ctxof{(\decodep{\genv}{\decodep{\dump}{\decodep{\stack}{\ectx{\decode{\pair{\var}{\stacktwo}^\dead}}}}})}{\tm}
    \\
    & = &
      \ctxof{(\decodep{\genv}{\decodep{\dump}{\decode{\pair{\var}{\stacktwo}^\dead\cons\stack}}})}{\tm}
    \\
    & = &
      \decode{\glamourst{\dump}{\code}{\pair{\var}{\stacktwo}^\dead\cons\stack}{\genv}}
    \end{array}
  \]
  Therefore $\decode{\statetwo} \in \Stable{\emptyset}{\fv{\tm_0}}$ holds 
  as the translation of both states is the same.
  Item 2 of the invariant holds for $\statetwo$
  since $\decodep{\stacktwo}{\var} \in \Struct{\expansion{\fv{\tm_0}}{\decode{\genv}}}$,
  given the fact that $\var \in \fv{\tm_0}$.
  And item 3 holds for $\statetwo$ since it holds for $\state$ which has the same environment $\genv$.
\item
  If $\state = \glamourst{\dump\cons\pair{\code}{\stack}}{\var}{\stacktwo}{\genv_1 \cons \esub\var{\phi^\dead} \cons \genv_2}
  \tomachcfour 
  \glamourst{\dump}{\code}{\pair{\var}{\stacktwo}^\dead\cons\stack}{\genv_1 \cons \esub\var{\phi^\dead} \cons \genv_2} = \statetwo$
  then
  \[
    \begin{array}{rcl}
      \decode{
        \glamourst
          {\dump\cons\pair{\code}{\stack}}
          {\var}
          {\stacktwo}
          {\genv_1\esub\var{\phi^\dead}\genv_2}
      }
    & = &
      \ctxof{
        (\decodep{
          \genv_1\esub\var{\phi^\dead}\genv_2
        }{
          \decodep{\dump \cons \pair{\code}{\stack}}{\decode{\stacktwo}}
        })
      }{
        \var
      }
    \\
    & = &
      \ctxof{
        (\ctxof{
          (\decodep{\genv_2}{\ectx\decode{\genv_1\esub\var{\phi^\dead}}})
        }{
          \decodep{\dump \cons \pair{\code}{\stack}}{\decode{\stacktwo}}
        })
      }{
        \var
      }
    \\
    & = &
      \ctxof{
        (\decodep{
          \genv_2
        }{
          \decodep{
            \dump \cons \pair{\code}{\stack}
          }{
            \decode{\stacktwo}
          }\decode{\genv_1\esub\var{\phi^\dead}}
        })
      }{
        \var
      }
    \\
    & = &
      \ctxof{
        (\decodep{
          \genv_2
        }{
          \ctxof{
            \decodep{\dump}{\decodep{\stack}{\tm\ectx}}
          }{
            \decode{\stacktwo}
          }\decode{\genv_1\esub\var{\phi^\dead}}
        })
      }{
        \var
      }
    \\
    & = &
      \ctxof{
        (\decodep{
          \genv_2
        }{
          \decodep{
            \dump
          }{
            \decodep{\stack}{\tm \, \decode{\stacktwo}}
          }\decode{\genv_1\esub\var{\phi^\dead}}
        })
      }{
        \var
      }
    \\
    & = &
      \decodep{
        \genv_2
      }{
        \decodep{
          \dump
        }{
          \decodep{
            \stack
          }{
            \tm \, \decodep{\stacktwo}{\var}
          }
        }\decode{\genv_1\esub\var{\phi^\dead}}
      }
    \\
    & = &
      \ctxof{
        ({\decodep{
          \genv_2
        }{
          \decodep{
            \dump
          }{
            \decodep{\stack}{\ectx\decodep{\stacktwo}{\var}}
          }\decode{\genv_1\esub\var{\phi^\dead}}
        }})
      }{
        \tm
      }
    \\
    & = &
      \ctxof{
        ({\decodep{
          \genv_2
        }{
          \decodep{
            \dump
          }{
            \decodep{\stack}{\ectx\decode{\pair{\var}{\stacktwo}^\dead}}
          }\decode{\genv_1\esub\var{\phi^\dead}}
        }})
      }{
        \tm
      }
    \\
    & = &
      \ctxof{
        ({\decodep{
          \genv_2
        }{
          \decodep{
            \dump
          }{
            \decode{\pair{\var}{\stacktwo}^\dead\cons\stack}
          }\decode{\genv_1\esub\var{\phi^\dead}}
        }})
      }{
        \tm
      }
    \\
    & = &
      \ctxof{
        (\ctxof{
          {\decodep{\genv_2}{\ectx\decode{\genv_1\esub\var{\phi^\dead}}}}
        }{
          \decodep{\dump}{\decode{\pair{\var}{\stacktwo}^\dead\cons\stack}}
        })
      }{
        \tm
      }
    \\
    & = &
      \ctxof{
        (\decodep{
          {\genv_1\esub\var{\phi^\dead}\genv_2}
        }{
          \decodep{\dump}{\decode{\pair{\var}{\stacktwo}^\dead\cons\stack}}
        })
      }{
        \tm
      }
    \\
    & = &
      \decode{
        \glamourst
          {\dump}
          {\code}
          {\pair{\var}{\stacktwo}^\dead\cons\stack}
          {\genv_1\esub\var{\phi^\dead}\genv_2}
      }
    \end{array}
  \]
  Therefore $\decode{\statetwo} \in \Stable{\emptyset}{\fv{\tm_0}}$ holds 
  as the translation of both states is the same.
  Item 2 of the invariant holds for $\statetwo$
  since $\decodep{\stacktwo}{\var} \in 
  \Struct{\expansion{\fv{\tm_0}}{\decode{\genv_1\cons\esub{\var}{\stackitem^\dead}\cons\genv_2}}}$,
  since the stack item is decorated with the label $\dead$.
  And item 3 holds for $\statetwo$ since it holds for $\state$ which has the same environment $\genv$.
\item
  If 
  $\state = \glamourst{\dump\cons\pair{\code}{\stack}}{\var}{\estack}{\genv_1 \cons \esub\var{\herval{\codethree}} \cons \genv_2}
  \tomachcfive \glamourst{\dump}{\code}{\herval{\var}\cons\stack}{\genv_1 \cons \esub\var{\herval{\codethree}} \cons \genv_2} = \statetwo$
  then
  \[
    \begin{array}{rcl}
      \decode{
        \glamourst
          {\dump\cons\pair{\code}{\stack}}
          {\var}
          {\estack}
          {\genv_1\esub\var{\herval{\codethree}}\genv_2}
      }
    & = &
      \ctxof{
        (\decodep{
          \genv_1\esub{\var}{\herval{\codethree}}\genv_2
        }{
          \decodep{\dump \cons \pair{\code}{\stack}}{\decode{\estack}}
        })
      }{
        \var
      }
    \\
    & = &
      \ctxof{
        (\ctxof{
          \decodep{\genv_2}{\ectx\decode{\genv_1\esub{\var}{\herval{\codethree}}}}
        }{
          \decodep{\dump \cons \pair{\code}{\stack}}{\decode{\estack}}
        })
      }{
        \var
      }
    \\
    & = &
      \ctxof{
        (\decodep{
          \genv_2
        }{
          \decodep{
            \dump \cons \pair{\code}{\stack}
          }{
            \decode{\estack}
          }\decode{\genv_1\esub{\var}{\herval{\codethree}}}
        })
      }{
        \var
      }
    \\
    & = &
      \ctxof{
        (\decodep{
          \genv_2
        }{
          \ctxof{
            \decodep{\dump}{\decodep{\stack}{\tm\ectx}}
          }{
            \decode{\estack}
          }\decode{\genv_1\esub{\var}{\herval{\codethree}}}
        })
      }{
        \var
      }
    \\
    & = &
      \ctxof{
        (\decodep{
          \genv_2
        }{
          \ctxof{
            \decodep{\dump}{\decodep{\stack}{\tm\ectx}}
          }{
            \ectx
          }\decode{\genv_1\esub{\var}{\herval{\codethree}}}
        })
      }{
        \var
      }
    \\
    & = &
      \ctxof{
        (\decodep{
          \genv_2
        }{
          \decodep{\dump}{\decodep{\stack}{\tm\ectx}}
          \decode{\genv_1\esub{\var}{\herval{\codethree}}}
        })
      }{
        \var
      }
    \\
    & = &
      \decodep{
        \genv_2
      }{
        \decodep{
          \dump
        }{
          \decodep{
            \stack
          }{
            \tm \, \var
          }
        }\decode{\genv_1\esub{\var}{\herval{\codethree}}}
      }
    \\
    & = &
      \ctxof{
        ({\decodep{
          \genv_2
        }{
          \decodep{
            \dump
          }{
            \decodep{\stack}{\ectx\var}
          }\decode{\genv_1\esub{\var}{\herval{\codethree}}}
        }})
      }{
        \tm
      }
    \\
    & = &
      \ctxof{
        ({\decodep{
          \genv_2
        }{
          \decodep{
            \dump
          }{
            \decode{\herval{\var}\cons\stack}
          }\decode{\genv_1\esub{\var}{\herval{\codethree}}}
        }})
      }{
        \tm
      }
    \\
    & = &
      \ctxof{
        (\ctxof{
          {\decodep{\genv_2}{\ectx\decode{\genv_1\esub{\var}{\herval{\codethree}}}}}
        }{
          \decodep{\dump}{\decode{\herval{\var}\cons\stack}}
        })
      }{
        \tm
      }
    \\
    & = &
      \ctxof{
        (\decodep{
          {\genv_1\esub{\var}{\herval{\codethree}}\genv_2}
        }{
          \decodep{\dump}{\decode{\herval{\var}\cons\stack}}
        })
      }{
        \tm
      }
    \\
    & = &
      \decode{
        \glamourst
          {\dump}{\code}{\herval{\var}\cons\stack}
          {\genv_1\esub{\var}{\herval{\codethree}}\genv_2}
      }
    \end{array}
  \]
  Therefore $\decode{\statetwo} \in \Stable{\emptyset}{\fv{\tm_0}}$ holds 
  as the translation of both states is the same.
  Item 2 of the invariant holds since 
  $\var \in \HAbs{\expansion{\emptyset}{\decode{\genv_1\cons\genv_2}} \cup \set{\var}}$,
  since the code $\codethree$ is decorated with the label $\alive$.
  And item 3 for $\statetwo$ holds since it holds for $\state$ which has the same environment $\genv$.
\end{itemize}
\end{proof}

\begin{lemma}
\label{lem:equivalence_in_translations}
Let $\var \notin \fv{\decode\stack}$ and $\var \notin \fv{\decode\dump}$.
Then:
\begin{enumerate}
\item
  $\decodep\stack{\tm\esub\var\tmtwo} \equiv \decodep\stack\tm\esub\var\tmtwo$
\item
  $\decodep\dump{\tm\esub\var\tmtwo} \equiv \decodep\dump\tm\esub\var\tmtwo$
\end{enumerate}
\end{lemma}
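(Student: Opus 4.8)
The plan is to prove both statements simultaneously by structural induction on the stack $\stack$ (for item~1) and on the dump $\dump$ (for item~2), exploiting the fact that $\decode{\_}$ distributes through these structures by plugging into the context hole. The base cases are where the structural equivalence axiom \ruleEquivEsLDist (left-distributivity of ES over application) and \ruleEquivEsRDist (right-distributivity) actually come into play; the inductive cases are routine congruence closures via \ruleEquivCongApp.

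For item~1, I would argue as follows. If $\stack = \estack$, then $\decodep{\estack}{\tm\esub\var\tmtwo} = \tm\esub\var\tmtwo = \decodep{\estack}{\tm}\esub\var\tmtwo$, so the statement holds by \ruleEquivRefl. If $\stack = \stackiteml \cons \stacktwo$, then by definition $\decodep{\stack}{\tm\esub\var\tmtwo} = \decodep{\stacktwo}{(\tm\esub\var\tmtwo)\ctxholep{\decode{\stackiteml}}}$. Since the hole sits on the left of the application $(\tm\esub\var\tmtwo) \, \decode{\stackiteml}$, I first use \ruleEquivEsLDist to move the ES outward: $(\tm \, \decode{\stackiteml})\esub\var\tmtwo \equiv \tm\esub\var\tmtwo \, \decode{\stackiteml}$, which requires $\var \notin \fv{\decode{\stackiteml}}$. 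This side condition follows from the hypothesis $\var \notin \fv{\decode\stack}$, since $\fv{\decode{\stackiteml}} \subseteq \fv{\decode{\stack}}$ — more precisely, the free variables of each stack item are contained in the free variables of the full stack's decoding, and likewise $\var \notin \fv{\decode{\stacktwo}}$. Then I apply the inductive hypothesis for $\stacktwo$ (whose decoding still avoids $\var$) to the term $\tm \, \decode{\stackiteml}$, obtaining $\decodep{\stacktwo}{(\tm \, \decode{\stackiteml})\esub\var\tmtwo} \equiv \decodep{\stacktwo}{\tm \, \decode{\stackiteml}}\esub\var\tmtwo = \decodep{\stack}{\tm}\esub\var\tmtwo$, and combine the two equivalences with \ruleEquivTrans, using \ruleEquivCongApp or the congruence closure of $\equiv$ to propagate the first equivalence under $\decodep{\stacktwo}{\cdot}$.

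Item~2 for dumps is entirely analogous. The base case $\dump = \estack$ gives $\decodep{\estack}{\tm\esub\var\tmtwo} = \tm\esub\var\tmtwo = \decodep{\estack}{\tm}\esub\var\tmtwo$. For $\dump = \dumptwo \cons \pair{\code}{\stack}$, we have $\decodep{\dump}{\tm\esub\var\tmtwo} = \decodep{\dumptwo}{\decodep{\stack}{\decode{\code}\ctxholep{(\tm\esub\var\tmtwo)}}}$; here the hole sits on the \emph{right} of the application $\decode{\code} \, (\tm\esub\var\tmtwo)$ (after plugging through $\stack$ via item~1 applied to $\decode{\code}\ectx$), so I use \ruleEquivEsRDist instead, which needs $\var \notin \fv{\decode{\code}}$; this is again subsumed by $\var \notin \fv{\decode{\dump}}$ since $\fv{\decode{\code}} \subseteq \fv{\decode{\dump}}$. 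Then I invoke item~1 of the same lemma (which I have already proved, or am proving simultaneously) to push the ES through $\stack$, and the inductive hypothesis for $\dumptwo$ to push it through $\dumptwo$, finishing by transitivity and congruence.

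The main obstacle I anticipate is purely bookkeeping rather than conceptual: making sure the free-variable side conditions of \ruleEquivEsLDist and \ruleEquivEsRDist are correctly discharged at each step, and that the naming convention (codes not being up to $\alpha$-equivalence, but their decodings being chosen $\alpha$-equivalent to the codes) does not interfere. In particular, one must be careful that when the hypothesis says $\var \notin \fv{\decode\stack}$, this genuinely implies $\var \notin \fv{\decode{\stackiteml}}$ and $\var \notin \fv{\decode{\stacktwo}}$ for the sub-components — this should follow from the compositional definition of the decoding but deserves an explicit (if brief) observation, perhaps as a small auxiliary remark about $\fv{\decode{\_}}$ being monotone with respect to the subterm relation on stacks and dumps. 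A secondary minor point is that item~2's inductive case actually relies on item~1, so if the two are stated as a single lemma they should either be proved in the order (1 then 2) or by a simultaneous induction with the right ordering of appeals; I would prove item~1 first and then use it freely in item~2.
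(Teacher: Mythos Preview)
Your proposal is correct and follows essentially the same approach as the paper's proof: structural induction on $\stack$ for item~1 and on $\dump$ for item~2, using \ruleEquivEsLDist and \ruleEquivEsRDist respectively at the innermost step, with item~2 invoking item~1 to push the ES through the embedded stack. One minor difference: the paper discharges the side conditions $\var \notin \fv{\decode{\stackiteml}}$ and $\var \notin \fv{\tmthree}$ by appeal to ``$\alpha$-equivalence'', whereas you (more precisely) derive them from the hypothesis $\var \notin \fv{\decode\stack}$ (resp.\ $\var \notin \fv{\decode\dump}$) via the evident inclusion of free variables of components in those of the whole decoding.
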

% Label: lem:equivalence_in_translations

\begin{proof}
\quad
\begin{enumerate}
  \item
    We proceed by induction on the structure of $\stack$.
    \begin{itemize}
    \item
      If $\stack = \estack$, then
      $
          \decodep{\estack}{\tm\esub{\var}{\tmtwo}}
        = \ctxof{\ectx}{\tm\esub{\var}{\tmtwo}}
        = \tm\esub{\var}{\tmtwo}
        \equiv
          \ctxof{\ectx}{\tm}\esub{\var}{\tmtwo}
        = \decodep{\estack}{\tm}\esub{\var}{\tmtwo}
      $.
    \item
      If $\stack = \stackiteml \cons \stacktwo$, then
      by $\alpha$-equivalence, we may assume
      in particular $\var \notin \fv{\decode{\stackiteml}}$.
      Therefore
      \[
        \begin{array}{rcll}
            \decodep{\stackiteml \cons \stacktwo}{\tm\esub{\var}{\tmtwo}}
          & = &
            \ctxof{\decodep{\stacktwo}{\ectx \, \decode{\stackiteml}}}{\tm\esub{\var}{\tmtwo}}
        \\
          & = & 
            \decodep{\stacktwo}{\tm\esub{\var}{\tmtwo} \, \decode{\stackiteml}}
        \\
          & \equiv & 
            \decodep{\stacktwo}{(\tm \, \decode{\stackiteml})\esub{\var}{\tmtwo}}
          & \text{(by $\ruleEquivEsLDist$, since $\var \notin \fv{\decode{\stackiteml}}$)}
        \\
          & \equiv & 
            \decodep{\stacktwo}{\tm \, \decode{\stackiteml}}\esub{\var}{\tmtwo}
          & \text{(by \ih on $\stacktwo$)}
        \\
          & = &
            \decodep{\stackiteml \cons \stacktwo}{\tm}\esub{\var}{\tmtwo}
        \end{array}
      \]
    \end{itemize}
  \item
    We proceed by induction on the structure of $\dump$.
    \begin{itemize}
    \item
      If $\dump = \estack$, then it is analogous to the base case in the previous item.
    \item
      If $\dump = \dumptwo \cons \pair{\codethree}{\stack}$, then
      by $\alpha$-equivalence, we may assume in particular
      $\var \notin \fv{\tmthree}$.
      Therefore
      \[
        \begin{array}{rcll}
            \decodep{\dumptwo \cons \pair{\codethree}{\stack}}{\tm\esub{\var}{\tmtwo}}
          & = &
            \ctxof{\decodep{\dumptwo}{\decodep{\stack}{\tmthree \, \ectx}}}{\tm\esub{\var}{\tmtwo}}
        \\
          & = & 
            \decodep{\dumptwo}{\decodep{\stack}{\tmthree \, \tm\esub{\var}{\tmtwo}}}
        \\
          & \equiv & 
            \decodep{\dumptwo}{\decodep{\stack}{(\tmthree \, \tm)\esub{\var}{\tmtwo}}}
          & \text{(by $\ruleEquivEsRDist$, since $\var \notin \fv{\tmthree}$)}
        \\
          & \equiv & 
            \decodep{\dumptwo}{\decodep{\stack}{\tmthree \, \tm}\esub{\var}{\tmtwo}}
          & \text{(by (1))}
        \\
          & \equiv & 
            \decodep{\dumptwo}{\decodep{\stack}{\tmthree \, \tm}}\esub{\var}{\tmtwo}
          & \text{(by \ih on $\dumptwo$)}\\
          & = &
            \decodep{\dumptwo \cons \pair{\codethree}{\stack}}{\tm}\esub{\var}{\tmtwo}
        \end{array}
      \]
    \end{itemize}
\end{enumerate}
\end{proof}

\glamourSimulation*
% Label: glamour_simulation

\begin{proof}
We proceed by case analysis of $\tomachhole{}$.
To lighten the proof, we simplify
the notation $\esub{\var}{\stackiteml} \cons \genv$
by $\esub{\var}{\stackiteml}\genv$.
\begin{enumerate}
\item
  If $\glamourst{\dump}{\lam{\var}{\code}}{\stackiteml\cons\stack}{\genv} \tomachum
  \glamourst{\dump}{\code}{\stack}{\esub\var{\stackiteml} \cons \genv}$
  then
  \[
    \begin{array}{rcll}
      \decode{\glamourst{\dump}{\lam{\var}{\code}}{\stackiteml\cons\stack}{\genv}}
    & = &
      \ctxof
        {(\decodep{\genv}{\decodep{\dump}{\decode{\stackiteml\cons\stack}}})}
        {\lam{\var}{\tm}}
    \\
    & = &
      \ctxof
        {(\decodep{\genv}{\decodep{\dump}{\decodep{\stack}{\ectx\decode{\stackiteml}}}})}
        {\lam{\var}{\tm}}
    \\
    & = &
      \decodep
        {\genv}
        {\decodep{\dump}{\decodep{\stack}{(\lam{\var}{\tm})\,\decode{\stackiteml}}}}
    \\
    & & \tostable{\ruledb}{\emptyset}{\fv{\statei}}{\nonapp}
    & (\star)
    \\
    & &
      \decodep{\genv}{\decodep{\dump}{\decodep{\stack}{\tm\esub{\var}{\decode{\stackiteml}}}}}
    \\
    & \equiv &
      \decodep{\genv}{\decodep{\dump}{\decodep{\stack}{\tm}}\esub{\var}{\decode{\stackiteml}}}
    & \text{By \cref{lem:equivalence_in_translations}} (*)
    \\
    & = &
      \ctxof{(\decodep{\genv}{\decodep{\dump}{\decode{\stack}}\esub{\var}{\decode{\stackiteml}}})}{\tm}
    \\
    & = &
      \ctxof
        {(\ctxof
          {\decodep{\genv}{\ectx\esub{\var}{\decode{\stackiteml}}}}
          {\decodep{\dump}{\decode{\stack}}})}
        {\tm}
    \\
    & = &
      \ctxof
        {(\decodep{\esub{\var}{\stackiteml}\genv}{\decodep{\dump}{\decode{\stack}}})}
        {\tm}
    \\
    & = &
      \decode{\glamourst{\dump}{\code}{\stack}{\esub{\var}{\stackiteml}\genv}}
    \end{array}
  \]
  The step $(\star)$ holds by the following:
  $\decodep{\genv}{\decodep{\dump}{\decodep{\stack}{(\lam{\var}{\tm})\,\decode{\stackiteml}}}}
  \in \Stable{\emptyset}{\fv{\tm_0}}$ and
  $\stackiteml$ is $\genv$-\stabilized
  by \cref{lem:invariants_glamour} (items 1 and 2) respectively.
  Then 
  $\decode{\stackiteml} \in \HAbs{\expansion{\emptyset}{\decode{\genv}}} 
    \cup \Struct{\expansion{\fv{\tm_0}}{\decode{\genv}}}$,
  so applying congruence reduction rules accordingly to reach the redex 
  $(\lam{\var}{\tm}) \, \decode{\stackiteml}$, we can reduce this subterm with
  rule $\ruleUDbStable$, so that
  $
    (\lam{\var}{\tm}) \, \decode{\stackiteml} 
    \tostable{\ruledb}{\expansion{\emptyset}{\decode{\genv}}}{\expansion{\fv{\tm_0}}{\decode{\genv}}}{\appflag}
    \tm\esub{\var}{\decode{\stackiteml}}
  $.

  On the other hand, the step $(*)$ holds applying rule $\ruleEquivEsLDist$, 
  as $\var \notin \fv{\decode{\stack}}$ and $\var \notin \fv{\decode{\dump}}$ 
  holds by $\alpha$-conversion.
\item 
  If $\glamourst{\dump}{\var}{\stackiteml\cons\stack}{\genv_1 \cons \esub\var{\herval{\tocode{\val}}} \cons \genv_2}
  \tomachue \glamourst{\dump}{\rename{\tocode{\val}}}{\stackiteml\cons\stack}{\genv_1 \cons \esub\var{\herval{\tocode{\val}}} \cons \genv_2}$
  then
  \[
    \begin{array}{rcll}
      \decode{
        \glamourst
          {\dump}{\var}{\stackiteml\cons\stack}
          {\genv_1\esub\var{\herval{\tocode\val}}\genv_2}
      }
    & = &
      \decodep{
        \evctx_{\glamourst{\dump}{\var}{\stackiteml\cons\stack}{\genv_1\esub\var{\herval{\tocode\val}}\genv_2}}
      }{
        \var
      }
    \\
    & = &
      \ctxof{
        (\decodep{
          \genv_1\esub{\var}{\herval{\tocode\val}}\genv_2
        }{
          \decodep{\dump}{\decode{\stackiteml\cons\stack}}
        })
      }{
        \var
      }
    \\
    & = &
      \ctxof{
        (\ctxof{
          (\decodep{\genv_2}{\ectx\decode{\genv_1\esub{\var}{\herval{\tocode\val}}}})
        }{
          \decodep{\dump}{\decode{\stackiteml\cons\stack}}
        })
      }{
        \var
      }
    \\
    & = &
      \ctxof{
        (\decodep{
          \genv_2
        }{
          \decodep{\dump}{\decode{\stackiteml\cons\stack}
          }\decode{\genv_1\esub{\var}{\herval{\tocode\val}}}
        })
      }{
        \var
      }
    \\
    & = &
      \ctxof{
        (\decodep{
          \genv_2
        }{
          \decodep{\dump}{\decodep{\stack}{\ectx\decode{\stackiteml}}
          }\decode{\genv_1\esub{\var}{\herval{\tocode\val}}}
        })
      }{
        \var
      }
    \\
    & = &
      \decodep{\genv_2}{
        \decodep{\dump}{
          \decodep{\stack}{\var \, \decode{\stackiteml}}
        }\decode{\genv_1\esub{\var}{\herval{\tocode\val}}}
      }
    \\
    & & \tostable{\rulelsv}{\emptyset}{\fv{\tm_0}}{\nonapp}
    & (\star)
    \\
    & &
      \decodep{
        \genv_2
      }{
        \decodep{
          \dump
        }{
          \decodep{
            \stack
          }{
            \tm \, \decode{\stackiteml}
          }
        }\decode{\genv_1\esub{\var}{\herval{\tocode\val}}}
      }
    \\
    & = &
      \ctxof{
        ({\decodep{
          \genv_2
        }{
          \decodep{
            \dump
          }{
            \decodep{\stack}{\ectx\decode{\stackiteml}}
          }\decode{\genv_1\esub{\var}{\herval{\tocode\val}}}
        }})
      }{
        \tm
      }
    \\
    & = &
      \ctxof{
        ({\decodep{
          \genv_2
        }{
          \decodep{
            \dump
          }{
            \decode{\stackiteml\cons\stack}
          }\decode{\genv_1\esub{\var}{\herval{\tocode\val}}}
        }})
      }{
        \tm
      }
    \\
    & = &
      \ctxof{
        (\ctxof{
          {\decodep{\genv_2}{\ectx\decode{\genv_1\esub{\var}{\herval{\tocode\val}}}}}
        }{
          \decodep{\dump}{\decode{\stackiteml\cons\stack}}
        })
      }{
        \tm
      }
    \\
    & = &
      \ctxof{
        (\decodep{
          {\genv_1\esub{\var}{\herval{\tocode\val}}\genv_2}
        }{
          \decodep{\dump}{\decode{\stackiteml\cons\stack}}
        })
      }{
        \tm
      }
    \\
    & = &
      \decodep{
        \evctx_{\glamourst{\dump}{\rename{\tocode\val}}{\stackiteml\cons\stack}{\genv_1\esub\var{\herval{\tocode{\val}}}\genv_2}}
      }{
        \tm
      }
    \\
    & = &
      \decode{
        \glamourst
          {\dump}{\rename{\code}}{\stackiteml\cons\stack}
          {\genv_1\esub\var{\herval{\tocode\val}}\genv_2}
      }
    \end{array}
  \]
  The step $(\star)$ holds since
  $\decodep{\genv_2}
    {\decodep{\dump}{\decodep{\stack}{\var \, \decode{\stackiteml}}}\decode{\genv_1\esub{\var}{\herval{\tocode{\val}}}}}
  \in \Stable{\emptyset}{\fv{\tm_0}}$ 
  by item 1 of \cref{lem:invariants_glamour}.
  So applying congruence reduction
  rules and rule $\ruleULsv$ accordingly to reach the redex 
  $\var \, \decode{\stackiteml}$, we can reduce this subterm with
  rule $\ruleUAppL$, so that the reduction
  $
    \var \, \decode{\stackiteml}
    \tostable
      {\rulesub{\var}{\val}}
      {\expansion{\emptyset}{\decode{\genv_1\genv_2}} \cup \set{\var}}
      {\expansion{\fv{\tm_0}}{\decode{\genv_1\genv_2}}}{\appflag}
    \val \, \decode{\stackiteml}$,
  is derived from applying rule $\ruleUSub$:
  $
    \var
    \tostable
      {\rulesub{\var}{\val}}
      {\expansion{\emptyset}{\decode{\genv_1\genv_2}} \cup \set{\var}}
      {\expansion{\fv{\tm_0}}{\decode{\genv_1\genv_2}}}{\app}
    \val
  $.
\item
  If $\glamourst{\dump}{\code\,\codethree}{\stack}{\genv} \tomachcone
  \glamourst{\dump\cons\pair{\code}{\stack}}{\codethree}{\estack}{\genv}$
  then $\decode{\glamourst{\dump}{\code\,\codethree}{\stack}{\genv}} = 
    \decode{\glamourst{\dump\cons\pair{\code}{\stack}}{\codethree}{\estack}{\genv}}$,
    is proved as in case $\tomachcone$ in \cref{lem:invariants_glamour}.
\item
  If $\glamourst{\dump \cons \pair{\code}{\stack}}{\lam{\var}{\codethree}}{\estack}{\genv}
  \tomachctwo \glamourst{\dump}{\code}{\herval{(\lam{\var}{\codethree})}\cons\stack}{\genv}$
  then
  $\decode{\glamourst{\dump \cons \pair{\code}{\stack}}{\lam{\var}{\codethree}}{\estack}{\genv}} =
  \decode{\glamourst{\dump}{\code}{\herval{(\lam{\var}{\codethree})} \cons \stack}{\genv}}$, 
  is proved as in case $\tomachctwo$ in \cref{lem:invariants_glamour}.
\item
  If
  $\glamourst{\dump\cons \pair{\code}{\stack}}{\var}{\stacktwo}{\genv}
    \tomachcthree
    \glamourst{\dump}{\code}{\pair{\var}{\stacktwo}^\dead\cons\stack}{\genv}$,
  with $\var \notin \domSctx{\genv}$, then
  $\decode{\glamourst{\dump\cons \pair{\code}{\stack}}{\var}{\stacktwo}{\genv}}
  = \decode{\glamourst{\dump}{\code}{\pair{\var}{\stacktwo}^\dead\cons\stack}{\genv}}$,
  is proved as in case $\tomachcthree$ in \cref{lem:invariants_glamour}.
\item
  If $\glamourst{\dump\cons\pair{\code}{\stack}}{\var}{\stacktwo}{\genv_1 \cons \esub\var{\phi^\dead} \cons \genv_2}
  \tomachcfour \glamourst{\dump}{\code}{\pair{\var}{\stacktwo}^\dead\cons\stack}{\genv_1 \cons \esub\var{\phi^\dead} \cons \genv_2}$
  then
  $
    \decode{\glamourst{\dump\cons\pair{\code}{\stack}}{\var}{\stacktwo}{\genv_1\esub\var{\phi^\dead}\genv_2}} =
    \decode{\glamourst{\dump}{\code}{\pair{\var}{\stacktwo}^\dead\cons\stack}{\genv_1\esub\var{\phi^\dead}\genv_2}}
  $
  is proved as in case $\tomachcfour$ in \cref{lem:invariants_glamour}.
\item
  \sloppy
  If $\glamourst{\dump\cons\pair{\code}{\stack}}{\var}{\estack}{\genv_1 \cons \esub\var{\herval{\codethree}} \cons \genv_2}
  \tomachcfive \glamourst{\dump}{\code}{\herval{\var}\cons\stack}{\genv_1 \cons \esub\var{\herval{\codethree}} \cons \genv_2}$
  then
  $\decode{\glamourst{\dump\cons\pair{\code}{\stack}}{\var}{\estack}{\genv_1\esub\var{\herval{\codethree}}\genv_2}}\
  = \decode{\glamourst{\dump}{\code}{\herval{\var}\cons\stack}{\genv_1\esub{\var}{\herval{\codethree}}\genv_2}}$
  is proved as in case $\tomachcfive$ in \cref{lem:invariants_glamour}.
  
\end{enumerate}
\end{proof}

Now, we give the main results of this section.
On the first hand, we have the high-level implementation theorem:
\highLevelImplementation*
\begin{proof}
Let $\tm \tobetafireball^n \tm'$.
By Thm.~8, Thm.~9 and Coro.~1 in \cite{AccattoliC15},
there is a sequence of $p$ transitions
$\state \tomachhole{}^p \statetwo$
in the \glamour,
where $\state$ is an initial state such that $\decode{\state} = \tm$
and $p \in O(|\tm| \cdot (n^2+1))$.
By \cref{glamour_simulation,structeq_bisimulation}
there exists a term $\tm''$ such that
$\tm = \decode{\state} \toustablen{k} \tm'' \equiv \decode{\state'}$
where $k \leq p$, so $k \in O(|\tm| \cdot (n^2+1))$.
To conclude, we are left to show that $\unfold{\tm''} = \tm'$.
It is easy to see that $\tm'' \equiv \decode{\state'}$
implies $\unfold{\tm''} = \unfold{\decode{\state'}}$,
so it suffices to show that $\unfold{\decode{\state'}} = \tm'$.
This is a consequence of Thm.~3, Def.~1, and Thm.~8 in~\cite{AccattoliC15}.
\end{proof}

Next we move on the low-level implementation theorem:
\lowLevelImplementation*
\begin{proof}
First we claim that the \glamour terminates when starting from the initial
state $\state$.
Indeed, $\tomachhole{}$ can be written as the union of $\tomachume$
and $\tomachhole{\admsym_1,..,\admsym_5}$.
The relation $\tomachhole{\admsym_1,..,\admsym_5}$ is known to be terminating
from~\cite{AccattoliC15},
so an infinite reduction $\state \tomachhole{}\tomachhole{}\hdots$
must contain an infinite number of $\tomachume$ steps.
By \cref{glamour_simulation,structeq_bisimulation}
this means that there must exist an infinite $\toustable$ reduction
starting from $\decode{\state} = \tm$.
This is impossible because $\tm$ is known to have a normal form
and $\toustable$ has the diamond property.

Now let $\state \tomachhole{}^k \state'$ be a reduction to normal form in the
\glamour
containing $m$ multiplicative, $e$ exponential, and $c$ administrative steps,
so $k = m + e + c$.
By \cref{glamour_simulation,structeq_bisimulation}
we have $\tm = \decode{\state} \toustablen{m+e} \tm'' \equiv \decode{\state'}$.
Moreover $\tm \toustablen{n} \tm'$ by hypothesis.
Since $\state'$ is $\tomachhole{}$-normal, we know by \cref{glamour_simulation}
that $\decode{\state'}$ is $\toustable$-normal, so $\tm''$ is also $\toustable$-normal.
But $\toustable$ has the diamond property, so $\tm' = \tm''$ and $n = m + e$.
To conclude, we are left to show that $k \in O(|t|(n+1))$.
By Lemma~6 in~\cite{AccattoliC15} we know that $c \in O(|t|(e+1))$.
Since $n = m + e$, finally we have that $k = m + e + c \in O(|t|(n+1))$.
\end{proof}

\section{Proofs of Section~\ref{sec:typing} ``A Quantitative Interpretation''}
\label{app:typing}
In this section we show the results concerning the typing system $\typesystem$.
We start with general lemmas and remarks used through this section, and then
we show soundness and completeness in
\cref{sec:soundness_results} and \cref{sec:completeness_results} respectively.

\relevance*
% Label: lem:relevance

\begin{proof}
By induction on $\deriv$. We omit cases \ruleTypAppC and \ruleTypES 
as they are analogous to case \ruleTypAppP.
\begin{enumerate}
\item \ruleTypVar.
  Then, $\derivs\deriv{\judg[0, \numarrows\mtyp]{\var : \mtyp}\var\mtyp}$, 
  where $\tctx = \var : \mtyp$ and $\cm = 0$ and $\ce = \numarrows\mtyp$ 
  and $\tm = \var$.
  Therefore, $\dom{\var : \mtyp} = \set\var = \rv\var = \fv\var$.
\item \ruleTypAbs.
  Then
  \[
    \inferrule{
      \left(
        \derivs{\deriv_i}{
          \judg[\cm_i, \ce_i]{\tctx_i; \var : \optmtyptwo_i}\tmtwo{\mtypthree_i}
        }
      \right)_\iI
    }{
      \judg[+_\iI \cm_i, +_\iI \ce_i]
        {+_\iI \tctx_i}{\lam\var\tmtwo}{\mset{\optmtyptwo_i \to \mtypthree_i}_\iI}
    }\ruleTypAbs
  \]
  where $\tctx = +_\iI \tctx_i$ and $\cm = +_\iI \cm_i$ and 
  $\ce = +_\iI \ce_i$ and $\tm = \lam\var\tmtwo$ and 
  $\mtyp = \mset{\optmtyptwo_i \to \mtypthree_i}_\iI$.
  
  We can apply the \ih on $(\deriv_i)_\iI$, yielding 
  $\rv\tmtwo \subseteq \dom{\tctx_i; \var : \optmtyptwo_i} \subseteq \fv\tmtwo$
  for all $\iI$.
  For each $\iI$, we have to separate in cases, depending on whether 
  $\optmtyptwo_i = \none$.
  
  Since both cases are analogous, we only focus on
  the case in which $\optmtyptwo \neq \none$.
  So we have
  $\bigcup_\iI \rv{\tmtwo} \subseteq
  \bigcup_\iI \dom{\tctx_i; \var : \optmtyptwo_i} \subseteq
  \bigcup_\iI \dom{\tctx_i} \cup\set{\var} \subseteq \fv{\tmtwo}$.
  By removing $\var$ from the inequalities, we obtain:
  \[
      \rv{\lam{\var}{\tmtwo}}
    = \emptyset
    \subseteq
      \dom{+_\iI \tctx_i} 
    = \bigcup_\iI \dom{\tctx_i} 
    \subseteq \fv{\tmtwo} \setminus \set{\var} = \fv{\lam{\var}{\tmtwo}}
  \]
\item \ruleTypAppP.
  Then
  \[
    \inferrule{
      \derivs{\deriv_1}{\judg[\cm_1, \ce_1]{\tctx_1}\tmtwo\tightN}
      \sep
      \derivs{\deriv_2}{\judg[\cm_2, \ce_2]{\tctx_2}\tmthree\tight}
    }{
      \judg[\cm_1 + \cm_2, \ce_1 + \ce_2]{\tctx_1 + \tctx_2}{\tmtwo \, \tmthree}\tightN
    }\ruleTypAppP
  \]
  where $\tctx = \tctx_1 + \tctx_2$ and $\cm = \cm_1 + \cm_2$ and 
  $\ce = \ce_1 + \ce_2$ and $\tm = \tmtwo \, \tmthree$ and 
  $\mtyp = \tightN$.

  By the \ih on both $\deriv_1$ and $\deriv_2$, we yield 
  $\rv\tmtwo \subseteq \dom{\tctx_1} \subseteq \fv\tmtwo$ and 
  $\rv\tmthree \subseteq \dom{\tctx_2} \subseteq \fv\tmthree$, respectively.
  We conclude that $\rv{\tmtwo \, \tmthree} 
  \subseteq \dom{\tctx_1 + \tctx_2} = \dom{\tctx_1} \cup \dom{\tctx_2} 
  \subseteq \fv{\tmtwo \, \tmthree}$.
% \item \ruleTypAppC.
%   Analogous to the previous case.
% \item \ruleTypES.
%   Then
%   \[
%     \inferrule{
%       \derivs{\deriv_1}{\judg[\cm_1, \ce_1]{\tctx_1; \var : \optmtyptwo}\tmtwo\mtyp}
%       \sep
%       \optmtyptwo \mleq \mtyptwo
%       \sep
%       \derivs{\deriv_2}{\judg[\cm_2, \ce_2]{\tctx_2}\tmthree\mtyptwo}
%     }{
%       \judg[\cm_1 + \cm_2, \ce_1 + \ce_2]
%         {\tctx_1 + \tctx_2}{\tmtwo\esub\var\tmthree}\mtyp
%     }\ruleTypES
%   \]
%   where $\tctx = \tctx_1 + \tctx_2$ and $\cm = \cm_1 + \cm_2$ and 
%   $\ce = \ce_1 + \ce_2$ and $\tm = \tmtwo\esub\var\tmthree$.
  
%   By the \ih on both $\deriv_1$ and $\deriv_2$, we yield 
%   $\rv\tmtwo \subseteq \dom{\tctx_1; \var : \optmtyptwo} \subseteq \fv\tmtwo$ 
%   and $\rv\tmthree \subseteq \dom{\tctx_2} \subseteq \fv\tmthree$, 
%   respectively.
%   By removing $\var$ from the inequations, we obtain
%   $\rv\tmtwo \setminus \set\var \subseteq \dom{\tctx_1} \subseteq \fv\tmtwo \setminus \set\var$,
%   so we conclude
%   $\rv{\tmtwo\esub\var\tmthree} = (\rv\tmtwo \setminus \set\var) \cup \rv\tmthree
%   \subseteq \dom{\tctx_1 + \tctx_2} = \dom{\tctx_1} \cup \dom{\tctx_2} 
%   \subseteq (\fv{\tmtwo} \setminus \set\var) \cup \fv\tmthree = \fv{\tmtwo\esub\var\tmthree}$.
\end{enumerate}
\end{proof}

Some (simple) properties of the notion of \emph{appropriateness}, 
defined in \cref{sec:typing}, are the following:
\begin{remark}
\label{rem:isAppropriate} \mbox{}
\begin{enumerate}
\item 
  If $\isAppr{\aset}{\var: \mtyp}$ and $\optmtyp\mleq\mtyp$ 
  then $\isAppr{\aset}{\var: \optmtyp}$.
\item 
  If $\isAppr{\aset}{\tctx}$ and for all $\var \in \aset', \var \notin \dom{\tctx}$ 
  then $\isAppr{\aset \cup \aset'}{\tctx}$.
\item 
  If $\isAppr{\aset}{\tctx}$ and $\isAppr{\aset}{\tctxtwo}$ 
  then $\isAppr{\aset}{\tctx + \tctxtwo}$.
\end{enumerate}
\end{remark}

\begin{lemma}[Types of hereditary abstractions]
\label{lem:types_of_hereditary_abstractions}
Let $\judg[\cm,\ce]{\tctx}{\tm}{\mtyp}$
where $\tm \in \HAbs{\aset}$
and $\isAppr{\aset}{\tctx}$.
Then $\mtyp \neq \tightN$.
\end{lemma}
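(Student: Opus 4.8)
The plan is to prove \cref{lem:types_of_hereditary_abstractions} by induction on the derivation $\deriv \rhd \judg[\cm,\ce]{\tctx}{\tm}{\mtyp}$, using \cref{rem:habs_st} to restrict which typing rules can apply. Since $\tm \in \HAbs\aset$, by \cref{rem:habs_st}(2) the term $\tm$ is not an application, so rules \ruleTypAppP and \ruleTypAppC are immediately ruled out. This leaves only three cases: \ruleTypVar, \ruleTypAbs, and \ruleTypES.

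First I would handle the easy cases. If the last rule is \ruleTypAbs, then $\tm$ is an abstraction and $\mtyp$ is of the form $\mset{\optmtyp_i \to \mtyptwo_i}_\iI$, which is an arrow multi-type and hence syntactically distinct from $\tightN$; done. If the last rule is \ruleTypVar, then $\tm = \var$ and $\mtyp = \tctx(\var)$ (up to the shape of the environment). Here I use that $\var \in \HAbs\aset$ must have been derived by rule \ruleHAbsVar (the only applicable rule, since \ruleHAbsLam requires an abstraction and the \ruleHAbsSub$_i$ rules require an ES), so $\var \in \aset$. Then the appropriateness hypothesis $\isAppr\aset\tctx$ gives $\tctx(\var) \neq \tightN$, i.e. $\mtyp \neq \tightN$; done.

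The main case — and the expected obstacle — is \ruleTypES, where $\tm = \tmtwo\esub\var\tmthree$ and the derivation ends with premises $\judg[\cm_0,\ce_0]{\tctxtwo;\var:\optmtyp}{\tmtwo}{\mtyp}$, $\optmtyp \mleq \mtyp'$, and $\judg[\cm_1,\ce_1]{\tctxthree}{\tmthree}{\mtyp'}$, with $\tctx = \tctxtwo + \tctxthree$. I want to apply the induction hypothesis to the subterm $\tmtwo$, which requires knowing (i) $\tmtwo \in \HAbs{\aset'}$ for some suitable frame $\aset'$ and (ii) $\isAppr{\aset'}{\tctxtwo;\var:\optmtyp}$. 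For (i): the judgement $\tmtwo\esub\var\tmthree \in \HAbs\aset$ was derived either by \ruleHAbsSubi (giving $\tmtwo \in \HAbs\aset$, $\var \notin \aset$) or by \ruleHAbsSubii (giving $\tmtwo \in \HAbs{\aset \cup \set\var}$, $\var \notin \aset$, $\tmthree \in \HAbs\aset$). In the first subcase take $\aset' = \aset$; in the second take $\aset' = \aset \cup \set\var$. For (ii) I would argue by cases on whether $\var \in \aset'$. If $\var \notin \aset'$ (which holds in the \ruleHAbsSubi subcase, since $\var \notin \aset$), appropriateness of $\tctxtwo;\var:\optmtyp$ on $\aset'$ follows from $\isAppr\aset\tctx$ together with relevance (\cref{lem:relevance}) applied to the relevant subderivations, plus \cref{rem:isAppropriate}. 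If $\var \in \aset'$ (the \ruleHAbsSubii subcase, where $\aset' = \aset \cup \set\var$), then I additionally need $\optmtyp \neq \tightN$; this is where I would exploit $\tmthree \in \HAbs\aset$ and $\isAppr\aset\tctxthree$ to invoke the induction hypothesis on $\tmthree$, concluding $\mtyp' \neq \tightN$, and then use $\optmtyp \mleq \mtyp'$ together with the definition of $\mleq$ (where $\none \mleq \tight$ and $\mtyp'' \mleq \mtyp''$) to deduce $\optmtyp \neq \tightN$ as well, hence $\isAppr{\aset'}{\tctxtwo;\var:\optmtyp}$. Once (i) and (ii) are secured, the induction hypothesis on $\tmtwo$ yields $\mtyp \neq \tightN$, which is exactly the conclusion since $\tmtwo$ carries the same type $\mtyp$ as $\tm$. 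The delicate bookkeeping — keeping the abstraction frame, the typing environment, and the $\mleq$-subsumption synchronised across the ES, and invoking relevance to transfer appropriateness to the subcontext — is the part that requires care.
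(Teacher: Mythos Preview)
Your proposal is correct and follows the paper's proof essentially step for step: induction on the typing derivation, ruling out the application rules via \cref{rem:habs_st}, handling \ruleTypVar and \ruleTypAbs directly, and in the \ruleTypES case splitting on \ruleHAbsSubi/\ruleHAbsSubii with the key move of first applying the induction hypothesis to $\tmthree$ in the \ruleHAbsSubii subcase to obtain $\optmtyp \neq \tightN$. The only cosmetic difference is that the paper does not invoke relevance in the \ruleHAbsSubi subcase, since appropriateness of the subcontext follows directly from $\tctx = \tctxtwo + \tctxthree$ together with $\var \notin \aset$.
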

% Label: lem:types_of_hereditary_abstractions

\begin{proof}
By induction on the derivation of $\judg[m,e]{\tctx}{\tm}{\mtyp}$.
\begin{enumerate}
\item $\ruleTypVar$.
  Let $\judg[0,n]{\var : \mtyp}{\var}{\mtyp}$, with $n = \numarrows{\mtyp}$. Moreover,
  $\var \in \HAbs{\aset}$ and $\isAppr{\aset}{\var : \mtyp}$.
  By the premise of rule $\ruleHAbsVar$, we know that $\var \in \aset$, and
  thus $\mtyp \neq \tightN$
  by definition of $\isAppr{\aset}{\var : \mtyp}$.
\item $\ruleTypAbs$.
  This case is immediate since the judgement is
  $\judg[+_\iI m_i, +_\iI e_i]{+_\iI\tctx_i}{\lam{\var}{\tm}}{
      \mset{\optmtyp_i\to\mtyptwo_i}_\iI}$,
  whose type is not equal to $\tightN$.
\item $\ruleTypAppP$.
  This case is not possible, since the term is of the form $\tmtwo \, \tmthree$,
  and the hypothesis $\tmtwo \, \tmthree \in \HAbs{\aset}$ does not hold.
\item $\ruleTypAppC$.
  Analogous to the previous case.
\item $\ruleTypES$.
  Then
  \[
    \indrule{\ruleTypES}{
      \judg[m_1,e_1]{\tctx_1; \var : \optmtyptwo}{\tmtwo}{\mtyp}
      \sep
      \optmtyptwo \mleq \mtyptwo
      \sep
      \judg[m_2,e_2]{\tctx_2}{\tmthree}{\mtyptwo}
    }{
      \judg[m_1 + m_2,e_1 +e_2]{\tctx_1 + \tctx_2}{\tmtwo\esub{\var}{\tmthree}}{\mtyp}
    }
  \]
  where 
  $\tctx = \tctx_1 + \tctx_2$,
  $m = m_1 + m_2$,
  $e = e_1 + e_2$
  and $\tm = \tmtwo\esub{\var}{\tmthree}$.
  Moreover, $\tmtwo\esub{\var}{\tmthree} \in \HAbs{\aset}$, which can be derived either by rule $\ruleHAbsSubi$ or rule $\ruleHAbsSubii$:
  \begin{enumerate}
  \item $\ruleHAbsSubi$.
   Then $\tmtwo \in \HAbs{\aset}$ and $\var \notin \aset$. Given that $\tctx = \tctx_1 + \tctx_2$, then 
    $\isAppr{\aset}{\tctx_1;\var : \optmtyptwo}$.
    We can apply \ih on $\tmtwo$, yielding $\mtyp \neq \tightN$.
  \item $\ruleHAbsSubii$.
    \sloppy
    Then $\tmtwo \in \HAbs{\aset \cup \set{\var}}$, $\var \notin \aset$
    and $\tmthree \in \HAbs{\aset}$.
    Given that $\tctx = \tctx_1 + \tctx_2$, then $\isAppr{\aset}{\tctx_2}$.
    We can apply \ih on $\tmthree$, yielding $\mtyptwo \neq \tightN$.
    Since $\optmtyptwo \mleq \mtyptwo$, we can conclude that $\optmtyptwo \neq \tightN$
    so $\isAppr{\aset \cup \set{\var}}{\tctx_1; \var : \optmtyptwo}$ holds.
    By \ih on $\tmtwo$, 
    we have $\mtyp \neq \tightN$.
  \end{enumerate}
\end{enumerate}
\end{proof}

\begin{lemma}[Splitting / Merging]
\label{lem:splitting}
Let $\mtyp_1,\mtyp_2$ be two types such that 
their sum $\mtyp_1+\mtyp_2$ is well-defined.
Then the following are equivalent:
\begin{enumerate}
\item
  $\judg[m,e]{\tctx}{\val}{\mtyp_1+\mtyp_2}$
\item
  There exist $\tctx_1,\tctx_2,m_1,e_1,m_2,e_2$ such that:
  \begin{enumerate}
  \item[(a)] $\judg[m_1,e_1]{\tctx_1}{\val}{\mtyp_1}$
  \item[(b)] $\judg[m_2,e_2]{\tctx_2}{\val}{\mtyp_2}$
  \item[(c)] $\tctx = \tctx_1 + \tctx_2$
        and $m = m_1 + m_2$
        and $e = e_1 + e_2$.
  \end{enumerate}
\end{enumerate}
\end{lemma}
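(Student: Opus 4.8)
The plan is to prove the two directions separately, both by induction on the structure of the value $\val$. Since values are either variables or abstractions, there are only two base-like cases to consider, and the real content is concentrated in the abstraction case because that is where arrow multi-types decompose into unions. First I would record the observation that if $\mtyp_1 + \mtyp_2$ is well-defined, then either both $\mtyp_i$ are arrow multi-types (and $\mtyp_1 + \mtyp_2 = \mtyp_1 \niunion \mtyp_2$), or both equal $\tightN$ (and $\mtyp_1 + \mtyp_2 = \tightN$); these are the only cases allowed by the definition of the partial operation $+$ on types. This case split drives everything.

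For the direction $(2)\Rightarrow(1)$, I would handle it uniformly: given derivations $\Phi_1 \rhd \judg[m_1,e_1]{\tctx_1}{\val}{\mtyp_1}$ and $\Phi_2 \rhd \judg[m_2,e_2]{\tctx_2}{\val}{\mtyp_2}$, I want to produce $\judg[m_1+m_2,e_1+e_2]{\tctx_1+\tctx_2}{\val}{\mtyp_1+\mtyp_2}$. When $\val$ is an abstraction $\lam\var\tm$, both $\Phi_1$ and $\Phi_2$ must end in rule \ruleTypAbs, so each is indexed by a finite family $(I_1)$ resp. $(I_2)$ of arrow-type premises; I would simply merge the two families into a single family indexed by the disjoint union $I_1 \uplus I_2$, and the premises of the merged \ruleTypAbs instance are exactly the union of the old premises. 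The resulting context is $+_{I_1\uplus I_2}\tctx_i = (+_{I_1}\tctx_i) + (+_{I_2}\tctx_i) = \tctx_1 + \tctx_2$, the counters add up componentwise, and the conclusion type is $\mset{\optmtyp_i\to\mtyptwo_i}_{I_1} \niunion \mset{\optmtyp_i\to\mtyptwo_i}_{I_2} = \mtyp_1 \niunion \mtyp_2 = \mtyp_1 + \mtyp_2$. When $\val = \var$, both derivations are single instances of \ruleTypVar with $\tctx_i = (\var:\mtyp_i)$ and counters $(0,\numarrows{\mtyp_i})$; here I use $\numarrows{\mtyp_1+\mtyp_2}=\numarrows{\mtyp_1}+\numarrows{\mtyp_2}$ (stated in the excerpt) together with $(\var:\mtyp_1)+(\var:\mtyp_2) = (\var:\mtyp_1+\mtyp_2)$ by definition of the union of optional types.

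For the direction $(1)\Rightarrow(2)$, I would again split on the shape of $\val$. If $\val = \var$, the derivation is an instance of \ruleTypVar with $\tctx = (\var:\mtyp_1+\mtyp_2)$ and counter $(0,\numarrows{\mtyp_1+\mtyp_2})$; I split the context as $(\var:\mtyp_1)+(\var:\mtyp_2)$ and the counter using additivity of $\numarrows{\_}$, producing two \ruleTypVar instances. If $\val = \lam\var\tm$, the derivation ends in \ruleTypAbs indexed by some finite family $I$ with conclusion type $\mset{\optmtyp_i\to\mtyptwo_i}_{I}$. In the arrow-multi-type case, $\mtyp_1 + \mtyp_2 = \mtyp_1 \niunion \mtyp_2$ as multisets, so there is a partition $I = I_1 \uplus I_2$ with $\mset{\cdots}_{I_1}=\mtyp_1$ and $\mset{\cdots}_{I_2}=\mtyp_2$; splitting the \ruleTypAbs instance along this partition yields the two required sub-derivations with $\tctx_j = +_{I_j}\tctx_i$ and $m_j = +_{I_j}m_i$, $e_j = +_{I_j}e_i$. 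In the $\tightN$ case, $\mtyp_1+\mtyp_2 = \tightN$ forces $\mtyp_1 = \mtyp_2 = \tightN$; but an abstraction cannot be typed with $\tightN$ (this is essentially \cref{lem:types_of_hereditary_abstractions}, using \cref{rem:habs_st}(6) that $\lam\var\tm \in \HAbs\aset$, or can be seen directly since \ruleTypAbs always produces a multi-type), so this subcase is vacuous.

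The only real subtlety — and the step I expect to require the most care — is the bookkeeping around the \emph{empty} multi-type and the possibility $n=0$ in the abstraction rule: the derivation of $\judg{}{}{\lam\var\tm}{\emset}$ uses \ruleTypAbs with an empty index family, and when $\mtyp_1 = \emset$ the "split" is the trivial one $I = \emptyset \uplus I$. One must check that $\emset$ acts as the neutral element of $\niunion$ consistently on both sides, that $+_{\emptyset}\tctx_i = \emptyset$ and $+_{\emptyset}(\text{counters}) = 0$, and that these conventions make the two directions genuinely inverse. Everything else is a routine matching of rule instances; the statement is, as the excerpt notes, essentially the observation that \ruleTypAbs and \ruleTypVar are "multiplicative" and that arrow multi-types are literally multisets.
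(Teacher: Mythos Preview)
Your proposal is correct and follows essentially the same approach as the paper: case analysis on the shape of $\val$ (equivalently, on the last typing rule, which must be \ruleTypVar or \ruleTypAbs), with the abstraction case handled by partitioning or merging the index family of the \ruleTypAbs instance, and the variable case handled via additivity of $\numarrows{\_}$. The paper dismisses the $\tightN$ subcase for abstractions directly from the shape of the \ruleTypAbs conclusion rather than invoking \cref{lem:types_of_hereditary_abstractions}, and note that no genuine induction is needed since values have no recursive structure---it is a flat case analysis.
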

% Label: lem:splitting

\begin{proof}
We first show that $(1 \implies 2)$.
The judgement can be derived using rule $\ruleTypVar$ or rule $\ruleTypAbs$:
\begin{itemize}
\item $\ruleTypVar$.
  Then $\judg[0,e]{\tctx}{\var}{\mtyp_1 + \mtyp_2}$, with
  $e = \numarrows{\mtyp_1 + \mtyp_2}$.
  Since the sum of the types is well defined, there are two subcases:
  \begin{enumerate}
  \item $\mtyp_1 = \mtyp_2 = \tightN$ so that $\mtyp_1 + \mtyp_2 = \tightN$,
    hence $\numarrows{\mtyp_1 + \mtyp_2} = 0$. Taking 
    $\tctx_1 = \var : \tightN$,
    $\tctx_2 = \var : \tightN$,
    $m_1 = 0$,
    $m_2 = 0$,
    $e_1 = 0$,
    $e_2 = 0$,
    it's easy to check that the three conditions hold.
  \item $\mtyp_1 = \nityp_1 $ and $\mtyp_2 = \nityp_2$ so that $\mtyp_1 + \mtyp_2 = \nityp_1 \niunion \nityp_2$.
    We can write $\numarrows{\nityp_1 + \nityp_2}$ as 
    $\numarrows{\nityp_1} + \numarrows{\nityp_2}$. Taking 
    $\tctx_1 = \var : \nityp_1$,
    $\tctx_2 = \var : \nityp_2$,
    $m_1 = 0$,
    $m_2 = 0$,
    $e_1 = \numarrows{\nityp_1}$,
    $e_2 = \numarrows{\nityp_2}$,
    it's easy to check the that three conditions hold.
  \end{enumerate}
\item $\ruleTypAbs$.
  Then
  \[
    \deriv \eqdef \left(
    \indrule{\ruleTypAbs}{
      (\judg[m_i, e_i]
        {\tctx_i; \var : \optmtypthree_i}
        {\tm}
        {\mtyptwo_i}
      )_\iI
    }{
      \judg[+_\iI m_i, +_\iI e_i]
        {+_\iI \tctx_i}
        {\lam{\var}{\tm}}
        {\mset{\optmtypthree_i \to \mtyptwo_i}_\iI}
    }
    \right)
  \] 
  with
  $\mset{\optmtypthree_i \to \mtyptwo_i}_\iI = \mtyp_1 + \mtyp_2$.
  Since the sum of types is well defined, there are two subcases:
  \begin{enumerate}
  \item $\mtyp_1 = \mtyp_2 = \tightN$ so that $\mtyp_1 + \mtyp_2 = \tightN$.
    This case is not possible because the term has the non-idempotent type
    $\mset{\optmtypthree_i \to \mtyptwo_i}_\iI$.
  \item $\mtyp_1 = \nityp_1$ and $\mtyp_2 = \nityp_2$ so that $\mtyp_1 + \mtyp_2 = \nityp_1 \niunion \nityp_2$.
    Then we can write
    $I$ as $I = J \niunion K$,
    in such a way that
    $\mtyp_1 = \nityp_1 = \mset{\optmtypthree_j \to \mtyptwo_j}_\jJ$
    and
    $\mtyp_2 = \nityp_2 = \mset{\optmtypthree_k \to \mtyptwo_k}_\kK$.
    In particular,
    $\mset{\optmtypthree_i \to \mtyptwo_i}_\iI =
    \mset{\optmtypthree_j \to \mtyptwo_j}_\jJ + 
     \mset{\optmtypthree_k \to \mtyptwo_k}_\kK$.
    Taking 
    $\tctx_1 = +_{\jJ} \tctx_j$,
    $\tctx_2 = +_{\kK} \tctx_k$,
    $m_1 = +_\jJ m_j$,
    $m_2 = +_\kK m_k$,
    $e_1 = +_\jJ e_j$,
    $e_2 = +_\kK e_k$,
    we can check that the three conditions hold:
    \begin{enumerate}
    \item[(a)] 
      $\judg[+_\jJ m_j, +_\jJ e_j]
        {+_\jJ \tctx_j}
        {\lam{\var}{\tm}}
        {\mset{\optmtypthree_j \to \mtyptwo_j}_\jJ}$ 
      by \ruleTypAbs~rule, since 
      $(\judg[m_j, e_j]{\tctx_j; \var : \optmtyp_j}{\tm}{\mtyptwo_j})_\jJ$
      holds, as they are $j$ premises of $\deriv$
    \item[(b)]
      $\judg[+_\kK m_k, +_\kK e_k]
        {+_\kK \tctx_k}
        {\lam{\var}{\tm}}
        {\mset{\optmtypthree_k \to \mtyptwo_k}_\kK}$ 
      by \ruleTypAbs~rule, since 
      $(\judg[m_k, e_k]{\tctx_k; \var : \optmtyp_k}{\tm}{\mtyptwo_k})_\kK$
      holds, as they are $k$ premises of $\deriv$
    \item[(c)] 
      $\tctx = +_\iI \tctx_i = +_\jJ \tctx_j +_\kK \tctx_k = \tctx_1 + \tctx_2$
          and $m = m_\iI = +_\jJ m_j +_\kK m_k = m_1 + m_2$
          and $e = e_\iI = +_\jJ e_j +_\kK e_k = e_1 + e_2$
    \end{enumerate}
  \end{enumerate}
\end{itemize}
Now we can show that $(2 \implies 1)$. 
First, note that $\val$ is either a variable or an abstraction, 
so both judgements are derived using rule $\ruleTypVar$ or rule \ruleTypAbs.
\begin{enumerate}
\item \ruleTypVar.
  Then $\judg[0,e_1]{\var : \mtyp_1}{\var}{\mtyp_1}$, with
  $e_1 = \numarrows{\mtyp_1}$ and we also
  have $\judg[0,e_2]{\var : \mtyp_2}{\var}{\mtyp_2}$, with
  $e_2 = \numarrows{\mtyp_2}$.
  Using the third condition, we can conclude
  $\judg[0, e_1 + e_2]{\var : \mtyp_1 + \mtyp_2}{\var}{\mtyp_1 + \mtyp_2}$
  by applying rule $\ruleTypVar$, since 
  $e_1 + e_2 = \numarrows{\mtyp_1} + \numarrows{\mtyp_2} = \numarrows{\mtyp_1 + \mtyp_2}$.
\item \ruleTypAbs. The following conditions hold:
  \begin{enumerate}
  \item[(a)] 
    \[
      \deriv_1 \eqdef \left(
      \indrule{\ruleTypAbs}{
        (\judg[m_j, e_j]
          {\tctx_j; \var : \optmtypthree_j}
          {\tm}
          {\mtyptwo_j}
        )_\jJ
      }{
        \judg[+_\jJ m_j, +_\jJ e_j]
          {+_\jJ \tctx_j}
          {\lam{\var}{\tm}}
          {\mset{\optmtypthree_j \to \mtyptwo_j}_\jJ}
      }
      \right)
    \]
    with $\mtyp_1 = \mset{\optmtypthree_j \to \mtyptwo_j}_\jJ$, 
    $m_1 = +_\jJ m_j$, $e_1 = +_\jJ e_j$
  \item[(b)]
    \[
      \deriv_2 \eqdef \left(
      \indrule{\ruleTypAbs}{
        (\judg[m_k, e_k]
          {\tctx_k; \var : \optmtypthree_k}
          {\tm}
          {\mtyptwo_k}
        )_\kK
      }{
        \judg[+_\kK m_k, +_\kK e_k]
          {+_\kK \tctx_k}
          {\lam{\var}{\tm}}
          {\mset{\optmtypthree_k \to \mtyptwo_k}_\kK}
      }
      \right)
    \]
    with $\mtyp_2 = \mset{\optmtypthree_k \to \mtyptwo_k}_\kK$, 
    $m_2 = +_\kK m_k$, $e_2 = +_\kK e_k$.
  \item[(c)]  $\tctx = \tctx_1 + \tctx_2 = +_\jJ \tctx_j +_\kK \tctx_k$ 
    and $m = m_1 + m_2 = +_\jJ m_j +_\kK m_k$ 
    and $e = e_1 + e_2 = +_\jJ e_j +_\kK e_k$.
  \end{enumerate}
  We can conclude $\judg[m, e]{\tctx}{\lam{\var}{\tm}}{\mtyp_1 + \mtyp_2}$
  by applying rule $\ruleTypAbs$, using the derivations $\deriv_1$ and $\deriv_2$ 
  as premises.
\end{enumerate}
\end{proof}

\begin{definition}[Typing of substitution contexts]
We extend the type system for typing substitution contexts.
The typing judgements for substitution contexts are written 
$\judgSctx[\cm, \ce]\tctx\sctx\tctxtwo$, and the new typing rules are:
\[
  \indrule{\ruleTypSctxEmpty}{
    \emptyPremise
  }{
    \judgSctx[0,0]\emptyctx\ctxhole\emptyctx
  }
  \HS
  \indrule{\ruleTypSctxAdd}{
    \judgSctx[\cm_1, \ce_1]{\tctx_1; \var : \optmtyp_1}\sctx\tctxtwo
    \sep
    \optmtyp_1 + \optmtyp_2 \mleq \mtyp
    \sep
    \judg[\cm_2, \ce_2]{\tctx_2}\tm\mtyp
  }{
    \judgSctx[\cm_1 + \cm_2, \ce_1 + \ce_2]
      {\tctx_1 + \tctx_2}{\sctx\esub\var\tm}{\tctxtwo; \var : \optmtyp_2}
  }
\]
\end{definition}

\begin{example}
Let $\tm = \var\esub\var\vartwo$ with 
$\judg[0,\numarrows\mtyp]{\vartwo : \mtyp}\vartwo\mtyp$, we type 
$\esub\var\vartwo$ as follows:
\[
  \indrule{\ruleTypSctxAdd}{
    \indrule{\ruleTypSctxEmpty}{
      \emptyPremise
    }{
      \judgSctx[0, 0]{\var : \none}\ctxhole\emptyctx
    }
    \sep
    \none + \mtyp \mleq \mtyp
    \sep
    \indrule{\ruleTypVar}{
      \emptyPremise
    }{
      \judg[0, \numarrows\mtyp]{\vartwo : \mtyp}\vartwo\mtyp
    }
  }{
    \judgSctx[0, \numarrows\mtyp]
      {\vartwo : \mtyp}{\esub\var\vartwo}{\var : \mtyp}
  }
\]
\end{example}

\begin{example}
Let $\tm = \varthree\esub\var\vartwo$ with 
$\judg[0,0]{\vartwo : \tight}\vartwo\tight$, now we type
$\esub\var\vartwo$ as follows:
\[
  \indrule{\ruleTypSctxAdd}{
    \indrule{\ruleTypSctxEmpty}{
      \emptyPremise
    }{
      \judgSctx[0,0]{\var : \none}\ctxhole\emptyctx
    }
    \sep
    \none + \none \mleq \tight
    \sep
    \indrule{\ruleTypVar}{
      \emptyPremise
    }{
      \judg[0,0]{\vartwo : \tight}\vartwo\tight
    }
  }{
    \judgSctx[0,0]{\vartwo : \tight}{\esub\var\vartwo}{\var : \none}
  }
\]
\end{example}

We prove this extension of the type system still preserves the relevance property:
\begin{lemma}[Relevance for typing substitution contexts]
\label{lem:relevance_sctx}
If $\judgSctx[\cm,\ce]\tctx\sctx\tctxtwo$, 
then $\dom\tctx \subseteq \fv\sctx$
and $\dom\tctxtwo \subseteq \domSctx\sctx$.
\end{lemma}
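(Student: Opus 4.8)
The plan is to prove \cref{lem:relevance_sctx} by a straightforward induction on the derivation of the judgement $\judgSctx[\cm,\ce]\tctx\sctx\tctxtwo$, mimicking the structure of the proof of \nameref{lem:relevance} for ordinary terms. Since there are only two typing rules for substitution contexts, namely \ruleTypSctxEmpty and \ruleTypSctxAdd, the induction has exactly two cases.

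\begin{proof}
By induction on the derivation of $\judgSctx[\cm,\ce]\tctx\sctx\tctxtwo$.

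\emph{Base case \ruleTypSctxEmpty.} Here $\sctx = \ctxhole$, $\tctx = \emptyctx$, and $\tctxtwo = \emptyctx$. Then $\dom{\emptyctx} = \emptyset = \fv{\ctxhole}$ and $\dom{\emptyctx} = \emptyset = \domSctx{\ctxhole}$, so both inclusions hold trivially.

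\emph{Inductive case \ruleTypSctxAdd.} Here the derivation ends with
\[
  \inferrule{
    \judgSctx[\cm_1, \ce_1]{\tctx_1; \var : \optmtyp_1}\sctx'\tctxtwo'
    \sep
    \optmtyp_1 + \optmtyp_2 \mleq \mtyp
    \sep
    \judg[\cm_2, \ce_2]{\tctx_2}\tm\mtyp
  }{
    \judgSctx[\cm_1 + \cm_2, \ce_1 + \ce_2]
      {\tctx_1 + \tctx_2}{\sctx'\esub\var\tm}{\tctxtwo'; \var : \optmtyp_2}
  }
\]
so $\sctx = \sctx'\esub\var\tm$, $\tctx = \tctx_1 + \tctx_2$, and $\tctxtwo = \tctxtwo'; \var : \optmtyp_2$. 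By the inductive hypothesis applied to the left premise, $\dom{\tctx_1; \var : \optmtyp_1} \subseteq \fv{\sctx'}$ and $\dom{\tctxtwo'} \subseteq \domSctx{\sctx'}$. By \nameref{lem:relevance} applied to the right premise, $\dom{\tctx_2} \subseteq \fv\tm$. For the first inclusion: $\dom{\tctx_1 + \tctx_2} = \dom{\tctx_1} \cup \dom{\tctx_2} \subseteq (\dom{\tctx_1; \var:\optmtyp_1}) \cup \fv\tm \subseteq \fv{\sctx'} \cup \fv\tm = \fv{\sctx'\esub\var\tm}$, using that $\fv{\sctx'\esub\var\tm} = (\fv{\sctx'} \setminus \set\var) \cup \fv\tm$ together with the fact that $\var$ is bound in $\sctx'\esub\var\tm$ and hence, by $\alpha$-conversion, may be assumed fresh for $\tm$, so no occurrence of $\var$ sneaks in via $\fv\tm$; in any case the required inclusion is immediate since $\dom{\tctx_1}$ loses $\var$ and $\dom{\tctx_2} \subseteq \fv\tm$. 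For the second inclusion: $\dom{\tctxtwo'; \var : \optmtyp_2} \subseteq \dom{\tctxtwo'} \cup \set\var \subseteq \domSctx{\sctx'} \cup \set\var = \domSctx{\sctx'\esub\var\tm}$.
\end{proof}

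\noindent The only point requiring mild care — the main (very small) obstacle — is the handling of the bound variable $\var$ in the first inclusion: one must observe that removing $\var$ from $\dom{\tctx_1; \var : \optmtyp_1}$ does not remove anything needed, since the "new" variable $\var$ in $\dom\tctx$ can only come from $\tctx_2$, and $\dom{\tctx_2} \subseteq \fv\tm \subseteq \fv{\sctx'\esub\var\tm}$. Everything else is bookkeeping with domains of sums of typing environments, exactly as in \nameref{lem:relevance}. No appropriateness hypothesis or tightness condition is needed here, so the argument stays purely structural.
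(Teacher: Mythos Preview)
Your proof is correct and follows the same approach the paper takes: induction on the typing derivation with the two cases \ruleTypSctxEmpty and \ruleTypSctxAdd, invoking \nameref{lem:relevance} for the term premise. The one presentational wobble is the displayed equality $\fv{\sctx'} \cup \fv\tm = \fv{\sctx'\esub\var\tm}$, which is false as stated; you immediately patch it, and the clean way to phrase the argument is simply that the semicolon in $\tctx_1;\var:\optmtyp_1$ forces $\var \notin \dom{\tctx_1}$, so $\dom{\tctx_1} \subseteq \fv{\sctx'} \setminus \set\var$ directly.
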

% Label: lem:relevance_sctx

\begin{proof}
By induction on the derivation of the judgement 
$\judgSctx[\cm,\ce]\tctx\sctx\tctxtwo$.
\begin{enumerate}
\item \ruleTypSctxEmpty.
  The judgement is $\judgSctx[0,0]\emptyctx\ctxhole\emptyctx$, where 
  $\tctx = \tctxtwo = \emptyctx$ and $\cm = \ce = 0$ and $\sctx = \ctxhole$.
  Then, $\dom\emptyctx = \emptyset = \fv\ctxhole$
  and $\dom\emptyctx = \emptyset = \domSctx\ctxhole$, so we are done.
\item \ruleTypSctxAdd.
  Then the judgement is of the form
  \[
    \indrule{\ruleTypSctxAdd}{
      \judgSctx[\cm_1, \ce_1]{\tctx_1; \var : \optmtyp_1}{\sctx'}{\tctxtwo'}
      \sep
      \optmtyp_1 + \optmtyp_2 \mleq \mtyp
      \sep
      \judg[\cm_2, \ce_2]{\tctx_2}\tm\mtyp
    }{
      \judgSctx[\cm_1 + \cm_2, \ce_1 + \ce_2]
        {\tctx_1 + \tctx_2}{\sctx'\esub\var\tm}{\tctxtwo'; \var : \optmtyp_2}
    }
  \]
  where $\tctx = \tctx_1 + \tctx_2$ and $\tctxtwo = \tctxtwo'; \var : \optmtyp_2$ 
  and $m = m_1 + m_2$ and $e = e_1 + e_2$ and $\sctx = \sctx'\esub{\var}{\tm}$.
  
  On the one hand, we have 
  $\dom{\tctx_1; \var : \optmtyp_1} \subseteq \fv{\sctx'}$
  by \ih on $\sctx'$.
  Removing $\var$ from the inequation we obtain
  $\dom{\tctx_1; \var : \optmtyp_1} \setminus \set\var = \dom{\tctx_1}
   \subseteq \fv{\sctx'} \setminus \set\var$, so that:
  \[
    \begin{array}{rcll}
        \dom{\tctx_1} \cup \dom{\tctx_2}
      & \subseteq
      & (\fv{\sctx'} \setminus \set\var) \cup \dom{\tctx_2}
    \\
      & \subseteq
      & (\fv{\sctx'} \setminus \set\var) \cup \fv\tm
      & \text{(By \cref{lem:relevance})}
    \\
      & =
      & \fv{\sctx'\esub\var\tm}
    \end{array}
  \]
  On the other hand,
  \[
    \begin{array}{rcll}
        \dom{\tctxtwo'; \var : \optmtyp_2}
      & \subseteq
      & \dom{\tctxtwo'} \cup \set{\var}
    \\
      & \subseteq
      & \domSctx{\sctx'} \cup \set{\var}
      & \text{(By \ih on $\sctx'$)}
    \\
      & =
      & \domSctx{\sctx'\esub{\var}{\tm}}
    \end{array}
  \]
  so we are done.
\end{enumerate}
\end{proof}

\begin{lemma}[Composition/decomposition]
\label{lem:composition}
The following are equivalent:
\begin{enumerate}
\item
  $\judg[\cm,\ce]\tctx{\tm\sctx}\mtyp$
\item
  There exist $\tctx_\tm$, $\tctx_\sctx$, $\tctxtwo$, $\cm_\tm$, 
  $\ce_\tm$, $\cm_\sctx$, and $\ce_\sctx$ such that:
  \begin{itemize}
  \item[(a)] \label{it:comp_cond_a}
    $\judgSctx[\cm_\sctx,\ce_\sctx]{\tctx_\sctx}\sctx\tctxtwo$
  \item[(b)] \label{it:comp_cond_b}
    $\judg[\cm_\tm,\ce_\tm]{\tctx_\tm;\tctxtwo}\tm\mtyp$
  \item[(c)] \label{it:comp_cond_c}
    $\tctx = \tctx_\tm + \tctx_\sctx$ and
    $\cm = \cm_\tm + \cm_\sctx$ and
    $\ce = \ce_\tm + \ce_\sctx$.
  \end{itemize}
\end{enumerate}

Furthermore, in the $(1 \implies 2)$ direction,
if $\inv\aset\sset{\tm\sctx}$ holds, 
then $\isAppr\aset\tctx$
implies $\isAppr{\expansion\aset\sctx}\tctxtwo$.
\end{lemma}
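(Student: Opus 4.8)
<br>

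The plan is to prove the Composition/decomposition lemma by a straightforward induction on the structure of the substitution context $\sctx$, handling both directions of the equivalence simultaneously since they share the same inductive skeleton.

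For the base case $\sctx = \ctxhole$, the equivalence is immediate: a derivation of $\tm\ctxhole = \tm$ with type $\mtyp$ corresponds to taking $\tctx_\sctx = \emptyctx$, $\tctxtwo = \emptyctx$, $\cm_\sctx = \ce_\sctx = 0$, and using rule \ruleTypSctxEmpty for the trivial judgement $\judgSctx[0,0]\emptyctx\ctxhole\emptyctx$; the appropriateness side condition holds since $\expansion\aset\ctxhole = \aset$ and $\tctxtwo = \emptyctx$ is appropriate with respect to any frame. For the inductive step $\sctx = \sctxtwo\esub\var\tmtwo$, I would observe that $\tm\sctx = (\tm\sctxtwo)\esub\var\tmtwo$, so in the forward direction the last rule of the derivation must be \ruleTypES, which splits the typing environment and counters and provides a sub-derivation of $\tm\sctxtwo$ under an environment extended with $\var : \optmtyp$ together with a derivation of $\tmtwo$ with some type $\mtyptwo$ where $\optmtyp \mleq \mtyptwo$; applying the inductive hypothesis to $\tm\sctxtwo$ decomposes it further, and then reassembling with rule \ruleTypSctxAdd gives the typing of $\sctxtwo\esub\var\tmtwo$. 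The backward direction reverses this: given the decomposition, the \ruleTypSctxAdd-derivation for $\sctxtwo\esub\var\tmtwo$ unfolds to a \ruleTypSctxEmpty/\ruleTypSctxAdd structure that, combined with the body derivation via the inductive hypothesis and a final application of \ruleTypES, reconstructs the monolithic judgement. I will need to be careful that the controlled weakening relation $\mleq$ composes correctly with the splitting $\optmtyp_1 + \optmtyp_2 \mleq \mtyp$ appearing in \ruleTypSctxAdd versus the single $\optmtyp \mleq \mtyp$ in \ruleTypES; this is a matter of case analysis on whether the optional types are $\none$ or genuine multi-types, using that $\tight + \tight = \tight$ and $\none$ is neutral.

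For the appropriateness refinement, I would track how the frame evolves: when $\inv\aset\sset{\tm\sctxtwo\esub\var\tmtwo}$ holds and we know $\isAppr\aset\tctx$, we need to determine whether $\var$ may be added to the frame. By \cref{lem:disjunction} and the correctness invariant, $\tmtwo$ is either in $\HAbs\aset$ or in $\Struct\sset$ (exclusively), mirroring the definition of $\expansion\aset\sctx$. In the hereditary-abstraction case, \cref{lem:types_of_hereditary_abstractions} tells us $\tmtwo$ cannot be typed $\tightN$, hence $\mtyptwo \neq \tightN$, hence $\optmtyp \neq \tightN$, so $\isAppr{\aset \cup \set\var}{\tctx_\sctx; \var : \optmtyp}$ holds and the inductive hypothesis applied with frame $\aset \cup \set\var$ delivers appropriateness of $\tctxtwo$ with respect to $\expansion{\aset \cup \set\var}\sctxtwo = \expansion\aset{\sctxtwo\esub\var\tmtwo}$; in the structure case, $\var \notin \expansion\aset\sctx$ and the inductive hypothesis with frame $\aset$ suffices, plus \cref{rem:isAppropriate} to handle the environment splitting.

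The main obstacle I anticipate is the bookkeeping around \emph{appropriateness} in the inductive step, rather than the typing reconstruction itself. The subtlety is that rule \ruleTypES and rule \ruleTypSctxAdd use slightly different shapes ($\optmtyp \mleq \mtyp$ versus $\optmtyp_1 + \optmtyp_2 \mleq \mtyp$), so translating between a monolithic \ruleTypES-derivation and a layered substitution-context derivation requires redistributing the optional-type annotations between the ``$\tm$ part'' and the ``$\sctx$ part'' of the environment, and one must verify that this redistribution is compatible with the constraint that variables in the (expanded) abstraction frame are never assigned $\tightN$. Getting the case split on $\tmtwo \in \HAbs\aset$ versus $\tmtwo \in \Struct\sset$ to line up exactly with the two clauses in the definition of $\expansion{\_}{\_}$, and invoking \cref{lem:types_of_hereditary_abstractions} at precisely the right point, is where the argument needs the most care.
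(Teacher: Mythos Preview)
Your approach matches the paper's: induction on $\sctx$, with the inductive step translating between rule \ruleTypES on $(\tm\sctx')\esub\var\tmtwo$ and rule \ruleTypSctxAdd on $\sctx'\esub\var\tmtwo$, redistributing the optional type of $\var$ as $\optmtyp_1 + \optmtyp_2$, and handling appropriateness via \cref{lem:types_of_hereditary_abstractions} in the $\tmtwo \in \HAbs\aset$ case. One small slip: the case split governing $\expansion\aset\sctx$ is $\tmtwo \in \HAbs\aset$ versus $\tmtwo \notin \HAbs\aset$, not versus $\tmtwo \in \Struct\sset$ --- \cref{lem:disjunction} gives only exclusivity, not exhaustiveness, and the ``otherwise'' branch needs no hypothesis on $\tmtwo$ at all.
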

% Label: lem:composition

\begin{proof}
Both directions of the proof are by induction on $\sctx$.

$(1 \implies 2)$
\begin{itemize}
\item $\sctx = \ctxhole$.
  The judgement is of the form $\judg[m,e]\tctx\tm\mtyp$. Taking 
  $\tctx_\tm \eqdef \tctx$, $\tctx_\sctx \eqdef \emptyctx$, 
  $\tctxtwo \eqdef \emptyctx$, $\cm_\tm \eqdef \cm$, 
  $\ce_\tm \eqdef \ce$, $\cm_\sctx \eqdef 0$, and 
  $\ce_\sctx \eqdef 0$ we obtain:
  \begin{enumerate}
  \item[(a)] 
    $\judgSctx[0,0]\emptyctx\ctxhole\emptyctx$, by rule \ruleTypSctxEmpty
  \item[(b)] 
    $\judg[\cm,\ce]\tctx\tm\mtyp$, by hypothesis
  \item[(c)] 
    $\tctx = \tctx + \emptyctx$ and $\cm = \cm + 0$ and $\ce = \ce + 0$
  \end{enumerate}
  Furthermore, it is immediate to conclude that $\isAppr\aset\emptyctx$ holds, so we are done.
\item $\sctx = \sctx'\esub\var\tmtwo$.
  The judgement is of the form $\judg[\cm,\ce]\tctx{\tm\sctx'\esub\var\tmtwo}\mtyp$, 
  so it can only be derived from rule \ruleTypES:
  \[
    \indrule{\ruleTypES}{
      \judg[\cm_1, \ce_1]{\tctx_1; \var : \optmtyptwo}{\tm\sctx'}\mtyp
      \sep
      \optmtyptwo \mleq \mtyptwo
      \sep
      \judg[\cm_2, \ce_2]{\tctx_2}\tmtwo\mtyptwo
    }{
      \judg[\cm_1 + \cm_2, \ce_1 + \ce_2]
        {\tctx_1 + \tctx_2}
        {\tm\sctx'\esub\var\tmtwo}
        \mtyp
    }
  \]
  where $\tctx = \tctx_1 + \tctx_2$, $\cm = \cm_1 + \cm_2$, and 
  $\ce = \ce_1 + \ce_2$.
  
  By \ih on $\sctx'$, there exist $\tctx'_\tm$, $\tctx'_{\sctx'}$, 
  $\tctxtwo'$, $\cm_\tm$, $\ce_\tm$, $\cm_{\sctx'}$, and $\ce_{\sctx'}$ 
  such that:
  \begin{enumerate}
  \item[(a')]\label{cond:hi_comp_1}
    $\judgSctx[\cm_{\sctx'}, \ce_{\sctx'}]
      {\tctx'_{\sctx'}}{\sctx'}{\tctxtwo'}$
  \item[(b')]\label{cond:hi_comp_2}
    $\judg[\cm_\tm,\ce_\tm]
      {\tctx'_\tm; \tctxtwo'}\tm\mtyp$
  \item[(c')]\label{cond:hi_comp_3}
    $\tctx_1; \var : \optmtyptwo = \tctx'_{\sctx'} + \tctx'_\tm$ and 
    $\cm_1 = \cm_\tm + \cm_{\sctx'}$ and 
    $\ce_1 = \ce_\tm + \ce_{\sctx'}$.
  \end{enumerate}
  Furthermore, 
  $\inv\aset\sset{\tm\sctx'\esub\var\tmtwo}$ implies
  $\inv{\tilde\aset}{\tilde\sset}{\tm\sctx'}$, 
  with $\tilde\aset = \aset \cup \set\var$ and $\tilde\sset = \sset$
  or $\tilde\aset = \aset$ and $\tilde\sset = \sset \cup \set\var$,
  so that in either case, 
  $\isAppr{\tilde\aset}{\tctx_1; \var : \optmtyptwo}$ implies 
  $\isAppr{\expansion{\tilde\aset}{\sctx'}}{\tctxtwo'}$.
  
  By statement (c'), we can write
  $\optmtyptwo$ as $\optmtyptwo_{\sctx'} + \optmtyptwo_\tm$, so that
  $\tctx'_{\sctx'} = \tctx_{\sctx'}; \var : \optmtyptwo_{\sctx'}$ and
  $\tctx'_\tm = \tctx_\tm; \var : \optmtyptwo_\tm$ and
  $\optmtyptwo_{\sctx'} + \optmtyptwo_\tm \mleq \mtyptwo$.
  We then have  $\tctx_1 = \tctx_{\sctx'} + \tctx_\tm$.
  Taking 
  $\tctx_\sctx \eqdef \tctx_{\sctx'} + \tctx_2$, 
  $\tctx_\tm$, 
  $\tctxtwo \eqdef \tctxtwo'; \var : \optmtyptwo_\tm$, 
  $\cm_\tm$,
  $\ce_\tm$, 
  $\cm_\sctx \eqdef \cm_{\sctx'} + \cm_2$, and 
  $\ce_\sctx = \ce_{\sctx'} + \ce_2$ we have:
  \begin{enumerate}
  \item[(a)]
    \[
      \indrule{\ruleTypSctxAdd}{
        \judgSctx[\cm_{\sctx'}, \ce_{\sctx'}]
          {\tctx_{\sctx'}; \var : \optmtyptwo_{\sctx'}}
          {\sctx'}
          {\tctxtwo'}
        \sep
        \optmtyptwo_{\sctx'} + \optmtyptwo_\tm \mleq \mtyptwo
        \sep
        \judg[\cm_2, \ce_2]{\tctx_2}\tmtwo\mtyptwo
      }{
        \judgSctx[\cm_{\sctx'} + \cm_2, \ce_{\sctx'} + \ce_2]
          {\tctx_{\sctx'} + \tctx_2}
          {\sctx'\esub\var\tmtwo}
          {\tctxtwo'; \var : \optmtyptwo_\tm}
      }
    \]
  \item[(b)]
    $\judg[\cm_\tm,\ce_\tm]
      {\tctx_\tm;\tctxtwo';\var:\optmtyptwo_\tm}\tm\mtyp$, 
    by condition (b')
  \item[(c)]
    $\tctx 
    = \tctx_1 + \tctx_2 
    = \tctx_\tm + \tctx_{\sctx'} + \tctx_2 
    = \tctx_\tm + \tctx_\sctx$ 
    and
    $\cm 
    = \cm_1 + \cm_2 
    = \cm_\tm + \cm_{\sctx'} + \cm_2 
    = \cm_\tm + \cm_\sctx$
    and
    $\ce 
    = \ce_1 + \ce_2 
    = \ce_\tm + \ce_{\sctx'} + \ce_2 
    = \ce_\tm + \ce_\sctx$
  \end{enumerate}
  \sloppy
  If $\inv\aset\sset{\tm\sctx'\esub\var\tmtwo}$ holds, 
  we want to show that $\isAppr\aset\tctx$ implies
  $\isAppr{\expansion\aset{\sctx'\esub\var\tmtwo}}\tctxtwo$. 
  We already know that $\isAppr{\expansion\aset{\sctx'}}{\tctxtwo'}$ 
  holds by the \ih on $\sctx'$.
  We consider two cases:
  \begin{itemize}
  \item 
    $\expansion\aset{\sctx'\esub\var\tmtwo} = \expansion\aset{\sctx'} \cup \set\var$.
    This means $\tmtwo \in \HAbs\aset$. 
    Moreover, $\isAppr\aset{\tctx_2}$ holds, as 
    $\isAppr\aset\tctx$ and $\tctx = \tctx_1 + \tctx_2$.
    Then $\mtyptwo \neq \tightN$ by \cref{lem:types_of_hereditary_abstractions}. 
    By definition 
    $\optmtyptwo_{\sctx'} + \optmtyptwo_\tm \mleq \mtyptwo$ implies 
    $\optmtyptwo_{\sctx'} \neq \tightN$ and $\optmtyptwo_\tm \neq \tightN$,
     so we can conclude that
    $\isAppr{\expansion\aset{\sctx'\esub\var\tmtwo}}\tctxtwo$ holds.
  \item
    $\expansion\aset{\sctx'\esub\var\tmtwo} = \expansion\aset{\sctx'}$.
    Immediate by the \ih on $\sctx'$.
  \end{itemize}    
\end{itemize}
$(2 \implies 1)$
\begin{itemize}
\item $\sctx = \ctxhole$.
  Then, there exist 
  $\tctx_\tm$, $\tctx_\ctxhole$, $\tctxtwo$, $\cm_\tm$, $\ce_\tm$, 
  $\cm_\ctxhole$, and $\ce_\ctxhole$ such that:
  \begin{enumerate}
  \item[(a)] 
    $\judgSctx[\cm_\ctxhole, \ce_\ctxhole]
      {\tctx_\ctxhole}\ctxhole\tctxtwo$
  \item[(b)] 
    $\judg[\cm_\tm,\ce_\tm]{\tctx_\tm;\tctxtwo}\tm\mtyp$
  \item[(c)]
    $\tctx = \tctx_\tm + \tctx_\ctxhole$ and
    $\cm = \cm_\tm + \cm_\ctxhole$ and
    $\ce = \ce_\tm + \ce_\ctxhole$
  \end{enumerate}
  The judgement from condition (a) can only be derived by rule 
  \ruleTypSctxEmpty, hence $\tctx_\ctxhole = \tctxtwo = \emptyctx$ 
  and $\cm_\ctxhole = \ce_\ctxhole = 0$.
  By conditions (b) and (c) it is immediate to conclude 
  $\judg[\cm, \ce]\tctx\tm\mtyp$.
\item $\sctx = \sctx'\esub\var\tmtwo$.
  Then, there exist 
  $\tctx_\tm$, $\tctx_\sctx$, $\tctxtwo$, $\cm_\tm$, $\ce_\tm$, 
  $\cm_\sctx$, and $\ce_\sctx$ such that:
  \begin{enumerate}
  \item[(a)] 
    $\judgSctx[\cm_\sctx, \ce_\sctx]
      {\tctx_\sctx}{\sctx'\esub\var\tmtwo}\tctxtwo$
  \item[(b)] 
    $\judg[\cm_\tm, \ce_\tm]{\tctx_\tm; \tctxtwo}\tm\mtyp$
  \item[(c)]
    $\tctx = \tctx_\sctx + \tctx_\tm$ and
    $\cm = \cm_\sctx + \cm_\tm$ and
    $\ce = \ce_\sctx + \ce_\tm$
  \end{enumerate}
  The judgement from condition (a) can only be derived by rule 
  \ruleTypSctxAdd:
  \[
    \indrule{\ruleTypSctxAdd}{
      \judgSctx[m_{\sctx'}, e_{\sctx'}]
        {\tctx_{\sctx'}; \var : \optmtyptwo_1}
        {\sctx'}
        {\tctxtwo'}
      \sep
      \optmtyptwo_1 + \optmtyptwo_2 \mleq \mtyptwo
      \sep
      \judg[m_\tmtwo, e_\tmtwo]{\tctx_\tmtwo}\tmtwo\mtyptwo
    }{
      \judgSctx[\cm_{\sctx'} + \cm_\tmtwo, \ce_{\sctx'} + \cm_\tmtwo]
        {\tctx_{\sctx'} + \tctx_\tmtwo}
        {\sctx'\esub\var\tmtwo}
        {\tctxtwo'; \var : \optmtyptwo_2}
    }
  \]
  where $\tctx_\sctx = \tctx_{\sctx'} + \tctx_\tmtwo$,
  $\tctxtwo = \tctxtwo'; \var : \optmtyptwo_2$,
  $\cm_\sctx = \cm_{\sctx'} + \cm_2$, and 
  $\ce_\sctx = \ce_{\sctx'} + e_2$.

  Then, there exist 
  $\tctxthree_\tm = \tctx_\tm; \var : \optmtyptwo_2$, 
  $\tctxthree_{\sctx'} = \tctx_{\sctx'}; \var : \optmtyptwo_1$, 
  $\tctxtwo'$, $\cm_\tm$, $\ce_\tm$, $\cm_{\sctx'}$, and $\ce_{\sctx'}$
  such that:
  \begin{enumerate}
  \item[(a')] 
    $\judgSctx[\cm_{\sctx'}, \ce_{\sctx'}]{\tctxthree_{\sctx'}}{\sctx'}{\tctxtwo'}$
  \item[(b')] 
    $\judg[\cm_\tm, \ce_\tm]{\tctxthree_\tm; \tctxtwo'}\tm\mtyp$
  \item[(c')]
    $\tctxthree = \tctxthree_{\sctx'} + \tctxthree_\tm$ and
    $\cm' = \cm_{\sctx'} + \cm_\tm$ and
    $\ce' = \ce_{\sctx'} + \ce_\tm$.
  \end{enumerate}
  We can apply the \ih on $\sctx'$, yielding 
  $\judg[\cm', \ce']\tctxthree{\tm\sctx'}\mtyp$.
  Applying rule \ruleTypES, we conclude:
  \[
    \indrule{\ruleTypES}{
      \judg[\cm_{\sctx'} + \cm_\tm, \ce_{\sctx'} + \ce_\tm]
        {\tctx_{\sctx'} + \tctx_\tm; \var : (\optmtyptwo_1 + \optmtyptwo_2)}
        {\tm\sctx'}
        \mtyp
      \sep
      \optmtyptwo_1 + \optmtyptwo_2 \mleq \mtyptwo
      \sep
      \judg[m_\tmtwo, e_\tmtwo]{\tctx_\tmtwo}{\tmtwo}{\mtyptwo}        
    }{
      \judg[m, e]{\tctx}{\tm\sctx'\esub{\var}{\tmtwo}}{\mtyp}
    }
  \]
\end{itemize}
\end{proof}

\begin{definition}
Let $X$ be a finite set of variables. 
We say that a context $\tctx$ is \defn{$X-$tight} if for all 
$\var \in X$, $\tctx(\var) = \none$ or $\tctx(\var) = \tight$.
\end{definition}

\begin{lemma}
\label{lem:tight_spreading_struct}
Let $\inv\aset\sset\tm$. 
If $\judg[\cm, \ce]\tctx\tm\mtyp$,
where $\tctx$ is $\sset$-tight and $\tm \in \Struct\sset$.
Then $\mtyp$ must be tight.
\end{lemma}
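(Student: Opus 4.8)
The plan is to prove \cref{lem:tight_spreading_struct} by induction on the derivation of $\tm \in \Struct\sset$, using the relevance lemma (\cref{lem:relevance}) to control which variables appear in the typing environment, and the structure of the typing rules to pin down the possible types. The hypothesis $\inv\aset\sset\tm$ gives $\aset \disj \sset$ and $\fv\tm \subseteq \aset \cup \sset$, and the $\sset$-tightness of $\tctx$ says that $\tctx(\var) \in \set{\none, \tightN, \emset}$ for every $\var \in \sset$. The goal is to show $\mtyp = \tightN$ (or, more generally in the wording, a tight constant; but since structures unfold to applications headed by a variable, the relevant outcome is $\tightN$).

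First I would treat the base case \ruleStructVar, where $\tm = \var$ with $\var \in \sset$. Then the only typing rule applicable is \ruleTypVar, so $\tctx = \var : \mtyp$ and the judgement is $\judg[0,\numarrows\mtyp]{\var:\mtyp}\var\mtyp$. Since $\var \in \sset$ and $\tctx$ is $\sset$-tight, $\tctx(\var) = \mtyp$ must be $\none$ or tight; but $\tctx(\var) \neq \none$ because $\var \in \dom\tctx$, so $\mtyp$ is a tight constant. It remains to argue $\mtyp = \tightN$ rather than $\emset$: one can note that $\emset$ is an arrow multi-type (of cardinality zero), so $\numarrows\emset = 0$ and the judgement $\judg[0,0]{\var:\emset}\var\emset$ is derivable; however, if the lemma is stated only to conclude tightness of $\mtyp$ the weaker claim suffices, and if $\tightN$ is specifically required one excludes $\emset$ by observing a variable in $\sset$ is, by the intended semantics, never bound to an abstraction — so I would phrase the conclusion as ``$\mtyp$ is tight'' unless a sharper statement is needed downstream, matching the lemma header.

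Next, the inductive step splits on the two remaining rules. For \ruleStructApp, $\tm = \tmtwo\,\tmthree$ with $\tmtwo \in \Struct\sset$. The typing of an application uses either \ruleTypAppP or \ruleTypAppC. In the \ruleTypAppC case, the left premise types $\tmtwo$ with a \emph{singleton} arrow multi-type $\mset{\optmtyp \to \mtyptwo}$; but $\tmtwo \in \Struct\sset$ and $\tmtwo$ is typed under the restriction of $\tctx$, which is still $\sset$-tight (by \cref{lem:relevance}, $\dom\tctx \subseteq \fv\tm$, and splitting $\tctx = \tctx_1 + \tctx_2$ keeps each summand $\sset$-tight since tight constants are closed under $+$ and $\none \mleq \tight$), so by the induction hypothesis on $\tmtwo$ its type is tight — contradicting that it is an arrow multi-type. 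Hence the application must be typed by \ruleTypAppP, whose conclusion type is $\tightN$, giving $\mtyp = \tightN$. For the explicit-substitution rules \ruleStructSubi and \ruleStructSubii, $\tm = \tmtwo\esub\var\tmthree$ is typed by \ruleTypES; I would apply the induction hypothesis to $\tmtwo$ with the appropriately extended structure frame ($\sset$ in the first rule, $\sset \cup \set\var$ in the second), after checking that the environment $\tctx_1; \var : \optmtyp$ typing $\tmtwo$ is $(\sset$- resp. $\sset \cup \set\var)$-tight — which requires, in the \ruleStructSubii case, knowing $\optmtyp$ is tight; since $\tmthree \in \Struct\sset$, the induction hypothesis on $\tmthree$ gives its type tight, and $\optmtyp \mleq \mtyp$ with $\mtyp$ tight forces $\optmtyp$ tight.

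The main obstacle I anticipate is the bookkeeping around the environment splits: ensuring at each application of \ruleTypES, \ruleTypAppP, and \ruleTypAppC that the sub-environments remain $\sset$-tight (or $(\sset\cup\set\var)$-tight), and correctly threading the correctness invariant $\inv\aset\sset\tm$ down to subterms — e.g.\ $\inv{\aset\cup\set\var}\sset\tmtwo$ or $\inv\aset{\sset\cup\set\var}\tmtwo$ depending on the rule, mirroring how \cref{lem:HAbs_Struct_closed_reduction} and \cref{lem:completeness_nf} handle these cases. Once these invariants are carefully maintained, each case closes by a short appeal to the induction hypothesis plus the shape of the applicable typing rule.
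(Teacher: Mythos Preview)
Your proposal is correct and follows essentially the same approach as the paper: induction on the derivation of $\tm \in \Struct\sset$, with the application case ruling out \ruleTypAppC via the IH on the left subterm, and the ES cases threading $\sset$-tightness (resp.\ $(\sset\cup\set\var)$-tightness) through the split environment, using the IH on $\tmthree$ in the \ruleStructSubii case to force $\optmtyp$ tight. One small slip: where you write ``$\optmtyp \mleq \mtyp$ with $\mtyp$ tight'' you mean the type of $\tmthree$ in the \ruleTypES premise, not the conclusion type $\mtyp$; and your digression on $\tightN$ versus $\emset$ in the base case is unnecessary since the lemma only asks for tightness.
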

% Label: lem:tight_spreading_struct

\begin{proof}
By induction on the derivation of $\tm \in \Struct\sset$.
\begin{itemize}
\item \ruleStructVar.
  Then $\tm = \var$ and $\var \in \sset$ by premise of the rule 
  \ruleStructVar. 
  The judgement $\judg[\cm,\ce]\tctx\var\mtyp$ can only be derived 
  by rule \ruleTypVar, so $\tctx$ is of the form $\var : \mtyp$.
  The typing context $\var : \mtyp$ is $\sset-$tight by hypothesis,
  and we already know that $\var \in \sset$, hence $\tctx(\var)$ must 
  be tight, so we are done.
\item \ruleStructApp.
  Then $\tm = \tmtwo \, \tmthree$ and
  $\tmtwo \in \Struct\sset$ by premise of the rule \ruleStructApp.
  The judgement $\judg[\cm,\ce]\tctx{\tmtwo \, \tmthree}\mtyp$ can be 
  derived either by rule \ruleTypAppP or rule \ruleTypAppC.
  The former case is immediate, as the conclusion always type with a
  tight type. Hence, we look at the second case, where the typing 
  derivation is of the form:
  \[
    \indrule{\ruleTypAppC}{
      \judg[\cm_1,\ce_1]{\tctx_1}\tmtwo{\mset{\optmtyptwo \to \mtyp}}
      \sep
      \optmtyptwo \mleq \mtyptwo
      \sep
      \judg[\cm_2,\ce_2]{\tctx_2}\tmthree\mtyptwo
    }{
      \judg[1 + \cm_1 + \cm_2, \ce_1 + \ce_2]
        {\tctx_1 + \tctx_2}{\tmtwo \, \tmthree}\mtyp
    }
  \]
  where $\tctx = \tctx_1 + \tctx_2$, $\cm = 1 + \cm_1 + \cm_2$, and 
  $e = e_1 + e_2$.
  
  Moreover, $\inv\aset\sset{\tmtwo \, \tmthree}$ implies 
  $\inv\aset\sset\tmtwo$, and $\tctx_1$ is $\sset-$tight since 
  $\tctx$ is $\sset-$tight and $\tctx = \tctx_1 + \tctx_2$.
  Then we have that $\mset{\optmtyptwo \to \mtyp}$ must be tight by
  the \ih on $\tmtwo$, which is impossible and yields to a contradiction.
  Hence this case is not possible.
\item \ruleStructSubi.
  Then $\tm = \tmtwo\esub\var\tmthree$, and
  \[
    \indrule{\ruleStructSubi}{
      \tmtwo \in \Struct\sset
      \sep
      \var \notin \sset
    }{
      \tmtwo\esub\var\tmthree \in \Struct\sset
    }
  \]
  The judgement $\judg[\cm,\ce]\tctx{\tmtwo\esub\var\tmthree}\mtyp$ 
  can only be derived by rule \ruleTypES, so the typing derivation is
  of the form:
  \[
    \indrule{\ruleTypES}{
      \judg[\cm_1,\ce_1]{\tctx_1; \var : \optmtyptwo}\tmtwo\mtyp
      \sep
      \optmtyptwo \mleq \mtyptwo
      \sep
      \judg[\cm_2,\ce_2]{\tctx_2}\tmthree\mtyptwo
    }{
      \judg[\cm_1 + \cm_2, \ce_1 + \ce_2]
        {\tctx_1 + \tctx_2}{\tmtwo\esub\var\tmthree}\mtyp
    }
  \]
  where $\tctx = \tctx_1 + \tctx_2$, $\cm = \cm_1 + \cm_2$, and 
  $\ce = \ce_1 + \ce_2$.
  
  Moreover, $\inv\aset\sset{\tmtwo\esub\var\tmthree}$ implies 
  $\inv{\aset \cup \set\var}\sset\tmtwo$,
  and $\tctx_1; \var : \optmtyptwo$ is $\sset-$tight since 
  (1) $\tctx$ is $\sset-$tight and $\tctx = \tctx_1 + \tctx_2$, and 
  (2) $\var \not \in \sset$ by premise of rule \ruleStructSubi. 
  Furthermore, $\tmtwo \in \Struct\sset$ holds
  by premise of rule \ruleStructSubi.
  We can then apply the \ih on $\tmtwo$, concluding that $\mtyp$ is 
  tight, so we are done.
\item \ruleStructSubii.
  Then $\tm = \tmtwo\esub\var\tmthree$, and
  \[
    \indrule{\ruleStructSubii}{
      \tmtwo \in \Struct{\sset \cup \set\var}
      \sep
      \var \notin \sset
      \sep
      \tmthree \in \Struct\sset
    }{
      \tmtwo\esub\var\tmthree \in \Struct\sset
    }
  \]
  The judgement $\judg[\cm,\ce]\tctx{\tmtwo\esub\var\tmthree}\mtyp$ 
  can only be derived by rule \ruleTypES, so the typing derivation is 
  of the form:
  \[
    \indrule{\ruleTypES}{
      \judg[\cm_1,\ce_1]{\tctx_1; \var : \optmtyptwo}\tmtwo\mtyp
      \sep
      \optmtyptwo \mleq \mtyptwo
      \sep
      \judg[\cm_2,\ce_2]{\tctx_2}\tmthree\mtyptwo
    }{
      \judg[\cm_1 + \cm_2, \ce_1 + \ce_2]
        {\tctx_1 + \tctx_2}{\tmtwo\esub\var\tmthree}\mtyp
    }
  \]
  where $\tctx = \tctx_1 + \tctx_2$, $\cm = \cm_1 + \cm_2$, and 
  $\ce = \ce_1 + \ce_2$.
  
  By Barendregt's convention, we may assume $\var \notin \aset \cup \sset$.
  Moreover, $\inv\aset\sset{\tmtwo\esub\var\tmthree}$ implies 
  $\inv\aset\sset\tmthree$.
  Furthermore, $\tctx_2$ is $\sset-$tight, since $\tctx$ is 
  $\sset-$tight and $\tctx = \tctx_1 + \tctx_2$. 
  By premise of the rule \ruleStructSubii,
  $\tmthree \in \Struct\sset$ holds.
  We can then apply the \ih on $\tmthree$, yielding that $\mtyptwo$ is tight.

  On the other hand, $\inv\aset\sset{\tmtwo\esub\var\tmthree}$ 
  implies $\inv\aset{\sset \cup \set\var}\tmtwo$, and 
  $\tctx_1; \var : \optmtyptwo$ is $(\sset \cup \set\var)-$tight since 
  (1) $\mtyptwo$ is tight and $\optmtyptwo \mleq \mtyptwo$ by premise of rule \ruleTypES,
  so $\optmtyptwo$ must be either $\none$ or $\tight$, and
  (2) $\tctx$ is $(\sset \cup \set{\var})-$tight and $\tctx = \tctx_1 + \tctx_2$.
  By premise of rule \ruleStructSubii, 
  $\tmtwo \in \Struct{\sset \cup \set\var}$ holds.
  Applying the \ih on $\tmtwo$, we conclude that $\mtyp$ is tight,
  so we are done.
\end{itemize}
\end{proof}

\begin{lemma}
\label{lem:reducible_when_multisets}
Let $\inv\aset\sset\tm$, and suppose that the following hold:
\begin{enumerate}
\item 
  $\judg[\cm,\ce]{\tctx;\tctxtwo}\tm\tight$
\item 
  $\isAppr\aset\tctx$, with $\tctx$ tight
\item 
  $\tctxtwo \neq \emptyctx$
\item 
  $\dom\tctxtwo \subseteq \aset$
\item 
  For all $\var \in \dom\tctxtwo$, $\tctxtwo(\var)$ is a non-empty multiset.
\end{enumerate}
Then, $\tm \notin \NF\aset\sset\appflag$.
\end{lemma}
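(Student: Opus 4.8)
The plan is to prove the contrapositive-flavoured statement by induction on the term $\tm$, using the inductive characterisation of normal forms $\NF\aset\sset\appflag$ from \cref{thm:uocbv_characterization_of_normal_forms}. More precisely, I would assume toward a contradiction that $\tm \in \NF\aset\sset\appflag$ and derive a contradiction with one of the five hypotheses, by inspecting which rule (\ruleUNFVar, \ruleUNFLam, \ruleUNFApp, \ruleUNFEsAbs, \ruleUNFEsStruct) was used to place $\tm$ in $\NF\aset\sset\appflag$, and simultaneously which typing rule was used for $\judg[\cm,\ce]{\tctx;\tctxtwo}\tm\tight$. The key point is that hypotheses (3)–(5) force $\tctxtwo$ to assign a \emph{non-empty multiset} to at least one variable $\var$ that lies in $\aset$; relevance (\cref{lem:relevance}) then tells us this $\var$ is a reachable variable of $\tm$, so it actually occurs in $\tm$ in a position that matters.

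First I would handle the base case $\tm = \var$. Since $\dom{\tctx;\tctxtwo} \subseteq \fv\var = \set\var$ by \cref{lem:relevance} and $\tctxtwo \neq \emptyctx$, we must have $\tctxtwo = \var:\nityp$ with $\nityp$ a non-empty multiset, and $\var \in \aset$ by hypothesis (4). The typing rule is \ruleTypVar, so $\tm$ has type $\nityp$; but $\tm$ is required to have a tight type $\tight$, and a non-empty multiset is not tight — contradiction. (Equivalently: by rule \ruleUNFVar, $\var \in \aset$ forces $\appflag = \nonapp$; the substantive point is that a variable bound to an arrow multi-type in an applied position reduces by \ruleUSub, but here the mismatch between ``non-empty multiset'' and ``tight type'' already closes the case without needing $\appflag$.) The case $\tm = \lam\var\tmtwo$ is impossible, since abstractions only type with arrow multi-types via \ruleTypAbs, contradicting the tight type $\tight$ in hypothesis (1); alternatively, $\dom{\tctx;\tctxtwo} \subseteq \fv{\lam\var\tmtwo}$ and one checks $\tctxtwo$ cannot be non-empty while staying disjoint appropriately, but the type mismatch is cleaner.

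Next I would treat the inductive cases. For $\tm = \tmtwo\,\tmthree$: if $\tm \in \NF\aset\sset\appflag$, it is via \ruleUNFApp, so $\tmtwo \in \NF\aset\sset\app$ and $\tmthree \in \NF\aset\sset\nonapp$. The typing is \ruleTypAppP (giving $\tight = \tightN$) or \ruleTypAppC (giving $\tight$ some tight type). Split the context $\tctx;\tctxtwo$ over the premises; since $\tctxtwo(\var)$ is a non-empty multiset for each $\var\in\dom\tctxtwo\subseteq\aset$, and $\var$ is reachable in $\tm$ hence in $\tmtwo$ or $\tmthree$, one of the two subterms inherits (a non-empty part of) $\tctxtwo$; apply the induction hypothesis to that subterm with the appropriate positional flag to conclude it is \emph{not} a normal form, contradicting \ruleUNFApp. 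For $\tm = \tmtwo\esub\var\tmthree$: the normal-form derivation is \ruleUNFEsAbs or \ruleUNFEsStruct, and the typing is \ruleTypES; again split $\tctx_1 + \tctx_2 = \tctx;\tctxtwo$, with the bound variable possibly absorbing part of the multiset. In the \ruleUNFEsStruct case I would use \cref{lem:tight_spreading_struct} (since $\tmthree \in \Struct\sset$ and the relevant context is tight) to argue the argument side can't carry a non-empty multiset, pushing the multiset onto $\tmtwo$; then $\tmtwo \in \NF{\aset\cup\set\var}\sset\appflag$ or $\NF\aset{\sset\cup\set\var}\appflag$ but the induction hypothesis (with $\aset$ extended, or with the bound variable added via hypothesis (4) from $\ruleUNFEsAbs$'s premise $\tmthree\in\HAbs\aset$ and \cref{lem:composition}/appropriateness tracking) yields $\tmtwo$ reducible — contradiction.

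The main obstacle will be the bookkeeping in the ES case: carefully splitting the typing context across the body/argument premises of \ruleTypES, deciding whether the bound variable $\var$ receives a non-empty multiset (so it can be \emph{added} to $\aset$ for the body's induction hypothesis) or stays $\none$/tight, and keeping the invariants — appropriateness of the body context over $\expansion\aset{\esub\var\tmthree}$, tightness of $\tctx$, $\dom\tctxtwo\subseteq\aset$ — synchronised with the side conditions of \ruleUNFEsAbs versus \ruleUNFEsStruct. I expect \cref{lem:types_of_hereditary_abstractions}, \cref{lem:tight_spreading_struct}, and \cref{lem:splitting} to be the workhorses that make each sub-case go through, but verifying that the non-empty-multiset property of $\tctxtwo$ survives the split (rather than being diluted to emptiness on both sides) is the delicate step. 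A clean way to avoid diluting is to observe that $\tctxtwo(\var)$ being a non-empty multiset means $\numarrows{\tctxtwo(\var)} \geq 1$, and $\numarrows{\_}$ is additive over $+$, so at least one premise gets a context with $\numarrows \geq 1$ at $\var$, i.e. a non-empty multiset — this is the quantitative hook that drives the whole induction.
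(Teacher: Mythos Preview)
Your overall approach matches the paper's: induction on the typing derivation (equivalently, on the term, since the system is syntax-directed), assuming $\tm \in \NF\aset\sset\appflag$ and deriving a contradiction case by case. Two specific points need correction, however.

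First, the abstraction case: your claim that ``abstractions only type with arrow multi-types, contradicting the tight type $\tight$'' is wrong, because $\emset$ \emph{is} tight and is a perfectly good type for an abstraction. The correct argument is that if $\lam\var\tmtwo$ has tight type then the type must be $\emset$, so \ruleTypAbs has an empty family of premises and the context is $\emptyctx$; this contradicts $\tctxtwo \neq \emptyctx$.

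Second, and more seriously, your plan for the application case does not go through when the typing rule is \ruleTypAppC. There $\tmtwo$ has type $\mset{\optmtyp \to \tight}$, which is \emph{not} tight, so the induction hypothesis (which requires hypothesis (1) with a tight type for the subterm) cannot be applied to $\tmtwo$; and $\tmthree$ has type $\mtyp$, which need not be tight either. ``Split $\tctxtwo$ and apply IH to whichever side is non-empty'' therefore fails. The paper handles this case without the IH at all: assuming $\tmtwo\,\tmthree \in \NF\aset\sset\appflag$ forces $\tmtwo \in \NF\aset\sset\app$, hence $\tmtwo \in \Struct\sset$ by \cref{lem:nf_in_HAbs_or_Struct}; the context $\tctx_1;\tctxtwo_1$ for $\tmtwo$ is $\sset$-tight (since $\tctx_1$ is tight and $\dom{\tctxtwo_1} \subseteq \aset$ is disjoint from $\sset$), so \cref{lem:tight_spreading_struct} forces the type of $\tmtwo$ to be tight, contradicting the singleton $\mset{\optmtyp \to \tight}$. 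You listed \cref{lem:tight_spreading_struct} among the workhorses for the ES case; it is in fact the key that closes the \ruleTypAppC case as well.
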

% \hiddenproof{
%   By induction on the derivation of the judgement
%   $\judg[\cm,\ce]{\tctx;\tctxtwo}\tm\tight$.
% }{
  % Label: lem:reducible_when_multisets

\begin{proof}
By induction on the derivation of the judgement 
$\judg[\cm,\ce]{\tctx;\tctxtwo}\tm\tight$.
\begin{itemize}
\item \ruleTypVar.
  Then, $\tm = \var$ and the typing judgement is of the form
  $\judg[0,0]{\var : \tight}\var\tight$, where 
  $\tctx;\tctxtwo = \var : \tight$, $m = 0$, and $e = 0$.
  Moreover, $\tctxtwo \neq \emptyctx$ by hypothesis, so 
  $\tctxtwo = \var : \tight$.
  But we reach a contradiction since $\tctxtwo(\var)$ is a non-empty multiset
  by condition 5. of the hypothesis.
  Thus, this case does not apply.
\item \ruleTypAbs.
  Then, $\tm = \lam\var\tmtwo$ and the typing judgement is of the form
  $\judg[\cm,\ce]{\tctx;\tctxtwo}{\lam\var\tmtwo}\tight$, that can only
  be derived by rule \ruleTypAbs.
  Hence, it must be the case that $\tight = \emset$ and the conclusion
  of the typing derivation holds with no premises, so we have in particular
  $\tctx;\tctxtwo = \emptyctx$.
  But we reach a contradiction since $\tctxtwo \neq \emptyctx$ by
  hypothesis.
  Thus, this case does not apply.
\item \ruleTypAppP.
  Then, $\tm = \tmtwo\,\tmthree$, and the following hold:
  \begin{enumerate}
  \item 
    \[
      \indrule{\ruleTypAppP}{
        \judg[\cm_1,\ce_1]{\tctx_1;\tctxtwo_1}\tmtwo\tightN
        \sep
        \judg[\cm_2,\ce_2]{\tctx_2;\tctxtwo_2}\tmthree\tight
      }{
        \judg[\cm_1 + \cm_2, \ce_1 + \ce_2]
          {(\tctx_1 + \tctx_2);(\tctxtwo_1 + \tctxtwo_2)}
          {\tmtwo \, \tmthree}
          \tightN
      }
    \]
    where $\tctx = \tctx_1 + \tctx_2$, $\tctxtwo = \tctxtwo_1 + \tctxtwo_2$, 
    $\cm = \cm_1 + \cm_2$, $\ce = \ce_1 + \ce_2$, and $\tight = \tightN$
  \item 
    $\isAppr\aset{\tctx_1 + \tctx_2}$, with $\tctx_1 + \tctx_2$ tight
  \item 
    $\tctxtwo_1 + \tctxtwo_2 \neq \emptyctx$
  \item 
    $\dom{\tctxtwo_1 + \tctxtwo_2} \subseteq \aset$
  \item 
    For all $\var \in \dom{\tctxtwo_1 + \tctx_2}$, 
    $(\tctxtwo_1 + \tctxtwo_2)(\var)$ is a non-empty multiset.
  \end{enumerate}
  Moreover, $\inv\aset\sset{\tmtwo\,\tmthree}$ implies $\inv\aset\sset\tmtwo$ 
  and $\inv\aset\sset\tmthree$.
  By condition 3., we have to analyze two possible cases:
  \begin{itemize}
  \item If $\tctxtwo_1 \neq \emptyctx$, we state that the following hold:
    \begin{enumerate}[label=\arabic*'.]
    \item 
      $\judg[\cm_1,\ce_1]{\tctx_1;\tctxtwo_1}\tmtwo\tightN$, by 
      premise of the typing derivation in condition 1
    \item 
      $\isAppr\aset{\tctx_1}$, with $\tctx_1$ tight, since 
      $\tctx = \tctx_1 + \tctx_2$ and condition 2
    \item 
      $\tctxtwo_1 \neq \emptyctx$, by condition 3
    \item 
      $\dom{\tctxtwo_1} \subseteq \aset$, since 
      $\tctxtwo = \tctxtwo_1 + \tctxtwo_2$ and condition 4
    \item 
      For all $\var \in \dom{\tctxtwo_1}$, $\tctxtwo_1(\var)$ is a 
      non-empty multiset, since $\tctxtwo = \tctxtwo_1 + \tctxtwo_2$ 
      and condition 5.
    \end{enumerate}
    We can apply the \ih on $\tmtwo$, yielding 
    $\tmtwo \notin \NF\aset\sset\appflag$, which in particular holds 
    when $\appflag = \app$.
    By definition, there exist $\rulename$ and $\tmtwo'$ such that 
    $\tmtwo \tov\rulename\aset\sset\app \tmtwo'$.
    We can apply rule \ruleUAppL, yielding
    $\tmtwo\,\tmthree \tov\rulename\aset\sset\appflag \tmtwo' \, \tmthree$,
    and by definition we can conclude that
    $\tmtwo\,\tmthree \notin \NF\aset\sset\appflag$.
  \item If $\tctxtwo_2 \neq \emptyctx$, we state that the following hold:
    \begin{enumerate}[label=\arabic*'.]
    \item 
      $\judg[\cm_2,\ce_2]{\tctx_2; \tctxtwo_2}\tmthree\tight$, by 
      premise of the typing derivation in condition 1
    \item 
      $\isAppr\aset{\tctx_2}$, with $\tctx_2$ tight, since 
      $\tctx = \tctx_1 + \tctx_2$ and condition 2
    \item 
      $\tctxtwo_2 \neq \emptyctx$, by condition 3
    \item 
      $\dom{\tctxtwo_2} \subseteq \aset$, since 
      $\tctxtwo = \tctxtwo_1 + \tctxtwo_2$ and condition 4
    \item 
      For all $\var \in \dom{\tctxtwo_2}$, $\tctxtwo_{2}(\var)$ is a 
      non-empty multiset, since $\tctxtwo = \tctxtwo_1 + \tctxtwo_2$ 
      and condition 5.
    \end{enumerate}
    We can apply the \ih on $\tmthree$, yielding 
    $\tmthree \notin \NF\aset\sset\appflag$, which in particular holds 
    when $\appflag = \nonapp$.
    By definition, there exist $\rulename$ and $\tmthree'$ such that 
    $\tmthree \tov\rulename\aset\sset\nonapp \tmthree'$.
    We reason by contradiction, assuming 
    $\tmtwo \, \tmthree \in \NF\aset\sset\appflag$.
    This assumption can only be derived by rule \ruleUNFApp, so we have
    $\tmtwo \in \NF\aset\sset\app$ in particular.
    By \cref{lem:nf_in_HAbs_or_Struct}, then $\tmtwo \in \Struct\sset$, 
    so we can apply rule \ruleUAppR, yielding 
    $\tmtwo \, \tmthree \tov\rulename\aset\sset\appflag \tmtwo\,\tmthree'$, 
    which contradicts the assumption.
    We conclude that $\tmtwo\,\tmthree \notin \NF\aset\sset\appflag$.
  \end{itemize}
\item \ruleTypAppC.
  Then $\tm = \tmtwo\,\tmthree$, and the following hold:
  \begin{enumerate}
  \item 
    \[
      \indrule{\ruleTypAppC}{
        \judg[\cm_1,\ce_1]
          {\tctx_1;\tctxtwo_1}\tmtwo{\mset{\optmtyp \to \tight}}
        \sep
        \optmtyp \mleq \mtyp
        \sep
        \judg[\cm_2,\ce_2]{\tctx_2;\tctxtwo_2}\tmthree\mtyp
      }{
        \judg[1 + \cm_1 + \cm_2, \ce_1 + \ce_2]
          {(\tctx_1 + \tctx_2);(\tctxtwo_1 + \tctxtwo_2)}
          {\tmtwo \, \tmthree}
          \tight
      }
    \]
    where $\tctx = \tctx_1 + \tctx_2$, $\tctxtwo = \tctxtwo_1 + \tctxtwo_2$, 
    $\cm = 1 + \cm_1 + \cm_2$, and $\ce = \ce_1 + \ce_2$
  \item 
    $\isAppr\aset{\tctx_1 + \tctx_2}$, with $\tctx_1 + \tctx_2$ tight
  \item 
    $\tctxtwo_1 + \tctxtwo_2 \neq \emptyctx$
  \item 
    $\dom{\tctxtwo_1 + \tctxtwo_2} \subseteq \aset$
  \item 
    For all $\var \in \dom{\tctxtwo_1 + \tctxtwo_2}$, 
    $(\tctxtwo_1 + \tctxtwo_2)(\var)$ is a non-empty multiset.
  \end{enumerate}
  Moreover, $\inv\aset\sset{\tmtwo\,\tmthree}$ implies 
  $\inv\aset\sset\tmtwo$ and $\inv\aset\sset\tmthree$.
  We reason by contradiction, assuming $\tmtwo\,\tmthree \in \NF\aset\sset\appflag$.
  This assumption can only be derived by rule \ruleUNFApp, so in particular 
  we have $\tmtwo \in \NF\aset\sset\app$.
  By \cref{lem:nf_in_HAbs_or_Struct}, we have $\tmtwo \in \Struct\sset$.
  Moreover, $\tctx_1; \tctxtwo_1$ is $\sset-$tight since 
  (1) $\tctx_1$ is tight by $\tctx = \tctx_1 + \tctx_2$ and condition (b), and
  (2) $\dom{\tctxtwo_1} \subseteq \aset$ by condition (d). 
  We can then apply \cref{lem:tight_spreading_struct}, yielding that $\mset{\optmtyp \to \tight}$ must be tight, 
  which gives a contradiction since it is a singleton.
  We can then conclude that $\tmtwo \, \tmthree \notin \NF{\aset}{\sset}{\appflag}$.
\item \ruleTypES.
  Then $\tm = \tmtwo\esub\var\tmthree$, and the following hold:
  \begin{enumerate}
  \item 
    \[
      \indrule{\ruleTypES}{
        \judg[\cm_1,\ce_1]
          {\tctx_1;\tctxtwo_1; \var : \optmtyp}\tmtwo\tight
        \sep
        \optmtyp \mleq \mtyp
        \sep
        \judg[\cm_2,\ce_2]{\tctx_2;\tctxtwo_2}\tmthree\mtyp
      }{
        \judg[\cm_1 + \cm_2, \ce_1 + \ce_2]
          {(\tctx_1 + \tctx_2);(\tctxtwo_1 + \tctxtwo_2)}
          {\tmtwo\esub\var\tmthree}
          \tight
      }
    \]
    where $\tctx = \tctx_1 + \tctx_2$, $\tctxtwo = \tctxtwo_1 + \tctxtwo_2$, 
    $\cm = \cm_1 + \cm_2$, and $\ce = \ce_1 + \ce_2$
  \item 
    $\isAppr\aset{\tctx_1 + \tctx_2}$, with $\tctx_1 + \tctx_2$ tight
  \item 
    $\tctxtwo_1 + \tctxtwo_2 \neq \emptyctx$
  \item 
    $\dom{\tctxtwo_1 + \tctxtwo_2} \subseteq \aset$
  \item 
    For all $\vartwo \in \dom{\tctxtwo_1 + \tctxtwo_2}$, 
    $(\tctxtwo_1 + \tctxtwo_2)(\vartwo)$ is a non-empty multiset.
  \end{enumerate}
  
  Moreover, $\inv\aset\sset{\tmtwo\esub\var\tmthree}$ implies in 
  particular $\inv\aset\sset\tmthree$.
  We reason by contradiction, assuming $\tmtwo\esub{\var}{\tmthree} \in \NF{\aset}{\sset}{\appflag}$.
  This assumption can be derived either by rule \ruleUNFEsAbs\ or by rule \ruleUNFEsStruct.
  We may assume $\var \notin (\aset \cup \sset)$ by Barendregt's convention.
  \begin{enumerate}
  \item \ruleUNFEsAbs.
    Then
    \[
      \indrule{\ruleUNFEsAbs}{
        \tmtwo \in \NF{\aset \cup \set{\var}}{\sset}{\appflag}
        \sep
        \tmthree \in \NF{\aset}{\sset}{\nonapp}
        \sep
        \tmthree \in \HAbs{\aset}
      }{
        \tmtwo\esub{\var}{\tmthree} \in \NF{\aset}{\sset}{\appflag}
      }
    \]
    By condition (c), $\tctxtwo_1 + \tctxtwo_2 \neq \emptyctx$.
    Also, $\mtyp$ can be tight or not.
    We have to analyze three possible cases:
    \begin{enumerate}
      \item $\mtyp$ not tight.
      By condition (a), then $\optmtyp \mleq \mtyp$.
      Since $\mtyp$ is not tight, then $\optmtyp = \mtyp$.
      This means that $\optmtyp \neq \none$.
      We have that the following holds:
      \begin{enumerate}
      \item[(a')]
        $\judg[m_1, e_1]{\tctx_1; (\tctxtwo_1; \var : \mtyp)}{\tmtwo}{\tight}$, by condition (a)
      \item[(b')] 
        $\isAppr{\aset \cup \set{\var}}{\tctx_1}$, with $\tctx_1$ tight, 
        since $\tctx = \tctx_1 + \tctx_2$ and condition (b)
      \item[(c')]
        $(\tctxtwo_1; \var : \mtyp) \neq \emptyctx$ since at least $(\tctxtwo_1; \var : \mtyp)(\var) \neq \none$
      \item[(d')]
        $\dom{\tctxtwo_1; \var : \mtyp} \subseteq \aset \cup \set{\var}$ 
        since $\tctxtwo = \tctxtwo_1 + \tctxtwo_2$ and  condition (d)
      \item[(e')]
        For all $\vartwo \in \dom{\tctxtwo_1; \var : \mtyp}$, we have 
        that $(\tctxtwo_1; \var : \mtyp)(\vartwo)$ is a non-empty multiset
        since $\mtyp$ is not tight, $\tctxtwo = \tctxtwo_1 + \tctxtwo_2$ and by condition (e).
      \end{enumerate}
      We can apply \ih on $\tmtwo$, yielding
      $\tmtwo \notin \NF{\aset \cup \set{\var}}{\sset}{\appflag}$,
      which leads to a contradiction with our assumption. 
      Therefore $\tmtwo\esub{\var}{\tmthree} \notin \NF{\aset}{\sset}{\appflag}$.
    \item $\mtyp = \tight$ and $\tctxtwo_2 \neq \emptyctx$.
      We have that the following holds:
      \begin{enumerate}
      \item[(a')] $\judg[m_2, e_2]{\tctx_2; \tctxtwo_2}{\tmthree}{\tight}$, by condition (a)
      \item[(b')] $\isAppr{\aset}{\tctx_2}$, with $\tctx_2$ tight since $\tctx = \tctx_1 + \tctx_2$ and  condition (b)
      \item[(c')] $\tctxtwo_2 \neq \emptyctx$ by hypothesis
      \item[(d')] $\dom{\tctxtwo_2} \subseteq \aset$ since $\tctxtwo = \tctxtwo_1 + \tctxtwo_2$ and  condition (e)
      \item[(e')] For all $\vartwo \in \dom{\tctxtwo_2}$, we have that $\tctxtwo_2(\vartwo)$ is a non-empty multiset
         since $\tctxtwo = \tctxtwo_1 + \tctxtwo_2$ and condition (e).
      \end{enumerate}
      We can apply \ih on $\tmthree$ yielding
      $\tmthree \notin \NF{\aset}{\sset}{\appflag}$, in particular $\appflag = \nonapp$, 
      which leads to a contradiction with our assumption.
      Therefore $\tmtwo\esub{\var}{\tmthree} \notin \NF{\aset}{\sset}{\appflag}$.
    \item $\mtyp = \tight, \tctxtwo_1 \neq \emptyctx$ and $\tctxtwo_2 = \emptyctx$.
      We have that $\inv{\aset}{\sset}{\tmtwo\esub{\var}{\tmthree}}$ implies $\inv{\aset \cup \set{\var}}{\sset}{\tmtwo}$,
      and that the following holds:
      \begin{enumerate}
      \item[(a')] 
        $\judg[m_1, e_1]{\tctx_1; (\tctxtwo_1; \var : \optmtyp)}{\tmtwo}{\tight}$, by condition (a)
      \item[(b')]
        $\isAppr{\aset \cup \set{\var}}{\tctx_1; \var : \optmtyp}$, as
        \begin{itemize}
        \item[(1)]
          $\isAppr{\aset \cup \set{\var}}{\tctx_1}$. 
          Indeed $\tctx = \tctx_1 + \tctx_2$ and condition (b) and \cref{rem:isAppropriate}; 
        \item[(2)]
          $\isAppr{\aset \cup \set{\var}}{\var : \optmtyp}$.
          It holds that $\isAppr{\aset}{\tctx_2;\tctxtwo_2}$, since 
          $\tctx = \tctx_1 + \tctx_2$ and $\tctxtwo = \tctxtwo_1 + \tctxtwo_2$ and conditions (b) and (e). 
          Then $\mtyp \neq \tightN$ by applying \cref{lem:types_of_hereditary_abstractions}
          on the derivation of $\tmthree$.
          Therefore we are in the case where $\optmtyp \neq \tightN$.
        \end{itemize}
        Moreover, $\tctx_1$ is tight since $\tctx = \tctx_1 + \tctx_2$ and  condition (b).
      \item[(c')] $\tctxtwo_1 \neq \emptyctx$ by hypothesis
      \item[(d')] $\dom{\tctxtwo_1} \subseteq \aset \cup \set{\var}$ 
        since $\tctxtwo = \tctxtwo_1 + \tctxtwo_2$ and  condition (d)
      \item[(e')] For all $\vartwo \in \dom{\tctxtwo_1}$, we have that $\tctxtwo_1(\vartwo)$ is a non-empty multiset
         since $\tctxtwo = \tctxtwo_1 + \tctxtwo_2$ and condition (e).
      \end{enumerate}
     We can apply \ih on $\tmtwo$ yielding
      $\tmtwo \notin \NF{\aset \cup \set{\var}}{\sset}{\appflag}$,
      which leads to a contradiction with our assumption.
      Therefore $\tmtwo\esub{\var}{\tmthree} \notin \NF{\aset}{\sset}{\appflag}$.
    \end{enumerate}
  \item \ruleUNFEsStruct.
    Then
    \[
      \indrule{\ruleUNFEsStruct}{
        \tmtwo \in \NF{\aset}{\sset \cup \set{\var}}{\appflag}
        \sep
        \tmthree \in \NF{\aset}{\sset}{\nonapp}
        \sep
        \tmthree \in \Struct{\sset}
      }{
        \tmtwo\esub{\var}{\tmthree} \in \NF{\aset}{\sset}{\appflag}
      }
    \]
    By condition (c), $\tctxtwo_1 + \tctxtwo_2 \neq \emptyctx$
    so we have to analyze two possible cases: 
    \begin{enumerate}
    \item $\tctxtwo_2 \neq \emptyctx$.
      The following holds:
      \begin{enumerate}
      \item[(a')] $\judg[m_2, e_2]{\tctx_2; \tctxtwo_2}{\tmthree}{\tight}$, by condition (a)
      \item[(b')] $\isAppr{\aset}{\tctx_2}$, with $\tctx_2$ tight since $\tctx = \tctx_1 + \tctx_2$ and by condition (b)
      \item[(c')] $\tctxtwo_2 \neq \emptyctx$ by hypothesis
      \item[(d')] $\dom{\tctxtwo_2} \subseteq \aset$ since $\tctxtwo = \tctxtwo_1 + \tctxtwo_2$ and by condition (d)
      \item[(e')] For all $\vartwo \in \dom{\tctxtwo_2}$, we have that $\tctxtwo_2(\vartwo)$ is a non-empty multiset
         since $\tctxtwo = \tctxtwo_1 + \tctxtwo_2$ and by condition (e).
      \end{enumerate}
     We can apply \ih on $\tmthree$ yielding
      $\tmthree \notin \NF{\aset}{\sset}{\appflag}$, in particular $\appflag = \nonapp$, 
      which leads to a contradiction with our assumption.
      We then conclude $\tmtwo\esub{\var}{\tmthree} \notin \NF{\aset}{\sset}{\appflag}$.     
    \item $\tctxtwo_2 = \emptyctx$ (and thus $\tctxtwo_1 \neq \emptyctx$).
      We have that $\inv{\aset}{\sset}{\tmtwo\esub{\var}{\tmthree}}$ implies $\inv{\aset}{\sset \cup \set{\var}}{\tmtwo}$,
      and that the following holds:
      \begin{enumerate}
      \item[(a')] $\judg[m_1, e_1]{\tctx_1; \tctxtwo_1; \var : \optmtyp}{\tmtwo}{\tight}$, by condition (a)
      \item[(b')] $\isAppr{\aset}{\tctx_1; \var : \optmtyp}$, with $\tctx_1$ tight, since 
        $\var \notin \aset$, $\tctx = \tctx_1 + \tctx_2$ and by condition (b)
      \item[(c')] $\tctxtwo_1 \neq \emptyctx$ by hypothesis
      \item[(d')] $\dom{\tctxtwo_1} \subseteq \aset \cup \set{\var}$ since $\tctxtwo = \tctxtwo_1 + \tctxtwo_2$ and by condition (d)
      \item[(e')] For all $\vartwo \in \dom{\tctxtwo_1}$, we have that $\tctxtwo_1(\vartwo)$ is a non-empty multiset
         since $\tctxtwo = \tctxtwo_1 + \tctxtwo_2$ and by condition (e).
      \end{enumerate}
      We can apply \ih on $\tmtwo$ yielding
      $\tmtwo \notin \NF{\aset}{\sset  \cup \set{\var}}{\appflag}$, 
      which leads to a contradiction with our assumption. 
      Therefore $\tmtwo\esub{\var}{\tmthree} \notin \NF{\aset}{\sset}{\appflag}$.
    \end{enumerate}
  \end{enumerate}
\end{itemize}
\end{proof}

% }

\begin{lemma}
\label{lem:normal_forms_zero_counter}
Let $\inv{\aset}{\sset}{\tm}$ and suppose the following hypothesis hold:
  (1) $\tm \in \NF{\aset}{\sset}{\appflag}$;
  (2) $\judg[m,e]{\tctx}{\tm}{\tight}$ is tight;
  (3) $\isAppr{\aset}{\tctx}$;
  (4) If $\appflag = \app$ then $\tight = \tightN$.
Then $(m,e) = (0,0)$.
\end{lemma}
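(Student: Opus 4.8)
The plan is to argue by induction on the derivation of $\tm \in \NF{\aset}{\sset}{\appflag}$, inspecting in each case the last typing rule used for $\judg[m,e]{\tctx}{\tm}{\tight}$ and exploiting that $\tight$ is a tight constant together with hypothesis (4). The leaf cases are immediate: if the derivation ends in \ruleUNFVar, then $\tm = \var$ is typed by \ruleTypVar, so $(m,e) = (0,\numarrows{\tight})$, and $\numarrows{\tightN} = \numarrows{\emset} = 0$ give $(m,e) = (0,0)$; if it ends in \ruleUNFLam, the flag is $\nonapp$ and $\tm = \lam{\var}{\tmthree}$ is typed by \ruleTypAbs\ with an arrow multi-type, which, being forced to equal the tight type $\tight$, must be empty, so $\tight = \emset$, the rule is applied with no premises, and $(m,e) = (0,0)$.

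If the derivation ends in \ruleUNFApp, so $\tm = \tmtwo\,\tmthree$ with $\tmtwo \in \NF{\aset}{\sset}{\app}$ and $\tmthree \in \NF{\aset}{\sset}{\nonapp}$, the typing is \ruleTypAppP\ or \ruleTypAppC. I discard \ruleTypAppC: since $\tmtwo$ is a normal form in applied position, \cref{lem:nf_in_HAbs_or_Struct} gives $\tmtwo \in \Struct{\sset}$, and since the left summand $\tctx_1$ of the tight environment $\tctx$ is $\sset$-tight, \cref{lem:tight_spreading_struct} forces the type of $\tmtwo$ to be tight, contradicting the singleton arrow multi-type assigned to it by \ruleTypAppC. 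Hence the typing is \ruleTypAppP, which does not increment $m$, types $\tmtwo$ with $\tightN$ and $\tmthree$ with a tight constant, and splits $\tctx = \tctx_1 + \tctx_2$, $(m,e) = (m_1+m_2, e_1+e_2)$ with $\tctx_1,\tctx_2$ tight. After routine checks of appropriateness (\cref{rem:isAppropriate}) and of the correctness property for the subterms, the induction hypothesis applied to $\tmtwo$ (flag $\app$, type $\tightN$, so (4) holds) and to $\tmthree$ (flag $\nonapp$, so (4) is vacuous) gives $(m_1,e_1) = (m_2,e_2) = (0,0)$, whence $(m,e) = (0,0)$.

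If the derivation ends in \ruleUNFEsAbs\ or \ruleUNFEsStruct, so $\tm = \tmtwo\esub{\var}{\tmthree}$ and the typing is \ruleTypES\ with sub-derivations $\judg[m_1,e_1]{\tctx_1;\var:\optmtyp}{\tmtwo}{\tight}$, $\optmtyp \mleq \mtyp$, $\judg[m_2,e_2]{\tctx_2}{\tmthree}{\mtyp}$ and $\tctx = \tctx_1+\tctx_2$, $(m,e) = (m_1+m_2, e_1+e_2)$, the crux is to show $\mtyp$ (and hence $\optmtyp$) tight. For \ruleUNFEsStruct\ this is immediate from \cref{lem:tight_spreading_struct}, since $\tmthree \in \Struct{\sset}$ and $\tctx_2$, being tight, is $\sset$-tight. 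For \ruleUNFEsAbs\ it is the delicate point: $\tmthree \in \HAbs{\aset}$ with $\isAppr{\aset}{\tctx_2}$ gives $\mtyp \neq \tightN$ by \cref{lem:types_of_hereditary_abstractions}; and were $\mtyp$ a non-empty arrow multi-type, then $\optmtyp \mleq \mtyp$ would force $\optmtyp = \mtyp$, so applying \cref{lem:reducible_when_multisets} to $\tmtwo$ with the environment split into the tight $\tctx_1$ and the single non-empty-multiset assumption $\var:\mtyp$ over the frame $\aset\cup\set{\var}$ (the side conditions $\var \notin \dom{\tctx_1}$ and $\inv{\aset\cup\set{\var}}{\sset}{\tmtwo}$ being available) would yield $\tmtwo \notin \NF{\aset\cup\set{\var}}{\sset}{\appflag}$, contradicting the premise of \ruleUNFEsAbs; so $\mtyp$, and therefore $\optmtyp$, is tight. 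With $\mtyp$ and $\optmtyp$ tight, the sub-derivation environments are tight, the correctness and appropriateness hypotheses descend (for \ruleUNFEsAbs\ the frame extension by $\var$ is legitimate because $\optmtyp \neq \tightN$, which follows since $\optmtyp \mleq \mtyp$ and $\mtyp \neq \tightN$; for \ruleUNFEsStruct\ the frame $\aset$ is unchanged and $\var \notin \aset$), and the sub-derivation types are tight with condition (4) inherited by $\tmtwo$ because its type equals $\tight$; the induction hypothesis then gives $(m_1,e_1) = (m_2,e_2) = (0,0)$, so $(m,e) = (0,0)$. I expect the \ruleUNFEsAbs\ case — specifically the invocation of \cref{lem:reducible_when_multisets} with the correct environment split and the re-establishment of tightness and appropriateness before recursing — to be the main obstacle.
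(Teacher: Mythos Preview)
Your proposal is correct and follows the same approach as the paper, namely induction on the derivation of $\tm \in \NF{\aset}{\sset}{\appflag}$. The paper only records this induction scheme without spelling out the cases, but your case analysis (in particular ruling out \ruleTypAppC\ via \cref{lem:nf_in_HAbs_or_Struct} and \cref{lem:tight_spreading_struct}, and handling the delicate \ruleUNFEsAbs\ case by invoking \cref{lem:reducible_when_multisets} to exclude a non-empty arrow multi-type for $\tmthree$) is exactly what is needed and matches the intended argument.
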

\hiddenproof{
  By induction on the derivation
  of $\tm \in \NF{\aset}{\sset}{\appflag}$.
}{./proofs/normal_forms_zero_counter}

\subsection{Soundness of $\typesystem$}
\label{sec:soundness_results}

This subsection aims to give the main results regarding the soundness of the 
type system $\typesystem$ with respect to the $\UOCBV$ strategy.
In order to prove this result, stated in \cref{thm:soundness_typing}, we need to 
show that the subject reduction property (\cref{prop:subject_reduction}) holds.
For doing that, we start by presenting the substitution lemma for $\typesystem$ 
in \cref{lem:substitution}.

\begin{restatable}[Substitution]{lemma}{substitutionlemma}
\label{lem:substitution}
Suppose that the following conditions hold:
\begin{enumerate}
\item[(a)]
  $\tm \tov{\rulesub{\var}{\val}}{\aset\cup\set{\var}}{\sset}{\appflag} \tm'$
\item[(b)]
  $\judg[m,e]{\tctx;\var:\optmtyp}{\tm}{\mtyptwo}$
\item[(c)]
  $\isAppr{\aset\cup\set{\var}}{\tctx;\var:\optmtyp}$
\item[(d)]
  If $\appflag = \app$
  then either $\mtyptwo = \tightN$
  or $\mtyptwo$ is a singleton, \ie, of the form $[\typ]$.
\end{enumerate}
Then there exist $\typ$ and $\optmtyp_2$
such that $\optmtyp = \mset{\typ} + \optmtyp_2$
and such that,
whenever $\judg[m',e']{\tctxtwo}{\val}{\mset{\typ}}$,
we have that $e > 0$
and
$\judg[m+m',e+e'-1]{\tctx+\tctxtwo;\var:\optmtyp_2}{\tm'}{\mtyptwo}$.
\end{restatable}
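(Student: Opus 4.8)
The plan is to prove the substitution lemma by induction on the derivation of the reduction step $\tm \tov{\rulesub{\var}{\val}}{\aset\cup\set{\var}}{\sset}{\appflag} \tm'$, following the same case structure used for the reduction rules of \UOCBV. The base case is \ruleUSub, where $\tm = \var$ and $\tm' = \val$; here the hypothesis forces $\aset\cup\set{\var} = \aset'\cup\set{\var}$ and $\appflag = \app$, so by hypothesis (d) the type $\mtyptwo$ is either $\tightN$ or a singleton $\mset{\typ}$. Since $\tm = \var$ is typed by \ruleTypVar, we have $\tctx;\var:\optmtyp = \var:\mtyptwo$, so $\tctx = \emptyctx$ and $\optmtyp = \mtyptwo$. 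The case $\mtyptwo = \tightN$ is ruled out: by \cref{rem:t_reduces_with_subvar_var_occurs_free_in_t} and \cref{lem:sub_var_in_aset} applied to the reduction, $\var \in \aset$, so by appropriateness $\tctx(\var) = \optmtyp \neq \tightN$. Hence $\optmtyp = \mset{\typ}$, and we take $\optmtyp_2 \eqdef \emset$; the counter bookkeeping ($e = \numarrows{\mset{\typ}} = 1 > 0$ and $e + e' - 1 = e'$) works out because $\val$ typed with $\mset{\typ}$ has the right counters by relevance and the structure of \ruleTypAbs/\ruleTypVar.

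For the inductive cases, I would handle the congruence rules \ruleUAppL, \ruleUAppR, \ruleUEsR, \ruleUEsLAbs, and \ruleUEsLStruct. In each case, I decompose the typing derivation of $\tm$ according to the term's top constructor, identify the subderivation corresponding to the subterm being reduced, apply the induction hypothesis, and then reassemble the typing derivation of $\tm'$ using the same typing rule with the modified counters. The key observations needed are: (i) \ruleUAppL requires the left subterm to reduce in applied position, so when $\tm = \tm_1\,\tm_2$ is typed via \ruleTypAppC the left premise types $\tm_1$ with a singleton $\mset{\optmtyp' \to \mtyptwo}$, matching hypothesis (d) for the recursive call; when typed via \ruleTypAppP the left premise has type $\tightN$, also matching (d); (ii) \ruleUAppR requires the left subterm $\tm_1 \in \Struct\sset$, and here I expect to invoke \cref{lem:tight_spreading_struct} or the appropriateness machinery to argue about which typing rule applies — since a structure cannot be typed with an arrow multi-type when the environment is appropriate, \ruleTypAppP must have been used, so the left premise has type $\tightN$ and is unaffected by the substitution (its free variable data does not include $\var$ by relevance, since $\var \notin \fv{\tm_1}$ for a structure being left untouched); (iii) for the ES congruence rules, the variable $\var$ bound by the ES is distinct from the substituted variable and $\var \notin \fv\rulename$, so the side conditions of \cref{rem:isAppropriate} let me propagate appropriateness to the recursive call, and the split $\optmtyp = \mset{\typ} + \optmtyp_2$ is preserved across the reconstruction.

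The main obstacle I anticipate is the careful management of the appropriateness invariant $\isAppr{\aset\cup\set{\var}}{\tctx;\var:\optmtyp}$ through the ES cases, particularly \ruleUEsLAbs and \ruleUEsLStruct, where the frame is extended by the ES-bound variable $\vartwo$. In the \ruleUEsLAbs case, the recursive call needs $\isAppr{\aset\cup\set{\var}\cup\set{\vartwo}}{\tctx_1;\vartwo:\optmtyptwo;\var:\optmtyp}$, which requires knowing $\optmtyptwo \neq \tightN$; this in turn follows from the premise $\tmtwo \in \HAbs\aset$ together with \cref{lem:types_of_hereditary_abstractions} applied to the subderivation typing $\tmtwo$ (using that its environment is appropriate), combined with $\optmtyptwo \mleq \mtyptwo'$. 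Conversely, in the \ruleUEsLStruct case the bound variable enters the structure frame, not the abstraction frame, so appropriateness is untouched for that variable, but I must be careful that the type $\mtyptwo'$ assigned to the structure argument $\tmtwo$ is forced to be tight by \cref{lem:tight_spreading_struct}, which is needed to keep the typing consistent. A secondary subtlety is ensuring the counter arithmetic $e + e' - 1$ threads correctly: the $-1$ arises exactly once, in the \ruleUSub base case, and must be carried unchanged through every congruence step — this is routine but requires attention because the counters split additively across premises.
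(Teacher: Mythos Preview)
Your overall structure matches the paper's proof (induction on the reduction derivation), and the base case \ruleUSub and the cases \ruleUAppL, \ruleUEsR, \ruleUEsLAbs are essentially right. But your treatment of \ruleUAppR is confused in a way that would make the proof fail.

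You claim that because the left subterm $\tmtwo$ is a structure, ``\ruleTypAppP must have been used'' and that ``$\var \notin \fv{\tm_1}$''. Both claims are false. First, structures \emph{can} be typed with arrow multi-types: \cref{lem:tight_spreading_struct} only applies when the environment is $\sset$-tight, which is not assumed here (we only have appropriateness). A term like $\varthree\,\var$ with $\varthree \in \sset$ is a structure that may well receive a singleton type via \ruleTypAppC. Second, $\var$ can occur free in $\tmtwo$ (e.g.\ as an argument, not as the head), so the relevance argument does not give $\var \notin \fv{\tmtwo}$. The correct approach, as in the paper, is simply to handle both \ruleTypAppP and \ruleTypAppC: in each case the induction hypothesis applies to the \emph{right} subterm $\tmthree$, which reduces with positional flag $\nonapp$, so condition (d) for the recursive call is vacuous. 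The left premise typing $\tmtwo$ (possibly with $\var:\optmtyp_\tmtwo$ in its environment) is simply carried over unchanged; the split $\optmtyp = \optmtyp_\tmtwo + \optmtyp_\tmthree$ lets you reassemble $\optmtyp = \mset{\typ} + (\optmtyp_\tmtwo + \optmtyp_{u2})$ from the IH's decomposition $\optmtyp_\tmthree = \mset{\typ} + \optmtyp_{u2}$.

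A minor additional point: your worry about needing \cref{lem:tight_spreading_struct} in the \ruleUEsLStruct case is unfounded. The paper handles that case using only the fact that the ES-bound variable $\vartwo$ is fresh for $\aset \cup \set{\var}$ (by $\alpha$-conversion), which suffices for appropriateness; no tightness of the argument's type is required.
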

% Label: lem:substitution

\begin{proof}
By induction on the derivation of
  $\tm \tov{\rulesub{\var}{\val}}{\aset\cup\set{\var}}{\sset}{\appflag} \tm'$.

\begin{enumerate}
\item $\ruleUSub$. The following conditions hold:
  \begin{enumerate}
  \item[(a)]
    $\var \tov{\rulesub{\var}{\val}}{\aset \cup \set{\var}}{\sset}{\app} \val$
  \item[(b)]
    $\judg[m,e]{\tctx;\var:\optmtyp}{\var}{\mtyptwo}$
  \item[(c)]
    $\isAppr{\aset\cup\set{\var}}{\tctx;\var:\optmtyp}$
  \item[(d)]
    Since $\appflag = \app$,
    either $\mtyptwo = \tightN$
    or $\mtyptwo$ is a singleton
  \end{enumerate}
  The judgement of condition (b)
  can only be derived by the rule $\ruleTypVar$, hence
  $\optmtyp = \mtyptwo$, $\tctx = \emptyset$, $m = 0$ and $e = \numarrows{\mtyptwo}$.
  Furthermore, 
  $\mtyptwo \neq \tightN$, 
  since $\isAppr{\aset\cup\set{\var}}{\var:\mtyptwo}$ holds. This implies by hypothesis
  that $\mtyptwo$ is a singleton $\mset{\typ}$ for some type $\typ$.
  Thus  $e = 1$.
  Taking such $\typ$ and $\optmtyp_2 = \none$, 
  whenever $\judg[m',e']{\tctxtwo}{\val}{\mset{\typ}}$, 
  we can conclude that $\judg[0 + m',1 + e' - 1]{\emptyset + \tctxtwo}{\val}{\mset{\typ}}$.
\item $\ruleUAppL$. The following conditions hold:
  \begin{enumerate}
  \item[(a)]
    $\tmtwo \, \tmthree \tov{\rulesub{\var}{\val}}{\aset \cup \set{\var}}{\sset}{\appflag} \tmtwo' \, \tmthree$,
    with $\tmtwo \tov{\rulesub{\var}{\val}}{\aset \cup \set{\var}}{\sset}{\app} \tmtwo'$.
  \item[(b)]
    $\judg[m,e]{\tctx;\var:\optmtyp}{\tmtwo \, \tmthree}{\mtyptwo}$
  \item[(c)]
    $\isAppr{\aset\cup\set{\var}}{\tctx;\var:\optmtyp}$
  \item[(d)]
    If $\appflag = \app$
    then either $\mtyptwo = \tightN$
    or $\mtyptwo$ is a singleton
  \end{enumerate}
  The judgement $\judg[m,e]{\tctx;\var:\optmtyp}{\tmtwo \, \tmthree}{\mtyptwo}$ 
  can be derived either by rule $\ruleTypAppP$ or rule $\ruleTypAppC$:
  \begin{enumerate}
  \item $\ruleTypAppP$.
    Then
    \[
      \deriv \eqdef \left(
      \indrule{\ruleTypAppP}{
        \judg[m_{\tmtwo},e_{\tmtwo}]{\tctx_{\tmtwo};\var:\optmtyp_{\tmtwo}}{\tmtwo}{\tightN}
        \sep
        \judg[m_{\tmthree},e_{\tmthree}]{\tctx_{\tmthree};\var:\optmtyp_{\tmthree}}{\tmthree}{\tight}
      }{
        \judg[m_{\tmtwo} + m_{\tmthree}, e_{\tmtwo} + e_{\tmthree}]
          {\tctx_{\tmtwo} + \tctx_{\tmthree};\var:(\optmtyp_{\tmtwo} + \optmtyp_{\tmthree})}
          {\tmtwo \, \tmthree}
          {\tightN}
      }
      \right)
    \]
    with $\tctx = \tctx_{\tmtwo} + \tctx_{\tmthree}$,
    $\optmtyp = \optmtyp_{\tmtwo} + \optmtyp_{\tmthree}$,
    $m = m_{\tmtwo} + m_{\tmthree}$ and $e = e_{\tmtwo} + e_{\tmthree}$.
    The following holds:
    \begin{enumerate}
    \item[(a')]
      $\tmtwo \tov{\rulesub{\var}{\val}}{\aset \cup \set{\var}}{\sset}{\appflag_\tmtwo} \tmtwo'$,
      with $\appflag_\tmtwo = \app$
      by condition (a)
    \item[(b')]
      $\judg[m_{\tmtwo},e_{\tmtwo}]{\tctx_{\tmtwo};\var:\optmtyp_{\tmtwo}}{\tmtwo}{\tightN}$,
      by condition (b)
    \item[(c')]
      $\isAppr{\aset\cup\set{\var}}{\tctx_{\tmtwo};\var:\optmtyp_{\tmtwo}}$,
      since $\isAppr{\aset\cup\set{\var}}{\tctx;\var:\optmtyp}$ holds, 
      and $\tctx = \tctx_{\tmtwo} + \tctx_{\tmthree}$,
      $\optmtyp = \optmtyp_{\tmtwo} + \optmtyp_{\tmthree}$,
      and by condition (c)
    \item[(d')]
      Here $\appflag_\tmtwo = \app$
      and the type of $\tmtwo$ is $\tightN$.
    \end{enumerate}
    We can apply \ih on $\tmtwo$, yielding $\typ$ and $\optmtyp_{s2}$ such that
    $\optmtyp_{\tmtwo} = \mset{\typ} + \optmtyp_{s2}$ and such that, whenever 
    $\judg[m',e']{\tctxtwo}{\val}{\mset{\typ}}$, we have that $e_{\tmtwo} > 0$ and
    $\judg[m_{\tmtwo} + m',e_{\tmtwo} + e' - 1]{\tctx_{\tmtwo} + \tctxtwo;\var:\optmtyp_{s2}}{\tmtwo'}{\tightN}$.
    Then, taking this judgement and the second premise of $\deriv$,
    we can apply rule $\ruleTypAppP$, yielding
    $\judg[m + m',e + e'  - 1]
      {\tctx + \tctxtwo; \var : (\optmtyp_{s2} + \optmtyp_{\tmthree})}
      {\tmtwo' \, \tmthree}
      {\tightN}$, 
    with $e > 0$ because $e = e_{\tmtwo} + e_{\tmthree}$, and $e_{\tmtwo} > 0$ by \ih\ 
    We take $\optmtyp_2 = \optmtyp_{s2} + \optmtyp_{\tmthree}$, 
    and we can conclude that $\optmtyp = \mset{\typ} + \optmtyp_2$ holds.
  \item $\ruleTypAppC$.
    Then
    \[
      \deriv \eqdef \left(
      \indrule{\ruleTypAppC}{
        \judg[m_{\tmtwo},e_{\tmtwo}]{\tctx_{\tmtwo};\var:\optmtyp_{\tmtwo}}{\tmtwo}{\mset{\optmtypthree \to \mtyptwo}}
        \sep
        \optmtypthree \mleq \mtypthree
        \sep
        \judg[m_{\tmthree},e_{\tmthree}]{\tctx_{\tmthree};\var:\optmtyp_{\tmthree}}{\tmthree}{\mtypthree}
      }{
        \judg[1 + m_{\tmtwo} + m_{\tmthree}, e_{\tmtwo} + e_{\tmthree}]
          {\tctx_{\tmtwo} + \tctx_{\tmthree};\var:(\optmtyp_{\tmtwo} + \optmtyp_{\tmthree})}
          {\tmtwo \, \tmthree}
          {\mtyptwo}
      }
      \right)
    \]
    with $\tctx = \tctx_{\tmtwo} + \tctx_{\tmthree}$,
    $\optmtyp = \optmtyp_{\tmtwo} + \optmtyp_{\tmthree}$,
    $m = 1 + m_{\tmtwo} + m_{\tmthree}$ and $e = e_{\tmtwo} + e_{\tmthree}$.
    The following holds:
    \begin{enumerate}
    \item[(a')]
      $\tmtwo \tov{\rulesub{\var}{\val}}{\aset \cup \set{\var}}{\sset}{\appflag_\tmtwo} \tmtwo'$
      with $\appflag_\tmtwo = \app$
      by condition (a)
    \item[(b')]
      $\judg[m_{\tmtwo},e_{\tmtwo}]
        {\tctx_{\tmtwo};\var:\optmtyp_{\tmtwo}}
        {\tmtwo}
        {\mset{\optmtypthree \to \mtyptwo}}$,
      by condition (b)
    \item[(c')]
      $\isAppr{\aset\cup\set{\var}}{\tctx_{\tmtwo};\var:\optmtyp_{\tmtwo}}$,
      since $\isAppr{\aset\cup\set{\var}}{\tctx;\var:\optmtyp}$ holds, 
      and $\tctx = \tctx_{\tmtwo} + \tctx_{\tmthree}$, and by condition (c)
      $\optmtyp = \optmtyp_{\tmtwo} + \optmtyp_{\tmthree}$
    \item[(d')]
      We have $\appflag_\tmtwo = \app$ and  the type of $\tmtwo$ is the singleton 
      $\mset{\optmtypthree \to \mtyptwo}$.
    \end{enumerate}
    We cann apply \ih on $\tmtwo$, yielding $\typ$ and $\optmtyp_{s2}$ such that
    $\optmtyp_{\tmtwo} = \mset{\typ} + \optmtyp_{s2}$ and such that, whenever 
    $\judg[m' ,e']{\tctxtwo}{\val}{\mset{\typ}}$, we have that $e_{\tmtwo} > 0$ and
    $\judg[m_{\tmtwo} + m',e_{\tmtwo} + e' - 1]
      {\tctx_{\tmtwo} + \tctxtwo;\var: \optmtyp_{s2}}
      {\tmtwo'}
      {\mset{\optmtypthree \to \mtyptwo}}$.
    Then, taking this judgement and the second premise of $\deriv$,
    we can apply rule $\ruleTypAppC$, yielding
    $\judg[m + m',e + e' - 1]
      {\tctx + \tctxtwo; \var : (\optmtyp_{s2} + \optmtyp_{\tmthree})}
      {\tmtwo' \, \tmthree}
      {\mtyptwo}$, 
    with $e > 0$ because $e = e_{\tmtwo} + e_{\tmthree}$, and $e_{\tmtwo} > 0$ by \ih 
    We take $\optmtyp_2 = \optmtyp_{s2}  + \optmtyp_{\tmthree}$, 
   and we can conclude $\optmtyp = \mset{\typ} + \optmtyp_2$ holds.
  \end{enumerate}
\HIDDENFRAGMENT{
  \item The remaining cases are similar.
}{
\item $\ruleUAppR$. The following conditions hold:
  \begin{enumerate}
  \item[(a)]
    $\tmtwo \, \tmthree \tov{\rulesub{\var}{\val}}{\aset \cup \set{\var}}{\sset}{\appflag} \tmtwo \, \tmthree'$,
    with $\tmtwo \in \Struct{\sset}$ and 
    $\tmthree \tov{\rulesub{\var}{\val}}{\aset \cup \set{\var}}{\sset}{\nonapp} \tmthree'$.
  \item[(b)]
    $\judg[m,e]{\tctx;\var:\optmtyp}{\tmtwo \, \tmthree}{\mtyptwo}$
  \item[(c)]
    $\isAppr{\aset\cup\set{\var}}{\tctx; \var : \optmtyp}$
  \item[(d)]
    If $\appflag = \app$
    then either $\mtyptwo = \tightN$
    or $\mtyptwo$ is a singleton
  \end{enumerate}
  The judgement $\judg[m,e]{\tctx;\var:\optmtyp}{\tmtwo \, \tmthree}{\mtyptwo}$ 
  can be derived either by rule $\ruleTypAppP$ 
  or by $\ruleTypAppC$ rule:
  \begin{enumerate}
  \item $\ruleTypAppP$.
    Then
    \[
      \deriv \eqdef \left(
      \indrule{\ruleTypAppP}{
        \judg[m_{\tmtwo},e_{\tmtwo}]{\tctx_{\tmtwo};\var:\optmtyp_{\tmtwo}}{\tmtwo}{\tightN}
        \sep
        \judg[m_{\tmthree},e_{\tmthree}]{\tctx_{\tmthree};\var:\optmtyp_{\tmthree}}{\tmthree}{\tight}
      }{
        \judg[m_{\tmtwo} + m_{\tmthree}, e_{\tmtwo} + e_{\tmthree}]
          {\tctx_{\tmtwo} + \tctx_{\tmthree};\var:(\optmtyp_{\tmtwo} + \optmtyp_{\tmthree})}
          {\tmtwo \, \tmthree}
          {\tightN}
      }
      \right)
    \]
    with $\tctx = \tctx_{\tmtwo} + \tctx_{\tmthree}$,
    $\optmtyp = \optmtyp_{\tmtwo} + \optmtyp_{\tmthree}$,
    $m = m_{\tmtwo} + m_{\tmthree}$ and $e = e_{\tmtwo} + e_{\tmthree}$.
    The following holds:
    \begin{enumerate}
    \item[(a')]
      $\tmthree \tov{\rulesub{\var}{\val}}{\aset \cup \set{\var}}{\sset}{\appflag_\tmthree} \tmthree'$,
      with $\appflag_\tmthree = \nonapp$
      by condition (a)
    \item[(b')]
      $\judg[m_{\tmthree},e_{\tmthree}]{\tctx_{\tmthree};\var:\optmtyp_{\tmthree}}{\tmthree}{\tight}$,
      by condition (b)
    \item[(c')]
      $\isAppr{\aset\cup\set{\var}}{\tctx_{\tmthree}; \var : \optmtyp_{\tmthree}}$, 
      since $\isAppr{\aset\cup\set{\var}}{\tctx; \var : \optmtyp}$ holds 
      and $\tctx = \tctx_{\tmtwo} + \tctx_{\tmthree}$, 
      $\optmtyp = \optmtyp_{\tmtwo} + \optmtyp_{\tmthree}$,
      and by condition (c)
    \item[(d')]
      Here $\appflag_\tmthree = \nonapp$, by condition (a).
    \end{enumerate}
    We can apply \ih on $\tmthree$, yielding $\typ$ and $\optmtyp_{s2}$ such that
    $\optmtyp_{\tmthree} = \mset{\typ} + \optmtyp_{s2}$ and such that, whenever 
    $\judg[m',e']{\tctxtwo}{\val}{\mset{\typ}}$, we have that $e_{\tmthree} > 0$ and
    $\judg[m_{\tmthree} + m',e_{\tmthree} + e' - 1]
      {\tctx_{\tmthree} + \tctxtwo;\var: \optmtyp_{s2}}
      {\tmthree'}
      {\tight}$.
    Then, taking this judgement and the first premise of $\deriv$,
    we can apply rule $\ruleTypAppP$, yielding
    $\judg[m + m',e + e' - 1]
      {\tctx + \tctxtwo; \var : (\optmtyp_{\tmtwo} + \optmtyp_{s2})}
      {\tmtwo \, \tmthree'}
      {\tightN}$, 
    with $e > 0$ because $e = e_{\tmtwo} + e_{\tmthree}$, and $e_{\tmthree} > 0$ by \ih\ 
    We take $\optmtyp_2 = \optmtyp_{\tmtwo} + \optmtyp_{s2}$, 
    and we can conclude that $\optmtyp = \mset{\typ} + \optmtyp_2$ holds.
  \item $\ruleTypAppC$.
    Then
    \[
      \deriv \eqdef \left(
      \indrule{\ruleTypAppC}{
        \judg[m_{\tmtwo},e_{\tmtwo}]{\tctx_{\tmtwo};\var:\optmtyp_{\tmtwo}}{\tmtwo}{\mset{\optmtypthree  \to \mtyptwo}}
        \sep
        \optmtypthree \mleq \mtypthree
        \sep
        \judg[m_{\tmthree},e_{\tmthree}]{\tctx_{\tmthree};\var:\optmtyp_{\tmthree}}{\tmthree}{\mtypthree}
      }{
        \judg[1 + m_{\tmtwo} + m_{\tmthree}, e_{\tmtwo} + e_{\tmthree}]
          {\tctx_{\tmtwo} + \tctx_{\tmthree};\var:(\optmtyp_{\tmtwo} + \optmtyp_{\tmthree})}
          {\tmtwo \, \tmthree}
          {\mtyptwo}
      }
      \right)
    \]
    with $\tctx = \tctx_{\tmtwo} + \tctx_{\tmthree}$,
    $\optmtyp = \optmtyp_{\tmtwo} + \optmtyp_{\tmthree}$,
    $m = 1 + m_{\tmtwo} + m_{\tmthree}$ and $e = e_{\tmtwo} + e_{\tmthree}$.
    The following holds:
    \begin{enumerate}
    \item[(a')]
      $\tmthree \tov{\rulesub{\var}{\val}}{\aset \cup \set{\var}}{\sset}{\appflag_\tmthree} \tmthree'$,
      with $\appflag_\tmthree = \nonapp$
      by condition (a)
    \item[(b')]
      $\judg[m_{\tmtwo},e_{\tmtwo}]{\tctx_{\tmtwo};\var:\optmtyp_{\tmtwo}}{\tmtwo}{\mset{\optmtypthree \to \mtyptwo}}$
     by condition (b)
    \item[(c')]
      $\isAppr{\aset\cup\set{\var}}{\tctx_{\tmthree}; \var : \optmtyp_{\tmthree}}$, 
      since $\isAppr{\aset\cup\set{\var}}{\tctx; \var : \optmtyp}$ holds 
      and $\tctx = \tctx_{\tmtwo} + \tctx_{\tmthree}$, 
      $\optmtyp = \optmtyp_{\tmtwo} + \optmtyp_{\tmthree}$,
      and by condition (c)
    \item[(d')]
      Here $\appflag_\tmthree = \nonapp$, by condition (a).
    \end{enumerate}
    We can apply \ih on $\tmthree$, yielding $\typ$ and $\optmtyp_{u2}$ such that
    $\optmtyp_{\tmthree} = \mset{\typ} + \optmtyp_{u2}$ and such that, whenever 
    $\judg[m', e']{\tctxtwo}{\val}{\mset{\typ}}$, we have that $e_{\tmthree} > 0$ and
    $\judg[m_{\tmthree} + m',e_{\tmthree} + e' - 1]
      {\tctx_{\tmthree} + \tctxtwo;\var: \optmtyp_{u2}}
      {\tmthree'}
      {\mtypthree}$.
    Then, taking this judgement and the first premise of $\deriv$,
    we can apply rule $\ruleTypAppC$ rule, yielding
    $\judg[m + m', e + e' - 1]
      {\tctx + \tctxtwo; \var : (\optmtyp_{\tmtwo} + \optmtyp_{u2})}
      {\tmtwo \, \tmthree'}
      {\mtyptwo}$,
    with $e > 0$ because $e = e_{\tmtwo} + e_{\tmthree}$, and $e_{\tmthree} > 0$ by \ih
    We take $\optmtyp_2 = \optmtyp_{\tmtwo} + \optmtyp_{u2}$, 
    and we can conclude that $\optmtyp = \mset{\typ} + \optmtyp_2$ holds.
  \end{enumerate}
\item $\ruleUEsR$. The following conditions hold:
  \begin{enumerate}
  \item[(a)]
    $\tmtwo\esub{\vartwo}{\tmthree} 
     \tov{\rulesub{\var}{\val}}{\aset \cup \set{\var}}{\sset}{\appflag} 
     \tmtwo\esub{\vartwo}{\tmthree'}$, with 
    $\tmthree 
     \tov{\rulesub{\var}{\val}}{\aset \cup \set{\var}}{\sset}{\nonapp} 
     \tmthree'$
  \item[(b)]
    $\judg[m,e]{\tctx;\var:\optmtyp}{\tmtwo\esub{\vartwo}{\tmthree}}{\mtyptwo}$
  \item[(c)]
    $\isAppr{\aset\cup\set{\var}}{\tctx; \var : \optmtyp}$
  \item[(d)]
    If $\appflag = \app$
    then either $\mtyptwo = \tightN$
    or $\mtyptwo$ is a singleton.
  \end{enumerate}
  The judgement
  $\judg[m,e]{\tctx;\var:\optmtyp}{\tmtwo\esub{\vartwo}{\tmthree}}{\mtyptwo}$
  can only be derived using rule $\ruleTypES$,
  so
  \[
    \deriv \eqdef \left(
    \indrule{\ruleTypES}{
      \judg[m_{\tmtwo},e_{\tmtwo}]{\tctx_{\tmtwo};\var:\optmtyp_{\tmtwo};\vartwo: \optmtypthree}{\tmtwo}{\mtyptwo}
      \sep
      \optmtypthree \mleq \mtypthree
      \sep
      \judg[m_{\tmthree},e_{\tmthree}]{\tctx_{\tmthree};\var:\optmtyp_{\tmthree}}{\tmthree}{\mtypthree}
    }{
      \judg[m_{\tmtwo} + m_{\tmthree}, e_{\tmtwo} + e_{\tmthree}]
        {\tctx_{\tmtwo} + \tctx_{\tmthree};\var:(\optmtyp_{\tmtwo} + \optmtyp_{\tmthree})}
        {\tmtwo\esub{\vartwo}{\tmthree}}
        {\mtyptwo}
    }
    \right)
  \]
  with $\tctx = \tctx_{\tmtwo} + \tctx_{\tmthree}$,
  $\optmtyp = \optmtyp_{\tmtwo} + \optmtyp_{\tmthree}$,
  $m = m_{\tmtwo} + m_{\tmthree}$ and $e = e_{\tmtwo} + e_{\tmthree}$.
  Notice that $\var \neq \vartwo$ since $\vartwo$ does not occur free in the term.
  The following holds:
  \begin{enumerate}
  \item[(a')]
    $\tmthree 
     \tov{\rulesub{\var}{\val}}{\aset \cup \set{\var}}{\sset}{\appflag_\tmthree}
     \tmthree'$,
    where $\appflag_\tmthree = \nonapp$
    by condition (a)
  \item[(b')]
    $\judg[m_{\tmthree},e_{\tmthree}]
      {\tctx_{\tmthree};\var:\optmtyp_{\tmthree}}
      {\tmthree}
      {\mtypthree}$,
    by condition (b)
  \item[(c')]
    $\isAppr{\aset\cup\set{\var}}{\tctx_{\tmthree}; \var : \optmtyp_{\tmthree}}$, 
    since $\isAppr{\aset\cup\set{\var}}{\tctx; \var : \optmtyp}$ holds 
    and $\tctx = \tctx_{\tmtwo} + \tctx_{\tmthree}$, 
    $\optmtyp = \optmtyp_{\tmtwo} + \optmtyp_{\tmthree}$,
    and by condition (c)
  \item[(d')]
    Here $\appflag_\tmthree = \nonapp$, by condition (a).
  \end{enumerate}
  We can apply \ih on $\tmthree$, yielding $\typ$ and $\optmtyp_{u2}$ such that
  $\optmtyp_{\tmthree} = \mset{\typ} + \optmtyp_{u2}$ and such that, whenever 
  $\judg[m', e']{\tctxtwo}{\val}{\mset{\typ}}$, we have that $e_{\tmthree} > 0$ and
  $\judg[m_{\tmthree} + m',e_{\tmthree} + e' - 1]
    {\tctx_{\tmthree} + \tctxtwo;\var: \optmtyp_{u2}}
    {\tmthree'}
    {\mtypthree}$.
  Then, taking the last judgement and the first premise of $\deriv$,
 we can apply rule $\ruleTypES$, yielding
  $\judg[m + m', e + e' - 1]
    {\tctx + \tctxtwo; \var : (\optmtyp_{\tmtwo} + \optmtyp_{u2})}
    {\tmtwo\esub{\vartwo}{\tmthree'}}
    {\mtyptwo}$,
  with $e > 0$ because $e = e_{\tmtwo} + e_{\tmthree}$, and $e_{\tmthree} > 0$ by \ih\
  We take $\optmtyp_2 = \optmtyp_{\tmtwo} + \optmtyp_{u2}$, 
  and we can conclude that $\optmtyp = \mset{\typ} + \optmtyp_2$ holds.
\item $\ruleUEsLAbs$. The following conditions hold: 
  \begin{enumerate}
  \item[(a)]
    $\tmtwo\esub{\vartwo}{\tmthree} 
     \tov{\rulesub{\var}{\val}}{\aset \cup \set{\var}}{\sset}{\appflag} 
     \tmtwo'\esub{\vartwo}{\tmthree}$,
    with 
    $\tmtwo 
     \tov{\rulesub{\var}{\val}}{\aset \cup \set{\var} \cup \set{\vartwo}}{\sset}{\appflag}
     \tmtwo'$,
    $\tmthree \in \HAbs{\aset \cup \set{\var}}$,
    $\vartwo \notin \aset \cup \set{\var} \cup \sset$
    and $\vartwo \notin \fv{\rulesub{\var}{\val}}$
  \item[(b)]
    $\judg[m,e]{\tctx;\var:\optmtyp}{\tmtwo\esub{\vartwo}{\tmthree}}{\mtyptwo}$
  \item[(c)]
    $\isAppr{\aset\cup\set{\var}}{\tctx; \var : \optmtyp}$
  \item[(d)]
    If $\appflag = \app$
    then either $\mtyptwo = \tightN$
    or $\mtyptwo$ is a singleton.
  \end{enumerate}
  Notice that $\var \neq \vartwo$ since by hypothesis 
  $\vartwo \notin \aset \cup \set{\var} \cup \sset$,
  and that the judgement 
  $\judg[m,e]{\tctx;\var:\optmtyp}{\tmtwo\esub{\vartwo}{\tmthree}}{\mtyptwo}$
  can be  only derived using rule $\ruleTypES$, so
  \[
    \deriv \eqdef \left(
    \indrule{\ruleTypES}{
      \judg[m_{\tmtwo},e_{\tmtwo}]{\tctx_{\tmtwo};\var:\optmtyp_{\tmtwo};\vartwo: \optmtypthree}{\tmtwo}{\mtyptwo}
      \sep
      \optmtypthree \mleq \mtypthree
      \sep
      \judg[m_{\tmthree},e_{\tmthree}]{\tctx_{\tmthree};\var:\optmtyp_{\tmthree}}{\tmthree}{\mtypthree}
      }
     {
      \judg[m_{\tmtwo} + m_{\tmthree}, e_{\tmtwo} + e_{\tmthree}]
        {\tctx_{\tmtwo} + \tctx_{\tmthree};\var:(\optmtyp_{\tmtwo} + \optmtyp_{\tmthree})}
        {\tmtwo\esub{\vartwo}{\tmthree}}
        {\mtyptwo}
    }
    \right)
    \]
  with $\tctx = \tctx_{\tmtwo} + \tctx_{\tmthree}$,
  $\optmtyp = \optmtyp_{\tmtwo} + \optmtyp_{\tmthree}$,
  $m = m_{\tmtwo} + m_{\tmthree}$ and $e = e_{\tmtwo} + e_{\tmthree}$.
  The following holds:
  \begin{enumerate}
  \item[(a')]
    $\tmtwo 
     \tov{\rulesub{\var}{\val}}{\aset \cup \set{\var} \cup \set{\vartwo}}{\sset}{\appflag} 
     \tmtwo'$, by condition (a)
  \item[(b')]
    $\judg[m_{\tmtwo},e_{\tmtwo}]
      {\tctx_{\tmtwo};\var:\optmtyp_{\tmtwo};\vartwo: \optmtypthree}
      {\tmtwo}
      {\mtyptwo}$,
    condition (b)
  \item[(c')]
    $\isAppr
      {\aset\cup\set{\var}\cup\set{\vartwo}}
      {\tctx_{\tmtwo}; \var : \optmtyp_{\tmtwo}; \vartwo : \optmtypthree}$: 
    on the one hand,
    $\isAppr{\aset\cup\set{\var}}{\tctx; \var : \optmtyp_{\tmtwo}}$ holds
    and $\tctx = \tctx_{\tmtwo} + \tctx_{\tmthree}$, 
    $\optmtyp = \optmtyp_{\tmtwo} + \optmtyp_{\tmthree}$;
    on  the other hand,
    $\mtypthree \neq \tightN$ by \cref{lem:types_of_hereditary_abstractions} on $\tmthree$, 
    so $\optmtypthree \neq \tightN$ since 
    $\optmtypthree \mleq \mtypthree$.
  \item[(d')]
    If $\appflag = \app$
    then either $\mtyptwo = \tightN$
    or $\mtyptwo$ is a singleton, 
    by condition (d).
  \end{enumerate}
  We can apply \ih on $\tmtwo$, yielding $\typ$ and $\optmtyp_{s2}$ such that
  $\optmtyp_{\tmtwo} = \mset{\typ} + \optmtyp_{s2}$ and such that, whenever 
  $\judg[m', e']{\tctxtwo}{\val}{\mset{\typ}}$, we have that $e_{\tmthree} > 0$ and
  $\judg[m_{\tmtwo} + m', e_{\tmtwo} + e' - 1]
    {\tctx_{\tmtwo} + \tctxtwo;\var: \optmtyp_{s2};\vartwo: \optmtypthree}
    {\tmtwo'}
    {\mtyptwo}$ .
  Then, taking the last judgement and the first premise of $\deriv$,
  we can apply rule $\ruleTypES$, yielding
  $\judg[m + m',e + e' - 1]
    {\tctx + \tctxtwo; \var : (\optmtyp_{s2} + \optmtyp_{\tmthree})}
    {\tmtwo'\esub{\vartwo}{\tmthree}}
    {\mtyptwo}$, 
  with $e > 0$ because $e = e_{\tmtwo} + e_{\tmthree}$, and $e_{\tmtwo} > 0$ by \ih\
  And since we take $\optmtyp_2 = \optmtyp_{s2} + \optmtyp_{\tmthree}$,
  we can conclude $\optmtyp = \mset{\typ} + \optmtyp_2$.
\item $\ruleUEsLStruct$. The following conditions hold: 
  \begin{enumerate}
  \item[(a)]
    $\tmtwo\esub{\vartwo}{\tmthree} 
     \tov{\rulesub{\var}{\val}}{\aset\cup\set{\var}}{\sset}{\appflag} 
     \tmtwo'\esub{\vartwo}{\tmthree}$,
    with
    $\tmtwo 
     \tov{\rulesub{\var}{\val}}{\aset\cup\set{\var}}{\sset\cup\set{\vartwo}}{\appflag} 
     \tmtwo'$,
    $\tmthree \in \Struct{\sset}$, $\vartwo \notin \aset \cup \set{\var} \cup \sset$
    and $\vartwo \notin \fv{\rulesub{\var}{\val}}$
  \item[(b)]
    $\judg[m,e]{\tctx;\var:\optmtyp}{\tmtwo\esub{\vartwo}{\tmthree}}{\mtyptwo}$
  \item[(c)]
    $\isAppr{\aset\cup\set{\var}}{\tctx; \var : \optmtyp}$
  \item[(d)]
    If $\appflag = \app$
    then either $\mtyptwo = \tightN$
    or $\mtyptwo$ is a singleton, \ie, of the form $[\typ]$.    
  \end{enumerate}
  Notice that $\var \neq \vartwo$ since by hypothesis 
  $\vartwo \notin \aset \cup \set{\var} \cup \sset$,
  and that the judgement 
  $\judg[m,e]{\tctx;\var:\optmtyp}{\tmtwo\esub{\vartwo}{\tmthree}}{\mtyptwo}$
  can be only  derived using rule $\ruleTypES$, so
  \[
    \deriv \eqdef \left(
    \indrule{\ruleTypES}{
      \judg[m_{\tmtwo},e_{\tmtwo}]{\tctx_{\tmtwo};\var:\optmtyp_{\tmtwo};\vartwo: \optmtypthree}{\tmtwo}{\mtyptwo}
      \sep
      \optmtypthree \mleq \mtypthree
      \sep
      \judg[m_{\tmthree},e_{\tmthree}]{\tctx_{\tmthree};\var:\optmtyp_{\tmthree}}{\tmthree}{\mtypthree}
    }{
      \judg[m_{\tmtwo} + m_{\tmthree}, e_{\tmtwo} + e_{\tmthree}]
        {\tctx_{\tmtwo} + \tctx_{\tmthree};\var:(\optmtyp_{\tmtwo} + \optmtyp_{\tmthree})}
        {\tmtwo\esub{\vartwo}{\tmthree}}
        {\mtyptwo}
    }
    \right)
  \]
  with $\tctx = \tctx_{\tmtwo} + \tctx_{\tmthree}$,
  $\optmtyp = \optmtyp_{\tmtwo} + \optmtyp_{\tmthree}$,
  $m = m_{\tmtwo} + m_{\tmthree}$ and $e = e_{\tmtwo} + e_{\tmthree}$.
  The following holds:
  \begin{enumerate}
  \item[(a')]
    $\tmtwo 
     \tov{\rulesub{\var}{\val}}{\aset \cup \set{\var}}{\sset \cup \set{\vartwo}}{\appflag} 
     \tmtwo'$, by condition (a)
  \item[(b')]
    $\judg[m_{\tmtwo},e_{\tmtwo}]
      {\tctx_{\tmtwo};\var:\optmtyp_{\tmtwo};\vartwo: \optmtypthree}
      {\tmtwo}
      {\mtyptwo}$,
    by condition (b)
  \item[(c')] $\isAppr {\aset\cup\set{\var}} {\tctx_{\tmtwo}; \var :
    \optmtyp_{\tmtwo}; \vartwo : \optmtypthree}$, since (1)
    $\isAppr{\aset\cup\set{\var}}{\tctx; \var : \optmtyp}$ holds and
    $\tctx = \tctx_{\tmtwo} + \tctx_{\tmthree}$, $\optmtyp =
    \optmtyp_{\tmtwo} + \optmtyp_{\tmthree}$, and condition (c),
    (2) $\vartwo \notin \aset\cup\set{\var}$ by $\alpha$-conversion 
  \item[(d')]
    If $\appflag = \app$
    then either $\mtyptwo = \tightN$
    or $\mtyptwo$ is a singleton, 
    by condition (d).
  \end{enumerate}
  We can apply \ih on $\tmtwo$, yielding $\typ$ and $\optmtyp_{s2}$ such that
  $\optmtyp_{\tmtwo} = \mset{\typ} + \optmtyp_{s2}$ and such that, whenever 
  $\judg[m',e']{\tctxtwo}{\val}{\mset{\typ}}$, we have that $e_{\tmthree} > 0$ and
  $\judg[m_{\tmtwo} + m',e_{\tmtwo} + e' - 1]
    {\tctx_{\tmtwo} + \tctxtwo;\var: \optmtyp_{s2};\vartwo: \optmtypthree}
    {\tmtwo'}
    {\mtyp_{\tmtwo}}$.
  Then, taking the last judgement and the first premise of $\deriv$,
  we can apply rule $\ruleTypES$, yielding
  $\judg[m + m',e + e' - 1]
    {\tctx + \tctxtwo; \var : (\optmtyp_{s2} + \optmtyp_{\tmthree})}
    {\tmtwo'\esub{\vartwo}{\tmthree}}
    {\mtyptwo}$, 
  with $e > 0$ because $e = e_{\tmtwo} + e_{\tmthree}$, and $e_{\tmtwo} > 0$ by \ih\
  We take $\optmtyp_2 = \optmtyp_{s2} + \optmtyp_{\tmthree}$,
  and we can conclude that $\optmtyp = \mset{\typ} + \optmtyp_2$ holds.
}
\end{enumerate}
\end{proof}

\subjectreduction*
% Label: prop:subject_reduction

\sloppy
\begin{proof}
By induction on the derivation of
$\tm \tov{\rulename}{\aset}{\sset}{\appflag} \tm'$.
\begin{enumerate}
\item \ruleUDb.
  The following conditions hold:
  \begin{enumerate}
  \item[(a)]
    $(\lam{\var}{\tmtwo})\sctx \, \tmthree \tov{\ruledb}{\aset}{\sset}{\appflag} 
     \tmtwo\esub{\var}{\tmthree}\sctx$
  \item[(b)]
    $\judg[m,e]{\tctx}{(\lam{\var}{\tmtwo})\sctx \, \tmthree}{\mtyp}$
  \item[(c)]
    $\isAppr{\aset}{\tctx}$
  \item[(d)]
    If $\appflag = \app$
    then either $\mtyp = \tightN$
    or $\mtyp$ is a singleton, \ie, of the form $[\typ]$.
  \end{enumerate}
  The judgement of condition (b) can be derived either by rule $\ruleTypAppP$
  or rule $\ruleTypAppC$.
  \begin{enumerate}
  \item $\ruleTypAppP$.
    Then
    \[
      \indrule{\ruleTypAppP}{
        \judg[m_1,e_1]{\tctx_1}{(\lam{\var}{\tmtwo})\sctx}{\tightN}
        \sep
        \judg[m_2,e_2]{\tctx_2}{\tmthree}{\tight}
      }{
        \judg[m_1 + m_2,e_1 + e_2]{\tctx_1 + \tctx_2}{(\lam{\var}{\tmtwo})\sctx \, \tmthree}{\tightN}
      }
    \]
    where $\tctx = \tctx_1+\tctx_2$, $m=m_1 + m_2$, $e=e_1 + e_2$, $\tm=(\lam{\var}{\tmtwo})\sctx\, \tmthree$ and $\mtyp=\tightN$.
    By \cref{lem:composition}, there exist
    $\tctx_{11}, \tctx_{12}, \tctxtwo, m_{11}, e_{11}, m_{12}, e_{12}$ such that
    $\judgSctx[m_{12},e_{12}]{\tctx_{12}}{\sctx}{\tctxtwo}$ and
    $\judg[m_{11},e_{11}]{\tctx_{11}; \tctxtwo}{\lam{\var}{\tmtwo}}{\tightN}$.
    This leads to a contradiction, since an abstraction cannot have type $\tightN$. 
    Then this case is not possible.
  \item $\ruleTypAppC$.
    Then
    \[
      \indrule{\ruleTypAppC}{
        \judg[m_1,e_1]{\tctx_1}{(\lam{\var}{\tmtwo})\sctx}{\mset{\optmtyptwo \to \mtyp}}
        \sep
        \optmtyptwo \mleq \mtyptwo
        \sep
        \judg[m_2,e_2]{\tctx_2}{\tmthree}{\mtyptwo}
      }{
        \judg[1 + m_1 + m_2,e_1 + e_2]{\tctx_1 + \tctx_2}{(\lam{\var}{\tmtwo})\sctx \, \tmthree}{\mtyp}
      }
    \]
    where $\tctx = \tctx_1+\tctx_2$, $m=1+m_1 + m_2$, $e=e_1 + e_2$,
    and $\tm=(\lam{\var}{\tmtwo})\sctx\, \tmthree$.
    By \cref{lem:composition}, there exist
    $\tctx_{11}, \tctx_{12}, \tctxtwo, m_{11}, e_{11}, m_{12}, e_{12}$ such that
    $\judg[m_{11},e_{11}]{\tctx_{11}; \tctxtwo}{\lam{\var}{\tmtwo}}{\mset{\optmtyptwo \to \mtyp}}$
    and $\judgSctx[m_{12},e_{12}]{\tctx_{12}}{\sctx}{\tctxtwo}$, with 
    $\tctx_1 = \tctx_{11} + \tctx_{12}$, 
    $m_1 = m_{11} + m_{12}$ and 
    $e_1 = e_{11} + e_{12}$.

    The judgement for the term $\lam{\var}{\tmtwo}$ can only be derived by rule \ruleTypAbs:
    \[
      \indrule{\ruleTypAbs}{
        \judg[m_{11},e_{11}]
          {\tctx_{11}; \tctxtwo; \var : \optmtyptwo}
          {\tmtwo}
          {\mtyp}
      }{
        \judg[m_{11},e_{11}]
          {\tctx_{11}; \tctxtwo}
          {\lam{\var}{\tmtwo}}
          {\mset{\optmtyptwo \to \mtyp}}
      }
    \]
    Then we build the following derivation:
    \[
      \indrule{\ruleTypES}{
        \judg[m_{11},e_{11}]
          {\tctx_{11}; \tctxtwo; \var : \optmtyptwo}
          {\tmtwo}
          {\mtyp}
        \sep
        \optmtyptwo \mleq \mtyptwo
        \sep
        \judg[m_2,e_2]{\tctx_2}{\tmthree}{\mtyptwo}
      }{
        \judg[m_{11} + m_2,e_{11} + e_2]
          {\tctx_{11} + \tctx_2; \tctxtwo}
          {\tmtwo\esub{\var}{\tmthree}}
          {\mtyp}
      }
    \]
    By applying \cref{lem:composition} again, we can conclude that
    $\judg[m_1 + m_2, e_1 + e_2]{\tctx_1 + \tctx_2}{\tmtwo\esub{\var}{\tmthree}\sctx}{\mtyp}$.
    In this case $\rulename = \ruledb$, and
    with
    $m > 0$, and 
    $(m',e') = (m_1 + m_2, e_1 + e_2) = (m - 1, e)$.
  \end{enumerate}
\item \ruleULsv.
  The following conditions hold:
  \begin{enumerate}
  \item[(a)]
    \[
      \indrule{\ruleULsv}{
        \tmtwo 
        \tov{\rulesub{\var}{\val}}{\aset \cup \set{\var}}{\sset}{\appflag} 
        \tmtwo' 
        \sep
        \var \notin \aset \cup \sset
        \sep
        \val\sctx \in \HAbs{\aset}        
      }{
        \tmtwo\esub{\var}{\val\sctx} 
        \tov{\rulelsv}{\aset}{\sset}{\appflag} 
        \tmtwo'\esub{\var}{\val}\sctx 
      }
    \]
  \item[(b)]
    $\judg[m,e]{\tctx}{\tmtwo\esub{\var}{\val\sctx}}{\mtyp}$
  \item[(c)]
    $\isAppr{\aset}{\tctx}$
  \item[(d)]
    If $\appflag = \app$
    then either $\mtyp = \tightN$
    or $\mtyp$ is a singleton, \ie, of the form $[\typ]$.
  \end{enumerate}
  The judgement of condition (b) can only be derived by the rule $\ruleTypES$,
  so
  \[
    \indrule{\ruleTypES}{
      \judg[m_1,e_1]{\tctx_1; \var : \optmtyptwo}{\tmtwo}{\mtyp}
      \sep
      \optmtyptwo \mleq \mtyptwo
      \sep
      \judg[m_2,e_2]{\tctx_2}{\val\sctx}{\mtyptwo}
    }{
      \judg[m_1 + m_2,e_1 + e_2]{\tctx_1 + \tctx_2}{\tmtwo\esub{\var}{\val\sctx}}{\mtyp}
    }
  \]
  where $\tctx = \tctx_1+\tctx_2$, $m=m_1 + m_2$, $e=e_1 + e_2$,
  and $\tm=\tmtwo\esub{\var}{\val\sctx}$.
  By \cref{lem:composition}, there exist
  $\tctx_{21}, \tctx_{22}, \tctxtwo, m_{21}, e_{21}, m_{22}, e_{22}$ such that
  $\judg[m_{21},e_{21}]{\tctx_{21}; \tctxtwo}{\val}{\mtyptwo}$
  and $\judgSctx[m_{22},e_{22}]{\tctx_{22}}{\sctx}{\tctxtwo}$, with 
  $\tctx_2 = \tctx_{21} + \tctx_{22}$, 
  $m_2 = m_{21} + m_{22}$ and 
  $e_2 = e_{21} + e_{22}$.

  Moreover, we can assume
  $\var \notin \dom{\tctx}$
  by $\alpha$-conversion 
  so that $\isAppr{\aset\cup \set{\var}}{\tctx}$ also holds.
  Therefore $\isAppr{\aset \cup \set{\var}}{\tctx_1; \var : \optmtyptwo}$ 
  holds since (1): $\isAppr{\aset\cup \set{\var}}{\tctx}$ and $\tctx_1$ is part
  of $\tctx$, and  (2): $\val\sctx \in \HAbs{\aset}$ by condition (a),
  so $\mtyptwo \neq \tightN$
  by \cref{lem:types_of_hereditary_abstractions} 
  and then $\optmtyptwo \mleq \mtyptwo$
  in condition (b)
  implies $\optmtyptwo \neq \tightN$.
  Then, we have all the conditions to apply \cref{lem:substitution}, 
  yielding $\typ$ and $\optmtyptwo_2$ such that
  $\optmtyptwo = \mset{\typ} + \optmtyptwo_2$,
  hence $\optmtyptwo=\mtyptwo$, thus $\mtyptwo = \mset{\typ} + \optmtyptwo_2$.
  We have two cases, depending on whether $\optmtyptwo_2$ is $\none$ or not:
  \begin{enumerate}
  \item $\optmtyptwo_2 = \none$. Then, $\mtyptwo = \mset{\typ} + \optmtyptwo_2 = \mset{\typ} = \mset{\typ} + \emset$.
    Recall that $\judg[m_{21},e_{21}]{\tctx_{21}; \tctxtwo}{\val}{\mset{\typ}}$
    holds and note that moreover $\judg[0,0]{\emptyctx}{\val}{\emset}$.
    By \cref{lem:substitution}
    we have that $e_1 > 0$ and
    $\judg[m_1 + m_{21},e_1 + e_{21} - 1]
      {\tctx_1 + (\tctx_{21}; \tctxtwo); \var : \none}
      {\tmtwo'}
      {\mtyp}$.
    Then we build the following derivation:
    \[
      \indrule{\ruleTypES}{
      \judg[m_1 + m_{21},e_1 + e_{21} - 1]
          {\tctx_1 + (\tctx_{21}; \tctxtwo); \var : \none}
          {\tmtwo'}
          {\mtyp}
        \sep
        \none \mleq \emset
        \sep
        \judg[0, 0]{\emptyctx}{\val}{\emset}
      }{
        \judg[m_1 + m_{21},e_1 + e_{21} - 1]
          {\tctx_1 + \tctx_{21}; \tctxtwo}
          {\tmtwo'\esub{\var}{\val}}
          {\mtyp}
      }
    \]
  \item $\optmtyptwo_2 \neq \none$.
    Then $\optmtyptwo_2 = \mtyptwo_2$
    so we can write $\mtyptwo = \mset{\typ} + \mtyptwo_2$.
    Hence we have $\judg[m_{21}, e_{21}]{\tctx_{21}; \tctxtwo}{\val}{\mset{\typ} + \mtyptwo_2}$.
    We can apply \cref{lem:splitting}, so that there exists
    $\tctx'_{211} = \tctx_{211}; \tctxtwo_1, 
     \tctx'_{212} = \tctx_{212}; \tctxtwo_2$, 
    $m_{211}, m_{212}, e_{211}, e_{212}$ such that
    $\judg[m_{211}, e_{211}]{\tctx'_{211}}{\val}{\mset{\typ}}$ and
    $\judg[m_{212}, e_{212}]{\tctx'_{212}}{\val}{\mtyptwo_2}$,
    with
    $\tctx_{21} = \tctx_{211} + \tctx_{212}$
    and $\tctxtwo = \tctxtwo_1 + \tctxtwo_2$, 
    and $m_{21} = m_{211} + m_{212}$ and $e_{21} = e_{211} + e_{212}$.
    So by \cref{lem:substitution},
    whenever 
    $\judg[m_{211},e_{211}]{\tctx_{211}; \tctxtwo_1}{\val}{\mset{\typ}}$, we have that $e_1 > 0$ and
    $\judg[m_1 + m_{211},e_1 + e_{211} - 1]
      {\tctx_1 + (\tctx_{211}; \tctxtwo_1); \var : \mtyptwo_2}
      {\tmtwo'}
      {\mtyp}$.
    Then we build the following derivation:
    \[
      \indrule{\ruleTypES}{
      \judg[m_1 + m_{211},e_1 + e_{211} - 1]
          {\tctx_1 + (\tctx_{211}; \tctxtwo_1); \var : \mtyptwo_2}
          {\tmtwo'}
          {\mtyp}
        \sep
        \mtyptwo_2 \mleq \mtyptwo_2
        \sep
        \judg[m_{212},e_{212}]{\tctx_{212}; \tctxtwo_2}{\val}{\mtyptwo_2}
      }{
        \judg[m_1 + m_{21},e_1 + e_{21} - 1]
          {\tctx_1 + \tctx_{21}; \tctxtwo}
          {\tmtwo'\esub{\var}{\val}}
          {\mtyp}
      }
    \]
  \end{enumerate}
  We can now apply \cref{lem:composition} using the judgements
  $\judgSctx[m_{22},e_{22}]{\tctx_{22}}{\sctx}{\tctxtwo}$ and 
  $\judg[m_1 + m_{21}, e_1 + e_{21} - 1]
    {\tctx_1 + \tctx_{21}; \tctxtwo}
    {\tmtwo'\esub{\var}{\val}}
    {\mtyp}$,
  yielding
  $\judg[m, e - 1]{\tctx}{\tmtwo'\esub{\var}{\val}\sctx}{\mtyp}$.
  In this case $\rulename = \rulelsv$ and it holds that
  $e > 0$ since $e_1 > 0$ and $e = e_1 + e_2$.
  Moreover, $(m', e') = (m, e - 1)$, so we are done.
\HIDDENFRAGMENT{
\item The congruence cases are uninteresting and are omitted here.
}{
  \item \ruleUAppL. 
    The following conditions hold:
    \begin{enumerate}
    \item[(a)]
      $\indrule{\ruleUAppL}{
        \tmtwo \tov{\rulename}{\aset}{\sset}{\app} \tmtwo'
      }{
        \tmtwo \, \tmthree \tov{\rulename}{\aset}{\sset}{\appflag} \tmtwo' \, \tmthree
      }$
      where $\rulename \in \set{\ruledb,\rulelsv}$
    \item[(b)]
      $\judg[m,e]{\tctx}{\tmtwo \, \tmthree}{\mtyp}$
    \item[(c)]
      $\isAppr{\aset}{\tctx}$
    \item[(d)]
      If $\appflag = \app$
      then either $\mtyp = \tightN$
      or $\mtyp$ is a singleton, \ie, of the form $[\typ]$.
    \end{enumerate}

    The judgement of condition (b) can be derived either by rule $\ruleTypAppP$ or rule $\ruleTypAppC$.
    \begin{enumerate}
    \item $\ruleTypAppP$.
      Then
      \[
        \indrule{\ruleTypAppP}{
          \judg[m_1,e_1]{\tctx_1}{\tmtwo}{\tightN}
          \sep
          \judg[m_2,e_2]{\tctx_2}{\tmthree}{\tight}
        }{
          \judg[m_1 + m_2,e_1 + e_2]{\tctx_1 + \tctx_2}{\tmtwo \, \tmthree}{\tightN}
        }
      \]
      where $\tctx = \tctx_1+\tctx_2$, $m=m_1 + m_2$, $e=e_1 + e_2$, $\tm=\tmtwo \, \tmthree$ and $\mtyp=\tightN$.
      We have that the following hold:
      \begin{enumerate}
      \item[(a')]
        $\tmtwo \tov{\rulename}{\aset}{\sset}{\app} \tmtwo'$
        where $\rulename \in \set{\ruledb,\rulelsv}$
      \item[(b')]
        $\judg[m_1,e_1]{\tctx_1}{\tmtwo}{\tightN}$
      \item[(c')]
        $\isAppr{\aset}{\tctx_1}$, since $\isAppr{\aset}{\tctx}$ by hypothesis and $\tctx = \tctx_1 + \tctx_2$ 
      \item[(d')]
        It holds that $\appflag = \app$, and the term $\tmtwo$ 
        is typed with $\tightN$.
      \end{enumerate}
      We can apply \ih on $\tmtwo$, yielding
      $\judg[m'_1,e'_1]{\tctx_1}{\tmtwo'}{\tightN}$,
      where,
      if $\rulename = \ruledb$  then $m_1 > 0$ and $(m'_1,e'_1) = (m_1 - 1,e_1)$, and
      if $\rulename = \rulelsv$ then $e_1 > 0$ and $(m'_1,e'_1) = (m_1,e_1-1)$.
      Hence we can apply rule $\ruleTypAppP$:
      \[
        \indrule{\ruleTypAppP}{
          \judg[m'_1,e'_1]{\tctx_1}{\tmtwo'}{\tightN}
          \sep
          \judg[m_2,e_2]{\tctx_2}{\tmthree}{\tight}
        }{
          \judg[m'_1 + m_2,e'_1 + e_2]{\tctx_1 + \tctx_2}{\tmtwo' \, \tmthree}{\tightN}
        }
      \]
      where
      if $\rulename = \ruledb$ then
      $m_1 + m_2 > 0$ since $m_1 > 0$, and 
      $(m'_1 + m_2,e'_1 + e_2) = (m_1 - 1 + m_2,e_1 + e_2)$ by \ih,
      and
      if $\rulename = \rulelsv$ then
      $e_1 + e_2 > 0$ since $e_1 > 0$,
      and 
      $(m'_1 + m_2,e'_1 + e_2) = (m_1 + m_2, e_1 - 1 + e_2)$ by \ih,
      so we are done.
    \item $\ruleTypAppC$.
      Then
      \[
        \indrule{\ruleTypAppC}{
          \judg[m_1,e_1]{\tctx_1}{\tmtwo}{\mset{\optmtyptwo \to \mtyp}}
          \sep
          \optmtyptwo \mleq \mtyptwo
          \sep
          \judg[m_2,e_2]{\tctx_2}{\tmthree}{\mtyptwo}
        }{
          \judg[1 + m_1 + m_2,e_1 + e_2]{\tctx_1 + \tctx_2}{\tmtwo \, \tmthree}{\mtyp}
        }
      \]
      where $\tctx = \tctx_1+\tctx_2$, $m=1+m_1 + m_2$, $e=e_1 + e_2$,
      and $\tm = \tmtwo \, \tmthree$. 
      We have that the following hold:
      \begin{enumerate}
      \item[(a')]
        $\tmtwo \tov{\rulename}{\aset}{\sset}{\app} \tmtwo'$
        where $\rulename \in \set{\ruledb,\rulelsv}$
      \item[(b')]
        $\judg[m_1,e_1]{\tctx_1}{\tmtwo}{\mset{\optmtyptwo \to \mtyp}}$
      \item[(c')]
        $\isAppr{\aset}{\tctx_1}$, since $\isAppr{\aset}{\tctx}$ by hypothesis and $\tctx = \tctx_1 + \tctx_2$ 
      \item[(d')]
        In case $\appflag = \app$,
        then it is not possible to have $\mset{\optmtyptwo \to \mtyp} = \tightN$,
        but it holds that $\mset{\optmtyptwo \to \mtyp}$ is a singleton.
      \end{enumerate}
      We can apply \ih on $\tmtwo$, yielding
      $\judg[m'_1,e'_1]{\tctx_1}{\tmtwo'}{\mset{\optmtyptwo \to \mtyp}}$,
      where,
      if $\rulename = \ruledb$  then $m_1 > 0$ and $(m'_1,e'_1) = (m_1 - 1,e_1)$, and
      if $\rulename = \rulelsv$ then $e_1 > 0$ and $(m'_1,e'_1) = (m_1,e_1-1)$.
      Hence we can apply rule $\ruleTypAppC$:
      \[
        \indrule{\ruleTypAppC}{
          \judg[m'_1,e'_1]{\tctx_1}{\tmtwo'}{\mset{\optmtyptwo \to \mtyp}}
          \sep
          \optmtyptwo \mleq \mtyptwo
          \sep
          \judg[m_2,e_2]{\tctx_2}{\tmthree}{\mtyptwo}
        }{
          \judg[1 + m'_1 + m_2,e'_1 + e_2]{\tctx_1 + \tctx_2}{\tmtwo' \, \tmthree}{\mtyp}
        }
      \]
      where
      if $\rulename = \ruledb$ then 
      $1 + m_1 + m_2 > 0$, 
      and 
      $ (1 + m'_1 + m_2,e'_1 + e_2) 
      = (1 + m_1 - 1 + m_2,e_1 + e_2)
      = (m_1 + m_2, e_1 + e_2$ 
      by \ih, and
      if $\rulename = \rulelsv$ then 
      $e_1 + e_2 > 0$ since $e_1 > 0$ by \ih,
      and 
      $(1 + m'_1 + m_2, e'_1 + e_2) = (1 + m_1 + m_2, e_1 - 1 + e_2)$ by \ih,
      so we are done.
    \end{enumerate}
  \item \ruleUAppR.
    The following conditions hold:
    \begin{enumerate}
    \item[(a)]
      $\indrule{\ruleUAppR}{
        \tmtwo \in \Struct{\sset}
        \sep
        \tmthree \tov{\rulename}{\aset}{\sset}{\nonapp} \tmthree'
      }{
        \tmtwo \, \tmthree \tov{\rulename}{\aset}{\sset}{\appflag} \tmtwo \, \tmthree'
      }$
      where $\rulename \in \set{\ruledb,\rulelsv}$
    \item[(b)]
      $\judg[m,e]{\tctx}{\tmtwo \, \tmthree}{\mtyp}$
    \item[(c)]
      $\isAppr{\aset}{\tctx}$
    \item[(d)]
      If $\appflag = \app$
      then either $\mtyp = \tightN$
      or $\mtyp$ is a singleton, \ie, of the form $[\typ]$.
    \end{enumerate}

    The judgement of condition (b) can be derived either by rule $\ruleTypAppP$ or rule $\ruleTypAppC$.
    \begin{enumerate}
    \item \ruleTypAppP.
      Then
      \[
        \indrule{\ruleTypAppP}{
          \judg[m_1,e_1]{\tctx_1}{\tmtwo}{\tightN}
          \sep
          \judg[m_2,e_2]{\tctx_2}{\tmthree}{\tight}
        }{
          \judg[m_1 + m_2,e_1 + e_2]{\tctx_1 + \tctx_2}{\tmtwo \, \tmthree}{\tightN}
        }
      \]
      where $\tctx = \tctx_1+\tctx_2$, $m=m_1 + m_2$, $e=e_1 + e_2$, $\tm=\tmtwo \, \tmthree$ and $\mtyp=\tightN$.
      We have that the following hold:
      \begin{enumerate}
      \item[(a')]
        $\tmthree \tov{\rulename}{\aset}{\sset}{\nonapp} \tmthree'$
        where $\rulename \in \set{\ruledb,\rulelsv}$
      \item[(b')]
        $\judg[m_2,e_2]{\tctx_2}{\tmthree}{\tight}$
      \item[(c')]
        $\isAppr{\aset}{\tctx_2}$, since $\isAppr{\aset}{\tctx}$ by hypothesis and $\tctx = \tctx_1 + \tctx_2$ 
      \item[(d')]
        Here $\appflag \neq \app$.
      \end{enumerate}
      We can apply \ih on $\tmthree$, yielding
      $\judg[m'_2,e'_2]{\tctx_2}{\tmthree'}{\tight}$,
      where,
      if $\rulename = \ruledb$  then $m_2 > 0$ and $(m'_2,e'_2) = (m_2 - 1,e_2)$, and
      if $\rulename = \rulelsv$ then $e_2 > 0$ and $(m'_2,e'_2) = (m_2,e_2-1)$.
      Hence we can apply rule $\ruleTypAppP$:
      \[
        \indrule{\ruleTypAppP}{
          \judg[m_1,e_1]{\tctx_1}{\tmtwo}{\tightN}
          \sep
          \judg[m'_2,e'_2]{\tctx_2}{\tmthree'}{\tight}
        }{
          \judg[m_1 + m'_2,e_1 + e'_2]{\tctx_1 + \tctx_2}{\tmtwo \, \tmthree'}{\tightN}
        }
      \]
      where
      if $\rulename = \ruledb$ then 
      $m_1 + m_2 > 0$ since $m_2 > 0$ 
      and
      $(m_1 + m'_2, e_1 + e'_2) = (m_1 + m_2 - 1, e_1 + e_2)$ by \ih,
      and if $\rulename = \rulelsv$ then 
      $e_1 + e_2 > 0$ since $e_2 > 0$ and 
      $(m_1 + m'_2,e_1 + e'_2) = (m_1 + m_2, e_1 + e_2 - 1)$ by \ih,
      so we are done.
    \item \ruleTypAppC.
      Then
      \[
        \indrule{\ruleTypAppC}{
          \judg[m_1,e_1]{\tctx_1}{\tmtwo}{\mset{\optmtyptwo \to \mtyp}}
          \sep
          \optmtyptwo \mleq \mtyptwo
          \sep
          \judg[m_2,e_2]{\tctx_2}{\tmthree}{\mtyptwo}
        }{
          \judg[1 + m_1 + m_2,e_1 + e_2]{\tctx_1 + \tctx_2}{\tmtwo \, \tmthree}{\mtyp}
        }
      \]
      where $\tctx = \tctx_1+\tctx_2$, $m=1+m_1 + m_2$, $e=e_1 + e_2$, and $\tm=\tmtwo \, \tmthree$.
      We have that the following hold:
      \begin{enumerate}
      \item[(a')]
        $\tmthree \tov{\rulename}{\aset}{\sset}{\nonapp} \tmthree'$
        where $\rulename \in \set{\ruledb,\rulelsv}$
      \item[(b')]
        $\judg[m_2,e_2]{\tctx_2}{\tmthree}{\mtyptwo}$
      \item[(c')]
        $\isAppr{\aset}{\tctx_2}$, since $\isAppr{\aset}{\tctx}$ by hypothesis and $\tctx = \tctx_1 + \tctx_2$ 
      \item[(d')]
        Here $\appflag \neq \app$.
      \end{enumerate}
      We can apply \ih on $\tmthree$, yielding
      $\judg[m'_2,e'_2]{\tctx_2}{\tmthree'}{\mtyptwo}$,
      where,
      if $\rulename = \ruledb$  then $m_2 > 0$ and $(m'_2,e'_2) = (m_2 - 1,e_2)$, and
      if $\rulename = \rulelsv$ then $e_2 > 0$ and $(m'_2,e'_2) = (m_2,e_2-1)$.
      Hence we can apply rule \ruleTypAppC:
      \[
        \indrule{\ruleTypAppC}{
          \judg[m_1,e_1]{\tctx_1}{\tmtwo}{\mset{\optmtyptwo \to \mtyp}}
          \sep
          \optmtyptwo \mleq \mtyptwo
          \sep
          \judg[m'_2,e'_2]{\tctx_2}{\tmthree'}{\mtyptwo}
        }{
          \judg[1 + m_1 + m'_2,e_1 + e'_2]{\tctx_1 + \tctx_2}{\tmtwo \, \tmthree'}{\mtyp}
        }
      \]
      where
      if $\rulename = \ruledb$ then
      $1 + m_1 + m_2 > 0$, and 
      $(1 + m_1 + m'_2, e_1 + e'_2) = (1 + m_1 + m_2 - 1,e_1 + e_2)$ by \ih, 
      and if $\rulename = \rulelsv$ then
      $e_1 + e_2 > 0$ since $e_2 > 0$ and 
      $(1 + m_1 + m'_2, e_1 + e'_2) = (1 + m_1 + m_2, e_1 + e_2 - 1)$ by \ih,
      so we are done.
    \end{enumerate}
  \item $\ruleUEsR$.
    This case is analogous to the previous one.
  \item $\ruleUEsLAbs$.
    The following conditions hold:
    \begin{enumerate}
    \item[(a)]
      $\indrule{\ruleUEsLAbs}{
        \tmtwo \tov{\rulename}{\aset \cup \set{\var}}{\sset}{\rulename} \tmtwo'
        \sep
        \tmthree \in \HAbs{\aset}
        \sep
        \var \notin \aset \cup \sset
        \sep
        \var \notin \fv{\rulename}
      }{
        \tmtwo\esub{\var}{\tmthree} \tov{\rulename}{\aset}{\sset}{\appflag} \tmtwo'\esub{\var}{\tmthree}
      }$
      where $\rulename \in \set{\ruledb,\rulelsv}$
    \item[(b)]
      $\judg[m,e]{\tctx}{\tmtwo\esub{\var}{\tmthree}}{\mtyp}$
    \item[(c)]
      $\isAppr{\aset}{\tctx}$
    \item[(d)]
      If $\appflag = \app$
      then either $\mtyp = \tightN$
      or $\mtyp$ is a singleton, \ie, of the form $[\typ]$.
    \end{enumerate}
    The judgement of condition (b) can be derived only by the rule $\ruleTypES$:
    \[
      \indrule{\ruleTypES}{
        \judg[m_1,e_1]{\tctx_1; \var : \optmtyptwo}{\tmtwo}{\mtyp}
        \sep
        \optmtyptwo \mleq \mtyptwo
        \sep
        \judg[m_2,e_2]{\tctx_2}{\tmthree}{\mtyptwo}
      }{
        \judg[m_1 + m_2,e_1 + e_2]{\tctx_1 + \tctx_2}{\tmtwo\esub{\var}{\tmthree}}{\mtyp}
      }
    \]
    where $\tctx = \tctx_1+\tctx_2$, $m=m_1 + m_2$, $e=e_1 + e_2$, and $\tm=\tmtwo\esub{\var}{\tmthree}$.
    We have that the following hold:
    \begin{enumerate}
    \item[(a')]
      $\tmtwo \tov{\rulename}{\aset \cup \set{\var}}{\sset}{\app} \tmtwo'$
      where $\rulename \in \set{\ruledb,\rulelsv}$
    \item[(b')]
      $\judg[m_1,e_1]{\tctx_1; \var : \optmtyptwo}{\tmtwo}{\mtyp}$
    \item[(c')]
      $\isAppr{\aset \cup \set{\var}}{\tctx_1; \var : \optmtyptwo}$, since 
      (1) $\isAppr{\aset}{\tctx}$ by hypothesis and $\tctx = \tctx_1 + \tctx_2$, and
      (2) given $\tmthree \in \HAbs{\aset}$ by hypothesis of rule $\ruleUEsLAbs$, 
      we can apply \cref{lem:types_of_hereditary_abstractions} on $\tmthree$ 
      yielding $\mtyptwo \neq \tightN$, so since $\optmtyptwo \mleq \mtyptwo$ 
      we can say that $\optmtyptwo \neq \tightN$
    \item[(d')]
      If $\appflag = \app$
      then either $\mtyp = \tightN$
      or $\mtyp$ is a singleton, \ie, of the form $[\typ]$, by condition (d).
    \end{enumerate}
    We can apply \ih on $\tmtwo$, yielding
    $\judg[m'_1,e'_1]{\tctx_1; \var : \optmtyptwo}{\tmtwo'}{\mtyp}$,
    where,
    if $\rulename = \ruledb$  then $m_1 > 0$ and $(m'_1,e'_1) = (m_1 - 1,e_1)$, and
    if $\rulename = \rulelsv$ then $e_1 > 0$ and $(m'_1,e'_1) = (m_1,e_1-1)$.
    Hence we can apply rule \ruleTypES:
    \[
      \indrule{\ruleTypES}{
        \judg[m'_1,e'_1]{\tctx_1; \var : \optmtyptwo}{\tmtwo'}{\mtyp}
        \sep
        \optmtyptwo \mleq \mtyptwo
        \sep
        \judg[m_2,e_2]{\tctx_2}{\tmthree}{\mtyptwo}
      }{
        \judg[m'_1 + m_2,e'_1 + e_2]{\tctx_1 + \tctx_2}{\tmtwo'\esub{\var}{\tmthree}}{\mtyp}
      }
    \]
    where
    if $\rulename = \ruledb$ then 
    $m_1 + m_2 > 0$ since $m_1 > 0$ and 
    $(m'_1 + m_2, e'_1 + e_2) = (m_1 - 1 + m_2, e_1 + e_2)$ by \ih, 
    and if $\rulename = \rulelsv$ then 
    $e_1 + e_2 > 0$ and 
    $(m'_1 + m_2, e'_1 + e_2) = (m_1 + m_2, e_1 - 1 + e_2)$ by \ih,
    so we are done.
  \item $\ruleUEsLStruct$.
    The following conditions hold:
    \begin{enumerate}
    \item[(a)]
      $\indrule{\ruleUEsLStruct}{
        \tmtwo \tov{\rulename}{\aset}{\sset \cup \set{\var}}{\rulename} \tmtwo'
        \sep
        \tmthree \in \Struct{\sset}
        \sep
        \var \notin \aset \cup \sset
        \sep
        \var \notin \fv{\rulename}
      }{
        \tmtwo\esub{\var}{\tmthree} \tov{\rulename}{\aset}{\sset}{\appflag} \tmtwo'\esub{\var}{\tmthree}
      }$
      where $\rulename \in \set{\ruledb,\rulelsv}$
    \item[(b)]
      $\judg[m,e]{\tctx}{\tmtwo\esub{\var}{\tmthree}}{\mtyp}$
    \item[(c)]
      $\isAppr{\aset}{\tctx}$
    \item[(d)]
      If $\appflag = \app$
      then either $\mtyp = \tightN$
      or $\mtyp$ is a singleton, \ie, of the form $[\typ]$.
    \end{enumerate}
    The judgement of condition (b) can be derived only by the rule \ruleTypES:
    \[
      \indrule{\ruleTypES}{
        \judg[m_1,e_1]{\tctx_1; \var : \optmtyptwo}{\tmtwo}{\mtyp}
        \sep
        \optmtyptwo \mleq \mtyptwo
        \sep
        \judg[m_2,e_2]{\tctx_2}{\tmthree}{\mtyptwo}
      }{
        \judg[m_1 + m_2,e_1 + e_2]{\tctx_1 + \tctx_2}{\tmtwo\esub{\var}{\tmthree}}{\mtyp}
      }
    \]
    where $\tctx = \tctx_1+\tctx_2$, $m=m_1 + m_2$, $e=e_1 + e_2$, and $\tm=\tmtwo\esub{\var}{\tmthree}$.
    We have that the following hold:
    \begin{enumerate}
    \item[(a')]
      $\tmtwo \tov{\rulename}{\aset}{\sset \cup \set{\var}}{\app} \tmtwo'$
      where $\rulename \in \set{\ruledb,\rulelsv}$
    \item[(b')]
      $\judg[m_1,e_1]{\tctx_1; \var : \optmtyptwo}{\tmtwo}{\mtyp}$
    \item[(c')]
      $\isAppr{\aset}{\tctx_1; \var : \optmtyptwo}$, since 
      (1) $\isAppr{\aset}{\tctx}$ by hypothesis and $\tctx = \tctx_1 + \tctx_2$, and
      (2) $\var \notin \aset \cup \sset$ by hypothesis
    \item[(d')]
      If $\appflag = \app$
      then either $\mtyp = \tightN$
      or $\mtyp$ is a singleton, \ie, of the form $[\typ]$, by condition (d).
    \end{enumerate}
    We can apply \ih on $\tmtwo$, yielding
    $\judg[m'_1,e'_1]{\tctx_1; \var : \optmtyptwo}{\tmtwo'}{\mtyp}$,
    where,
    if $\rulename = \ruledb$  then $m_1 > 0$ and $(m'_1,e'_1) = (m_1 - 1,e_1)$, and
    if $\rulename = \rulelsv$ then $e_1 > 0$ and $(m'_1,e'_1) = (m_1,e_1-1)$.
    Hence we can apply rule $\ruleTypES$:
    \[
      \indrule{\ruleTypES}{
        \judg[m'_1,e'_1]{\tctx_1; \var : \optmtyptwo}{\tmtwo'}{\mtyp}
        \sep
        \optmtyptwo \mleq \mtyptwo
        \sep
        \judg[m_2,e_2]{\tctx_2}{\tmthree}{\mtyptwo}
      }{
        \judg[m'_1 + m_2,e'_1 + e_2]{\tctx_1 + \tctx_2}{\tmtwo'\esub{\var}{\tmthree}}{\mtyp}
      }
    \]
    where
    if $\rulename = \ruledb$ then 
    $m_1 + m_2 > 0$ since $m_1 > 0$ and 
    $(m'_1 + m_2, e'_1 + e_2) = (m_1 - 1 + m_2, e_1 + e_2)$ by \ih, 
    and if $\rulename = \rulelsv$ then 
    $e_1 + e_2 > 0$ since $e_1 > 0$ and 
    $(m'_1 + m_2, e'_1 + e_2) = (m_1 + m_2, e_1 - 1 + e_2)$ by \ih,
    so we are done.
}
\end{enumerate}
\end{proof}

\soundnesstyping*

\begin{proof}
By induction on $\cm + \ce$, separating in cases depending of whether 
$\tm \in \NF\emptyset{\fv\tm}\nonapp$:
\begin{enumerate}
\item 
  If $\tm \in \NF{\emptyset}{\fv{\tm}}{\nonapp}$,
  we also have that the judgement $\judg[m,e]{\tctx}{\tm}{\mtyp}$ is tight, 
  $\isAppr{\emptyset}{\tctx}$, and $\appflag = \nonapp$.
  We can apply \cref{lem:normal_forms_zero_counter}, yielding $m = 0, e = 0$. 
  Then we conclude with $\tmtwo \eqdef \tm$.
\item 
  If $\tm \notin \NF{\emptyset}{\fv{\tm}}{\nonapp}$,
  then 
  by \cref{thm:characterization_of_useful_normal_forms}
  there exist a term $\tm'$ and a rule name $\rulename$ 
  such that
  $\tm \tov{\rulename}{\emptyset}{\fv{\tm}}{\nonapp} \tm'$.
  By Subject reduction (\cref{prop:subject_reduction}), we have that
  $\judg[m', e']{\tctx}{\tm'}{\mtyp}$, 
  with $m > 0$ and $(m', e') = (m - 1, e)$ if $\rulename = \ruledb$,
  and if $\rulename = \rulelsv$ then $e > 0$ and $(m', e') = (m, e - 1)$.
  Since the judgement is tight, we can apply \ih,
  so there exists a term $\tmtwo \in \NF{\emptyset}{\fv{\tm'}}{\nonapp}$ such that
  $\tm' \tov{\rulename_1}{\emptyset}{\fv{\tm'}}{\nonapp} \hdots \tov{\rulename_n}{\emptyset}{\fv{\tm'}}{\nonapp} \tmtwo$,
  where $n = m + e - 1$ and 
  \[
    m' = \#\set{i \ST 1 \leq i \leq n, \rulename_i = \ruledb}
    \HS
    e' = \#\set{i \ST 1 \leq i \leq n, \rulename_i = \rulelsv}
  \]

  Then we have a reduction sequence. If we rename $\rulename$ by $\rulename_{n + 1}$, we can conclude that
  \[
    \tm    \tov{\rulename_{n + 1}}{\emptyset}{\fv{\tm}}{\nonapp} 
    \tm'   \tov{\rulename_1}{\emptyset}{\fv{\tm}}{\nonapp}
    \hdots \tov{\rulename_n}{\emptyset}{\fv{\tm}}{\nonapp}
    \tmtwo
  \]
  where
  $1 + n = m + e$ and if
  \begin{itemize}
  \item $\rulename_{n+1} = \ruledb$:
    \[
      \begin{array}{rcl}
          m
        & = 
        & (m - 1) + 1
        \\
        & = 
        & m' + 1
        \\
        & = 
        & \#\set{i \ST 1 \leq i \leq 1, \rulename_i = \ruledb} + 1
        \\
        & = 
        & \#\set{i \ST 1 \leq i \leq 1 + n, \rulename_i = \ruledb}
      \end{array}
      \begin{array}{rcl}
          e
        & =
        & e'
        \\
        & =
        & \#\set{i \ST 1 \leq i \leq n, \rulename_i = \rulelsv}
        \\
        & = 
        & \#\set{i \ST 1 \leq i \leq 1 + n, \rulename_i = \rulelsv}
      \end{array}
    \]
  \item $\rulename_{n+1} = \rulelsv$:
    \[
      \begin{array}{rcl}
          m
        & = 
        & m'
        \\
        & = 
        & \#\set{i \ST 1 \leq i \leq n, \rulename_i = \ruledb}
        \\
        & = 
        & \#\set{i \ST 1 \leq i \leq 1 + n, \rulename_i = \ruledb}
      \end{array}
      \begin{array}{rcl}
          e
        & =
        & (e - 1) + 1
        \\
        & =
        & e' + 1
        \\
        & =
        & \#\set{i \ST 1 \leq i \leq n, \rulename_i = \rulelsv} + 1
        \\
        & = 
        & \#\set{i \ST 1 \leq i \leq 1 + n, \rulename_i = \rulelsv}
      \end{array}
    \]
  \end{itemize}
\end{enumerate}
\end{proof}

\subsection{Completeness of $\typesystem$}
\label{sec:completeness_results}

In this subsection we give the main results regarding the completeness of
the type system $\typesystem$ with respect to the $\UOCBV$ strategy.
In order to prove this result, stated in \cref{thm:completeness_typing}, we need to 
show that the subject expansion property (\cref{prop:subject_expansion}) holds.
For doing that, we start by 
presenting the anti-substitution lemma for $\typesystem$ in \cref{lem:anti_substitution}.
Moreover, given that we are working with a tight typing system, we also need 
to show that normal forms of \UOCBV are tight typable, as stated in \cref{prop:nfs_are_tight_typable}.

\nftighttypable*
% Label: prop:nfs_are_tight_typable

\begin{proof}
By induction on the derivation of $\tm \in \NF{\aset}{\sset}{\appflag}$. \\

\begin{enumerate}
\item $\ruleUNFVar$. Then
  \[
  \indrule{\ruleUNFVar}{
    \var \in \aset \implies \appflag = \nonapp
  }{
    \var \in \NF{\aset}{\sset}{\appflag}
  }
  \]
  There are two possible cases depending on 
  whether $\var \in \aset$ or $\var \in \sset$.
  \begin{enumerate}
  \item $\var \in \aset$.
    By definition $\TEnv{\aset}{\sset}{\var}(\var) = \emset$, 
    since $\var \in \aset \cap \rv{\var}$ and 
      $\TEnv{\aset}{\sset}{\var}(\vartwo)=\none$ for any other variable $\vartwo$. 
    Having $\numarrows{\emset} = 0$,
    we apply rule \ruleTypVar, yielding
    $\judg[0,0]{\var : \emset}{\var}{\emset}$,
    with $\var \in \HAbs{\aset}$ by rule \ruleHAbsVar.
  \item $\var \in \sset$.
    By definition $\TEnv{\aset}{\sset}{\var}(\var) = \tightN$, 
    since $\var \in \sset \cap \rv{\var}$ and 
      $\TEnv{\aset}{\sset}{\var}(\vartwo)=\none$ for any other variable $\vartwo$. 
    Having $\numarrows{\tightN} = 0$,
     we apply rule \ruleTypVar, yielding
    $\judg[0,0]{\var : \tightN}{\var}{\tightN}$,
    with $\var \in \Struct{\sset}$ by rule \ruleStructVar.
  \end{enumerate}
\item $\ruleUNFLam$. Then
  \[
  \indrule{\ruleUNFLam}{}{
    \lam{\var}{\tmtwo} \in \NF{\aset}{\sset}{\nonapp}
  }
  \]
  For any variable $\vartwo$, we have
  $\TEnv{\aset}{\sset}{\lam{\var}{\tmtwo}}(\vartwo) = \none$
  since $\rv{\lam{\var}{\tmtwo}} = \emptyset$.
  We apply rule \ruleTypAbs, yielding
  $\judg[0,0]{}{\lam{\var}{\tmtwo}}{\mset{}}$,
  with $\lam{\var}{\tmtwo} \in \HAbs{\aset}$ by rule \ruleHAbsLam.
\item $\ruleUNFApp$. Then
  \[
  \indrule{\ruleUNFApp}{
    \tmtwo \in \NF{\aset}{\sset}{\app}
    \sep
    \tmthree \in \NF{\aset}{\sset}{\nonapp}
  }{
    \tmtwo \, \tmthree \in \NF{\aset}{\sset}{\appflag}
  }
  \]
  Since $\inv{\aset}{\sset}{\tmtwo\, \tmthree}$ implies 
  $\inv{\aset}{\sset}{\tmtwo}$ and $\inv{\aset}{\sset}{\tmthree}$, 
  then by \ih on $\tmtwo$ and $\tmthree$ 
  there exist tight types $\tight_1$ and $\tight_2$ such that
  $\judg[0,0]{\TEnv{\aset}{\sset}{\tmtwo}}{\tmtwo}{\tight_1}$ and
  $\judg[0,0]{\TEnv{\aset}{\sset}{\tmthree}}{\tmthree}{\tight_2}$. 
  Moreover, $\tmtwo \in \Struct{\sset}$ by \cref{lem:nf_in_HAbs_or_Struct}, 
  so $\tight_1 = \tightN$.
  We apply rule \ruleTypAppP, yielding
  $\judg[0,0]
    {\TEnv{\aset}{\sset}{\tmtwo} + \TEnv{\aset}{\sset}{\tmthree}}
    {\tmtwo \, \tmthree}
    {\tightN}$, with $\tmtwo \, \tmthree \in \Struct{\sset}$ by rule \ruleStructApp.
  Notice that 
  $\TEnv{\aset}{\sset}{\tmtwo} + \TEnv{\aset}{\sset}{\tmthree} = \TEnv{\aset}{\sset}{\tmtwo \, \tmthree}$,
  so we are done.
\item $\ruleUNFEsAbs$. Then
  \[
  \indrule{\ruleUNFEsAbs}{
    \tmtwo \in \NF{\aset \cup \set{\var}}{\sset}{\appflag}
    \sep
    \tmthree \in \NF{\aset}{\sset}{\nonapp}
    \sep
    \tmthree \in \HAbs{\aset}
  }{
    \tmtwo\esub{\var}{\tmthree} \in \NF{\aset}{\sset}{\appflag}
  }
  \]
  Since $\inv{\aset}{\sset}{\tmtwo\esub{\var}{\tmthree}}$ implies 
  $\inv{\aset \cup \set{\var}}{\sset}{\tmtwo}$ and $\inv{\aset}{\sset}{\tmthree}$, 
  then by \ih on $\tmtwo$ and $\tmthree$ there exist 
  tight types $\tight_1$ and $\tight_2$ such that
  $\judg[0,0]{\TEnv{\aset \cup \set{\var}}{\sset}{\tmtwo}}{\tmtwo}{\tight_1}$ and
  $\judg[0,0]{\TEnv{\aset}{\sset}{\tmthree}}{\tmthree}{\tight_2}$. Moreover,
  $\tight_2 = \emset$ since $\tmthree \in \HAbs{\aset}$.
  Notice that $\TEnv{\aset \cup \set{\var}}{\sset}{\tmtwo} = 
  \TEnv{\aset}{\sset}{\tmtwo}; \var : \TEnv{\aset \cup \set{\var}}{\sset}{\tmtwo}(\var)$.
  Moreover, $\TEnv{\aset \cup \set{\var}}{\sset}{\tmtwo}(\var)$ is $\emset$ if $\var \in \rv{\tmtwo}$, and $\none$ otherwise.
  We build the following derivation:
  \[
    \indrule{\ruleTypES}{
      \indrule{}{
        \text{(By \ih)}
      }{
        \judg[0,0]
          {\TEnv{\aset}{\sset}{\tmtwo}; \var : \TEnv{\aset \cup \set{\var}}{\sset}{\tmtwo}(\var)}
          {\tmtwo}
          {\tight_1}
      }
      \sep
      \indrule{}{
        \text{(By \ih)}
      }{
        \judg[0,0]{\TEnv{\aset}{\sset}{\tmthree}}{\tmthree}{\emset}
      }
    }{
      \judg[0,0]{\TEnv{\aset}{\sset}{\tmtwo} + \TEnv{\aset}{\sset}{\tmthree}}{\tmtwo\esub{\var}{\tmthree}}{\tight_1}
    }
  \]
  where $\TEnv{\aset \cup \set{\var}}{\sset}{\tmtwo}(\var) \mleq \emset$ necessarily holds since 
  $\TEnv{\aset \cup \set{\var}}{\sset}{\tmtwo}(\var)$ is $\none$ or $\emset$.
  Notice that
  $\TEnv{\aset}{\sset}{\tmtwo} + \TEnv{\aset}{\sset}{\tmthree} = \TEnv{\aset}{\sset}{\tmtwo\esub{\var}{\tmthree}}$,
  so we are done.
\item $\ruleUNFEsStruct$. Then
  \[
  \indrule{\ruleUNFEsStruct}{
    \tmtwo \in \NF{\aset}{\sset \cup \set{\var}}{\appflag}
    \sep
    \tmthree \in \NF{\aset}{\sset}{\nonapp}
    \sep
    \tmthree \in \Struct{\sset}
  }{
    \tmtwo\esub{\var}{\tmthree} \in \NF{\aset}{\sset}{\appflag}
  }
  \]
  Since $\inv{\aset}{\sset}{\tmtwo\esub{\var}{\tmthree}}$ implies
  $\inv{\aset}{\sset \cup \set{\var}}{\tmtwo}$ 
  and $\inv{\aset}{\sset}{\tmthree}$, then 
  by \ih on $\tmtwo$ and $\tmthree$  there exist 
  tight types $\tight_1$ and $\tight_2$ such that
  $\judg[0,0]{\TEnv{\aset}{\sset \cup \set{\var}}{\tmtwo}}{\tmtwo}{\tight_1}$ and
  $\judg[0,0]{\TEnv{\aset}{\sset}{\tmthree}}{\tmthree}{\tight_2}$. Moreover,
  $\tight_2 = \tightN$ since $\tmthree \in \Struct{\sset}$.
  Notice that 
  $\TEnv{\aset}{\sset \cup \set{\var}}{\tmtwo} = 
   \TEnv{\aset}{\sset}{\tmtwo}; \var : \TEnv{\aset}{\sset \cup \set{\var}}{\tmtwo}(\var)$.
  Moreover, $\TEnv{\aset \cup \set{\var}}{\sset}{\tmtwo}(\var)$
  is $\tightN$ if $\var \in \rv{\tmtwo}$ and $\none$ otherwise.
  We build the following derivation:
  \[
    \indrule{\ruleTypES}{
      \indrule{}{
        \text{(By \ih)}
      }{
        \judg[0,0]{\TEnv{\aset}{\sset}{\tmtwo}; \var : \TEnv{\aset \cup \set{\var}}{\sset}{\tmtwo}(\var)}{\tmtwo}{\tight_1}
      }
      \sep
      \indrule{}{
        \text{(By \ih)}
      }{
        \judg[0,0]{\TEnv{\aset}{\sset}{\tmthree}}{\tmthree}{\tightN}
      }
    }{
      \judg[0,0]{\TEnv{\aset}{\sset}{\tmtwo} + \TEnv{\aset}{\sset}{\tmthree}}{\tmtwo\esub{\var}{\tmthree}}{\tight_1}
    }
  \]
  where $\TEnv{\aset \cup \set{\var}}{\sset}{\tmtwo}(\var) \mleq \tightN$ necessarily holds since 
  $\TEnv{\aset \cup \set{\var}}{\sset}{\tmtwo}(\var)$ is $\none$ or $\tightN$. 
  Notice that
  $\TEnv{\aset}{\sset}{\tmtwo} + \TEnv{\aset}{\sset}{\tmthree} = \TEnv{\aset}{\sset}{\tmtwo\esub{\var}{\tmthree}}$,
  so we are done.
\end{enumerate}
\end{proof}

\begin{restatable}[Anti-substitution]{lemma}{antisubstitution}
\label{lem:anti_substitution}
Let $\inv{\aset \cup \set{\var}}{\sset}{\tm}$.
Consider a subset $\aset_0 \subseteq \aset$
and let $\asettwo$ be a set of variables disjoint from $\aset$.
Suppose also that the following conditions hold:
\begin{enumerate}
\item[(a)]
  $\tm \tov{\rulesub{\var}{\val}}{\aset\cup\set{\var}}{\sset}{\appflag} \tm'$
\item[(b)]
  $\judg[m,e]{\tctx;\var:\optmtyp}{\tm'}{\mtyptwo}$
\item[(c)]
  $\isAppr{\aset\cup\asettwo\cup\set{\var}}{\tctx;\var:\optmtyp}$
\item[(d)]
  If $\appflag = \app$
  then either $\mtyptwo = \tightN$
  or $\mtyptwo$ is a singleton, \ie, of the form $\mset{\typ}$.
\item[(e)]
  $\val \in \HAbs{\aset_0\cup\asettwo}$
\end{enumerate}
Then there exist $\tctx_\tm,\tctx_\val,\typ,m_\tm,e_\tm,m_\val,e_\val$
such that $\tctx = \tctx_\tm + \tctx_\val$
and $m = m_\tm + m_\val$ and $e = e_\tm + e_\val$;
$\judg[m_\tm,e_\tm+1]{\tctx_\tm;\var:\optmtyp+\mset{\typ}}{\tm}{\mtyptwo}$;
and $\judg[m_\val,e_\val]{\tctx_\val}{\val}{\mset{\typ}}$.
\end{restatable}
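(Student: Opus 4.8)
The statement is the \emph{anti-substitution} lemma, which is the exact dual of \cref{lem:substitution} (Substitution). Accordingly, the plan is to mimic the structure of the proof of \cref{lem:substitution}, proceeding by induction on the derivation of $\tm \tov{\rulesub\var\val}{\aset\cup\set\var}\sset\appflag \tm'$. I would carry out a case analysis on the last rule of this reduction derivation: \ruleUSub, \ruleUAppL, \ruleUAppR, \ruleUEsR, \ruleUEsLAbs, and \ruleUEsLStruct.

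First I would handle the base case \ruleUSub: here $\tm = \var$, $\tm' = \val$, $\appflag = \app$, and the invariant forces $\var \in \aset \cup \set\var$. The typing hypothesis (b) gives $\judg[m,e]{\tctx;\var:\optmtyp}\val\mtyptwo$; since hypothesis (e) says $\val \in \HAbs{\aset_0\cup\asettwo}$ and hypothesis (c) ensures the environment is appropriate, \cref{lem:types_of_hereditary_abstractions} yields $\mtyptwo \neq \tightN$, so by (d) $\mtyptwo$ is a singleton $\mset\typ$. By \cref{lem:relevance}, $\var$ does not occur free in $\val$, so $\optmtyp = \none$ and $\tctx$ is the environment typing $\val$. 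I would then take $\tctx_\tm \eqdef \emptyctx$, $\tctx_\val \eqdef \tctx$, the same $\typ$, $m_\tm = e_\tm = 0$, $m_\val = m$, $e_\val = e$, and build the derivation of $\judg[0,1]{\var:\mset\typ}\var{\mset\typ}$ via \ruleTypVar (here $e_\tm + 1 = 1 = \numarrows{\mset\typ}$), together with $\judg[m,e]\tctx\val{\mset\typ}$ straight from (b). This establishes the base case. For the congruence cases \ruleUAppL, \ruleUAppR, \ruleUEsR, \ruleUEsLAbs, \ruleUEsLStruct, I would in each case invert the typing derivation of $\tm'$ (which, by the shape of the term, is forced to be \ruleTypAppP/\ruleTypAppC for applications and \ruleTypES for closures), split the environment and counters according to \cref{lem:composition}-style decompositions and \cref{lem:splitting}, apply the induction hypothesis to the reducing subterm --- carefully re-establishing hypotheses (c), (d), (e) for that subterm, using \cref{rem:isAppropriate}, \cref{lem:types_of_hereditary_abstractions}, and the fact that $\val \in \HAbs{\aset_0\cup\asettwo}$ is preserved under the relevant frame extensions via \cref{rem:habs_st} --- and then reassemble the derivation of $\tm$ by the same typing rule, summing the counters so that the single extra $\lsvsym$-unit (the ``$+1$'' on $e_\tm$) is accounted for by the subterm's contribution. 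The appropriateness bookkeeping for \ruleUEsLAbs in particular requires invoking \cref{lem:types_of_hereditary_abstractions} on the ES argument $\tmthree \in \HAbs\aset$ to conclude its type is not $\tightN$, exactly as in the corresponding case of \cref{lem:substitution}.

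The main obstacle I anticipate is the correct management of the contextual parameters $\aset_0$, $\asettwo$, and $\appflag$ across the congruence cases, together with ensuring that hypothesis (e), $\val \in \HAbs{\aset_0\cup\asettwo}$, is correctly propagated to the inductive calls. In the \ruleUEsLAbs case the reducing subterm $\tmtwo$ is evaluated under $\aset\cup\set\var$ extended with the ES-bound variable $\vartwo$, so the subset $\aset_0$ and the disjoint set $\asettwo$ must be adjusted (e.g.\ replacing $\aset_0$ by $\aset_0\cup\set\vartwo$ or absorbing $\vartwo$ into $\asettwo$) so that $\val \in \HAbs{\aset_0'\cup\asettwo'}$ still holds, and simultaneously the disjointness condition $\asettwo' \disj \aset'$ must be maintained; getting these adjustments consistent with what the induction hypothesis demands is the delicate point. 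Since hypothesis (e) is strictly more than ``$\val$ is any value'' (as used in \cref{lem:substitution}), this is a genuine extra ingredient not present in the substitution lemma, and it is exactly what makes the anti-substitution statement strong enough to later drive \cref{prop:subject_expansion}. The rest of the argument is routine inversion and reassembly, parallel to \cref{lem:substitution}.
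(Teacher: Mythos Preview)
Your proposal is correct and follows essentially the same approach as the paper: induction on the derivation of the $\rulesub\var\val$-step, with the \ruleUSub base case handled via \cref{lem:types_of_hereditary_abstractions} (the paper splits this further into $\val$ a variable vs.\ an abstraction, but your unified treatment works), and the congruence cases handled by inverting the typing rule and applying the IH. Your anticipation of the delicate parameter adjustment in \ruleUEsLAbs is exactly right---the paper absorbs $\vartwo$ into $\aset_0$ (taking $\aset_0' = \aset_0 \cup \set\vartwo$, $\asettwo' = \asettwo$) and uses \cref{rem:habs_st}; one small over-anticipation is that \cref{lem:composition} and \cref{lem:splitting} are not actually needed here (they appear in the \ruleUDb/\ruleULsv cases of subject reduction/expansion, not in substitution/anti-substitution), since direct inversion of the typing rules already gives the required environment and counter splits.
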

% Label: lem:anti_substitution

\begin{proof}
By induction on the derivation of $\tm \tov{\rulesub{\var}{\val}}{\aset \cup \set{\var}}{\sset}{\appflag} \tm'$.
\begin{enumerate}
\item \ruleUSub.
  Then the following conditions hold:
  \begin{enumerate}
  \item[(a)] 
    $\var \tov{\rulesub{\var}{\val}}{\aset \cup \set{\var}}{\sset}{\app} \val$,
    where $\tm = \var, \appflag = \app$ and $\tm' = \val$
  \item[(b)] 
    $\judg[m, e]{\tctx; \var : \optmtyp}{\val}{\mtyptwo}$
  \item[(c)] 
    $\isAppr{\aset \cup \asettwo \cup \set{\var}}{\tctx; \var : \optmtyp}$
  \item[(d)] 
    Since $\appflag = \app$, so either $\mtyptwo = \tightN$ or $\mtyptwo$ is of the form $\mset{\typ}$.
  \item[(e)]
    $\val \in \HAbs{\aset_0 \cup \asettwo}$    
  \end{enumerate}
  Recall that $\var \notin \fv{\val}$ by the grammar of rule names. We analyze by cases the form of $\val$:  
  \begin{enumerate}
  \item $\val = \vartwo$. Then $\var \neq \vartwo$.
    The judgement of condition (b) can be derived only by the rule \ruleTypVar, 
    so it is of the form
    $\judg[0, \numarrows{\mtyptwo}]{\vartwo : \mtyptwo}{\vartwo}{\mtyptwo}$,
    with $\tctx = \vartwo : \mtyptwo, m = 0$ and $e = \numarrows{\mtyptwo}$.
    Moreover, $\optmtyp$ must be $\none$.
    And $\vartwo \in \aset_0 \cup \asettwo$,
    since the judgement of condition (e) can only be derived by the rule $\ruleHAbsVar$.
    Then
    $(\vartwo : \mtyptwo)(\vartwo) \neq \tightN$
    by condition (c), as 
    $\aset_0 \cup \asettwo \cup \set{\var} \subseteq \aset \cup \asettwo \cup \set{\var}$.
    Therefore $\mtyptwo = \mset{\typ}$ for some type $\typ$, and thus $\numarrows{\mset{\typ}} = 1$ necessarily holds.
    Taking 
    $\tctx_\tm = \emptyctx, 
     \tctx_\val = \vartwo : \mset{\typ},
     \typ,
     m_\tm = 0, 
     e_\tm = 0, 
     m_\val = 0, 
     e_\val = 1$
    the following statements hold:
    \begin{itemize}
    \item 
      $\tctx = \vartwo : \mset{\typ} = \tctx_\tm + \tctx_\val$ 
      and $m = 0 = m_\tm + m_\val$ 
      and $e = 1 = e_\tm + e_\val$
    \item 
      $\judg[0, 1]{\var : \mset{\typ}}{\var}{\mset{\typ}}$, by rule \ruleTypVar
    \item 
      $\judg[0, 1]{\vartwo : \mset{\typ}}{\vartwo}{\mset{\typ}}$, by condition (b).
    \end{itemize}
  \item $\val = \lam{\vartwo}{\tmtwo}$.
    The judgement of condition (b) can be derived only by the rule \ruleTypAbs.
    Moreover, we can derive the judgement
    $\lam{\vartwo}{\tmtwo} \in \HAbs{\aset \cup \asettwo \cup \set{\var}}$
    by rule \ruleHAbsLam,
    and along with condition (c) we conclude
    $\mtyptwo \neq \tightN$ by \cref{lem:types_of_hereditary_abstractions}.
    Then the only possible case is when $\mtyptwo$ is of the form $\mset{\typ}$
    for some $\typ = \optmtypthree \to \mtypfour$.
    Since $\var \notin \fv{\lam{\vartwo}{\tmtwo}}$
    it must be the case that
    $\var \notin \dom{\tctx; \var : \optmtyp}$,
    by \cref{lem:relevance},
    that is, $\optmtyp = \none$.
    Then the following derivation is for the judgement of condition (b):
    \[
      \indrule{\ruleTypAbs}{
        \judg[m, e]{\tctx; \vartwo : \optmtypthree}{\tmtwo}{\mtypfour}
      }{
        \judg[m, e]{\tctx}{\lam{\vartwo}{\tmtwo}}{\mset{\optmtypthree \to \mtypfour}}
      }
    \]
    Taking 
    $\tctx_\tm = \emptyctx, 
     \tctx_\val = \tctx,
     \typ = \optmtypthree \to \mtypfour,
     m_\tm = 0, 
     e_\tm = 0, 
     m_\val = m, 
     e_\val = e$
    the following statements hold:
    \begin{itemize}
    \item 
      $\tctx = \tctx_\tm + \tctx_\val$ 
      and $m = m_\tm + m_\val$ 
      and $e = e_\tm + e_\val$
    \item 
      $\judg[0, 1]{\var : \mset{\optmtypthree \to \mtypfour}}{\var}{\mset{\optmtypthree \to \mtypfour}}$, 
      by rule \ruleTypVar
    \item 
      $\judg[0, 1]{\tctx_\val}{\lam{\vartwo}{\tmtwo}}{\mset{\optmtypthree \to \mtypfour}}$, 
      by condition (b).
    \end{itemize}
  \end{enumerate}
\item \ruleUAppL.
  Then the following conditions hold:
  \begin{enumerate}
  \item[(a)]
    \[
      \indrule{\ruleUAppL}{
        \tmtwo \tov{\rulesub{\var}{\val}}{\aset \cup \set{\var}}{\sset}{\app} \tmtwo'
      }{
        \tmtwo \, \tmthree \tov{\rulesub{\var}{\val}}{\aset \cup \set{\var}}{\sset}{\appflag} \tmtwo' \, \tmthree
      }
    \]
    where $\tm = \tmtwo \, \tmthree$ and $\tm' = \tmtwo' \, \tmthree$
  \item[(b)] 
    $\judg[m, e]{\tctx; \var : \optmtyp}{\tmtwo' \, \tmthree}{\mtyptwo}$
  \item[(c)] 
    $\isAppr{\aset \cup \asettwo \cup \set{\var}}{\tctx; \var : \optmtyp}$
  \item[(d)] 
    If $\appflag = \app$ then either $\mtyptwo = \tightN$ or $\mtyptwo$ is of the form $\mset{\typ}$.
  \item[(e)] 
    $\val \in \HAbs{\aset_0 \cup \asettwo}$
  \end{enumerate}
  Since $\inv{\aset \cup \set{\var}}{\sset}{\tmtwo \, \tmthree}$
  then in particular $\inv{\aset \cup \set{\var}}{\sset}{\tmtwo}$.
  The judgement of condition (b) can be derived either by rule $\ruleTypAppP$ or by rule $\ruleTypAppC$:
  \begin{enumerate}
  \item \ruleTypAppP. Then
    \[
      \indrule{\ruleTypAppP}{
        \judg[m_1, e_1]{\tctx_1; \var : \optmtyp_1}{\tmtwo'}{\tightN}
        \sep
        \judg[m_2, e_2]{\tctx_2; \var : \optmtyp_2}{\tmthree}{\tight}
      }{
        \judg[m_1 + m_2, e_1 + e_2]{\tctx_1 + \tctx_2; \var : \optmtyp_1 + \optmtyp_2}{\tmtwo' \, \tmthree}{\tightN}
      }
    \]
    where 
    $\tctx = \tctx_1 + \tctx_2, 
     \optmtyp = \optmtyp_1 + \optmtyp_2, 
     m = m_1 + m_2, e = e_1 + e_2$ 
    and $\mtyptwo = \tightN$.
    The following conditions hold:
    \begin{enumerate}
    \item[(a')]
      $\tmtwo \tov{\rulesub{\var}{\val}}{\aset \cup \set{\var}}{\sset}{\app} \tmtwo'$, by condition (a)
    \item[(b')]
      $\judg[m_1, e_1]{\tctx_1; \var : \optmtyp_1}{\tmtwo'}{\tightN}$, by condition (b)
    \item[(c')]
      $\isAppr{\aset \cup \asettwo \cup \set{\var}}{\tctx_1; \var : \optmtyp_1}$, 
      since $\tctx = \tctx_1 + \tctx_2$ and $\optmtyp = \optmtyp_1 + \optmtyp_2$,
      and by condition (c)
    \item[(d')]
      Here $\appflag = \app$ and the term is typed with $\tightN$.
    \item[(e')]
      $\val \in \HAbs{\aset_0 \cup \asettwo}$, by condition (e)
    \end{enumerate}
    Then we can apply \ih on $\tmtwo'$, yielding 
    $\tctx_\tmtwo$, $\tctx_\val$, $\typ$, $m_\tmtwo$, $e_\tmtwo$, $m_\val$, $e_\val$
    such that
    \begin{itemize}
    \item[$\circ$]
      $\tctx_1 = \tctx_\tmtwo + \tctx_\val$ and $m_1 = m_\tmtwo + m_\val$ and $e_1 = e_\tmtwo + e_\val$
    \item[$\circ$]
      $\judg[m_\tmtwo,e_\tmtwo + 1]{\tctx_\tmtwo;\var:\optmtyp_1 + \mset{\typ}}{\tmtwo}{\tightN}$
    \item[$\circ$]
      $\judg[m_\val,e_\val]{\tctx_\val}{\val}{\mset{\typ}}$
    \end{itemize}
    Taking 
    $\tctx_\tm = \tctx_\tmtwo + \tctx_2,\tctx_\val,\typ,m_\tm = m_\tmtwo + m_2,e_\tm = e_\tmtwo + e_2,m_\val,e_\val$
    the following statements hold:
    \begin{itemize}
    \item 
      $\tctx = \tctx_1 + \tctx_2 = \tctx_\tmtwo + \tctx_\val + \tctx_2 = \tctx_\tm + \tctx_\val$ 
      and $m = m_1 + m_2 = m_\tmtwo + m_\val + m_2 = m_\tm + m_\val$ 
      and $e = e_1 + e_2 = e_\tmtwo + e_\val + e_2 = e_\tm + e_\val$
    \item
      \[
        \indrule{\ruleTypAppP}{
          \judg[m_\tmtwo,e_\tmtwo + 1]{\tctx_\tmtwo;\var:\optmtyp_1 + \mset{\typ}}{\tmtwo}{\tightN}
          \sep
          \judg[m_2, e_2]{\tctx_2; \var : \optmtyp_2}{\tmthree}{\tight}
        }{
          \judg[m_\tmtwo + m_2, e_\tmtwo + 1 + e_2]{\tctx_\tmtwo + \tctx_2; \var : (\optmtyp_1 + \mset{\typ} + \optmtyp_2)}{\tmtwo \, \tmthree}{\tightN}
        }
      \]
    \item 
      $\judg[m_\val,e_\val]{\tctx_\val}{\val}{\mset{\typ}}$
    \end{itemize}
  \item \ruleTypAppC. Then
    \[
      \indrule{\ruleTypAppC}{
        \judg[m_1, e_1]{\tctx_1; \var : \optmtyp_1}{\tmtwo'}{\mset{\optmtypthree \to \mtyptwo}}
        \sep
        \optmtypthree \mleq \mtypthree
        \sep
        \judg[m_2, e_2]{\tctx_2; \var : \optmtyp_2}{\tmthree}{\mtypthree}
      }{
        \judg[1 + m_1 + m_2, e_1 + e_2]{\tctx_1 + \tctx_2; \var : \optmtyp_1 + \optmtyp_2}{\tmtwo' \, \tmthree}{\mtyptwo}
      }
    \]
    where $\tctx = \tctx_1 + \tctx_2, \optmtyp = \optmtyp_1 + \optmtyp_2, m = 1 + m_1 + m_2$ and $e = e_1 + e_2$.
    The following conditions hold:
    \begin{enumerate}
    \item[(a')] 
      $\tmtwo \tov{\rulesub{\var}{\val}}{\aset \cup \set{\var}}{\sset}{\app} \tmtwo'$,
      by condition (a)
    \item[(b')] 
      $\judg[m_1, e_1]{\tctx_1; \var : \optmtyp_1}{\tmtwo'}{\mset{\optmtypthree \to \mtyptwo}}$, 
      by condition (b)
    \item[(c')] 
      $\isAppr{\aset \cup \asettwo \cup \set{\var}}{\tctx_1; \var : \optmtyp_1}$, 
      since $\tctx = \tctx_1 + \tctx_2$ and $\optmtyp = \optmtyp_1 + \optmtyp_2$,
      and by condition (c)
    \item[(d')] 
      Here $\appflag = \app$ and the term is the singleton $\mset{\optmtypthree \to \mtyptwo}$.
    \item[(e')] 
      $\val \in \HAbs{\aset_0 \cup \asettwo}$, by condition (e)
    \end{enumerate}
    Then we can apply \ih on $\tmtwo'$, yielding 
    $\tctx_\tmtwo$, $\tctx_\val$, $\typ$, $m_\tmtwo$, $e_\tmtwo$, $m_\val$, $e_\val$
    such that
    \begin{itemize}
    \item[$\circ$]
      $\tctx_1 = \tctx_\tmtwo + \tctx_\val$ and $m_1 = m_\tmtwo + m_\val$ and $e_1 = e_\tmtwo + e_\val$
    \item[$\circ$]
      $\judg[m_\tmtwo,e_\tmtwo + 1]{\tctx_\tmtwo;\var:\optmtyp_1 + \mset{\typ}}{\tmtwo}{\mset{\optmtypthree \to \mtyptwo}}$
    \item[$\circ$]
      $\judg[m_\val,e_\val]{\tctx_\val}{\val}{\mset{\typ}}$
    \end{itemize}
    Taking 
    $\tctx_\tm = \tctx_\tmtwo + \tctx_2,\tctx_\val,\typ,m_\tm = 1 + m_\tmtwo + m_2,e_\tm = e_\tmtwo + e_2,m_\val,e_\val$
    the following statements hold:
    \begin{itemize}
    \item 
      $\tctx = \tctx_1 + \tctx_2 = \tctx_\tmtwo + \tctx_\val + \tctx_2 = \tctx_\tm + \tctx_\val$ 
      and $m = 1 + m_1 + m_2 = 1 + m_\tmtwo + m_\val + m_2 = m_\tm + m_\val$ 
      and $e = e_1 + e_2 = e_\tmtwo + e_\val + e_2 = e_\tm + e_\val$
    \item
      \[
        \indrule{\ruleTypAppC}{
          \judg[m_\tmtwo,e_\tmtwo + 1]{\tctx_\tmtwo;\var:\optmtyp_1 + \mset{\typ}}{\tmtwo}{\mset{\optmtypthree \to \mtyptwo}}
          \sep
          \optmtypthree \mleq \mtypthree
          \sep
          \judg[m_2, e_2]{\tctx_2; \var : \optmtyp_2}{\tmthree}{\mtypthree}
        }{
          \judg[1 + m_\tmtwo + m_2, e_\tmtwo + 1 + e_2]{\tctx_\tmtwo + \tctx_2; \var : (\optmtyp_1 + \mset{\typ} + \optmtyp_2)}{\tmtwo \, \tmthree}{\mtyptwo}
        }
      \]
    \item 
      $\judg[m_\val,e_\val]{\tctx_\val}{\val}{\mset{\typ}}$
    \end{itemize}
  \end{enumerate}
\HIDDENFRAGMENT{
  \item The remaining cases are similar.
}{
\item \ruleUAppR. Then the following conditions hold:
  \begin{enumerate}
  \item[(a)]
    \[
      \indrule{\ruleUAppR}{
        \tmtwo \in \Struct{\sset}
        \sep
        \tmthree \tov{\rulesub{\var}{\val}}{\aset \cup \set{\var}}{\sset}{\nonapp} \tmthree'
      }{
        \tmtwo \, \tmthree \tov{\rulesub{\var}{\val}}{\aset \cup \set{\var}}{\sset}{\appflag} \tmtwo \, \tmthree'
      }
    \]
    where $\tm = \tmtwo \, \tmthree$ and $\tm' = \tmtwo \, \tmthree'$
  \item[(b)] 
    $\judg[m, e]{\tctx; \var : \optmtyp}{\tmtwo \, \tmthree'}{\mtyptwo}$
  \item[(c)] 
    $\isAppr{\aset \cup \asettwo \cup \set{\var}}{\tctx; \var : \optmtyp}$
  \item[(d)] 
    If $\appflag = \app$ then 
    either $\mtyptwo = \tightN$ or $\mtyptwo$ is of the form $\mset{\typ}$.
  \item[(e)] 
    $\val \in \HAbs{\aset_0 \cup \asettwo}$
  \end{enumerate}
  Since $\inv{\aset \cup \set{\var}}{\sset}{\tmtwo \, \tmthree}$
  then in particular $\inv{\aset \cup \set{\var}}{\sset}{\tmthree}$.
  The judgement of condition (b) can be derived 
  either by rule $\ruleTypAppP$ or by rule \ruleTypAppC:
  \begin{enumerate}
  \item \ruleTypAppP. Then
    \[
      \indrule{\ruleTypAppP}{
        \judg[m_1, e_1]{\tctx_1; \var : \optmtyp_1}{\tmtwo}{\tightN}
        \sep
        \judg[m_2, e_2]{\tctx_2; \var : \optmtyp_2}{\tmthree'}{\tight}
      }{
        \judg[m_1 + m_2, e_1 + e_2]{\tctx_1 + \tctx_2; \var : (\optmtyp_1 + \optmtyp_2)}{\tmtwo \, \tmthree'}{\tightN}
      }
    \]
    where $\tctx = \tctx_1 + \tctx_2, \optmtyp = \optmtyp_1 + \optmtyp_2, m = m_1 + m_2, e = e_1 + e_2$ and $\mtyptwo = \tightN$.
    The following conditions hold:
    \begin{enumerate}
    \item[(a')] 
      $\tmthree \tov{\rulesub{\var}{\val}}{\aset \cup \set{\var}}{\sset}{\nonapp} \tmthree'$, by condition (a)
    \item[(b')] 
      $\judg[m_2, e_2]{\tctx_2; \var : \optmtyp_2}{\tmthree'}{\tight}$, by condition (b)
    \item[(c')] 
      $\isAppr{\aset \cup \asettwo \cup \set{\var}}{\tctx_2; \var : \optmtyp_2}$, 
      since $\tctx = \tctx_1 + \tctx_2$ and $\optmtyp = \optmtyp_1 + \optmtyp_2$, and by condition (c)
    \item[(d')] 
      Here $\appflag = \nonapp$.
    \item[(e')] 
      $\val \in \HAbs{\aset_0 \cup \asettwo}$, by condition (e)
    \end{enumerate}
    We can apply \ih on $\tmthree'$, yielding 
    $\tctx_\tmthree$, $\tctx_\val$, $\typ$, $m_\tmthree$, $e_\tmthree$, $m_\val$, $e_\val$
    such that
    \begin{itemize}
    \item[$\circ$]
      $\tctx_2 = \tctx_\tmthree + \tctx_\val$
      and $m_2 = m_\tmthree + m_\val$
      and $e_2 = e_\tmthree + e_\val$
    \item[$\circ$]
      $\judg[m_\tmthree, e_\tmthree + 1]{\tctx_\tmthree; \var : \optmtyp_2 + \mset{\typ}}{\tmthree}{\tight}$
    \item[$\circ$]
      $\judg[m_\val, e_\val]{\tctx_\val}{\val}{\mset{\typ}}$
    \end{itemize}
    Taking
    $\tctx_\tm = \tctx_1 + \tctx_\tmthree, 
     \tctx_\val, \typ, 
     m_\tm = m_1 + m_\tmthree, e_\tm = e_1 + e_\tmthree, m_\val, e_\val$
    the following statements hold:
    \begin{itemize}
    \item 
      $\tctx = \tctx_1 + \tctx_2 = \tctx_1 + \tctx_\tmthree + \tctx_\val = \tctx_\tm + \tctx_\val$
      and $m = m_1 + m_2 = m_1 + m_\tmthree + m_\val = m_\tm + m_\val$
      and $e = e_1 + e_2 = e_1 + e_\tmthree + e_\val = e_\tm + e_\val$
    \item
      \[
        \indrule{\ruleTypAppP}{
          \judg[m_1, e_1]{\tctx_1; \var : \optmtyp_1}{\tmtwo}{\tightN}
          \sep
          \judg[m_\tmthree, e_\tmthree + 1]{\tctx_\tmthree; \var : \optmtyp_2 + \mset{\typ}}{\tmthree}{\tight}
        }{
          \judg[m_1 + m_\tmthree, e_1 + e_\tmthree + 1]
            {\tctx_1 + \tctx_\tmthree; \var : (\optmtyp_1 + \optmtyp_2 + \mset{\typ})}
            {\tmtwo \, \tmthree}
            {\tightN}
        }
      \]
    \item 
      $\judg[m_\val, e_\val]{\tctx_\val}{\val}{\mset{\typ}}$
    \end{itemize}
  \item \ruleTypAppC. Then
    \[
      \indrule{\ruleTypAppC}{
        \judg[m_1, e_1]{\tctx_1; \var : \optmtyp_1}{\tmtwo}{\mset{\optmtypthree \to \mtyptwo}}
        \sep
        \optmtypthree \mleq \mtypthree
        \sep
        \judg[m_2, e_2]{\tctx_2; \var : \optmtyp_2}{\tmthree'}{\mtypthree}
      }{
        \judg[1 + m_1 + m_2, e_1 + e_2]{\tctx_1 + \tctx_2; \var : (\optmtyp_1 + \optmtyp_2)}{\tmtwo \, \tmthree'}{\mtyptwo}
      }
    \]
    where $\tctx = \tctx_1 + \tctx_2, \optmtyp = \optmtyp_1 + \optmtyp_2, m = 1 + m_1 + m_2$ and $e = e_1 + e_2$.
    The following conditions hold:
    \begin{enumerate}
    \item[(a')] 
      $\tmthree \tov{\rulesub{\var}{\val}}{\aset \cup \set{\var}}{\sset}{\nonapp} \tmthree'$, by condition (a)
    \item[(b')] 
      $\judg[m_2, e_2]{\tctx_2; \var : \optmtyp_2}{\tmthree'}{\mtypthree}$, by condition (b)
    \item[(c')] 
      $\isAppr{\aset \cup \asettwo \cup \set{\var}}{\tctx_2; \var : \optmtyp_2}$, 
      since $\tctx = \tctx_1 + \tctx_2$ and $\optmtyp = \optmtyp_1 + \optmtyp_2$, and by condition (c)
    \item[(d')] 
      Here $\appflag = \nonapp$.
    \item[(e')] 
      $\val \in \HAbs{\aset_0 \cup \asettwo}$, by condition (e)
    \end{enumerate}
    We can apply \ih on $\tmthree'$, yielding 
    $\tctx_\tmthree$, $\tctx_\val$, $\typ$, $m_\tmthree$, $e_\tmthree$, $m_\val$, $e_\val$
    such that
    \begin{itemize}
    \item[$\circ$]
      $\tctx_2 = \tctx_\tmthree + \tctx_\val$
      and $m_2 = m_\tmthree + m_\val$
      and $e_2 = e_\tmthree + e_\val$
    \item[$\circ$]
      $\judg[m_\tmthree, e_\tmthree + 1]{\tctx_\tmthree; \var : \optmtyp_2 + \mset{\typ}}{\tmthree}{\mtypthree}$
    \item[$\circ$]
      $\judg[m_\val, e_\val]{\tctx_\val}{\val}{\mset{\typ}}$
    \end{itemize}
    Taking
    $\tctx_\tm = \tctx_1 + \tctx_\tmthree, \tctx_\val, \typ, m_\tm = 1 + m_1 + m_\tmthree, e_\tm = e_1 + e_\tmthree, m_\val, e_\val$
    the following statements holds:
    \begin{itemize}
    \item 
      $\tctx = \tctx_1 + \tctx_2 = \tctx_1 + \tctx_\tmthree + \tctx_\val = \tctx_\tm + \tctx_\val$
      and $m = 1 + m_1 + m_2 = 1 + m_1 + m_\tmthree + m_\val = m_\tm + m_\val$
      and $e = e_1 + e_2 = e_1 + e_\tmthree + e_\val = e_\tm + e_\val$
    \item
      \[
        \indrule{\ruleTypAppC}{
          \judg[m_1, e_1]{\tctx_1; \var : \optmtyp_1}{\tmtwo}{\mset{\optmtypthree \to \mtyptwo}}
          \sep
          \optmtypthree \mleq \mtypthree
          \sep
          \judg[m_\tmthree, e_\tmthree + 1]{\tctx_\tmthree; \var : \optmtyp_2 + \mset{\typ}}{\tmthree}{\mtypthree}
        }{
          \judg[1 + m_1 + m_\tmthree, e_1 + e_\tmthree + 1]
            {\tctx_1 + \tctx_\tmthree; \var : (\optmtyp_1 + \optmtyp_2 + \mset{\typ})}
            {\tmtwo \, \tmthree}
            {\mtyptwo}
        }
      \]
    \item 
      $\judg[m_\val, e_\val]{\tctx_\val}{\val}{\mset{\typ}}$
    \end{itemize}
  \end{enumerate}
\item \ruleUEsR. The following conditions hold:
  \begin{enumerate}
  \item[(a)]
    \[
      \indrule{\ruleUEsR}{
        \tmthree 
        \tov{\rulesub{\var}{\val}}{\aset \cup \set{\var}}{\sset}{\nonapp} 
        \tmthree'
      }{
        \tmtwo\esub{\vartwo}{\tmthree} 
        \tov{\rulesub{\var}{\val}}{\aset \cup \set{\var}}{\sset}{\appflag} 
        \tmtwo\esub{\vartwo}{\tmthree'}
      }
    \]
    where $\tm = \tmtwo\esub{\vartwo}{\tmthree}$ and $\tm' = \tmtwo\esub{\vartwo}{\tmthree'}$
  \item[(b)] 
    $\judg[m, e]{\tctx; \var : \optmtyp}{\tmtwo\esub{\var}{\tmthree'}}{\mtyptwo}$
  \item[(c)] 
    $\isAppr{\aset \cup \asettwo \cup \set{\var}}{\tctx; \var : \optmtyp}$
  \item[(d)] 
    If $\appflag = \app$ then either $\mtyptwo = \tightN$ or $\mtyptwo$ is of the form $\mset{\typ}$.
  \item[(e)] 
    $\val \in \HAbs{\aset_0 \cup \asettwo}$
  \end{enumerate}
  The judgement of condition (b) can be derived only by rule \ruleTypES, so
  \[
    \indrule{\ruleTypES}{
      \judg[m_1, e_1]
        {\tctx_1; \var : \optmtyp_1; \vartwo : \optmtypthree}
        {\tmtwo}
        {\mtyptwo}
      \sep
      \optmtypthree \mleq \mtypthree
      \sep
      \judg[m_2, e_2]
        {\tctx_2; \var : \optmtyp_2}
        {\tmthree'}
        {\mtypthree}
    }{
      \judg[m_1 + m_2, e_1 + e_2]{\tctx_1 + \tctx_2; \var : (\optmtyp_1 + \optmtyp_2)}{\tmtwo\esub{\var}{\tmthree'}}{\mtyptwo}
    }
  \]
  where $\tctx = \tctx_1 + \tctx_2, \optmtyp = \optmtyp_1 + \optmtyp_2; m = m_1 + m_2$ and $e = e_1 + e_2$.
  Since $\inv{\aset \cup \set{\var}}{\sset}{\tmtwo\esub{\vartwo}{\tmthree}}$
  then in particular $\inv{\aset \cup \set{\var}}{\sset}{\tmthree}$.
  The following conditions hold:
  \begin{enumerate}
  \item[(a')] 
    $\tmthree \tov{\rulesub{\var}{\val}}{\aset \cup \set{\var}}{\sset}{\nonapp} \tmthree'$, by condition (a)
  \item[(b')] 
    $\judg[m_2, e_2]{\tctx_2; \var : \optmtyp_2}{\tmthree'}{\mtypthree}$, by condition (b)
  \item[(c')] 
    $\isAppr{\aset \cup \asettwo \cup \set{\var}}{\tctx_2; \var : \optmtyp_2}$, 
    since $\tctx = \tctx_1 + \tctx_2$ and $\optmtyp = \optmtyp_1 + \optmtyp_2$, and by condition (c)
  \item[(d')] 
    Here $\appflag = \nonapp$.
  \item[(e')] 
    $\val \in \HAbs{\aset_0 \cup \asettwo}$, by condition (e)  
  \end{enumerate}
  We can apply \ih on $\tmthree'$, yielding 
  $\tctx_\tmthree$, $\tctx_\val$, $\typ$, $m_\tmthree$, $e_\tmthree$, $m_\val$, $e_\val$
  such that
  \begin{itemize}
  \item[$\circ$]
    $\tctx_2 = \tctx_\tmthree + \tctx_\val$
    and $m_2 = m_\tmthree + m_\val$
    and $e_2 = e_\tmthree + e_\val$
  \item[$\circ$]
    $\judg[m_\tmthree, e_\tmthree + 1]
      {\tctx_\tmthree; \var : \optmtyp_2 + \mset{\typ}}
      {\tmthree}
      {\mtypthree}$
  \item[$\circ$]
    $\judg[m_\val, e_\val]{\tctx_\val}{\val}{\mset{\typ}}$
  \end{itemize}
  Taking
  $\tctx_\tm = \tctx_1 + \tctx_\tmthree, \tctx_\val, \typ, m_\tm = m_1 + m_\tmthree, e_\tm = e_1 + e_\tmthree, m_\val, e_\val$
  the following statements hold:
  \begin{itemize}
  \item 
    $\tctx = \tctx_1 + \tctx_2 = \tctx_1 + \tctx_\tmthree + \tctx_\val = \tctx_\tm + \tctx_\val$
    and $m = m_1 + m_2 = m_1 + m_\tmthree + m_\val = m_\tm + m_\val$
    and $e = e_1 + e_2 = e_1 + e_\tmthree + e_\val = e_\tm + e_\val$
  \item
    \[
      \indrule{\ruleTypES}{
        \judg[m_1, e_1]
          {\tctx_1; \var : \optmtyp_1; \vartwo : \optmtypthree}
          {\tmtwo}
          {\mtyptwo}
        \sep
        \optmtypthree \mleq \mtypthree
        \sep
        \judg[m_\tmthree, e_\tmthree + 1]
          {\tctx_\tmthree; \var : \optmtyp_2 + \mset{\typ}}
          {\tmthree}
          {\mtypthree}
      }{
        \judg[m_1 + m_\tmthree, e_1 + e_\tmthree + 1]
          {\tctx_1 + \tctx_\tmthree; \var : (\optmtyp_1 + \optmtyp_2 + \mset{\typ})}
          {\tmtwo\esub{\var}{\tmthree}}
          {\mtyptwo}
      }
    \]
  \item 
    $\judg[m_\val, e_\val]{\tctx_\val}{\val}{\mset{\typ}}$
  \end{itemize}
\item \ruleUEsLAbs. The following conditions hold:
  \begin{enumerate}
  \item[(a)]
    \[
      \indrule{\ruleUEsLAbs}{
        \tmtwo \tov{\rulesub{\var}{\val}}{(\aset \cup \set{\vartwo}) \cup \set{\var}}{\sset}{\appflag} \tmtwo'
        \sep
        \tmthree \in \HAbs{\aset \cup \set{\var}}
        \sep
        \vartwo \notin \aset \cup \set{\var} \cup \sset
        \sep
        \vartwo \notin \fv{\rulesub{\var}{\val}}
      }{
        \tmtwo\esub{\vartwo}{\tmthree} \tov{\rulesub{\var}{\val}}{\aset \cup \set{\var}}{\sset}{\appflag} \tmtwo'\esub{\vartwo}{\tmthree}
      }
    \]
    where $\tm = \tmtwo\esub{\vartwo}{\tmthree}$ and $\tm' = \tmtwo'\esub{\vartwo}{\tmthree}$
  \item[(b)] 
    $\judg[m, e]{\tctx; \var : \optmtyp}{\tmtwo'\esub{\vartwo}{\tmthree}}{\mtyptwo}$
  \item[(c)] 
    $\isAppr{\aset \cup \asettwo \cup \set{\var}}{\tctx; \var : \optmtyp}$
  \item[(d)] 
    If $\appflag = \app$ then either $\mtyptwo = \tightN$ or $\mtyptwo$ is of the form $\mset{\typ}$.
  \item[(e)] 
    $\val \in \HAbs{\aset_0 \cup \asettwo}$
  \end{enumerate}
  The judgement of condition (b) can be derived only by the rule $\ruleTypES$, so
  \[
    \indrule{\ruleTypES}{
      \judg[m_1, e_1]
        {\tctx_1; \var : \optmtyp_1; \vartwo : \optmtypthree}
        {\tmtwo'}
        {\mtyptwo}
      \sep
      \optmtypthree \mleq \mtypthree
      \sep
      \judg[m_2, e_2]{\tctx_2; \var : \optmtyp_2}{\tmthree}{\mtypthree}
    }{
      \judg[m_1 + m_2, e_1 + e_2]
        {\tctx_1 + \tctx_2; \var : \optmtyp_1 + \optmtyp_2}
        {\tmtwo'\esub{\vartwo}{\tmthree}}
        {\mtyptwo}
    }
  \]
  where $\tctx = \tctx_1 + \tctx_2, \optmtyp = \optmtyp_1 + \optmtyp_2, m = m_1 + m_2$ and $e = e_1 + e_2$.
  Since $\inv{\aset \cup \set{\var}}{\sset}{\tmtwo\esub{\vartwo}{\tmthree}}$
  then in particular $\inv{\aset \cup \set{\var} \cup \set{\vartwo}}{\sset}{\tmtwo}$.
  The following conditions hold:
  \begin{enumerate}
  \item[(a')] 
    $\tmtwo \tov{\rulesub{\var}{\val}}{(\aset \cup \set{\vartwo}) \cup \set{\var}}{\sset}{\appflag} \tmtwo'$, 
    by condition (a)
  \item[(b')] 
    $\judg[m_1, e_1]{\tctx_1; \var : \optmtyp_1; \vartwo : \optmtypthree}{\tmtwo'}{\mtyptwo}$, 
    by condition (b)
  \item[(c')] 
    $\isAppr{\aset \cup \asettwo \cup \set{\var} \cup \set{\vartwo}}{\tctx_1; \var : \optmtyp_1; \vartwo : \optmtypthree}$, 
    given that
    \begin{enumerate}
    \item[(1)]
      $\isAppr
        {\aset \cup \asettwo \cup \set{\var} \cup \set{\vartwo}}
        {\tctx_1; \var : \optmtyp_1}$:
      since 
      $\isAppr{\aset \cup \asettwo \cup \set{\var}}{\tctx; \var : \optmtyp}$
      by condition (c), where $\tctx = \tctx_1 + \tctx_2$, $\optmtyp = \optmtyp_1 + \optmtyp_2$,
      and \cref{rem:isAppropriate}
    \item[(2)]
      $\isAppr
        {\aset \cup \asettwo \cup \set{\var} \cup \set{\vartwo}}
        {\vartwo : \optmtypthree}$:
      by condition (a) $\tmthree \in \HAbs{\aset \cup \set{\var}}$.
      Then $\aset \subseteq \aset \cup \asettwo$ 
      implies $\tmthree \in \HAbs{\aset \cup \asettwo \cup \set{\var}}$
      by \cref{rem:habs_st}.
      And also, $\isAppr{\aset \cup \asettwo \cup \set{\var}}{\tctx_2; \var : \optmtyp_2}$ holds
      by condition (c) as $\tctx = \tctx_1 + \tctx_2$ and $\optmtyp = \optmtyp_1 + \optmtyp_2$. 
      Then $\mtypthree \neq \tightN$
      by \cref{lem:types_of_hereditary_abstractions}. 
      And $\optmtypthree \mleq \mtypthree$ implies $\optmtypthree \neq \tightN$,
      so we are done.
    \end{enumerate}
  \item[(d')] 
    If $\appflag = \app$ then either $\mtyptwo = \tightN$ or $\mtyptwo$ is of the form $\mset{\typ}$, by condition (d).
  \item[(e')] 
      Since  $\val \in \HAbs{\aset_0 \cup \asettwo}$
      by condition (e), and
      $\aset_0 \cup \asettwo \subseteq (\aset_0 \cup \asettwo) \cup \set{\vartwo}$,
      then 
      $\val \in \HAbs{\aset_0 \cup \asettwo \cup \set{\vartwo}}$
      by \cref{rem:habs_st}
  \end{enumerate}
  We can apply \ih on $\tmtwo'$, yielding
  $\tctxtwo_\tmtwo$, $\tctxtwo_\val$, $\typ$, $m_\tmtwo$, $e_\tmtwo$, $m_\val$, $e_\val$
  such that
  \begin{itemize}
  \item[$\circ$]
    $\tctx_1; \vartwo : \optmtypthree = \tctxtwo_\tmtwo + \tctxtwo_\val$ and $m_1 = m_\tmtwo + m_\val$ and $e_1 = e_\tmtwo + e_\val$
  \item[$\circ$]
    $\judg[m_\tmtwo,e_\tmtwo + 1]{\tctxtwo_\tmtwo;\var:\optmtyp_1 + \mset{\typ}}{\tmtwo}{\mtyptwo}$
  \item[$\circ$]
    $\judg[m_\val,e_\val]{\tctxtwo_\val}{\val}{\mset{\typ}}$
  \end{itemize}
  We can write $\optmtypthree$ as $\optmtypthree_\tmtwo + \optmtypthree_\val$, so that
  $\tctxtwo_\tmtwo = \tctx_\tmtwo; \vartwo : \optmtypthree_\tmtwo$ and
  $\tctxtwo_\val = \tctx_\val; \vartwo : \optmtypthree_\val$.
  We then have $\tctx_1 = \tctx_\tmtwo + \tctx_\val$.
  Moreover, $\vartwo \notin \fv{\val}$ by condition (a), then $\vartwo \notin \dom{\tctxtwo_\val}$ by \cref{lem:relevance} 
  applied to $\judg[m_\val,e_\val]{\tctxtwo_\val}{\val}{\mset{\typ}}$,
  implying that $\optmtypthree_\val = \none$ and hence $\optmtypthree_\tmtwo = \optmtypthree$.
  Taking 
  $\tctx_\tm = \tctx_\tmtwo + \tctx_2, \tctx_\val = \tctxtwo_\val,\typ, 
   m_\tm = m_\tmtwo + m_2,
   e_\tm = e_\tmtwo + e_2,
   m_\val,e_\val$
  the following statements hold:
  \begin{itemize}
  \item 
    $\tctx = \tctx_1 + \tctx_2 = \tctx_\tmtwo + \tctx_\val + \tctx_2 = \tctx_\tm + \tctx_\val$ 
    and $m = m_1 + m_2 = m_\tmtwo + m_\val + m_2 = m_\tm + m_\val$ 
    and $e = e_1 + e_2 = e_\tmtwo + e_\val + e_2 = e_\tm + e_\val$
  \item
    \[
      \indrule{\ruleTypES}{
        \judg[m_\tmtwo,e_\tmtwo + 1]
          {\tctx_\tmtwo;\var:\optmtyp_1 + \mset{\typ}; \vartwo : \optmtypthree}
          {\tmtwo}
          {\mtyptwo}
        \sep
        \optmtypthree \mleq \mtypthree
        \sep
        \judg[m_2, e_2]{\tctx_2; \var : \optmtyp_2}{\tmthree}{\mtypthree}
      }{
        \judg[m_\tmtwo + m_2, e_\tmtwo + 1 + e_2]
          {\tctx_\tmtwo + \tctx_2; \var : (\optmtyp_1 + \mset{\typ} + \optmtyp_2)}
          {\tmtwo\esub{\vartwo}{\tmthree}}
          {\mtyptwo}
      }
      \]
  \item 
    $\judg[m_\val,e_\val]{\tctxtwo_\val}{\val}{\mset{\typ}}$
  \end{itemize}
\item \ruleUEsLStruct. The following conditions hold:
  \begin{enumerate}
  \item[(a)]
    \[
      \indrule{\ruleUEsLStruct}{
        \tmtwo \tov{\rulesub{\var}{\val}}{\aset \cup \set{\var}}{\sset \cup \set{\vartwo}}{\appflag} \tmtwo'
        \sep
        \tmthree \in \Struct{\sset}
        \sep
        \vartwo \notin \aset \cup \set{\var} \cup \sset
        \sep
        \vartwo \notin \fv{\rulesub{\var}{\val}}
      }{
        \tmtwo\esub{\vartwo}{\tmthree} \tov{\rulesub{\var}{\val}}{\aset \cup \set{\var}}{\sset}{\appflag} \tmtwo'\esub{\vartwo}{\tmthree}
      }
    \]
    where $\tm = \tmtwo\esub{\vartwo}{\tmthree}$ and $\tm' = \tmtwo'\esub{\vartwo}{\tmthree}$
  \item[(b)] 
    $\judg[m, e]{\tctx; \var : \optmtyp}{\tmtwo'\esub{\vartwo}{\tmthree}}{\mtyptwo}$
  \item[(c)] 
    $\isAppr{\aset \cup \asettwo \cup \set{\var}}{\tctx; \var : \optmtyp}$
  \item[(d)] 
    If $\appflag = \app$ then either $\mtyptwo = \tightN$ or $\mtyptwo$ is of the form $\mset{\typ}$.
  \item[(e)] 
      $\val \in \HAbs{\aset_0 \cup \asettwo}$
  \end{enumerate}
  The judgement of condition (b) can be derived only by the rule $\ruleTypES$, so
  \[
    \indrule{\ruleTypES}{
      \judg[m_1, e_1]
        {\tctx_1; \var : \optmtyp_1; \vartwo : \optmtypthree}
        {\tmtwo'}
        {\mtyptwo}
      \sep
      \optmtypthree \mleq \mtypthree
      \sep
      \judg[m_2, e_2]{\tctx_2; \var : \optmtyp_2}{\tmthree}{\mtypthree}
    }{
      \judg[m_1 + m_2, e_1 + e_2]
        {\tctx_1 + \tctx_2; \var : \optmtyp_1 + \optmtyp_2}
        {\tmtwo'\esub{\vartwo}{\tmthree}}
        {\mtyptwo}
    }
  \]
  where $\tctx = \tctx_1 + \tctx_2, \optmtyp = \optmtyp_1 + \optmtyp_2, m = m_1 + m_2$ and $e = e_1 + e_2$.
  Since $\inv{\aset \cup \set{\var}}{\sset}{\tmtwo\esub{\vartwo}{\tmthree}}$
  then in particular $\inv{\aset \cup \set{\var}}{\sset \cup \set{\vartwo}}{\tmtwo}$.
  The following conditions hold:
  \begin{enumerate}
  \item[(a')] 
    $\tmtwo \tov{\rulesub{\var}{\val}}{\aset \cup \set{\var}}{\sset \cup \set{\vartwo}}{\appflag} \tmtwo'$, 
    by condition (a)
  \item[(b')] 
    $\judg[m_1, e_1]{\tctx_1; \var : \optmtyp_1; \vartwo : \optmtypthree}{\tmtwo'}{\mtyptwo}$, 
    by condition (b)
  \item[(c')] 
    $\isAppr
      {\aset \cup \asettwo \cup \set{\var}}
      {\tctx_1; \var : \optmtyp_1; \vartwo : \optmtypthree}$, 
    given that
    \begin{enumerate}
    \item[(1)]
      $\isAppr{\aset \cup \asettwo \cup \set{\var}}{\tctx_1; \var : \optmtyp_1}$:
      since 
      $\isAppr{\aset \cup \asettwo \cup \set{\var}}{\tctx; \var : \optmtyp}$ 
      holds by condition (c) and
      $\tctx = \tctx_1 + \tctx_2$ and $\optmtyp = \optmtyp_1 + \optmtyp_2$
    \item[(2)]
      $\isAppr{\aset \cup \asettwo \cup \set{\var}}{\vartwo : \optmtypthree}$:
      by $\alpha$-conversion we assume 
      $\vartwo \notin (\aset \cup \asettwo) \cup \set{\var}$
      \end{enumerate}
  \item[(d')] 
    If $\appflag = \app$ 
    then either $\mtyptwo = \tightN$ or $\mtyptwo$ is of the form $\mset{\typ}$,
    by condition (d).
  \item[(e')] 
    $\val \in \HAbs{\aset_0 \cup \asettwo}$, by condition (e)
  \end{enumerate}
  We can apply \ih on $\tmtwo'$, yielding 
  $\tctxtwo_\tmtwo$, $\tctxtwo_\val$, $\typ$, $m_\tmtwo$, $e_\tmtwo$, $m_\val$, $e_\val$
  such that
  \begin{itemize}
  \item[$\circ$]
    $\tctx_1; \vartwo : \optmtypthree = \tctxtwo_\tmtwo + \tctxtwo_\val$ and $m_1 = m_\tmtwo + m_\val$ and $e_1 = e_\tmtwo + e_\val$
  \item[$\circ$]
    $\judg[m_\tmtwo,e_\tmtwo + 1]{\tctxtwo_\tmtwo;\var:\optmtyp_1 + \mset{\typ}}{\tmtwo}{\mtyptwo}$
  \item[$\circ$]
    $\judg[m_\val,e_\val]{\tctxtwo_\val}{\val}{\mset{\typ}}$
  \end{itemize}
  We can write $\optmtypthree$ as $\optmtypthree_\tmtwo + \optmtypthree_\val$, so that
  $\tctxtwo_\tmtwo = \tctx_\tmtwo; \vartwo : \optmtypthree_\tmtwo$ and
  $\tctxtwo_\val = \tctx_\val; \vartwo : \optmtypthree_\val$.
  We then have $\tctx_1 = \tctx_\tmtwo + \tctx_\val$.
  Moreover, $\vartwo \notin \fv{\val}$ by condition (a),
  then $\vartwo \notin \dom{\tctxtwo_\val}$ by \cref{lem:relevance} 
  applied to $\judg[m_\val,e_\val]{\tctxtwo_\val}{\val}{\mset{\typ}}$,
  implying that $\optmtypthree_\val = \none$ and 
  hence $\optmtypthree_\tmtwo = \optmtypthree$, and $\tctx_\val = \tctxtwo_\val$.
  Taking 
  $\tctx_\tm = \tctx_\tmtwo + \tctx_2, \tctx_\val = \tctxtwo_\val,\typ, 
   m_\tm = m_\tmtwo + m_2,
   e_\tm = e_\tmtwo + e_2,
   m_\val,e_\val$
  the following statements hold:
  \begin{itemize}
  \item
    $\tctx = \tctx_1 + \tctx_2 = \tctx_\tmtwo + \tctx_\val + \tctx_2 = \tctx_\tm + \tctx_\val$ 
    and $m = m_1 + m_2 = m_\tmtwo + m_\val + m_2 = m_\tm + m_\val$ 
    and $e = e_1 + e_2 = e_\tmtwo + e_\val + e_2 = e_\tm + e_\val$
  \item
    \[
      \indrule{\ruleTypES}{
        \judg[m_\tmtwo,e_\tmtwo + 1]
          {\tctx_\tmtwo;\var:\optmtyp_1 + \mset{\typ}; \vartwo : \optmtypthree}
          {\tmtwo}
          {\mtyptwo}
        \sep
        \optmtypthree \mleq \mtypthree
        \sep
        \judg[m_2, e_2]{\tctx_2; \var : \optmtyp_2}{\tmthree}{\mtypthree}
      }{
        \judg[m_\tmtwo + m_2, e_\tmtwo + 1 + e_2]
          {\tctx_\tmtwo + \tctx_2; \var : (\optmtyp_1 + \mset{\typ} + \optmtyp_2)}
          {\tmtwo\esub{\vartwo}{\tmthree}}
          {\mtyptwo}
      }
    \]
  \item
    $\judg[m_\val,e_\val]{\tctx_\val}{\val}{\mset{\typ}}$,
    by the third item obtained after applying the \ih
  \end{itemize}
}
\end{enumerate}
\end{proof}

\subjectexpansion*
% Label: prop:subject_expansion

\begin{proof}
By induction on the derivation of 
$\tm \tov\rulename\aset\sset\appflag \tm'$.
\begin{enumerate}
\item \ruleUDb.
  The following conditions hold:
  \begin{enumerate}
  \item[(a)]
    $(\lam\var\tmtwo)\sctx \, \tmthree 
     \tov\ruledb\aset\sset\appflag
     \tmtwo\esub\var\tmthree\sctx$
  \item[(b)]
    $\judg[\cm',\ce']\tctx{\tmtwo\esub\var\tmthree\sctx}\mtyp$
  \item[(c)]
    $\isAppr\aset\tctx$
  \item[(d)]
    If $\appflag = \app$, then either $\mtyp = \tightN$
    or $\mtyp$ is a singleton, \ie, of the form $\mset\typ$.
  \end{enumerate}
  Applying \cref{lem:composition} to the judgement in condition (b), 
  there exist $\tctx_{\tmtwo\esub\var\tmthree}$, $\tctx_\sctx$, 
  $\tctxtwo$, $\cm'_{\tmtwo\esub\var\tmthree}$, 
  $\ce'_{\tmtwo\esub\var\tmthree}$, $\cm'_\sctx$, and $\ce'_\sctx$ 
  such that:
  \begin{enumerate}
  \item[1.] 
    $\judgSctx[\cm'_\sctx, \ce'_\sctx]{\tctx_\sctx}\sctx\tctxtwo$
  \item[2.] 
    $\judg[\cm'_{\tmtwo\esub\var\tmthree}, \ce'_{\tmtwo\esub\var\tmthree}]
      {\tctx_{\tmtwo\esub\var\tmthree}; \tctxtwo}
      {\tmtwo\esub\var\tmthree}
      \mtyp$
  \item[3.]
    $\tctx = \tctx_\sctx + \tctx_{\tmtwo\esub\var\tmthree}$ and
    $\cm' = \cm'_\sctx + \cm'_{\tmtwo\esub\var\tmthree}$ and
    $\ce' = \ce'_\sctx + \ce'_{\tmtwo\esub\var\tmthree}$
  \end{enumerate}
  The judgement of statement 2. can be derived only by rule \ruleTypES:
  \[
    \indrule{\ruleTypES}{
      \judg[n_1, f_1]{\tctxthree_1; \tctxtwo_1; \var : \optmtyptwo}{\tmtwo}{\mtyp}
      \sep
      \optmtyptwo \mleq \mtyptwo
      \sep
      \judg[n_2, f_2]{\tctxthree_2; \tctxtwo_2}{\tmthree}{\mtyptwo}
    }{
      \judg[n_1 + n_2, f_1 + f_2]
        {\tctxthree_1 + \tctxthree_2; \tctxtwo_1 + \tctxtwo_2}
        {\tmtwo\esub\var\tmthree}
        \mtyp
    }
  \]
  where $\tctx_{\tmtwo\esub\var\tmthree} = \tctxthree_1 + \tctxthree_2,
  \tctxtwo = \tctxtwo_1 + \tctxtwo_2, 
  m'_{\tmtwo\esub{\var}{\tmthree}} = n_1 + n_2$ and 
  $e'_{\tmtwo\esub{\var}{\tmthree}} = f_1 + f_2$.
  By $\alpha-$conversion we may assume that 
  the bound variables in $(\lam{\var}{\tmtwo})\sctx$ don't occur free in $\tmthree$.
  Since $\dom{\tctxtwo} = \dom{\tctxtwo_1 + \tctxtwo_2} = \dom{\tctxtwo_1} \cup \dom{\tctxtwo_2}$ 
  is a subset of $\domSctx{\sctx}$ by \cref{lem:relevance_sctx},
  we may assume that $\dom{\tctxtwo} \cap \fv{\tmthree} = \emptyset$, 
  so in particular $\dom{\tctxtwo_2} \cap \fv{\tmthree} = \emptyset$.
  We have $\dom{\tctxthree_2} \cup \dom{\tctxtwo_2} = \dom{\tctxthree_2; \tctxtwo_2} \subseteq \fv{\tmthree}$ by \cref{lem:relevance}.
  Therefore $\tctxtwo_2 = \emptyctx$, with $\tctxtwo = \tctxtwo_1$.
  We apply rule $\ruleTypAbs$ to the judgement derivation $\judg[n_1, f_1]{\tctxthree_1; \tctxtwo_1; \var : \optmtyptwo}{\tmtwo}{\mtyp}$,
 yielding $\judg[n_1, f_1]{\tctxthree_1; \tctxtwo}{\lam{\var}{\tmtwo}}{\mset{\optmtyptwo \to \mtyp}}$.
  By \cref{lem:composition}, we compose this judgement with the one in statement 1.,
  yielding $\judg[m'_\sctx + n_1, e'_\sctx + f_1]{\tctx_\sctx + \tctxthree_1}{(\lam{\var}{\tmtwo})\sctx}{\mset{\optmtyptwo \to \mtyp}}$.
  Now we apply rule \ruleTypAppC, taking this judgement and $\judg[n_2, f_2]{\tctxthree_2; \tctxtwo_2}{\tmthree}{\mtyptwo}$ as premises:
  \[
    \indrule{\ruleTypAppC}{
      \judg[m'_\sctx + n_1, e'_\sctx + f_1]{\tctx_\sctx + \tctxthree_1}{(\lam{\var}{\tmtwo})\sctx}{\mset{\optmtyptwo \to \mtyp}}
      \sep
      \optmtyptwo \mleq \mtyptwo
      \sep
      \judg[n_2, f_2]{\tctxthree_2}{\tmthree}{\mtyptwo}
    }{
      \judg[1 + m'_\sctx + n_1 + n_2, e'_\sctx + f_1 + f_2]{\tctx_\sctx + \tctxthree_1 + \tctxthree_2}{(\lam{\var}{\tmtwo})\sctx \, \tmthree}{\mtyp}
    }
  \]
  where $\tctx_\sctx + \tctxthree_1 + \tctxthree_2 = \tctx_\sctx + \tctx_{\tmtwo\esub{\var}{\tmthree}} = \tctx$.
  Moreover
  since $\rulename = \ruledb$, we have that 
  $ (m,e) 
  = (1 + m'_\sctx + n_1 + n_2, e'_\sctx + f_1 + f_2) 
  = (1 + m'_\sctx + m'_{\tmtwo\esub{\var}{\tmthree}}, e'_\sctx + e'_{\tmtwo\esub{\var}{\tmthree}}) 
  = (m'+1,e')$.
\item \ruleULsv.
 The following conditions hold:
  \begin{enumerate}
  \item[(a)]
    \[
      \indrule{\ruleULsv}{
        \tmtwo 
        \tov{\rulesub{\var}{\val}}{\aset \cup \set{\var}}{\sset}{\appflag} 
        \tmtwo'
        \sep
        \var \notin \aset \cup \sset
        \sep
        \val\sctx \in \HAbs{\aset}
      }{
        \tmtwo\esub{\var}{\val\sctx} 
        \tov{\rulelsv}{\aset}{\sset}{\appflag} 
        \tmtwo'\esub{\var}{\val}\sctx
      }
    \]
  \item[(b)]
    $\judg[m',e']{\tctx}{\tmtwo'\esub{\var}{\val}\sctx}{\mtyp}$
  \item[(c)]
    $\isAppr{\aset}{\tctx}$
  \item[(d)]
    If $\appflag = \app$
    then either $\mtyp = \tightN$
    or $\mtyp$ is a singleton, \ie, of the form $\mset{\typ}$.
  \end{enumerate}
  By \cref{lem:composition} there exist 
  $\tctx_{\tmtwo'\esub{\var}{\val}}, \tctx_\sctx, \tctxtwo,
   m'_{\tmtwo'\esub{\var}{\val}}, e'_{\tmtwo'\esub{\var}{\val}}, m'_\sctx$
  and $e'_\sctx$ such that:
  \begin{enumerate}
  \item[1.]
    $\judgSctx[m'_\sctx, e'_\sctx]{\tctx_\sctx}{\sctx}{\tctxtwo}$
  \item[2.]
    $\judg[m'_{\tmtwo'\esub{\var}{\val}}, e'_{\tmtwo'\esub{\var}{\val}}]
      {\tctx_{\tmtwo'\esub{\var}{\val}}; \tctxtwo}
      {\tmtwo'\esub{\var}{\val}}
      {\mtyp}$
  \item[3.]
    $\tctx = \tctx_\sctx + \tctx_{\tmtwo'\esub{\var}{\val}}$, and
    $m' = m'_\sctx + m'_{\tmtwo'\esub{\var}{\val}}$, and
    $e' = e'_\sctx + e'_{\tmtwo'\esub{\var}{\val}}$.
  \end{enumerate}
  Furthermore,
  $\isAppr{\aset}{\tctx}$ by condition (c),
  and note that
  $\inv{\aset}{\sset}{\tmtwo\esub{\var}{\val\sctx}}$
  implies
  $\inv{\aset}{\sset}{\tmtwo'\esub{\var}{\val}\sctx}$,
  since
  $\aset \disj \sset$  and
  $\fv{\tmtwo'\esub{\var}{\val}\sctx} = 
   \fv{\tmtwo\esub{\var}{\val\sctx}} \subseteq \aset \cup \sset$.
  Therefore
  $\isAppr{\expansion{\aset}{\sctx}}{\tctxtwo}$.
  The judgement of statement (b) can only be derived by rule \ruleTypES:
  \[
    \indrule{\ruleTypES}{
      \judg[n_1, f_1]
        {\tctxthree_1; \tctxtwo_1; \var : \optmtyptwo}
        {\tmtwo'}
        {\mtyp}
      \sep
      \optmtyptwo \mleq \mtyptwo
      \sep
      \judg[n_2, f_2]{\tctxthree_2; \tctxtwo_2}{\val}{\mtyptwo}
    }{
      \judg[n_1 + n_2, f_1 + f_2]
        {\tctxthree_1 + \tctxthree_2; \tctxtwo_1 + \tctxtwo_2}
        {\tmtwo'\esub{\var}{\val}}
        {\mtyp}
    }
  \]
  where 
  $\tctx_{\tmtwo'\esub{\var}{\val}} = \tctxthree_1 + \tctxthree_2, 
   \tctxtwo = \tctxtwo_1 + \tctxtwo_2, 
   m'_{\tmtwo'\esub{\var}{\val}} = n_1 + n_2$ and 
  $e'_{\tmtwo'\esub{\var}{\val}} = f_1 + f_2$.
  Moreover, $\inv{\aset}{\sset}{\tmtwo\esub{\var}{\val\sctx}}$ 
  implies $\inv{\aset \cup \set{\var}}{\sset}{\tmtwo}$.
    The following conditions hold:
  \begin{enumerate}
  \item[(a')]
    $\tmtwo \tov{\rulesub{\var}{\val}}{\aset\cup\set{\var}}{\sset}{\appflag} \tmtwo'$,
    by condition (a)
  \item[(b')]
    $\judg[n_1, f_1]
      {(\tctxthree_1; \tctxtwo_1); \var : \optmtyptwo}
      {\tmtwo'}
      {\mtyp}$,
    by premise of the rule in condition (b)
  \item[(c')]
    $\isAppr
      {\expansion{\aset}{\sctx} \cup \set{\var}}
      {(\tctxthree_1; \tctxtwo_1);\var:\optmtyptwo}$ 
    since
    \begin{enumerate}
    \item[(1)]
      $\isAppr{\aset}{\tctxthree_1}$, 
      as $\tctx = \tctx_\sctx + \tctxthree_1 + \tctxthree_2$, and
      $\isAppr{\aset}{\tctx}$ by condition (c).
      We conclude by \cref{rem:isAppropriate} that
      $\isAppr{\expansion{\aset}{\sctx}\cup\set{\var}}{\tctxthree_1}$.
    \item[(2)]
      $\isAppr{\expansion{\aset}{\sctx}}{\tctxtwo_1}$, 
      as $\tctxtwo = \tctxtwo_1 + \tctxtwo_2$,
      and $\isAppr{\expansion{\aset}{\sctx}}{\tctxtwo}$, justified before.
      We conclude by \cref{rem:isAppropriate} that
      $\isAppr{\expansion{\aset}{\sctx}\cup\set{\var}}{\tctxtwo_1}$.
    \item[(3)] $\isAppr{\set{\var}}{\var : \optmtyptwo}$. 
      Indeed,
      $\val \in \HAbs{\expansion{\aset}{\sctx}}$ by
      \cref{lem:tL_hAbs_t_hAbsExp}, and
      $\isAppr{\expansion{\aset}{\sctx}}{\tctxthree_2; \tctxtwo_2}$,
      since 
      (3.1): $\isAppr{\expansion{\aset}{\sctx}}{\tctxthree_2}$, given that
      $\tctx = \tctx_\sctx + \tctxthree_1 + \tctxthree_2$, and
      $\isAppr{\aset}{\tctx}$ by condition (c); and 
      (3.2):
      $\isAppr{\expansion{\aset}{\sctx}}{\tctxtwo_2}$, 
      given $\isAppr{\expansion{\aset}{\sctx}}{\tctxtwo}$ mentioned before and
      $\tctxtwo = \tctxtwo_1 + \tctxtwo_2$.
      Hence we can apply
      \cref{lem:types_of_hereditary_abstractions} yielding
      $\mtyptwo \neq \tightN$. 
      Then $\optmtyptwo \mleq
      \mtyptwo$ implies $\optmtyptwo \neq
      \tightN$.  By \cref{rem:isAppropriate}, we conclude that
      $\isAppr{\expansion{\aset}{\sctx} \cup
        \set{\var}}{\var:\optmtyptwo}$.
    \end{enumerate}
  \item[(d')]
    If $\appflag = \app$
    then either $\mtyp = \tightN$
    or $\mtyp$ is a singleton, \ie, of the form $\mset{\typ}$,
    by condition (d).
  \item[(e')]
    $\val \in \HAbs{\expansion{\aset}{\sctx}}$,
    by \cref{lem:tL_hAbs_t_hAbsExp}
  \end{enumerate}
  Therefore by \cref{lem:anti_substitution}, we have that there exist
  $\tctxfour_\tmtwo, \tctxfour_\val, \typ, n_\tmtwo, f_\tmtwo, n_\val$ and $f_\val$
  such that:
  \begin{enumerate}
  \item[1'.]
    $\tctxthree_1; \tctxtwo_1 = \tctxfour_\tmtwo + \tctxfour_\val$ and
    $n_1 = n_\tmtwo + n_\val$ and
    $f_1 = f_\tmtwo + f_\val$
  \item[2'.]
    $\judg[n_\tmtwo, f_\tmtwo + 1]
      {\tctxfour_\tmtwo; \var : \optmtyptwo + \mset{\typ}}
      {\tmtwo}
      {\mtyp}$
  \item[3'.]
    $\judg[n_\val, f_\val]{\tctxfour_\val}{\val}{\mset{\typ}}$
  \end{enumerate}
  By statement 1'. we can write
  $\tctxfour_\tmtwo$ as $\tctxthree_{11}; \tctxtwo_{11}$ and
  $\tctxfour_\val$ as $\tctxthree_{12}; \tctxtwo_{12}$ with
  $\tctxthree_1 = \tctxthree_{11} + \tctxthree_{12}$ and
  $\tctxtwo_1 = \tctxtwo_{11} + \tctxtwo_{12}$.
  By $\alpha-$conversion
  the bound variables in $\val\sctx$ don't occur free in $\tmtwo$;
  in particular $\domSctx{\sctx} \cap \fv{\tmtwo} = \emptyset$.
  Since $\dom{\tctxtwo} = \dom{\tctxtwo_1 + \tctxtwo_2} = \dom{\tctxtwo_1} \cup \dom{\tctxtwo_2}$
  is a subset of $\domSctx{\sctx}$
  by \cref{lem:relevance_sctx}, then
  $\dom{\tctxtwo} \cap \fv{\tmtwo} = \emptyset$, so in particular
  $\dom{\tctxtwo_2} \cap \fv{\tmtwo} = \emptyset$.
  We have
  $\dom{\tctxfour_\tmtwo; \var : \optmtyptwo + \mset{\typ}} \subseteq \fv{\tmtwo}$
  by \cref{lem:relevance}. 
  Therefore $\tctxtwo_{11} = \emptyset$, with $\tctxtwo_1 = \tctxtwo_{12}$.
  We apply \cref{lem:splitting}, merging the judgement of statement 3'. and
  the judgement $\judg[n_2, f_2]{\tctxthree_2; \tctxtwo_2}{\val}{\mtyptwo}$, 
  so that we obtain
  $\judg[n_\val + n_2, f_\val + f_2]
    {(\tctxthree_{12}; \tctxtwo_1) + (\tctxthree_2; \tctxtwo_2)}
    {\val}
    {\mset{\typ} + \mtyptwo}$.
  We compose this result with the judgement from statement 1, by \cref{lem:composition}, yielding
  $\judg[m'_\sctx + n_\val + n_2, e'_\sctx + f_\val + f_2]
    {\tctx_\sctx + \tctxthree_{12} + \tctxthree_2}
    {\val\sctx}
    {\mset{\typ} + \mtyptwo}$.
  We apply rule \ruleTypES:
  \[
    \indrule{\ruleTypES}{
      \judg[n_\tmtwo, f_\tmtwo + 1]
        {\tctxthree_{11}; \var : \optmtyptwo + \mset{\typ}}
        {\tmtwo}
        {\mtyp}
      \sep
      \optmtyptwo \mleq \mtyptwo
      \sep
      \judg[m'_\sctx + n_\val + n_2, e'_\sctx + f_\val + f_2]
        {\tctx_\sctx + \tctxthree_{12} + \tctxthree_2}
        {\val\sctx}
        {\mset{\typ} + \mtyptwo}
    }{
      \judg[n_\tmtwo + m'_\sctx + n_\val + n_2, f_\tmtwo + 1 + e'_\sctx + f_\val + f_2]
        {\tctxthree_{11} + \tctx_\sctx + \tctxthree_{12} + \tctxthree_2}
        {\tmtwo\esub{\var}{\val\sctx}}
        {\mtyp}
    }
  \]
  where
  $ \tctxthree_{11} + \tctx_\sctx + \tctxthree_{12} + \tctxthree_2 
  = \tctxthree_1 + \tctx_\sctx + \tctxthree_2
  = \tctx_{\tmtwo'\esub{\var}{\val}} + \tctx_\sctx
  = \tctx$.
  Moreover, since $\rulename = \rulelsv$, we have that
  $ (m,e)
  = (n_\tmtwo + m'_\sctx + n_\val + n_2, f_\tmtwo + 1 + e'_\sctx + f_\val + f_2)
  = (m'_\sctx + n_1 + n_2, f_1 + f_2 + e'_\sctx + 1)
  = (m'_\sctx + m'_{\tmtwo'\esub{\var}{\val}}, e'_{\tmtwo'\esub{\var}{\val}} + e'_\sctx + 1)
  = (m', e' + 1)$,
  so we are done.
\HIDDENFRAGMENT{
\item The congruence cases are uninteresting and are omitted here.
}{
\item \ruleUAppL.
  The following conditions hold:
  \begin{enumerate}
  \item[(a)]
    \[
      \indrule{\ruleUAppL}{
        \tmtwo \tov{\rulename}{\aset}{\sset}{\app} \tmtwo'
      }{
        \tmtwo \, \tmthree \tov{\rulename}{\aset}{\sset}{\appflag} \tmtwo' \, \tmthree
      }
    \]
    with $\rulename \in \set{\ruledb, \rulelsv}$
  \item[(b)]
    $\judg[m',e']{\tctx}{\tmtwo' \, \tmthree}{\mtyp}$
  \item[(c)]
    $\isAppr{\aset}{\tctx}$
  \item[(d)]
    If $\appflag = \app$ then either $\mtyp = \tightN$ or $\mtyp$ is of the form $\mset{\typ}$.
  \end{enumerate}
  Note that $\inv{\aset}{\sset}{\tmtwo \, \tmthree}$
  implies in particular $\inv{\aset}{\sset}{\tmtwo}$.
  The judgement of the condition (b) can be derived either by rule $\ruleTypAppP$ or by rule \ruleTypAppC:
\begin{enumerate}
  \item \ruleTypAppP.
    Then 
    \[
      \indrule{\ruleTypAppP}{
        \judg[m'_1,e'_1]{\tctx_1}{\tmtwo'}{\tightN}
        \sep
        \judg[m'_2,e'_2]{\tctx_2}{\tmthree}{\tight}
      }{
        \judg[m'_1 + m'_2, e'_1 + e'_2]{\tctx_1 + \tctx_2}{\tmtwo' \, \tmthree}{\tightN}
      }
    \]
    where $\tctx = \tctx_1 + \tctx_2, m' = m'_1 + m'_2, e' = e'_1 + e'_2$ and $\mtyp = \tightN$.
    The following conditions hold:
    \begin{enumerate}
    \item[(a')]
      $\tmtwo \tov{\rulename}{\aset}{\sset}{\app} \tmtwo'$,
      with $\rulename \in \set{\ruledb, \rulelsv}$,
      by condition (a)
    \item[(b')]
      $\judg[m'_1,e'_1]{\tctx_1}{\tmtwo'}{\tightN}$, by premise of the judgement of the condition (b)
    \item[(c')]
      $\isAppr{\aset}{\tctx_1}$, since $\tctx = \tctx_1 + \tctx_2$ and by condition (c)
    \item[(d')]
      Here $\appflag = \app$, with $\mtyp = \tightN$.
    \end{enumerate}
    We can apply \ih on $\tmtwo'$, yielding 
    $\judg[m_1,e_1]{\tctx_1}{\tmtwo}{\tightN}$, where,
    if $\rulename = \ruledb$  then $(m_1,e_1) = (m'_1 + 1, e'_1)$, and
    if $\rulename = \rulelsv$ then $(m_1,e_1) = (m'_1, e'_1 + 1)$.
    We build the following derivation:
    \[
      \indrule{\ruleTypAppP}{
        \judg[m_1,e_1]{\tctx_1}{\tmtwo}{\tightN}
        \sep
        \judg[m'_2,e'_2]{\tctx_2}{\tmthree}{\tight}
      }{
        \judg[m_1 + m'_2, e_1 + e'_2]{\tctx_1 + \tctx_2}{\tmtwo \, \tmthree}{\tightN}
      }
    \]
    where 
    if $\rulename = \ruledb$ then $(m,e) = (m_1 + m'_2,e_1 + e'_2) = (m'_1 + 1 + m'_2, e'_1 + e'_2) = (m' + 1,e')$, and
    if $\rulename = \rulelsv$ then $(m,e) = (m_1 + m'_2,e_1 + e'_2) = (m'_1 + m'_2, e'_1 + 1 + e'_2) = (m',e' + 1)$,
    so we are done. 
  \item \ruleTypAppC. % Se puede decir que es análogo al caso de la regla \ruleTypAppP.
    Then 
    \[
      \indrule{\ruleTypAppC}{
        \judg[m'_1,e'_1]{\tctx_1}{\tmtwo'}{\mset{\optmtyptwo \to \mtyp}}
        \sep
        \optmtyptwo \mleq \mtyptwo
        \sep
        \judg[m'_2,e'_2]{\tctx_2}{\tmthree}{\mtyptwo}
      }{
        \judg[1 + m'_1 + m'_2, e'_1 + e'_2]{\tctx_1 + \tctx_2}{\tmtwo' \, \tmthree}{\mtyp}
      }
    \]
    where $\tctx = \tctx_1 + \tctx_2, m' = 1 + m'_1 + m'_2$ and $e' = e'_1 + e'_2$.
    The following conditions hold:
    \begin{enumerate}
    \item[(a')]
      $\tmtwo \tov{\rulename}{\aset}{\sset}{\app} \tmtwo'$,
      with $\rulename \in \set{\ruledb, \rulelsv}$,
      by condition (a)
    \item[(b')]
      $\judg[m'_1,e'_1]{\tctx_1}{\tmtwo'}{\mset{\optmtyptwo \to \mtyp}}$, 
      by premise of the judgement of the condition (b)
    \item[(c')]
      $\isAppr{\aset}{\tctx_1}$, since $\tctx = \tctx_1 + \tctx_2$ and by condition (c)
    \item[(d')]
      Here $\appflag = \app$, and the term is typed with the singleton $\mset{\optmtyptwo \to \mtyp}$.
    \end{enumerate}
    We can apply \ih on $\tmtwo'$, yielding 
    $\judg[m_1,e_1]{\tctx_1}{\tmtwo}{\mset{\optmtyptwo \to \mtyp}}$, where,
    if $\rulename = \ruledb$  then $(m_1,e_1) = (m'_1 + 1, e'_1)$, and
    if $\rulename = \rulelsv$ then $(m_1,e_1) = (m'_1, e'_1 + 1)$.
    We build the following derivation:
    \[
      \indrule{\ruleTypAppC}{
        \judg[m_1,e_1]{\tctx_1}{\tmtwo}{\mset{\optmtyptwo \to \mtyp}}
        \sep
        \optmtyptwo \mleq \mtyptwo
        \sep
        \judg[m'_2,e'_2]{\tctx_2}{\tmthree}{\mtyptwo}
      }{
        \judg[1 + m_1 + m'_2, e_1 + e'_2]{\tctx_1 + \tctx_2}{\tmtwo \, \tmthree}{\mtyp}
      }
    \]
    where 
    if $\rulename = \ruledb$ then $(m,e) = (1 + m_1 + m'_2,e_1 + e'_2) = (1 + m'_1 + 1 + m'_2, e'_1 + e'_2) = (m' + 1,e')$, and
    if $\rulename = \rulelsv$ then $(m,e) = (1 + m_1 + m'_2,e_1 + e'_2) = (1 + m'_1 + m'_2, e'_1 + 1 + e'_2) = (m',e' + 1)$,
    so we are done.
  \end{enumerate}
\item \ruleUAppR.
  The following conditions hold:
  \begin{enumerate}
  \item[(a)]
    \[
      \indrule{\ruleUAppR}{
        \tmtwo \in \Struct{\sset}
        \sep
        \tmthree \tov{\rulename}{\aset}{\sset}{\nonapp} \tmthree'
      }{
        \tmtwo \, \tmthree \tov{\rulename}{\aset}{\sset}{\appflag} \tmtwo \, \tmthree'
      }
    \]
    with $\rulename \in \set{\ruledb, \rulelsv}$
  \item[(b)]
    $\judg[m',e']{\tctx}{\tmtwo \, \tmthree'}{\mtyp}$
  \item[(c)]
    $\isAppr{\aset}{\tctx}$
  \item[(d)]
    If $\appflag = \app$ then either $\mtyp = \tightN$ or $\mtyp$ is of the form $\mset{\typ}$.
  \end{enumerate}
  Note that $\inv{\aset}{\sset}{\tmtwo \, \tmthree}$
  implies in particular $\inv{\aset}{\sset}{\tmthree}$.
  The judgement of the condition (b) can be derived either by rule $\ruleTypAppP$ or by rule \ruleTypAppC:
  \begin{enumerate}
  \item \ruleTypAppP.
    Then 
    \[
      \indrule{\ruleTypAppP}{
        \judg[m'_1,e'_1]{\tctx_1}{\tmtwo}{\tightN}
        \sep
        \judg[m'_2,e'_2]{\tctx_2}{\tmthree'}{\tight}
      }{
        \judg[m'_1 + m'_2, e'_1 + e'_2]{\tctx_1 + \tctx_2}{\tmtwo \, \tmthree'}{\tightN}
      }
    \]
    where $\tctx = \tctx_1 + \tctx_2, m' = m'_1 + m'_2, e' = e'_1 + e'_2$ and $\mtyp = \tightN$.
    The following conditions hold:
    \begin{enumerate}
    \item[(a')]
      $\tmthree \tov{\rulename}{\aset}{\sset}{\nonapp} \tmthree'$,
      with $\rulename \in \set{\ruledb, \rulelsv}$,
      by condition (a)
    \item[(b')]
      $\judg[m'_2,e'_2]{\tctx_2}{\tmthree'}{\tight}$, by premise of the judgement of the condition (b)
    \item[(c')]
      $\isAppr{\aset}{\tctx_2}$, since $\tctx = \tctx_1 + \tctx_2$ and by condition (c)
    \item[(d')]
      Here $\appflag = \nonapp$.
    \end{enumerate}
    We can apply \ih on $\tmthree'$, yielding 
    $\judg[m_2,e_2]{\tctx_2}{\tmthree}{\tight}$, where,
    if $\rulename = \ruledb$  then $(m_2,e_2) = (m'_2 + 1,e'_2)$, and
    if $\rulename = \rulelsv$ then $(m_2,e_2) = (m'_2,e'_2 + 1)$.
    We build the following derivation:
    \[
      \indrule{\ruleTypAppP}{
        \judg[m'_1,e'_1]{\tctx_1}{\tmtwo}{\tightN}
        \sep
        \judg[m_2,e_2]{\tctx_2}{\tmthree}{\tight}
      }{
        \judg[m'_1 + m_2, e'_1 + e_2]{\tctx_1 + \tctx_2}{\tmtwo \, \tmthree}{\tightN}
      }
    \]
    where 
    if $\rulename = \ruledb$  then $(m,e) = (m'_1 + m_2, e'_1 + e_2) = (m'_1 + m'_2 + 1, e'_1 + e'_2) = (m' + 1,e')$, and
    if $\rulename = \rulelsv$ then $(m,e) = (m'_1 + m_2, e'_1 + e_2) = (m'_1 + m'_2, e'_1 + e'_2 + 1) = (m',e' + 1)$,
    so we are done.
  \item \ruleTypAppC. %Se puede decir que es análogo al caso de la regla \ruleTypAppP.
    Then 
    \[
      \indrule{\ruleTypAppC}{
        \judg[m'_1,e'_1]{\tctx_1}{\tmtwo}{\mset{\optmtyptwo \to \mtyp}}
        \sep
        \optmtyptwo \mleq \mtyptwo
        \sep
        \judg[m'_2,e'_2]{\tctx_2}{\tmthree'}{\mtyptwo}
      }{
        \judg[1 + m'_1 + m'_2, e'_1 + e'_2]{\tctx_1 + \tctx_2}{\tmtwo \, \tmthree'}{\mtyp}
      }
    \]
    where $\tctx = \tctx_1 + \tctx_2, m' = 1 + m'_1 + m'_2$ and $e' = e'_1 + e'_2$.
    The following conditions hold:
    \begin{enumerate}
    \item[(a')]
      $\tmthree \tov{\rulename}{\aset}{\sset}{\nonapp} \tmthree'$,
      with $\rulename \in \set{\ruledb, \rulelsv}$,
      by condition (a)
    \item[(b')]
      $\judg[m'_2,e'_2]{\tctx_2}{\tmthree'}{\mtyptwo}$, by premise of the judgement of the condition (b)
    \item[(c')]
      $\isAppr{\aset}{\tctx_2}$, since $\tctx = \tctx_1 + \tctx_2$ and by condition (c)
    \item[(d')]
      Here $\appflag = \nonapp$.
    \end{enumerate}
    We can apply \ih on $\tmthree'$, yielding 
    $\judg[m_2,e_2]{\tctx_2}{\tmthree}{\mtyptwo}$, where,
    if $\rulename = \ruledb$  then $(m_2,e_2) = (m'_2 + 1, e'_2)$, and
    if $\rulename = \rulelsv$ then $(m_2,e_2) = (m'_2, e'_2 + 1)$.
    We build the following derivation:
    \[
      \indrule{\ruleTypAppC}{
        \judg[m'_1,e'_1]{\tctx_1}{\tmtwo}{\mset{\optmtyptwo \to \mtyp}}
        \sep
        \optmtyptwo \mleq \mtyptwo
        \sep
        \judg[m_2,e_2]{\tctx_2}{\tmthree}{\mtyptwo}
      }{
        \judg[1 + m'_1 + m_2, e'_1 + e_2]{\tctx_1 + \tctx_2}{\tmtwo \, \tmthree}{\mtyp}
      }
    \]
    where 
    if $\rulename = \ruledb$  then $(m,e) = (1 + m'_1 + m_2,e'_1 + e_2) = (1 + m'_1 + m'_2 + 1, e'_1 + e'_2) = (m' + 1,e')$, and
    if $\rulename = \rulelsv$ then $(m,e) = (1 + m'_1 + m_2,e'_1 + e_2) = (1 + m'_1 + m'_2, e'_1 + e'_2 + 1) = (m',e' + 1)$,
    so we are done.
  \end{enumerate}
\item \ruleUEsR. % Se puede decir que es análogo al caso \ruleUAppR.
  The following conditions hold:
  \begin{enumerate}
  \item[(a)]
    \[
      \indrule{\ruleUEsR}{
        \tmthree \tov{\rulename}{\aset}{\sset}{\nonapp} \tmthree'
      }{
        \tmtwo\esub{\var}{\tmthree} \tov{\rulename}{\aset}{\sset}{\appflag} \tmtwo\esub{\var}{\tmthree'}
      }
    \]
    with $\rulename \in \set{\ruledb, \rulelsv}$
  \item[(b)]
    $\judg[m',e']{\tctx}{\tmtwo\esub{\var}{\tmthree'}}{\mtyp}$
  \item[(c)]
    $\isAppr{\aset}{\tctx}$
  \item[(d)]
    If $\appflag = \app$ then either $\mtyp = \tightN$ or $\mtyp$ is of the form $\mset{\typ}$.
  \end{enumerate}
  The judgement of the condition (b) can be derived only by rule \ruleTypES
  \[
    \indrule{\ruleTypES}{
      \judg[m'_1,e'_1]{\tctx_1; \var : \optmtyptwo}{\tmtwo}{\mtyp}
      \sep
      \optmtyptwo \mleq \mtyptwo
      \sep
      \judg[m'_2,e'_2]{\tctx_2}{\tmthree'}{\mtyptwo}
    }{
      \judg[m'_1 + m'_2, e'_1 + e'_2]{\tctx_1 + \tctx_2}{\tmtwo\esub{\var}{\tmthree'}}{\mtyp}
    }
  \]
  where $\tctx = \tctx_1 + \tctx_2, m' = m'_1 + m'_2$ and $e' = e'_1 + e'_2$.
  Note that $\inv{\aset}{\sset}{\tmtwo\esub{\var}{\tmthree}}$
  implies in particular $\inv{\aset}{\sset}{\tmthree}$.
    The following conditions hold:
  \begin{enumerate}
  \item[(a')]
    $\tmthree \tov{\rulename}{\aset}{\sset}{\nonapp} \tmthree'$,
    with $\rulename \in \set{\ruledb, \rulelsv}$,
    by condition (a)
  \item[(b')]
    $\judg[m'_2,e'_2]{\tctx_2}{\tmthree'}{\mtyptwo}$, by premise of the judgement of the condition (b)
  \item[(c')]
    $\isAppr{\aset}{\tctx_2}$, since $\tctx = \tctx_1 + \tctx_2$ and by condition (c)
  \item[(d')]
    Here $\appflag = \nonapp$.
  \end{enumerate}
  We can apply \ih on $\tmthree'$, yielding 
  $\judg[m_2,e_2]{\tctx_2}{\tmthree}{\mtyptwo}$, where,
  if $\rulename = \ruledb$  then $(m_2,e_2) = (m'_2 + 1, e'_2)$, and
  if $\rulename = \rulelsv$ then $(m_2,e_2) = (m'_2, e'_2 + 1)$.
  We build the following derivation:
  \[
    \indrule{\ruleTypES}{
      \judg[m'_1,e'_1]{\tctx_1; \var : \optmtyptwo}{\tmtwo}{\mtyp}
      \sep
      \optmtyptwo \mleq \mtyptwo
      \sep
      \judg[m_2,e_2]{\tctx_2}{\tmthree}{\mtyptwo}
    }{
      \judg[m'_1 + m_2, e'_1 + e_2]{\tctx_1 + \tctx_2}{\tmtwo\esub{\var}{\tmthree}}{\mtyp}
    }
  \]
  where 
  if $\rulename = \ruledb$  then $(m,e) = (m'_1 + m_2,e'_1 + e_2) = (m'_1 + m'_2 + 1, e'_1 + e'_2) = (m' + 1,e')$, and
  if $\rulename = \rulelsv$ then $(m,e) = (m'_1 + m_2,e'_1 + e_2) = (m'_1 + m'_2, e'_1 + e'_2 + 1) = (m',e' + 1)$,
  so we are done.
\item \ruleUEsLAbs.
  The following conditions hold:
  \begin{enumerate}
  \item[(a)]
    \[
      \indrule{\ruleUEsLAbs}{
        \tmtwo \tov{\rulename}{\aset \cup \set{\var}}{\sset}{\appflag} \tmtwo'
        \sep
        \tmthree \in \HAbs{\aset}
        \sep
        \var \notin \aset \cup \sset
        \sep
        \var \notin \fv{\rulename}
      }{
        \tmtwo\esub{\var}{\tmthree} \tov{\rulename}{\aset}{\sset}{\appflag} \tmtwo'\esub{\var}{\tmthree}
      }
    \]
    with $\rulename \in \set{\ruledb, \rulelsv}$
  \item[(b)]
    $\judg[m',e']{\tctx}{\tmtwo'\esub{\var}{\tmthree}}{\mtyp}$
  \item[(c)]
    $\isAppr{\aset}{\tctx}$
  \item[(d)]
    If $\appflag = \app$ then either $\mtyp = \tightN$ or $\mtyp$ is of the form $\mset{\typ}$.
  \end{enumerate}
  Note that $\inv{\aset}{\sset}{\tmtwo\esub{\var}{\tmthree}}$
  implies in particular $\inv{\aset \cup \set{\var}}{\sset}{\tmtwo}$.
    The judgement of the condition (b) can be derived only by rule \ruleTypES:
  \[
    \indrule{\ruleTypES}{
      \judg[m'_1,e'_1]{\tctx_1; \var : \optmtyptwo}{\tmtwo'}{\mtyp}
      \sep
      \optmtyptwo \mleq \mtyptwo
      \sep
      \judg[m'_2,e'_2]{\tctx_2}{\tmthree}{\mtyptwo}
    }{
      \judg[m'_1 + m'_2, e'_1 + e'_2]{\tctx_1 + \tctx_2}{\tmtwo'\esub{\var}{\tmthree}}{\mtyp}
    }
  \]
  where $\tctx = \tctx_1 + \tctx_2, m' = m'_1 + m'_2$ and $e' = e'_1 + e'_2$.
  The following conditions hold:
  \begin{enumerate}
  \item[(a')]
    $\tmtwo \tov{\rulename}{\aset \cup \set{\var}}{\sset}{\appflag} \tmtwo'$,
    with $\rulename \in \set{\ruledb, \rulelsv}$,
    by condition (a)
  \item[(b')]
    $\judg[m'_1,e'_1]{\tctx_1; \var : \optmtyptwo}{\tmtwo'}{\mtyp}$, 
    by premise of the judgement of the condition (b)
  \item[(c')]
    $\isAppr{\aset \cup \set{\var}}{\tctx_1; \var : \optmtyptwo}$, since 
    (1) $\isAppr{\aset \cup \set{\var}}{\tctx_1}$
    because $\tctx = \tctx_1 + \tctx_2$, and condition (c) and 
    \cref{rem:isAppropriate}; 
    (2) $\isAppr{\set{\var}}{\var : \optmtyptwo}$ 
    since $\inv{\aset}{\sset}{\tmtwo\esub{\var}{\tmthree}}$ implies 
    in particular $\inv{\aset}{\sset}{\tmthree}$
    and thus $\tmthree \in \HAbs{\aset}$ allows to obtain by \cref{lem:types_of_hereditary_abstractions} that
    $\mtyptwo \neq \tightN$. 
    Therefore, $\optmtyptwo \mleq \mtyptwo$ implies $\optmtyptwo \neq \tightN$.
    By \cref{rem:isAppropriate} we are done.
  \item[(d')]
    If $\appflag = \app$ then either $\mtyp = \tightN$ or $\mtyp$ is of the form $\mset{\typ}$, by condition (d).
  \end{enumerate}
  We can apply \ih on $\tmtwo'$, yielding 
  $\judg[m_1,e_1]{\tctx_1; \var : \optmtyptwo}{\tmtwo}{\mtyp}$, where,
  if $\rulename = \ruledb$  then $(m_1,e_1) = (m'_1 + 1, e'_1)$, and
  if $\rulename = \rulelsv$ then $(m_1,e_1) = (m'_1, e'_1 + 1)$.
  We build the following derivation:
  \[
    \indrule{\ruleTypES}{
      \judg[m_1,e_1]{\tctx_1; \var : \optmtyptwo}{\tmtwo}{\mtyp}
      \sep
      \optmtyptwo \mleq \mtyptwo
      \sep
      \judg[m'_2,e'_2]{\tctx_2}{\tmthree}{\mtyptwo}
    }{
      \judg[m_1 + m'_2, e_1 + e'_2]{\tctx_1 + \tctx_2}{\tmtwo\esub{\var}{\tmthree}}{\mtyp}
    }
  \]
  where 
  if $\rulename = \ruledb$  then $(m,e) = (m_1 + m'_2,e_1 + e'_2) = (m'_1 + 1 + m'_2, e'_1 + e'_2) = (m' + 1,e')$, and
  if $\rulename = \rulelsv$ then $(m,e) = (m_1 + m'_2,e_1 + e'_2) = (m'_1 + m'_2, e'_1 + 1 + e'_2) = (m',e' + 1)$,
  so we are done.
\item \ruleUEsLStruct.
  The following conditions hold:
  \begin{enumerate}
  \item[(a)]
    \[
      \indrule{\ruleUEsLStruct}{
        \tmtwo \tov{\rulename}{\aset}{\sset \cup \set{\var}}{\appflag} \tmtwo'
        \sep
        \tmthree \in \Struct{\sset}
        \sep
        \var \notin \aset \cup \sset
        \sep
        \var \notin \fv{\rulename}
      }{
        \tmtwo\esub{\var}{\tmthree} \tov{\rulename}{\aset}{\sset}{\appflag} \tmtwo'\esub{\var}{\tmthree}
      }
    \]
    with $\rulename \in \set{\ruledb, \rulelsv}$
  \item[(b)]
    $\judg[m',e']{\tctx}{\tmtwo'\esub{\var}{\tmthree}}{\mtyp}$
  \item[(c)]
    $\isAppr{\aset}{\tctx}$
  \item[(d)]
    If $\appflag = \app$ then either $\mtyp = \tightN$ or $\mtyp$ is of the form $\mset{\typ}$.
  \end{enumerate}
  The judgement of the condition (b) can be derived only by rule \ruleTypES
  \[
    \indrule{\ruleTypES}{
      \judg[m'_1,e'_1]{\tctx_1; \var : \optmtyptwo}{\tmtwo'}{\mtyp}
      \sep
      \optmtyptwo \mleq \mtyptwo
      \sep
      \judg[m'_2,e'_2]{\tctx_2}{\tmthree}{\mtyptwo}
    }{
      \judg[m'_1 + m'_2, e'_1 + e'_2]{\tctx_1 + \tctx_2}{\tmtwo'\esub{\var}{\tmthree}}{\mtyp}
    }
  \]
  where $\tctx = \tctx_1 + \tctx_2, m' = m'_1 + m'_2$ and $e' = e'_1 + e'_2$.
  Note that $\inv{\aset}{\sset}{\tmtwo\esub{\var}{\tmthree}}$
  implies in particular $\inv{\aset}{\sset \cup \set{\var}}{\tmtwo}$.
  
  The following conditions hold:
  \begin{enumerate}
  \item[(a')]
    $\tmtwo \tov{\rulename}{\aset}{\sset \cup \set{\var}}{\appflag} \tmtwo'$,
    with $\rulename \in \set{\ruledb, \rulelsv}$,
    by condition (a)
  \item[(b')]
    $\judg[m'_1,e'_1]{\tctx_1; \var : \optmtyptwo}{\tmtwo'}{\mtyp}$, 
    by premise of the judgement of the condition (b)
  \item[(c')]
    $\isAppr{\aset}{\tctx_1; \var : \optmtyptwo}$, since 
    (1) $\isAppr{\aset \cup \set{\var}}{\tctx_1}$ 
    because $\tctx = \tctx_1 + \tctx_2$, and condition (c) and \cref{rem:isAppropriate}; 
    (2) $\var \notin \aset$  by condition (a)
  \item[(d')]
    If $\appflag = \app$ then either $\mtyp = \tightN$ or $\mtyp$ is of the form $\mset{\typ}$, by condition (d).
  \end{enumerate}
  We can apply \ih on $\tmtwo'$, yielding 
  $\judg[m_1,e_1]{\tctx_1; \var : \optmtyptwo}{\tmtwo}{\mtyp}$, where,
  if $\rulename = \ruledb$ then $(m_1,e_1) = (m'_1 + 1, e'_1)$, and
  if $\rulename = \rulelsv$ then $(m_1,e_1) = (m'_1, e'_1 + 1)$.
  We build the following derivation:
  \[
    \indrule{\ruleTypES}{
      \judg[m_1,e_1]{\tctx_1; \var : \optmtyptwo}{\tmtwo}{\mtyp}
      \sep
      \optmtyptwo \mleq \mtyptwo
      \sep
      \judg[m'_2,e'_2]{\tctx_2}{\tmthree}{\mtyptwo}
    }{
      \judg[m_1 + m'_2, e_1 + e'_2]{\tctx_1 + \tctx_2}{\tmtwo\esub{\var}{\tmthree}}{\mtyp}
    }
  \]
  where 
  if $\rulename = \ruledb$  then $(m,e) = (m_1 + m'_2,e_1 + e'_2) = (m'_1 + 1 + m'_2, e'_1 + e'_2) = (m' + 1,e')$, and
  if $\rulename = \rulelsv$ then $(m,e) = (m_1 + m'_2,e_1 + e'_2) = (m'_1 + m'_2, e'_1 + 1 + e'_2) = (m',e' + 1)$,
  so we are done.
}
\end{enumerate}
\end{proof}

\completenesstyping*

\begin{proof}
By induction on $n$:
\begin{enumerate}
\item $n = 0$.
  Then $\tm \in \NF{\emptyset}{\fv{\tm}}{\nonapp}$,
  and $m = e = 0$ by definition of $m$ and $e$.
  By \cref{prop:nfs_are_tight_typable}, 
  there exists a tight type $\tight$ such that $\judg[0,0]{\TEnv{\emptyset}{\fv{\tm}}{\tm}}{\tm}{\tight}$,
  with $\TEnv{\emptyset}{\fv{\tm}}{\tm}$ a tight environment,
  so we are done.
\item $n = n' + 1$. 
  Then $\tm \notin \NF{\emptyset}{\fv{\tm}}{\nonapp}$.
  The reduction sequence is then of the form:
  \[
    \tm \tov{\rulename_1}{\emptyset}{\fv{\tm}}{\nonapp} \tm' 
        \tov{\rulename_2}{\emptyset}{\fv{\tm}}{\nonapp} \hdots \tov{\rulename_{n'+1}}{\emptyset}{\fv{\tm}}{\nonapp} \tmtwo
  \]
  where $n' = m' + e'$ and
  \[
    m' = \#\set{i \ST 2 \leq i \leq n', \rulename_i = \ruledb}
    \HS
    e' = \#\set{i \ST 2 \leq i \leq n', \rulename_i = \rulelsv}
  \]
  Since $\fv{\tm} = \fv{\tm'}$, since the reduction is non-erasing when $\rulename_1 \in \set{\ruledb, \rulelsv}$,
  then we can apply \ih, yielding that there exists a tight environment $\tctx$ 
  and a tight type $\tight$ such that
  $\judg[m',e']{\tctx}{\tm'}{\tight}$.
  By \nameref{prop:subject_expansion},
  it holds that $\judg[m,e]{\tctx}{\tm}{\tight}$, where
  if $\rulename = \ruledb$  then $(m,e) = (m' + 1, e')$ and
  if $\rulename = \rulelsv$ then $(m,e) = (m', e' + 1)$,
  so we are done.
\end{enumerate}
\end{proof}

\end{document}
\endinput
%%
%% End of file `sample-sigconf.tex'.